\newcommand{\shortVersion}{false} 
\newcommand{\techReportAppendix}[1]{%
    \IfEqCase{\shortVersion}{%
        {false}{Appendix \ref{#1}}%
        {true}{\cite{technicalReport}}%
    }
}%
\newcommand{\techReportAppendices}[2]{%
    \IfEqCase{\shortVersion}{%
        {false}{Appendices \ref{#1}--\ref{#2}}%
        {true}{\cite{technicalReport}}%
    }
}%
\newcommand{\onlyTechReport}[1]{%
    \IfEqCase{\shortVersion}{%
        {false}{#1}%
    }
}%
\newcommand{\onlyShortVersion}[1]{%
    \IfEqCase{\shortVersion}{%
        {true}{#1}%
    }
}%
		\numberwithin{definitionInt}{section}
		\numberwithin{exampleInt}{section}
		\newtheorem{theorem}{Theorem}
		\numberwithin{theorem}{section}
		\numberwithin{corollary}{section}
		\newtheorem{lemma}{Lemma}
		\numberwithin{lemma}{section}
		\newtheorem{proposition}{Proposition}
		\numberwithin{proposition}{section}
\pgfplotsset{compat=newest}
\newcommand{\thickhline}{%
    \noalign {\ifnum 0=`}\fi \hrule height 1.5pt
    \futurelet \reserved@a \@xhline
}
\newcolumntype{"}{@{\hskip\tabcolsep\vrule width 1.5pt\hskip\tabcolsep}}
 \newenvironment{example}
 {\begin{exampleInt} } 
 { $\blacksquare$\end{exampleInt} }
  \newenvironment{definition}
 {\begin{definitionInt} } 
 { $\square$\end{definitionInt} }
\newcommand{\xRightarrow}[2][]{\ext@arrow 0359\Rightarrowfill@{#1}{#2}}
 \newcommand{\concat}{\cdot}
 \newcommand{\admin}{\mathit{admin}}
 \newcommand{\correctness}{database integrity}
 \newcommand{\confidentiality}{data confidentiality}
 \newcommand{\Correctness}{Database Integrity}
 \newcommand{\Confidentiality}{Data Confidentiality}
 \newcommand{\auth}{\leadsto_{\mathit{auth}}}
 \newcommand{\attMod}{\vdash_{u}}
 \newcommand{\attackerModel}{{\cal ATK}_{u}}
 \newcommand{\accessControlFunction}{Policy Decision Point}
  \newcommand{\acf}{PDP}
  \newcommand{\accessControlConfiguration}{extended configuration}
\title{Strong and Provably Secure Database Access Control}
\author{
\alignauthor
Marco Guarnieri\\
       \affaddr{{\emph{Institute~of~Information~Security}}}\\
       \affaddr{{\emph{Department~of~Computer~Science}}}\\
       \affaddr{{\emph{ETH Zurich, Switzerland}}}\\
       \email{{\texttt{marco.guarnieri@inf.ethz.ch}}}
\alignauthor
Srdjan Marinovic\\
       \affaddr{{\emph{The Wireless Registry, Inc.}}}\\
       \affaddr{{\emph{Washington DC, US}}}\\
       \email{{\texttt{srdjan@wirelessregistry.com}}}
\alignauthor
David Basin\\
       \affaddr{{ \emph{Institute~of~Information~Security}}}\\
       \affaddr{{\emph{Department of Computer Science}}}\\
       \affaddr{{\emph{ETH Zurich, Switzerland}}}\\
       \email{{\texttt{basin@inf.ethz.ch}}}
       }
\author{\IEEEauthorblockN{Marco Guarnieri}
\IEEEauthorblockA{
	 Institute of Information Security\\
       Department of Computer Science\\
       ETH Zurich, Switzerland\\
       {\small {\tt marco.guarnieri@inf.ethz.ch}}}
\and
\IEEEauthorblockN{Srdjan Marinovic}
\IEEEauthorblockA{The Wireless Registry, Inc.\\
		Washington DC, US\\
       	{\small {\tt srdjan@wirelessregistry.com}}}
\and
\IEEEauthorblockN{David Basin}
\IEEEauthorblockA{Institute of Information Security\\
       Department of Computer Science\\
       ETH Zurich, Switzerland\\
       {\small {\tt basin@inf.ethz.ch}}}
}
\begin{document}

\maketitle

\begin{abstract}
Existing SQL access control mechanisms are extremely limited.  
Attackers can leak information and escalate their privileges using advanced database features such as views, triggers, and integrity constraints.  
This is not merely a problem of vendors lagging behind the state-of-the-art.  
The theoretical foundations for database security lack adequate security definitions and a realistic attacker model, both of which are needed to evaluate the security of modern databases.
We address these issues and present a provably secure access control mechanism that prevents attacks that defeat popular SQL database systems.
\end{abstract}

\section{Introduction}

It is essential to control access to databases that store sensitive information.
To this end, the SQL standard  defines access control rules and all SQL database vendors have accordingly developed  access control mechanisms.
The standard however fails to define a precise access control semantics, the attacker model, and the security properties that the mechanisms ought to satisfy.
As a consequence, existing access control mechanisms are implemented in an ad hoc fashion, with neither precise security guarantees nor the means to verify them.

This deficit has dire and immediate consequences.
We show that popular   database systems are susceptible to two types of attacks.
{Integrity attacks} allow an attacker to perform non-authorized changes to the database.
{Confidentiality attacks} allow an attacker to learn sensitive data.
These attacks exploit advanced SQL features, such as triggers, views, and integrity constraints, and they are  easy to carry~out.

Current research efforts in data\-base security are neither adequate for evaluating the security of modern databases, nor do they account for their advanced features. 
In more detail, existing research \cite{ wang2007correctness,agrawal2005extending, rizvi2004extending,brodsky2000secure} implicitly considers  attackers who use \texttt{SELECT} commands. 
But the capabilities offered by databases go far beyond \texttt{SELECT}.
Users, in general, can modify the database's state and security policy, as well as use features such as triggers,~views, and integrity constraints.
Consequently, all proposed research solutions  fail to prevent attacks such as those we present~in~\S2.

In summary, the database vendors have been left to develop access control mechanisms without guidance from either the SQL standard or existing research in database security.
It is therefore not surprising that modern databases are open to abuse.

\smallskip
\noindent
{\bf Contributions.} 
We develop a comprehensive formal framework for the design and analysis of database access control. 
We use it to design and verify an access control mechanism that prevents confidentiality and integrity attacks that defeat existing mechanisms. 

First, we develop an operational semantics for databases that supports SQL's core features, as well as triggers, views, and integrity constraints.
Our semantics models both the security-critical aspects of these features and the database's dynamic behaviour at the level needed to capture realistic attacks. 
Our semantics is substantially more detailed than those used in previous works~\cite{rizvi2004extending, wang2007correctness}, which ignore the database's dynamics.

Second, we develop a novel attacker model that, in~addition to SQL's core features, incorporates advanced features such as triggers, views, and integrity constraints.
Furthermore, our attacker can infer information based on the semantics of these features. 
Note that our attacker model subsumes the \texttt{SELECT}-only attacker considered in previous works~\cite{rizvi2004extending},~\cite{wang2007correctness}.
We also develop an executable version of our operational semantics and attacker model using the Maude term-rewriting framework  \cite{clavel2003maude}.
The executable model acts as a reference implementation for our semantics. 
Given the complexity of databases and their features, having an executable version of our models provides a way to validate them against existing database systems and against the examples we use in this paper. 

Third, we present two security definitions---\correctness{} and \confidentiality{}---that reflect two principal security requirements for database access control.
There is a natural and intuitive relationship between these definitions and the types of attacks that we identify.
We thus argue that these definitions provide a strong measure of whether a given access control mechanism prevents our attacker from exploiting modern SQL databases.

Finally, using our framework, we build a database access control mechanism that is provably secure with respect to our attacker model and security definitions.
In contrast to existing mechanisms, our solution prevents all the attacks that we report on in \S\ref{sect:motivating}. 

\smallskip
\noindent
{\bf Related Work.} 
Surprisingly, and in contrast to other areas of information security\onlyTechReport{ \cite{dolev1983security}}, there does not exist a well-defined  attacker model for database access control. 
From the literature, we extracted the \texttt{SELECT}-only attacker model, where the attacker uses just \texttt{SELECT} commands. 
A number of access control mechanisms, such as~\onlyShortVersion{\cite{ wang2007correctness,agrawal2005extending, rizvi2004extending,Bender:2013:FDC:2463676.2467798,Bender:2014:ESR:2588555.2593663}}\onlyTechReport{\cite{ wang2007correctness,agrawal2005extending, rizvi2004extending,browder2002virtual, sybase2003security,shi2009soundness,lefevre2004limiting,halder2013fine,stonebraker1974access,Bender:2013:FDC:2463676.2467798,Bender:2014:ESR:2588555.2593663}}, 
 implicitly consider this attacker model.
The boundaries of this model are blurred and the attacker's capabilities are unclear.  
For instance, only a few works, such as~\cite{wang2007correctness}, explicitly state that update commands are not supported, whereas others~\cite{rizvi2004extending, agrawal2005extending, Bender:2013:FDC:2463676.2467798,Bender:2014:ESR:2588555.2593663} ignore what the attacker can learn from update commands.
Works on Inference Control~\cite{ brodsky2000secure, toland2010inference, farkas2002inference} and Controlled Query Evaluation~\cite{bonatti1995foundations} consider a variation of the \texttt{SELECT}-only attacker, in which the attacker additionally  has some initial knowledge about the data and can derive new information from the query's results through inference rules.
Note that while~\cite{toland2010inference} supports update commands, it treats them just as a way of increasing data availability, rather than considering them as a possible attack vector.

Database access control mechanisms can be classified into two distinct families~\cite{rizvi2004extending}. 
Mechanisms in the \emph{Truman model}\onlyShortVersion{~}\cite{wang2007correctness,agrawal2005extending} transparently modify query results to restrict the user's access to the data authorized by the  policy.
In contrast, mechanisms in the \emph{Non-Truman model}~\cite{rizvi2004extending,Bender:2013:FDC:2463676.2467798,Bender:2014:ESR:2588555.2593663} either accept or reject queries without modifying their  results.
Different notions of security have been proposed for these models~\cite{rizvi2004extending,wang2007correctness,guarnieri2014optimal}.
They are, however, based on \texttt{SELECT}-only attackers and provide no security guarantees against realistic attackers that can alter the database and the policy or use advanced SQL~features.
We refer the reader to \S\ref{sect:related:work} for further comparison with related~work.

\smallskip
\noindent
{\bf Organization.} 
In \S\ref{sect:motivating} we present attacks that illustrate serious weaknesses in existing Data\-base Management Systems (DBMSs).
In \S\ref{sect:preliminaries} we introduce background and notation about queries, views, triggers, and access control.
In \S\ref{sect:syst:attk:model} we formalize our system and attacker models, and in \S\ref{sect:sec:prop} we define the desired security properties.
In \S\ref{sect:enf:alg} we present our access control mechanism, and in \S\ref{sect:related:work} we discuss related work.
Finally,  we draw conclusions in \S\ref{sect:conclusion}.
\onlyShortVersion{An extended version of this paper, with the system's operational semantics, the attacker model, and complete proofs of all results, is available at~\cite{technicalReport}.
}
\onlyTechReport{The system's operational semantics, the attacker model, and complete proofs of all results are in Appendices~\ref{app:lts}--\ref{app:data:conf:non:interference}.
}
A prototype of our enforcement mechanism and its executable semantics are available at~\cite{prototype}.
\onlyTechReport{
This technical report is an extended version of~\cite{guarnieri2016strong}.
}

\section{Illustrative Attacks}
\label{sect:motivating}

\begin{table*}[!hbtp]
\centering
\begin{tabular}{l| c c  c | c c }
 
   &  \multicolumn{3}{c | }{{\bf Integrity Attacks}} & \multicolumn{2}{c}{{\bf Confidentiality Attacks}}  \\
\multicolumn{1}{c|}{\multirow{2}{*}{ \bf  DBMS}} & {Triggers with} &  Granting  & Revoking  & Table updates and & {Triggers with} \\
  & activator's privileges &  views & views & integrity constraints & owner's privileges \\
 \hline
 IBM DB2 (v.~10.5) 		& $\dagger$ 		&  ${\cal X}$ 	&  $\checkmark$  	& $\checkmark$ 	& $\checkmark$ \\
 Oracle (v.~11g) 		& $\dagger$ 		&  ${\cal X}$ 	&  ${\cal X}$ 		& $\checkmark$	& $\checkmark$ \\
 PostgreSQL (v.~9.3.5) 	& $\checkmark$	&  $\checkmark$ 	&  $\checkmark$	 	& $\checkmark$ 	& $\checkmark$ \\
 
 MySQL (v.~14.14) 		& $\dagger$		&  ${\cal X}$ 	&  $\checkmark$ 	 	& $\checkmark$ 	& $\checkmark$ \\
 
 SQL Server (v.~12.0) 	& $\checkmark$	&  $\dagger$		&  $\dagger$ 		& $\checkmark$	& $\checkmark$  \\
 Firebird (v.~2.5.2) 	& $\checkmark$ 	&  ${\cal X}$ 	&  $\checkmark$ 	 	& $\checkmark$ 	& $\checkmark$ \\
 \hline
\end{tabular}
\captionof{figure}{The $\checkmark$ symbol indicates a successful attack, whereas ${\cal X}$ indicates a failed attack. The $\dagger$ symbol indicates that the DBMS does not support the features necessary to launch the attack. } 
\label{table:attacks}
\end{table*}

We demonstrate here  how attackers can exploit existing
DBMSs using standard SQL features.
We classify these attacks as either  \emph{Integrity Attacks} or
\emph{Confidentiality Attacks}.
In the former, an attacker makes unauthorized changes to the database, which stores the data, the policy, the triggers, and the views.
In the latter, an attacker learns sensitive~data by interacting with
the system and observing the outcome.
No existing access control mechanism prevents all the attacks we  present. 
Moreover, many related attacks can be constructed using variants of the ideas presented here.
We manually carried out the attacks against IBM~DB2, Oracle~Database, PostgreSQL,  MySQL, SQL~Server,  and  Firebird.
We summarize our findings at the end of this~section.

\subsection{Integrity Attacks}
Our three integrity attacks combine different database features: \texttt{INSERT}, \texttt{DELETE}, \texttt{GRANT}, and \texttt{REVOKE} 
commands together with views and triggers.
In the first attack, an attacker creates a trigger, i.e., a procedure automatically executed by the DBMS in response to user commands, that will be
activated by an unaware user with a higher security clearance and will perform unauthorized changes to the database.
The attack requires triggers to be executed under the privileges of
the users activating them.
Such triggers are supported  by PostgreSQL, SQL Server,  and Firebird.

\begin{attackInt}\label{example:trigger:attack:activator}
{\bf {Triggers with activator's privileges.}}
Consider a data\-base with two tables $P$ and $S$ and two users $u_1$ and $u_2$.
The attacker is the user $u_1$, whose goal is to delete the content of $S$.
The policy is that
$u_1$ is not authorized\footnote{As is common in SQL, a user is authorized to execute a command if and only if the policy assigns him the corresponding permission.
} to alter  $S$, $u_1$ can  create triggers on $P$, and  $u_2$  can read and modify $S$ and $P$.
The attack is as follows:
\begin{compactenum}
\item  $u_1$ creates the trigger:
\begin{lstlisting}[mathescape, basicstyle=\ttfamily\small]
CREATE TRIGGER $t$ ON $P$ AFTER INSERT
  DELETE FROM $S$;
\end{lstlisting}

\item $u_1$ waits until $u_2$  inserts a tuple into the table $P$. The trigger will then be invoked using $u_2$'s privileges and $S$'s content  will be deleted.  $\blacksquare$
\end{compactenum}
\end{attackInt} 

An attacker can use similar attacks to execute arbitrary commands with administrative privileges.
Despite the threat posed by such simple attacks, the existing
countermeasures~\cite{microsoft2014} are unsatisfactory; they are
either too restrictive, for instance completely disabling triggers in the~data\-base, or too time consuming and error prone, namely manually checking if
``dangerous'' triggers have been created.

In our second attack, an attacker escalates his privileges by delegating the read permission for a table without being authorized to delegate this permission.
The attacker first~creates a view over the table and, afterwards, delegates the access to the view to another user.
This attack exploits DBMSs, such as PostgreSQL, where a user can grant any read permission over his own views.
Note that  \texttt{GRANT} and \texttt{REVOKE} commands are \emph{write operations}, which target the database's internal configuration instead of the tables.

\begin{attackInt}\label{example:view:escalation1}
{\bf {Granting views.}}
Consider a database with a table $S$, two users $u_{1}$ and $u_{2}$, and the following policy:
$u_{1}$ can create views and read $S$ (without being able to
  delegate this permissions), and $u_{2}$ cannot read  $S$. 
The attack is as~follows:
\begin{compactenum}
\item $u_{1}$ creates the view: \texttt{CREATE VIEW} $v$ \texttt{AS SELECT} $*$ \texttt{FROM}\onlyTechReport{~}\onlyShortVersion{ }$S$. 
\item $u_{1}$ issues the command \texttt{GRANT SELECT ON} $v$ \texttt{TO} $u_{2}$. 
Now, $u_{2}$ can read $S$ through $v$.
However, $u_{1}$ is not authorized to delegate the read permission on $S$.
 $\blacksquare$ %
\end{compactenum}
\end{attackInt}

This attack exploits several subtleties in the commands' semantics:\begin{inparaenum}[(a)]
\item users can create views over all tables they can read,
\item the views are executed under the owner's privileges, and
\item view's owners can grant arbitrary permissions over their own views.
\end{inparaenum}
These features give $u_{1}$ the implicit ability to delegate the read access over $S$.
As a result, the overall system's behaviour does not conform with the given policy.
That is, $u_{1}$
should not be permitted to delegate the read access to
$S$ or to any view that depends on it.
Note that the commands' semantics may vary between different~DBMSs. 

In our third attack, an attacker exploits the failure of access control
mechanisms to propagate \texttt{REVOKE} commands.

\begin{attackInt}\label{example:view:escalation2}
{\bf {Revoking views.}}
Consider a database with a table $S$, three users $u_{1}$, $u_{2}$, and $u_{3}$, and the following policy:
$u_{1}$ can read $S$ and delegate this permission,
$u_{2}$ can create views, and
$u_{3}$ cannot read $S$.
The attack proceeds as follows:
\begin{compactenum}
\item $u_{1}$ issues the command \texttt{GRANT SELECT ON} $S$ \texttt{TO} $u_{2}$ \texttt{WITH GRANT OPTION}.
\item $u_{2}$ creates the view: \texttt{CREATE VIEW} $v$ \texttt{AS SELECT} $*$ \texttt{FROM}\onlyTechReport{~}\onlyShortVersion{ }$S$.
\item $u_{2}$ issues the command \texttt{GRANT SELECT ON} $v$ \texttt{TO} $u_{3}$.  
\item $u_{1}$ revokes the permission to read  $S$ (and to delegate the permission) from $u_{2}$: \texttt{REVOKE SELECT ON}  $S$ \texttt{FROM} $u_{2}$. 
Now, $u_{3}$ cannot read $v$ because $u_{2}$, which is $v$'s owner, cannot read $S$.

\item $u_{1}$ grants again the permission to read  $S$ to $u_{2}$: \texttt{GRANT SELECT ON} $S$ \texttt{TO} $u_{2}$.
Now, $u_{3}$ can again read $v$ but $u_{2}$ can no longer delegate the read permission on $v$. $\blacksquare$
\end{compactenum}
\end{attackInt} 

This attack succeeds because, in the fourth step,
the \texttt{REVOKE} statement does not remove the \texttt{GRANT} granted by $u_{2}$ to $u_{3}$ to read $v$.
This \texttt{GRANT} only becomes ineffective because $u_{2}$ is no longer authorized to read $S$.
However, after the fifth step, this \texttt{GRANT} becomes 
effective again, even though $u_{2}$ can no longer delegate the read permission on $v$. 
Thus, the policy is left in an inconsistent state.

\subsection{Confidentiality Attacks}
We now present two attacks that use \texttt{INSERT} and
\texttt{SELECT} commands together with triggers and integrity
constraints. 
In our fourth attack, an attacker exploits integrity constraint violations to
learn sensitive information.
An integrity constraint is an invariant that must be
satisfied for a database state to be considered \emph{valid}.
\emph{Integrity constraint violations} arise when the
execution of an SQL command leads the database from a valid state into
an invalid one.

\begin{attackInt}\label{example:integrity:delete:information}
{\bf {Table updates and integrity constraints.}}
\\
Consider a database with two tables $P$ and $S$.
Suppose the primary key of both tables is the user's identifier.
Furthermore, the set of user identifiers in $S$ is contained in  the set of user identifiers in $P$, i.e., there is a foreign key from $S$ to $P$.
The attacker is the user $u$ whose goal is to learn whether $\mathtt{Bob}$ is in $S$.
The access control policy is that $u$ can read ${P}$ and insert tuples in ${S}$.
The attacker $u$ can learn whether $\mathtt{Bob}$ is in $S$ as follows:
\begin{compactenum}
\item He reads ${P}$ and learns $\mathtt{Bob}$'s identifier.
\item He issues an \texttt{INSERT} statement in ${S}$ using  $\mathtt{Bob}$'s id. 
\item If $\mathtt{Bob}$ is already in ${S}$, then $u$ gets an error message about the primary key's violation.
Alternatively, there is no violation and $u$ learns that $\mathtt{Bob}$ is not in $S$. $\blacksquare$
\end{compactenum}
\end{attackInt}

Even though similar attacks have been identified before~\cite{jajodia1990polyinstantiation,schultz2013ifdb}, existing~DBMSs~are~still~vulnerable.

In our fifth attack, an attacker learns sensitive information by exploiting
the system's triggers.
The trigger in this attack is executed under the privileges of
the trigger's owner.
Such triggers are supported by IBM DB2, Oracle Database, 
PostgreSQL, MySQL, SQL Server,  and Firebird.

\begin{attackInt}\label{example:trigger:attack:owner}
{\bf {Triggers with owner's privileges.}}
Consider a data\-base with three tables $N$, $P$, and $T$. 
The attacker is the user $u$, who wishes to learn whether $v$ is in $T$.
The  policy is that
  $u$ is not authorized to read the table $T$, and 
  he can read and modify the tables $N$   and $P$. 
Moreover, the following trigger has been defined by the administrator.
\begin{lstlisting}[mathescape, basicstyle=\ttfamily\small]
CREATE$\enspace$TRIGGER$\enspace$t$\enspace$ON$\enspace P\enspace$AFTER$\enspace$INSERT$\enspace$FOR$\enspace$EACH$\enspace$ROW
  IF exists(SELECT * FROM $T$ WHERE id = NEW.id)
   INSERT INTO $N$ VALUES (NEW.id);
\end{lstlisting}

The attack is as follows:
\begin{compactenum}
\item $u$ deletes $v$ from  $\mathit{N}$.
\item $u$ issues the command \texttt{INSERT}\,\texttt{INTO}\,$P$\;\texttt{VALUES}\,\texttt{(}$v$\texttt{)}.
\item $u$ checks the table $N$. If it contains $v$'s id, then $v$ is in $T$. Otherwise, $v$ is not in $T$. $\blacksquare$
\end{compactenum}
\end{attackInt} 

This attack exploits that the trigger $t$ conditionally
modifies the database.
Furthermore, the attacker can activate $t$, by inserting
tuples in $P$, and then observe $t$'s effects,
by reading the table $N$.
He therefore can exploit $t$'s execution to learn whether $t$'s condition holds.
We assume here that the attacker knows the triggers in the system.
This is, in general, a weak assumption as triggers usually
describe the domain-specific rules regulating a system's behaviour and
 users are usually aware of them.

\subsection{Discussion}\label{sect:motivating:discussion}

We manually carried out all five attacks against IBM DB2, Oracle Database, PostgreSQL, MySQL, SQL Server, and Firebird.
Figure~\ref{table:attacks} summarizes our findings.
None of these systems prevent the confidentiality attacks.
They are however more successful in preventing the integrity
attacks.
The most successful is Oracle Database, which prevents two of the three attacks, while Attack~\ref{example:trigger:attack:activator} cannot be carried out due to missing features. 
IBM DB2, MySQL,  and Firebird prevent just one of the three attacks, namely~Attack~\ref{example:view:escalation1}.
However, they all fail to prevent Attack~\ref{example:view:escalation2}.
Note that Firebird also fails to prevent Attack~\ref{example:trigger:attack:activator}.
In contrast, Attack~\ref{example:trigger:attack:activator} cannot be carried out against MySQL and IBM DB2 due to missing features. 
SQL Server also fails to prevent Attack~\ref{example:trigger:attack:activator}; however the remaining two
attacks cannot be carried out due to missing features.
PostgreSQL fails to prevent all three attacks.

We argue that the dire state of database access control mechanisms,
as illustrated by these attacks, comes from the lack of clearly
defined security properties that such mechanisms ought to satisfy and
the lack of a well-defined attacker model.
We therefore develop a formal attacker model and precise
security properties  and we use them to design a provably secure access control
mechanism that prevents all the above~attacks.

\onlyTechReport{\newpage}
\section{Database Model}\label{sect:preliminaries}

We now formalize   databases including features like views, access control policies, and triggers.
Our formalization of databases and queries follows~\cite{abiteboul1995foundations}, and our access control policies formalize SQL policies.

\subsection{Overview}\label{sect:fragment}
 
In this paper we consider the following SQL features: \texttt{SELECT}, \texttt{INSERT}, \texttt{DELETE}, \texttt{GRANT}, \texttt{REVOKE}, \texttt{CREATE TRIGGER}, \texttt{CREATE VIEW}, and \texttt{ADD USER} commands. 

For \texttt{SELECT} commands, rather than using SQL, we use the relational calculus ($\mathit{RC}$), i.e., function-free first-order logic, which has a simple and well-defined  semantics~\cite{abiteboul1995foundations}.
We support \texttt{GRANT} commands with  the \texttt{GRANT OPTION} and \texttt{REVOKE} commands with the \texttt{CASCADE OPTION}, i.e., when a user revokes a privilege, he also revokes all the privileges that depend on it.
We support \texttt{INSERT} and \texttt{DELETE} commands that explicitly identify the tuple to be inserted or deleted, i.e., commands of the form 
\texttt{INSERT INTO} $\mathit{table}(x_{1}, \ldots,  x_{n})$ \texttt{VALUES} $(v_{1}, \ldots, v_{n})$ and \texttt{DELETE FROM} $\mathit{table}$ \texttt{WHERE} $x_{1} = v_{1} \wedge \ldots \wedge x_{n} = v_{n}$, where $x_{1}, \ldots, x_{n}$ are $\mathit{table}$'s attributes and $v_{1}, \ldots, v_{n}$ are the tuple's values.
More complex \texttt{INSERT} and \texttt{DELETE} commands, as well as \texttt{UPDATE}s, can be simulated by combining \texttt{SELECT}, \texttt{INSERT}, and \texttt{DELETE} commands.

We support only  \texttt{AFTER} triggers on \texttt{INSERT} and \texttt{DELETE} events, i.e., triggers that are executed in response to \texttt{INSERT} and \texttt{DELETE} commands. 
The triggers' \texttt{WHEN} conditions are arbitrary boolean queries and their actions are \texttt{GRANT}, \texttt{REVOKE}, \texttt{INSERT}, or  \texttt{DELETE} commands. 
Note that DBMSs usually impose severe restrictions on the \texttt{WHEN} clause, such as it must not contain sub-queries. 
However, most DBMSs can express arbitrary conditions on triggers by combining control flow statements with \texttt{SELECT} commands inside the trigger's body.
Thus, we support the class of triggers whose body is of the form \lstinline[mathescape, basicstyle=\ttfamily]|BEGIN IF $\mathit{expr}$ THEN $\mathit{act}$ END|, where $\mathit{act}$ is either a \texttt{GRANT}, \texttt{REVOKE}, \texttt{INSERT}, or  \texttt{DELETE} command. 
Note that all triggers used in \S\ref{sect:motivating} belong to this class.

%
We support two kinds of integrity constraints: functional dependencies and inclusion dependencies \cite{abiteboul1995foundations}. They model the most widely used families of SQL  integrity constraints, namely the \texttt{UNIQUE}, \texttt{PRIMARY KEY}, and \texttt{FOREIGN KEY} constraints. 
We also support views with both the owner's privileges and the activator's privileges. 

The SQL fragment we support\onlyShortVersion{ }\onlyTechReport{, shown in Figure~\ref{table:sql:syntax}, } contains the most common SQL commands for data manipulation and access control as well as the core commands for creating triggers and views.
The ideas and the techniques presented in this paper are general and can be extended to the entire SQL~standard.

\subsection{Data\-bases and Queries}
Let ${\cal R}$, ${\cal U}$, ${\cal V}$, and ${\cal T}$ be mutually disjoint, countably infinite sets, respectively representing identifiers of relation schemas, users, views, and triggers.

A \emph{data\-base schema} $D$ is a pair $\langle \Sigma, \mathbf{dom} \rangle$, where $\Sigma$ is a first-order signature and $\mathbf{dom}$ is a fixed countably infinite domain.
The signature $\Sigma$ consists of a set of \emph{relation schemas} $R \in {\cal R}$, also called  \emph{tables}, with arity $|R|$ and sort $\mathit{sort}(R)$. 
A \emph{state} $s$ of $D$ is a finite $\Sigma$-structure over \textbf{dom}.
We denote  by $\Omega_{D}$ the set of all states.
Given a table $R \in D$, $s(R)$ denotes the set of tuples that belong to $R$ in $s$.

A \emph{query} $q$ over a schema $D$ is of the form $\{\overline{x}\,|\,\phi \}$, where $\overline{x}$ is a sequence of variables, $\phi$ is a relational calculus formula over $D$, and $\phi$'s free variables are those in $\overline{x}$.
A \emph{boolean query} is a query $\{\,|\, \phi\}$, also written  as $\phi$, where $\phi$ is a sentence. 
The result of executing a query $q$ on a state $s$, denoted by $[q]^{s}$, is a boolean value in $\{\top,\bot\}$, if $q$ is a boolean query, or a set of tuples otherwise.
We denote by $\mathit{RC}$ (respectively $\mathit{RC}_{\mathit{bool}}$) the set of all  relational calculus queries (respectively sentences).
We consider only \emph{domain-independent queries} as is standard, and we employ the standard relational calculus semantics~\cite{abiteboul1995foundations}.

Let $D = \langle \Sigma, \mathbf{dom} \rangle$ be a schema,  $s$ be a state in $\Omega_{D}$,  $R$ be a table in $D$, and  $\overline{t}$ be a tuple in $\mathbf{dom}^{|R|}$.
The result of inserting  (respectively deleting) $\overline{t}$ in $R$ in the state $s$ is the state $s'$, denoted by $s[R \oplus \overline{t}]$ (respectively $s[R \ominus \overline{t}]$), where $s'(T) = s(T)$ for all $T \in \Sigma$ such that $T \neq R$, and $s'(R) = s(R) \cup \{\overline{t}\}$ (respectively $s'(R) = s(R) \setminus \{\overline{t}\}$).

An \emph{integrity constraint over $D$} is a  relational calculus sentence $\gamma$ over $D$.
Given a state $s$, we say that \emph{$s$ satisfies the constraint $\gamma$} iff $[\gamma]^{s} = \top$.
Given a set of constraints $\Gamma$, $\Omega_{D}^{\Gamma}$ denotes the set of all states satisfying the constraints in $\Gamma$, i.e., $\Omega_{D}^{\Gamma} = \{s \in \Omega_{D}\,|\, \bigwedge_{\gamma \in  \Gamma} [\gamma]^{s} = \top\}$.
We consider two types of integrity constraints:
\emph{functional dependencies}, which are sentences of the form $\forall \overline{x}, \overline{y}, \overline{y}', \overline{z}, \overline{z}'.\,( (R(\overline{x}, \overline{y}, \overline{z}) \wedge R(\overline{x}, \overline{y}', \overline{z}') )\Rightarrow \overline{y} = \overline{y}')$,
and \emph{inclusion dependencies}, which are sentence of the form $\forall \overline{x}, \overline{y}.\,( R(\overline{x}, \overline{y}) \Rightarrow \exists \overline{z}.\, S(\overline{x}, \overline{z}) )$.

\subsection{Views}

Let $D$ be a schema.
A \emph{view $V$ over $D$} is a tuple $\langle \mathit{id},\mathit{o}, q,$ $\mathit{m}\rangle$, where $\mathit{id} \in {\cal V}$ is the view identifier, $\mathit{o} \in {\cal U}$ is the view's owner, $q$ is the non-boolean query over $D$ defining the view, and $\mathit{m} \in \{A,O\}$ is the security mode, where  $A$ stands for \emph{activator's privileges} and $O$ stands for \emph{owner's privileges}.
Note that the query $q$ may refer to other views. 
We assume, however, that views have no cyclic dependencies between them.
We denote by ${\cal VIEW}_{D}$ the set of all views over $D$. 
The \emph{materialization of a view $\langle V,\mathit{o}, q, \mathit{m}\rangle$ in a state $s$}, denoted by $s(V)$, is $[q]^{s}$.
We extend the relational calculus in the standard way to work with views~\cite{abiteboul1995foundations}.

\subsection{Access Control Policies}

We now formalize the SQL access control model. 
We first formalize five privileges.
Let $D$ be a database schema.
A \emph{\texttt{SELECT} privilege over $D$} is a tuple $\langle \texttt{SELECT}, R\rangle$, where $R$ is a relation schema in $D$ or a view over $D$.
A \emph{\texttt{CREATE VIEW} privilege over $D$} is a tuple $\langle \texttt{CREATE VIEW}\rangle$.
An \emph{\texttt{INSERT} privilege over $D$} is a tuple $\langle \texttt{INSERT}, R \rangle$,  a \emph{\texttt{DELETE} privilege over $D$} is a tuple $\langle \texttt{DELETE}, R \rangle$, and a \emph{\texttt{CREATE TRIGGER} privilege over $D$} is a tuple  $\langle \texttt{CREATE TRIGGER}, R\rangle$, where $R$ is a relation schema in $D$.
We denote by ${\cal PRIV}_{D}$ the set of privileges over  $D$.

Following SQL, we use \texttt{GRANT} commands to assign privileges to users.  
Let $U \subseteq {\cal U}$ be a set of users and $D$ be a data\-base schema.
We now define $(U,D)$-grants and $(U,D)$-revokes.
There are two types of \emph{$(U,D)$\emph{-grants}}.
A $(U,D)$\emph{-simple grant} is a tuple $\langle \oplus, u,p,u'\rangle$, where $u \in U$ is the user receiving the privilege $p \in {\cal PRIV}_{D}$  and $u' \in U$ is the user granting this privilege.
A $(U,D)$\emph{-grant with grant option} is a tuple $\langle \oplus^{*}, u,p,u'\rangle$, where $u$, $p$, and $u'$ are as before.
A \emph{$(U,D)$-revoke} is a tuple $\langle \ominus, u, p, u'\rangle$, where $u \in U$ is the user from which the privilege  $p\in {\cal PRIV}_{D}$ will be revoked and  $u' \in U$ is the user revoking this privilege.
We denote by $\Omega^{\mathit{sec}}_{U,D}$ the set of all $(U,D)$-grants and $(U,D)$-revokes.
A grant $\langle \oplus, u,p,u'\rangle$ models the command \texttt{GRANT} $p$ \texttt{TO} $u$ issued by $u'$, a grant with grant option $\langle \oplus^{*}, u,p,u'\rangle$ models the command \texttt{GRANT} $p$ \texttt{TO} $u$ \texttt{WITH GRANT OPTION} issued by $u'$, and a revoke $\langle \ominus, u, p, u'\rangle$ models the command \texttt{REVOKE} $p$ \texttt{FROM} $u$ \texttt{CASCADE} issued by $u'$.

Finally, we define a $(U,D)$\emph{-access control policy} $S$ as a finite set of $(U,D)$-grants. 
We denote by ${\cal S}_{U,D}$ the set of all $(U,D)$-policies.

\begin{example}\label{example:attack:five}
Consider the policy described in Attack~\ref{example:trigger:attack:owner}.
The database $D$ has three tables: $N$, $\mathit{P}$, and $\mathit{T}$.
The set $U$ is $\{u,\mathit{admin}\}$ and the policy $S$ contains the following grants: $\langle \oplus,$$u,$$\langle \texttt{SELECT},$$\mathit{P}\rangle,$$\mathit{admin}\rangle$, $\langle \oplus,$$u,$$\langle \texttt{INSERT},$$\mathit{P}\rangle,$$ \mathit{admin}\rangle$, $\langle \oplus,$$ u,$$ \langle \texttt{DELETE},$$ \mathit{P}\rangle,$$ \mathit{admin}\rangle$,  $ \langle \oplus,$$  u,$$   \langle \texttt{SELECT},$$ \mathit{N}\rangle,$$  \mathit{admin}\rangle$, $\langle \oplus,$$  u,$\onlyTechReport{\\}$ \langle \texttt{INSERT},$$ \mathit{N}\rangle,$$ \mathit{admin}\rangle$, and  $\langle \oplus,$$ u,$$  \langle \texttt{DELETE},$$ \mathit{N}\rangle,$$ \mathit{admin}\rangle$.
\end{example}

\subsection{Triggers}
Let $D$ be a database schema.
A \emph{trigger over $D$} is a tuple $\langle \mathit{id}, u,  \mathit{e}, R, \phi, \mathit{a}, \mathit{m}\rangle$, where $\mathit{id} \in {\cal T}$ is the trigger identifier, $u \in {\cal U}$ is the trigger's owner, $\mathit{e} \in \{ \mathit{INS}, \mathit{DEL}\}$ is the trigger event (where $\mathit{INS}$ stands for $\mathtt{INSERT}$ and $\mathit{DEL}$ stands for $\mathtt{DELETE}$), $R \in D$ is a relation schema, the trigger condition $\phi$ is a relational calculus formula such that $\mathit{free}(\phi) \subseteq \{x_{1}, \ldots, x_{|R|}\}$, and the trigger action $\mathit{a}$ is one of:
(1) $\langle \mathtt{INSERT}, R', \overline{t} \rangle$, where $R' \in D$ and $\overline{t}$ is a $|R'|$-tuple of values in $\mathbf{dom}$ and variables in $\{x_{1}, \ldots, x_{|R|}\}$,
(2) $\langle \mathtt{DELETE}, R', \overline{t}\rangle$, where $R'$ and $\overline{t}$ are as before, or
(3) $\langle \mathit{op}, u, p \rangle$, where $\mathit{op} \in \{\oplus, \oplus^{*}, \ominus\}$, $u \in {\cal U}$, and $p$ is a privilege over $D$.
Finally,  $\mathit{m} \in \{A,O\}$ is the security mode, where  $A$ stands for \emph{activator's privileges} and $O$ stands for \emph{owner's privileges}. 
We denote by ${\cal TRIGGER}_{D}$ the set of all triggers over $D$.

We assume that any command $a$ is executed atomically together with all the triggers activated by $a$.
We also assume that triggers do not recursively activate other triggers.
Hence all executions terminate.
We enforce this condition syntactically at the trigger's creation time; see~\techReportAppendix{app:lts}~for additional details.
The trigger $\langle \mathit{t},\mathit{admin},  \mathit{INS}, \mathit{P}, \mathit{T}(x_{1}),\onlyTechReport{\\} \langle \mathtt{INSERT},  \mathit{N}, x_{1} \rangle,  O\rangle$ models the trigger in Attack~\ref{example:trigger:attack:owner}. 
Here, $x_{1}$ is bound, at run-time, to the value inserted in $P$ by the trigger's invoker.

\section{System and Attacker Model}\label{sect:syst:attk:model}
We next present our system and attacker models.
Executable versions of these models, built in the Maude framework~\cite{clavel2003maude}, are available at~\cite{prototype}.
The models can be used for simulating the execution of our operational semantics, as well as computing the information that an attacker can infer from the system's behaviour.
We have executed and validated all of our examples using these models.

\subsection{Overview}

In our system model, shown in Figure \ref{fig:systemModel}, users interact with two components: a data\-base system and an access control system.
The access control system contains both a policy enforcement point and a policy decision point.
We assume that all the communication between the users and the components is over secure channels.

\begin{figure}
\centering

\begin{tikzpicture}[->,>=stealth',shorten >=1pt,auto, semithick]

\node[fill=none,draw=black, shape = rectangle, rounded corners, inner sep=0pt, outer sep=0pt, minimum height = 40pt, minimum width = 15pt]  (pdp) at (0,0) {};
   \node[below = 0pt of pdp] (pdpLabel) {\small{Access Control System}};

\node[shape=circle,fill=none,inner sep=0pt, minimum size=2pt, anchor=south, outer sep=0pt, left = 60pt of pdp](hiddenNeck1)  {}; 
\node[shape=circle,fill=none,inner sep=0pt, minimum size=2pt, anchor=south, outer sep=0pt, left = 55pt of pdp](hiddenNeck3)  {}; 
\node[shape=semicircle,fill=black,inner sep=3pt, anchor=south, outer sep=0pt, below =0pt of hiddenNeck1](body1)  {}; 
\node[shape=circle,fill=black,inner sep=3pt, anchor=south, outer sep=0pt, above=0pt of hiddenNeck1](head1)  {};
\node[fill = black, shape = circle,inner sep=0pt,minimum size=2pt,  left = 5pt of hiddenNeck1] (dot1) {};

\node[fill = black, shape = circle,inner sep=0pt,minimum size=2pt,  left = 5pt of dot1] (dot2) {};

\node[fill = black, shape = circle,inner sep=0pt,minimum size=2pt,  left = 5pt of dot2] (dot3) {};
\node[shape=circle,fill=none,inner sep=0pt, minimum size=2pt, anchor=south, outer sep=0pt, left = 5pt of dot3](hiddenNeck2)  {};
\node[shape=semicircle,fill=black,inner sep=3pt, anchor=south, outer sep=0pt, below = 0pt of hiddenNeck2](body2)  {}; 
\node[shape=circle,fill=black,inner sep=3pt, anchor=south, outer sep=0pt, above=0pt of hiddenNeck2](head2)  {};

   \node[below = 10pt of dot2] (userLabel) {{\small{Users}}};

\node[shape = cylinder, shape border rotate=90, aspect = 0.5, fill=none,draw=black,  minimum height = 30pt, minimum width = 30pt, right = 60pt of pdp ]  (db) {};
   \node[below = 1pt of db] (databaseLabel) {\small{Database System}};

\draw[<->, black] (hiddenNeck3) -- (pdp);
\draw[<->, black] (pdp) --  (db);
\end{tikzpicture}
\caption{System model.}
\label{fig:systemModel}
\end{figure}

\smallskip
\noindent {\bf Database System.}
The database system (or database for short) manages the data. 
The data\-base's state is represented by a mapping from relation schemas to sets of tuples. 
We assume that all data\-base~operations~are~atomic.

\smallskip
\noindent {\bf Users.} 
Users interact with the database  where each command is checked by the access control system.
Each user has a unique account through which he can  issue \texttt{SELECT}, \texttt{INSERT}, \texttt{DELETE}, \texttt{GRANT}, \texttt{REVOKE}, \texttt{CREATE TRIGGER}, and \texttt{CREATE VIEW} commands.

The \emph{system administrator} is a distinguished user responsible for defining the database schema and the access control policy.
In addition to issuing queries and commands, he can  create user accounts and assign them to users. 
The administrator interacts with the access control system through a special account $\mathit{admin}$. 

The \emph{attacker} is a user, other than the administrator, with an assigned user account who attempts to violate the access control policy. 
Namely, his goals are:\begin{inparaenum}[(1)]
\item to read or infer data from the database for which he lacks the necessary
\texttt{SELECT} privileges, and
\item  to alter the system state in unauthorized ways, e.g., changing data in relations for which he lacks the
necessary \texttt{INSERT} and \texttt{DELETE} privileges.
\end{inparaenum}
The attacker can issue any command available to users and he sees the results of his commands.
The attacker's inference capabilities are specified using deduction rules. 

\smallskip
\noindent {\bf Access Control System.} 
The access control system protects the confidentiality and integrity of the data in the database. 
It is configured with an access control policy $S$, it intercepts all commands issued by the users, and it prevents the execution of commands that are not authorized by $S$.
When a user $u$ issues a command $c$, the access control system decides whether $u$ is authorized to execute $c$.
If $c$ complies with the policy, then the access control system forwards the command to the DBMS, which executes $c$ and returns its result to $u$. 
Otherwise, it raises a \emph{security exception} and rejects $c$.
Note that this corresponds to the Non-Truman model \cite{rizvi2004extending}; see related work for more details.

The access control system also logs all issued commands. 
When evaluating a command, the access control system can access the data\-base's current state and the log.

\subsection{System Model}\label{sect:lts}
We formalize our system model as a labelled transition system (LTS).
First, we define a system configuration, which describes the data\-base schema and the integrity constraints, and the user actions.
Afterwards, we define the  system's state, which represents a snapshot of the system that contains the data\-base's state, the identifiers of the users interacting with the system, the access control policy, and the current triggers and views in the system.
Finally, we formalize the system's behaviour as a small step operational semantics, including all features necessary to reason about security, even in the presence of attacks like those illustrated in \S\ref{sect:motivating}.

A \emph{system configuration} is a tuple $\langle D,\Gamma\rangle$ such that $D$ is a schema and $\Gamma$ is a finite set of integrity constraints over $D$. 
Let $M = \langle D,\Gamma\rangle$ be a system configuration and $u \in {\cal U}$ be a user.
A \emph{$(D,u)$-action} is one of the following tuples:
\begin{compactitem}
\item $\langle u, \mathtt{ADD\_USER}, u'\rangle$, where $u \onlyTechReport{\hspace{-0.5mm} } = \onlyTechReport{\hspace{-0.5mm} } \admin$ and $u'\onlyTechReport{\hspace{-0.5mm} } \in {\cal U}\setminus  \{\mathit{admin}\}$,

\item $\langle u, \mathtt{SELECT}, q\rangle$, where $q$ is a boolean query\footnote{
Without loss of generality, we focus only on boolean queries \cite{abiteboul1995foundations}.
We can support non-boolean queries  as follows.
Given a database state $\mathit{s}$ and a query $q:=\{ \overline{x}\ |\ \phi\}$, if the access control mechanism authorizes the boolean query $\bigwedge_{\overline{t} \in [q]^{\mathit{s}}} \phi[\overline{x} \mapsto \overline{t}] \wedge (\forall \overline{x}.\, \phi \Rightarrow \bigvee_{\overline{t} \in [q]^{\mathit{s}}} \overline{x} = \overline{t})$,   then we return  $q$'s result, and otherwise we reject  $q$ as unauthorized.} over $D$,

\item $\langle u, \mathtt{INSERT}, R, \overline{t}\rangle$, where $R \in D$ and $\overline{t} \in \mathbf{dom}^{|R|}$,

\item $\langle u, \mathtt{DELETE}, R, \overline{t}\rangle$, where $R$ and $\overline{t}$ are as above, 

\item $\langle \mathit{op}, u',p,u\rangle$, where  $\langle \mathit{op}, u',p,u\rangle \in \Omega^{\mathit{sec}}_{D,{\cal U}}$, or

\item $\langle u,\mathtt{CREATE}, o\rangle$, where  $o \in {\cal TRIGGER}_{D} \cup {\cal VIEW}_{D}$.
\end{compactitem}
We denote by ${\cal A}_{D,u}$ the set of all $(D,u)$-actions and by ${\cal A}_{D,U}$, for some $U \subseteq {\cal U}$, the set $\bigcup_{u \in U} {\cal A}_{D,u}$.

An \emph{$M$-context} describes the system's history, 
the scheduled triggers that must be executed, 
and how to modify the system's state in case a roll-back occurs.
We denote by ${\cal C}_{M}$ the set of all $M$-contexts.
We assume that ${\cal C}_{M}$ contains a distinguished element $\epsilon$ representing the empty context, which is the context in which the system starts.
\onlyTechReport{Contexts~are formalized in \techReportAppendix{app:lts}.}

An $M$\emph{-state} is a tuple $\langle \mathit{db}, U, \mathit{sec}, T,V, c\rangle$ such that $\mathit{db} \in \Omega_{D}^{\Gamma}$ is a data\-base state, $U \subset {\cal U}$ is a finite set of users such that $\mathit{admin}\in U$, $\mathit{sec} \in  {\cal S}_{U,D}$ is a security policy, $T$ is a finite set of triggers over $D$ owned by users in $U$, $V$ is a finite set of views over $D$ owned by users in $U$, and $c \in {\cal C}_{M}$ is an $M$-context.
We denote by $\Omega_{M}$ the set of all $M$-states.
An $M$-state $\langle \mathit{db}, U, \mathit{sec}, T,V, c\rangle$ is \emph{initial} iff\begin{inparaenum}[(a)]
\item $\mathit{sec}$ contains only grants issued by $\mathit{admin}$,
\item $T$ (respectively  $V$) contains only triggers (respectively views) owned by $\mathit{admin}$, and 
\item $c = \epsilon$.
\end{inparaenum} 
We denote by ${\cal I}_{M}$ the set of all initial states.

An \emph{$M$-\accessControlFunction{}} ($M$-PDP) is a total function $f: \Omega_{M} \times  {\cal A}_{D,{\cal U}} \rightarrow \{\top,\bot\}$ that maps each state $s$ and action $a$ to an access control decision represented by a boolean value, where $\top$ stands for permit and $\bot$ stands for deny.
An \emph{\accessControlConfiguration{}}  is a tuple $\langle M, f\rangle$, where $M$ is a system configuration and $f$ is an $M$-\acf{}.

\begin{figure*}

\centering

\begin{minipage}{.49\textwidth}
    \begin{center}
    \scalebox{.85}[.85]{
        \begin{tabular}{c }
$\infer[\text{\begin{tabular}{c} \texttt{SELECT}\\ Success\end{tabular}}]
{s \xrightarrow{\langle u, \mathtt{SELECT},  q\rangle}_{f} s'}
{
\hfill s = \langle \mathit{db},\mathit{sec}, U, T,V,c \rangle \hfill  \enspace  
 \hfill  \mathit{f}(s, \langle u, \mathtt{SELECT},  q\rangle) = \top \hfill \enspace
 \hfill \mathit{trg}(s) = \epsilon \hfill \\
 \hfill s' = \langle \mathit{db},\mathit{sec}, U, T,V, c' \rangle \hfill \quad c' = \mathit{upd}(s,  \langle u, \mathtt{SELECT},  q\rangle)  \hfill}$
\\\\
$\infer[\text{\begin{tabular}{c} \texttt{INSERT}\\ Success\end{tabular}}]
{s \xrightarrow{\langle u, \mathtt{INSERT}, R, \overline{t}\rangle}_{f} s'}
{
\hfill s = \langle \mathit{db},\mathit{sec}, U, T,V,c \rangle \hfill  \quad 
\hfill \mathit{f}(s, \langle u, \mathtt{INSERT}, R, \overline{t}\rangle) = \top  \hfill \\
\hfill s' = \langle \mathit{db}[R \oplus \overline{t}],\mathit{sec}, U, T,V, c' \rangle \hfill \quad
\hfill \mathit{db} [R \oplus \overline{t}] \in \Omega_{D}^{\Gamma} \hfill \\
\hfill c' = \mathit{upd}(s,\langle u, \mathtt{INSERT}, R, \overline{t}\rangle)  \hfill \quad
\hfill \mathit{trg}(s) = \epsilon \hfill
}$
\end{tabular}
}
    \end{center}
\end{minipage}
\hspace{.12cm}
\begin{minipage}{.49\textwidth}
   \begin{center}
    \scalebox{.85}[.85]{
\begin{tabular}{c }
$\infer[\text{\begin{tabular}{c} Trigger \\\texttt{INSERT}\\ Success\end{tabular}}]
{\hfill s \xrightarrow{\mathit{trg}(s)}_{f} s' \hfill }
{
\hfill s = \langle \mathit{db},\mathit{sec}, U, T,V,c \rangle \hfill  \quad
\hfill \overline{v} = \mathit{tpl}(s) \hfill \\
\hfill u= \mathit{user}(m,\mathit{owner},\mathit{invoker}(s)) \hfill \\
\hfill \mathit{trg}(s) = \langle \mathit{id},\mathit{owner},  \mathit{ev}, R', \phi, \mathit{st}, m\rangle \hfill  \\
\hfill \mathit{f}(s, \langle u, \mathtt{SELECT}, \phi[\overline{x} \mapsto \overline{v}]\rangle) = \top \hfill \quad
\hfill [ \phi[\overline{x} \mapsto \overline{v}]]^{\mathit{db}} = \top \hfill \\
\hfill \langle u , \mathtt{INSERT}, R, \overline{v}' \rangle = \mathit{act}(\mathit{st}, u, \overline{v})  \hfill \\
\hfill \mathit{f}(s, \langle u , \mathtt{INSERT}, R, \overline{v}' \rangle) = \top \hfill \quad
\hfill c' = \mathit{upd}(s,\mathit{trg}(s))  \hfill  \\
\hfill s' = \langle \mathit{db}[R \oplus \overline{v}'],\mathit{sec}, U, T,V, c' \rangle \hfill \quad
\hfill \mathit{db} [R \oplus \overline{v}'] \in \Omega_{D}^{\Gamma} \hfill 
}$
\end{tabular}
}
    \end{center}
\end{minipage}

\caption{Examples of system model's rules.}\label{table:rules:example}
\end{figure*}

We now define the LTS representing the system model.

\begin{definitionInt}
Let $\mathit{P}=\langle M, f\rangle$ be an \accessControlConfiguration{}, where $M = \langle D,\Gamma\rangle$ and $f$ is an $M$-\acf{}.
The \emph{$\mathit{P}$-LTS} is the labelled transition system $\langle S, A, \rightarrow_{f}, I\rangle$ where
$S = \Omega_{M}$ is the set of states, $\mathit{A} = {\cal A}_{D,{\cal U}} \cup {\cal TRIGGER}_{D}$ is  the set of actions, $\rightarrow_{f} \; \subseteq S \times A \times S$ is the transition relation, and   $I = {\cal I}_{M}$  is the set of initial states. $\square$
\end{definitionInt}

Let $\mathit{P}=\langle M, f\rangle$ be an \accessControlConfiguration{}.
A \emph{run} $r$ of a $P$-LTS $L$ is a finite alternating sequence of states and actions, which starts with an initial state $s$, ends in some state $s'$, and respects the transition relation $\rightarrow_{f}$. 
We denote by $\mathit{traces}(L)$ the set of all  $L$'s runs.
Given a run $r$, $|r|$ denotes  the number of states in $r$,  $\mathit{last}(r)$ denotes $r$'s last state, 
\onlyShortVersion{and $r^{i}$, where $1 \leq i \leq |r|$, denotes the run obtained by truncating $r$ at the $i$-th state.}
\onlyTechReport{and $r^{i}$, where $1 \leq i \leq |r|$, denotes the run obtained by truncating $r$ at the $i$-th state.} 

The relation $\rightarrow_{f}$ formalizes the system's  small step operational semantics.
Figure \ref{table:rules:example} shows three rules describing the successful execution  of \texttt{SELECT} and \texttt{INSERT} commands, as well as triggers.
In the rules, we represent context changes using the update  function $\mathit{upd}$, which takes as input an $M$-state and an action $a \in {\cal A}_{D,{\cal U}} \cup {\cal TRIGGER}_D$, and returns the updated context. 
This function, for instance, updates the system's history stored in the context.
The  function $\mathit{trg}$ takes as input a system state $s$ and returns the first trigger in the list of scheduled triggers stored in $s$'s context. 
If there are no triggers to be executed, then $\mathit{trg}(s) = \epsilon$.
The rule \emph{\texttt{SELECT} Success} models the system's  behaviour when the user $u$ issues a \texttt{SELECT} query $q$ that is authorized by the \acf{} $f$.
The only component of the $M$-state $s$ that changes is the context $c$.
Namely, $c'$ is obtained from $c$ by updating the history and storing $q$'s result. 
Similarly, the rule \emph{\texttt{INSERT} Success} describes how the system behaves after a successful \texttt{INSERT} command, i.e., one that neither violates the integrity constraints nor causes security exceptions.
The database state $\mathit{db}$ is updated by adding the tuple $\overline{t}$ to $R$ and the context is updated from $c$ to $c'$ by\begin{inparaenum}[(a)]
\item storing the action's result,
\item storing the triggers that must be executed in response to the \texttt{INSERT} event, and 
\item keeping track of the previous state in case a roll-back is needed.
\end{inparaenum}

The \emph{Trigger \texttt{INSERT} Success} rule describes how the system executes a trigger whose action is an \texttt{INSERT}.
The system extracts from the context the trigger $t$ to be executed, i.e., $t = \mathit{trg}(s)$. 
It determines, using the function $\mathit{user}$, the user $u$ under whose privileges the trigger $t$ is executed, which is, depending on $t$'s security mode, either the invoker $\mathit{invoker}(s)$ or $t$'s owner.
It then checks that $u$ is authorized to execute the \texttt{SELECT} statement associated with $t$'s \texttt{WHEN} condition, and that this condition is satisfied.
Afterwards, it computes the actual action using the function $\mathit{act}$, which instantiates the free variables in $t$'s definition with the values in the tuple $\mathit{tpl}(s)$, i.e., the tuple associated with the action that fired $t$.
Finally, the system updates the database state $\mathit{db}$ by adding the tuple $\overline{v}'$ to $R$ and the context by
 storing the results of $t$'s execution and
 removing $t$ from the list of scheduled triggers.

In \techReportAppendix{app:lts}, we give the complete formalization of our labelled transition system.
This includes formalizing contexts and  all the rules defining the transition relation $\rightarrow_{f}$. 
Our operational semantics can be tailored to model the behaviour of specific DBMSs.
Thus, using our executable model, available at~\cite{prototype}, it is possible to validate our operational semantics against different existing DBMSs.

\subsection{Attacker Model}\label{sect:adv:model:concrete}
\begin{figure*}[!hbtp]
\centering

\scalebox{.95}[1]{
\begin{tabular}{c}

$\infer[\begin{tabular}{c}\texttt{DELETE}\\ \text{Success}\end{tabular}]
{r, i \attMod  \neg R(\overline{t})}
{
  \hfill r^{i} = r^{i-1} \concat \langle u, \mathtt{DELETE}, R, \overline{t}\rangle \concat s  \hfill  \quad
  \hfill 1 < i \leq |r| \hfill  \\
	\hfill s \in \Omega_{M} \hfill \quad
  \hfill \mathit{secEx}(s) =\;\bot  \hfill \quad
   \hfill \mathit{Ex}(s) = \emptyset \hfill }$

 $\infer[\begin{tabular}{c}\texttt{SELECT}\\ \text{Success}\end{tabular}]
{r,i \attMod  \phi}
{
 \hfill  r^{i} = r^{i-1} \concat \langle u, \mathtt{SELECT},  \phi\rangle \concat s \hfill  \quad
 \hfill 1 < i \leq |r| \hfill \quad 
 \hfill s \in \Omega_{M} \hfill \\
 \hfill \mathit{secEx}(s) =\;\bot  \hfill \quad
   \hfill \mathit{Ex}(s) = \emptyset \hfill \quad 
 \hfill   \mathit{res}(s) =\top \hfill }$ \\\\

$\infer[\text{\begin{tabular}{c}Learn\\ \texttt{INSERT}\\ Backward\end{tabular}}]
{r, i \attMod   \phi[\overline{x} \mapsto \mathit{tpl}(\mathit{last}(r^{i}))] }
{ 
 \hfill r^{i+1} = r^{i}  \concat t \concat s \hfill \quad 
 \hfill \mathit{invoker}(\mathit{last}(r^{i})) = u  \hfill \quad 
 \hfill     s \in \Omega_{M} \hfill \quad
 \hfill 1 \leq i < |r| \hfill \\
\hfill \mathit{secEx}(s) = \bot \hfill  \quad
 \hfill \mathit{Ex}(s) = \emptyset \hfill   \quad
  \hfill r, i \attMod   \neg \psi   \hfill \quad
 \hfill  r, i+1 \attMod   \psi  \hfill \\
 \hfill t=  \langle \mathit{id},\mathit{ow}, \mathit{ev}, R', \phi(\overline{x}), \langle \mathtt{INSERT}, R, \overline{t}\rangle,m\rangle \hfill \\
}$

$\infer[\text{\begin{tabular}{c} Propagate \\ Backward \\ \texttt{SELECT}\end{tabular}}]
{r, i\attMod  \psi}
{
 \hfill  r^{i+1} = r^{i}  \concat \langle u, \mathtt{SELECT},  \phi\rangle \concat s  \hfill \\
 \hfill r,i+1 \attMod  \psi  \hfill  \quad
 \hfill     s \in \Omega_{M} \hfill \quad
 \hfill 1 \leq i < |r| \hfill 
 }$ \\\\

$\infer[\text{\begin{tabular}{c}Propagate Forward\\ Update Success\end{tabular}}]
{r , i\attMod   \phi}
{
 \hfill r,i-1 \attMod  \phi \hfill  \quad
 \hfill r^{i} = r^{i-1} \concat \langle u, \mathit{op}, R, \overline{t}\rangle \concat s  \hfill \quad
  \hfill     s \in \Omega_{M} \hfill \quad
 \hfill 1 < i \leq |r| \hfill \\
 \hfill \mathit{secEx}(s) = \bot  \hfill \quad
 \hfill  \mathit{Ex}(s) = \emptyset  \hfill   \quad
 \hfill  \mathit{revise}(r^{i-1}, \phi, r^{i}) = \top  \hfill \quad
 \hfill    \mathit{op} \in \{\mathtt{INSERT}, \mathtt{DELETE}\} \hfill } $ 
\end{tabular}
}
\caption{
Example of attacker inference rules, where $r,i \attMod \phi$ denotes that this judgment holds in $\attackerModel$.
}\label{figure:adv:model:rules}
\end{figure*}

We model attackers that interact with the system through SQL commands and infer information from the system's behaviour by exploiting triggers, views, and integrity constraints.
We argue that database access control mechanisms should be secure with respect to such strong attackers, as this reflects how (malicious) users may interact with modern databases.
Furthermore, any mechanism secure against such strong attackers is also secure against weaker attackers.

Any user other than the administrator can be an attacker, and we assume that users do not collude to subvert the system.
Note that our attacker model, the security properties in  \S\ref{sect:sec:prop}, and the mechanism we develop in  \S\ref{sect:enf:alg}, can  easily be extended to support colluding users.
We also assume that an attacker can issue any command available to the system's users, and he knows the system's operational semantics, the database schema, and the integrity constraints.

We assume that an attacker has access to the system's security policy, the set of users, and the definitions of the triggers and views in the system's state.
In more detail, given an $M$-state $\langle \mathit{db}, U, \mathit{sec}, T,V, c\rangle$, an attacker can access $U$, $\mathit{sec}$, $T$, and $V$.
Users interacting with existing DBMSs typically have access to some, although not all, of this information.
For instance, in PostgreSQL a user can read all the information about the triggers defined on the tables for which he has some non-\texttt{SELECT} privileges.
Note that the more information an attacker has, the more attacks he can launch.
Finally, we assume that an attacker knows whether any two of his commands $c$ and $c'$ have been executed consecutively by the system, i.e., if there are commands executed by other users occurring between $c$ and $c'$.
The attacker's knowledge about the sequential execution of his commands is needed to soundly propagate his knowledge about the system's state between his commands.
Since the mechanism we develop in \S\ref{sect:enf:alg} is secure with respect to this attacker, it is also secure with respect to weaker attackers who have less information or cannot detect whether their commands have been executed consecutively.

An attacker model describes what information an attacker knows, how he interacts with the system, and what he learns about the system's data by observing the system's behaviour.
Since every user is a potential attacker, for each user $u \in {\cal U}$ we define an attacker model specifying $u$'s inference capabilities.
To represent $u$'s knowledge, we introduce judgments. 
A judgment is a four-tuple $\langle r,i, u, \phi\rangle$, written $r,i \attMod \phi$, denoting that from the run $r$, which represents the system's behaviour, the user $u$ can infer that $\phi$ holds in the $i$-th state of $r$.
An attacker model for $u$ is thus  a set of judgments associating to each position of each run, the sentences that  $u$ can infer from the system's behaviour.
The idea of representing the attacker's knowledge using sentences $\phi$ is inspired by existing formalisms for Inference Control~\cite{ brodsky2000secure,  farkas2002inference} and Controlled Query Evaluation~\cite{bonatti1995foundations}.

\begin{definition}
Let $P$ be an \accessControlConfiguration{}, $L$ be the $P$-LTS, and $u \in {\cal U}$ be a user.
A \emph{$(P,u)$-judgment} is a tuple $\langle r,i,u,\phi \rangle$, written $r,i \attMod \phi$, where $r \in \mathit{traces}(L)$, $1 \leq i \leq |r|$, and $\phi \in \mathit{RC}_{\mathit{bool}}$. 
A \emph{$(P,u)$-attacker model} is a set of $(P,u)$-judgments. 
A $(P,u)$-judgment $r,i \attMod \phi$ \emph{holds in a $(P,u)$-attacker model $A$} iff $r,i \attMod \phi \in A$.
\end{definition}

For each user $u \in {\cal U}$, we now define the $(P,u)$-attacker model $\attackerModel$ that we use in the rest of the paper.
We formalize this model using a set of inference rules, where $\attackerModel$ is the smallest set of judgments satisfying the inference rules.
Figure \ref{figure:adv:model:rules} shows five representative rules. 
The complete formalization of all rules is given in \techReportAppendix{app:adv:model}.
In the following, when we say that a judgment $r,i \attMod \phi$ holds, we always mean with respect to the attacker model $\attackerModel$.

Note that $\attackerModel$ is sound with respect to the $\mathit{RC}$~semantics, i.e., if $r, i \attMod \phi$ holds, then the formula $\phi$ holds in the $i$-th state of $r$.
Intuitively, $\attackerModel$ models how $u$~infers information from the system's behaviour, namely\begin{inparaenum}[(a)]
\item how $u$  learns information from his commands and their results,
\item how $u$ learns information from triggers, their execution, their interleavings, and their side effects,
\item how $u$ propagates his knowledge along a run, and
\item how $u$ learns information from  exceptions caused  by either integrity constraint violations or security violations.
\end{inparaenum}
This model is substantially more powerful than the \texttt{SELECT}-only attacker~model.

The rules \emph{\texttt{DELETE} Success} and \emph{\texttt{SELECT} Success} describe how the user $u$ infers information from his successful actions, i.e., those actions that generate neither security exceptions nor integrity violations.
In the rules,  $\mathit{secEx}(s) = \bot$  denotes that there were no security exceptions caused by the action leading to $s$, and $\mathit{Ex}(s) = \emptyset$ denotes that the action leading to $s$ has not violated the integrity constraints.
After a successful \texttt{DELETE}, $u$ knows that the deleted tuple is no longer in the database, and after a successful \texttt{SELECT} he learns the query's result, denoted by $\mathit{res}(s)$.

The rules \emph{Propagate Backward \texttt{SELECT}} and \emph{Propagate Forward Update Success} describe how $u$ propagates information along the run.
\emph{Propagate Backward \texttt{SELECT}} states that if the user $u$ knows that $\phi$ holds after a \texttt{SELECT} command, then he knows that $\phi$ also holds just before the \texttt{SELECT} command because \texttt{SELECT} commands do not modify the database state.
\emph{Propagate Forward Update Success} states that if $u$ knows that $\phi$ holds before a successful \texttt{INSERT} or \texttt{DELETE} command and he can determine that the command's execution does not influence $\phi$'s truth value, denoted by $\mathit{revise}(r^{i-1},\phi,r^{i}) = \top$, then he also knows that $\phi$ holds after the command. 
\onlyTechReport{The function $\mathit{revise}$ is formalized in \techReportAppendix{app:adv:model}.} 
 
Finally, the rule \emph{Learn \texttt{INSERT} Backward} models $u$'s reasoning when he activates a trigger that successfully inserts a tuple in the database. 
If $u$ knows that immediately before the trigger the formula $\psi$ does not hold and immediately after the trigger the formula $\psi$ holds, then the trigger's execution is the cause of the database state's change.
Therefore, $u$ can infer that the trigger's condition $\phi$ holds just before the trigger's execution. 
Note that $\mathit{invoker}(s)$ denotes the user who fired the trigger that is executed in the state $s$, whereas  $\mathit{tpl}(s)$  denotes the tuple associated with the action that fired the trigger that is executed in the state $s$.

\begin{figure*}[!hbtp]
\centering
$
	\infer[\text{\begin{tabular}{c}Learn \texttt{INSERT} Backward\end{tabular}}]
			{r,3 \attMod  \mathit{T}(v)}
			{\vspace{5pt}
					\infer[\text{\begin{tabular}{c}Propagate Forward \\Update Success\end{tabular}}]
					{r,3 \attMod \neg \mathit{N}(v)}
					{\vspace{5pt}
						\infer[\text{\begin{tabular}{c}\texttt{DELETE} Success\end{tabular}}]
						{r,2 \attMod   \neg \mathit{N}(v)}
						{}
					}
				 \qquad
					\infer[\text{\begin{tabular}{c}Propagate Backward \texttt{SELECT}\end{tabular}}]
					{r,4 \attMod  \mathit{N}(v) }
					{\vspace{5pt}
						\infer[\text{\begin{tabular}{c}\texttt{SELECT} Success\end{tabular}}]
						{r,5 \attMod  \mathit{N}(v)}		
						{}
					}						
			}
$
\caption{Template Derivation of Attack \ref{example:trigger:attack:owner} (contains just selected subgoals)}\label{deriv:example:trigger:owner}
\end{figure*}

\begin{example}\label{example:derivation}
Let the schema, the set of users $U$,~and the policy $S$ be as in Example \ref{example:attack:five}.
The database state $\mathit{db}$  is $\mathit{db}(N)\,=\,\{v\}$, $\mathit{db}(\mathit{P}) =\emptyset$,~and $\mathit{db}(\mathit{T})$ $=\{v\}$. 
The only trigger in the system is $t = \langle \mathit{id},\mathit{admin},\mathit{INS},$ $ \mathit{P}, \mathit{T}(x_{1}), \langle \mathtt{INSERT}, \onlyTechReport{\\} \mathit{N}, x_{1} \rangle,  O\rangle$.
The run $r$ is as follows: 
\begin{compactenum}
\item $u$  deletes $v$ from $N$.
\item $u$ inserts $v$ in $\mathit{P}$. This activates the trigger $t$, which inserts $v$ in $N$. 
\item $u$ issues the \texttt{SELECT} query $N(v)$.
\end{compactenum}

We used Maude to generate the following run, which illustrates how the system's state changes. 
Note that there are no exceptions during the run.

\noindent
\begin{tikzpicture}[->,>=stealth',shorten >=1pt,auto, semithick]
  \tikzstyle{every state}=[rectangle, rounded corners,fill=none,draw=black,text=black,minimum height=1em,
           inner sep=1pt, ultra thin]

  \node[state] 		   (A) at (0,0)  	{ \onlyShortVersion{\footnotesize}\onlyTechReport{\small} {$\langle \mathit{db}, U,S,  \{t\}, \emptyset, c_{1}\rangle$}};
  \node[state]         (B) at (5,0) 	{ \onlyShortVersion{\footnotesize}\onlyTechReport{\small} {$\langle \mathit{db}[N \ominus {v}], U,S,  \{t\}, \emptyset, c_{2}\rangle$}};
  \node[state]         (C) at (5,-0.75) 	{ \onlyShortVersion{\footnotesize}\onlyTechReport{\small} {$\langle \mathit{db}[P \oplus {v}, N \ominus {v}], U, S,  \{t\}, \emptyset, c_{3}\rangle$}};
  \node[state]         (D) at (0.5,-0.75)  	{ \onlyShortVersion{\footnotesize}\onlyTechReport{\small} {$\langle \mathit{db}[P \oplus {v}], U, S,  \{t\}, \emptyset, c_{4} \rangle$}};
  \node[state]         (E) at (0.5,-1.5)  	{ \onlyShortVersion{\footnotesize}\onlyTechReport{\small} {$\langle \mathit{db}[P \oplus {v}], U, S,  \{t\}, \emptyset, c_{5} \rangle$}};

  \path (A) edge              node { \onlyShortVersion{\footnotesize}\onlyTechReport{\small} {$\langle u, \mathtt{DELETE}, N, {v} \rangle$}} (B)
        (B) edge		[left]	  node { \onlyShortVersion{\footnotesize}\onlyTechReport{\small} {$\langle u, \mathtt{INSERT}, \mathit{P}, {v} \rangle$}} (C)
        (C) edge        [above]      node { \onlyShortVersion{\footnotesize}\onlyTechReport{\small} {$t$}} (D)
        (D) edge  		      node { \onlyShortVersion{\footnotesize}\onlyTechReport{\small} {$\langle u, \mathtt{SELECT}, N(v) \rangle$}} (E);
\end{tikzpicture}

Figure \ref{deriv:example:trigger:owner} models $u$'s reasoning in Attack \ref{example:trigger:attack:owner}.
The user $u$ first applies the  \emph{\texttt{SELECT} Success} rule to derive $r,5 \attMod  \mathit{N}(v)$, i.e., he learns the query's result. By applying the rule \emph{Propagate Backward \texttt{SELECT}} to $r,5 \attMod  \mathit{N}(v)$,  he obtains $r,4 \attMod  \mathit{N}(v)$, i.e., he learns that $\mathit{N}(v)$ holds before the \texttt{SELECT} query.
Similarly, he applies the rule \emph{\texttt{DELETE} Success} to derive  $r,2 \attMod  \neg \mathit{N}(v)$, and he obtains $r,3 \attMod  \neg \mathit{N}(v)$ by applying the \emph{\text{Propagate Forward Update Success}} rule.
Finally, by applying the rule \emph{Learn \texttt{INSERT} Backward} to $r,3 \attMod  \neg \mathit{N}(v)$ and $r,4 \attMod  \mathit{N}(v)$, he learns the value of the trigger's \texttt{WHEN} condition $r,3 \attMod  \mathit{T}(v)$.
Since the user $u$ should not be able to learn information about $\mathit{T}$, the attack violates the intended confidentiality guarantees. 
We used our executable attacker model~\cite{prototype} to derive  the judgments.
\end{example}

\section{Security Properties}\label{sect:sec:prop}

Here we define two security properties: \correctness{} and  \confidentiality{}. 
These properties capture the two essential aspects of database security.
Database integrity states that all actions modifying the system's state are authorized by the system's policy.
In contrast, \confidentiality{} states that all information that an attacker can learn by observing the system's behaviour is authorized.

These two properties formalize security guarantees with respect to the two different classes of attacks previously identified.  
An access control mechanism providing \correctness{} prevents non-authorized changes to the system's state and, thereby, prevents integrity attacks.
Similarly, by preventing the leakage of sensitive data, a mechanism providing \confidentiality{} prevents confidentiality attacks.

\onlyTechReport{\newpage}
\subsection{\Correctness{}}\label{sect:sec:prop:correctness}

Database integrity requires a formalization of authorized actions.
We therefore define the relation $\auth$ between states and actions, modelling which actions are authorized~in a given  state. 
Let $P = \langle M, f\rangle$ be an \accessControlConfiguration{}, where $M = \langle D,\Gamma\rangle$ and $f$ is an $M$-\acf{}.
The~relation $\auth \subseteq \Omega_{M} \times ({\cal A}_{D,{\cal U}} \cup {\cal TRIGGER}_{D})$~is~defined~by~a~set~of~rules~given in \techReportAppendix{app:eop}. 
Figure~\ref{figure:eop:auth:example} shows three representative rules. 
The  \emph{\texttt{GRANT}} rule says that the owner $o$ of a view $v$ with owner's privileges  is authorized to delegate the \texttt{SELECT} privilege over $v$ to a user $u$ in the state $s$, if $o$ has the \texttt{SELECT} privilege with grant option over a set of tables and views that determine $v$'s materialization \cite{nash2010views}, denoted by $\mathit{hasAccess}(s, v, o, \oplus^{*})$.
The  \emph{\texttt{TRIGGER}} rule says that the execution of an enabled trigger, i.e., one whose \texttt{WHEN} condition is satisfied, with the activator's privileges is authorized if both the invoker and the trigger's owner are authorized to execute the  trigger's action according to $\auth$. 
Note that the $\mathit{act}$ function instantiates the action given in the trigger's definition to a concrete action by identifying the user performing the action and replacing the free variables with values from ${\bf dom}$. 
Finally, the \emph{\texttt{REVOKE}} rule says that a \texttt{REVOKE} statement is authorized if the resulting state,
obtained using the function $\mathit{apply}$, has a consistent policy, namely one in which all the \texttt{GRANT}s are 
authorized~by~$\auth$.

\begin{figure}
\centering
\scalebox{.90}[1]{
\begin{tabular}{c}

$
\infer[\texttt{GRANT}]
{ s \auth \langle \mathit{op}, u, \mathit{priv}, \mathit{o} \rangle
}
{
\hfill s = \langle\mathit{db}, U, \mathit{sec}, T,V,c\rangle \hfill \quad
\hfill u, \mathit{o} \in U \hfill \quad 
\hfill \mathit{op} \in \{\oplus,\oplus^{*}\} \hfill \\ 
\hfill \mathit{priv} = \langle \texttt{SELECT}, v\rangle \hfill \quad
\hfill v = \langle \mathit{id}, \mathit{o}, q, O \rangle \hfill \quad
\hfill v \in V \hfill \\
\hfill \mathit{hasAccess}(s ,v, o, \oplus^{*}) \hfill
}$
\\\\

$
\infer[\texttt{TRIGGER}]
{ s \auth t
}
{
\hfill s = \langle\mathit{db}, U, \mathit{sec}, T,V,c  \rangle \hfill  \quad
\hfill t = \langle \mathit{id},\mathit{ow},  \mathit{ev}, R, \phi, \mathit{st}, A\rangle \hfill \\  
\hfill [\phi[\overline{x} \mapsto \mathit{tpl}(s)]]^{\mathit{db}} = \top \hfill  \quad
\hfill s \auth \mathit{act}(\mathit{st},\mathit{ow},\mathit{tpl}(s)) \hfill \\
\hfill s \auth \mathit{act}(\mathit{st},\mathit{invoker}(s),\mathit{tpl}(s)) \hfill \quad
\hfill t \in T \hfill \\
}
$
\\\\
$
\infer[\texttt{REVOKE}]
{ s \auth \langle \ominus, u, p, u' \rangle
}
{
\hfill s = \langle\mathit{db}, U, \mathit{sec}, T,V,c  \rangle \hfill \quad 
\hfill s' = \langle\mathit{db}, U, \mathit{sec}', T,V,c  \rangle \hfill \\
\hfill s' = \mathit{apply}(\langle \ominus, u, p, u' \rangle, s) \quad
\hfill \forall g \in \mathit{sec}'.\, s' \auth g \hfill
}
$
\end{tabular}
}
\caption{Examples of $\auth$ rules.}\label{figure:eop:auth:example}
\end{figure}

We  now define \correctness{}. 
Intuitively, a \acf{} provides \correctness{}  iff all the actions it authorizes are explicitly authorized by the policy, i.e., they are authorized by $\auth$. 
This notion comes directly from the SQL standard, and it is reflected in existing enforcement mechanisms.
Recall that, given a state $s$,  $\mathit{secEx}(s) = \bot$  denotes that there were no security exceptions caused by the action or trigger leading to $s$.

\begin{definition}
Let $P = \langle M, f\rangle$ be an \accessControlConfiguration{}, where $M = \langle D,\Gamma\rangle$ and $f$ is an $M$-\acf{}, and let $L$ be the $P$-LTS.
We say that  \emph{$f$ provides \correctness{} with respect to $P$} iff for all reachable states $s,s' \in \Omega_{M}$, 
if $s'$ is reachable in one step from $s$ by an action $a \in {\cal A}_{D,{\cal U}} \cup {\cal TRIGGER}_{D}$ and $\mathit{secEx}(s') = \bot$, then $s \auth a$. 
\end{definition}

\begin{example}\label{example:eop:1}
We consider a run corresponding to Attack~\ref{example:trigger:attack:activator}, which illustrates a violation of \correctness{}. 
The data\-base $\mathit{db}$ is such that $\mathit{db}(P) =\emptyset$ and  $\mathit{db}(S) = \{z\}$,
the policy $\mathit{sec}$ is 
$\{ 
$$  \langle \oplus, $$ u_1, $$ \langle \texttt{CREATE TRIGGER},  $$ P \rangle,  $$ \admin \rangle, 
$ $  \langle \oplus,  $$ u_2,  $$ \langle \texttt{INSERT}, $$ P \rangle, $\onlyTechReport{\\}$  \admin \rangle, 
$$  \langle \oplus, $$ u_2,  $$ \langle \texttt{DELETE}, $$ S \rangle,  $$ \admin \rangle, 
$ $  \langle \oplus, $$ u_2,  $$ \langle \texttt{SELECT}, $$ P \rangle, $$  \admin \rangle, \onlyTechReport{\\}
$$  \langle \oplus, $$  u_2, $$  \langle \texttt{SELECT},$$  S \rangle, $$  \admin \rangle \}
$,
 and  the set  $U$ is $\{u_1, u_2,  \admin\}$.
 %
%
The run $r$ is as follows: 
\begin{compactenum}
\item The user $u_1$ creates the trigger $\hfill t \hfill  = \hfill \langle \mathit{id}, \hfill  u_1, \hfill  \mathit{INS},\hfill  P,\hfill  \top, \\ \langle \mathtt{DELETE}, S, z\rangle, A \rangle$. 

\item The user $u_2$ inserts the value $v$ in $P$. This activates the trigger $t$ and deletes the content of $S$, i.e., the value $z$.
\end{compactenum}
We used Maude to generate the following run, which illustrates how the system's state changes. 
Note that there are no exceptions during the run.
\onlyTechReport{\\}
\noindent
\begin{tikzpicture}[->,>=stealth',shorten >=1pt,auto, semithick]
  \tikzstyle{every state}=[rectangle, rounded corners,fill=none,draw=black,text=black,minimum height=1em,
           inner sep=1pt, ultra thin]

  \node[state] 		   (A) at (0,0)  	{ \onlyShortVersion{\footnotesize}\onlyTechReport{\small} {$\langle\mathit{db}, U,\mathit{sec},  \emptyset, \emptyset, c_{1}\rangle$}};
  \node[state]         (B) at (5,0) 	{ \onlyShortVersion{\footnotesize}\onlyTechReport{\small} {$\langle\mathit{db}, U,\mathit{sec},  \{t\}, \emptyset, c_{2} \rangle$}};
  \node[state]         (C) at (5,-0.75) 	{ \onlyShortVersion{\footnotesize}\onlyTechReport{\small} {$\langle\mathit{db}[P\oplus {v}], U, \mathit{sec},  \{t\}, \emptyset, c_{3}\rangle$}};
  \node[state]         (D) at (0.5,-0.75)  	{ \onlyShortVersion{\footnotesize}\onlyTechReport{\small} {$\langle\mathit{db}[P\oplus {v}, S \ominus  {z}], U, \mathit{sec},  \{t\}, \emptyset, c_{4} \rangle$}};

  \path (A) edge              node { \onlyShortVersion{\footnotesize}\onlyTechReport{\small} {$\langle u_1, \mathtt{CREATE}, t \rangle$}} (B)
        (B) edge		[left]	  node { \onlyShortVersion{\footnotesize}\onlyTechReport{\small} {$\langle u_2, \mathtt{INSERT}, P,  {v} \rangle$}} (C)
        (C) edge              node { \onlyShortVersion{\footnotesize}\onlyTechReport{\small} {$t$}} (D);
\end{tikzpicture}

Access control mechanisms that do not restrict the execution of triggers with activator's privileges violate \correctness{} because they do not throw security exceptions when $\langle\mathit{db}[P\oplus {v}], U, \mathit{sec},  \{t\}, \emptyset, c_{3}\rangle \not\auth t$.
\end{example}

\subsection{\Confidentiality{}}\label{sect:sec:prop:confidentiality}

To model data confidentiality, we first introduce the concept of indistinguishability of runs, which formalizes the desired confidentiality guarantees by specifying whether users can distinguish between different runs based on their observations.
Formally, a \emph{$P$-indistinguishability relation} is an equivalence relation over $\mathit{traces}(L)$, where $P$ is an extended configuration and $L$ is the $P$-LTS.
Indistinguishable runs, intuitively, should disclose the same information.

We now define the concept of a secure judgment, which is a judgment that does not leak sensitive information or, equivalently, one that cannot be used to differentiate between indistinguishable runs.

\begin{definition}\label{definition:secure:judgment}
Let $P$ be an \accessControlConfiguration{}, $L$ be the $P$-LTS, and $\cong$ be a $P$-indistinguishability relation.
  A judgment $r,i \attMod \phi$~is \emph{secure with respect to $P$ and $\cong$}, written $\mathit{secure}_{P,\cong}(r,i \attMod \phi)$, iff for all $r' \in \mathit{traces}(L)$ such that $r^{i} \cong r'$, it holds that $[\phi]^{\mathit{db}} = [\phi]^{\mathit{db}'}$,~ where $\mathit{last}(r^{i}) = \langle \mathit{db},  U,\mathit{S},T,  V,c\rangle$ and $\mathit{last}(r') = \langle \mathit{db}', U', \mathit{S}', T', V',c'\rangle$.\end{definition}

We are now ready to define \confidentiality{}.
Intuitively, an access control mechanism provides \confidentiality{} iff all judgments that an attacker can derive are~secure.

\begin{definition}\label{definition:data:confidentiality}
Let $P = \langle M,f\rangle$ be an \accessControlConfiguration{}, $L$ be the $P$-LTS,
    $u \in {\cal U}$ be a user, $A$ be  a $(P,u)$-attacker model, and $\cong$ be a $P$-indistinguishability relation.
	We say that \emph{$f$ provides \confidentiality{} with respect to $P$,   $u$, $A$, and $\cong$} iff 
$\mathit{secure}_{P,\cong}(r,i \attMod \phi)$ for all judgments $r,i \attMod \phi$ that hold in $A$.~\end{definition}

We now define the indistinguishability relation that we use in the rest of the paper, which captures what each user can observe (as stated in  \S\ref{sect:adv:model:concrete}) and the effects of the system's access control policy.
Let $P = \langle \langle D, \Gamma \rangle,f\rangle$ be an \accessControlConfiguration{}, $L$ be the $P$-LTS,  and $u$ be a user in ${\cal U}$. 
Given a run $r \in \mathit{traces}(L)$, the user $u$ is  aware only of his actions  and not of the actions of the other users in $r$.
This is represented by the $u$-projection of $r$, which is obtained by masking all sequences of actions that are not issued by $u$ using a distinguished symbol $*$.
Specifically, the \emph{$u$-projection of $r$} is a sequence of states in $\Omega_{M}$ and actions in ${\cal A}_{D,u} \cup {\cal TRIGGER}_{D} \cup \{*\}$ that is obtained from $r$ by 
(1) replacing each action not issued by $u$ with $*$, 
(2) replacing each trigger whose invoker is not $u$ with $*$, and 
(3) replacing all non-empty sequences of $*$-transitions with a single $*$-transition.
For each user $u \in {\cal U}$, we define the {$P$-in\-dis\-tin\-guish\-a\-bility} relation $\cong_{P,u}$, which is formally defined in \techReportAppendix{app:indistinguishability}. 
Intuitively, two runs $r$ and $r'$ are $\cong_{P,u}$-indistinguishable, denoted $r \cong_{P,u} r'$, iff\begin{inparaenum}[(1)]
\item the labels of the $u$-pro\-jec\-tions of $r$ and $r'$ are the same,
\item $u$ executes the same actions $a_{1}$, \ldots, $a_{n}$ in $r$ and $r'$, in the same order, and with the same results,~and
\item before each action $a_{i}$, where $1 \leq i \leq n$,  as well as in the last states of $r$ and $r'$, the views, 
the triggers,
the users,
and the data disclosed by the policy 
 are the same in $r$ and~$r'$.
\end{inparaenum}

We remark that there is a close relation between $\cong_{P,u}$ and state-based indistinguishability~\cite{guarnieri2014optimal, rizvi2004extending, wang2007correctness}. For any two $\cong_{P,u}$-indistinguishable runs $r$ and $r'$, the database states that precede all actions issued by $u$ as well as the last states in $r$ and $r'$ are  pairwise indistinguishable under existing state-based notions~\cite{guarnieri2014optimal, rizvi2004extending, wang2007correctness}.

\newcommand*{\horizontalSpace}{.5cm}
\newcommand*{\verticalSpace}{.3cm}
\newcommand*{\labelHorizontalSpace}{.05cm}

\begin{figure}

\centering
\scalebox{.70}[1]{
\begin{tikzpicture}[->,>=stealth',shorten >=1pt,auto, semithick]
  \tikzstyle{every state}=[rectangle, fill=none,draw=black,text=black,minimum size=0mm, minimum width=0mm, minimum height=0em,
           inner sep=0pt, ultra thin]

\node[state] (A) at (0,0) { \footnotesize {\begin{tabular}{| c " c |} \hline  $\mathbf{N}$&$\{v\}$ \\ \hline  $\mathbf{P}$&$\emptyset$ \\ \hline \rowcolor[gray]{.8} $\mathbf{T}$&  $\{v\}$ \\ \hline \end{tabular}}};
  \node[state, right = \horizontalSpace of A] (B) { \footnotesize {\begin{tabular}{| c " c |} \hline $\mathbf{N}$&$\emptyset$ \\ \hline  $\mathbf{P}$&$\emptyset$ \\ \hline \rowcolor[gray]{.8} $\mathbf{T}$&  $\{v\}$ \\ \hline \end{tabular}}};
  \node[state, right = \horizontalSpace of B] (C) { \footnotesize {\begin{tabular}{| c " c |} \hline $\mathbf{N}$&$\emptyset$ \\ \hline  $\mathbf{P}$&$\{v\}$ \\ \hline \rowcolor[gray]{.8} $\mathbf{T}$&  $\{v\}$ \\ \hline \end{tabular}}};
  \node[state, right = \horizontalSpace of C] (D) { \footnotesize {\begin{tabular}{| c " c |} \hline $\mathbf{N}$&$\{v\}$ \\ \hline $\mathbf{P}$&$\{v\}$ \\ \hline \rowcolor[gray]{.8} $\mathbf{T}$&  $\{v\}$ \\ \hline \end{tabular}}};
  \node[state, right = \horizontalSpace of D] (E) { \footnotesize {\begin{tabular}{| c " c |} \hline $\mathbf{N}$&$\{v\}$ \\ \hline $\mathbf{P}$&$\{v\}$ \\ \hline \rowcolor[gray]{.8} $\mathbf{T}$&  $\{v\}$ \\ \hline \end{tabular}}};
  \node[left = \labelHorizontalSpace of A] (label) {$r(\mathit{db}_{1})$};

  \node[state, below = \verticalSpace of A] (A1) { \footnotesize {\begin{tabular}{| c " c |} \hline $\mathbf{N}$&$\{v\}$ \\ \hline $\mathbf{P}$&$ \emptyset$ \\ \hline \rowcolor[gray]{.8} $\mathbf{T}$&  $\{j,v\}$ \\ \hline \end{tabular}}};
  \node[state, below = \verticalSpace of B] (B1) { \footnotesize {\begin{tabular}{| c " c |} \hline $\mathbf{N}$&$ \emptyset$ \\ \hline $\mathbf{P}$&$ \emptyset$ \\ \hline \rowcolor[gray]{.8} $\mathbf{T}$&  $\{j,v\}$ \\ \hline \end{tabular}}};
  \node[state, below = \verticalSpace of C] (C1) { \footnotesize {\begin{tabular}{| c " c |} \hline $\mathbf{N}$&$ \emptyset$ \\ \hline $\mathbf{P}$&$ \{v\}$ \\ \hline \rowcolor[gray]{.8} $\mathbf{T}$&  $\{j,v\}$ \\ \hline \end{tabular}}};
  \node[state, below = \verticalSpace of D] (D1) { \footnotesize {\begin{tabular}{| c " c |} \hline $\mathbf{N}$&$ \{v\}$ \\ \hline $\mathbf{P}$&$ \{v\}$ \\ \hline \rowcolor[gray]{.8} $\mathbf{T}$&  $\{j,v\}$ \\ \hline \end{tabular}}};
  \node[state, below = \verticalSpace of E] (E1) { \footnotesize {\begin{tabular}{| c " c |} \hline $\mathbf{N}$&$ \{v\}$ \\ \hline $\mathbf{P}$&$ \{v\}$ \\ \hline \rowcolor[gray]{.8} $\mathbf{T}$&  $\{j,v\}$ \\ \hline \end{tabular}}};
   \node[left = \labelHorizontalSpace of A1] (label1) {$r(\mathit{db}_{2})$};

  \node[state, below = \verticalSpace of A1] (A2) { \footnotesize {\begin{tabular}{| c " c |} \hline $\mathbf{N}$&$\{v\}$ \\ \hline $\mathbf{P}$&$ \emptyset$ \\ \hline \rowcolor[gray]{.8} $\mathbf{T}$&  $\emptyset$ \\ \hline \end{tabular}}};
  \node[state, below = \verticalSpace of B1] (B2) { \footnotesize {\begin{tabular}{| c " c |} \hline $\mathbf{N}$&$ \emptyset$ \\ \hline $\mathbf{P}$&$ \emptyset$ \\ \hline \rowcolor[gray]{.8} $\mathbf{T}$&  $\emptyset$ \\ \hline \end{tabular}}};
  \node[state, below = \verticalSpace of C1] (C2) { \footnotesize {\begin{tabular}{| c " c |} \hline $\mathbf{N}$&$ \emptyset$ \\ \hline $\mathbf{P}$&$ \{v\}$ \\ \hline \rowcolor[gray]{.8} $\mathbf{T}$&  $\emptyset$ \\ \hline \end{tabular}}};
  \node[state, below = \verticalSpace of D1] (D2) { \footnotesize {\begin{tabular}{| c " c |} \hline $\mathbf{N}$&$ \emptyset$ \\ \hline $\mathbf{P}$&$ \{v\}$ \\ \hline \rowcolor[gray]{.8} $\mathbf{T}$&  $\emptyset$ \\ \hline \end{tabular}}};
  \node[state, below = \verticalSpace of E1] (E2) { \footnotesize {\begin{tabular}{| c " c |} \hline $\mathbf{N}$&$ \emptyset$ \\ \hline $\mathbf{P}$&$ \{v\}$ \\ \hline \rowcolor[gray]{.8} $\mathbf{T}$& $\emptyset$ \\ \hline \end{tabular}}};  
  \node[left = \labelHorizontalSpace of A2] (label2) {$r(\mathit{db}_{3})$};

 
  \path (A) edge (B)
        (B) edge (C)
        (C) edge (D)
        (D) edge (E);

  \path (A1) edge (B1)
        (B1) edge (C1)
        (C1) edge (D1)
        (D1) edge (E1);

  \path (A2) edge (B2)
        (B2) edge (C2)
        (C2) edge (D2)
        (D2) edge (E2);
\end{tikzpicture}
}
\vspace{2pt}
\caption{The runs $r(\mathit{db}_{1})$ and $r(\mathit{db}_{2})$ are indistinguishable, whereas $r(\mathit{db}_{1})$ and $r(\mathit{db}_{3})$ are not.}\label{fig:indistinguishability}
\end{figure}

Example \ref{example:ibsec:1} illustrates our indistinguishability notion.

\begin{example}\label{example:ibsec:1}
Let the schema, the set of users, the policy, and the triggers be as in Example \ref{example:derivation}.
Consider the following run $r(\mathit{db})$, parametrized by the initial data\-base state~$\mathit{db}$:
\begin{compactenum}
\item $u$  deletes  $v$ from $N$.
\item $u$ inserts $v$ in $\mathit{P}$.  
If $v$ is in $T$, this activates the trigger $t$, which, in turn, inserts  $v$ in $N$. 
\item $u$ issues the \texttt{SELECT} query $N(v)$.
\end{compactenum}

\noindent
Let $\mathit{db}_{1}$, $\mathit{db}_{2}$, and $\mathit{db}_{3}$ be three database states such that $\mathit{db}_{1}(\mathit{T}) = \{v\}$, $\mathit{db}_{2}(\mathit{T}) = \{j,v\}$, and $\mathit{db}_{3}(\mathit{T}) = \emptyset$, whereas $\mathit{db}_{i}(N) = \{v\}$ and $\mathit{db}_{i}(\mathit{P}) =\emptyset$, for  $1 \leq i \leq 3$. 
Note that $r(\mathit{db}_{1})$ is the run used in Example \ref{example:derivation}.
Figure \ref{fig:indistinguishability} depicts how the database's state changes during the runs $r(\mathit{db}_{i})$, for $1 \leq i\leq 3$. 
Gray indicates those tables that the user $u$ cannot read.
The runs $r(\mathit{db}_{1})$ and $r(\mathit{db}_{2})$ are indistinguishable for the user  $u$. 
The only difference between them is the content of the table $\mathit{T}$, which $u$ cannot read. 
In contrast, $u$ can distinguish between $r(\mathit{db}_{1})$ and $r(\mathit{db}_{3})$ because the trigger has been executed in the former and not in the latter.

Indistinguishability may also depend  on the actions of the other users.
Consider the runs $r'$ and $r''$ obtained by extending $r(\mathit{db}_{1})$ respectively with one and two \texttt{SELECT} queries issued by the administrator just after $u$'s query. 
The user $u$ can distinguish between $r(\mathit{db}_{1})$ and $r'$ because  he knows that other users interacted with the system in $r'$ but not in $r(\mathit{db}_{1})$, i.e.,  the $u$-projections  have different labels.
In contrast, the runs $r'$ and $r''$ are indistinguishable for $u$ because he only knows that, after his own \texttt{SELECT}, other users interacted with the system, i.e., the $u$-projections  have the same labels.
However, he does not know the number of commands, the commands themselves, or their results. 
\end{example}

Example~\ref{example:ibsec:3} shows that existing \acf{}s leak sensitive information and therefore do not provide \confidentiality{}. 

\begin{example}\label{example:ibsec:3}
In Example \ref{example:derivation}, we showed how the user $u$ derives $r, 3 \attMod \mathit{T}(v)$. 
The judgment is not secure because there is a run indistinguishable from $r^{3}$, i.e., the run $r^{3}(\mathit{db}_{3})$  in Example \ref{example:ibsec:1}, in which $\mathit{T}(v)$ does not hold. 
\end{example}

Example \ref{example:ibsec:2} shows how views may leak information about the underlying tables.
Even though this leakage  might be considered legitimate, there is no way in our setting to distinguish between intended and unintended leakages.
If this is desired, data confidentiality can be extended with the concept of \emph{declassification}~\cite{askarov2009tight ,askarov2007gradual}. 

\begin{example}\label{example:ibsec:2}
Consider a database with two tables $T$ and $Z$ and 
 a view $V = \langle \mathit{v}, \mathit{admin}, \{x\,|\,T(x) \wedge Z(x)\}, O\rangle$. 
The set $U$ is $\{u,\mathit{admin}\}$ and the policy $S$ is \onlyTechReport{\hspace{-0.3mm}}$\{\langle \oplus,\onlyTechReport{\hspace{-0.3mm}} u\onlyTechReport{\hspace{-0.3mm}},\langle \texttt{SELECT},\onlyTechReport{\hspace{-0.3mm}} T\rangle,\onlyTechReport{\hspace{-0.3mm}} \mathit{admin}\rangle, \onlyTechReport{\\} \langle \oplus, u,\langle \texttt{SELECT}, \mathit{V}\rangle,  \mathit{admin}\rangle,\onlyShortVersion{\\} \langle \oplus, u,\langle \texttt{INSERT}, T\rangle,  \mathit{admin}\rangle \}$.
Consider the following run $r$, para\-metrized by the initial database state $\mathit{db}$, where $u$ first inserts $27$ into $T$ and afterwards issues the \texttt{SELECT} query $V(27)$.
We assume there are no exceptions in $r$.

\noindent
\begin{tikzpicture}[->,>=stealth',shorten >=1pt,auto, semithick]
  \tikzstyle{every state}=[rectangle, rounded corners,fill=none,draw=black,text=black,minimum height=1em,
           inner sep=1pt, ultra thin]

  \node[state] 		   (A) at (0,0)  	{ \onlyShortVersion{\footnotesize}\onlyTechReport{\small} {$\langle \mathit{db}, U,S, \emptyset, \{V\}, c_{1}\rangle$}};
  \node[state]         (B) at (5.3,0) 	{ \onlyShortVersion{\footnotesize}\onlyTechReport{\small} {$\langle \mathit{db}[T \oplus 27], U,S,  \emptyset, \{V\}, c_{2}\rangle$}};
  \node[state]         (C) at (5.3,-0.75) 	{ \onlyShortVersion{\footnotesize}\onlyTechReport{\small} {$\langle \mathit{db}[T \oplus 27], U,S,  \emptyset, \{V\}, c_{3}\rangle$}};

  \path (A) edge              node { \onlyShortVersion{\footnotesize}\onlyTechReport{\small} {$\langle u, \mathtt{INSERT}, T, 27 \rangle$}} (B)
        (B) edge		[left]	  node { \onlyShortVersion{\footnotesize}\onlyTechReport{\small} {$\langle u, \mathtt{SELECT}, V(27) \rangle$}} (C);
\end{tikzpicture}

We used Maude to generate the runs $r(d)$ and $r(d')$ with the initial database states $d$ and $d'$ such that $d(T) = d(Z) = d'(T) = \emptyset$ and $d'(Z) = \{27\}$.
The runs $r^{1}(d)$ and $r^{1}(d')$ are indistinguishable for $u$ because they differ only in the content of $Z$, which $u$ cannot read.
After the $\mathtt{INSERT}$, $u$ can distinguish between $r^{2}(d)$ and $r^{2}(d')$ by reading $V$.
Indeed, $d[T \oplus 27](V) = \emptyset$, because $d(Z) = \emptyset$, whereas $d'[T \oplus 27](V) = \{27\}$.
The user $u$  derives  $r(d'),1 \attMod Z(27)$, which is not secure 
 because $r^{1}(d)$~and $r^{1}(d')$ are indistinguishable for $u$, but $Z(27)$ holds just~in~the~latter.
\end{example}

In contrast to existing security notions \cite{guarnieri2014optimal,wang2007correctness,rizvi2004extending},  we have defined \confidentiality{}  over runs.
This is essential to model and detect attacks, such as those  in Examples \ref{example:ibsec:3} and \ref{example:ibsec:2}, where an attacker infers sensitive information from the  transitions between states.
For instance, the leakage in Example \ref{example:ibsec:2} is due to the execution of the \texttt{INSERT} command.
Although the \texttt{SELECT} command is authorized by the policy, $u$ can use it to infer sensitive information about the system's state before the \texttt{INSERT} execution.

\onlyTechReport{\newpage}
\section{A Provably Secure \acf{}}\label{sect:enf:alg}
\begin{figure*}[!hbtp]
\centering
\small{
\begin{tabular}{c c}
$\begin{aligned}
&\vartriangleright \emph{$s$  is a state  and  $a$ is an action}\\
&\enspace\textbf{function } \mathit{f}(s, a) \\
1.&\enspace \enspace \textbf{return } f_{\mathit{int}}(s,a) \wedge f_{\mathit{conf}}(s,a, \mathit{user}(s,a))\\
\\\\
&\vartriangleright \emph{$s$ is a state  and $a$ is an action}\\
&\enspace \textbf{function } \mathit{f}_{\mathit{int}}(s, a)\\
1.&\enspace \enspace \textbf{if } \mathit{trg}(s) = \epsilon  \textbf{ return } \mathit{auth}(s,a)\\
2.&\enspace \enspace\enspace \textbf{else } \textbf{if } a = \mathit{cond}(\mathit{trg}(s),s) \textbf{ return }\top\\
3.&\enspace  \enspace\enspace\enspace \textbf{else if } a = \mathit{act}(\mathit{trg}(s),s) \textbf{ return } \mathit{auth}(s,\mathit{trg}(s))\\
4.&\enspace  \enspace\enspace\enspace\enspace	\textbf{else return } \bot \\
\end{aligned}$
&
$\begin{aligned}
&\vartriangleright \emph{$s$ is a state, $a$  is an action, and $u$ is a user}\\
& \textbf{function } \mathit{f}_{\mathit{conf}}(s, \mathit{a}, u)\\
1.&  \enspace \textbf{switch } \mathit{a}\\
2.& \enspace \enspace\enspace \textbf{case } \langle u', \mathtt{SELECT}, q\rangle: \textbf{ return }\mathit{secure}(u,q, s)\\
3.& \enspace \enspace\enspace \textbf{case } \langle u', \mathtt{INSERT}, R, \overline{t}\rangle: \textbf{ case } \langle u', \mathtt{DELETE}, R, \overline{t}\rangle:  \\
4.& \enspace \enspace\enspace\enspace \textbf{if }  \mathit{leak}(\mathit{a}, s, u) \vee \neg \mathit{secure}(u,\mathit{getInfo}(\mathit{a}),s) \textbf{ return }\bot\\
5.& \enspace \enspace\enspace\enspace\textbf{for } \gamma \in {\mathit{Dep}}(\mathit{a}, \Gamma)\\
6.& \enspace \enspace\enspace\enspace\enspace \textbf{if }(\neg \mathit{secure}(u,\mathit{getInfoS}(\gamma, \mathit{a}),s) \vee  \neg \mathit{secure}(u,\mathit{getInfoV}(\gamma, \mathit{a}),s))\\
7.& \enspace \enspace\enspace\enspace\enspace\enspace \textbf{ return } \bot\\
8.& \enspace \enspace\enspace \textbf{case } \langle \oplus, u'', \mathit{pr}, u'\rangle, \langle \oplus^{*}, u'', \mathit{pr}, u'\rangle: \textbf{return } \neg \mathit{leak}(\mathit{a}, s, u)\\
9.& \enspace \textbf{return }\top \\
\end{aligned}$
\end{tabular}
}
\vspace{2pt}
\caption{
The \acf{} $f$ uses the two subroutines $\mathit{f}_{\mathit{int}}$ and $\mathit{f}_{\mathit{conf}}$.
The former provides \correctness{} and the latter provides \confidentiality{} with respect to the user $\mathit{user}(s,a)$, which denotes either the user issuing the action, when the system is not executing  a trigger, or the trigger's invoker. }\label{figure:algorithms}
\end{figure*}
We now present a \acf{} that provides both \correctness{} and \confidentiality{}.
We first explain the ideas behind it using examples.
Afterwards, we show that it 
satisfies the desired security properties and has  acceptable overhead.
Finally, we argue that it is more permissive  than existing access control solutions. 

Figure \ref{figure:algorithms} depicts our \acf{} $f$ together with the functions  $\mathit{f}_{\mathit{int}}$ and $f_{\mathit{conf}}$.
Additional details about the \acf{} are given in~\techReportAppendices{app:enforcement:ibsec}{app:composition}.
The \acf{}  takes as input a state $s$ and an action $a$ and  outputs $\top$ iff both  $\mathit{f}_{\mathit{int}}$ and $f_{\mathit{conf}}$ authorize $a$ in $s$, i.e., iff $a$'s execution neither violates  \correctness{} nor \confidentiality{}. 
Note that our algorithm is not \emph{complete} in that it may reject some secure commands.
However, from the results in \cite{Koutris:2012:QDP:2213556.2213582, guarnieri2014optimal,nash2010views}, it follows that no algorithm can be complete and provide  \correctness{} and \confidentiality{} for the relational calculus.

Our \acf{} is invoked by the database system each time a user $u$ issues an action $a$ to check whether $u$ is authorized to execute $a$.
The \acf{}  is also invoked whenever the database system executes a scheduled trigger $t$:
once to check if the \texttt{SELECT} statement associated with $t$'s \texttt{WHEN} condition is authorized and once, in case $t$ is enabled, to check if $t$'s action is authorized.

\subsection{Enforcing \Correctness{}}

The function $f_{\mathit{int}}$ takes as input a state $s$ and an action $a$. 
If the system is not executing a trigger, denoted by $\mathit{trg}(s) = \epsilon$, $f_{\mathit{int}}$ checks (line 1) whether $a$ is authorized with respect to $s$. 
In line 2, $f_{\mathit{int}}$ checks whether $a$ is the current trigger's condition. 
 If this is the case, it returns $\top$ because the triggers' conditions do not violate \correctness{}.
Finally, the algorithm  checks (line 3) whether $a$ is the current trigger's action, and if this is the case, it checks whether the current trigger $\mathit{trg}(s)$ is authorized with respect to $s$.
The function $\mathit{auth}$, which checks if $a$ is authorized with respect to $s$,  is a sound and computable under-approximation of  $\auth$.
Thus, any action authorized by $f_{\mathit{int}}$ is authorized according to $\auth$.
This ensures database integrity.
Note that $\auth$ relies on the concept of \emph{determinacy}~\cite{nash2010views} to decide whether a query is determined by a set of views.
Since determinacy is undecidable~\cite{nash2010views}, in $\mathit{auth}$ we implement a sound under-approximation of  it, given in \techReportAppendix{app:enforcement:eopsec}, that checks syntactically if a query is determined by a set of views.

\begin{example}\label{example:enforcement:integrity}
Consider a database with three tables: $R$, $T$, and $Z$. 
The set $U$ is $\{u,u',\mathit{admin}\}$ and the policy $S$ is $\{\langle \oplus,$  $u,\langle \texttt{SELECT}, R\rangle, \mathit{admin}\rangle,$ $  \langle \oplus^{*},  u,\langle \texttt{SELECT}, \mathit{T}\rangle, \mathit{admin}\rangle,$ $\langle \oplus^{*}, u,\langle \texttt{SELECT}, \mathit{Z}\rangle, \mathit{admin}\rangle\}$.
There are two views $V = \langle \mathit{v}, \onlyTechReport{\\} \mathit{admin}, \{x\,|\,T(x) \wedge Z(x)\}, O\rangle$ and $W = \langle w, u,$ $ \{x\,|\,R(x) \vee V(x)\}, O\rangle$. 
The user $u$ tries to grant to $u'$ read access to $W$, i.e., he issues $\langle \oplus, u',\langle \texttt{SELECT}, W\rangle, u\rangle$.
The \acf{} $f_{\mathit{int}}$ rejects the command and raises a security exception because $u$ is authorized to delegate the read access only for  $T$ and $Z$ but $W$'s result depends also on $R$, for which $u$ cannot delegate read access. 
Assume now that the policy is $\{\langle \oplus^{*},$$ u,$$ \langle \texttt{SELECT},$$ R\rangle,$$ \mathit{admin}\rangle,$ $\langle \oplus^{*},$$ u,$$  \langle \texttt{SELECT},$$ \mathit{T}\rangle,$$ \mathit{admin}\rangle,$$\langle \oplus^{*},$$  u,$\onlyTechReport{\\}$ \langle \texttt{SELECT},$$ \mathit{Z}\rangle,$$ \mathit{admin}\rangle\}$.
In this case, $f_{\mathit{int}}$ authorizes the \texttt{GRANT}. 
The reason is that $W$'s definition can be equivalently rewritten as $\{x\,|\,R(x) \vee (T(x) \wedge Z(x))\}$ and $u$ is authorized~to~delegate~the~read~access~for~$R$,~$T$,~and~$Z$.
\end{example}

\onlyTechReport{\newpage}
\subsection{Enforcing \Confidentiality{}}

The function $f_{\mathit{conf}}$, shown in Figure \ref{figure:algorithms}, takes as input an action $a$, a state $s$, and a user $u$. 
Note that any user other than the administrator is a potential attacker.
The requirement for $f_{\mathit{conf}}$ is that it authorizes only those commands that result in secure  judgments for $u$ as required by Definition \ref{definition:data:confidentiality}.
To achieve this,  $f_{\mathit{conf}}$ over-approximates the set of judgments that $u$ can derive from $a$'s execution.
For instance, the algorithm assumes that $u$ can always derive the trigger's condition from the run, even though this is not always the case.
Then,  $f_{\mathit{conf}}$ authorizes  $a$ iff it can determine that all $u$'s judgements are secure.
This can be done by analysing just a finite subset of the over-approximated set of $u$'s judgments.

In more detail, $f_{\mathit{conf}}$ performs  a case distinction on the action $a$ (line 1). 
If $a$ is a \texttt{SELECT} command (line 2), $f_{\mathit{conf}}$ checks whether the query is secure with respect to the current state $s$ and the user $u$ using the  $\mathit{secure}$ procedure. 
If $a$ is an \texttt{INSERT} or \texttt{DELETE} command (lines 3--7), $f_{\mathit{conf}}$  checks (line 4), using the $\mathit{leak}$ procedure,  whether $a$'s execution may leak sensitive information through the views that $u$ can read, as in Example \ref{example:ibsec:2}.
Afterwards, $f_{\mathit{conf}}$ also checks (line 4) whether the information $u$ can learn from $a$'s execution, modelled by the sentence computed by the procedure $\mathit{getInfo}(a)$, is secure.
In line 5--7, $f_{\mathit{conf}}$ computes the set of all integrity constraints that $a$'s execution may violate, denoted by $\mathit{Dep}(a,\Gamma)$, and  for all constraints $\gamma$, it checks whether the information that $u$ may learn from $\gamma$ is secure.
The procedure $\mathit{getInfoS}$ (respectively $\mathit{getInfoV}$) computes the sentence modelling the information learned by $u$ from $\gamma$ if $a$ is executed successfully (respectively violates $\gamma$).
If $a$ is a \texttt{GRANT} command (line 8), $f_{\mathit{conf}}$ checks whether $a$'s successful execution discloses sensitive information to $u$.
In the remaining cases (line 9), $f_{\mathit{conf}}$ authorizes $a$.

\smallskip
\noindent
{\bf Secure judgments.}
Determining if a given judgment is secure is undecidable for $\mathit{RC}$~\cite{guarnieri2014optimal,Koutris:2012:QDP:2213556.2213582}.
Hence, the $\mathit{secure}$ procedure implements a sound and computable under-ap\-prox\-i\-ma\-tion of this notion. 
We now present our solution.
Other sound under-ap\-prox\-i\-ma\-tions  can alternatively be used without affecting $f_{\mathit{conf}}$'s \confidentiality{} guarantees.

Let $M = \langle D,\Gamma \rangle$ be a system configuration, $r, i \attMod \phi$ be a judgment, and $s = \langle \mathit{db}, U, \mathit{sec}, T,V, c \rangle$ be the $i$-th state in $r$.
As a first under-approximation, instead of the set of all runs indistinguishable from $r^{i}$, we consider the larger set of all runs $r'$ whose last state $s' = \langle \mathit{db}', U, \mathit{sec}, T,V, c' \rangle$ is such that the disclosed data in $\mathit{db}$ and $\mathit{db}'$ are the same.
Note that if a judgment is secure with respect to this larger set, it is secure also with respect to the set of indistinguishable runs because the former set contains the latter.
This larger set depends just on the database state $\mathit{db}$ and the policy $\mathit{sec}$, not on the run or the attacker model $\attackerModel$.
Determining judgment's security is, however, still undecidable even on this larger set.
We therefore employ a second under-approximation that uses query rewriting. 
We rewrite the sentence $\phi$ to a sentence $\phi_{\mathit{rw}}$ such that if $r, i \attMod \phi$ is not secure for the user $u$, then $[\phi_{\mathit{rw}}]^{\mathit{db}} = \top$. 
The formula  $\phi_{\mathit{rw}}$ is $\neg \phi^{\top}_{s,u} \wedge \phi^{\bot}_{s,u}$, where $\phi^{\top}_{s,u}$ and $\phi^{\bot}_{s,u}$ are defined inductively  over $\phi$.
A formal definition of $\mathit{secure}$ is~given~in~\techReportAppendix{app:enforcement:ibsec}.

We now explain how we construct $\phi^{\top}_{s,u}$ and $\phi^{\bot}_{s,u}$.
We assume that both $\phi$ and  $V$ contain only views with the owner's privileges.
The extension to the general case is given in \techReportAppendix{app:enforcement:ibsec}. 
First, for each table or view $o \in D \cup V$, we create additional views representing any possible projection of $o$.
The \emph{extended vocabulary} contains the tables in $D$, the views in $V$, and their projections.
For instance, given a table $R(x,y)$, we create the views $R_{x}$ and $R_{y}$ representing respectively $\{ y\,|\,\exists x.\, R(x,y)\}$ and $\{x\,|\,\exists y.\, R(x,y)\}$.
Second, we compute the formula $\phi'$  by replacing each sub-formula $\exists \overline{x}.\, R(\overline{x}, \overline{y})$ in $\phi$ with the view $R_{\overline{x}}(\overline{y})$ associated with the corresponding projection.
Third, for each predicate $R$ in the formula $\phi'$, we compute the sets $R^{\top}_{s,u}$ and $R^{\bot}_{s,u}$.
The set $R^{\top}_{s,u}$ (respectively $R^{\bot}_{s,u}$) contains all the tables and views $K$ in the extended vocabulary such that\begin{inparaenum}[(1)]
\item $K$ is contained in (respectively contains) $R$, and
\item the user $u$ is authorized to read $K$ in $s$, i.e., there is a grant $\langle \mathit{op}, u, \langle \mathtt{SELECT}, K' \rangle, u' \rangle \in \mathit{sec}$ such that either $K' = K$ or $K$ is obtained from $K'$ through a projection. 
\end{inparaenum}
The formula $\phi^{v}_{s,u}$, where $v \in \{\top,\bot\}$, is: 
\[
\phi^{v}_{s,u} = \left\{ 
  \begin{array}{l l}
    \bigvee_{S \in R^{\top}_{s,u}} S(\overline{x}) & \text{if } \phi= R(\overline{x}) \text{ and } v = \top\\
    \bigwedge_{S \in R^{\bot}_{s,u}} S(\overline{x}) & \text{if } \phi= R(\overline{x}) \text{ and } v = \bot\\
    \neg \psi^{\neg v}_{s,u} & \text{if } \phi = \neg \psi\\
    \psi^{v}_{s,u}\,\mathit{*}\, \gamma^{v}_{s,u} & \text{if } \phi = \psi\,\mathit{*}\,\gamma \text{ and } \mathit{*} \in \{\vee, \wedge\}\\
    Q\, x.\, \psi^{\mathit{v}}_{s,u} &  \text{if } \phi =  Q\, x.\, \psi \text{ and } \mathit{Q} \in \{\exists, \forall\}\\
	\phi & \text{otherwise}\\
  \end{array} \right.
\]
The formulae are such that if $\phi^{\top}_{s,u}$ holds, then $\phi$ holds and if $\neg \phi^{\bot}_{s,u}$ holds, then $\neg \phi$ holds.
To compute the sets $R^{\top}_{s,u}$ and $R^{\bot}_{s,u}$, we check the containment between queries. 
Since query containment is undecidable \cite{abiteboul1995foundations}, we implement a sound under-approximation of it, described in \techReportAppendix{app:enforcement:ibsec}. 
Other sound under-approximations can be used as well.

Our $\phi^{\top}_{s,u}$ and $\phi^{\bot}_{s,u}$ rewritings share similarities with the \emph{low} and \emph{high evaluations} of Wang et al.~\cite{wang2007correctness}.
Both try to approximate the result of a query just by looking at the authorized data.
However, we use $\phi^{\top}_{s,u}$ and $\phi^{\bot}_{s,u}$ to determine a judgment's security, whereas  Wang et al.~use evaluations to restrict the query's results only to authorized data.

\begin{figure}

\centering
\scalebox{.87}[.87]{
\begin{tikzpicture}[->,>=stealth',shorten >=1pt,auto, semithick]
  \tikzstyle{every state}=[rectangle, fill=none,draw=none,text=black,minimum size=0mm, minimum width=0mm, minimum height=0em,
           inner sep=0pt, ultra thin, anchor=north west]

\node[anchor=north west] (S) at (0,0) { 
							\footnotesize {
								\begin{tabular}{| c |  c |}
								 \hline \multicolumn{2}{|c|}{$\mathbf{S}$} \\ 
								 \hline \rowcolor[gray]{.8}  $1$&$1$ \\ 
								 \hline  \rowcolor[gray]{.8} $2$&$3$ \\ 
								 \hline  \rowcolor[gray]{.8} $4$&$2$ \\ 
								 \hline 
								\end{tabular}
							}
						};

\node[anchor=north west] (dbLabel) at ($(S.north west) + (0, 0.4)$) {$\mathbf{Database\ State}$};				
\node[anchor=north west] (R) at ($(S.north east)$) { 
							\footnotesize {
								\begin{tabular}{ |c | }
								 \hline $\mathbf{R}$ \\ 
								 \hline  \rowcolor[gray]{.8} $3$\\ 
								 \hline 
								\end{tabular}
							}
						};
						
\node[anchor=north west] (Q) at ($(R.north east)$) { 
							\footnotesize {
								\begin{tabular}{| c | }
								 \hline $\mathbf{Q}$ \\ 
								 \hline  \rowcolor[gray]{.8} $4$\\ 
								 \hline 
								\end{tabular}
							}
						};

\node[anchor=north west] (V) at ($(Q.north east) + (0, 0)$) { 
							\footnotesize {
								\begin{tabular}{ | c |  c |}
								 \hline \multicolumn{2}{|c|}{$\mathbf{V}$} \\ 
								 \hline $1$&$1$ \\ 
								 \hline  $2$&$3$ \\ 
								 \hline 
								\end{tabular}
							}
						};

\node[anchor=north west] (viewLabel) at ($(V.north west) + (0, 0.4)$) {$\mathbf{Views}$};	
\node[anchor=north west] (W) at ($(V.north east) $) { 
							\footnotesize {
								\begin{tabular}{ | c | }
								 \hline $\mathbf{W}$ \\ 
								 \hline  $3$\\ 
								 \hline  $4$\\ 
								 \hline 
								\end{tabular}
							}
						};

\node[anchor=north west] (Vdef) at ($(V.south west)$) {\small {$V = \{x,y\,|\, S(x,y) \wedge (x=1 \vee y = 3)\}$}};

\node[anchor=north west] (Wdef) at ($(Vdef.south west)$) {\small {$W = \{x\,|\, R(x) \vee Q(x)\}$}};
						
\node[anchor=north west] (Sx) at ($(S.south west) - (0,1.1)$) { 
							\footnotesize {
								\begin{tabular}{| c |}
								 \hline {$\mathbf{S}_x$} \\ 
								 \hline \rowcolor[gray]{.8}  $1$ \\ 
								 \hline  \rowcolor[gray]{.8} $3$ \\ 
								 \hline  \rowcolor[gray]{.8} $2$ \\ 
								 \hline 
								\end{tabular}
							}
						};

\node[anchor=north west] (extVocLabel) at ($(Sx.north west) + (0, 0.4)$) {$\mathbf{Extended\ Vocabulary}$};	

\node[anchor=north west] (Sy) at ($(Sx.north east)$) { 
							\footnotesize {
								\begin{tabular}{| c |}
								 \hline {$\mathbf{S}_y$} \\ 
								 \hline \rowcolor[gray]{.8}  $1$ \\ 
								 \hline  \rowcolor[gray]{.8} $2$ \\ 
								 \hline  \rowcolor[gray]{.8} $4$ \\ 
								 \hline 
								\end{tabular}
							}
						};
						
\node[anchor=north west] (Vx) at ($(Sy.north east)$) { 
							\footnotesize {
								\begin{tabular}{| c |}
								 \hline {$\mathbf{V}_x$} \\ 
								 \hline   $1$ \\ 
								 \hline   $3$ \\ 
								 \hline 
								\end{tabular}
							}
						};

\node[anchor=north west] (Vy) at ($(Vx.north east)$) { 
							\footnotesize {
								\begin{tabular}{| c |}
								 \hline {$\mathbf{V}_y$} \\ 
								 \hline  $1$ \\ 
								 \hline  $2$ \\ 
								 \hline 
								\end{tabular}
							}
						};
							
\node[anchor=north west] (Sxdef) at ($(Sx.south west)$) {\small {$S_x = \{y\,|\, \exists x.\,S(x,y)\}$}};

\node[anchor=north west] (Sydef) at ($(Sxdef.south west)$) {\small {$S_y = \{x\,|\, \exists y.\,S(x,y)\}$}};

\node[anchor=north west] (Vxdef) at ($(Sxdef.north east)$) {\small {$V_x = \{y\,|\, \exists x.\,V(x,y)\}$}};

\node[anchor=north west] (Vydef) at ($(Vxdef.south west)$) {\small {$V_y = \{x\,|\, \exists y.\,V(x,y)\}$}};

\node[anchor=north west] (syTop) at ($(Vy.north east) + (0.2,0)$) { \small {${S_{y}}^{\top}_{s,u} = \{V_{y}\}$}};

\node[anchor=north west] (containmentLabel) at ($(syTop.north west) + (0, 0.4)$) {$\mathbf{Containment\ Sets}$};

\node[anchor=north west] (syBot) at ($(syTop.south west)$) { \small {${S_{y}}^{\bot}_{s,u} = \emptyset$}};
\node[anchor=north west] (RTop) at ($(syTop.north east)$){ \small {$R^{\top}_{s,u} = \emptyset$}};
\node[anchor=north west] (RBot) at ($(RTop.south west)$) { \small {$R^{\bot}_{s,u} = \{W\}$}};

\node[anchor=north west] (phi) at ($(Sx.south west) - (0, 1.5)$) {\small {$\phi := (\exists y.\, S(2,y)) \wedge ( \neg R(5)  \vee \exists y.\, S(4,y) ) \equiv S_{y}(2) \wedge ( \neg R(5)  \vee S_{y}(4) )$}};

\node[anchor=north west] (phiLabel) at ($(phi.north west) + (0, 0.4)$) {$\mathbf{Original\ Sentence}$};

\node[anchor=north west] (phiRw) at ($(phi.south west)- (0, .5)$) {\small {$\phi_{\mathit{rw}} := \neg \phi^{\top}_{s,u} \wedge \phi^{\bot}_{s,u}$}};

\node[anchor=north west] (phiRwLabel) at ($(phiRw.north west) + (0, 0.4)$) {$\mathbf{Rewriting}$};

\node[anchor=north west] (phiTop) at ($(phiRw.south west)$) {\small {$\phi^{\top}_{s,u} := S_{y}(2)^{\top}_{s,u} \wedge ( \neg R(5)^{\bot}_{s,u}  \vee S_{y}(4)^{\top}_{s,u} ) \equiv V_{y}(2) \wedge ( \neg W(5)  \vee V_{y}(4) )$}};

\node[anchor=north west] (phiBottom) at ($(phiTop.south west)$) {\small {$\phi^{\bot}_{s,u} := S_{y}(2)^{\bot}_{s,u} \wedge ( \neg R(5)^{\top}_{s,u}  \vee S_{y}(4)^{\bot}_{s,u} ) \equiv \top$}};
\end{tikzpicture}
}
\caption{
Checking the security of the judgment $r, 1 \attMod (\exists y.\, S(2,y)) \wedge ( \neg R(5)  \vee \exists y.\, S(4,y))$ from Example~\ref{example:enforcement:confidentiality}.
}\label{fig:pdp:example}
\end{figure}

\begin{example}\label{example:enforcement:confidentiality}
Consider a database with three tables $S$, $R$, and $Q$,  and two views $V = \langle v, \admin, \{x,y\,|\, S(x,y) \wedge (x=1 \vee y = 3)\}, O \rangle$ and $W = \langle w, \admin, \{x\,|\, R(x) \vee Q(x)\}, O \rangle$.
The database state $\mathit{db}$ is $\mathit{db}(S) = \{(1,1),(2,3),(4,2)\}$, $\mathit{db}(R) = \{3\}$, and $\mathit{db}(Q) = \{4\}$, the set $U$ is $\{u,\admin\}$, and the policy $\mathit{sec}$ is $\{\langle \oplus, u , \langle \mathtt{SELECT}, V\rangle, \admin\rangle,   \langle \oplus, u , \langle \mathtt{SELECT},  W\rangle, \onlyTechReport{\\} \admin\rangle\}$.
Let the state $s$ be $\langle \mathit{db}, U,  \mathit{sec}, \emptyset, \{V, W\}, \epsilon \rangle$ and the run $r$ be $s$.
We want to check the security of $r, 1 \attMod \phi$, where $\phi := (\exists y.\, S(2,y)) \wedge ( \neg R(5)  \vee \exists y.\, S(4,y) )$, for the user $u$.
Figure~\ref{fig:pdp:example} depicts the database state $\mathit{db}$, the materializations of the views $V$ and $W$, and the materializations of the views $S_x$, $S_y$, $V_x$, and $V_y$ in the extended vocabulary.
Gray indicates those tables and views that $u$ cannot read.
 
The rewriting process, depicted also in Figure~\ref{fig:pdp:example}, proceeds as follows.
We first rewrite the formula $\phi$ as $S_{y}(2) \wedge ( \neg R(5)  \vee S_{y}(4) )$.
The sets ${S_{y}}^{\top}_{s,u}$, ${S_{y}}^{\bot}_{s,u}$, $R^{\top}_{s,u}$, and $R^{\bot}_{s,u}$ are respectively $\{V_{y}\}$, $\emptyset$, $\emptyset$, and $\{W\}$.
The formulae $\phi^{\top}_{s,u}$ and $\phi^{\bot}_{s,u}$  are respectively $S_{y}(2)^{\top}_{s,u} \wedge ( \neg R(5)^{\bot}_{s,u}  \vee S_{y}(4)^{\top}_{s,u} )$, which is equivalent to $V_{y}(2) \wedge ( \neg W(5)   \vee V_{y}(4) )$, and  $S_{y}(2)^{\bot}_{s,u} \wedge ( \neg R(5)^{\top}_{s,u}  \vee S_{y}(4)^{\bot}_{s,u} )$, which is equivalent to $\top$.
They are both secure, as they depend only on $V$ and $W$.
Furthermore, since $\phi^{\top}_{s,u}$ holds in $s$, then $\phi$ holds as well.
Finally, $\phi_{\mathit{rw}}$ is $\neg \phi^{\top}_{s,u} \wedge \phi^{\bot}_{s,u}$.
Since $\phi_{\mathit{rw}}$ does not hold in $s$, it follows that $r, 1 \attMod \phi$ is secure.
\end{example}

\subsection{Theoretical Evaluation}
\newcommand{\complexity}{$\mathit{AC}^{0}$}
Our \acf{} provides the desired security guarantees and its data complexity, i.e., the complexity of executing $f$ when the action, the policy, the triggers, and the views are fixed, is  \complexity{}.
This means that $f$ can be evaluated in logarithmic space in the database's size, as $\mathit{AC}^{0} \subseteq \mathit{LOGSPACE} $, and evaluation is highly parallelizable.
Note that $\mathit{secure}$'s data complexity is \complexity{} because it relies on  query evaluation, whose data complexity is \complexity{}~\cite{abiteboul1995foundations}.
In contrast, all other operations in $f$ are executed in constant time in terms of data complexity.
Note also that on a single processor,  $f$'s data complexity is polynomial in the database's size. 
We believe that this is acceptable because the database is usually very large, whereas the query, which determines the degree of the polynomial, is small. 
\onlyShortVersion{The proof of  Theorem \ref{theorem:enforcement:security}~is~given~in~\cite{technicalReport}.}
\onlyTechReport{The proofs  are given in Appendices \ref{app:enforcement:ibsec}--\ref{app:composition}.}

\begin{theorem}\thlabel{theorem:enforcement:security}
Let $P = \langle M, f \rangle$ be an \accessControlConfiguration{}, where $M$ is a system configuration and $f$ is as above.
The \acf{} $f$\begin{inparaenum}[(1)]
\item provides \confidentiality{}  with respect to  $P$, $u$, $\attackerModel$, and $\cong_{P,u}$, for any user $u \in {\cal U}$, and
\item provides \correctness{} with respect to $P$. 
\end{inparaenum}
Moreover, the data complexity of $f$ is \complexity{}.
\end{theorem}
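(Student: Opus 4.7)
The plan is to prove the three claims of Theorem~\ref{theorem:enforcement:security} separately, since they concern different aspects of $f$. The complexity bound is largely bookkeeping, database integrity reduces to soundness of $\mathit{auth}$, and data confidentiality is the main obstacle, requiring an induction on the derivation of judgments in $\attackerModel$.

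For \emph{database integrity}, I would proceed by induction on the length of a run $r$ of the $P$-LTS, showing that whenever $s \xrightarrow{a}_{f} s'$ with $\mathit{secEx}(s') = \bot$ and $s$ reachable, then $s \auth a$. Since $f(s,a) = \top$ forces $f_{\mathit{int}}(s,a) = \top$, a case analysis on the three branches of $f_{\mathit{int}}$ suffices. Case~1 ($\mathit{trg}(s) = \epsilon$) invokes $\mathit{auth}(s,a)$, which is the sound under-approximation of $\auth$ verified in \techReportAppendix{app:enforcement:eopsec}. Case~2 observes that a trigger-condition $\mathtt{SELECT}$ is read-only and is always allowed by $\auth$. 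Case~3 requires that the previous trigger $\mathit{trg}(s)$ was itself accepted by $\mathit{auth}$, which by soundness means $s \auth \mathit{trg}(s)$, and the \texttt{TRIGGER} rule of $\auth$ in Figure~\ref{figure:eop:auth:example} then yields authorization of the instantiated action.

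For \emph{data confidentiality}, I would proceed by rule induction on the derivation of $r, i \attMod \phi$ in $\attackerModel$. The goal is to show $\mathit{secure}_{P,\cong_{P,u}}(r,i \attMod \phi)$ whenever $f$ accepted every action along $r$. The key auxiliary lemma is that the $\mathit{secure}$ procedure is a sound under-approximation of $\mathit{secure}_{P,\cong_{P,u}}$: this follows from the two under-approximations used in its construction (enlarging the set of indistinguishable runs to those agreeing on the disclosed fragment of the state, and then proving that if $\phi$ is not secure then the rewriting $\phi_{\mathit{rw}} = \neg \phi^{\top}_{s,u} \wedge \phi^{\bot}_{s,u}$ holds in $\mathit{db}$, using inductively that $\phi^{\top}_{s,u} \Rightarrow \phi$ and $\neg \phi^{\bot}_{s,u} \Rightarrow \neg \phi$ together with soundness of the syntactic containment check). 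With this lemma, the base cases (\emph{SELECT Success}, \emph{DELETE Success}, \emph{INSERT Success}) follow because $f_{\mathit{conf}}$ explicitly calls $\mathit{secure}$ on $\mathit{getInfo}(a)$, $\mathit{getInfoS}$, and $\mathit{getInfoV}$ for every updating action and on $q$ for every \texttt{SELECT}. The propagation rules follow by induction, using that $\cong_{P,u}$-indistinguishability is preserved across the corresponding transitions. The most delicate inductive cases are \emph{Learn \texttt{INSERT}/\texttt{DELETE} Backward}, where $u$ infers a trigger condition from the observed side-effects: here $f_{\mathit{conf}}$ must have authorized both the invoking update and the trigger's effect, and the $\mathit{leak}$ check on views together with the $\mathit{Dep}(a,\Gamma)$ sweep guarantee that no residual channel discloses $\phi$; one must show that the $\mathit{getInfo}$-style formulas computed by $f_{\mathit{conf}}$ over-approximate exactly the sentences derivable by these rules.

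For \emph{data complexity}, fixing the action, policy, triggers, and views makes the control structure of $f$, $f_{\mathit{int}}$, and $f_{\mathit{conf}}$ a constant-size decision tree over primitives. Each primitive is either constant-time ($\mathit{trg}$, $\mathit{act}$, $\mathit{Dep}$, $\mathit{hasAccess}$'s syntactic check, etc.) or a call to $\mathit{secure}$, which ultimately reduces to evaluating the fixed rewritten sentence $\phi_{\mathit{rw}}$ over $\mathit{db}$; by the standard result that relational calculus evaluation is in $\mathit{AC}^{0}$~\cite{abiteboul1995foundations}, each such call is in $\mathit{AC}^{0}$. Closure of $\mathit{AC}^{0}$ under constant-depth composition yields the bound. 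The main obstacle in the whole argument is the confidentiality induction at the trigger rules: it requires a careful match between the over-approximation computed by $f_{\mathit{conf}}$ at update time and the set of sentences $\attackerModel$ can derive later via backward/forward propagation combined with trigger side-effects.
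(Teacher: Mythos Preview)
Your proposal is correct and follows essentially the same route as the paper: integrity via soundness of the $\mathit{auth}$ under-approximation of $\auth$ (a per-step property, so the run-length induction is unnecessary but harmless), confidentiality via rule induction on $\attackerModel$-derivations backed by soundness of $\mathit{secure}$ together with a lemma that each authorized action or trigger preserves the $\cong_{P,u}$-equivalence class as a bijection, and complexity via the $\mathit{AC}^0$ bound for evaluating the fixed rewritten sentence $\phi_{\mathit{rw}}$. One small refinement for the case you flag as most delicate: when a \emph{Learn Backward} rule derives the trigger's \texttt{WHEN} condition, its security comes from $f_{\mathit{conf}}$ having already called $\mathit{secure}$ on that condition as the trigger's $\mathtt{SELECT}$, whereas the $\mathit{getInfo}$/$\mathit{Dep}$ sweep handles the separate backward rules that derive $R(\overline t)$ or $\neg R(\overline t)$.
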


 \renewcommand{\complexity}{$\mathit{AC}^{0}$}

As the Examples \ref{example:eval:1} and \ref{example:eval:2} below show, $f$ is more permissive than existing \acf{}s for those actions that violate neither \correctness{} nor \confidentiality{}.

\begin{example}\label{example:eval:1}
Our \acf{} is more permissive than existing mechanisms for commands of the form \texttt{GRANT SELECT ON} $V$ \texttt{TO} $u$, where $V$ is a view with owner's privileges, $u$ is a user, and the statement is issued by the view's owner $o$.
Such mechanisms, in general, authorize the \texttt{GRANT} iff $o$ is authorized to delegate the read permission for all tables and views that occur in $v$'s definition.
Consider again Example \ref{example:enforcement:integrity}. Our \acf{} authorizes $ \langle  \oplus, u', \langle \texttt{SELECT}, W\rangle, u\rangle$ under the policy $S'$.
However, existing mechanisms reject it because $u$ is not directly authorized to read $V$, although $u$ can read the underlying tables.
Our \acf{} also authorizes all the {secure} \texttt{GRANT} statements authorized by existing mechanisms.
\end{example}
\begin{example}\label{example:eval:2}
Our \acf{} is more permissive than the mechanisms used in existing DBMSs for secure \texttt{SELECT} statements.
Such mechanisms, in general, authorize a \texttt{SELECT} statement issued by a user $u$ iff $u$ is authorized to read all tables and views used in the query.
They will reject the query in Example \ref{example:enforcement:confidentiality} even though the query is secure.
Furthermore, any secure \texttt{SELECT} statement authorized by them will be authorized by our solution as well.
Also the \acf{} proposed by Rizvi et al. \cite{rizvi2004extending} rejects the query in Example \ref{example:enforcement:confidentiality} as insecure.
However, our solution and the proposal of Rizvi et al. \cite{rizvi2004extending} are incomparable in terms of permissiveness, i.e., some secure \texttt{SELECT} queries are authorized by one mechanism and not by the other. 
\end{example}

\subsection{Implementation}
\begin{figure}
\begin{tabular}{c c}
    \begin{subfigure}{0.48\textwidth}
\centering
\scalebox{0.8}{
    \begin{tikzpicture}
	\tikzstyle{every node}=[font=\small]    
    
      \begin{axis}[
          width=\linewidth, 
          grid=major, 
          grid style={dashed,gray!30},
          xlabel=Number of tuples, 
          ylabel={Time [$ms$]},
          scaled y ticks = false,
          scaled x ticks = false,
		  legend style={at={(0,1)},anchor=north west},
          x tick label style={at={(axis description cs:0.5,-0.1)},anchor=north, 
       /pgf/number format/.cd,
            fixed zerofill = false,
            precision=2,
            fixed,
            1000 sep={},
        /tikz/.cd,
        ymax=3
    }
            ]
        
 		\addplot[mark = o, /tikz/solid, every mark/.append style={solid} ] table[x index={0},y  expr=((\thisrowno{1})) ,col sep=comma] {attack7summary.csv};
  		\addplot[mark = triangle, /tikz/densely dotted, every mark/.append style={solid} ] table[x index={0},y  expr=((\thisrowno{2})),col sep=comma] {attack7summary.csv};
        \legend{$f_{\mathit{int}}$, $f_{\mathit{conf}}$}
      \end{axis}
    \end{tikzpicture}
   }
		 \caption{Example \ref{example:enforcement:integrity}}\label{figure:result:integrity}
    \end{subfigure}
    \\
    \begin{subfigure}{.48\textwidth}
		\centering
		\scalebox{0.8}{
		    \begin{tikzpicture}
		    \tikzstyle{every node}=[font=\small]

		      \begin{axis}[
		          width=\linewidth, 
		          grid=major, 
		          grid style={dashed,gray!30},
		          xlabel=Number of tuples, 
		          ylabel={Time [$ms$]},
		          scaled y ticks = false,
		          scaled x ticks = false,
		          legend style={at={(0,1)},anchor=north west},
		          x tick label style={at={(axis description cs:0.5,-0.1)},anchor=north, 
		       /pgf/number format/.cd,
		            fixed zerofill = false,
		            precision=2,
		            fixed,
		            1000 sep={},
		        /tikz/.cd
		    }
		            ]

  				\addplot[mark = o, /tikz/solid, every mark/.append style={solid} ] table[x index={0}, y expr=(\thisrowno{1}) ,col sep=comma] {attack8summary.csv};
  				\addplot[mark = triangle, /tikz/densely dotted, every mark/.append style={solid} ] table[x index={0},y expr=(\thisrowno{2}),col sep=comma] {attack8summary.csv};
  				\addplot[mark=+, /tikz/densely dashdotted , every mark/.append style={solid} ] table[x index={0},y expr=(\thisrowno{6}),col sep=comma] {attack8summary.csv};
  				
		        \legend{$f_{\mathit{int}}$, $f_{\mathit{conf}}$,  Command Execution}
		        
		      \end{axis}
		    \end{tikzpicture}
		 }
		 \caption{Example \ref{example:enforcement:confidentiality}}\label{figure:result:confidentiality}

    \end{subfigure}
    \end{tabular}
    \caption{PDP Execution time.}\label{figure:results}
\end{figure}

To evaluate the feasibility and security of our approach in practice, we implemented our \acf{} in Java.
The prototype, available at \cite{prototype}, implements both our \acf{} and  the operational semantics of our system model.
It relies on the underlying PostgreSQL database for executing the \texttt{SELECT}, \texttt{INSERT}, and \texttt{DELETE} commands.
Note that our prototype also handles  all the differences between the  relational calculus and SQL.
For instance, it translates every relational calculus query into an equivalent \texttt{SELECT} SQL query over the underlying database.
We performed a preliminary experimental evaluation of our prototype.
Our experiments were run on a PC with an Intel i7 processor and 32GB of RAM. 
Note  that we materialized the content of all the views.

Our evaluation has two objectives:\begin{inparaenum}[(1)]
\item to empirically validate that the prototype provides the desired security guarantees, and
\item to evaluate its overhead.
\end{inparaenum}
For (1), we ran the attacks in \S\ref{sect:motivating} against our prototype.
As expected, our \acf{} prevents all the attacks. 
For (2), we simulated Examples \ref{example:enforcement:integrity} and \ref{example:enforcement:confidentiality} on database states where the number of tuples ranges from 1,000 to 100,000.
Figure \ref{figure:results} shows the PDP's execution time.
Our results show that our solution is feasible.
In more detail, $f_{\mathit{int}}$'s execution time  does not depend on the database size, whereas  $f_{\mathit{conf}}$'s execution time does. 
We believe that the overhead introduced by the PDP is acceptable for a proof of concept. 
Even with 100,000 tuples, the \acf{}'s running time is under a second.
In Example \ref{example:enforcement:confidentiality}, $f_{\mathit{conf}}$'s execution time is comparable to the execution time of the user's query.
As Figure \ref{figure:result:confidentiality:detail} shows,  $f_{\mathit{conf}}$'s query rewriting time does not depend on the database's size, whereas $f_{\mathit{conf}}$'s query execution time does.

To improve $f_{\mathit{conf}}$'s performance, one could strike a different balance between simple syntactic checks and our query rewriting solution.
This, however, would result in more restrictive \acf{}s.
We will investigate further optimizations as a future work.

\begin{figure}
    \centering
		\scalebox{0.8}{
		    \begin{tikzpicture}
		\begin{axis}[
		    ybar stacked,
			bar width=5pt,
		   enlarge y limits={0.15,upper},
			 width=\linewidth, 
			 xlabel=Number of tuples, 
			 ylabel={Time [$ms$]},
			 scaled y ticks = false,
			 scaled x ticks = false,
		 	 legend style={at={(0,1)},anchor= north west},
			 x tick label style={at={(axis description cs:0.5,-0.1)},anchor=north, 
				       /pgf/number format/.cd,
				            fixed zerofill = false,
				            precision=2,
				            fixed,
				            1000 sep={},
				        /tikz/.cd
				    },
			  legend columns={2},
		    ]

		\addplot[ybar, black, fill = white] table[x index={0},y expr=(\thisrowno{5}),col sep=comma] {attack8summary.csv};
		\addplot[ybar, black, fill = black] table[x index={0},y expr=(\thisrowno{4}),col sep=comma] {attack8summary.csv};
		  
		\legend{Query execution, Query rewriting}
		
		\end{axis}
		\end{tikzpicture}
		}
    \caption{Example 8 : $f_{\mathit{conf}}$'s execution time.}\label{figure:result:confidentiality:detail}
\end{figure}

\section{Related Work and Discussion}\label{sect:related:work}

We compare our work against two lines of research: data\-base access
control and information flow control.
Both of these  have similar goals, namely preventing the leakage and corruption of sensitive information.

\subsection{Database Access Control}

\smallskip
\noindent
{\bf Discretionary Database Access Control.}
Our framework builds on prior research in database access control~
\cite{wang2007correctness,rizvi2004extending,guarnieri2014optimal} as
well as established notions from database theory, such as determinacy~\cite{nash2010views} and instance-based~determinacy~\cite{Koutris:2012:QDP:2213556.2213582}.

Specifically, our notion of secure judgments extends instance\onlyShortVersion{-}\onlyTechReport{ }based determinacy from database states to runs, while data confidentiality extends existing security notions \cite{wang2007correctness,rizvi2004extending,guarnieri2014optimal} to dynamic settings, where both the database and the policy may change.
Similarly, our indistinguishability notion extends those in \cite{wang2007correctness,guarnieri2014optimal} from database states to runs.
Finally, our formalization of $\auth$ relies on determinacy to decide whether the content of a view is fully determined by a set of other views.
Griffiths and Wade propose a \acf{}~\cite{griffiths1976authorization} that prevents Attacks~\ref{example:view:escalation1} and~\ref{example:view:escalation2} by using syntactic checks and by removing all  views whose owners lack the necessary permissions.
In contrast, we prevent the execution of \texttt{GRANT} and \texttt{REVOKE} operations leading to inconsistent policies.

\smallskip
\noindent
{\bf Mandatory Database Access Control.}
Research on mandatory database access control  has historically 
focused on Multi-Level Security (MLS)~\cite{denning1987multilevel}, where
both the data and the users are associated with security levels,
which are compared to control data access.
Our \acf{} extends the SQL discretionary access control model with additional \emph{mandatory} checks to provide \correctness{} and \confidentiality{}.
In the following, we compare our work with the access control policies and
semantics used by~MLS~systems.

With respect to policies, our work uses the SQL
access control model, where policies are sets of \texttt{GRANT}
statements.  In this model,  users can dynamically modify policies by delegating permissions.
In contrast, MLS policies are usually expressed by labelling each subject and object in the system with labels from a security lattice~\cite{samarati2001access}.
The policy is, in general, fixed (cf.~the \emph{tranquillity principle}~\cite{samarati2001access}).

With respect to semantics, existing MLS solutions are based on the so-called \emph{Truman model}~\cite{rizvi2004extending}, where they transparently modify the commands issued by the users to restrict the access  to only the authorized data.
In contrast, we use the same semantics as SQL,  that is, we execute only the secure commands.
This is called the \emph{Non-Truman model}~\cite{rizvi2004extending}.
For an in-depth comparison of these access control models, see~\cite{rizvi2004extending,guarnieri2014optimal}.
Operationally,  MLS mechanisms  use 
poly-instantiation~\cite{jajodia1990polyinstantiation}, which is neither
supported by the relational model nor by the SQL standard, and requires ad-hoc extensions~\cite{denning1987multilevel, sandhu1998multilevel}.
Furthermore, the operational semantics of MLS systems differs from the standard relational semantics.
In contrast, our operational semantics supports the relational model and
is  directly inspired by SQL.

The above differences influence how security properties are expressed.
Data confidentiality, which relies on a precise characterization of security based on a possible worlds semantics, is a key component of the Non-Truman model (and SQL) access control semantics.
Similarly, \correctness{} requires that any ``write'' operation is authorized according to the policy and is directly inspired by the SQL access control semantics.
The security properties in MLS systems, in contrast,  combine the  multilevel relational semantics~\cite{denning1987multilevel,sandhu1998multilevel} with MLS and BIBA properties~\cite{samarati2001access}.

MLS systems prevent attacks similar to Attacks~\ref{example:integrity:delete:information} and \ref{example:trigger:attack:owner} using poly-instantiated tuples and triggers~\cite{sandhu1998multilevel,smith1993multilevel}, whereas
attacks similar to Attack~\ref{example:trigger:attack:activator} cannot be carried out because triggers with activator's privileges are not supported~\cite{smith1993multilevel}.
The SeaView system~\cite{denning1987multilevel}, which combines discretionary access control and MLS, additionally prevents attacks similar to Attacks~\ref{example:view:escalation1} and~\ref{example:view:escalation2} by relying on Griffiths and Wade's \acf{}~\cite{griffiths1976authorization}.
However, these solutions cannot be applied to SQL databases for the aforementioned reasons.

\subsection{Information Flow Control}

Various authors have applied ideas from information flow control to databases.
Davis and Chen~\cite{davis2010dbtaint} study how~cross-application information flows can be tracked through~data\-bases.
Other researchers~\cite{schoepe2014selinq,corcoran2009cross,li2005practical} present languages for developing secure applications that use databases.
They employ secure type systems to track information flows through databases. 
However, they neither model nor prevent the attacks we identified because they do not account for the advanced database features and the strong attacker model we study in this paper.

Schultz and Liskov~\cite{schultz2013ifdb} extend decentralized information flow control \cite{Myers:1997:DMI:268998.266669} to databases, based on concepts from multi-level security~\cite{denning1987multilevel}.
They identify one attack on data confidentiality that exploits integrity constraints. 
Their solution relies on poly-instantiation~\cite{jajodia1990polyinstantiation} and cannot be applied to SQL databases that do not support multi-level security. 
Their mechanism neither prevents the other attacks we identify nor provides provable and precise security guarantees.

Several researchers have studied attacker models in information flow control~\cite{Giacobazzi:2004:ANP:964001.964017, askarov2012learning}.
Giacobazzi and Mastroeni~\cite{Giacobazzi:2004:ANP:964001.964017} model attackers as data-flow analysers that observe the program's behaviour, whereas Askarov and Chong~\cite{askarov2012learning} model attackers as automata that observe the program's events.
They both model passive attackers, who can observe, but do not influence, the program's execution.
In contrast, our attacker is active and interacts with the database.

\subsection{Discussion}

Historically, database access control and information flow control rely on different foundations, formalisms, security notions, and techniques.
We see our paper as a starting point for bridging these topics: we combine database access control theory  with an operational semantics and an attacker model, which are common in information flow control, but have been largely ignored in database access control.
We thereby give a precise logical characterization of the attacker's capabilities and of a judgment's security. 
Furthermore, our indistinguishability notion has similarities with the low-equivalence notions used in~\cite{askarov2007gradual, askarov2009tight, bohannon2009reactive}, whereas both data confidentiality and the notion of secure judgments have a precise characterization as instances of non-interference~\cite{sabelfeld2003language,GoguenM82}; see~\techReportAppendix{app:data:conf:non:interference}~for more details.

We believe our framework provides a basis for (1) further investigating the connections between these two topics, (2) applying  information flow mechanisms, such as type systems or multi-execution~\cite{devriese2010noninterference}, to database access control, and (3) investigating how integrity notions used in information flow control can best be applied to databases.

\section{Conclusion}\label{sect:conclusion}

Motivated by practical attacks against existing databases, we have initiated several new research directions.
First, we developed the idea that attacker models should be studied and formalized for databases. Rather than being implicit,
the relevant models must be made explicit, just like when analysing security in other domains.
In this respect, we presented a concrete attacker model that accounts for relevant features of modern databases, like triggers and views,
and attacker inference capabilities.

Second, access control mechanisms must be designed to be secure, and provably so, with respect to the formalized attacker capabilities.
This requires research on mechanism design, complemented by a formal operational semantics of databases as a basis for security proofs.
We presented such a mechanism, proved that it is secure, and built and evaluated a prototype of it in PostgreSQL.
As a future work, we will extend our framework and our \acf{} to directly support  the SQL language, and we will investigate efficiency improvements for our \acf{}.

\smallskip
{
\noindent
{\bf Acknowledgments.} 
We thank \`{U}lfar Erlingsson, Erwin Fang,  Andreas Lochbihler, Ognjen Maric, Mohammad Torabi Dashti, Dmitriy Traytel, Petar Tsankov,  Thilo Weghorn, Der-Yeuan Yu,   Eugen Zalinescu, as well as  the anonymous reviewers for their comments. 
}

\bibliographystyle{IEEEtranS}
\onlyShortVersion{
\balance
}
\bibliography{IEEEabrv,bib/main}

\newpage
\appendix
In this appendix we formalize the system's operational semantics, the attacker model, and the security properties.
Furthermore, we present complete proofs of all results.

For simplicity's sake, in the following we  assume, without loss of generality, that all the relational calculus formulae do not use constant symbols inside predicates.
For instance, instead of the formula $\exists x.\, R(x,5,10)$, we consider the equivalent formula $\exists x,y,z.\,  R(x,y,z) \wedge y = 5 \wedge z = 10$.
Note that this does not restrict the scope of our work as all formulae can be trivially expressed without using constant symbols inside predicates. 

\smallskip
\noindent
{\bf Structure.}
In Appendix~\ref{app:lts}, we provide a complete formalization of our system model.
In Appendix~\ref{app:adv:model}, we present all the rules defining the $\attMod$ relation and we prove the soundness of $\attMod$ with respect to the relational calculus semantics.
In Appendix~\ref{app:eop}, we provide the complete formalization of the $\auth$ relation.
In Appendix~\ref{app:indistinguishability}, we formalize $u$-projections and the indistinguishability relation $\cong_{P,u}$.
In Appendix~\ref{app:enforcement:eopsec} we formalize the access control function $f_{\mathit{int}}$, we prove that it provides \correctness{}, and we prove its data complexity.
In Appendix~\ref{app:enforcement:ibsec} we formalize the access control function $f_{\mathit{conf}}^{u}$, we prove that it provides \confidentiality{}, and we prove its data complexity.
In Appendix~\ref{app:composition} we prove that the function $f$, which is obtained by composing the \acf{}s  $f_{\mathit{int}}$ and $f_{\mathit{conf}}^{u}$, provides both \correctness{} and \confidentiality{}.
We also prove that its data complexity is $\mathit{AC}^0$.
Finally, in Appendix~\ref{app:data:conf:non:interference} we show that the concepts of secure judgment and \confidentiality{} have precise interpretations in terms of non-interference.

\newpage
\section{Formalizing the System Model}\label{app:lts}
In this section, we precisely formalize the system model introduced in \S\ref{sect:lts}.
We first introduce some auxiliary definitions about queries, views, privileges, and triggers.
Afterwards, we introduce the concept of partial state.
Then, we formalize contexts and we refine the notion of $M$-state given in \S\ref{sect:lts}.
Finally, we formalize the transition relation $\to_f$ together with some auxiliary predicates and functions.

\subsection{Auxiliary definitions on queries, views, \\ privileges, and triggers}
Triggers can give rise to non-terminating executions, for example when the  action associated with trigger $t_{1}$ activates trigger $t_{2}$, which in turn activates $t_{1}$.
We say that a set $T$ of triggers  is \emph{safe} iff no trigger in $T$ can activate another trigger in $T$. 
Note that safety ensures termination.
Even though this termination condition is simple, it is sufficient for the purpose of this paper.
Note that more complex and permissive termination conditions do not influence our results.
We say that a set of triggers $T$ is \emph{safe}, denoted by  $\mathit{safe}(T)$, iff for all triggers $t_1, t_2 \in T$:
\begin{compactitem}
\item if the $t_1$'s activation event is an \texttt{INSERT} on the table $R$, then $t_2$'s action is not of the form $\langle \mathtt{INSERT}, R, \overline{t}\rangle$, or
\item if the $t_1$'s activation event is a \texttt{DELETE} on the table $R$,  then  $t_2$'s action is not of the form $\langle \mathtt{DELETE}, R, \overline{t}\rangle$.
\end{compactitem}
Let $D$ be a database schema, $U$ be a set of users, $t$ be a trigger over $D$, and $V$ be a set of views over $D$.
We say that $t$ is a \emph{$U$-trigger}, denoted by $\mathit{usersIn}(t,U)$, if and only if $\mathit{owner}(t) \in U$ and $t$'s statement mentions just users in $U$.
We say that \emph{a query $q$ is defined over $D$ and $V$}, denoted by $\mathit{defined}(q,D,V)$, iff all the predicates in $q$ are either tables in $D$ or views in $V$.
We say that \emph{a privilege $p$ is defined over $D$ and $V$}, denoted by $\mathit{defined}(p,D,V)$, iff the table or view referred in $p$ is in $D \cup V$.
We say that \emph{a view $v$ is defined over $D$ and $V$}, denoted by $\mathit{defined}(v,D,V)$, iff its definition is defined over $D$ and $V$.
Finally, we say that \emph{a trigger $t$ is defined over $D$ and $V$}, denoted by $\mathit{defined}(t,D,V)$, iff (1) the table on which $t$ is defined is in $D$,
(2) $t$'s \texttt{WHEN} condition is defined over $D$ and $V$, and
(3) $t$'s action refers only to tables and views in $D \cup V$.

\subsection{Revoke Semantics}

We now define the function $\mathit{revoke}$ that models the semantics of SQL's \texttt{REVOKE} statements with cascade.
In the following, let $S$ be a security policy, i.e., a set of \texttt{GRANT}s, $u_1, u_2, u_3, u_4$, $u$, and $u'$ be user identifiers, $\mathit{op}, \mathit{op}' \in \{\oplus, \oplus^*\}$, and $p$ be a privilege.
We say that a \emph{chain} is a sequence of grants $g_1\concat g_2 \concat \ldots \concat g_n$ such that 
(1) $g_1 = \langle \mathit{op}', u_4, p, \mathit{start}\rangle$,
(2) if $p \neq \langle \mathtt{SELECT}, V \rangle$, where $V$ is a view with owner's privileges, then $\mathit{start} = \admin$, whereas if $p = \langle \mathtt{SELECT}, V \rangle$, then $\mathit{start} \in \{\admin , \mathit{owner}(V)\}$, and
(3) for each $1 \leq i \leq n-1$, 
 $g_i = \langle \oplus^*, u_2, p, u_1\rangle$ and
 $g_{i+1} = \langle \mathit{op}, u_3, p, u_2\rangle$. 
We first define the $\mathit{chain}$ function that takes as input a policy $S$ and constructs all possible chains.
\begin{align*}
\mathit{chain}(S) := &
\{ \langle \mathit{op}, u, p, u'\rangle \in S \,|\, u' = \admin \} \cup \\
&\quad\{ \langle \mathit{op}, u, \langle \mathtt{SELECT}, V\rangle, u'\rangle \in S \,|\, V = \langle v, o, q, O \rangle \\
&\qquad\wedge u' = o  \} \cup \\
&\bigcup_{g_1\concat \ldots \concat g_n \in \mathit{chain}(S)} \{ g_1\concat \ldots \concat g_n \concat g \,|\, g \in S \wedge\\
&\qquad g = \langle \mathit{op} , u , p , u' \rangle \wedge g_n = \langle \oplus^*, u' , p , u''\rangle \wedge\\
&\qquad \forall i \in \{1,\ldots, n\}.\, g_i \neq g \}.
\end{align*}

The function $\mathit{filter}$ takes as input a set of chains $C$ and a grant $g$ and returns as output the set of all chains in $C$ that do not contain $g$:
\begin{align*}
\mathit{filter}(C,g) := & \{ g_1\concat \ldots \concat g_n \in C \,|\, \forall i \in \{1,\ldots, n\}.\, g_i \neq g   \}.
\end{align*}
The function $\mathit{policy}$ takes as input a set of chains and constructs a policy, i.e., a set of grants, out of it:
\begin{align*}
\mathit{policy}(C) := & \bigcup_{g_1\concat \ldots \concat g_n \in C} \bigcup_{1 \leq i \leq n} \{g_i\}.
\end{align*}
Finally, the function $\mathit{revoke}$, which models the semantics of the \texttt{REVOKE} command with cascade, is as follows:
\begin{align*}
\mathit{revoke}(S, u, p , u') := & \mathit{policy}(\mathit{filter}(\mathit{chain}(
\mathit{policy}(\mathit{filter}(\\
&\mathit{chain}(S), \langle \oplus, u, p, u'\rangle))), \langle \oplus^*, u , p , u' \rangle)).
\end{align*}
Given a policy $S$, $\mathit{revoke}(S, u, p , u')$ denotes the policy obtained by applying $\langle \ominus, u, p, u'\rangle$ to $S$.

\subsection{Partial States}
An $M$\emph{-partial state} is a tuple $\langle \mathit{db}, U, \mathit{sec}, T,V\rangle$ such that $\mathit{db} \in \Omega_{D}^{\Gamma}$ is a data\-base state, $U \subset {\cal U}$ is a finite set of users such that $\mathit{admin}\in U$, $\mathit{sec} \in  {\cal S}_{U,D}$ is a security policy, $T$ is a finite set of safe triggers over $D$,  and $V$ is a finite set of views over $D$. 
We denote by $\Pi_{M}$ the set of all $M$-partial states.
Given an $M$-state $s = \langle \mathit{db}, U, \mathit{sec}, T,V,c \rangle$, we denote by $\mathit{pState}(s)$ the $M$-partial state $\langle \mathit{db}, U, \mathit{sec}, T,V\rangle$ obtained from $s$ by dropping the context $c$.

\subsection{Contexts}
Let $M = \langle D, \Gamma\rangle$ be a system configuration and $u$ be a user.
An \emph{$(M,u)$-action effect} is a tuple $\langle \mathit{act}, \mathit{accDec}, \mathit{res}, E\rangle$, where $\mathit{act} \in {\cal A}_{D,u}$ is an action, $\mathit{accDec} \in \{\top,\bot\}$ is the access control decision for that action (where $\top$ stands for \emph{permit} and $\bot$ stands for \emph{deny}), $\mathit{res} \in \{\top,\bot\}$ is the action result, and $E \subseteq \Gamma$ is the set of integrity constraints violated by the action. 
We denote by $\Omega_{M,u}^{\mathit{actEff}}$ the set of all $(M,u)$-action effects and by $\Omega_{M,U}^{\mathit{actEff}}$, for some $U \subseteq {\cal U}$, the set $\bigcup_{u \in U} \Omega_{M,u}^{\mathit{actEff}}$.
An \emph{$(M,u)$-trigger effect} is a triple $\langle t, \mathit{when}, \mathit{stmt} \rangle$ where $t \in {\cal TRIGGER}_{D}$ is a trigger, $\mathit{when} \in \Omega_{M,u}^{\mathit{actEff}}$ is the action effect associated with the \texttt{WHEN} condition of the trigger, and  $\mathit{stmt}\in \Omega_{M,u}^{\mathit{actEff}} \cup \{\epsilon\}$ is the action effect associated with the statement in the trigger's body.
We denote by $\Omega_{M,u}^{\mathit{triEff}}$ the set of all $(M,u)$-trigger effects and by $\Omega_{M,U}^{\mathit{triEff}}$, for some $U \subseteq {\cal U}$, the set $\bigcup_{u \in U} \Omega_{M,u}^{\mathit{triEff}}$.

An \emph{$M$-pending trigger transaction} is a 4-tuple $\langle s, \overline{t}, u, \mathit{tr} \rangle$, where $s \in \Pi_{M} \cup \{\epsilon\}$ is an $M$-partial state representing the ``roll-back state'', i.e., the state that we must restore in case a roll-back happens, $\overline{t} \in \{\epsilon\} \cup \bigcup_{n \in \mathbb{N}^{+}} \mathbf{dom}^{n}$ is the tuple involved in the event that has fired the transaction, $u \in {\cal U} \cup \{\epsilon\}$ is the user that has activated the triggers in the transactions, and $\mathit{tr} \in {\cal TRIGGER}_{D}^{*}$ is a sequence of triggers.
Note that we denote by $\cdot$ the concatenation operation between strings over ${\cal TRIGGER}_{D}^{*}$, by $\epsilon$ the empty string in ${\cal TRIGGER}_{D}^{*}$, and by  $\langle \epsilon, \epsilon, \epsilon, \epsilon\rangle$ the empty $M$-pending transaction.

An \emph{$M$-history} $h$ is a  sequence of action effects and trigger effects, i.e., $ h \in (\Omega_{M,{\cal U}}^{\mathit{actEff}} \cup \Omega_{M,{\cal U}}^{\mathit{triEff}})^{*}$.
We denote by ${\cal H}_{M}$ the set of all $M$-histories, by $\cdot$ the concatenation operation over ${\cal H}_{M}$, and by $\epsilon$ the empty history.

We are now ready to formally define contexts.
Let $M=\langle D, \Gamma\rangle$ be a system configuration. 
An \emph{$M$-context} is a tuple $\langle h,  \mathit{actEff}, \mathit{tr} \rangle$, where $h \in {\cal H}_{M}$ models the system's history, $\mathit{actEff} \in \Omega_{M,{\cal U}}^{\mathit{actEff}}  \cup \Omega_{M,{\cal U}}^{\mathit{triEff}} \cup \{\epsilon\}$ describes the effect of the last action, i.e., whether the action has been accepted by the access control mechanism and the action's result, and $\mathit{tr}$ is an $M$-pending transaction.
Furthermore, the empty context $\epsilon$ is the element $\langle  \epsilon, \epsilon, \langle  \epsilon, \epsilon, \epsilon, \epsilon \rangle \rangle$.

\subsubsection{Auxiliary Functions over contexts}
Given an $M$-context $c=\langle h,  \mathit{actEff}, \mathit{tr} \rangle$, we denote by $\mathit{secEx}$ the following function, which returns $\top$ if the last action has caused a security exception.
\[
	\mathit{secEx}(\langle h,  \mathit{aE}, \mathit{tr} \rangle) = \left\{ 
  \begin{array}{l l}
  \top & \text{if }\mathit{aE} = \langle \mathit{act}, \bot, \mathit{res}, E\rangle\\
  \top & \text{if }\mathit{aE} = \langle t, \langle \mathit{act}, \bot, \mathit{res}, E\rangle, \epsilon \rangle\\
  \top & \text{if }\mathit{aE} = \langle t, \mathit{when} , \langle \mathit{act}, \bot, \mathit{res}, E\rangle \rangle\\
  \bot & \text{otherwise}\\
  \end{array}\right.
\]
Similarly, we denote by $\mathit{Ex}(c)$ the function extracting the integrity constraints violated by the last action.
\[
	\mathit{Ex}(\langle h,  \mathit{aE}, \mathit{tr} \rangle) = \left\{ 
  \begin{array}{l l}
  E & \text{if }\mathit{aE} = \langle \mathit{act}, \mathit{aC}, \mathit{res}, E\rangle\\
  E & \text{if }\mathit{aE} = \langle t, \mathit{when} , \langle \mathit{act}, \mathit{aC}, \mathit{res}, E\rangle \rangle\\
  \emptyset & \text{otherwise}
  \end{array}\right.
\]
We also denote by $\mathit{res}(c)$ the function extracting the last action's result:
\[
	\mathit{res}(\langle h,  \mathit{aE}, \mathit{tr} \rangle) = \left\{ 
  \begin{array}{l l}
  \mathit{res} & \text{if }\mathit{aE} = \langle \mathit{act}, \mathit{aC}, \mathit{res}, E\rangle\\
  \mathit{aC}  & \text{if }\mathit{aE} = \langle t, \langle \mathit{act}, \mathit{aC}, \mathit{res}, E\rangle, \epsilon \rangle\\
    \mathit{aC} \wedge \mathit{aC}' & \text{if }\mathit{aE} = \langle t, \langle \mathit{act}, \mathit{aC}, \mathit{res}, E\rangle,\\
 \quad  \wedge \mathit{res}' & \quad \langle \mathit{act}', \mathit{aC}', \mathit{res}', E'\rangle \rangle \wedge \\
  & \quad \langle \mathit{act}', \mathit{aC}', \mathit{res}', E'\rangle \neq \epsilon 
  \end{array}\right.
\]

Similarly, we denote by $\mathit{acA}(c)$ and $\mathit{acC}(c)$ the functions that extract the access control decision for the trigger's action and condition:
\[
	\mathit{acA}(\langle h,  \mathit{aE}, \mathit{tr} \rangle) = \left\{ 
  \begin{array}{l l}
     \mathit{aC}' & \text{if }\mathit{aE} = \langle t, \langle \mathit{act}, \mathit{aC}, \mathit{res}, E\rangle,\\
 & \quad \langle \mathit{act}', \mathit{aC}', \mathit{res}', E'\rangle \rangle \wedge \\
  & \quad \langle \mathit{act}', \mathit{aC}', \mathit{res}', E'\rangle \neq \epsilon \\
  \bot & \text{otherwise}
  \end{array}\right.
\]
\[
	\mathit{acC}(\langle h,  \mathit{aE}, \mathit{tr} \rangle) = \left\{ 
  \begin{array}{l l}
	\mathit{aC}  & \text{if }\mathit{aE} = \langle t, \langle \mathit{act}, \mathit{aC}, \mathit{res}, E\rangle, \epsilon \rangle\\
    \mathit{aC}  & \text{if }\mathit{aE} = \langle t, \langle \mathit{act}, \mathit{aC}, \mathit{res}, E\rangle,\\
  & \quad \langle \mathit{act}', \mathit{aC}', \mathit{res}', E'\rangle \rangle \wedge \\
  & \quad \langle \mathit{act}', \mathit{aC}', \mathit{res}', E'\rangle \neq \epsilon \\
  \bot & \text{otherwise}
  \end{array}\right.
\]

We denote by $\mathit{invoker}(c)$ the function extracting the user in the transaction, i.e., $\mathit{invoker}(\langle h,  \mathit{aE}, \langle  s, \overline{t}, u, \mathit{trL}\rangle \rangle) = u$. 
Similarly, we denote by $\mathit{tpl}(c)$ the function extracting the tuple that has fired the transaction, namely $\mathit{tpl}(\langle h,  \mathit{aE}, \langle  s, \overline{t}, u, \\ \mathit{trL}\rangle \rangle) = \overline{t}$, by $\mathit{triggers}(c)$ the function extracting the list of triggers, i.e., $\mathit{triggers}(\langle h,  \mathit{aE}, \langle  s, \overline{t}, u, \mathit{trL}\rangle \rangle) =  \mathit{trL}$, and by $\mathit{trigger}(c)$, or $\mathit{tr}(c)$ for short, the first trigger in the sequence $\mathit{triggers}(c)$.

\subsection{States}

We can now define $M$-states.
Let $M = \langle D, \Gamma\rangle$ be a system configuration.
An $M$\emph{-state} is a tuple $\langle \mathit{db}, U, \mathit{sec}, T,V, c\rangle$ such that $\mathit{db} \in \Omega_{D}^{\Gamma}$ is a data\-base state, $U \subset {\cal U}$ is a finite set of users such that $\mathit{admin}\in U$, $\mathit{sec} \in  {\cal S}_{U,D}$ is a security policy, $T$ is a finite set of safe triggers over $D$ such that for any trigger $t \in T$, both $\mathit{usersIn}(t,U)$ and $\mathit{defined}(t,D,V)$ hold, $V$ is a finite set of views over $D$ such that (1) there are no cyclic dependencies between the views in $V$, and (2) for any view $v \in V$, $\mathit{defined}(t,D,V')$, for some $V'\subseteq V$, and $v$'s owner is in $U$, and $c \in {\cal C}_{M}$ is an $M$-context.

In Section \ref{sect:lts}, we denoted an $M$-state as a tuple $\langle \mathit{db}, \mathit{sec}, U,\\T,V, c \rangle$, where $c = \langle h,  \mathit{actEff}, \mathit{tr} \rangle$ is an element of ${\cal C}_{M}$.
In the following, instead of representing states as $\langle  \mathit{db}, \mathit{sec}, U,T,V,\\ \langle h,  \mathit{aE}, \mathit{tr} \rangle\rangle$, we represent them as $\langle \mathit{db}, \mathit{sec}, U,T,V,  h,  \mathit{aE}, \mathit{tr} \rangle$.
Given an $M$-state $s:= \langle \mathit{db}, \mathit{sec}, U,T,V,  h,  \mathit{aE}, \mathit{tr} \rangle$, we denote by $\mathit{ctx}(s)$ the context $\langle h,  \mathit{aE}, \mathit{tr} \rangle$.
With a slight abuse of notation, we extend the functions $\mathit{Ex}$, $\mathit{secEx}$,  $\mathit{res}$, $\mathit{tpl}$, $\mathit{acA}$, $\mathit{acC}$, $\mathit{triggers}$, $\mathit{tr}$, and $\mathit{invoker}$ from contexts to $M$-states, e.g., $\mathit{Ex}(s)$ is just $\mathit{Ex}(\mathit{ctx}(s))$. 
Furthermore, given an $M$-state $s:= \langle \mathit{db}, \mathit{sec}, U,T,V,  h,  \mathit{aE}, \mathit{tr} \rangle$, we use a dot notation to refer to its components. For instance, we use $s.\mathit{db}$ to refer to the database's state in $s$ and $s.\mathit{sec}$ to refer to the policy in $s$.

\subsection{Transition Relation $\to_f$}
The transition rules describing the $\rightarrow_{f}$ transition relation are shown in Figures \ref{table:rules:lts:1}--\ref{table:rules:lts:8}.
The $\rightarrow_{f}$ relation is, thus, the smallest relation satisfying all the inference rules. 
Note that we ignore the $\mathit{upd}$ function introduced in Section \ref{sect:lts} since the rules explicitly encode the changes to the contexts.

We now define the functions we used in the rules in  Figures \ref{table:rules:lts:1}--\ref{table:rules:lts:8}.
The $\mathit{getActualUser}(m,\mathit{invk}, \mathit{ow})$ function, where $m\in \{A,O\}$ and $\mathit{invk}, \mathit{ow} \in {\cal U}$, is defined as follows:
\[
\mathit{getActualUser}(m,\mathit{invk}, \mathit{ow}) = \left\{ 
  \begin{array}{l l}
  \mathit{invk} & \text{if }m = A\\
  \mathit{ow} & \text{if }m = O\\
  \end{array}\right.
\]
  
The $\mathit{ID}$ function takes as input an action $\mathit{act} \in {\cal A}_{D,{\cal U}}$ and returns $\top$ if $\mathit{act}$ is either $\langle u, \mathtt{INSERT},R, \overline{t}\rangle$ or $\langle u, \mathtt{DELETE}, R,  \overline{t}\rangle$, for some $u \in {\cal U}$, $R \in D$, and $\overline{t} \in \mathbf{dom}^{|R|}$. 
The function   $\mathit{ID}$ returns $\bot$ otherwise.

The $\mathit{apply}$ function, which takes as input an action $\mathit{act} \in {\cal A}_{M,{\cal U}}$ that is either an \texttt{INSERT} or a \texttt{DELETE} action and a database state $\mathit{db} \in \Omega_{D}$, is defined as follows:
\[
	\mathit{apply}(\mathit{act},  \mathit{db}) = \left\{ 
  \begin{array}{l l}
  \mathit{db}[R \oplus \overline{t}] & \text{if } \mathit{act} = \langle u, \mathit{INSERT}, R, \overline{t}\rangle\\
  \mathit{db}[R \ominus \overline{t}] & \text{if }\mathit{act} = \langle u, \mathit{DELETE}, R, \overline{t}\rangle\\
  \end{array}\right.
\]

Let $t = \langle \mathit{id},\mathit{ow},  \mathit{ev}, R', \phi, \mathit{stmt}, m\rangle$ be a trigger and $R$ be a relation schema.
We denote $t$'s owner by $\mathit{owner}(t)$, i.e., $\mathit{owner}(t) = \mathit{ow}$.
Similarly, given a view $V$, we denote by $\mathit{owner}(V)$ the owner of $V$.
We also denote by $\overline{x}^{|R|}$ the tuple of variables $\langle x_{1}, \ldots, x_{|R|}\rangle$.
Furthermore, given a tuple $\overline{t} := \langle t_{1}, \ldots, t_{n} \rangle $, we denote by $\overline{t}(i)$ the $i$-th value $t_{i}$.
Finally, we denote by  $\overline{t}[\overline{x}^{|R|} \mapsto \overline{v}]$, where $\overline{t}$ is a tuple of values in $\mathbf{dom}$ and variables in $\{x_{1}, \ldots, x_{|R|}\}$ and $\overline{v}$ is a tuple in $\mathbf{dom}^{|R|}$, the tuple $\overline{z} \in \mathbf{dom}^{n}$ obtained as follows: for each $i \in \{1, \ldots, n\}$, if $\overline{t}(i) = x_{j}$, where $x_{j} \in \{x_{1}, \ldots, x_{|R|}\}$, then $\overline{z}(i) = \overline{v}(j)$, and otherwise $\overline{z}(i) = \overline{t}(i)$.
We are now ready to define the function $\mathit{getAction}$, which takes as input the trigger's statement $\mathit{stmt}$, a user $u$, and a tuple $\overline{t}' \in \mathbf{dom}^{|R'|}$, and returns the concrete action executed by the system.
Formally, $\mathit{getAction}$ is as follows:
\begin{compactitem}
\item $\mathit{getAction}(\langle \mathtt{INSERT}, R, \overline{t}\rangle, u, \overline{t}') = \langle u, \mathtt{INSERT}, R, \overline{t}[\overline{x}^{|R'|} \\\mapsto \overline{t'}]\rangle$, 
\item $\mathit{getAction}(\langle \mathtt{DELETE}, R, \overline{t}\rangle, u, \overline{t}') = \langle u, \mathtt{DELETE}, R, \overline{t}[\overline{x}^{|R'|} \\\mapsto \overline{t'}]\rangle$, and
\item $\mathit{getAction}(\langle \mathit{op}, u, p\rangle, u, \overline{t}') = \langle \mathit{op}, u, p,  u\rangle$, where $\mathit{op} \in \{\ominus,\oplus, \oplus^{*}\}$.
\end{compactitem}

We assume there is a total-order relation $\preceq_{\cal T}$ over ${\cal T}$. 
We use this ordering to determine the order in which triggers are executed.
Given a set of triggers $T$ and a database schema $D$, we denote by $\mathit{filter}(T,\mathit{ev},R)$, where $\mathit{ev} \in \{\mathit{INS}, \mathit{DEL}\}$ and $R \in D$, the sequence  of triggers in $T$ (ordered according to $\preceq_{\cal T}$)  whose event is $\mathit{ev}$ and whose relation schema is $R$.

\begin{figure*}

\centering
\begin{tabular}{c}
$\infer[\text{\begin{tabular}{c}Add\\ User\\ Success\end{tabular}}]
{
\hfill \langle \mathit{db}, U, \mathit{sec}, T, V, h, \mathit{aE}, \langle \mathit{rS}, \overline{t}', u', \epsilon\rangle \rangle \xrightarrow{\langle \mathit{admin},\mathtt{ADD\_USER},u\rangle}_{f} \langle \mathit{db}, U \cup \{u\}, \mathit{sec}, T, V, h \concat \mathit{aE}, \mathit{aE}', \langle \epsilon, \epsilon, \epsilon,  \epsilon\rangle\rangle \hfill}
{
\hfill \mathit{admin}\in U \hfill \quad
\hfill \mathit{aE}' = \langle \langle \mathit{admin}, \mathtt{ADD\_USER}, u\rangle, \top, \top, \emptyset \rangle \hfill \quad
\hfill \mathit{f}(\langle \mathit{db}, U, \mathit{sec}, T,V, h, \mathit{aE}, \langle \mathit{rS}, \overline{t}', u', \epsilon\rangle\rangle,\langle \mathit{admin}, \mathtt{ADD\_USER}, u\rangle) = \top \hfill
}$ \\
 \\
 
 $\infer[\text{\begin{tabular}{c}Add\\ User\\ Deny\end{tabular}}]
{
\hfill \langle \mathit{db}, U, \mathit{sec}, T, V, h, \mathit{aE}, \langle \mathit{rS}, \overline{t}', u', \epsilon\rangle \rangle \xrightarrow{\langle \mathit{admin},\mathtt{ADD\_USER},u\rangle}_{f} \langle \mathit{db}, U, \mathit{sec}, T, V, h \concat \mathit{aE}, \mathit{aE}', \langle \epsilon, \epsilon, \epsilon,  \epsilon\rangle\rangle \hfill}
{
\hfill \mathit{admin}\in U \hfill \quad
\hfill \mathit{aE}' = \langle \langle \mathit{admin}, \mathtt{ADD\_USER}, u\rangle, \bot, \bot, \emptyset \rangle \hfill \quad
\hfill \mathit{f}(\langle \mathit{db}, U, \mathit{sec}, T,V, h, \mathit{aE}, \langle \mathit{rS}, \overline{t}', u', \epsilon\rangle\rangle, \langle \mathit{admin}, \mathtt{ADD\_USER}, u\rangle) = \bot \hfill
}$ \\
 \\

$\infer[\text{\begin{tabular}{c} \texttt{SELECT} \\\text{ Success}\end{tabular}}]
{\langle \mathit{db}, U, \mathit{sec}, T, V, h, \mathit{aE}, \langle \mathit{rS}, \overline{t}', u', \epsilon\rangle \rangle \xrightarrow{\langle u, \mathtt{SELECT},  q\rangle}_{f} \langle \mathit{db}, U, \mathit{sec}, T, V, h \concat \mathit{aE}, \mathit{aE}', \langle \epsilon,\epsilon, \epsilon, \epsilon\rangle\rangle}
{
\hfill \mathit{u}\in U \hfill \quad
\hfill \mathit{f}(\langle \mathit{db}, U, \mathit{sec}, T,V, h, \mathit{aE}, \langle \mathit{rS}, \overline{t}', u', \epsilon\rangle\rangle, \langle u, \mathtt{SELECT},  q\rangle) = \top \hfill \quad
\hfill  [q]^{\mathit{db}} = v \hfill \\
\hfill  \mathit{aE}' = \langle \langle u, \mathtt{SELECT},  q\rangle, \top, v, \emptyset \rangle \hfill \quad
\hfill \mathit{defined}(q,D,V) \hfill }$ \\
\\

$\infer[\text{\begin{tabular}{c} \texttt{SELECT}\\ \text{ Deny}\end{tabular}}]
{\langle \mathit{db}, U, \mathit{sec}, T, V, h, \mathit{aE}, \langle \mathit{rS}, \overline{t}', u', \epsilon\rangle \rangle \xrightarrow{\langle u, \mathtt{SELECT},  q\rangle}_{f} \langle \mathit{db}, U, \mathit{sec}, T, V, h \concat \mathit{aE}, \mathit{aE}', \langle \epsilon,\epsilon, \epsilon, \epsilon\rangle\rangle}
{
\hfill \mathit{u}\in U \hfill  \quad 
\hfill \mathit{f}(\langle \mathit{db}, U, \mathit{sec}, T,V, h, \mathit{aE}, \langle \mathit{rS},\overline{t}', u', \epsilon\rangle\rangle, \langle u, \mathtt{SELECT},  q\rangle) =\bot \hfill \\ 
\hfill \mathit{aE}' = \langle \langle u, \mathtt{SELECT},  q\rangle, \bot, \bot, \emptyset \rangle \hfill  \quad
\hfill \mathit{defined}(q,D,V) \hfill }$ \\
\\
\end{tabular}
\captionof{figure}{Rules defining the $\rightarrow_{f}$ relation for \texttt{SELECT} and \texttt{ADD USER}}\label{table:rules:lts:1}
\end{figure*}

\begin{figure*}
\centering
\begin{tabular}{c}

$\infer[\text{\begin{tabular}{c} \texttt{INSERT}\\ Success 1\end{tabular}}]
{\langle \mathit{db}, U, \mathit{sec}, T, V, h, \mathit{aE}, \langle \mathit{rS}, \overline{t}', u', \epsilon\rangle \rangle \xrightarrow{\langle u, \mathtt{INSERT}, R, \overline{t}\rangle}_{f} \langle \mathit{db}[R \oplus \overline{t}], U, \mathit{sec}, T, V, h\concat \mathit{aE}, \mathit{aE}', \langle \epsilon, \epsilon, \epsilon,  \epsilon\rangle \rangle}
{
\hfill u\in U \hfill   \quad
\hfill R \in D \hfill \quad
\hfill \mathit{f}(\langle \mathit{db}, U, \mathit{sec}, T,V, h, \mathit{aE}, \langle \mathit{rS}, \overline{t}', u', \epsilon\rangle \rangle, \mathit{act}) = \top \hfill  \quad
\hfill  \mathit{act}= \langle u, \mathtt{INSERT}, R, \overline{t}\rangle \hfill  \\
\hfill \mathit{db} [R \oplus \overline{t}] \in \Omega_{D}^{\Gamma} \hfill \quad
\hfill \mathit{aE}' = \langle \mathit{act}, \top, \top, \emptyset \rangle \hfill \quad
\hfill \mathit{filter}(T,\mathit{INS},R) = \epsilon \vee \overline{t} \in \mathit{db}(R)\hfill 
}$
 \\\\
 
 $\infer[\text{\begin{tabular}{c} \texttt{INSERT}\\ Success 2\end{tabular}}]
{\hfill \langle \mathit{db}, U, \mathit{sec}, T, V, h, \mathit{aE}, \langle \mathit{rS}, \overline{t}', u', \epsilon\rangle \rangle \xrightarrow{\langle u, \mathtt{INSERT}, R, \overline{t}\rangle}_{f} \langle \mathit{db}[R \oplus \overline{t}], U, \mathit{sec}, T, V, h\concat \mathit{aE}, \mathit{aE}', \langle \mathit{rS}',\overline{t},u, \mathit{tr}\rangle \rangle\hfill }
{
\hfill u\in U \hfill   \quad
\hfill R \in D \hfill \quad
\hfill \mathit{f}(\langle \mathit{db}, U, \mathit{sec}, T,V, h, \mathit{aE}, \langle \mathit{rS}, \overline{t}', u', \epsilon\rangle\rangle, \mathit{act}) = \top \hfill  \quad
\hfill  \mathit{act}= \langle u, \mathtt{INSERT}, R, \overline{t}\rangle \hfill  \quad
\hfill \mathit{db} [R \oplus \overline{t}] \in \Omega_{D}^{\Gamma} \hfill  \\
\hfill \mathit{aE}' = \langle \mathit{act}, \top, \top, \emptyset \rangle \hfill  \quad
\hfill \mathit{tr} =\mathit{filter}(T,\mathit{INS},R) \hfill \quad
\hfill \mathit{tr} \neq \epsilon \hfill \quad 
\hfill \overline{t} \not\in \mathit{db}(R) \hfill \quad
\hfill \mathit{rS}' = \langle \mathit{db}, U, \mathit{sec}, T, V \rangle\hfill 
}$
 \\\\

 $\infer[\text{\begin{tabular}{c} \texttt{INSERT}\\ Exception\end{tabular}}]
{\hfill \langle \mathit{db}, U, \mathit{sec}, T, V, h, \mathit{aE}, \langle \mathit{rS}, \overline{t}', u', \epsilon\rangle \rangle \xrightarrow{ \langle u, \mathtt{INSERT}, R, \overline{t} \rangle }_{f} \langle \mathit{db}, U, \mathit{sec}, T, V, h \concat \mathit{aE}, \mathit{aE}',  \langle \epsilon, \epsilon, \epsilon,  \epsilon\rangle \rangle\hfill }
{
\hfill u\in U \hfill \quad
\hfill R \in D \hfill \quad
\hfill \mathit{f}(\langle \mathit{db}, U, \mathit{sec}, T,V, h, \mathit{aE}, \langle \mathit{rS}, \overline{t}', u', \epsilon\rangle \rangle, \langle u, \mathtt{INSERT}, R, \overline{t}\rangle) = \top \hfill  \\
\hfill E'= \{\phi \in \Gamma | [\phi]^{\mathit{db} [R \oplus \overline{t}]} =\bot\} \hfill  \quad 
\hfill E' \neq \emptyset \hfill  \quad
\hfill  \mathit{aE}' = \langle \langle u, \mathtt{INSERT}, R, \overline{t}\rangle, \top, \bot, E' \rangle \hfill 
}$
\\\\

 $\infer[\text{\begin{tabular}{c} \texttt{INSERT}\\ Deny\end{tabular}}]
{\hfill \langle \mathit{db}, U, \mathit{sec}, T, V, h, \mathit{aE}, \langle \mathit{rS}, \overline{t}', u', \epsilon\rangle \rangle \xrightarrow{ \langle u, \mathtt{INSERT}, R, \overline{t} \rangle }_{f}  \langle \mathit{db}, U, \mathit{sec}, T, V, h \concat \mathit{aE},   \mathit{aE}',  \langle \epsilon, \epsilon, \epsilon, \epsilon\rangle \rangle \hfill }
{
\hfill u\in U \hfill  \quad
\hfill R \in D \hfill \quad
\hfill \mathit{f}(\langle \mathit{db}, U, \mathit{sec}, T,V, h, \mathit{aE}, \langle \mathit{rS},\overline{t}',  u', \epsilon\rangle\rangle, \langle u, \mathtt{INSERT}, R, \overline{t}\rangle) =\bot \hfill  \quad
\hfill  \mathit{aE}' = \langle \langle u, \mathtt{INSERT}, R, \overline{t}\rangle, \bot, \bot, \emptyset \rangle \hfill 
}$

\end{tabular}
\captionof{figure}{Rules defining the $\rightarrow_{f}$ relation for \texttt{INSERT}}\label{table:rules:lts:2}
\end{figure*}

\begin{figure*}
\centering
\begin{tabular}{c}

$\infer[\text{\begin{tabular}{c} \texttt{DELETE}\\ Success 1\end{tabular}}]
{\hfill \langle \mathit{db}, U, \mathit{sec}, T, V, h, \mathit{aE}, \langle \mathit{rS}, \overline{t}', u', \epsilon\rangle \rangle \xrightarrow{\langle u, \mathtt{DELETE}, R, \overline{t}\rangle}_{f} \langle \mathit{db}[R \ominus \overline{t}], U, \mathit{sec}, T, V, h\concat \mathit{aE}, \mathit{aE}', \langle \epsilon, \epsilon, \epsilon, \epsilon\rangle \rangle \hfill }
{
\hfill u\in U \hfill   \quad
\hfill R \in D \hfill \quad
\hfill \mathit{f}(\langle \mathit{db}, U, \mathit{sec}, T,V, h, \mathit{aE}, \langle \mathit{rS},\overline{t}', u', \epsilon\rangle \rangle, \mathit{act}) = \top \hfill  \quad
\hfill  \mathit{act}= \langle u, \mathtt{DELETE}, R, \overline{t}\rangle \hfill  \\
\hfill \mathit{db} [R \ominus \overline{t}] \in \Omega_{D}^{\Gamma} \hfill  \quad
\hfill \mathit{aE}' = \langle \mathit{act}, \top, \top, \emptyset \rangle \hfill \quad 
\hfill \mathit{filter}(T,\mathit{DEL},R) = \epsilon \vee \overline{t} \not \in \mathit{db}(R) \hfill
}$
 \\\\
 
 $\infer[\text{\begin{tabular}{c} \texttt{DELETE}\\ Success 2\end{tabular}}]
{\hfill \langle \mathit{db}, U, \mathit{sec}, T, V, h, \mathit{aE}, \langle \mathit{rS},\overline{t}', u', \epsilon\rangle \rangle \xrightarrow{\langle u, \mathtt{DELETE}, R, \overline{t}\rangle}_{f} \langle \mathit{db}[R \ominus \overline{t}], U, \mathit{sec}, T, V, h\concat \mathit{aE}, \mathit{aE}', \langle \mathit{rS}', \overline{t}, u,\mathit{tr}\rangle \rangle \hfill }
{
\hfill u\in U \hfill   \quad
\hfill R \in D \hfill \quad
\hfill \mathit{f}(\langle \mathit{db}, U, \mathit{sec}, T,V, h, \mathit{aE}, \langle \mathit{rS}, \overline{t}', u', \epsilon\rangle\rangle, \mathit{act}) = \top \hfill  \quad
\hfill  \mathit{act}= \langle u, \mathtt{DELETE}, R, \overline{t}\rangle \hfill \quad
\hfill \mathit{db} [R \ominus \overline{t}] \in \Omega_{D}^{\Gamma} \hfill \\
\hfill \mathit{aE}' = \langle \mathit{act}, \top, \top, \emptyset \rangle \hfill \quad
\hfill \mathit{tr} =\mathit{filter}(T,\mathit{DEL},R)  \hfill \quad
\hfill  \mathit{tr} \neq \epsilon \hfill \quad 
\hfill \overline{t} \in \mathit{db}(R) \hfill \quad
\hfill \mathit{rS}' = \langle \mathit{db}, U, \mathit{sec}, T, V \rangle \hfill 
}$
 \\\\

 $\infer[\text{\begin{tabular}{c} \texttt{DELETE}\\ Exception\end{tabular}}]
{\langle \mathit{db}, U, \mathit{sec}, T, V, h, \mathit{aE}, \langle \mathit{rS}, \overline{t}',u', \epsilon\rangle \rangle \xrightarrow{ \langle u, \mathtt{DELETE}, R, \overline{t} \rangle }_{f} \langle \mathit{db}, U, \mathit{sec}, T, V, h \concat \mathit{aE}, \mathit{aE}',  \langle \epsilon, \epsilon, \epsilon, \epsilon\rangle \rangle}
{
\hfill u\in U \hfill \quad
\hfill R \in D \hfill \quad
\hfill \mathit{f}(\langle \mathit{db}, U, \mathit{sec}, T,V, h, \mathit{aE}, \langle \mathit{rS}, \overline{t}', u', \epsilon\rangle\rangle, \langle u, \mathtt{DELETE}, R, \overline{t}\rangle) = \top \hfill  \\
\hfill E'= \{\phi \in \Gamma | [\phi]^{\mathit{db} [R \ominus \overline{t}]} =\bot\} \hfill \quad 
\hfill E' \neq \emptyset \hfill \quad
\hfill  \mathit{aE}' = \langle \langle u, \mathtt{DELETE}, R, \overline{t}\rangle, \top, \bot, E' \rangle \hfill 
}$
\\\\

 $\infer[\text{\begin{tabular}{c} \texttt{DELETE}\\ Deny\end{tabular}}]
{\hfill \langle \mathit{db}, U, \mathit{sec}, T, V, h, \mathit{aE}, \langle \mathit{rS}, \overline{t}', u', \epsilon\rangle \rangle \xrightarrow{ \langle u, \mathtt{DELETE}, R, \overline{t} \rangle }_{f}  \langle \mathit{db}, U, \mathit{sec}, T, V, h \concat \mathit{aE},   \mathit{aE}',  \langle \epsilon, \epsilon, \epsilon, \epsilon\rangle \rangle\hfill }
{
\hfill u\in U \hfill \quad
\hfill R \in D \hfill \quad
\hfill \mathit{f}(\langle \mathit{db}, U, \mathit{sec}, T,V, h, \mathit{aE}, \langle \mathit{rS}, \overline{t}', u', \epsilon\rangle\rangle, \langle u, \mathtt{DELETE}, R, \overline{t}\rangle) =\bot  \hfill \quad
\hfill  \mathit{aE}' = \langle \langle u, \mathtt{DELETE}, R, \overline{t}\rangle, \bot, \bot, \emptyset \rangle \hfill 
}$

\end{tabular}
\captionof{figure}{Rules defining the $\rightarrow_{f}$ relation for \texttt{DELETE}}\label{table:rules:lts:3}
\end{figure*}

\begin{figure*}
\centering
\begin{tabular}{c}

$\infer[\text{\begin{tabular}{c} Trigger \\\texttt{DELETE}-\\\texttt{INSERT}\\ Success\end{tabular}}]
{\hfill \langle \mathit{db}, U, \mathit{sec}, T, V, h, \mathit{aE}, \langle \mathit{rS}, \overline{t}, \mathit{invk}, t \concat \mathit{tr}\rangle \rangle \xrightarrow{t}_{f} \langle \mathit{db}', U, \mathit{sec}, T, V, h\concat \mathit{aE}, \mathit{tE}', \langle \mathit{rS}, \overline{t}, \mathit{invk}, \mathit{tr}\rangle \rangle \hfill }
{
\hfill \mathit{invk}, \mathit{ow}\in U  \hfill  \quad
\hfill t = \langle \mathit{id},\mathit{ow},  \mathit{ev}, R', \phi, \mathit{stmt}, m\rangle \hfill  \quad
\hfill u= \mathit{getActualUser}(m,\mathit{ow},\mathit{invk}) \hfill \quad
\hfill \phi' =  \phi[\overline{x}^{|R'|}\mapsto\overline{t}]\hfill \\
\hfill \mathit{f}(\langle \mathit{db}, U, \mathit{sec}, T,V, h, \mathit{aE}, \langle \mathit{rS},\overline{t}, \mathit{invk}, t \concat \mathit{tr}\rangle\rangle, \langle u, \mathtt{SELECT}, \phi'\rangle) = \top \hfill \quad
\hfill [ \phi']^{\mathit{db}} = \top \hfill \quad
\hfill  \mathit{aE}' = \langle \langle u, \mathtt{SELECT},  \phi' \rangle, \top, \top, \emptyset\rangle \hfill \\
\hfill \mathit{act} = \mathit{getAction}(\mathit{stmt}, u, \overline{t}) \hfill \quad
\hfill  \mathit{db'} = \mathit{apply}(\mathit{act}, \mathit{db}) \hfill \quad
\hfill \mathit{f}(\langle \mathit{db}, U, \mathit{sec}, T,V, h \concat \mathit{aE}, \mathit{aE}', \langle \mathit{rS}, \overline{t}, \mathit{invk}, t \concat \mathit{tr}\rangle\rangle, \mathit{act}) = \top \hfill \\
\hfill \mathit{db}' \in \Omega_{D}^{\Gamma} \hfill \quad
\hfill \mathit{aE}'' = \langle \mathit{act}, \top, \top, \emptyset \rangle \hfill \quad
\hfill \mathit{tE}' = \langle t, \mathit{aE}', \mathit{aE}''\rangle \hfill \quad
\hfill \mathit{ID}(\mathit{act}) =\top\hfill
}$
 \\\\
 
 $\infer[\text{\begin{tabular}{c} Trigger \\\texttt{DELETE}-\\\texttt{INSERT}\\ Exception\end{tabular}}]
{\hfill \langle \mathit{db}, U, \mathit{sec}, T, V, h, \mathit{aE}, \langle \mathit{rS}, \overline{t}, \mathit{invk}, t \concat \mathit{tr}\rangle \rangle \xrightarrow{t}_{f} \langle \mathit{db}', U', \mathit{sec}', T', V', h\concat \mathit{aE}, \mathit{tE}', \langle \epsilon, \epsilon, \epsilon, \epsilon\rangle \rangle \hfill }
{
\hfill \mathit{invk}, \mathit{ow}\in U \hfill   \quad
\hfill  t = \langle \mathit{id},\mathit{ow},  \mathit{ev}, R', \phi, \mathit{stmt}, m\rangle \hfill  \quad
\hfill u= \mathit{getActualUser}(m,\mathit{ow},\mathit{invk}) \hfill \quad
\hfill  \mathit{rS} = \langle \mathit{db}', U', \mathit{sec}', T', V' \rangle \hfill \\
\hfill \mathit{f}(\langle \mathit{db}, U, \mathit{sec}, T,V, h, \mathit{aE},  \langle \mathit{rS}, \overline{t}, \mathit{invk}, t \concat \mathit{tr}\rangle\rangle, \langle u, \mathtt{SELECT}, \phi'\rangle) = \top \hfill \quad
\hfill  [\phi']^{\mathit{db}} = \top \hfill \quad 
\hfill \mathit{aE}' = \langle \langle u, \mathtt{SELECT}, \phi'\rangle, \top, \top, \emptyset\rangle \hfill \\
\hfill \mathit{act} = \mathit{getAction}(\mathit{stmt}, u, \overline{t}) \hfill \quad
\hfill \mathit{f}(\langle \mathit{db}, U, \mathit{sec}, T,V,  h \concat \mathit{aE}, \mathit{aE}', \langle \mathit{rS},  \overline{t}, \mathit{invk}, t \concat \mathit{tr}\rangle\rangle, \mathit{act}) = \top  \quad \mathit{ID}(\mathit{act}) =\top \hfill \\
\hfill \phi' =  \phi[\overline{x}^{|R'|}\mapsto\overline{t}]\hfill \quad
\hfill E'= \{\phi \in \Gamma | [\phi]^{ \mathit{apply}(\mathit{act}, \mathit{db})}\} \hfill \quad 
\hfill E' \neq \emptyset \hfill \quad
\hfill  \mathit{aE}'' = \langle \mathit{act}, \top, \bot, E' \rangle \hfill  \quad
\hfill \mathit{tE}' = \langle t, \mathit{aE}', \mathit{aE}''\rangle \hfill 
}$
\end{tabular}
\captionof{figure}{Rules defining the $\rightarrow_{f}$ relation for triggers with \texttt{INSERT}/\texttt{DELETE} action}\label{table:rules:lts:5a}
\end{figure*}
 
 \begin{figure*}
\centering
\begin{tabular}{c}
$\infer[\text{\begin{tabular}{c} Trigger\\ \texttt{GRANT}\\ Success\end{tabular}}]
{\hfill \langle \mathit{db}, U, \mathit{sec}, T, V, h, \mathit{aE}, \langle \mathit{rS}, \overline{t}, \mathit{invk}, t \concat \mathit{tr}\rangle \rangle \xrightarrow{t}_{f} \langle \mathit{db}, U, \mathit{sec} \cup \{\langle \mathit{op}, u', p,  u\rangle\}, T, V, h\concat \mathit{aE}, \mathit{tE}', \langle \mathit{rS}, \overline{t}, \mathit{invk}, \mathit{tr}\rangle \rangle \hfill }
{
\hfill \mathit{invk}, \mathit{ow}\in U  \hfill  \quad
\hfill t = \langle \mathit{id},\mathit{ow},  \mathit{ev}, R', \phi, \mathit{stmt}, m\rangle \hfill  \quad
\hfill u= \mathit{getActualUser}(m,\mathit{ow},\mathit{invk}) \hfill \quad
\hfill \phi' =  \phi[\overline{x}^{|R'|}\mapsto\overline{t}]\hfill \\
\hfill \mathit{f}(\langle \mathit{db}, U, \mathit{sec}, T,V, h, \mathit{aE},  \langle \mathit{rS}, \overline{t}, \mathit{invk}, t \concat \mathit{tr}\rangle\rangle, \langle u, \mathtt{SELECT}, \phi'\rangle) = \top \hfill \quad
\hfill  [\phi']^{\mathit{db}} = \top \hfill \quad
\hfill  \mathit{aE}' = \langle \langle u, \mathtt{SELECT}, \phi'\rangle, \top, \top, \emptyset\rangle \hfill \\
\hfill \langle \mathit{op}, u', p,  u\rangle = \mathit{getAction}(\mathit{stmt}, u, \overline{t}) \hfill \quad
\hfill \mathit{f}(\langle \mathit{db}, U, \mathit{sec}, T,V, h \concat \mathit{aE}, \mathit{aE}',  \langle \mathit{rS}, \overline{t}, \mathit{invk}, t \concat \mathit{tr}\rangle\rangle, \langle \mathit{op}, u', p,  u\rangle) = \top \hfill \\
\hfill \mathit{aE}'' = \langle \langle \mathit{op}, u', p,  u\rangle, \top, \top, \emptyset \rangle \hfill \quad 
\hfill \mathit{tE}' = \langle t, \mathit{aE}', \mathit{aE}''\rangle \hfill \quad
\hfill  \mathit{op} \in  \{\oplus, \oplus^{*}\}\hfill
}$
\\
\\
$\infer[\text{\begin{tabular}{c} Trigger\\ \texttt{REVOKE}\\ Success\end{tabular}}]
{\hfill \langle \mathit{db}, U, \mathit{sec}, T, V, h, \mathit{aE}, \langle \mathit{rS}, \overline{t}, \mathit{invk}, t \concat \mathit{tr}\rangle \rangle \xrightarrow{t}_{f} \langle \mathit{db}, U, \mathit{revoke}(\mathit{sec}, u,p,u'), T, V, h\concat \mathit{aE}, \mathit{tE}', \langle \mathit{rS}, \overline{t}, \mathit{invk}, \mathit{tr}\rangle \rangle\hfill }
{
\hfill \mathit{invk}, \mathit{ow}\in U  \hfill  \quad
\hfill t = \langle \mathit{id},\mathit{ow},  \mathit{ev}, R', \phi, \mathit{stmt}, m\rangle \hfill  \quad
\hfill u= \mathit{getActualUser}(m,\mathit{ow},\mathit{invk}) \hfill \quad
\hfill \phi' =  \phi[\overline{x}^{|R'|}\mapsto\overline{t}] \hfill \\
\hfill \mathit{f}(\langle \mathit{db}, U, \mathit{sec}, T,V, h, \mathit{aE}, \langle \mathit{rS}, \overline{t}, \mathit{invk}, t \concat \mathit{tr}\rangle\rangle, \langle u, \mathtt{SELECT}, \phi'\rangle) = \top \hfill \quad 
\hfill [\phi']^{\mathit{db}} = \top \hfill \quad 
\hfill \mathit{aE}' = \langle \langle u, \mathtt{SELECT}, \phi'\rangle, \top, \top, \emptyset\rangle \hfill \\
\hfill \langle \ominus, u', p,  u\rangle = \mathit{getAction}(\mathit{stmt}, u, \overline{t})\hfill  \quad
\hfill \mathit{f}(\langle \mathit{db}, U, \mathit{sec}, T,V, h \concat \mathit{aE}, \mathit{aE}',  \langle \mathit{rS}, \overline{t}, \mathit{invk}, t \concat \mathit{tr}\rangle\rangle, \langle \ominus, u', p,  u\rangle) = \top \hfill \\
\hfill \mathit{aE}'' = \langle \langle \ominus, u', p,  u\rangle, \top, \top, \emptyset \rangle \hfill  \quad
\hfill \mathit{tE}' = \langle t, \mathit{aE}', \mathit{aE}''\rangle \hfill
}$
\end{tabular}
\captionof{figure}{Rules defining the $\rightarrow_{f}$ relation for triggers with \texttt{GRANT}/\texttt{REVOKE} action}\label{table:rules:lts:5b}
\end{figure*}

 \begin{figure*}
\centering
\begin{tabular}{c}
  $\infer[\text{\begin{tabular}{c} Trigger \\ Disabled\end{tabular}}]
{\hfill \langle \mathit{db}, U, \mathit{sec}, T, V, h, \mathit{aE}, \langle \mathit{rS}, \overline{t}, \mathit{invk}, t \concat \mathit{tr}\rangle \rangle \xrightarrow{t}_{f} \langle \mathit{db}, U, \mathit{sec}, T, V, h\concat \mathit{aE}, \mathit{tE}', \langle \mathit{rS}, \overline{t}, \mathit{invk}, \mathit{tr}\rangle \rangle \hfill }
{\hfill \mathit{invk}, \mathit{ow}\in U  \hfill  \quad
\hfill t = \langle \mathit{id},\mathit{ow},  \mathit{ev}, R', \phi, \mathit{stmt}, m\rangle \hfill  \quad
\hfill u= \mathit{getActualUser}(m,\mathit{ow},\mathit{invk}) \hfill \\
\hfill \phi' =  \phi[\overline{x}^{|R'|}\mapsto\overline{t}]\hfill \quad
\hfill \mathit{f}(\langle \mathit{db}, U, \mathit{sec}, T,V,  h, \mathit{aE}, \langle \mathit{rS}, \overline{t}, \mathit{invk}, \mathit{tr}\rangle \rangle, \langle u, \mathtt{SELECT}, \phi'\rangle) = \top \hfill \\
\hfill [\phi']^{\mathit{db}} =\bot \hfill \quad
\hfill  \mathit{aE}' = \langle \langle u, \mathtt{SELECT}, \phi'\rangle, \top, \bot, \emptyset\rangle \hfill \quad
\hfill \mathit{tE}' = \langle t, \mathit{aE}', \epsilon\rangle\hfill 
}$
 \\\\
 
   $\infer[\text{\begin{tabular}{c} Trigger \\ Deny\\ Condition\end{tabular}}]
{\hfill \langle \mathit{db}, U, \mathit{sec}, T, V, h, \mathit{aE}, \langle \mathit{rS}, \overline{t}, \mathit{invk}, t \concat \mathit{tr}\rangle \rangle \xrightarrow{t}_{f} \langle \mathit{db}', U', \mathit{sec}', T', V', h\concat \mathit{aE}, \mathit{tE}', \langle \epsilon, \epsilon, \epsilon, \epsilon \rangle \rangle\hfill }
{\hfill \mathit{invk}, \mathit{ow}\in U \hfill   \quad
\hfill  t = \langle \mathit{id},\mathit{ow},  \mathit{ev}, R', \phi, \mathit{stmt}, m\rangle \hfill  \quad
\hfill u= \mathit{getActualUser}(m,\mathit{ow},\mathit{invk}) \hfill \\
\hfill  \mathit{rS} = \langle \mathit{db}', U', \mathit{sec}', T', V' \rangle \hfill \quad
\hfill \mathit{f}(\langle \mathit{db}, U, \mathit{sec}, T,V,  h, \mathit{aE},   \langle \mathit{rS}, \overline{t}, \mathit{invk}, \mathit{tr} \rangle\rangle, \langle u, \mathtt{SELECT}, \phi'\rangle) =\bot\hfill \\
\hfill \mathit{aE}' = \langle \langle u, \mathtt{SELECT}, \phi'\rangle, \bot, \bot, \emptyset\rangle \hfill \quad
\hfill \mathit{tE}' = \langle t, \mathit{aE}', \epsilon\rangle \hfill \quad
\hfill \phi' =  \phi[\overline{x}^{|R'|}\mapsto\overline{t}] \hfill 
}$
\\\\
  $\infer[\text{\begin{tabular}{c} Trigger \\ Deny\\ Action\end{tabular}}]
{\hfill \langle \mathit{db}, U, \mathit{sec}, T, V, h, \mathit{aE}, \langle \mathit{rS}, \overline{t}, \mathit{invk}, t \concat \mathit{tr}\rangle \rangle \xrightarrow{t}_{f} \langle \mathit{db}', U', \mathit{sec}', T', V', h\concat \mathit{aE}, \mathit{tE}', \langle \epsilon, \epsilon, \epsilon, \epsilon\rangle \rangle\hfill }
{
\hfill \mathit{invk}, \mathit{ow}\in U \hfill   \quad
\hfill  t = \langle \mathit{id},\mathit{ow},  \mathit{ev}, R', \phi, \mathit{stmt}, m\rangle  \hfill \quad
\hfill u= \mathit{getActualUser}(m,\mathit{ow},\mathit{invk}) \hfill  \\
\hfill \mathit{f}(\langle \mathit{db}, U, \mathit{sec}, T,V, h, \mathit{aE}, \langle \mathit{rS}, \overline{t},\mathit{invk}, t \concat \mathit{tr}\rangle\rangle, \langle u, \mathtt{SELECT}, \phi'\rangle) = \top  \hfill \quad
\hfill  [\phi']^{\mathit{db}} = \top \hfill \quad 
\hfill \mathit{aE}' = \langle \langle u, \mathtt{SELECT}, \phi'\rangle, \top, \top, \emptyset\rangle \hfill \\
\hfill \mathit{act} = \mathit{getAction}(\mathit{stmt}, u, \overline{t}) \hfill \quad
\hfill \mathit{f}(\langle \mathit{db}, U, \mathit{sec}, T,V, h \concat \mathit{aE}, \mathit{aE}', \langle \mathit{rS}, \overline{t}, \mathit{invk}, t \concat \mathit{tr}\rangle\rangle, \mathit{act}) =\bot \hfill \quad
\hfill \phi' =  \phi[\overline{x}^{|R'|}\mapsto\overline{t}]\hfill \\
\hfill \mathit{aE}'' = \langle \mathit{act}, \bot, \bot, \emptyset \rangle \hfill \quad
\hfill \mathit{tE}' = \langle t, \mathit{aE}', \mathit{aE}''\rangle \hfill \quad
\hfill  \mathit{rS} = \langle \mathit{db}', U', \mathit{sec}', T', V' \rangle \hfill  
}$
\end{tabular}
\captionof{figure}{Rules defining the $\rightarrow_{f}$ relation for triggers}\label{table:rules:lts:6}
\end{figure*}

\begin{figure*}
\centering
\begin{tabular}{c}

$\infer[\text{\begin{tabular}{c}\texttt{GRANT}\\ Success\end{tabular}}]
{\hfill \langle \mathit{db}, U, \mathit{sec}, T, V, h, \mathit{aE}, \langle \mathit{rS}, \overline{t}, u'', \epsilon\rangle \rangle \xrightarrow{ \langle \mathit{op}, u, p,  u'\rangle}_{f} 
\langle \mathit{db}, U,  \mathit{sec} \cup \{\langle \mathit{op}, u, p,  u'\rangle\}, T,V,h \concat \mathit{aE}, \mathit{aE}', \langle \epsilon, \epsilon, \epsilon,  \epsilon\rangle\rangle \hfill }
{\hfill u,u'\in U \hfill \quad 
\hfill  \mathit{f}(\langle \mathit{db}, U, \mathit{sec}, T,V, h, \mathit{aE}, \langle \mathit{rS}, \overline{t}, u'', \epsilon\rangle \rangle,  \langle \mathit{op}, u, p, u'\rangle) = \top \hfill \\
\hfill \mathit{aE}' = \langle \langle \mathit{op}, u, p,  u'\rangle, \top, \top, \emptyset \rangle \hfill \quad
\hfill  \mathit{op} \in \{\oplus,\oplus^{*}\} \hfill \quad
\hfill \mathit{defined}(p,D,V) \hfill }$ 
\\
\\
$\infer[\text{\begin{tabular}{c}\texttt{REVOKE}\\ Success\end{tabular}}]
{\hfill \langle \mathit{db}, U, \mathit{sec}, T, V, h, \mathit{aE}, \langle \mathit{rS}, \overline{t}, u'', \epsilon\rangle \rangle \xrightarrow{ \langle \ominus, u, p,  u'\rangle}_{f} 
\langle \mathit{db}, U,  \mathit{revoke}(\mathit{sec}, u,p,u'), T,V,h \concat \mathit{aE}, \mathit{aE}', \langle \epsilon, \epsilon, \epsilon,  \epsilon\rangle\rangle \hfill }
{\hfill u,u'\in U \hfill \quad
\hfill  \mathit{f}(\langle \mathit{db}, U, \mathit{sec}, T,V, h, \mathit{aE}, \langle \mathit{rS}, \overline{t}, u'', \epsilon\rangle\rangle,  \langle \ominus, u, p, u'\rangle) = \top \hfill \\
\hfill \mathit{aE}' = \langle \langle \ominus, u, p,  u'\rangle, \top, \top, \emptyset \rangle \hfill \quad
\hfill \mathit{defined}(p,D,V) \hfill}$ 
\\\\
$\infer[\text{\begin{tabular}{c}\texttt{GRANT}-\\\texttt{REVOKE}\\ Deny\end{tabular}}]
{\hfill \langle \mathit{db}, U, \mathit{sec}, T, V, h, \mathit{aE}, \langle \mathit{rS}, \overline{t}, u'', \epsilon\rangle \rangle \xrightarrow{ \langle \mathit{op}, u, p,  u'\rangle}_{f} 
\langle \mathit{db}, U,  \mathit{sec}, T,V,h \concat \mathit{aE}, \mathit{aE}', \langle \epsilon,\epsilon, \epsilon,  \epsilon\rangle\rangle\hfill }
{
\hfill u,u'\in U \hfill \quad
\hfill  \mathit{f}(\langle \mathit{db}, U, \mathit{sec}, T,V, h, \mathit{aE}, \langle \mathit{rS}, \overline{t}, u'', \epsilon\rangle \rangle,  \langle \mathit{op}, u, p, u'\rangle) =\bot \hfill \\
\hfill \mathit{aE}' = \langle \langle \mathit{op}, u, p,  u'\rangle, \bot, \bot, \emptyset \rangle \hfill  \quad
\hfill  \mathit{op} \in \{\oplus,\oplus^{*}, \ominus\} \quad
\hfill \mathit{defined}(p,D,V) \hfill}$ 
\end{tabular}
\captionof{figure}{Rules defining the $\rightarrow_{f}$ relation for \texttt{GRANT} and \texttt{REVOKE}}\label{table:rules:lts:7}
\end{figure*}

\begin{figure*}

\centering
\begin{tabular}{c}
$\infer[\text{\begin{tabular}{c}\texttt{CREATE} \\ \texttt{TRIGGER}\\ Success\end{tabular}}]
{\hfill \langle \mathit{db}, U, \mathit{sec}, T, V, h, \mathit{aE}, \langle \mathit{rS}, \overline{t}, u', \epsilon\rangle \rangle \xrightarrow{\langle u, \mathtt{CREATE}, t\rangle}_{f} 
\langle \mathit{db}, U, \mathit{sec}, T \cup \{t\},V, h \concat \mathit{aE}, \mathit{aE}', \langle \epsilon, \epsilon, \epsilon,  \epsilon\rangle\rangle\hfill
 }
{
\hfill u\in U \hfill \quad
\hfill \mathit{defined}(t,D,V) \hfill \quad
\hfill  \mathit{safe}(\{t\}\cup T) \hfill  \quad
\hfill  \mathit{usersIn}(t,U) \hfill  \quad
\hfill  \mathit{f}(\langle \mathit{db}, U, \mathit{sec}, T,V, h, \mathit{aE}, \langle \mathit{rS}, \overline{t}, u', \epsilon\rangle \rangle, \langle u, \mathtt{CREATE}, t\rangle) = \top \hfill \\
\hfill t = \langle \mathit{id},u,  \mathit{ev}, R, \phi, \mathit{stmt}, m\rangle \hfill \quad
\hfill \mathit{aE}' = \langle \langle u, \mathtt{CREATE}, t\rangle, \top, \top, \emptyset \rangle \hfill  \quad
\hfill \neg \exists t' \in T.\, t' = \langle \mathit{id},\mathit{ow}',  \mathit{ev}', R', \phi', \mathit{stmt}', m'\rangle \hfill}$ \\
 \\
 
$\infer[\text{\begin{tabular}{c}\texttt{CREATE} \\ \texttt{TRIGGER}\\ Deny \end{tabular}}]
{\hfill \langle \mathit{db}, U, \mathit{sec}, T, V, h, \mathit{aE}, \langle \mathit{rS}, \overline{t}, u', \epsilon\rangle \rangle \xrightarrow{\langle u, \mathtt{CREATE}, t\rangle}_{f} 
\langle \mathit{db}, U, \mathit{sec}, T, V, h \concat \mathit{aE}, \mathit{aE}', \langle \epsilon, \epsilon, \epsilon,  \epsilon\rangle\rangle\hfill }
{
\hfill u\in U \hfill \quad
\hfill \mathit{defined}(t,D,V) \hfill \quad
\hfill  \mathit{safe}(\{t\}\cup T) \hfill  \quad
\hfill  \mathit{usersIn}(t,U) \hfill  \quad
\hfill  \mathit{f}(\langle \mathit{db}, U, \mathit{sec}, T,V, h, \mathit{aE}, \langle \mathit{rS}, \overline{t}, u', \epsilon\rangle \rangle, \langle u, \mathtt{CREATE}, t\rangle) = \top \hfill \\
\hfill t = \langle \mathit{id},u,  \mathit{ev}, R, \phi, \mathit{stmt}, m\rangle \hfill \quad
\hfill \mathit{aE}' = \langle \langle u, \mathtt{CREATE}, t\rangle, \top, \bot, \emptyset \rangle \hfill \quad
\hfill t' = \langle \mathit{id},\mathit{ow}',  \mathit{ev}', R', \phi', \mathit{stmt}', m'\rangle \hfill \quad
\hfill t' \in T \hfill \quad 
\hfill t' \neq t \hfill }$ \\
 \\

$\infer[\text{\begin{tabular}{c}\texttt{CREATE}\\ \texttt{VIEW}\\ Success\end{tabular}}]
{\hfill \langle \mathit{db}, U, \mathit{sec}, T, V, h, \mathit{aE}, \langle \mathit{rS}, \overline{t}, u', \epsilon\rangle \rangle \xrightarrow{\langle u, \mathtt{CREATE}, v\rangle}_{f} 
\langle \mathit{db}, U, \mathit{sec}, T ,V\cup \{v\}, h \concat \mathit{aE}, \mathit{aE}', \langle \epsilon,\epsilon, \epsilon,  \epsilon\rangle\rangle \hfill }
{\hfill u\in U \hfill   \quad
\hfill \mathit{defined}(v,D,V) \hfill \quad
\hfill  \mathit{f}(\langle \mathit{db}, U, \mathit{sec}, T,V, h, \mathit{aE}, \langle \mathit{rS}, \overline{t}, u', \epsilon\rangle \rangle,  \langle u, \mathtt{CREATE}, v\rangle) = \top \hfill \\
\hfill v = \langle \mathit{id},u, q, m\rangle \hfill \quad
\hfill \mathit{aE}' = \langle \langle u, \mathtt{CREATE}, v\rangle, \top, \top, \emptyset \rangle \hfill \quad
\hfill \neg \exists v' \in V.\,  v' = \langle \mathit{id},\mathit{ow}', q', m'\rangle \hfill}$ \\
 \\
 
 $\infer[\text{\begin{tabular}{c}\texttt{CREATE}\\ \texttt{VIEW}\\ Deny\end{tabular}}]
{\hfill \langle \mathit{db}, U, \mathit{sec}, T, V, h, \mathit{aE}, \langle \mathit{rS}, \overline{t}, u', \epsilon\rangle \rangle \xrightarrow{\langle u, \mathtt{CREATE}, v\rangle}_{f} 
\langle \mathit{db}, U, \mathit{sec}, T ,V, h \concat \mathit{aE}, \mathit{aE}', \langle \epsilon,\epsilon, \epsilon , \epsilon\rangle\rangle \hfill }
{\hfill u\in U \hfill   \quad
\hfill \mathit{defined}(v,D,V) \hfill \quad
\hfill  \mathit{f}(\langle \mathit{db}, U, \mathit{sec}, T,V, h, \mathit{aE}, \langle \mathit{rS}, \overline{t}, u', \epsilon\rangle \rangle,  \langle u, \mathtt{CREATE}, v\rangle) = \top \hfill \\
\hfill v = \langle \mathit{id},u, q, m\rangle \hfill \quad
\hfill \mathit{aE}' = \langle \langle u, \mathtt{CREATE}, v\rangle, \top, \bot, \emptyset \rangle \hfill \quad
\hfill  v' = \langle \mathit{id},\mathit{ow}', q', m'\rangle \hfill \quad
\hfill v' \in V \hfill \quad
\hfill v \neq v' \hfill}$ \\
 \\

$\infer[\text{\begin{tabular}{c}\texttt{CREATE}\\ Deny\end{tabular}}]
{\hfill \langle \mathit{db}, U, \mathit{sec}, T, V, h, \mathit{aE}, \langle \mathit{rS}, \overline{t}, u', \epsilon\rangle \rangle \xrightarrow{\langle u, \mathtt{CREATE}, o\rangle}_{f} 
\langle \mathit{db}, U, \mathit{sec}, T,V, h \concat \mathit{aE}, \mathit{aE}', \langle \epsilon,\epsilon, \epsilon, \epsilon\rangle\rangle \hfill }
{ \hfill u\in U \hfill   \quad
\hfill \mathit{defined}(o,D,V) \hfill \quad
\hfill  \mathit{f}(\langle \mathit{db}, U, \mathit{sec}, T,V, h, \mathit{aE}, \langle \mathit{rS}, \overline{t}, u', \epsilon\rangle\rangle,  \langle u, \mathtt{CREATE}, o\rangle) = \bot \hfill  \quad
\hfill \mathit{aE}' = \langle \langle u, \mathtt{CREATE}, o\rangle, \bot, \bot, \emptyset \rangle \hfill }$
\end{tabular}

\captionof{figure}{Rules defining the $\rightarrow_{f}$ relation for \texttt{CREATE} triggers and views}\label{table:rules:lts:8}
\end{figure*}

\clearpage

\section{Attacker Model}\label{app:adv:model}

\begin{figure*}[!hbtp]
\[
\mathit{reviseBelief}(p', \phi,p' \concat \mathit{act} \concat s)) = \left\{ 
  \begin{array}{l l}
    \top & \text{if } \mathit{act}= \langle u, \mathit{op}, R, \overline{t}\rangle  \wedge R \not\in \mathit{tables}(\phi) \wedge \mathit{op} \in \{\texttt{INSERT}, \texttt{DELETE}\}\\
    \top & \text{if } \mathit{act}= \langle \mathit{id},\mathit{ow},  \mathit{ev}, R', \phi, \langle \mathit{op}, R,\overline{t}  \rangle,m\rangle  \wedge R \not\in \mathit{tables}(\phi) \wedge \mathit{op} \in \{\texttt{INSERT}, \texttt{DELETE}\}\\
  \top & \text{if } \mathit{act}= \langle \mathit{id},\mathit{ow},  \mathit{ev}, R, \phi, \langle \mathit{op}, u, p  \rangle,m\rangle  \wedge \mathit{op} \in \{\oplus,\oplus^{*}, \ominus\}\\
    \bot & \text{otherwise}
  \end{array} \right.
  \]
  \caption{Belief Revision}\label{function:revise}
\end{figure*}

In this section, we formalize our attacker model $\attackerModel$. Let $P = \langle M, f \rangle$ be an \accessControlConfiguration{}, where $M = \langle D, \Gamma\rangle$ is a system configuration and $f$ is an $M$-\acf{}, $L$ be the $P$-LTS, and $u \in {\cal U}$ be a user.
The set $\attackerModel$ is the smallest set of judgments  satisfying the inference rules in Figures \ref{table:rules:adversary:1}--\ref{table:rules:adversary:11}. 
With a slight abuse of notation, in the rules we use $r,i \attMod \phi$ to denote that this judgment holds in $\attackerModel$, i.e., $ r,i \attMod \phi \in \attackerModel$.
Note that we redefine here also the rules we presented before in Figure \ref{figure:adv:model:rules}. 

In the rules, we use $\models_{\mathit{fin}}$ to denote the standard semantic entailment relation for first-order logic over finite models.
We also denote by $\mathit{replace}(\psi,o)$, where $\psi$ is a sentence and $o$ is a view $\langle V, \mathit{ow}, \{\overline{x} | \phi \}, m\rangle \in {\cal VIEW}_{D}$, the formula $\psi'$ obtained from $\psi$ by replacing all occurrences of $V(\overline{x})$ with $\phi(\overline{x})$. Note that $\psi$ and $\mathit{replace}(\psi,o)$ are semantically equivalent.
Finally, given a database schema $D$, a state $s = \langle \mathit{db}, U, \mathit{sec}, T, V, \mathit{ctx} \rangle$, and an action $a \in {\cal A}_{D,{\cal U}} \cup {\cal TRIGGER}_{D}$, we denote by $\mathit{user}(s,a)$ the following function:
\[
\mathit{user}(s,a) = \left\{ 
  \begin{array}{l l}
    \mathit{invoker}(s) & \text{if } \mathit{tr}(s) \neq \epsilon\\
    u & \text{if } \mathit{tr}(s) = \epsilon  \wedge  u \in {\cal U} \wedge a \in {\cal A}_{D,u}\\
  \end{array} \right.
  \]

In the rules, we omit some details when dealing with integrity constraints.
For instance, when we refer to functional dependencies of the form $\forall \overline{x}, \overline{y}, \overline{y}', \overline{z}, \overline{z}'.\, (R(\overline{x}, \overline{y},  \overline{z}) \wedge R(\overline{x}, \overline{y}',  \overline{z}') )\Rightarrow \overline{y} = \overline{y}'$, we implicitly assume that $|\overline{y}| = |\overline{y}'|$ and $|\overline{z}| = |\overline{z}'|$.
Furthermore, when we refer to tuples in $R$, we use the notation $(\overline{v}, \overline{w}, \overline{q})$ and we implicitly assume that $|\overline{v}| = |\overline{x}|$, $|\overline{w}| = |\overline{y}|$, and $|\overline{q}| = |\overline{z}|$.
We make similar simplifications for the inclusion dependencies.

In our attacker model, we consider a very simple syntactic criterion for revising believes.
Intuitively, the attacker is able to propagate the knowledge of a sentence $\phi$ after (or before) an \texttt{INSERT} or a \texttt{DELETE} action on a table $R$ iff the predicate $R$ does not occur in  $\phi$. 
We formalize this using the function $\mathit{reviseBelief} : \mathit{traces}(L) \times \mathit{RC}_{\mathit{bool}} \times \mathit{traces}(L) \rightarrow \{\top,\bot\}$.
In Figure \ref{function:revise}, we give the definition for the function only for the inputs $r'$, $\phi$, $r$ such that $\phi \in \mathit{RC}_{\mathit{bool}}$ is a sentence and  $r=r' \concat \mathit{act} \concat s$, where $\mathit{act} \in {\cal A}_{D, {\cal U }} \cup {\cal TRIGGER}_{D}$ and $s \in \Omega_{M}$. 
If this is not the case, then $\mathit{reviseBelif}(r',\phi,r) = \bot$.
Note that the function $\mathit{tables}$ takes as input a formula $\phi$ and returns as output the set of all tables mentioned in $\phi'$, where $\phi'$ is the formula obtained from $\phi$ by replacing all views with their definitions.
We remark that, given a formula $\phi$, if $R \not\in \mathit{tables}(\phi)$, then the value of $\phi$ is independent on $R$, i.e., $R$ does not determine $\phi$.

In \thref{theorem:attacker:model:sound}, we prove that our attacker model is \emph{sound} with respect to the relational calculus semantics, i.e., every judgment $r,i \attMod \phi$ that holds in $\attackerModel$ is such that $\phi$ is satisfied in the $i$-th state of $r$.
We first introduce the concept of \emph{derivation}.
Given a judgment $r, i \attMod \phi$, a \emph{derivation of  $r, i \attMod \phi$ with respect to $\attackerModel$}, or \emph{a derivation of $r,i \attMod \phi$} for short, is a proof tree, obtained by applying the rules defining $\attackerModel$, that ends in $r, i \attMod \phi$.
With a slight abuse of notation, we use $r,i \attMod \phi$ to denote both the judgment and its derivation.
The length of a derivation, denoted $|r, i \attMod \phi|$, is the number of rule applications in it.
Note that a judgments $r,i \attMod \phi$ holds in $\attackerModel$ iff there is a derivation for it.

\begin{lemma}\thlabel{theorem:attacker:model:sound}
Let $P$ be an \accessControlConfiguration{}, $L$ be the $P$-LTS, $u$ be a user, $r \in \mathit{traces}(L)$ be an $L$ run, $\phi \in RC_{\mathit{bool}}$ be a sentence, and $1 \leq i \leq |r|$.
If $r,i \attMod \phi$ holds in $\attackerModel$, then $[\phi]^{\mathit{db}} = \top$, where $\mathit{last}(r^{i}) = \langle \mathit{db}, U, \mathit{sec}, T, V, c \rangle$.
\end{lemma}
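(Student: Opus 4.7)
The proof will proceed by induction on the length $|r, i \attMod \phi|$ of the derivation of the judgment $r, i \attMod \phi$ in $\attackerModel$. For the base and inductive cases we perform a case distinction on the last rule applied in the derivation, and for each rule we must show that, given the soundness of the premise judgments (by the induction hypothesis) and the operational semantics $\to_f$, the conclusion $[\phi]^{\mathit{db}} = \top$ holds, where $\mathit{db}$ is the database state of $\mathit{last}(r^i)$.

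The plan is to partition the rules into four groups, each requiring a different kind of argument. First, the \emph{direct observation} rules (\emph{\texttt{SELECT} Success}, \emph{\texttt{INSERT} Success}, \emph{\texttt{DELETE} Success}, and their analogues for exceptions or failures) are discharged by directly unfolding the corresponding $\to_f$ rule in \techReportAppendix{app:lts}: since the transition fires only when the database state, result field, and exception set of the target state have specific shapes, the formula $\phi$ must hold in $\mathit{db}$ by the definition of $[\cdot]^{\mathit{db}}$, e.g., for \emph{\texttt{DELETE} Success} the rules in Figure~\ref{table:rules:lts:3} guarantee $\overline{t} \notin \mathit{db}(R)$ in the post-state. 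Second, the \emph{backward propagation} rules (\emph{Propagate Backward \texttt{SELECT}} and its variants for security-denied or exception-raising actions) follow from the observation that the corresponding $\to_f$ rules leave the $\mathit{db}$ component unchanged; thus $[\phi]^{\mathit{db}_{i}} = [\phi]^{\mathit{db}_{i+1}}$ and the induction hypothesis transfers directly.

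Third, the \emph{forward propagation} rules (\emph{Propagate Forward Update Success} and related) rely on the correctness of $\mathit{revise}$: here I will prove a small auxiliary lemma stating that if $\mathit{revise}(r^{i-1}, \phi, r^i) = \top$ and the action labelling the transition from $r^{i-1}$ to $r^i$ is an \texttt{INSERT}/\texttt{DELETE} on a table $R$ (or a \texttt{GRANT}/\texttt{REVOKE} inside a trigger), then $[\phi]^{\mathit{db}_{i-1}} = [\phi]^{\mathit{db}_i}$. The argument is a straightforward structural induction on $\phi$ using the fact that $R \notin \mathit{tables}(\phi)$ forces every atomic subformula's interpretation to be insensitive to the change in $R$, and that policy-modifying actions do not touch $\mathit{db}$.

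The main obstacle will be the \emph{Learn \texttt{INSERT} Backward} rule (and its \texttt{DELETE} counterpart and the analogous \emph{Learn} rules for integrity-constraint violations and security exceptions), since these require \emph{backward} reasoning from an observed effect to the cause. Here I will use the induction hypothesis to get $[\neg \psi]^{\mathit{db}_i} = \top$ and $[\psi]^{\mathit{db}_{i+1}} = \top$, and then inspect the LTS rules in Figure~\ref{table:rules:lts:5a}: the only $\to_f$ rules that can change $\mathit{db}$ across a trigger transition are the \emph{Trigger \texttt{INSERT}/\texttt{DELETE} Success} rules, whose premises explicitly require $[\phi[\overline{x} \mapsto \overline{v}]]^{\mathit{db}_i} = \top$ where $\overline{v} = \mathit{tpl}(\mathit{last}(r^i))$. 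Since the label of the transition is $t$ with a concrete action stored in $t$'s definition, this premise is exactly the conclusion of the \emph{Learn} rule. An additional subtlety is that one must rule out the \emph{Trigger Disabled} and exception rules: they either leave $\mathit{db}$ unchanged (contradicting $[\neg\psi]^{\mathit{db}_i} \neq [\psi]^{\mathit{db}_{i+1}}$) or roll back to the saved partial state $\mathit{rS}$, and the structural invariants of the context together with the side conditions $\mathit{secEx}(s) = \bot$ and $\mathit{Ex}(s) = \emptyset$ in the \emph{Learn} rule exclude these. Finally, the rules that derive judgments purely by logical consequence (e.g., via $\models_{\mathit{fin}}$ or via $\mathit{replace}$ for view unfolding) follow from the completeness of first-order entailment over finite structures and the semantic equivalence of $\psi$ and $\mathit{replace}(\psi, o)$ established for views.
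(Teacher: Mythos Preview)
Your proposal is correct and follows essentially the same approach as the paper: induction on the derivation length, case analysis on the last rule applied, and for each rule exploiting the operational semantics to show either that the database component is unchanged across the relevant transition or that the rule's side conditions force the derived formula to hold. One small terminological slip: for the \emph{Reasoning} rule you need only the \emph{definition} of $\models_{\mathit{fin}}$ (if every $\phi\in\Phi$ holds in $\mathit{db}$ and $\Phi\models_{\mathit{fin}}\gamma$, then $\gamma$ holds in $\mathit{db}$), not any completeness result; and be sure your case analysis also covers the rollback rules (\emph{Rollback Forward/Backward}, \emph{Trigger Rollback \texttt{INSERT}/\texttt{DELETE}}), where the argument is not that the action leaves $\mathit{db}$ unchanged but that the LTS restores $\mathit{db}$ to the saved pre-transaction state.
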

\begin{proof}
Let $P$ be an \accessControlConfiguration{}, $L$ be the $P$-LTS, $u$ be a user, $r \in \mathit{traces}(L)$ be an $L$ run, $\phi \in RC_{\mathit{bool}}$ be a sentence, and $1 \leq i \leq |r|$.
Furthermore, let $r,i \attMod \phi$ be a judgment that holds, i.e., there is a derivation $d$ that ends on this judgment.
We prove our claim by induction on the length of $d$.

\smallskip
\noindent
{\bf Base Case: }
The base case is a derivation of length $1$.
Thus, there are a number of cases depending on the rule used to obtain $r,i \attMod \phi$.
\begin{compactenum}
\item \emph{\texttt{SELECT} Success - 1}.
Let $i$ be such that $r^{i} = r^{i-1} \concat \langle u, \mathtt{SELECT}, \phi \rangle \concat s$, where $s = \langle \mathit{db}, U, \mathit{sec}, T, V, c \rangle \in \Omega_{M}$ and $\mathit{last}(r^{i-1}) =  \langle \mathit{db}, U, \mathit{sec}, T, V, c' \rangle$.
From the rules, it follows that $\mathit{res}(s) = \top$.
From this and the LTS rules, it follows that $[\phi]^{\mathit{db}} = \top$.

\item \emph{\texttt{SELECT} Success - 2}.
The proof for this case is similar to that of \emph{\texttt{SELECT} Success - 1}.

\item \emph{\texttt{INSERT} Success}.
Let $i$ be such that $r^{i} = r^{i-1} \concat \langle u, \mathtt{INSERT}, \\ R, \overline{t} \rangle \concat s$ , where $s = \langle \mathit{db}, U, \mathit{sec}, T, V, c \rangle \in \Omega_{M}$ and $\mathit{last}(r^{i-1}) =  \langle \mathit{db}', U, \mathit{sec}, T, V, c' \rangle$, and $\phi$ be $R(\overline{t})$.
From the LTS rules, it follows that $\mathit{db} = \mathit{db}'[R \oplus \overline{t}]$.
From $\oplus$'s definition, it follows that $\overline{t} \in \mathit{db}(R)$.
Therefore, from the RC's semantics, it follows that $[R(\overline{t})]^{\mathit{db}} = \top$.
Since $\phi := R(\overline{t})$, it follows that $[\phi]^{\mathit{db}} = \top$.

\item \emph{\texttt{INSERT} Success - FD}.
Let $i$ be such that $r^{i} = r^{i-1} \concat \langle u, \mathtt{INSERT}, R, (\overline{v}, \overline{w}, \overline{q}) \rangle \concat s$, where $s = \langle \mathit{db}, U, \mathit{sec}, T, V, c \rangle \\ \in \Omega_{M}$ and $\mathit{last}(r^{i-1}) =  \langle \mathit{db}', U, \mathit{sec}, T, V, c' \rangle$, and $\phi$ be $\neg \exists \overline{y},\overline{z}.\, R(\overline{v}, \overline{y}, \overline{z}) \wedge \overline{y} \neq \overline{w}$.
We claim that $[\phi]^{\mathit{db}'}$ holds.
From this claim and the LTS semantics, it follows that there is no tuple $(\overline{v}', \overline{w}', \overline{q}')$ in $\mathit{db}'(R)$ such that $\overline{v}' = \overline{v}$ and $\overline{w}' \neq \overline{w}$.
There are two cases:
\begin{compactenum}
\item The \texttt{INSERT} command causes an integrity exception, i.e., $\mathit{Ex}(s) \neq \emptyset$.
From this and the LTS semantics, it follows that $\mathit{db} = \mathit{db}'$.
From this and $[\phi]^{\mathit{db}'}$ holds, it follows that also $[\phi]^{\mathit{db}}$ holds.

\item The \texttt{INSERT} command does not cause any integrity exception, i.e., $\mathit{Ex}(s) = \emptyset$.
From this, $[\phi]^{\mathit{db}'} = \top$, and $\mathit{db}(R) = \mathit{db}'(R) \cup \{(\overline{v}, \overline{w}, \overline{q})\}$, it follows that there is no tuple $(\overline{v}', \overline{w}', \overline{q}')$ in $\mathit{db}(R)$ such that $\overline{v}' = \overline{v}$ and $\overline{w}' \neq \overline{w}$.
From this, it follows that also $[\phi]^{\mathit{db}}$ holds.
\end{compactenum}

We now prove our claim that $[\phi]^{\mathit{db}'}$ holds.
Assume, for contradiction's sake, that this is not the case.
This means that there is a tuple $(\overline{v}', \overline{w}', \overline{q}')$ in $\mathit{db}'(R)$ such that $\overline{v}' = \overline{v}$ and $\overline{w}' \neq \overline{w}$.
Let $\mathit{db}''$ be the state $\mathit{db}'[R\oplus (\overline{v}, \overline{w}, \overline{q})]$.
From $\mathit{db}''=\mathit{db}'[R\oplus (\overline{v}, \overline{w}, \overline{q})]$, and the fact that there is a tuple $(\overline{v}', \overline{w}', \overline{q}')$ in $\mathit{db}'(R)$ such that $\overline{v}' = \overline{v}$ and $\overline{w}' \neq \overline{w}$,
it follows that there are two tuples $(\overline{v}, \overline{w}, \overline{q})$ and $(\overline{v}, \overline{w}', \overline{q}')$ in $\mathit{db}''(R)$ such that $\overline{w}' \neq \overline{w}$.
From this and the relational calculus semantics, it follows that $[\forall \overline{x}, \overline{y}, \overline{y}', \overline{z}, \overline{z}'.\,( (R(\overline{x}, \overline{y},  \overline{z}) \wedge R(\overline{x}, \overline{y}',  \overline{z}') )\Rightarrow \overline{y} = \overline{y}'  ]^{\mathit{db}''} = \bot$.
This contradicts the fact that $\forall \overline{x}, \overline{y}, \overline{y}', \\ \overline{z}, \overline{z}'.\,( (R(\overline{x}, \overline{y},  \overline{z}) \wedge R(\overline{x}, \overline{y}',  \overline{z}') )\Rightarrow \overline{y} = \overline{y}' $ is not in $\mathit{Ex}(s)$.

\item \emph{\texttt{INSERT} Success - ID}.
Let $i$ be such that $r^{i} = r^{i-1} \concat \langle u, \mathtt{INSERT}, R, (\overline{v}, \overline{w}) \rangle \concat s$, where $s = \langle \mathit{db}, U, \mathit{sec}, T, V, c \rangle \\ \in \Omega_{M}$ and $\mathit{last}(r^{i-1}) =  \langle \mathit{db}', U, \mathit{sec}, T, V, c' \rangle$, and $\phi$ be $\exists \overline{y}.\, S(\overline{v}, \overline{y})$.
We claim that $[\phi]^{\mathit{db}'}$ holds.
From this claim and the LTS semantics, it follows that there is a tuple $(\overline{v}', \overline{w}')$ in $\mathit{db}'(S)$ such that $\overline{v}' = \overline{v}$.
There are two cases:
\begin{compactenum}
\item The \texttt{INSERT} command causes an integrity exception, i.e., $\mathit{Ex}(s) \neq \emptyset$.
From this and the LTS semantics, it follows that $\mathit{db} = \mathit{db}'$.
From this and $[\phi]^{\mathit{db}'}$ holds, it follows that also $[\phi]^{\mathit{db}}$ holds.

\item The \texttt{INSERT} command does not cause any integrity exception, i.e., $\mathit{Ex}(s) = \emptyset$.
From this, $[\phi]^{\mathit{db}'} = \top$, and $\mathit{db}(S) = \mathit{db}'(S)$, it follows that there a tuple $(\overline{v}', \overline{w}')$ in $\mathit{db}(S)$ such that $\overline{v}' = \overline{v}$.
From this, it follows that also $[\phi]^{\mathit{db}}$ holds.
\end{compactenum}

We now prove our claim that $[\phi]^{\mathit{db}'}$ holds.
Assume, for contradiction's sake, that this is not the case.
This means that there is no tuple $(\overline{v}', \overline{w}')$ in $\mathit{db}'(S)$ such that $\overline{v}' = \overline{v}$.
Let $\mathit{db}''$ be the state $\mathit{db}'[R\oplus (\overline{v}, \overline{w})]$.
From $\mathit{db}''=\mathit{db}'[R\oplus (\overline{v}, \overline{w})]$, and the fact that there is no tuple $(\overline{v}', \overline{w}')$ in $\mathit{db}'(S)$ such that $\overline{v}' = \overline{v}$,
it follows that there is a tuple $(\overline{v}, \overline{w})$ in $\mathit{db}''(R)$ and there is no tuple $(\overline{v}', \overline{w}')$ in $\mathit{db}''(S)$ such that $\overline{v}' = \overline{v}$.
From this and the relational calculus semantics, it follows that $[\forall \overline{x}, \overline{z}.\,( R(\overline{x}, \overline{z}) \Rightarrow \exists \overline{w}.\, S(\overline{x}, \overline{w})) ]^{\mathit{db}''} = \bot$.
This contradicts the fact that $\forall \overline{x}, \overline{z}.\,( R(\overline{x}, \overline{z}) \Rightarrow \exists \overline{w}.\, S(\overline{x}, \overline{w})) $ is not in $\mathit{Ex}(s)$.

\item \emph{\texttt{DELETE} Success}.
The proof for this case is similar to that of \emph{\texttt{INSERT} Success}.

\item \emph{\texttt{DELETE} Success - ID}.
Let $i$ be such that $r^{i} = r^{i-1} \concat \langle u, \mathtt{DELETE}, R, (\overline{v}, \overline{w}) \rangle \concat s$, where $s = \langle \mathit{db}, U, \mathit{sec}, T, V, c \rangle \\ \in \Omega_{M}$ and $\mathit{last}(r^{i-1}) =  \langle \mathit{db}', U, \mathit{sec}, T, V, c' \rangle$, and $\phi$ be $\forall \overline{x}, \overline{z}.\,( S(\overline{x}, \overline{z}) \Rightarrow \overline{x} \neq \overline{v}) \vee \exists \overline{y}.\, (R(\overline{v}, \overline{y}) \wedge \overline{y} \neq \overline{w})$.
We claim that $[\phi]^{\mathit{db}}$ holds.
From this claim and the LTS semantics, it follows that there are two cases:
\begin{compactenum}
\item  all tuples $(\overline{x},\overline{y}) \in \mathit{db}(S)$ are such that $\overline{v} \neq \overline{x}$. 
There are two cases:
\begin{compactenum}
\item The \texttt{DELETE} command causes an integrity exception, i.e., $\mathit{Ex}(s) \neq \emptyset$.
From this and the LTS semantics, it follows that $\mathit{db} = \mathit{db}'$.
From this and $[\phi]^{\mathit{db}'}$ holds, it follows that also $[\phi]^{\mathit{db}}$ holds.

\item The \texttt{DELETE} command does not cause any integrity exception, i.e., $\mathit{Ex}(s) = \emptyset$.
From this, $[\phi]^{\mathit{db}'} = \top$, and $\mathit{db}(S) = \mathit{db}'(S)$, it follows that all tuples $(\overline{x},\overline{y}) \in \mathit{db}(S)$ are such that $\overline{v} \neq \overline{x}$.
Therefore, also $[\phi]^{\mathit{db}}$ holds.
\end{compactenum}

\item there is a tuple $(\overline{v},\overline{w}') \in \mathit{db}(R)$ such that $\overline{w} \neq \overline{w}'$. 
There are two cases:
\begin{compactenum}
\item The \texttt{DELETE} command causes an integrity exception, i.e., $\mathit{Ex}(s) \neq \emptyset$.
From this and the LTS semantics, it follows that $\mathit{db} = \mathit{db}'$.
From this and $[\phi]^{\mathit{db}'}$ holds, it follows that also $[\phi]^{\mathit{db}}$ holds.

\item The \texttt{DELETE} command does not cause any integrity exception, i.e., $\mathit{Ex}(s) = \emptyset$.
From this, $[\phi]^{\mathit{db}'} = \top$, and $\mathit{db}(R) = \mathit{db}'(R) \setminus \{(\overline{v}, \overline{w})\}$, it follows that there is a tuple $(\overline{v},\overline{w}') \in \mathit{db}(R)$ such that $\overline{w} \neq \overline{w}'$.
Therefore, also $[\phi]^{\mathit{db}}$ holds.
\end{compactenum}

\end{compactenum}

We now prove our claim that $[\phi]^{\mathit{db}'}$ holds.
Assume, for contradiction's sake, that this is not the case.
This means that there is a tuple $(\overline{v}, \overline{z})$ in $\mathit{db}'(S)$ and there is no tuple $(\overline{v},\overline{y}) \in \mathit{db}'(R)$ such that $\overline{y} \neq \overline{w}$.
Let $\mathit{db}''$ be the state $\mathit{db}'[R\ominus (\overline{v}, \overline{w})]$.
From $\mathit{db}''=\mathit{db}'[R\ominus (\overline{v}, \overline{w})]$, and the fact that there is a tuple $(\overline{v}, \overline{z})$ in $\mathit{db}'(S)$ and there is no tuple $(\overline{v},\overline{y}) \in \mathit{db}'(R)$ such that $\overline{y} \neq \overline{w}$,
it follows that there is a tuple $(\overline{v}, \overline{z})$ in $\mathit{db}''(S)$ and there is no tuple $(\overline{v},\overline{y}) \in \mathit{db}''(R)$ such that $\overline{y} \neq \overline{w}$.
From this and the relational calculus semantics, it follows that $[\forall \overline{x}, \overline{z}.\,( S(\overline{x}, \overline{z}) \Rightarrow \exists \overline{w}.\, R(\overline{x}, \overline{w}) ]^{\mathit{db}''} = \bot$.
This contradicts the fact that $\forall \overline{x}, \overline{z}.\,( S(\overline{x}, \overline{z}) \Rightarrow \exists \overline{w}.\, R(\overline{x}, \overline{w})$ is not in $\mathit{Ex}(s)$.

\item \emph{\texttt{INSERT} Exception}.
Let $i$ be such that $r^{i} = r^{i-1} \concat \langle u, \mathtt{INSER}, R, \overline{t} \rangle \concat s$, where $s = \langle \mathit{db}, U, \mathit{sec}, T, V, c \rangle  \in  \Omega_{M}$ and $\mathit{last}(r^{i-1}) =  \langle \mathit{db}', U, \mathit{sec}, T, V, c' \rangle$, and $\phi$ be $\neg R(\overline{t})$.
We claim that $[\neg R(\overline{t})]^{\mathit{db}'} = \top$ holds.
From the LTS semantics, it follows that $\mathit{db} = \mathit{db}'$.
Therefore, also $[\neg R(\overline{t})]^{\mathit{db}} = \top$  holds.

We now prove our claim.
Assume, for contradiction's sake, that $[\neg R(\overline{t})]^{\mathit{db}'} = \bot$.
Therefore, $\overline{t} \in \mathit{db}'(R)$.
From this and the definition of $\oplus$, it follows that $\mathit{db}' = \mathit{db}' [R \oplus \overline{t}]$.
From the rules, it follows that $\mathit{Ex}(s) \neq \emptyset$.
Therefore, from the LTS semantics, it follows that $\mathit{db}' [R \oplus \overline{t}] \not\in \Omega_{D}^{\Gamma}$.
From $\mathit{last}(r^{i-1}) =  \langle \mathit{db}', U, \mathit{sec}, T, V, c' \rangle$, it follows that $\mathit{db}'  \in \Omega_{D}^{\Gamma}$.
However, from $\mathit{db}' = \mathit{db}' [R \oplus \overline{t}]$ and $\mathit{db}'  \in \Omega_{D}^{\Gamma}$, it follows that $\mathit{db}' [R \oplus \overline{t}] \in \Omega_{D}^{\Gamma}$ leading to a contradiction.

\item \emph{\texttt{DELETE} Exception}. 
The proof for this case is similar to that of \emph{\texttt{INSERT} Exception}.

\item \emph{\texttt{INSERT} FD Exception}.
Let $i$ be such that $r^{i} = r^{i-1} \concat \langle u, \mathtt{INSERT}, R, (\overline{v}, \overline{w}, \overline{q}) \rangle \concat s$, where $s = \langle \mathit{db}, U, \mathit{sec}, T, V, c \rangle \\ \in \Omega_{M}$ and $\mathit{last}(r^{i-1}) =  \langle \mathit{db}', U, \mathit{sec}, T, V, c' \rangle$, and $\phi$ be $\exists \overline{y},\overline{z}.\, R(\overline{v}, \overline{y}, \overline{z}) \wedge \overline{y} \neq \overline{w}$.
We claim that $[\phi]^{\mathit{db}'}$ holds.
From this claim and the LTS semantics, it follows that there is a tuple $(\overline{v}, \overline{w}', \overline{q}')$ in $\mathit{db}'(R)$ such that $\overline{w}' \neq \overline{w}$.
From this and $\mathit{db} = \mathit{db}'$, it follows that there is a tuple $(\overline{v}, \overline{w}', \overline{q}')$ in $\mathit{db}(R)$ such that $\overline{w}' \neq \overline{w}$.
From this, it follows that also $[\phi]^{\mathit{db}}$ holds.

We now prove our claim that $[\phi]^{\mathit{db}'}$ holds.
Assume, for contradiction's sake, that this is not the case.
This means that there is no tuple $(\overline{v}', \overline{w}', \overline{q}')$ in $\mathit{db}'(R)$ such that $\overline{v}' = \overline{v}$ and $\overline{w}' \neq \overline{w}$.
Therefore, for all tuples $(\overline{v}', \overline{w}', \overline{q}')$ in $\mathit{db}'(R)$, if $\overline{v} = \overline{v}'$, then  $\overline{w}' = \overline{w}$.
From this and $\mathit{db}'[R \oplus (\overline{v}, \overline{w}, \overline{q})](R) = \mathit{db}'(R) \cup \{(\overline{v}, \overline{w}, \overline{q})\}$, it follows that for all tuples $(\overline{v}', \overline{w}', \overline{q}')$ in $\mathit{db}'[R \oplus (\overline{v}, \overline{w}, \overline{q})](R)$, if $\overline{v} = \overline{v}'$, then  $\overline{w}' = \overline{w}$.
Furthermore, from $\mathit{db}' \in \Omega_{D}^{\Gamma}$ , it follows that for all tuples $(\overline{v}', \overline{w}', \overline{q}')$ and $(\overline{v}', \overline{w}'', \overline{q}'')$ in $\mathit{db}(R)$  such that $\overline{v}' \neq \overline{v}$, then $\overline{w}' = \overline{w}$.
From this and $\mathit{db}[R \oplus (\overline{v}, \overline{w}, \overline{q})](R) = \mathit{db}'(R) \cup \{(\overline{v}, \overline{w}, \overline{q})\}$,  it follows that for all tuples $(\overline{v}', \overline{w}', \overline{q}')$ and $(\overline{v}', \overline{w}'', \overline{q}'')$ in $\mathit{db}'[R \oplus (\overline{v}, \overline{w}, \overline{q})](R)$  such that $\overline{v}' \neq \overline{v}$, then $\overline{w}' = \overline{w}$.
From these facts and the relational calculus semantics, it follows that $[\forall \overline{x}, \overline{y}, \overline{y}', \overline{z}, \overline{z}'.\,( (R(\overline{x}, \overline{y},  \overline{z}) \wedge R(\overline{x}, \overline{y}',  \overline{z}') )\Rightarrow \overline{y} = \overline{y}'  ]^{\mathit{db}'[R \oplus (\overline{v}, \overline{w}, \overline{q})]} = \top$.
This is in contradiction with the fact that the constraint $\forall \overline{x}, \overline{y}, \overline{y}', \overline{z}, \overline{z}'.\,( (R(\overline{x}, \overline{y},  \overline{z}) \wedge R(\overline{x}, \overline{y}',  \overline{z}') )\Rightarrow \overline{y} = \overline{y}'$ is in $\mathit{Ex}(\mathit{last}(r^{i}))$.

\item \emph{\texttt{INSERT} ID Exception}.
Let $i$ be such that $r^{i} = r^{i-1} \concat \langle u, \mathtt{INSERT}, R, (\overline{v}, \overline{w}) \rangle \concat s$, where $s = \langle \mathit{db}, U, \mathit{sec}, T, V, c \rangle \\ \in \Omega_{M}$ and $\mathit{last}(r^{i-1}) =  \langle \mathit{db}', U, \mathit{sec}, T, V, c' \rangle$, and $\phi$ be $\forall \overline{x},\overline{y}.\, S(\overline{x}, \overline{y}) \Rightarrow \overline{x} \neq \overline{v}$.
We claim that $[\phi]^{\mathit{db}'}$ holds.
From this claim and the LTS semantics, it follows that there is no tuple $(\overline{v}, \overline{w}')$ in $\mathit{db}'(S)$.
From this and $\mathit{db}(S) = \mathit{db}'(S)$, it follows that there no tuple $(\overline{v}, \overline{w}')$ in $\mathit{db}(S)$.
From this, it follows that also $[\phi]^{\mathit{db}}$ holds.

We now prove our claim that $[\phi]^{\mathit{db}'}$ holds.
Assume, for contradiction's sake, that this is not the case.
This means that there is a tuple $(\overline{v}, \overline{w}')$ in $\mathit{db}'(S)$, for some $\overline{w}'$.
From  $\mathit{db}' \in \Omega_{D}^{\Gamma}$, it follows that for all tuples $(\overline{x},\overline{z}) \in \mathit{db}'(R)$ such that $\overline{x} \neq \overline{v}$, there is a tuple $(\overline{x}, \overline{y}) \in \mathit{db}'(S)$.
From this, $(\overline{v}, \overline{w}')$ in $\mathit{db}'(S)$, $\mathit{db}'[R\oplus(\overline{v},\overline{w})](S) = \mathit{db}'(S)$, and $\mathit{db}'[R\oplus(\overline{v},\overline{w})](R) = \mathit{db}'(R)\cup\{(\overline{v},\overline{w})\}$, it follows that for all tuples $(\overline{x},\overline{z}) \in \mathit{db}'[R\oplus(\overline{v},\overline{w})](R)$, there is a tuple $(\overline{x}, \overline{y}) \in \mathit{db}'[R\oplus(\overline{v},\overline{w})](S)$.
From these facts and the relational calculus semantics, it follows that $[\forall \overline{x}, \overline{z}.\,( R(\overline{x}, \overline{z}) \Rightarrow \exists \overline{w}.\, S(\overline{x}, \overline{w})) ]^{\mathit{db}'[R\oplus(\overline{v},\overline{w})]} = \top$.
This is in contradiction with the fact that the constraint $\forall \overline{x}, \overline{z}.\,( R(\overline{x}, \overline{z})  \Rightarrow \exists \overline{w}.\, S(\overline{x}, \overline{w}))$ is in $\mathit{Ex}(\mathit{last}(r^{i}))$.

\item \emph{\texttt{DELETE} FD Exception}.
Let $i$ be such that $r^{i} = r^{i-1} \concat \langle u, \mathtt{DELETE}, R, (\overline{v}, \overline{w}) \rangle \concat s$, where $s = \langle \mathit{db}, U, \mathit{sec}, T, V, c \rangle \\ \in \Omega_{M}$ and $\mathit{last}(r^{i-1}) =  \langle \mathit{db}', U, \mathit{sec}, T, V, c' \rangle$, and $\phi$ be $\exists \overline{z}.\, S(\overline{v},\overline{z}) \wedge \forall \overline{y}.\,(R(\overline{v},\overline{y}) \Rightarrow \overline{y} = \overline{w} )$.
We claim that $[\phi]^{\mathit{db}'}$ holds.
From this claim and the LTS semantics, it follows that there is a tuple $(\overline{v},\overline{z})$ in $\mathit{db}'(S)$ and all tuples $(\overline{v},\overline{y}) \in \mathit{db}'(R)$ are such that $\overline{y} = \overline{w}$. 
From $(\overline{v},\overline{z})$ in $\mathit{db}'(S)$ and $\mathit{db}(S) = \mathit{db}'(S)$, it follows that $(\overline{v},\overline{z})$ in $\mathit{db}'(S)$.
From the fact that all tuples $(\overline{v},\overline{y}) \in \mathit{db}'(R)$ are such that $\overline{y} = \overline{w}$ and  $\mathit{db}(R) = \mathit{db}'(R) \}$, it follows that all tuples $(\overline{v},\overline{y}) \in \mathit{db}(R)$ are such that $\overline{y} = \overline{w}$. 
From $(\overline{v},\overline{z})$ in $\mathit{db}(S)$ and the fact that all tuples $(\overline{v},\overline{y}) \in \mathit{db}(R)$ are such that $\overline{y} = \overline{w}$, it follows that $[\phi]^{\mathit{db}}$ holds.

We now prove our claim that $[\phi]^{\mathit{db}'}$ holds.
Assume, for contradiction's sake, that this is not the case.
There are two cases:
\begin{compactenum}
\item  all tuples $(\overline{x},\overline{y}) \in \mathit{db}'(S)$ are such that $\overline{v} \neq \overline{x}$. 
Furthermore, from $\mathit{db}' \in \Omega_{D}^{\Gamma}$, it follows that for all tuples $(\overline{x},\overline{y}) \in \mathit{db}(S)$ such that $\overline{v} \neq \overline{x}$, there is a tuple $(\overline{x},\overline{z}) \in \mathit{db}(R)$.
From these facts,  $\mathit{db}'[R\ominus(\overline{v}, \overline{w})] (S) = \mathit{db}'(S)$, and $\mathit{db}'[R\ominus(\overline{v}, \overline{w})] (R) = \mathit{db}'(R) \\ \setminus \{(\overline{v}, \overline{w})\}$, it follows that for all tuples $(\overline{x},\overline{y}) \in \mathit{db}'[R\ominus(\overline{v}, \overline{w})](S)$, there is a tuple $(\overline{x},\overline{z}) \in \mathit{db}'[R\ominus(\overline{v}, \overline{w})](R)$.
From this  and the relational calculus semantics, it follows that \[[\forall \overline{x}, \overline{z}.\,( S(\overline{x}, \overline{z}) \Rightarrow \exists \overline{w}.\, R(\overline{x}, \overline{w}) ]^{\mathit{db}'[R\ominus(\overline{v}, \overline{w}))]}  = \top.\]
This contradicts the fact that the constraint $\forall \overline{x}, \overline{z}. \\ ( S(\overline{x}, \overline{z}) \Rightarrow \exists \overline{w}.\, R(\overline{x}, \overline{w}))$ is in $\mathit{Ex}(\mathit{last}(r^{i}))$.

\item there is a tuple $(\overline{v},\overline{w}') \in \mathit{db}'(R)$ such that $\overline{w} \neq \overline{w}'$. 
Furthermore, from $\mathit{db}' \in \Omega_{D}^{\Gamma}$, it follows that for all tuples $(\overline{x},\overline{y}) \in \mathit{db}'(S)$ such that $\overline{v} \neq \overline{x}$, there is a tuple $(\overline{x},\overline{z}) \in \mathit{db}'(R)$.
From these facts, $\mathit{db}'[R\ominus(\overline{v}, \overline{w})] (S) = \mathit{db}'(S)$, and $\mathit{db}'[R\ominus(\overline{v}, \overline{w})] (R) = \mathit{db}'(R) \\ \setminus \{(\overline{v}, \overline{w})\}$, it follows that for all tuples $(\overline{x},\overline{y}) \in \mathit{db}'[R\ominus(\overline{v}, \overline{w})](S)$, there is a tuple $(\overline{x},\overline{z}) \in \mathit{db}'[R\ominus(\overline{v}, \overline{w})](R)$.
From this and the relational calculus semantics, it follows that \[[\forall \overline{x}, \overline{z}.\,( S (\overline{x}, \overline{z}) \Rightarrow \exists \overline{w}.\, R(\overline{x}, \overline{w}) ]^{\mathit{db}'[R\ominus(\overline{v}, \overline{w}))]}  = \top.\]
This contradicts the fact that the constraint $\forall \overline{x}, \overline{z}.\\( S(\overline{x}, \overline{z}) \Rightarrow \exists \overline{w}.\, R(\overline{x}, \overline{w}))$ is in $\mathit{Ex}(\mathit{last}(r^{i}))$.
\end{compactenum}

\item \emph{Integrity Constraint}.
The proof of this case follows trivially from the fact that for any state $s = \langle \mathit{db}, U, \mathit{sec},  T, \\ V,  c \rangle  \in \Omega_{M}$ and any $\gamma \in \Gamma$, $[\gamma]^{\mathit{db}} = \top$ holds by definition.

\item \emph{Learn \texttt{GRANT}/\texttt{REVOKE} Backward}. 
Let $i$ be such that $r^{i} = r^{i-1} \concat t \concat s$, where $s = \langle \mathit{db}, U, \mathit{sec}, T, V, c \rangle  \in \Omega_{M}$, $\mathit{last}(r^{i-1}) =  \langle \mathit{db}, U, \mathit{sec}', T, V, c' \rangle$, and $t$ be a trigger whose \texttt{WHEN} condition is $\phi$ and whose action is either a \texttt{GRANT} or a \texttt{REVOKE}.
From the rule's definition, it follows  $\mathit{sec} \neq \mathit{sec}'$.
We now prove that $[\phi]^{\mathit{db}} = \top$.
Assume, for contradiction's sake, that $[\phi]^{\mathit{db}} = \bot$.
From this and the LTS rules for the triggers, it follows that the trigger $t$ is disabled.
Therefore, according to the \emph{Trigger Disabled} rule, $\mathit{sec} =\mathit{sec}'$, which leads to a contradiction.

\item \emph{Trigger \texttt{GRANT} Disabled Backward}. 
Let $i$ be such that $r^{i} = r^{i-1} \concat t \concat s$, where $s = \langle \mathit{db}, U, \mathit{sec}, T, V, c \rangle  \in \Omega_{M}$, $\mathit{last}(r^{i-1}) =  \langle \mathit{db}, U, \mathit{sec}', T, V, c' \rangle$, and $t$ be a trigger whose \texttt{WHEN} condition is $\psi$, and $\phi$ be $\neg \psi$.
Furthermore, let $g \in \Omega^{\mathit{sec}}_{{\cal U}, D}$ be the \texttt{GRANT} added by the trigger.
From the rule's definition, it follows  $g \not\in \mathit{sec}'$.
We now prove that $[\phi]^{\mathit{db}} = \top$.
Assume, for contradiction's sake, that $[\phi]^{\mathit{db}} = \perp$.
This would imply that the trigger $t$ is enabled.
There are two cases:
\begin{compactenum}
\item $t$'s execution is authorized. Therefore, $g \in \mathit{sec}'$, which contradicts $g \not\in \mathit{sec}'$.

\item  $t$'s execution is not authorized. This contradicts $\mathit{secEx}(s) = \bot$.
\end{compactenum}

\item \emph{Trigger \texttt{REVOKE} Disabled Backward}. 
The proof for this case is similar to that of \emph{Trigger \texttt{GRANT} Disabled Backward}.

\item \emph{Trigger \texttt{INSERT} FD Exception}.
The proof for this case is similar to that of \emph{\texttt{INSERT} FD Exception}.

\item \emph{Trigger \texttt{INSERT} ID Exception}.
The proof for this case is similar to that of \emph{\texttt{INSERT} ID Exception}.

\item \emph{Trigger \texttt{DELETE} ID Exception}.
The proof for this case is similar to that of \emph{\texttt{DELETE} ID Exception}.

\item \emph{Trigger Exception}.
Let $i$ be such that $r^{i} = r^{i-1} \concat t \concat s$, where $s = \langle \mathit{db}, U, \mathit{sec}, T, V, c \rangle  \in \Omega_{M}$, $\mathit{last}(r^{i-1}) =  \langle \mathit{db}, U, \mathit{sec}', T, V, c' \rangle$, and $t$ be a trigger whose \texttt{WHEN} condition is $\phi$ and whose action is $\mathit{act}$.
From the rule's definition, it follows that $t$ is enabled and that the evaluation of the \texttt{WHEN} condition is authorized.
From this and the LTS's rules, it follows that $[\phi]^{\mathit{db}} = \top$. 

\item \emph{Trigger \texttt{INSERT} Exception}.
The proof for this case is similar to that of \emph{\texttt{INSERT} Exception}.

\item \emph{Trigger \texttt{DELETE} Exception}.
The proof for this case is similar to that of \emph{\texttt{DELETE} Exception}.

\item \emph{Trigger Rollback \texttt{INSERT}}.
Let $i$ be such that $r^{i} = r^{i-n-1}  \concat \langle u, \mathtt{INSERT}, R, \overline{t}\rangle \concat s_{1} \concat t_{1} \concat s_{2} \concat \ldots \concat t_{n} \concat s_{n}$, where $s_{1}, s_{2}, \ldots, s_{n} \\ \in \Omega_{M}$ and $t_{1}, \ldots, t_{n} \in {\cal TRIGGER}_{D}$, and $\phi$ be $\neg R(\overline{t})$.
Furthermore, let  $\mathit{last}(r^{i-n-1}) = \langle \mathit{db}', U', \mathit{sec}', T', V', c' \rangle$ and $s_n$ be $\langle \mathit{db}, U, \mathit{sec}, T, V, c \rangle$. 
Assume, for contradiction's sake, that $[\phi]^{\mathit{db}} = \bot$.
Therefore, $\overline{t} \in \mathit{db}(R)$.
From the LTS rules, it follows that $\mathit{db}' = \mathit{db}$.
From this and $\overline{t} \in \mathit{db}(R)$, it follows $\overline{t} \in \mathit{db}'(R)$.
From $r$'s definition and the LTS rule \emph{\texttt{INSERT} Success - 2}, it follows that $\overline{t} \not\in \mathit{db}'(R)$, which leads to a contradiction.

\item \emph{Trigger Rollback \texttt{DELETE}}.
The proof for this case is similar to that of \emph{Trigger Rollback \texttt{INSERT}}.
\end{compactenum}
This completes the proof of the base step.

\smallskip
\noindent
{\bf Induction Step: }
Assume that the claim hold for any derivation of $r, j \attMod \psi$ such that $|r, j \attMod \psi| < |r,i \attMod \phi|$.
We now prove that the claim also holds for $r,i \attMod \phi$.
There are a number of cases depending on the rule used to obtain $r,i \attMod \phi$.
\begin{compactenum}
\item \emph{View}.
The proof of this case follows trivially from the semantics of the relational calculus extended over views.

\item \emph{Propagate Forward \texttt{SELECT}}.
Let $i$ be such that $r^{i+1} = r^{i} \concat \langle u, \mathtt{SELECT}, \psi \rangle \concat s$, where $s = \langle \mathit{db}, U, \mathit{sec}, T, V, c \rangle  \in \Omega_{M}$ and $\mathit{last}(r^{i}) =  \langle \mathit{db}', U', \mathit{sec}', T', V', c' \rangle$.
From the rule's definition, $r, i \attMod \phi$ holds.
From this, the induction hypothesis, and $\mathit{last}(r^{i}) =  \langle \mathit{db}', U', \mathit{sec}', T', V', c' \rangle$, it follows that $[\phi]^{\mathit{db}'} = \top$.
From the LTS semantics, it follows that $\mathit{db} = \mathit{db}'$.
From this and $[\phi]^{\mathit{db}'} = \top$, it follows that $[\phi]^{\mathit{db}} = \top$.

\item \emph{Propagate Forward \texttt{GRANT/REVOKE}}.
The proof for this case is similar to that of \emph{Propagate Forward \texttt{SELECT}}.

\item \emph{Propagate Forward \texttt{CREATE}}.
The proof for this case is similar to that of \emph{Propagate Forward \texttt{SELECT}}.

\item \emph{Propagate Backward \texttt{SELECT}}.
Let $i$ be such that $r^{i+1} = r^{i} \concat \langle u, \mathtt{SELECT}, \psi \rangle \concat s$, where $s = \langle \mathit{db}', U', \mathit{sec}', T', V', c' \rangle  \\ \in \Omega_{M}$ and $\mathit{last}(r^{i}) =  \langle \mathit{db}, U, \mathit{sec}, T, V, c \rangle$.
From the rule's definition, $r, i+1 \attMod \phi$ holds.
From this, the induction hypothesis, $r^{i+1} = r^{i} \concat \langle u, \mathtt{SELECT}, \psi \rangle \concat s$, and $s = \langle \mathit{db}, U, \mathit{sec}, T, V, c \rangle$, it follows that $[\phi]^{\mathit{db}'} = \top$.
From the LTS semantics, it follows that $\mathit{db} = \mathit{db}'$.
From this and $[\phi]^{\mathit{db}'} = \top$, it follows that $[\phi]^{\mathit{db}} = \top$.

\item \emph{Propagate Backward \texttt{GRANT/REVOKE}}.
The proof for this case is similar to that of \emph{Propagate Backward \texttt{SELECT}}.

\item \emph{Propagate Backward \texttt{CREATE TRIGGER}}.
The proof for this case is similar to that of \emph{Propagate Backward \texttt{SELECT}}.

\item \emph{Propagate Backward \texttt{CREATE VIEW}}.
Let $i$ be such that $r^{i+1} = r^{i} \concat \langle u, \mathtt{CREATE}, o \rangle \concat s$, where $s = \langle \mathit{db}', U', \mathit{sec}', T', \\ V', c' \rangle  \in \Omega_{M}$ and $\mathit{last}(r^{i}) =  \langle \mathit{db}, U, \mathit{sec}, T, V, c \rangle$.
From the rule's definition, $r, i+1 \attMod \phi'$ holds.
From this, the induction hypothesis, $r^{i+1} = r^{i} \concat \langle u, \mathtt{SELECT}, \psi \rangle \concat s$, and $s = \langle \mathit{db}, U, \mathit{sec}, T, V, c \rangle$, it follows that $[\phi']^{\mathit{db}'} = \top$.
From the definition of $\mathit{replace}$, it follows that $\mathit{replace}(\phi', \\ o)$ and $\phi'$ are semantically equivalent.
From this and  $[\phi']^{\mathit{db}'} = \top$,  $[\mathit{replace}(\phi',o)]^{\mathit{db}'} = \top$.
From the LTS semantics, it follows that $\mathit{db} = \mathit{db}'$.
From this and $[\mathit{replace}(\phi',o)]^{\mathit{db}'} = \top$, it follows that $[\mathit{replace}(\phi',o)]^{\mathit{db}} \\ = \top$.

\item \emph{Rollback Backward - 1}.
Let $i$ be such that $r^{i} = r^{i-n-1}  \concat \langle u, \mathit{op}, R, \overline{t}\rangle \concat s_{1} \concat t_{1} \concat s_{2} \concat \ldots \concat t_{n} \concat s_{n}$, where  $s_{1}, s_{2}, \ldots, s_{n} \in \Omega_{M}$, $t_{1}, \ldots, t_{n} \in {\cal TRIGGER}_{D}$, and  $\mathit{op}$ is one of  $\{\mathtt{INSERT}, \\ \texttt{DELETE}\}$.
Furthermore, let $s_{n}$ be $\langle \mathit{db}', U', \mathit{sec}', T',  V', c' \rangle$ and $\mathit{last}(r^{i-n-1})$ be $\langle \mathit{db}, U, \mathit{sec}, T, V, c \rangle$.
From the rule's definition, $r, i \attMod \phi$ holds.
From this, the induction hypothesis, and $s_{n} = \langle \mathit{db}, U, \mathit{sec}, T, V, c \rangle  \in \Omega_{M}$, it follows that $[\phi]^{\mathit{db}'} = \top$.
From the LTS semantics, it follows that $\mathit{db} = \mathit{db}'$ (because a roll-back happened).
From this and $[\phi]^{\mathit{db}'} = \top$, it follows that $[\phi]^{\mathit{db}} = \top$.

\item \emph{Rollback Backward - 2}.
Let $i$ be such that $r^{i} = r^{i-1} \concat \langle u, op, R, \overline{t} \rangle \concat s$, where $s = \langle \mathit{db}', U', \mathit{sec}', T', V', c' \rangle  \in \Omega_{M}$, $\mathit{last}(r^{i-1}) =  \langle \mathit{db}, U,  \mathit{sec}, T, V, c \rangle$, and $\mathit{op}$ is one of $\{\mathtt{INSERT}, \mathtt{DELETE}\}$.
From the rule's definition, $r, i \attMod \phi$ holds.
From this, the induction hypothesis, $r^{i} = r^{i-1} \concat \langle u, op, R,  \overline{t} \rangle \concat s$, and $s = \langle \mathit{db}', U', \mathit{sec}', T', V', c' \rangle$, it follows that $[\phi]^{\mathit{db}'} = \top$.
From the LTS semantics, it follows that $\mathit{db} = \mathit{db}'$ (because a roll-back happened).
From this and $[\phi]^{\mathit{db}'} = \top$, it follows that $[\phi]^{\mathit{db}} = \top$.

\item \emph{Rollback Forward - 1}.
Let $i$ be such that $r^{i} = r^{i-n-1}  \concat \langle u, \mathit{op}, R, \overline{t}\rangle \concat s_{1} \concat t_{1} \concat s_{2} \concat \ldots \concat t_{n} \concat s_{n}$, where $s_{1}, s_{2}, \ldots, s_{n} \in \Omega_{M}$, $t_{1}, \ldots, t_{n} \in {\cal TRIGGER}_{D}$, and $\mathit{op}$ is one of $\{\mathtt{INSERT}, \\ \mathtt{DELETE}\}$.
Furthermore, let $s_{n}$ be $\langle \mathit{db}, U, \mathit{sec}, T, V, c \rangle $ and $\mathit{last}(r^{i-n-1})$ be $\langle \mathit{db}', U', \mathit{sec}', T', V', c' \rangle$.
From the rule's definition, $r, i -n -1 \attMod \phi$ holds.
From this, the induction hypothesis, and $\mathit{last}(r^{i-n-1}) = \langle \mathit{db}', U', \mathit{sec}',\\ T', V', c' \rangle$., it follows that $[\phi]^{\mathit{db}'} = \top$.
From the LTS semantics, it follows that $\mathit{db} = \mathit{db}'$ (because a roll-back happened).
From this and $[\phi]^{\mathit{db}'} = \top$, it follows that $[\phi]^{\mathit{db}} = \top$.

\item \emph{Rollback Forward - 2}.
Let $i$ be such that $r^{i} = r^{i-1} \concat \langle u, op, R, \overline{t} \rangle \concat s$, where $\mathit{op} \in \{\mathtt{INSERT}, \mathtt{DELETE}\}$, $s = \langle \mathit{db}, U, \\ \mathit{sec}, T, V, c \rangle  \in \Omega_{M}$ and $\mathit{last}(r^{i-1}) =  \langle \mathit{db}', U', \mathit{sec}',  T', V', \\ c' \rangle$.
From the rule's definition, $r, i-1 \attMod \phi$ holds.
From this, the induction hypothesis, and $\mathit{last}(r^{i-1})  =  \langle \mathit{db}', U', \mathit{sec}',  T', V', c' \rangle$, it follows that $[\phi]^{\mathit{db}'} = \top$.
From the LTS semantics, it follows that $\mathit{db} = \mathit{db}'$ (because a roll-back happened).
From this and $[\phi]^{\mathit{db}'} = \top$, it follows that $[\phi]^{\mathit{db}} = \top$.

\item \emph{Propagate Forward \texttt{INSERT/DELETE} Success}.
Let $i$ be such that $r^{i} = r^{i-1} \concat \langle u, op, R, \overline{t} \rangle \concat s$, where $\mathit{op} \in \{\mathtt{INSERT}, \\ \mathtt{DELETE}\}$, $s = \langle \mathit{db}, U, \mathit{sec}, T, V, c \rangle  \in \Omega_{M}$ and $\mathit{last}(r^{i-1}) =  \langle \mathit{db}', U', \mathit{sec}',  T', V', c' \rangle$.
From the rule's definition, $r, i-1 \attMod \phi$ holds.
From this, the induction hypothesis, and $\mathit{last}(r^{i-1})  =  \langle \mathit{db}', U', \mathit{sec}',  T', V', c' \rangle$, it follows that $[\phi]^{\mathit{db}'} = \top$.
From $\mathit{reviseBelief}(r^{i-1}, \phi, r^{i}) = \top$, it follows that $R$ does not occur in $\phi$.
From the LTS semantics, it follows that $\mathit{db}(R') = \mathit{db}'(R')$ for all $R' \neq R$.
From this and the fact that $R$ does not occur in $\phi$, it follows that $[\phi]^{\mathit{db}} = \top$.

\item \emph{Propagate Forward \texttt{INSERT} Success - 1}.
Let $i$ be such that $r^{i} = r^{i-1} \concat \langle u, op, R, \overline{t} \rangle \concat s$, where $\mathit{op}$ is one of $\{\mathtt{INSERT},  \mathtt{DELETE}\}$, $s = \langle \mathit{db}, U, \mathit{sec}, T, V, c \rangle  \in \Omega_{M}$ and $\mathit{last}(r^{i-1}) =  \langle \mathit{db}', U', \mathit{sec}',  T', V', c' \rangle$.
From the rule's definition, $r, i-1 \attMod \phi$ and $r, i-1 \attMod R(\overline{t})$ hold.
From this, the induction hypothesis, and $\mathit{last}(r^{i-1})  =  \langle \mathit{db}', U', \mathit{sec}',  T', V', c' \rangle$, it follows that $[\phi]^{\mathit{db}'} = \top$ and $[R(\overline{t})]^{\mathit{db}'} = \top$.
From $[R(\overline{t})]^{\mathit{db}'} = \top$ and the relational calculus' semantics, it follows that $\overline{t} \in \mathit{db}'(R)$.
From the LTS semantics, $\mathit{db} = \mathit{db}' [R \oplus \overline{t}]$.
From this, it follows that  $\mathit{db}(R') = \mathit{db}'(R')$ for all $R' \neq R$ and $\mathit{db}(R) = \mathit{db}'(R) \cup \{\overline{t}\}$.
From this and $\overline{t} \in \mathit{db}'(R)$, it follows that $\mathit{db}(R) = \mathit{db}'(R)$.
Therefore, $\mathit{db} = \mathit{db}'$.
From this and $[\phi]^{\mathit{db}'} = \top$, it follows that $[\phi]^{\mathit{db}} = \top$.

\item \emph{Propagate Forward \texttt{DELETE} Success - 1}.
The proof for this case is similar to that of \emph{Propagate Forward \texttt{INSERT} Success - 1}.

\item \emph{Propagate Backward \texttt{INSERT/DELETE} Success}.
The proof for this case is similar to that of \emph{Propagate Forward \texttt{INSERT/DELETE} Success}.

\item \emph{Propagate Backward \texttt{INSERT} Success - 1}.
The proof for this case is similar to that of \emph{Propagate Forward \texttt{INSERT} Success - 1}.

\item \emph{Propagate Backward \texttt{DELETE} Success - 1}.
The proof for this case is similar to that of \emph{Propagate Forward \texttt{DELETE} Success - 1}.

\item \emph{Reasoning}.
Let $\Phi$ be a subset of $\{ \phi \,| \, r,i \attMod \phi\}$ and $\mathit{last}(r^{i}) =  \langle \mathit{db}, U, \mathit{sec},  T, V, c \rangle$.
From the induction hypothesis, it follows that $[\phi]^{\mathit{db}} = \top$ for any $\phi \in \Phi$.
From the rule's definition, it follows that $\Phi \models_{\mathit{fin}} \gamma$.
From this and  $[\phi]^{\mathit{db}} = \top$ for any $\phi \in \Phi$, it follows that $[\gamma]^{\mathit{db}} = \top$.

\item \emph{Learn \texttt{INSERT} Backward - 3}.
Let $i$ be such that $r^{i} = r^{i-1} \concat \langle u, \mathtt{INSERT}, R, \overline{t} \rangle \concat s$, where $s = \langle \mathit{db}', U', \mathit{sec}', T', V', \\ c' \rangle  \in \Omega_{M}$ and $\mathit{last}(r^{i-1}) =  \langle \mathit{db}, U, \mathit{sec},  T, V, c \rangle$, and $\phi$ be $\neg R(\overline{t})$.
We prove that $[\neg R(\overline{t})]^{\mathit{db}} = \top$.
Assume, for contradiction's sake, that $[\neg R(\overline{t})]^{\mathit{db}} = \bot$.
From this and the relational calculus semantics, it follows that $\overline{t} \in \mathit{db}(R)$.
From this and the LTS semantics, it follows that $\mathit{db} = \mathit{db}'$ because $\mathit{db}' = \mathit{db}[R\oplus\overline{t}]$.
However, from the rule's definition, there is a $\psi$ such that $r, i-1 \attMod \psi$ and $r, i \attMod \neg \psi$ hold.
From this, the induction hypothesis, $s = \langle \mathit{db}', U', \mathit{sec}', T', V', c' \rangle$, and $\mathit{last}(r^{i-1})  =  \langle \mathit{db}, U, \mathit{sec},  T, V, c \rangle$, it follows that $[\psi]^{\mathit{db}} = \top$ and $[\neg \psi]^{\mathit{db}'} = \top$.
Therefore, $[\psi]^{\mathit{db}} = \top$ and $[\psi]^{\mathit{db}'} = \bot$.
Hence, $\mathit{db} \neq \mathit{db}'$ leading to a contradiction with $\mathit{db} = \mathit{db}'$.

\item \emph{Learn \texttt{DELETE} Backward - 3}.
The proof for this case is similar to that of \emph{Learn \texttt{INSERT} Backward - 3}.

\item \emph{Propagate Forward Disabled Trigger}.
Let $i$ be such that $r^{i} = r^{i-1} \concat t \concat s$, where $s = \langle \mathit{db}, U, \mathit{sec}, T, V, c \rangle  \in \Omega_{M}$, $\mathit{last}(r^{i-1}) =  \langle \mathit{db}, U, \mathit{sec},  T, V, c \rangle$, and $t$ be a trigger.
Furthermore, let $\psi$ be $t$'s condition where all free variables are replaced with the values in $\mathit{tpl}(\mathit{last}(r^{i-1}))$.
From the rule's definition, it follows that $r, i-1 \attMod \neg\psi$ holds.
From this and the induction hypothesis, it follows that $[\psi]^{\mathit{db}'} = \bot$.
From this, the fact that $\psi$ is $t$'s \texttt{WHEN} condition, and the rule \emph{Trigger Disabled}, it follows that $\mathit{db} = \mathit{db}'$.
From the rule's definition, it follows that $r, i-1 \attMod \phi$ holds.
From this, the induction hypothesis, and $\mathit{last}(r^{i-1}) =  \langle \mathit{db}, U, \mathit{sec},  T, V, c \rangle$, it follows that $[\phi]^{\mathit{db}'} = \top$.
From this and $\mathit{db} = \mathit{db}'$, it follows that $[\phi]^{\mathit{db}} = \top$.

\item \emph{Propagate Backward Disabled Trigger}.
The proof for this case is similar to that of  \emph{Propagate Forward Disabled Trigger}.

\item \emph{Learn \texttt{INSERT} Forward}.
Let $i$ be such that $r^{i} = r^{i-1} \concat t \concat s$, where $s = \langle \mathit{db}, U, \mathit{sec}, T, V, c \rangle  \in \Omega_{M}$, $\mathit{last}(r^{i-1}) =  \langle \mathit{db}, U, \mathit{sec},  T, V, c \rangle$, and $t$ be a trigger, and $\phi$ be $R(\overline{t})$.
Furthermore, let $\psi$ be $t$'s condition where all free variables are replaced with the values in $\mathit{tpl}(\mathit{last}(r^{i-1}))$.
From the rule's definition, it follows that $r, i-1 \attMod \psi$ holds.
From this and the induction hypothesis, it follows that $[\psi]^{\mathit{db}'} = \bot$.
Furthermore, from the rule's definition, it follows that $\mathit{secEx}(s) = \bot$ and $\mathit{Ex}(s) = \emptyset$.
From this, the fact that $\psi$ is $t$'s \texttt{WHEN} condition, $[\psi]^{\mathit{db}'} = \bot$, and the rule \emph{Trigger \texttt{DELETE-INSERT} Success}, it follows that $\mathit{db} = \mathit{db}'[R \oplus \overline{t}]$.
From the definition of $\oplus$, it follows that $\overline{t} \in \mathit{db}(R)$.
From this and the relational calculus semantics, it follows that $[\phi]^{\mathit{db}} = \top$.

\item \emph{Learn \texttt{INSERT} - FD}.
Let $i$ be such that $r^{i} = r^{i-1} \concat t \concat s$, where $s = \langle \mathit{db}, U, \mathit{sec}, T, V, c \rangle \in \Omega_{M}$, $\mathit{last}(r^{i-1}) =  \langle \mathit{db}', U', \mathit{sec}', T', V', c' \rangle$, and $t \in {\cal TRIGGER}_{D}$, and $\phi$ be $\neg \exists \overline{y},\overline{z}.\, R(\overline{v}, \overline{y}, \overline{z}) \wedge \overline{y} \neq \overline{w}$.
Furthermore,  let $\psi$ be $t$'s condition where all free variables are replaced with the values in $\mathit{tpl}(\mathit{last}(r^{i-1}))$ and $\langle u', \mathtt{INSERT}, R, (\overline{v}, \overline{w}, \overline{q}) \rangle$ be $t$'s actual action.
We claim that $\mathit{db}(R) = \mathit{db}'(R) \cup \{(\overline{v}, \overline{w}, \overline{q})\}$.
Furthermore, we claim that $[\phi]^{\mathit{db}}$ holds.
From this claim and the relational calculus semantics, it follows that there is no tuple $(\overline{v}', \overline{w}', \overline{q}')$ in $\mathit{db}(R)$ such that $\overline{v}' = \overline{v}$ and $\overline{w}' \neq \overline{w}$.
From this and $\mathit{db}(R) = \mathit{db}'(R) \cup \{(\overline{v}, \overline{w}, \overline{q})\}$, it follows that there is no tuple $(\overline{v}', \overline{w}', \overline{q}')$ in $\mathit{db}'(R)$ such that $\overline{v}' = \overline{v}$ and $\overline{w}' \neq \overline{w}$.
From this, it follows that also $[\phi]^{\mathit{db}'}$ holds.

We now prove our claim that $\mathit{db}(R) = \mathit{db}'(R) \cup \{(\overline{v}, \overline{w}, \overline{q})\}$.
Assume, for contradiction's sake, that this is not the case.
Since $\mathit{db}$ is obtained from $\mathit{db}'$, this would imply that the trigger $t$ is disabled.
Hence, this would imply that $[\psi]^{\mathit{db}'} = \bot$.
From the rule's definition, $r, i-1 \attMod \psi$.
From this, the induction's hypothesis, and $\mathit{last}(r^{i-1}) =  \langle \mathit{db}', U, \mathit{sec}, T, V, c' \rangle$, it follows that $[\psi]^{\mathit{db}'} = \top$, which contradicts $[\psi]^{\mathit{db}'} = \bot$.

We now prove our claim that $[\phi]^{\mathit{db}}$ holds.
Assume, for contradiction's sake, that this is not the case.
This means that there is a tuple $(\overline{v}', \overline{w}', \overline{q}')$ in $\mathit{db}(R)$ such that $\overline{v}' = \overline{v}$ and $\overline{w}' \neq \overline{w}$.
Note that, as we proved before, $(\overline{v}, \overline{w}, \overline{q}) \in \mathit{db}(R)$.
Therefore, there are two tuples $(\overline{v}, \overline{w}, \overline{q})$ and $(\overline{v}, \overline{w}', \overline{q}')$ in $\mathit{db}(R)$ such that $\overline{w}' \neq \overline{w}$.
From this and the relational calculus semantics, it follows that $[\forall \overline{x}, \overline{y}, \overline{y}', \overline{z}, \overline{z}'.\,( (R(\overline{x}, \overline{y},  \overline{z}) \wedge R(\overline{x}, \overline{y}',  \overline{z}') )\Rightarrow \overline{y} = \overline{y}'  ]^{\mathit{db}} = \bot$.
This is in contradiction with the fact that the constraint $\forall \overline{x}, \overline{y}, \overline{y}', \overline{z}, \overline{z}'.\,( (R(\overline{x}, \overline{y},  \overline{z}) \wedge R(\overline{x}, \overline{y}', \\ \overline{z}') )\Rightarrow \overline{y} = \overline{y}'$ is in $\Gamma$.
Indeed, since the constraint is in $\Gamma$, any state in $\Omega_{D}^{\Gamma}$ must satisfy it.

\item \emph{Learn \texttt{INSERT} - FD - 1}.
Let $i$ be such that $r^{i} = r^{i-1} \concat t \concat s$, where $s = \langle \mathit{db}, U, \mathit{sec}, T, V, c \rangle \in \Omega_{M}$, $\mathit{last}(r^{i-1}) =  \langle \mathit{db}', U', \mathit{sec}', T', V', c' \rangle$, and $t \in {\cal TRIGGER}_{D}$, and $\phi$ be $\neg \exists \overline{y},\overline{z}.\, R(\overline{v}, \overline{y}, \overline{z}) \wedge \overline{y} \neq \overline{w}$.
Furthermore,  let $\psi$ be $t$'s condition where all free variables are replaced with the values in $\mathit{tpl}(\mathit{last}(r^{i-1}))$ and $\langle u', \mathtt{INSERT}, R, (\overline{v}, \overline{w}, \overline{q}) \rangle$ be $t$'s actual action.
From the rule's definition, $r, i-1 \attMod \psi$.
From this, the induction's hypothesis, and $\mathit{last}(r^{i-1}) =  \langle \mathit{db}', U, \mathit{sec}, T, V, c' \rangle$, it follows that $[\psi]^{\mathit{db}'} = \top$.
From this and the LTS semantics, it follows that the trigger $t$ is enabled in $\mathit{last}(r^{i-1})$.
We now prove our claim that $[\phi]^{\mathit{db}'}$ holds.
Assume, for contradiction's sake, that this is not the case.
This means that there is a tuple $(\overline{v}', \overline{w}', \overline{q}')$ in $\mathit{db}'(R)$ such that $\overline{v}' = \overline{v}$ and $\overline{w}' \neq \overline{w}$.
Let $\mathit{db}''$ be the state $\mathit{db}'[R\oplus (\overline{v}, \overline{w}, \overline{q})]$.
From $\mathit{db}''=\mathit{db}'[R\oplus (\overline{v}, \overline{w}, \overline{q})]$, and the fact that there is a tuple $(\overline{v}', \overline{w}', \overline{q}')$ in $\mathit{db}'(R)$ such that $\overline{v}' = \overline{v}$ and $\overline{w}' \neq \overline{w}$,
it follows that there are two tuples $(\overline{v}, \overline{w}, \overline{q})$ and $(\overline{v}, \overline{w}', \overline{q}')$ in $\mathit{db}''(R)$ such that $\overline{w}' \neq \overline{w}$.
From this and the relational calculus semantics, it follows that $[\forall \overline{x}, \overline{y}, \overline{y}', \overline{z}, \overline{z}'.\,( (R(\overline{x}, \overline{y},  \overline{z}) \wedge R(\overline{x}, \overline{y}',  \overline{z}') )\Rightarrow \overline{y} = \overline{y}'  ]^{\mathit{db}''} = \bot$.
Since the trigger $t$ is enabled, this contradicts the fact that $\forall \overline{x}, \overline{y}, \overline{y}', \overline{z}, \overline{z}'.\,( (R(\overline{x}, \\ \overline{y},  \overline{z}) \wedge R(\overline{x}, \overline{y}',  \overline{z}') )\Rightarrow \overline{y} = \overline{y}' $ is not in $\mathit{Ex}(s)$.

\item \emph{Learn \texttt{INSERT} - ID}.
The proof of this case is similar to that of \emph{Learn \texttt{INSERT} - FD}.
See also the proof of \emph{\texttt{INSERT} Success - ID}.

\item \emph{Learn \texttt{INSERT} - ID - 1}.
The proof of this case is similar to that of \emph{Learn \texttt{INSERT} - FD - 1}.
See also the proof of \emph{\texttt{INSERT} Success - ID}.

\item \emph{Learn \texttt{INSERT} Backward - 1}.
Let $i$ be such that $r^{i} = r^{i-1} \concat t \concat s$, where $s = \langle \mathit{db}', U', \mathit{sec}', T', V', c' \rangle \in \Omega_{M}$, $\mathit{last}(r^{i-1}) =  \langle \mathit{db}, U, \mathit{sec}, T, V, c \rangle$, and $t \in {\cal TRIGGER}_{D}$, and $\phi$ be $t$'s actual \texttt{WHEN} condition, where all free variables are replaced with the values in $\mathit{tpl}(\mathit{last}(r^{i-1}))$.
From the rule's definition, it follows that there is a $\psi$ such that $r, i-1 \attMod \psi$ and $r, i \attMod \neg \psi$.
From this, the induction's hypothesis, $s = \langle \mathit{db}', U', \mathit{sec}', T', V', c' \rangle$, and $\mathit{last}(r^{i-1}) =  \langle \mathit{db}, U, \mathit{sec}, T, V, c \rangle$, it follows that $[\psi]^{\mathit{db}}  = \\ \top$ and $[\neg \psi]^{\mathit{db}'} = \top$.
Therefore, $[\psi]^{\mathit{db}} = \top$ and $[\psi]^{\mathit{db}'} = \bot$.
Hence, $\mathit{db} \neq \mathit{db}'$.
We now prove that $[\phi]^{\mathit{db}} = \top$.
Assume, for contradiction's sake, that $[\phi]^{\mathit{db}} = \bot$.
From the rule's definition, it follows that $\mathit{secEx}(s) = \bot$.
Therefore, $f(\mathit{last}(r^{i-1}), \langle u', \mathtt{SELECT}, \phi\rangle) = \top$.
From this, $[\phi]^{\mathit{db}} = \bot$, and the rule \texttt{Trigger Disabled}, it follows that $\mathit{db} = \mathit{db}'$, which contradicts  $\mathit{db} \neq \mathit{db}'$.

\item \emph{Learn \texttt{INSERT} Backward - 2}.
Let $i$ be such that $r^{i} = r^{i-1} \concat t \concat s$, where $s = \langle \mathit{db}', U', \mathit{sec}', T', V', c' \rangle \in \Omega_{M}$, $\mathit{last}(r^{i-1}) =  \langle \mathit{db}, U, \mathit{sec}, T, V, c \rangle$, and $t \in {\cal TRIGGER}_{D}$, and $\phi$ be $\neg R(\overline{t})$.
Furthermore, let $\mathit{act}=\langle u', \mathtt{INSERT}, R, \\ \overline{t} \rangle$ be $t$'s actual action.
From the rule's definition, it follows that there is a $\psi$ such that $r, i-1 \attMod \psi$ and $r, i \attMod \neg \psi$.
From this, the induction's hypothesis, $s = \langle \mathit{db}', U', \mathit{sec}', T', V', c' \rangle$, and $\mathit{last}(r^{i-1}) =  \langle \mathit{db}, U, \mathit{sec}, T, \\ V, c \rangle$, it follows that $[\psi]^{\mathit{db}} = \top$ and $[\neg \psi]^{\mathit{db}'} = \top$.
Therefore, $[\psi]^{\mathit{db}} = \top$ and $[\psi]^{\mathit{db}'} = \bot$.
Hence, $\mathit{db} \neq \mathit{db}'$.
We now prove that $[\phi]^{\mathit{db}} = \top$.
Assume, for contradiction's sake, that $[\phi]^{\mathit{db}} = \bot$.
Therefore, $\overline{t} \in \mathit{db}(R)$.
From this and $\mathit{act}=\langle u', \mathtt{INSERT}, R, \overline{t} \rangle$, it follows that $\mathit{db}'= \mathit{db}[R \oplus \overline{t}]$.
From this and $\oplus$'s definition, it follows that $\mathit{db}'(R') = \mathit{db}(R')$ for all $R' \neq R$ and $\mathit{db}'(R) =\mathit{db}(R) \cup \{\overline{t}\}$.
From $\mathit{db}'(R) =\mathit{db}(R) \cup \{\overline{t}\}$ and $\overline{t} \in \mathit{db}(R)$, it follows that $\mathit{db}'(R) = \mathit{db}(R)$.
From this and  $\mathit{db}'(R') = \mathit{db}(R')$ for all $R' \neq R$, it follows that $\mathit{db}' = \mathit{db}$, which contradicts $\mathit{db} \neq \mathit{db}'$.

\item \emph{Learn \texttt{DELETE} Forward}.
The proof of this case is similar to that of \emph{Learn \texttt{INSERT} Forward}.

\item \emph{Learn \texttt{DELETE} - ID}.
The proof of this case is similar to that of \emph{Learn \texttt{INSERT} - FD}.
See also the proof of \emph{\texttt{DELETE} Success - ID}.

\item \emph{Learn \texttt{DELETE} - ID - 1}.
The proof of this case is similar to that of \emph{Learn \texttt{INSERT} - FD - 1}.
See also the proof of \emph{\texttt{DELETE} Success - ID}.

\item \emph{Learn \texttt{DELETE} Backward - 1}.
The proof of this case is similar to that of \emph{Learn \texttt{INSERT} Backward - 1}.

\item \emph{Learn \texttt{DELETE} Backward - 2}.
The proof of this case is similar to that of \emph{Learn \texttt{INSERT} Backward - 2}.

\item \emph{Propagate Forward Trigger Action}.
The proof of this case is similar to  \emph{Propagate Forward \texttt{INSERT/DELETE} Success}.

\item \emph{Propagate Backward Trigger Action}.
The proof of this case is similar to \emph{Propagate Backward \texttt{INSERT/DELETE} Success}.

\item \emph{Propagate Forward \texttt{INSERT} Trigger Action}.
The proof of this case is similar to that of \emph{Propagate Forward \texttt{INSERT} Success - 1}.

\item \emph{Propagate Forward \texttt{DELETE} Trigger Action}.
The proof of this case is similar to that of \emph{Propagate Forward \texttt{DELETE} Success - 1}.

\item \emph{Propagate Backward \texttt{INSERT} Trigger Action}.
The proof of this case is similar to that of \emph{Propagate Backward \texttt{INSERT} Success - 1}.

\item \emph{Propagate Backward \texttt{DELETE} Trigger Action}.
The proof of this case is similar to that of \emph{Propagate Backward \texttt{DELETE} Success - 1}.

\item \emph{Trigger FD \texttt{INSERT} Disabled Backward}.
Let $i$ be such that $r^{i} = r^{i-1} \concat t \concat s$, where $s = \langle \mathit{db}', U', \mathit{sec}', T', V', c' \rangle \in \Omega_{M}$, $t \in {\cal TRIGGER}_{D}$, and $\mathit{last}(r^{i-1}) =  \langle \mathit{db}, U, \mathit{sec}, T, V, \\ c \rangle$, and $\psi$ be $\neg \phi [\overline{x}^{|R'|} \mapsto \mathit{tpl}(\mathit{last}(r^{i-1}))]$.
Furthermore, let $\mathit{act}=\langle u', \mathtt{INSERT}, R, (\overline{v}, \overline{w}, \overline{q}) \rangle$ be $t$'s actual action.
From the rule's definition, it follows that  $r, i-1 \attMod \exists \overline{y},\overline{z}. R(\overline{v},  \overline{y}, \overline{z}) \wedge \overline{y} \neq \overline{w}$ holds.
From this, the induction hypothesis, and $\mathit{last}(r^{i-1}) =  \langle \mathit{db}, U, \mathit{sec}, T, V, c \rangle$, it follows that $[\exists \overline{y},\overline{z}. R(\overline{v}, \overline{y}, \overline{z}) \wedge \overline{y} \neq \overline{w}]^{\mathit{db}} = \top$.
Therefore, there is a tuple $(\overline{v},\overline{w}', \overline{z}') \in \mathit{db}(R)$ such that $\overline{w}' \neq \overline{w}$.
We now prove that $[\psi]^{\mathit{db}} = \top$.
Assume, for contradiction's sake, that this is not the case, namely that $[\phi [\overline{x} \mapsto \mathit{tpl}(\mathit{last}(r^{i-1}))]]^{\mathit{db}} = \top$.
There are two cases:
\begin{compactenum}
\item the trigger $t$ is enabled and the action $\mathit{act}$ is authorized.
In this case, the database $\mathit{db}[R \oplus \{(\overline{v}, \overline{w}, \overline{q}) \}] \\ \not\in \Omega_{D}^{\mathit{\Gamma}}$ because $\forall \overline{x}, \overline{y}, \overline{y}', \overline{z}, \overline{z}'.\, (R(\overline{x}, \overline{y},  \overline{z}) \wedge R(\overline{x},  \overline{y}',  \overline{z}') ) \\ \Rightarrow \overline{y} = \overline{y}' \in \Gamma$ and there is a tuple $(\overline{v},\overline{w}', \overline{z}') \in \mathit{db}(R)$ such that $\overline{w}' \neq \overline{w}$.
Therefore, the resulting state would be such that $\mathit{Ex}(s) \neq \emptyset$.
This contradicts the fact that, according to the rule's definition, $\mathit{Ex}(s) = \emptyset$.

\item  the trigger $t$ is enabled and the action $\mathit{act}$ is not authorized.
Therefore, the resulting state would be such that $\mathit{secEx}(s) = \top$.
This contradicts the fact that, according to the rule's definition, $\mathit{secEx}(s) = \bot$.
\end{compactenum}

\item \emph{Trigger ID \texttt{INSERT} Disabled Backward}.
The proof of this case is similar to that of \emph{Trigger FD \texttt{INSERT} Disabled Backward}.

\item \emph{Trigger ID \texttt{DELETE} Disabled Backward}.
The proof of this case is similar to that of \emph{Trigger FD \texttt{INSERT} Disabled Backward}.
\end{compactenum}
This completes the proof of the induction step.
	
This completes the proof of the theorem.	
\end{proof}

\begin{figure*}

\centering
\begin{tabular}{c}

$\infer[\begin{tabular}{c}\text{Propagate Forward }\\ \texttt{SELECT}\end{tabular}]
{r, i+1 \attMod \psi}
{
\hfill r, i \attMod  \psi \hfill \quad
\hfill r^{i+1} = r^{i} \concat \langle u, \mathtt{SELECT},  \phi\rangle \concat s \hfill \quad
\hfill 1 \leq i < |r| \hfill \quad
\hfill s \in \Omega_{M}\hfill }$ \\
\\

$\infer[\begin{tabular}{c}\text{Propagate Forward  }\\ \texttt{GRANT}/\texttt{REVOKE}\end{tabular}]
{r, i+1 \attMod  \psi }
{
\hfill r, i \attMod   \psi  \hfill \quad
\hfill r^{i+1} = r^{i} \concat \langle \mathit{op}, u', \mathit{pr},  u\rangle \concat s \hfill \quad
\hfill 1 \leq i < |r| \hfill \quad
\hfill \mathit{op} \in \{\oplus,\oplus^{*}, \ominus\} \hfill \quad
\hfill s \in \Omega_{M} \hfill }$ \\
\\

$\infer[\text{\begin{tabular}{c} Propagate Forward\\  \texttt{CREATE}\end{tabular}}]
{r, i+1 \attMod \psi }
{
\hfill r, i \attMod \psi \hfill  \quad
\hfill r^{i+1} = r^{i} \concat \langle u, \mathtt{CREATE}, o\rangle \concat s \hfill \quad
\hfill 1 \leq i < |r| \hfill \quad
\hfill o \in {\cal TRIGGER}_{D} \cup {\cal VIEW}_{D} \hfill \quad
\hfill s \in \Omega_{M} \hfill }$
\end{tabular}
\caption{Rules defining how the attacker propagates (forward) the knowledge}\label{table:rules:adversary:1}
\end{figure*}

\begin{figure*}

\centering
\begin{tabular}{c}

$\infer[\begin{tabular}{c} \text{Propagate Backward }\\ \texttt{SELECT}\end{tabular}]
{r, i \attMod \psi}
{
\hfill r, i+1 \attMod \psi \hfill  \quad
\hfill r^{i+1} = r^{i}  \concat \langle u, \mathtt{SELECT},  \phi\rangle \concat s \hfill  \quad
\hfill 1 \leq i < |r| \hfill \quad
\hfill s \in \Omega_{M} \hfill }$ \\
\\

$\infer[\begin{tabular}{c} \text{Propagate Backward }\\ \texttt{GRANT}/\texttt{REVOKE}\end{tabular}]
{r,i  \attMod  \psi}
{
\hfill r, i+1 \attMod  \psi \hfill  \quad
\hfill r^{i+1} = r^{i}  \concat \langle \mathit{op}, u', \mathit{pr},  u\rangle \concat s \hfill  \quad
\hfill 1 \leq i < |r| \hfill \quad
\hfill \mathit{op} \in \{\oplus,\oplus^{*}, \ominus\} \hfill \quad
\hfill s \in \Omega_{M} \hfill }$ \\
\\

$\infer[\text{\begin{tabular}{c} Propagate Backward\\ \texttt{CREATE TRIGGER}\end{tabular}}]
{r, i \attMod \psi}
{
\hfill r, i+1 \attMod \psi \hfill  \quad
\hfill r^{i+1} = r^{i} \concat \langle u, \mathtt{CREATE}, o\rangle \concat s \hfill  \quad
\hfill 1 \leq i < |r| \hfill \quad
\hfill o \in {\cal TRIGGER}_{D} \hfill \quad
\hfill s \in \Omega_{M} \hfill }$\\\\

$\infer[\text{\begin{tabular}{c} Propagate Backward\\ \texttt{CREATE VIEW}\end{tabular}}]
{r, i \attMod \psi'}
{
\hfill r, i+1 \attMod \psi  \hfill \quad
\hfill r^{i+1} = r^{i} \concat \langle u, \mathtt{CREATE}, o\rangle \concat s \hfill  \quad
\hfill 1 \leq i < |r| \hfill \quad
\hfill o \in {\cal VIEW}_{D} \hfill \quad
\hfill s \in \Omega_{M} \hfill \quad
\hfill \psi' = \mathit{replace}(\psi,o) \hfill }$
\end{tabular}

\caption{Rules defining how the attacker propagates (backward) the knowledge}\label{table:rules:adversary:2}
\end{figure*}

\begin{figure*}

\centering
\begin{tabular}{c}
$\infer[\texttt{SELECT} \text{ Success - 1}]
{r, i \attMod  \phi }
{
\hfill 1 < i \leq |r| \hfill \quad
\hfill r^{i} = r^{i-1} \concat \langle u, \mathtt{SELECT},  \phi\rangle \concat s \hfill  \quad
\hfill s = \langle \mathit{db}, U, \mathit{sec}, T, V, h, \langle \langle u, \mathtt{SELECT},  \phi\rangle, \top, \top, \emptyset \rangle, \langle \epsilon, \epsilon, \epsilon, \epsilon \rangle \rangle\hfill }$ \\
\\

$\infer[\texttt{SELECT} \text{ Success - 2}]
{r, i \attMod  \neg \phi }
{
\hfill 1 < i \leq |r| \hfill \quad
\hfill\hfill r^{i} = r^{i-1} \concat \langle u, \mathtt{SELECT},  \phi\rangle \concat s  \hfill \quad
\hfill s = \langle \mathit{db}, U, \mathit{sec}, T, V, h, \langle \langle u, \mathtt{SELECT},  \phi\rangle, \top, \bot, \emptyset \rangle, \langle \epsilon, \epsilon, \epsilon, \epsilon \rangle \rangle\hfill }$ \\
\\

$\infer[\texttt{INSERT} \text{ Success}]
{r, i \attMod   R(\overline{t})}
{
\hfill 1 < i \leq |r| \hfill \quad
\hfill r^{i} = r^{i-1} \concat \langle u, \mathtt{INSERT}, R, \overline{t}\rangle \concat s \hfill  \quad
\hfill s = \langle \mathit{db}, U, \mathit{sec}, T, V, h, \langle \langle u, \mathtt{INSERT}, R, \overline{t}\rangle, \top, \top, \emptyset \rangle, \langle rS', \overline{t}, u, \mathit{tr} \rangle \rangle\hfill }$ \\
\\

$\infer[\texttt{INSERT} \text{ Success - FD}]
{r, l \attMod  \neg \exists \overline{y},\overline{z}.\, R(\overline{v}, \overline{y}, \overline{z}) \wedge \overline{y} \neq \overline{w} }
{
\hfill 1 < i \leq |r| \hfill \quad
\hfill r^{i} = r^{i-1} \concat \langle u, \mathtt{INSERT}, R, \overline{t}\rangle \concat s \hfill  \quad
\hfill l \in \{i,i-1\} \hfill\\ 
\hfill s = \langle \mathit{db}, U, \mathit{sec}, T, V, h, \langle \langle u, \mathtt{INSERT}, R, \overline{t}\rangle, \top, \top, E \rangle, \langle rS', \overline{t}, u, \mathit{tr} \rangle \rangle\hfill \\ 
\hfill \forall \overline{x}, \overline{y}, \overline{y}', \overline{z}, \overline{z}'.\,( (R(\overline{x}, \overline{y},  \overline{z}) \wedge R(\overline{x}, \overline{y}',  \overline{z}') )\Rightarrow \overline{y} = \overline{y}' \in \Gamma \setminus E \hfill \quad
\hfill \overline{t} = (\overline{v}, \overline{w}, \overline{q})\hfill }$ \\
\\

$\infer[\texttt{INSERT} \text{ Success - ID}]
{r , l \attMod   \exists \overline{y}.\, S(\overline{v}, \overline{y}) }
{
\hfill 1 < i \leq |r| \hfill \quad
\hfill r^{i} = r^{i-1} \concat \langle u, \mathtt{INSERT}, R, \overline{t}\rangle \concat s  \hfill \quad
\hfill l \in \{i,i-1\} \hfill \\ 
\hfill s = \langle \mathit{db}, U, \mathit{sec}, T, V, h, \langle \langle u, \mathtt{INSERT}, R, \overline{t}\rangle, \top, \top, E \rangle, \langle rS', \overline{t}, u, \mathit{tr} \rangle \rangle\hfill \\
\hfill \forall \overline{x}, \overline{z}.\,( R(\overline{x}, \overline{z}) \Rightarrow \exists \overline{y}.\, S(\overline{x}, \overline{y}) ) \in \Gamma \setminus E \hfill  \quad
\hfill \overline{t} = (\overline{v}, \overline{w})\hfill }$ \\
\\

$\infer[\texttt{DELETE} \text{ Success}]
{r, i \attMod \neg R(\overline{t})}
{
\hfill 1 < i \leq |r| \hfill \quad
\hfill r^{i} = r^{i-1} \concat \langle u, \mathtt{DELETE}, R, \overline{t}\rangle \concat s \hfill  \quad
\hfill s = \langle \mathit{db}, U, \mathit{sec}, T, V, h, \langle \langle u, \mathtt{DELETE}, R, \overline{t}\rangle, \top, \top, \emptyset \rangle, \langle rS', \overline{t}, u, \mathit{tr} \rangle \rangle\hfill }$ \\
\\

$\infer[\texttt{DELETE} \text{ Success - ID}]
{r, l \attMod  \forall \overline{x}, \overline{z}.\,( S(\overline{x}, \overline{z}) \Rightarrow \overline{x} \neq \overline{v}) \vee \exists \overline{y}.\, (R(\overline{v}, \overline{y}) \wedge \overline{y} \neq \overline{w})}
{
\hfill 1 < i \leq |r| \hfill \quad
\hfill r^{i} = r^{i-1} \concat \langle u, \mathtt{DELETE}, R, \overline{t}\rangle \concat s \hfill  \quad
\hfill l \in \{i,i-1\} \hfill\\ 
\hfill s = \langle \mathit{db}, U, \mathit{sec}, T, V, h, \langle \langle u, \mathtt{DELETE}, R, \overline{t}\rangle, \top, \top, E \rangle, \langle rS', \overline{t}, u, \mathit{tr} \rangle \rangle\hfill \\
\hfill \forall \overline{x}, \overline{z}.\,( S(\overline{x}, \overline{z}) \Rightarrow \exists \overline{y}.\, R(\overline{x}, \overline{y}) ) \in \Gamma \setminus E \hfill \quad
\hfill \overline{t} = (\overline{v}, \overline{w})\hfill }$ \\
\\

$\infer[\texttt{INSERT} \text{ Exception}]
{r, l \attMod   \neg R(\overline{t}) }
{
\hfill 1 < i \leq |r| \hfill \quad
\hfill r^{i} = r^{i-1} \concat \langle u, \mathtt{INSERT}, R, \overline{t}\rangle \concat s  \hfill \quad
\hfill l \in \{i,i-1\} \hfill\\
\hfill s = \langle \mathit{db}, U, \mathit{sec}, T, V, h, \langle \langle u, \mathtt{INSERT}, R, \overline{t}\rangle, \top, \bot, E \rangle, \langle \epsilon, \epsilon, \epsilon, \epsilon \rangle \rangle \hfill \quad 
\hfill E \neq \emptyset \hfill } $ \\
\\

$\infer[\texttt{DELETE} \text{ Exception}]
{r, l \attMod   R(\overline{t}) }
{
\hfill 1 < i \leq |r| \hfill \quad
\hfill r^{i} = r^{i-1} \concat \langle u, \mathtt{DELETE}, R, \overline{t}\rangle \concat s  \hfill \quad
\hfill l \in \{i,i-1\} \hfill\\ 
\hfill s = \langle \mathit{db}, U, \mathit{sec}, T, V, h, \langle \langle u, \mathtt{DELETE}, R, \overline{t}\rangle, \top, \bot, E \rangle, \langle \epsilon, \epsilon, \epsilon, \epsilon \rangle \rangle \hfill \quad 
\hfill E \neq \emptyset \hfill } $ \\
\\

$\infer[\texttt{INSERT} \text{ FD Exception}]
{r, l \attMod  \exists \overline{y},\overline{z}.\, R(\overline{v}, \overline{y}, \overline{z}) \wedge \overline{y} \neq \overline{w} }
{
\hfill 1 < i \leq |r| \hfill \quad
\hfill r^{i} = r^{i-1} \concat \langle u, \mathtt{INSERT}, R, \overline{t}\rangle \concat s \hfill  \quad
\hfill l \in \{i,i-1\} \hfill\\ 
\hfill s = \langle \mathit{db}, U, \mathit{sec}, T, V, h, \langle \langle u, \mathtt{INSERT}, R, \overline{t}\rangle, \top, \bot, E \rangle, \langle \epsilon, \epsilon, \epsilon, \epsilon \rangle \rangle \hfill \\ 
\hfill (\forall \overline{x}, \overline{y}, \overline{y}', \overline{z}, \overline{z}'.\,( (R(\overline{x}, \overline{y},  \overline{z}) \wedge R(\overline{x}, \overline{y}',  \overline{z}') )\Rightarrow \overline{y} = \overline{y}') \in E \hfill \quad
\hfill \overline{t} = (\overline{v}, \overline{w}, \overline{q})} $ \\
\\

$\infer[\texttt{INSERT} \text{ ID Exception}]
{r, l \attMod   \forall \overline{x}, \overline{y}.\, S(\overline{x},\overline{y}) \Rightarrow \overline{x} \neq \overline{v} }
{
\hfill 1 < i \leq |r| \hfill \quad
\hfill r^{i} = r^{i-1} \concat \langle u, \mathtt{INSERT}, R, \overline{t}\rangle \concat s  \hfill \quad
\hfill l \in \{i,i-1\} \hfill\\ 
\hfill s = \langle \mathit{db}, U, \mathit{sec}, T, V, h, \langle \langle u, \mathtt{INSERT}, R, \overline{t}\rangle, \top, \bot, E \rangle, \langle \epsilon, \epsilon, \epsilon, \epsilon \rangle \rangle \hfill \\ 
\hfill \forall \overline{x}, \overline{z}.\,( R(\overline{x}, \overline{z}) \Rightarrow \exists \overline{y}.\, S(\overline{x}, \overline{y}) ) \in E \hfill \quad
\hfill \overline{t} = (\overline{v}, \overline{w})} $ \\
\\

$\infer[\texttt{DELETE} \text{ ID Exception}]
{r, l \attMod   \exists \overline{z}.\, S(\overline{v},\overline{z}) \wedge \forall \overline{y}.\,(R(\overline{v},\overline{y}) \Rightarrow \overline{y} = \overline{w} )}
{
\hfill 1 < i \leq |r| \hfill \quad
\hfill r^{i} = r^{i-1} \concat \langle u, \mathtt{DELETE}, R, \overline{t}\rangle \concat s  \hfill \quad
\hfill l \in \{i,i-1\} \hfill\\ 
\hfill s = \langle \mathit{db}, U, \mathit{sec}, T, V, h, \langle \langle u, \mathtt{DELETE}, R, \overline{t}\rangle, \top, \bot, E \rangle, \langle \epsilon, \epsilon, \epsilon, \epsilon \rangle \rangle \hfill \\ 
\hfill \forall \overline{x}, \overline{z}.\,( S(\overline{x}, \overline{z}) \Rightarrow \exists \overline{y}.\, R(\overline{x}, \overline{y}) ) \in E \hfill \quad
\hfill \overline{t} = (\overline{v}, \overline{w})} $ \\
\\

$\infer[\text{Integrity Constraint}]
{r, i \attMod   \gamma }
{
\hfill 1 \leq i \leq |r|\quad
\hfill \gamma \in \Gamma \hfill} $ \\\\

$\infer[\text{View}]
{r, i \attMod   \psi' }
{\hfill 1 \leq i \leq |r| \hfill \quad
\hfill v \in \mathit{last}(r^{i}).V \hfill \quad
\hfill r, i \attMod   \psi \hfill \quad
\hfill \psi' = \mathit{replace}(\psi,v) \hfill } $ \\\\
\end{tabular}

\caption{Rules defining how the attacker extracts knowledge from the run}\label{table:rules:adversary:3}
\end{figure*}

\begin{figure*}

\centering
\begin{tabular}{c}

$\infer[\text{Rollback Backward - 1}]
{r, i-n-1 \attMod   \phi }
{
\hfill r, i \attMod   \phi  \hfill  \quad
\hfill n+1 < i \leq |r| \hfill \quad
\hfill s_{1}, s_{2}, \ldots, s_{n} \in \Omega_{M} \hfill  \quad
\hfill t_{1}, \ldots, t_{n} \in {\cal TRIGGER}_{D}\hfill \\ 
\hfill \mathit{secEx}(s_{n}) = \top \vee \mathit{Ex}(s_{n}) \neq \emptyset \hfill  \quad
\hfill r^{i} = r^{i-n-1}  \concat \langle u, \mathit{op}, R, \overline{t}\rangle \concat s_{1} \concat t_{1} \concat s_{2} \concat \ldots \concat t_{n} \concat s_{n}  \hfill \\ 
\hfill s_{n} = \langle \mathit{db}, U, \mathit{sec}, T, V, h, \langle t_{n}, \mathit{when}, \mathit{stmt} \rangle, \langle \epsilon, \epsilon, \epsilon, \epsilon \rangle \rangle  \hfill \quad
\hfill \mathit{op} \in \{\mathtt{INSERT}, \mathtt{DELETE}\}\hfill } $ \\
\\

$\infer[\text{Rollback Backward - 2}]
{r, i-1 \attMod   \phi  }
{
\hfill r, i \attMod   \phi \hfill   \quad
\hfill 1 < i \leq |r| \hfill \quad
\hfill \mathit{secEx}(s) = \top \vee \mathit{Ex}(s) \neq \emptyset \hfill \quad
\hfill \mathit{op} \in \{\mathtt{INSERT}, \mathtt{DELETE}\}\hfill \\ 
\hfill r^{i} = r^{i-1} \concat \langle u, \mathit{op}, R, \overline{t}\rangle \concat s \hfill \quad
\hfill s = \langle \mathit{db}, U, \mathit{sec}, T, V, h, \langle \langle u, \mathtt{op}, R, \overline{t}\rangle, v, v', E \rangle, \langle \epsilon, \epsilon, \epsilon, \epsilon \rangle \rangle  \hfill \quad
} $ \\ 
\\

$\infer[\text{Rollback Forward - 1}]
{r, i \attMod   \phi  }
{
\hfill r, i-n-1 \attMod  \phi  \hfill  \quad
\hfill n+1 < i \leq |r| \hfill \quad
\hfill s_{1}, s_{2}, \ldots, s_{n} \in \Omega_{M} \hfill  \quad
\hfill t_{1}, \ldots, t_{n} \in {\cal TRIGGER}_{D}\hfill \\ 
\hfill \mathit{secEx}(s_{n}) = \top \vee \mathit{Ex}(s_{n}) \neq \emptyset \hfill  \quad
\hfill r^{i} = r^{i-n-1}  \concat \langle u, \mathit{op}, R, \overline{t}\rangle \concat s_{1} \concat t_{1} \concat s_{2} \concat \ldots \concat t_{n} \concat s_{n}  \hfill \\ 
\hfill s_{n} = \langle \mathit{db}, U, \mathit{sec}, T, V, h, \langle t_{n}, \mathit{when}, \mathit{stmt} \rangle, \langle \epsilon, \epsilon, \epsilon, \epsilon \rangle \rangle  \hfill \quad
\hfill \mathit{op} \in \{\mathtt{INSERT}, \mathtt{DELETE}\} \hfill } $ \\
\\

$\infer[\text{Rollback Forward - 2}]
{r, i \attMod  \phi  }
{
\hfill r, i-1 \attMod \phi \hfill  \quad
\hfill 1 < i \leq |r| \hfill \quad
\hfill \mathit{secEx}(s) = \top \vee \mathit{Ex}(s) \neq \emptyset \hfill \quad
\hfill \mathit{op} \in \{\mathtt{INSERT}, \mathtt{DELETE}\} \hfill \\ 
\hfill r^{i} = r^{i-1}  \concat \langle u, \mathit{op}, R, \overline{t}\rangle \concat s \hfill  \quad
\hfill s = \langle \mathit{db}, U, \mathit{sec}, T, V, h, \langle \langle u, \mathtt{op}, R, \overline{t}\rangle, v, v', E \rangle, \langle \epsilon, \epsilon, \epsilon, \epsilon \rangle \rangle \hfill  \quad
} $ 
\end{tabular}
\caption{Rules regulating how information propagates in case of rollbacks}\label{table:rules:adversary:4}
\end{figure*}

\begin{figure*}

\centering
\begin{tabular}{c}

$\infer[\text{\begin{tabular}{c}Propagate Forward\\ \texttt{INSERT}/\texttt{DELETE} Success\end{tabular}}]
{r, i \attMod  \phi  }
{
\hfill 1 < i \leq |r| \hfill \quad
\hfill r, i-1 \attMod  \phi \hfill  \quad
\hfill  r^{i} = r^{i-1} \concat \langle u, \mathit{op}, R, \overline{t}\rangle \concat s \hfill \\ 
\hfill s \in \Omega_{M} \hfill  \quad
\hfill \mathit{secEx}(s_{n}) = \bot \hfill \quad
\hfill \mathit{Ex}(s_{n}) = \emptyset \hfill  \quad
\hfill s = \langle \mathit{db}, U, \mathit{sec}, T, V, h, \mathit{actEff}, \mathit{tr} \rangle  \hfill \\ 
\hfill \mathit{reviseBelief}(r^{i-1}, \phi, r^{i}) = \top \hfill \quad
\hfill \mathit{op} \in \{\mathtt{INSERT}, \mathtt{DELETE}\}\hfill } $ \\
\\

$\infer[\text{\begin{tabular}{c}Propagate Forward\\ \texttt{INSERT} Success - 1\end{tabular}}]
{r, i \attMod  \phi }
{
\hfill 1 < i \leq |r| \hfill \quad
\hfill r, i-1 \attMod  \phi \hfill  \quad
\hfill r, i-1 \attMod  R(\overline{t}) \hfill  \quad
\hfill  r^{i} = r^{i-1} \concat \langle u, \mathtt{INSERT}, R, \overline{t}\rangle \concat s \hfill \\ 
\hfill s \in \Omega_{M}  \hfill \quad
\hfill \mathit{secEx}(s_{n}) = \bot \hfill \quad
\hfill \mathit{Ex}(s_{n}) = \emptyset  \hfill \quad
\hfill s = \langle \mathit{db}, U, \mathit{sec}, T, V, h, \mathit{actEff}, \mathit{tr} \rangle \quad \hfill 
} $ \\
\\

$\infer[\text{\begin{tabular}{c}Propagate Forward\\ \texttt{DELETE} Success - 1\end{tabular}}]
{r, i \attMod   \phi }
{
\hfill 1 < i \leq |r| \hfill \quad
\hfill r, i-1 \attMod  \phi \hfill  \quad
\hfill r, i-1 \attMod \neg R(\overline{t}) \hfill  \quad
\hfill r^{i} = r^{i-1} \concat \langle u, \mathtt{DELETE}, R, \overline{t}\rangle \concat s \hfill \\ 
\hfill s \in \Omega_{M}  \hfill \quad
\hfill \mathit{secEx}(s_{n}) = \bot \hfill \quad
\hfill \mathit{Ex}(s_{n}) = \emptyset \hfill  \quad
\hfill s = \langle \mathit{db}, U, \mathit{sec}, T, V, h, \mathit{actEff}, \mathit{tr} \rangle \hfill \quad
} $ \\
\\

$\infer[\text{\begin{tabular}{c}Propagate Backward\\ \texttt{INSERT}/\texttt{DELETE} Success\end{tabular}}]
{r, i-1 \attMod  \phi }
{
\hfill 1 < i \leq |r| \hfill \quad
\hfill r, i \attMod  \phi \hfill \quad
\hfill r^{i} = r^{i-1} \concat \langle u, \mathit{op}, R, \overline{t}\rangle \concat s \hfill \\ 
\hfill s \in \Omega_{M} \hfill  \quad
\hfill \mathit{secEx}(s_{n}) = \bot \hfill \quad
\hfill \mathit{Ex}(s_{n}) = \emptyset  \hfill \quad
\hfill s = \langle \mathit{db}, U, \mathit{sec}, T, V, h, \mathit{actEff}, \mathit{tr} \rangle  \hfill \\ 
\hfill \mathit{reviseBelief}(r^{i-1}, \phi, r^{i}) = \top \hfill \quad
\hfill \mathit{op} \in \{\mathtt{INSERT}, \mathtt{DELETE}\}} $ \\\\
 
 $\infer[\text{\begin{tabular}{c}Propagate Backward\\ \texttt{INSERT} Success - 1\end{tabular}}]
{r,i-1 \attMod   \phi }
{
\hfill 1 < i \leq |r| \hfill \quad
\hfill r, i \attMod  \phi  \hfill \quad
\hfill r, i-1 \attMod R(\overline{t}) \hfill \quad
\hfill r^{i} = r^{i-1} \concat \langle u, \mathtt{INSERT}, R, \overline{t}\rangle \concat s \hfill \\ 
\hfill s \in \Omega_{M}  \hfill \quad
\hfill \mathit{secEx}(s_{n}) = \bot \hfill \quad
\hfill \mathit{Ex}(s_{n}) = \emptyset  \hfill \quad
\hfill s = \langle \mathit{db}, U, \mathit{sec}, T, V, h, \mathit{actEff}, \mathit{tr} \rangle  \hfill  \quad
} $  \\\\
 
 $\infer[\text{\begin{tabular}{c}Propagate Backward\\ \texttt{DELETE} Success - 1\end{tabular}}]
{r, i-1 \attMod  \phi }
{
\hfill 1 < i \leq |r| \hfill \quad
\hfill r, i \attMod  \phi \hfill  \quad
\hfill r, i-1 \attMod  \neg R(\overline{t})  \hfill  \quad
\hfill r^{i} = r^{i-1} \concat \langle u, \mathtt{DELETE}, R, \overline{t}\rangle \concat s \hfill \\ 
\hfill s \in \Omega_{M} \hfill  \quad
\hfill \mathit{secEx}(s_{n}) = \bot \hfill \quad
\hfill \mathit{Ex}(s_{n}) = \emptyset  \hfill \quad
\hfill s = \langle \mathit{db}, U, \mathit{sec}, T, V, h, \mathit{actEff}, \mathit{tr} \rangle  \hfill  \quad
} $ 
\end{tabular}
\caption{Rules regulating how information propagates in case of successful \texttt{INSERT} and \texttt{DELETE}}\label{table:rules:adversary:5}
\end{figure*}

\begin{figure*}

\centering
\begin{tabular}{c}
$\infer[\text{Reasoning}]
{r, i \attMod  \gamma }
{
\hfill 1 \leq i \leq |r| \hfill \quad
\hfill \Phi \subseteq \{ \phi \,|\, r, i \attMod  \phi  \} \hfill \quad
\hfill \Phi \models_{\mathit{fin}} \gamma \hfill} $ 
\end{tabular}

\caption{Rules regulating the reasoning}\label{table:rules:adversary:6}
\end{figure*}

\begin{figure*}

\centering
\begin{tabular}{c}
$\infer[\text{Learn \texttt{INSERT} Backward - 3}]
{r, i-1 \attMod  \neg R(\overline{t})}
{
\hfill  r^{i} = r^{i-1}  \concat \langle u, \mathtt{INSERT}, R, \overline{t}\rangle \concat s \hfill  \quad
\hfill 1 < i \leq |r| \hfill \\
\hfill s = \langle \mathit{db}, U, \mathit{sec}, T,V, h, \mathit{aE}, \mathit{tr} \rangle \hfill \quad
\hfill \mathit{secEx}(s) = \bot \hfill \\ 
\hfill \mathit{Ex}(s) = \emptyset \hfill \quad
\hfill r, i-1 \attMod \psi  \hfill \quad
\hfill r, i \attMod  \neg \psi  \hfill } $
\\
\\

$\infer[\text{Learn \texttt{DELETE} Backward - 3}]
{r, i-1 \attMod  R(\overline{t}) }
{
\hfill  r^{i} = r^{i-1}  \concat \langle u, \mathtt{DELETE}, R, \overline{t}\rangle \concat s \hfill  \quad
\hfill 1 < i \leq |r| \hfill \\
\hfill s = \langle \mathit{db}, U, \mathit{sec}, T,V, h, \mathit{aE}, \mathit{tr} \rangle \hfill \quad
\hfill \mathit{secEx}(s) = \bot \hfill \\ 
\hfill \mathit{Ex}(s) = \emptyset \hfill \quad
\hfill r, i-1 \attMod  \psi \hfill \quad
\hfill r, i \attMod     \neg \psi \hfill  } $
\\
\\

\end{tabular}

\caption{Rules describing how the attacker learns facts about \texttt{INSERT} and \texttt{DELETE} commands}\label{table:rules:adversary:7}
\end{figure*}

\begin{figure*}

\centering
\begin{tabular}{c}
$\infer[\text{\begin{tabular}{c}Propagate Forward \\Disabled Trigger\end{tabular}}]
{r, i \attMod \phi }
{
\hfill r, i-1 \attMod  \phi  \hfill \quad
\hfill r^{i} = r^{i-1}  \concat t \concat s \hfill  \quad
\hfill \mathit{invoker}(\mathit{last}(r^{i-1})) = u  \hfill \\ 
\hfill s = \langle \mathit{db}, U, \mathit{sec}, T, V, h, \langle t, \mathit{when}, \mathit{stmt} \rangle, \mathit{tr} \rangle \hfill \quad
\hfill \mathit{secEx}(s) = \bot \hfill \\ 
\hfill t=  \langle \mathit{id},\mathit{ow},  \mathit{ev}, R, \psi, \mathit{act},m\rangle \hfill \quad
\hfill r, i-1 \attMod  \neg \psi[\overline{x}^{|R|} \mapsto \mathit{tpl}(\mathit{last}(r^{i-1}))] \hfill } $ \\
\\

$\infer[\text{\begin{tabular}{c}Propagate Backward \\Disabled Trigger\end{tabular}}]
{r, i-1 \attMod  \phi }
{
\hfill r, i \attMod  \phi  \hfill \quad
\hfill r^{i} = r^{i-1}  \concat t \concat s  \hfill \quad
\hfill  \mathit{invoker}(\mathit{last}(r^{i-1})) = u  \hfill \\ 
\hfill s = \langle \mathit{db}, U, \mathit{sec}, T, V, h, \langle t, \mathit{when}, \mathit{stmt} \rangle, \mathit{tr} \rangle \hfill \quad
\hfill \mathit{secEx}(s) = \bot \hfill \\ 
\hfill t=  \langle \mathit{id},\mathit{ow},  \mathit{ev}, R, \psi, \mathit{act},m\rangle \hfill \quad
\hfill r, i-1 \attMod \neg \psi [\overline{x}^{|R|} \mapsto \mathit{tpl}(\mathit{last}(r^{i-1}))] \hfill  } $ \\
\\
\end{tabular}

\caption{Rules regulating the propagation of information through disabled triggers}\label{table:rules:adversary:8}
\end{figure*}

\begin{figure*}
\centering

\begin{tabular}{c}

$\infer[\text{Learn \texttt{INSERT} Forward}]
{r, i \attMod  R(\overline{t}) }
{
\hfill r, i-1 \attMod \phi[\overline{x}^{|R'|} \mapsto \mathit{tpl}(\mathit{last}(r^{i-1}))] \hfill \quad
\hfill 1 < i \leq |r| \hfill \quad
\hfill r^{i} = r^{i-1}  \concat t \concat s \hfill  \quad
\hfill \mathit{invoker}(\mathit{last}(r^{i-1})) = u  \hfill \\ 
\hfill s = \langle \mathit{db}, U, \mathit{sec}, T, V, h, \langle t, \mathit{when}, \langle \langle u', \texttt{INSERT}, R, \overline{t}\rangle, \top, \top, \emptyset\rangle \rangle, \mathit{tr} \rangle \hfill \\ 
\hfill \mathit{secEx}(s) = \bot \hfill \quad
\hfill \mathit{Ex}(s) = \emptyset \hfill \quad
\hfill  t=  \langle \mathit{id},\mathit{ow},  \mathit{ev}, R', \phi, \mathit{act},m\rangle\hfill }$
\\
\\
$\infer[\text{Learn \texttt{INSERT} - FD}]
{r, l \attMod  \neg \exists \overline{y},\overline{z}.\, R(\overline{v}, \overline{y}, \overline{z}) \wedge \overline{y} \neq \overline{w}  }
{
\hfill r, i-1 \attMod   \phi[\overline{x}^{|R'|} \mapsto \mathit{tpl}(\mathit{last}(r^{i-1}))]  \hfill  \quad
\hfill 1 < i \leq |r| \hfill \quad
\hfill r^{i} = r^{i-1}  \concat t \concat s  \hfill \quad
\hfill \mathit{invoker}(\mathit{last}(r^{i-1})) = u  \hfill \\ 
\hfill s = \langle \mathit{db}, U, \mathit{sec}, T, V, h, \langle t, \mathit{when}, \langle \langle u', \texttt{INSERT}, R, \overline{t}\rangle, \top, \top, \emptyset\rangle \rangle, \mathit{tr} \rangle \hfill \quad
\hfill l \in \{i,i-1\} \hfill\\ 
\hfill \mathit{secEx}(s) = \bot \hfill \quad
\hfill \mathit{Ex}(s) = \emptyset \hfill \quad
\hfill  t=  \langle \mathit{id},\mathit{ow},  \mathit{ev}, R', \phi, \mathit{act},m\rangle\hfill \\ 
\hfill \forall \overline{x}, \overline{y}, \overline{y}', \overline{z}, \overline{z}'.\,( (R(\overline{x}, \overline{y},  \overline{z}) \wedge R(\overline{x}, \overline{y}',  \overline{z}') )\Rightarrow \overline{y} = \overline{y}' \in \Gamma \hfill \quad
\hfill \overline{t} = (\overline{v}, \overline{w}, \overline{q})\hfill }$
\\
\\
$\infer[\text{Learn \texttt{INSERT} - FD - 1}]
{r, i-1 \attMod  \neg \exists \overline{y},\overline{z}.\, R(\overline{v}, \overline{y}, \overline{z}) \wedge \overline{y} \neq \overline{w}  }
{
\hfill r, i-1 \attMod   \phi[\overline{x}^{|R'|} \mapsto \mathit{tpl}(\mathit{last}(r^{i-1}))]  \hfill  \quad
\hfill 1 < i \leq |r| \hfill \quad
\hfill r^{i} = r^{i-1}  \concat t \concat s  \hfill \quad
\hfill \mathit{invoker}(\mathit{last}(r^{i-1})) = u  \hfill \\ 
\hfill s = \langle \mathit{db}, U, \mathit{sec}, T, V, h, \langle t, \mathit{when}, \langle \langle u', \texttt{INSERT}, R, \overline{t}\rangle, \top, \top, E \rangle \rangle, \mathit{tr} \rangle \hfill \quad
\hfill \overline{t} = (\overline{v}, \overline{w}, \overline{q})\hfill \quad 
\hfill \mathit{secEx}(s) = \bot \hfill \\
\hfill  t=  \langle \mathit{id},\mathit{ow},  \mathit{ev}, R', \phi, \mathit{act},m\rangle\hfill \quad 
\hfill \forall \overline{x}, \overline{y}, \overline{y}', \overline{z}, \overline{z}'.\,( (R(\overline{x}, \overline{y},  \overline{z}) \wedge R(\overline{x}, \overline{y}',  \overline{z}') )\Rightarrow \overline{y} = \overline{y}' \in \Gamma \setminus E \hfill }$
\\
\\
$\infer[\text{Learn \texttt{INSERT} - ID}]
{r, l \attMod  \exists \overline{y}.\, S(\overline{v}, \overline{y})  }
{
\hfill r, i-1 \attMod \phi[\overline{x}^{|R'|} \mapsto \mathit{tpl}(\mathit{last}(r^{i-1}))]  \hfill  \quad
\hfill 1 < i \leq |r| \hfill \quad
\hfill r^{i} = r^{i-1}  \concat t \concat s  \hfill \quad
\hfill \mathit{invoker}(\mathit{last}(r^{i-1})) = u  \hfill \\ 
\hfill s = \langle \mathit{db}, U, \mathit{sec}, T, V, h, \langle t, \mathit{when}, \langle \langle u', \texttt{INSERT}, R, \overline{t}\rangle, \top, \top, \emptyset\rangle \rangle, \mathit{tr} \rangle \hfill \quad
\hfill l \in \{i,i-1\} \hfill \\ 
\hfill \mathit{secEx}(s) = \bot \hfill  \quad
\hfill \mathit{Ex}(s) = \emptyset \hfill \quad
\hfill  t=  \langle \mathit{id},\mathit{ow},  \mathit{ev}, R', \phi, \mathit{act},m\rangle\hfill \\ 
\hfill (\forall \overline{x}, \overline{z}.\,( R(\overline{x}, \overline{z}) \Rightarrow \exists \overline{w}.\, S(\overline{x}, \overline{w}) ) \in \Gamma \hfill \quad
\hfill \overline{t} = (\overline{v}, \overline{w}) \hfill }$ \\
\\

$\infer[\text{Learn \texttt{INSERT} - ID - 1}]
{r, i-1 \attMod  \exists \overline{y}.\, S(\overline{v}, \overline{y})  }
{
\hfill r, i-1 \attMod \phi[\overline{x}^{|R'|} \mapsto \mathit{tpl}(\mathit{last}(r^{i-1}))]  \hfill  \quad
\hfill 1 < i \leq |r| \hfill \quad
\hfill r^{i} = r^{i-1}  \concat t \concat s  \hfill \quad
\hfill \mathit{invoker}(\mathit{last}(r^{i-1})) = u  \hfill \\ 
\hfill s = \langle \mathit{db}, U, \mathit{sec}, T, V, h, \langle t, \mathit{when}, \langle \langle u', \texttt{INSERT}, R, \overline{t}\rangle, \top, \top,  E \rangle \rangle, \mathit{tr} \rangle \hfill \quad
\hfill \overline{t} = (\overline{v}, \overline{w}) \hfill\\
\hfill \mathit{secEx}(s) = \bot \hfill  
\hfill  t=  \langle \mathit{id},\mathit{ow},  \mathit{ev}, R', \phi, \mathit{act},m\rangle\hfill \hfill
\hfill (\forall \overline{x}, \overline{z}.\,( R(\overline{x}, \overline{z}) \Rightarrow \exists \overline{w}.\, S(\overline{x}, \overline{w}) ) \in \Gamma \setminus E \hfill  }$ \\
\\

$\infer[\text{Learn \texttt{INSERT} Backward - 1}]
{r, i-1 \attMod  \phi[\overline{x}^{|R'|} \mapsto \mathit{tpl}(\mathit{last}(r^{i-1}))]  }
{
\hfill 1 < i \leq |r| \hfill \quad
\hfill  r^{i} = r^{i-1}  \concat t \concat s  \hfill \quad
\hfill \mathit{invoker}(\mathit{last}(r^{i-1})) = u \hfill \\ 
\hfill s = \langle \mathit{db}, U, \mathit{sec}, T, V, h, \langle t, \mathit{when}, \langle \langle u', \texttt{INSERT}, R, \overline{t}\rangle, \top, \top, \emptyset\rangle \rangle, \mathit{tr} \rangle \hfill \\ 
\hfill \mathit{secEx}(s) = \bot \hfill \quad
\hfill \mathit{Ex}(s) = \emptyset \hfill \quad
\hfill  t=  \langle \mathit{id},\mathit{ow},  \mathit{ev}, R', \phi, \mathit{act},m\rangle\hfill \\ 
\hfill r, i-1 \attMod  \psi \hfill  \quad
\hfill r, i \attMod \neg \psi \hfill }$
\\
\\

$\infer[\text{Learn \texttt{INSERT} Backward - 2}]
{r, i-1 \attMod   \neg R(\overline{t})  }
{
\hfill 1 < i \leq |r| \hfill \quad
\hfill r^{i} = r^{i-1}  \concat t \concat s  \hfill \quad
\hfill \mathit{invoker}(\mathit{last}(r^{i-1})) = u  \hfill \\ 
\hfill s = \langle \mathit{db}, U, \mathit{sec}, T, V, h, \langle t, \mathit{when}, \langle \langle u', \texttt{INSERT}, R, \overline{t}\rangle, \top, \top, \emptyset\rangle \rangle, \mathit{tr} \rangle \hfill \\ 
\hfill \mathit{secEx}(s) = \bot \hfill \quad
\hfill \mathit{Ex}(s) = \emptyset \hfill \quad
\hfill  t=  \langle \mathit{id},\mathit{ow},  \mathit{ev}, R', \phi, \mathit{act},m\rangle\hfill \\ 
\hfill r, i-1 \attMod   \psi  \hfill  \quad
\hfill r, i \attMod   \neg \psi  \hfill }$
\\
\\

\end{tabular}

\caption{Extracting knowledge from triggers}\label{table:rules:adversary:9}
\end{figure*}

\begin{figure*}
\centering

\begin{tabular}{c}

$\infer[\text{Learn \texttt{DELETE} Forward}]
{r, i \attMod    \neg R(\overline{t})  }
{
\hfill r, i-1 \attMod  \phi[\overline{x}^{|R'|} \mapsto \mathit{tpl}(\mathit{last}(r^{i-1}))]  \hfill  \quad
\hfill 1 < i \leq |r| \hfill \quad
\hfill r^{i} = r^{i-1}  \concat t \concat s \hfill  \quad
\hfill \mathit{invoker}(\mathit{last}(r^{i-1}))= u  \hfill \\ 
\hfill s = \langle \mathit{db}, U, \mathit{sec}, T, V, h, \langle t, \mathit{when}, \langle \langle u', \texttt{DELETE}, R, \overline{t}\rangle, \top, \top, \emptyset\rangle \rangle, \mathit{tr} \rangle \hfill \\ 
\hfill \mathit{secEx}(s) = \bot \hfill \quad
\hfill \mathit{Ex}(s) = \emptyset \hfill \quad
\hfill  t=  \langle \mathit{id},\mathit{ow},  \mathit{ev}, R', \phi, \mathit{act},m\rangle\hfill }$
\\
\\

$\infer[\text{Learn \texttt{DELETE} - ID}]
{r, l \attMod    \forall \overline{x}, \overline{z}.\,( S(\overline{x}, \overline{z}) \Rightarrow \overline{x} \neq \overline{v}) \vee \exists \overline{y}.\, (R(\overline{v}, \overline{y}) \wedge \overline{y} \neq \overline{w})  }
{
\hfill r, i-1 \attMod  \phi[\overline{x}^{|R'|} \mapsto \mathit{tpl}(\mathit{last}(r^{i-1}))] \hfill   \quad
\hfill 1 < i \leq |r| \hfill \quad
\hfill r^{i} = r^{i-1}  \concat t \concat s \hfill  \quad
\hfill \mathit{invoker}(\mathit{last}(r^{i-1}))= u  \hfill \\ 
\hfill s = \langle \mathit{db}, U, \mathit{sec}, T, V, h, \langle t, \mathit{when}, \langle \langle u', \texttt{DELETE}, R, \overline{t}\rangle, \top, \top, \emptyset\rangle \rangle, \mathit{tr} \rangle \hfill \quad 
\hfill l \in \{i,i-1\} \hfill \\ 
\hfill \mathit{secEx}(s) = \bot \hfill \quad
\hfill \mathit{Ex}(s) = \emptyset \hfill \quad
\hfill  t=  \langle \mathit{id},\mathit{ow},  \mathit{ev}, R', \phi, \mathit{act},m\rangle\hfill \\ 
\hfill (\forall \overline{x}, \overline{z}.\,( S(\overline{x}, \overline{z}) \Rightarrow \exists \overline{w}.\, R(\overline{x}, \overline{w}) ) \in \Gamma \hfill \quad
\hfill \overline{t} = (\overline{v}, \overline{w})\hfill }$ \\
\\

$\infer[\text{Learn \texttt{DELETE} - ID - 1}]
{r, i-1 \attMod    \forall \overline{x}, \overline{z}.\,( S(\overline{x}, \overline{z}) \Rightarrow \overline{x} \neq \overline{v}) \vee \exists \overline{y}.\, (R(\overline{v}, \overline{y}) \wedge \overline{y} \neq \overline{w})  }
{
\hfill r, i-1 \attMod  \phi[\overline{x}^{|R'|} \mapsto \mathit{tpl}(\mathit{last}(r^{i-1}))] \hfill   \quad
\hfill 1 < i \leq |r| \hfill \quad
\hfill r^{i} = r^{i-1}  \concat t \concat s \hfill  \quad
\hfill \mathit{invoker}(\mathit{last}(r^{i-1}))= u  \hfill \\ 
\hfill s = \langle \mathit{db}, U, \mathit{sec}, T, V, h, \langle t, \mathit{when}, \langle \langle u', \texttt{DELETE}, R, \overline{t}\rangle, \top, \top, E \rangle \rangle, \mathit{tr} \rangle \hfill \quad
\hfill \overline{t} = (\overline{v}, \overline{w})\hfill \\ 
\hfill \mathit{secEx}(s) = \bot \hfill \quad
\hfill  t=  \langle \mathit{id},\mathit{ow},  \mathit{ev}, R', \phi, \mathit{act},m\rangle\hfill \quad 
\hfill (\forall \overline{x}, \overline{z}.\,( S(\overline{x}, \overline{z}) \Rightarrow \exists \overline{w}.\, R(\overline{x}, \overline{w}) ) \in \Gamma \setminus E \hfill }$ \\
\\

$\infer[\text{Learn \texttt{DELETE} Backward - 1}]
{r, i-1 \attMod    \phi[\overline{x}^{|R'|} \mapsto \mathit{tpl}(\mathit{last}(r^{i-1}))]  }
{ 
\hfill 1 < i \leq |r| \hfill \quad
\hfill r^{i} = r^{i-1}  \concat t \concat s  \hfill \quad
\hfill \mathit{invoker}(\mathit{last}(r^{i-1}))= u  \hfill \\ 
\hfill s = \langle \mathit{db}, U, \mathit{sec}, T, V, h, \langle t, \mathit{when}, \langle \langle u', \texttt{DELETE}, R, \overline{t}\rangle, \top, \top, \emptyset\rangle \rangle, \mathit{tr} \rangle \hfill \\ 
\hfill \mathit{secEx}(s) = \bot \hfill \quad
\hfill \mathit{Ex}(s) = \emptyset \hfill \quad
\hfill  t=  \langle \mathit{id},\mathit{ow},  \mathit{ev}, R', \phi, \mathit{act},m\rangle\hfill \\ 
\hfill r, i-1 \attMod   \psi   \hfill \quad
\hfill r, i \attMod  \neg \psi \hfill }$
\\
\\

$\infer[\text{Learn \texttt{DELETE} Backward - 2}]
{r, i-1 \attMod  R(\overline{t})  }
{
\hfill 1 < i \leq |r| \hfill \quad
\hfill r^{i} = r^{i-1}  \concat t \concat s \hfill  \quad
\hfill \mathit{invoker}(\mathit{last}(r^{i-1}))= u  \hfill \\ 
\hfill s = \langle \mathit{db}, U, \mathit{sec}, T, V, h, \langle t, \mathit{when}, \langle \langle u', \texttt{DELETE}, R, \overline{t}\rangle, \top, \top, \emptyset\rangle \rangle, \mathit{tr} \rangle \hfill \\
\hfill \mathit{secEx}(s) = \bot \hfill \quad
\hfill \mathit{Ex}(s) = \emptyset \hfill \quad
\hfill  t=  \langle \mathit{id},\mathit{ow},  \mathit{ev}, R', \phi, \mathit{act},m\rangle\hfill \\ 
\hfill r, i-1 \attMod \psi  \hfill \quad
\hfill r, 1 \attMod  \neg \psi \hfill }$
\\
\\

$\infer[\text{Learn \texttt{GRANT}/\texttt{REVOKE} Backward}]
{r, i-1 \attMod  \phi[\overline{x}^{|R'|} \mapsto \mathit{tpl}(\mathit{last}(r^{i-1}))]  }
{
\hfill 1 < i \leq |r| \hfill \quad
\hfill  r^{i} = r^{i-1}  \concat t \concat s  \hfill \quad
\hfill \mathit{invoker}(\mathit{last}(r^{i-1})) = u  \hfill \\ 
\hfill s = \langle \mathit{db}, U, \mathit{sec}, T, V, h, \langle t, \mathit{when}, \langle \langle \mathit{op},u'', \mathit{pr},u'\rangle, \top, \top, \emptyset\rangle \rangle, \mathit{tr} \rangle \hfill \\ 
\hfill \mathit{secEx}(s) = \bot \hfill \quad
\hfill \mathit{Ex}(s) = \emptyset \hfill \quad
\hfill  t=  \langle \mathit{id},\mathit{ow},  \mathit{ev}, R', \phi, \mathit{act},m\rangle\hfill \\ 
\hfill u',u'' \in U \hfill \quad
\hfill \mathit{op} \in \{\oplus,\oplus^{*}, \ominus\} \hfill \quad 
\hfill \mathit{last}(r^{i-1}).\mathit{sec} \neq \mathit{last}(r^{i}).\mathit{sec} \hfill }$

\end{tabular}

\caption{Extracting knowledge from triggers}\label{table:rules:adversary:9}
\end{figure*}

\begin{figure*}

\centering
\begin{tabular}{c}
$\infer[\text{\begin{tabular}{c}Propagate Forward \\ Trigger Action\end{tabular}}]
{r, i \attMod \psi }
{
\hfill r, i-1 \attMod  \psi \hfill \quad
\hfill 1 < i \leq |r| \hfill \quad
\hfill r^{i} = r^{i-1}  \concat t \concat s \hfill  \quad
\hfill \mathit{invoker}(\mathit{last}(r^{i-1})) = u   \hfill \\ 
\hfill s = \langle \mathit{db}, U, \mathit{sec}, T, V, h, \langle t, \mathit{when}, \mathit{stmt} \rangle , \mathit{tr} \rangle \hfill  \quad
\hfill \mathit{Ex}(s) = \emptyset  \hfill \\ 
\hfill \mathit{secEx}(s) =\bot \hfill \quad
\hfill t=  \langle \mathit{id},\mathit{ow},  \mathit{ev}, R, \phi, \mathit{act},m\rangle\hfill \quad 
\hfill {\mathit{reviseBelief}(r^{i-1}, \psi,r^{i}) = \top}\hfill }$ \\
\\

$\infer[\text{\begin{tabular}{c}Propagate Backward \\ Trigger Action\end{tabular}}]
{r, i-1 \attMod   \psi }
{
\hfill r, i \attMod  \psi  \hfill \quad
\hfill 1 < i \leq |r| \hfill \quad
\hfill r^{i} = r^{i-1}  \concat t \concat s \hfill  \quad
\hfill \mathit{invoker}(\mathit{last}(r^{i-1})) = u   \hfill \\ 
\hfill s = \langle \mathit{db}, U, \mathit{sec}, T, V, h, \langle t, \mathit{when}, \mathit{stmt} \rangle , \mathit{tr} \rangle \hfill \quad
\hfill \mathit{Ex}(s) = \emptyset  \hfill \\ 
\hfill \mathit{secEx}(s) =\bot \hfill \quad
\hfill t=  \langle \mathit{id},\mathit{ow},  \mathit{ev}, R, \phi, \mathit{act},m\rangle\hfill \quad 
\hfill {\mathit{reviseBelief}(r^{i-1}, \psi,r^{i}) = \top}\hfill }$ \\
\\

$\infer[\text{\begin{tabular}{c}Propagate Forward \\ \texttt{INSERT} Trigger Action\end{tabular}}]
{r, i \attMod   \psi }
{
\hfill r, i-1 \attMod  \psi \hfill  \quad
\hfill r, i-1 \attMod   R(\overline{t})  \hfill \quad
\hfill 1 < i \leq |r| \hfill \quad
\hfill r^{i} = r^{i-1}  \concat t \concat s \hfill  \quad
\hfill \mathit{invoker}(\mathit{last}(r^{i-1})) = u  \hfill \\ 
\hfill s = \langle \mathit{db}, U, \mathit{sec}, T, V, h, \langle t, \mathit{when}, \langle \langle u', \mathtt{INSERT}, R, \overline{t}\rangle, \top, \top,\emptyset\rangle \rangle , \mathit{tr} \rangle \hfill \quad
\hfill \mathit{Ex}(s) = \emptyset  \hfill \\ 
\hfill \mathit{secEx}(s) =\bot \hfill \quad
\hfill t=  \langle \mathit{id},\mathit{ow},  \mathit{ev}, R', \phi, \mathit{act},m\rangle\hfill }$ \\
\\

$\infer[\text{\begin{tabular}{c}Propagate Forward \\ \texttt{DELETE} Trigger Action\end{tabular}}]
{r, i \attMod   \psi }
{
\hfill r, i-1 \attMod    \psi  \hfill \quad
\hfill r, i-1 \attMod   \neg R(\overline{t})  \hfill \quad
\hfill 1 < i \leq |r| \hfill \quad
\hfill r^{i} = r^{i-1}  \concat t \concat s  \hfill \quad
\hfill \mathit{invoker}(\mathit{last}(r^{i-1})) = u   \hfill \\ 
\hfill s = \langle \mathit{db}, U, \mathit{sec}, T, V, h, \langle t, \mathit{when}, \langle \langle u', \mathtt{DELETE}, R, \overline{t}\rangle, \top, \top,\emptyset\rangle \rangle , \mathit{tr} \rangle \hfill  \quad
\hfill \mathit{Ex}(s) = \emptyset  \hfill \\ 
\hfill \mathit{secEx}(s) =\bot \hfill \quad
\hfill t=  \langle \mathit{id},\mathit{ow},  \mathit{ev}, R', \phi, \mathit{act},m\rangle\hfill }$ \\
\\

$\infer[\text{\begin{tabular}{c}Propagate Backward \\ \texttt{INSERT} Trigger Action\end{tabular}}]
{r, i-1 \attMod     \psi }
{
\hfill r, i \attMod    \psi  \hfill \quad
\hfill r, i-1 \attMod    R(\overline{t})  \hfill \quad
\hfill 1 < i \leq |r| \hfill \quad
\hfill  r^{i} = r^{i-1}  \concat t \concat s  \hfill \quad
\hfill \mathit{invoker}(\mathit{last}(r^{i-1})) = u   \hfill \\ 
\hfill s = \langle \mathit{db}, U, \mathit{sec}, T, V, h, \langle t, \mathit{when}, \langle \langle u', \mathtt{INSERT}, R, \overline{t}\rangle, \top, \top,\emptyset\rangle \rangle , \mathit{tr} \rangle \hfill \quad
\hfill \mathit{Ex}(s) = \emptyset  \hfill \\ 
\hfill \mathit{secEx}(s) =\bot \hfill \quad
\hfill t=  \langle \mathit{id},\mathit{ow},  \mathit{ev}, R', \phi, \mathit{act},m\rangle\hfill }$ \\
\\

$\infer[\text{\begin{tabular}{c}Propagate Backward  \\  \texttt{DELETE} Trigger Action\end{tabular}}]
{r, i-1 \attMod  \psi }
{
\hfill r, i \attMod  \psi \hfill \quad
\hfill r, i-1 \attMod  \neg R(\overline{t})  \hfill \quad
\hfill 1 < i \leq |r| \hfill \quad
\hfill r^{i} = r^{i-1}  \concat t \concat s  \hfill \quad
\hfill \mathit{invoker}(\mathit{last}(r^{i-1})) = u  \hfill \\ 
\hfill s = \langle \mathit{db}, U, \mathit{sec}, T, V, h, \langle t, \mathit{when}, \langle \langle u', \mathtt{DELETE}, R, \overline{t}\rangle, \top, \top,\emptyset\rangle \rangle , \mathit{tr} \rangle \hfill \quad
\hfill \mathit{Ex}(s) = \emptyset  \hfill \\ 
\hfill \mathit{secEx}(s) =\bot \hfill \quad
\hfill t=  \langle \mathit{id},\mathit{ow},  \mathit{ev}, R', \phi, \mathit{act},m\rangle\hfill }$ 

\end{tabular}

\caption{Rules for propagating knowledge through triggers}\label{table:rules:adversary:10}
\end{figure*}

\begin{figure*}
\centering
\begin{tabular}{c}

$\infer[\text{\begin{tabular}{c} Trigger \\ FD \\\texttt{INSERT}\\ Disabled \\Backward\end{tabular}}]
{r, i-1 \attMod \neg \phi [\overline{x}^{|R'|} \mapsto \mathit{tpl}(\mathit{last}(r^{i-1}))] }
{
\hfill 1 < i \leq |r| \hfill \quad
\hfill r^{i} = r^{i-1}  \concat t \concat s  \hfill \quad
\hfill \mathit{invoker}(\mathit{last}(r^{i-1})) = u \hfill  \\
\hfill s = \langle \mathit{db}, U, \mathit{sec}, T, V, h, \langle t, \mathit{when}, \mathit{stmt} \rangle, \mathit{tr}  \rangle   \hfill \\ 
\hfill \mathit{secEx}(s) = \bot \hfill \quad
\hfill \mathit{Ex}(s) = \emptyset \hfill \quad
\hfill t=  \langle \mathit{id},\mathit{ow},  \mathit{ev}, R, \phi, \mathit{act},m\rangle\hfill \\
\hfill \mathit{getAction}(\mathit{act},\mathit{user}(\mathit{last}(r^{i-1}),t), \mathit{tpl}(\mathit{last}(r^{i-1})) = \langle u', \mathtt{INSERT}, R, (\overline{v}, \overline{w}, \overline{q}) \rangle \hfill \\
\hfill r, i-1 \attMod \exists \overline{y},\overline{z}. R(\overline{v}, \overline{y}, \overline{z}) \wedge \overline{y} \neq \overline{w} \hfill \\
\hfill \forall \overline{x}, \overline{y}, \overline{y}', \overline{z}, \overline{z}'.\, (R(\overline{x}, \overline{y},  \overline{z}) \wedge R(\overline{x}, \overline{y}',  \overline{z}') )\Rightarrow \overline{y} = \overline{y}' \in \Gamma \hfill}$ \\
\\

$\infer[\text{\begin{tabular}{c} Trigger \\ ID \\\texttt{INSERT}\\  Disabled\\ Backward\end{tabular}}]
{r, i-1 \attMod \neg \phi [\overline{x}^{|R'|} \mapsto \mathit{tpl}(\mathit{last}(r^{i-1}))] }
{
\hfill 1 < i \leq |r| \hfill \quad
\hfill r^{i} = r^{i-1}  \concat t \concat s  \hfill \quad
\hfill \mathit{invoker}(\mathit{last}(r^{i-1}))= u  \hfill \\
\hfill s = \langle \mathit{db}, U, \mathit{sec}, T, V, h, \langle t, \mathit{when}, \mathit{stmt} \rangle, \mathit{tr}  \rangle   \hfill \\ 
\hfill \mathit{secEx}(s) = \bot \hfill \quad
\hfill \mathit{Ex}(s) = \emptyset \hfill \quad
\hfill t=  \langle \mathit{id},\mathit{ow},  \mathit{ev}, R, \phi, \mathit{act},m\rangle\hfill \\
\hfill \mathit{getAction}(\mathit{act},\mathit{user}(\mathit{last}(r^{i-1}),t), \mathit{tpl}(\mathit{last}(r^{i-1})) = \langle u', \mathtt{INSERT}, R, (\overline{v}, \overline{w}) \rangle \hfill \\
\hfill r, i-1 \attMod \forall \overline{x}, \overline{y}.\, S(\overline{x},\overline{y}) \Rightarrow \overline{x} \neq \overline{v} \hfill \quad
\hfill \forall \overline{x}, \overline{z}.\, R(\overline{x}, \overline{z}) \Rightarrow \exists \overline{w}.\, S(\overline{x}, \overline{w}) \in \Gamma \hfill}$ \\
\\

$\infer[\text{\begin{tabular}{c} Trigger \\ ID \\\texttt{DELETE} \\ Disabled\\ Backward\end{tabular}}]
{r, i-1 \attMod \neg \phi [\overline{x}^{|R'|} \mapsto \mathit{tpl}(\mathit{last}(r^{i-1}))] }
{
\hfill 1 < i \leq |r| \hfill \quad
\hfill r^{i} = r^{i-1}  \concat t \concat s  \hfill \quad
\hfill \mathit{invoker}(\mathit{last}(r^{i-1}))= u \hfill \\
\hfill s = \langle \mathit{db}, U, \mathit{sec}, T, V, h, \langle t, \mathit{when}, \mathit{stmt} \rangle, \mathit{tr}  \rangle   \hfill \\ 
\hfill \mathit{secEx}(s) = \bot \hfill \quad
\hfill \mathit{Ex}(s) = \emptyset \hfill \quad
\hfill t=  \langle \mathit{id},\mathit{ow},  \mathit{ev}, R, \phi, \mathit{act},m\rangle\hfill \\
\hfill \mathit{getAction}(\mathit{act},\mathit{user}(\mathit{last}(r^{i-1}),t), \mathit{tpl}(\mathit{last}(r^{i-1})) = \langle u', \mathtt{DELETE}, R, (\overline{v}, \overline{w}) \rangle \hfill \\
\hfill r, i-1 \attMod \exists \overline{z}.\, S(\overline{v},\overline{z}) \wedge \forall \overline{y}.\,(R(\overline{x},\overline{y}) \Rightarrow \overline{y} = \overline{w} ) \hfill \quad
\hfill \forall \overline{x}, \overline{z}.\, S(\overline{x}, \overline{z}) \Rightarrow \exists \overline{w}.\, R(\overline{x}, \overline{w}) \in \Gamma \hfill}$ \\
\\

$\infer[\text{\begin{tabular}{c}Trigger \\ \texttt{GRANT} \\ Disabled \\ Backward\end{tabular}}]
{r, i-1 \attMod  \neg \phi[\overline{x}^{|R'|} \mapsto \mathit{tpl}(\mathit{last}(r^{i-1}))]  }
{ 
\hfill 1 < i \leq |r| \hfill \quad
\hfill r^{i} = r^{i-1}  \concat t \concat s  \hfill \quad
\hfill \mathit{invoker}(\mathit{last}(r^{i-1}))= u  \hfill \\ 
\hfill s = \langle \mathit{db}, U, \mathit{sec}, T, V, h, \langle t, \mathit{when}, \mathit{stmt} \rangle, \mathit{tr}  \rangle   \hfill \\ 
\hfill \mathit{secEx}(s) = \bot \hfill \quad
\hfill \mathit{Ex}(s) = \emptyset \hfill \quad
\hfill  t=  \langle \mathit{id},\mathit{ow},  \mathit{ev}, R', \phi, \mathit{act},m\rangle\hfill \\ 
\hfill \mathit{getAction}(\mathit{act},\mathit{user}(\mathit{last}(r^{i-1}),t), \mathit{tpl}(\mathit{last}(r^{i-1})) = \langle \mathit{op}, u'', p, u' \rangle  \hfill \\
\hfill u',u'' \in U \hfill \quad
\hfill \mathit{op} \in \{\oplus,\oplus^{*}\} \hfill \quad 
\hfill \langle \mathit{op}, u'', p, u' \rangle  \not\in \mathit{last}(r^{i-1}).\mathit{sec}  \hfill \quad
\hfill \mathit{last}(r^{i-1}).\mathit{sec} = \mathit{sec} \hfill }$\\\\

$\infer[\text{\begin{tabular}{c}Trigger \\ \texttt{REVOKE} \\ Disabled \\ Backward\end{tabular}}]
{r, i-1 \attMod  \neg \phi[\overline{x}^{|R'|} \mapsto \mathit{tpl}(\mathit{last}(r^{i-1}))]  }
{ 
\hfill 1 < i \leq |r| \hfill \quad
\hfill r^{i} = r^{i-1}  \concat t \concat s \hfill  \quad
\hfill \mathit{invoker}(\mathit{last}(r^{i-1}))= u  \hfill \\ 
\hfill s = \langle \mathit{db}, U, \mathit{sec}, T, V, h, \langle t, \mathit{when}, \mathit{stmt} \rangle, \mathit{tr}  \rangle   \hfill \\ 
\hfill \mathit{secEx}(s) = \bot \hfill \quad
\hfill \mathit{Ex}(s) = \emptyset \hfill \quad
\hfill  t=  \langle \mathit{id},\mathit{ow},  \mathit{ev}, R', \phi, \mathit{act},m\rangle\hfill \\ 
\hfill \mathit{getAction}(\mathit{act},\mathit{user}(\mathit{last}(r^{i-1}),t), \mathit{tpl}(\mathit{last}(r^{i-1})) = \langle \ominus, u'', p, u' \rangle  \hfill \\
\hfill u',u'' \in U \hfill \quad
\hfill \mathit{op} \in \{\oplus,\oplus^{*}\} \hfill \quad 
\hfill \langle \mathit{op}, u'', p, u' \rangle  \in \mathit{last}(r^{i-1}).\mathit{sec}  \hfill \quad
\hfill \mathit{last}(r^{i-1}).\mathit{sec} = \mathit{sec} \hfill }$\\\\
\end{tabular}

\caption{Extracting knowledge from triggers}\label{table:rules:adversary:12}
\end{figure*}

\begin{figure*}

\centering
\begin{tabular}{c}

$\infer[\text{\begin{tabular}{c}Trigger \\\texttt{INSERT} \\FD\\ Exception\end{tabular}}]
{r, i-1 \attMod  \exists \overline{y},\overline{z}.\, R(\overline{v}, \overline{y}, \overline{z}) \wedge \overline{y} \neq \overline{w} }
{
\hfill 1 < i \leq |r| \hfill \quad
\hfill r^{i} = r^{i-1}  \concat t \concat s \hfill  \quad
\hfill \mathit{invoker}(\mathit{last}(r^{i-1}))= u \hfill  \\ 
\hfill s = \langle \mathit{db}, U, \mathit{sec}, T, V, h, \langle t, \mathit{when}, \langle \langle u', \texttt{INSERT}, R, \overline{t}\rangle, \top, \top, E\rangle  , \mathit{tr} \rangle   \hfill \\ 
\hfill \mathit{secEx}(s) =\bot \hfill \quad
\hfill t=  \langle \mathit{id},\mathit{ow},  \mathit{ev}, R', \phi, \mathit{act},m\rangle\hfill \\ 
\hfill (\forall \overline{x}, \overline{y}, \overline{y}', \overline{z}, \overline{z}'.\,( (R(\overline{x}, \overline{y},  \overline{z}) \wedge R(\overline{x}, \overline{y}',  \overline{z}') )\Rightarrow \overline{y} = \overline{y}') \in \mathit{Ex}(s) \hfill \quad
\hfill \overline{t} = (\overline{v}, \overline{w}, \overline{q})\hfill }$ \\
\\

$\infer[\text{\begin{tabular}{c} Trigger \\\texttt{INSERT}\\ ID \\Exception\end{tabular}}]
{r, i-1 \attMod  \forall \overline{x}, \overline{y}.\, S(\overline{x},\overline{y}) \Rightarrow \overline{x} \neq \overline{v} }
{
\hfill 1 < i \leq |r| \hfill \quad
\hfill r^{i} = r^{i-1}  \concat t \concat s  \hfill \quad
\hfill \mathit{invoker}(\mathit{last}(r^{i-1}))= u \hfill  \\ 
\hfill s = \langle \mathit{db}, U, \mathit{sec}, T, V, h, \langle t, \mathit{when}, \langle \langle u', \texttt{INSERT}, R, \overline{t}\rangle, \top, \top, E\rangle  , \mathit{tr} \rangle   \hfill \\ 
\hfill \mathit{secEx}(s) =\bot \hfill \quad
\hfill t=  \langle \mathit{id},\mathit{ow},  \mathit{ev}, R', \phi, \mathit{act},m\rangle\hfill \\
 \hfill (\forall \overline{x}, \overline{z}.\,( R(\overline{x}, \overline{z}) \Rightarrow \exists \overline{w}.\, S(\overline{x}, \overline{w}) ) \in \mathit{Ex}(s) \hfill \quad
\hfill \overline{t} = (\overline{v}, \overline{w})\hfill }$ \\
\\

$\infer[\text{\begin{tabular}{c} Trigger \\\texttt{DELETE}\\ ID \\Exception\end{tabular}}]
{r, i-1 \attMod    \exists \overline{z}.\, S(\overline{v},\overline{z}) \wedge \forall \overline{y}.\,(R(\overline{v},\overline{y}) \Rightarrow \overline{y} = \overline{w} ) }
{
\hfill 1 < i \leq |r| \hfill \quad
\hfill r^{i} = r^{i-1}  \concat t \concat s  \hfill \quad
\hfill \mathit{invoker}(\mathit{last}(r^{i-1}))= u  \hfill \\ 
\hfill s = \langle \mathit{db}, U, \mathit{sec}, T, V, h, \langle t, \mathit{when}, \langle \langle u', \texttt{DELETE}, R, \overline{t}\rangle, \top, \top, E\rangle  , \mathit{tr} \rangle   \hfill \\
\hfill \mathit{secEx}(s) =\bot \hfill \quad
\hfill t=  \langle \mathit{id},\mathit{ow},  \mathit{ev}, R', \phi, \mathit{act},m\rangle\hfill \\ 
\hfill (\forall \overline{x}, \overline{z}.\,( S(\overline{x}, \overline{z}) \Rightarrow \exists \overline{w}.\, R(\overline{x}, \overline{w}) ) \in \mathit{Ex}(s) \hfill \quad
\hfill \overline{t} = (\overline{v}, \overline{w})\hfill }$ \\
\\

$\infer[\text{\begin{tabular}{c} Trigger \\ Exception\end{tabular}}]
{r, i-1 \attMod \phi [\overline{x}^{|R'|} \mapsto \mathit{tpl}(\mathit{last}(r^{i-1}))] }
{
\hfill 1 < i \leq |r| \hfill \quad
\hfill r^{i} = r^{i-1}  \concat t \concat s \hfill \quad
\hfill \mathit{invoker}(\mathit{last}(r^{i-1}))= u  \hfill \\ 
\hfill s = \langle \mathit{db}, U, \mathit{sec}, T, V, h, \langle t, \langle \langle u', \texttt{SELECT}, \phi[\overline{x} \mapsto \mathit{tpl}(\mathit{last}(r^{i-1}))]\rangle, \top, \top, \emptyset\rangle, \mathit{stmt} , \mathit{tr} \rangle   \hfill \\
\hfill \mathit{secEx}(s) = \top \vee \mathit{Ex}(s) \neq \emptyset \hfill \quad
\hfill t=  \langle \mathit{id},\mathit{ow},  \mathit{ev}, R, \phi, \mathit{act},m\rangle\hfill }$ \\
\\

$\infer[\text{\begin{tabular}{c} Trigger \\ \texttt{INSERT}\\ Exception\end{tabular}}]
{r, i-1 \attMod \neg R(\overline{t}) }
{
\hfill 1 < i \leq |r| \hfill \quad
\hfill r^{i} = r^{i-1}  \concat t \concat s  \hfill \quad
\hfill \mathit{invoker}(\mathit{last}(r^{i-1}))= u \hfill \\ 
\hfill s = \langle \mathit{db}, U, \mathit{sec}, T, V, h, \langle t, \langle \langle u', \texttt{SELECT}, \phi\rangle, \top, \top, \emptyset\rangle,  \langle \langle u', \texttt{INSERT}, R, \overline{t}\rangle, \mathit{res}, \mathit{aC}, E\rangle , \mathit{tr} \rangle   \hfill \\ 
\hfill\mathit{secEx}(s) = \bot \hfill \quad 
\hfill \mathit{Ex}(s) \neq \emptyset \hfill \quad
\hfill t=  \langle \mathit{id},\mathit{ow},  \mathit{ev}, R', \phi, \mathit{act},m\rangle\hfill }$ \\
\\

$\infer[\text{\begin{tabular}{c} Trigger \\ \texttt{DELETE} \\ Exception\end{tabular}}]
{r, i-1 \attMod R(\overline{t}) }
{
\hfill 1 < i \leq |r| \hfill \quad
\hfill r^{i} = r^{i-1}  \concat t \concat s  \hfill \quad
\hfill \mathit{invoker}(\mathit{last}(r^{i-1}))= u \hfill \\ 
\hfill s = \langle \mathit{db}, U, \mathit{sec}, T, V, h, \langle t, \langle \langle u', \texttt{SELECT}, \phi\rangle, \top, \top, \emptyset\rangle,  \langle \langle u', \texttt{DELETE}, R, \overline{t}\rangle, \mathit{res}, \mathit{aC}, E\rangle , \mathit{tr} \rangle   \hfill \\ 
\hfill \mathit{secEx}(s) = \bot \hfill \quad 
\hfill \mathit{Ex}(s) \neq \emptyset \hfill \quad
\hfill t=  \langle \mathit{id},\mathit{ow},  \mathit{ev}, R', \phi, \mathit{act},m\rangle\hfill }$ \\
\\

$\infer[\text{\begin{tabular}{c} Trigger \\ Rollback \\ \texttt{INSERT}\end{tabular}}]
{r, i \attMod   \neg R(\overline{t})  }
{
\hfill n+1 < i \leq |r| \hfill \quad
\hfill s_{1}, s_{2}, \ldots, s_{n} \in \Omega_{M}  \hfill \quad
\hfill t_{1}, \ldots, t_{n} \in {\cal TRIGGER}_{D}\hfill \\ 
\hfill \mathit{secEx}(s_{n}) = \top \vee \mathit{Ex}(s_{n}) \neq \emptyset  \hfill \quad
\hfill r^{i} = r^{i-n-1}  \concat  \langle u, \mathtt{INSERT}, R, \overline{t}\rangle \concat s_{1} \concat t_{1} \concat s_{2} \concat \ldots \concat t_{n} \concat s_{n}  \hfill \\ 
\hfill s_{n} = \langle \mathit{db}, U, \mathit{sec}, T, V, h, \langle t_{n}, \mathit{when}, \mathit{stmt} \rangle, \langle \epsilon, \epsilon, \epsilon, \epsilon \rangle \rangle \hfill  
} $ \\
\\

$\infer[\text{\begin{tabular}{c} Trigger \\ Rollback \\ \texttt{DELETE}\end{tabular}}]
{r, i \attMod   R(\overline{t})  }
{
\hfill n+1 < i \leq |r| \hfill \quad
\hfill s_{1}, s_{2}, \ldots, s_{n} \in \Omega_{M} \hfill  \quad
\hfill t_{1}, \ldots, t_{n} \in {\cal TRIGGER}_{D}\hfill \\ 
\hfill \mathit{secEx}(s_{n}) = \top \vee \mathit{Ex}(s_{n}) \neq \emptyset  \hfill \quad
\hfill r^{i} = r^{i-n-1}  \concat \langle u, \mathtt{DELETE}, R, \overline{t}\rangle \concat s_{1} \concat t_{1} \concat s_{2} \concat \ldots \concat t_{n} \concat s_{n}  \hfill \\ 
\hfill s_{n} = \langle \mathit{db}, U, \mathit{sec}, T, V, h, \langle t_{n}, \mathit{when}, \mathit{stmt} \rangle, \langle \epsilon, \epsilon, \epsilon, \epsilon \rangle \rangle  \hfill 
} $ \\
\\

\end{tabular}

\caption{Extracting knowledge from trigger's exceptions}\label{table:rules:adversary:11}
\end{figure*}

\clearpage
\section{Database Integrity}\label{app:eop}

In this section, we present the formal definition of the $\auth$ relation, which is used to define \correctness{}. 
Let $P = \langle M, f\rangle$ be an \accessControlConfiguration{}, where $M = \langle D, \Gamma\rangle$ is a system configuration and $f$ is an $M$-\acf{}.
We denote by   ${\cal VIEW}_{D}^{\mathit{owner}}$ the set of all $D$-views with the owner's privileges, i.e., ${\cal VIEW}_{D}^{\mathit{owner}} = \{\langle V, o, q, m\rangle \in {\cal VIEW}_D \;|\; m = O\}$, and by ${\cal PRIV}_{D}^{\mathtt{SELECT}, {\cal VIEW}^{\mathit{owner}}_{D}}$ the set of privileges $\{\mathit{pr} \in {\cal PRIV}_{D} \,|\, \mathit{pr} = \langle  \mathtt{SELECT}, V\rangle \wedge V \in {\cal VIEW}_{D}^{\mathit{owner}}\}$.
Given a state an $M$-state $s = \langle \mathit{db}, U,\mathit{sec},T,V , c \rangle $ and a revoke command $r = \langle \ominus, u, p, u' \rangle$, we denote by $\mathit{applyRev}(s, r)$ the state $\langle \mathit{db}, U,\mathit{revoke}(\mathit{sec}, u,p,u'),T,  V ,c\rangle$ obtained by executing the \texttt{REVOKE} command.
Given a system's configuration $M = \langle D, \Gamma\rangle$, a query $q$, a set of views $V$ with owner's privileges, and a set of tables $T$, we say that $V$ and $T$ determine $q$, denoted by $\mathit{determines}_{M}(T,  V,q)$, iff for all $\mathit{db} \in \Omega_{D}^{\Gamma}$,  for all $\mathit{db}_{1}$, $\mathit{db}_{2} \in \llbracket \mathit{db}\rrbracket_{V,T}$, $[q]^{\mathit{db}_{1}} = [q]^{\mathit{db}_{2}}$,  where $\llbracket \mathit{db}\rrbracket_{V,T}$ denotes the set $\{\mathit{db}' \in \Omega_{D}^{\Gamma}\,|\, \forall T_{1} \in T.\, T_{1}(\mathit{db}) = T_{1}(\mathit{db}') \wedge \forall V_{1} \in V.\, V_{1}(\mathit{db}) = V_{1}(\mathit{db}')\}$.
Further details on the concept of determinacy can be found in \cite{nash2010views}.
Finally, the relation $\auth \subseteq \Omega_{M} \times ({\cal A}_{D,{\cal U}} \cup {\cal TRIGGER}_{D})$ is the smallest relation satisfying the inference rules given in Figure~\ref{figure:eop:auth:full}.

\begin{figure*}
\centering
\begin{tabular}{c c}
$
\infer[\begin{tabular}{c}\texttt{INSERT}\\\texttt{DELETE}\end{tabular}]
{ \langle\mathit{db}, U, \mathit{sec}, T,V,c \rangle \auth \langle u, \mathit{op}', R, \overline{t} \rangle
}
{
\hfill u,u' \in U \hfill \quad 
\hfill R \in D \hfill \quad 
\hfill \overline{t} \in \mathbf{dom}^{|R|}  \hfill \quad
\hfill g = \langle \mathit{op}, u, \langle \mathit{op}', R\rangle, u' \rangle \hfill \quad
\hfill g \in \mathit{sec}  \hfill \\
\hfill \langle\mathit{db}, U, \mathit{sec}, T,V,c \rangle \auth g \hfill \quad 
\hfill \mathit{op}' \in \{\mathtt{INSERT}, \mathtt{DELETE}\} \hfill
}
$
\\
$
\infer[\begin{tabular}{c}\texttt{CREATE}\\\texttt{VIEW}\end{tabular}]
{ \langle\mathit{db}, U, \mathit{sec}, T,V,c \rangle \auth \langle u, \mathtt{CREATE}, v \rangle
}
{
\hfill u,u' \in U \hfill \quad 
\hfill v \in {\cal VIEW}_{D} \hfill \quad 
\hfill g = \langle \mathit{op}, u, \langle \mathtt{CREATE\ VIEW}\rangle, u' \rangle \hfill \\
\hfill g \in \mathit{sec}  \hfill \quad
\hfill \langle\mathit{db}, U, \mathit{sec}, T,V,c \rangle \auth g \hfill 
}
$ 
 
$
\infer[\begin{tabular}{c}\texttt{CREATE}\\\texttt{TRIGGER}\end{tabular}]
{ \langle\mathit{db}, U, \mathit{sec}, T,V,c \rangle \auth \langle u, \mathtt{CREATE}, t \rangle
}
{
\hfill u,u' \in U \hfill \quad 
\hfill t = \langle \mathit{id},\mathit{ow},  \mathit{ev}, R, \phi, \mathit{stmt}, \mathit{m}\rangle \hfill \\
\hfill t \in {\cal TRIGGER}_{D} \hfill \\ 
\hfill g = \langle \mathit{op}, u, \langle \texttt{CREATE TRIGGER}, R\rangle, u' \rangle \hfill \\
\hfill g \in \mathit{sec}  \hfill \quad
\hfill \langle\mathit{db}, U, \mathit{sec}, T,V,c \rangle \auth g \hfill 
}
$
\\
\\
$
\infer[\begin{tabular}{c}\texttt{INSERT}\\\texttt{DELETE}\\\text{admin}\end{tabular}]
{ \langle\mathit{db}, U, \mathit{sec}, T,V,c \rangle \auth \langle \mathit{admin}, \mathit{op}', R, \overline{t} \rangle
}
{\
\hfill R \in D \hfill \quad 
\hfill \overline{t} \in \mathbf{dom}^{|R|}  \hfill \quad
\hfill \mathit{op}' \in \{\mathtt{INSERT}, \mathtt{DELETE}\} \hfill
}
$
 
$
\infer[\begin{tabular}{c}\texttt{CREATE}\\\texttt{VIEW}\\\text{admin}\end{tabular}]
{ \langle\mathit{db}, U, \mathit{sec}, T,V,c \rangle \auth \langle \mathit{admin}, \mathtt{CREATE}, v \rangle
}
{
\hfill v \in {\cal VIEW}_{D} \hfill \quad
\hfill v = \langle \mathit{id}, \admin, q, m \rangle \hfill
}
$
\\
\\
$
\infer[\begin{tabular}{c}\texttt{CREATE}\\\texttt{TRIGGER}\\\text{admin}\end{tabular}]
{ \langle\mathit{db}, U, \mathit{sec}, T,V,c \rangle \auth \langle \mathit{admin}, \mathtt{CREATE}, t \rangle
}
{
\hfill t = \langle \mathit{id},\admin, \mathit{ev}, R, \phi, \mathit{stmt}, \mathit{m}\rangle \hfill \\
\hfill t \in {\cal TRIGGER}_{D} \hfill 
}
$
 
$
\infer[\mathtt{REVOKE}]
{ \langle\mathit{db}, U, \mathit{sec}, T,V,c \rangle \auth \langle \ominus, u, \mathit{priv}, u' \rangle
}
{
\hfill u,u' \in U \hfill \quad 
\hfill \mathit{priv} \in {\cal PRIV}_{D} \hfill \\
\hfill s = \langle\mathit{db}, U, \mathit{sec}, T,V,c  \rangle \hfill \quad 
\hfill s' = \langle\mathit{db}, U, \mathit{sec}', T,V,c  \rangle \hfill \\
\hfill s' = \mathit{applyRev}(s,\langle \ominus, u, p, u' \rangle) \hfill \\
\hfill \forall g \in \mathit{sec}'.\, s' \auth g \hfill
}
$
\\
\\
$
\infer[\mathtt{GRANT}\text{-1}]
{ \langle\mathit{db}, U, \mathit{sec}, T,V,c \rangle \auth \langle \mathit{op}, u, \mathit{priv}, u' \rangle
}
{
\hfill u,u', u'' \in U \hfill \quad 
\hfill \mathit{op} \in \{\oplus,\oplus^{*}\} \hfill \quad 
\hfill \mathit{priv} \in {\cal PRIV}_{D} \hfill \\
\hfill g = \langle \oplus^{*}, u', \mathit{priv}, u''\rangle  \hfill \quad
\hfill g \in \mathit{sec} \hfill \quad
\hfill  \langle\mathit{db}, U, \mathit{sec}, T,V,c \rangle \auth g \hfill
}
$
 
$
\infer[\mathtt{GRANT}\text{-2}]
{ \langle\mathit{db}, U, \mathit{sec}, T,V,c \rangle \auth \langle \mathit{op}, u, \mathit{priv}, \mathit{admin} \rangle
}
{
\hfill u \in U \hfill \quad 
\hfill \mathit{op} \in \{\oplus,\oplus^{*}\} \hfill \\ 
\hfill \mathit{priv} \in {\cal PRIV}_{D} \setminus {\cal PRIV}_{D}^{\mathtt{SELECT}, {\cal VIEW}^{\mathit{owner}_{D}}} \hfill 
}
$

\\
\\
$
\infer[\mathtt{GRANT}\text{-3}]
{ \langle\mathit{db}, U, \mathit{sec}, T,V,c \rangle \auth \langle \mathit{op}, u, \mathit{priv}, \mathit{owner} \rangle
}
{
\hfill u, \mathit{owner} \in U \hfill \quad 
\hfill \mathit{op} \in \{\oplus,\oplus^{*}\} \hfill \quad 
\hfill \mathit{priv} = \langle \mathtt{SELECT}, v\rangle \hfill \\
\hfill v = \langle \mathit{id}, \mathit{owner}, q, O \rangle \hfill \quad
\hfill v \in V \hfill \quad
\hfill V'\subseteq V \cap {\cal VIEW}_{D}^{\mathit{owner}}\hfill \\
\hfill T'\subseteq D \hfill\quad
\hfill \mathit{determines}_{M}(T',V', q) \hfill \\
\hfill \mathit{hasAccess}(\langle\mathit{db}, U, \mathit{sec}, T,V,c \rangle, V' \cup T', \mathit{owner}, \oplus^{*}) \hfill
}$
 
$
\infer[\mathtt{GRANT}\text{-4}]
{ \langle\mathit{db}, U, \mathit{sec}, T,V,c \rangle \auth \langle \mathit{op}, u, \mathit{priv}, \mathit{admin} \rangle
}
{
\hfill u, \mathit{owner} \in U \hfill \quad 
\hfill \mathit{op} \in \{\oplus,\oplus^{*}\} \hfill \quad 
\hfill \mathit{priv} = \langle \mathtt{SELECT}, v\rangle \hfill \\
\hfill v = \langle \mathit{id}, \mathit{owner}, q, O \rangle \hfill \quad
\hfill v \in V \hfill \quad
\hfill \mathit{owner} \neq \mathit{admin} \hfill \\
\hfill V'\subseteq V \cap {\cal VIEW}_{D}^{\mathit{owner}}\hfill \quad
\hfill T'\subseteq D \hfill\quad
\hfill \mathit{determines}_{M}(T',V', q) \hfill \\
\hfill \mathit{hasAccess}(\langle\mathit{db}, U, \mathit{sec}, T,V,c \rangle, V' \cup T', \mathit{owner}, \oplus) \hfill
}$
\\
\\
$
\infer[\mathtt{GRANT}\text{-5}]
{ \langle\mathit{db}, U, \mathit{sec}, T,V,c \rangle \auth \langle \mathit{op}, u, \mathit{priv}, \mathit{owner} \rangle
}
{
\hfill u, \mathit{owner} \in U \hfill \quad 
\hfill \mathit{op} \in \{\oplus,\oplus^{*}\} \hfill \quad
\hfill v \in V \hfill \\ 
\hfill \mathit{priv} = \langle \mathtt{SELECT}, v\rangle \hfill \quad
\hfill v = \langle \mathit{id}, \mathit{owner}, q, A \rangle \hfill \quad
}$
 
$
\infer[\texttt{ADD USER}]
{ \langle\mathit{db}, U, \mathit{sec}, T,V,c \rangle \auth \langle u', \mathtt{ADD\_USER}, u \rangle
}
{
\hfill u \in {\cal U} \hfill \quad \hfill u' = \admin \hfill
}
$
\\
\\
$
\infer[\begin{tabular}{c}\texttt{EXECUTE}\\\texttt{TRIGGER}-1\end{tabular}]
{ \langle\mathit{db}, U, \mathit{sec}, T,V,c \rangle \auth t
}
{
\hfill t = \langle \mathit{id},\mathit{ow},  \mathit{ev}, R, \phi, \mathit{stmt}, O\rangle \hfill \quad
\hfill t \in T \hfill \\
\hfill \langle\mathit{db}, U, \mathit{sec}, T,V,c \rangle \auth \mathit{getAction}(\mathit{stmt},\mathit{ow},\mathit{tpl}(c)) \hfill \\
\hfill [\phi[\overline{x}^{|R|} \mapsto \mathit{tpl}(c)]]^{\mathit{db}} = \top \hfill  
}
$
\\
\\

$
\infer[\begin{tabular}{c}\texttt{EXECUTE}\\\texttt{TRIGGER}-2\end{tabular}]
{ \langle\mathit{db}, U, \mathit{sec}, T,V,c \rangle \auth t
}
{
\hfill t = \langle \mathit{id},\mathit{ow},  \mathit{ev}, R, \phi, \mathit{stmt}, A\rangle \hfill \quad
\hfill t \in T \hfill \\ 
\hfill \langle\mathit{db}, U, \mathit{sec}, T,V,c \rangle \auth \mathit{getAction}(\mathit{stmt},\mathit{invoker}(c),\mathit{tpl}(c)) \hfill \\
\hfill \langle\mathit{db}, U, \mathit{sec}, T,V,c \rangle \auth \mathit{getAction}(\mathit{stmt},\mathit{ow},\mathit{tpl}(c)) \hfill \\
\hfill [\phi[\overline{x}^{|R|} \mapsto \mathit{tpl}(c)]]^{\mathit{db}} = \top \hfill  
}
$

\\
\\
$
\infer[\texttt{SELECT}]
{ \langle\mathit{db}, U, \mathit{sec}, T,V,c \rangle \auth \langle u, \mathtt{SELECT}, q \rangle
}
{
\hfill u \in U \quad \hfill q \in \mathit{RC} \hfill
}
$
 
$
\infer[\begin{tabular}{c}\texttt{EXECUTE}\\\texttt{TRIGGER}-3\end{tabular}]
{ \langle\mathit{db}, U, \mathit{sec}, T,V,c \rangle \auth t
}
{
\hfill t = \langle \mathit{id},\mathit{ow},  \mathit{ev}, R, \phi, \mathit{stmt}, m\rangle \hfill \\
\hfill t \in T \hfill \quad
\hfill [\phi[\overline{x}^{|R|} \mapsto \mathit{tpl}(c)]]^{\mathit{db}} =\bot \hfill  
}
$

\\
\\

$
\mathit{hasAccess}(\langle\mathit{db}, U, \mathit{sec}, T,V,c \rangle, S, u, \mathit{op}) =  \left\{ 
  \begin{array}{l l}
  \top & \text{if }  \mathit{u} \neq \mathit{admin} \wedge \forall v \in S.\,\exists u'' \in U, g \in \mathit{sec}, \mathit{op}' \in \{\mathit{op},\oplus^{*}\}.\\
  & \qquad  g =  \langle \mathit{op}', u, \langle \mathtt{SELECT}, v \rangle, u'' \rangle \wedge \langle\mathit{db}, U, \mathit{sec}, T,V,c \rangle \auth g \\
    \top & \text{if }  \mathit{u} = \mathit{admin} \wedge \forall v \in S.\,\exists u'' \in U, \mathit{op}' \in \{\mathit{op},\oplus^{*}\}.\\
  & \qquad   \langle\mathit{db}, U, \mathit{sec}, T,V,c \rangle \auth \langle \mathit{op}', u, \langle \mathtt{SELECT}, v \rangle, u'' \rangle  \\
  \bot & \mathit{otherwise}
  \end{array}\right.
$
\\\\
\end{tabular}

\caption{Definition of the $\auth$ relation}\label{figure:eop:auth:full}
\end{figure*}

\clearpage
\section{Data Confidentiality}\label{app:indistinguishability}

In this section, we  define indistinguishability of runs.
We first formalize the notion of $u$-projection.
Afterwards, we define the notion of consistency between $u$-projections.
Finally, we formalize the indistinguishability relation $\cong_{P,u}$.

We recall that, given a run $r$, we denote by $r^i$, where $1 \leq i \leq |r|$, the prefix of $r$ obtained by truncating $r$ at the $i$-th state.
In the rest of the paper, we use $r^0$ to denote the empty run.

\subsection{Projections}
Let $P = \langle M,f\rangle$ be an \accessControlConfiguration{},  where $M = \langle D,\Gamma\rangle$ and $f$ is an $M$-\acf{},  $L$ be the $P$-LTS, and $u$ be a user in ${\cal U}$.
Given a run $r \in \mathit{traces}(L)$, its \emph{$u$-projection}, denoted by $r|_u$, is obtained by 
(1) replacing each action not issued by $u$ with $*$, 
(2) replacing each trigger whose invoker is not $u$ with $*$, and 
(3) replacing all non-empty sequences of $*$-transitions with a single $*$-transition.
Note that the $*$-transitions in the $u$-projections represent whether $u$'s actions are executed consecutively or not.
With a slight abuse of notation, we extend all the notation we use for runs also to $u$-projections.
For instance, $r|_{u}^{i}$ denotes the prefix  obtained by truncating $r|_{u}$ at its $i$-th state.
Formally, the $u$-projection $r|_{u}$ is defined as $\mathit{c}(\mathit{v}(r,u))$.
The function $\mathit{v}$ takes as input a run $r$ and a user $u$ and returns another run in which all non-$u$ actions are replaced with $*$.
\[
\mathit{v}(r,u) = \left\{ 
  \begin{array}{l l}
	\mathit{v}(r^{|r|-1},u) \concat a \concat s & \text{if } r = r^{|r|-1}\concat a \concat s \text{ and } s \in \Omega_{M}\\
												&	 \text{and } a \in {\cal A}_{D,u} \text{ and } |r| > 1\\ 
 	\mathit{v}(r^{|r|-1},u) \concat * \concat s & \text{if } r = r^{|r|-1}\concat a \concat s \text{ and } s \in \Omega_{M}\\
												&	\text{and } a \in {\cal A}_{D,u'}  \text{ and } u'\neq u \text{ and}\\
												& |r| > 1\\
	\mathit{v}(r^{|r|-1},u) \concat t \concat s & \text{if } r = r^{|r|-1}\concat t \concat s \text{ and } s \in \Omega_{M}\\
												& 	\text{and } t \in {\cal TRIGGER}_{D} \text{ and }\\
												&	\mathit{invoker}(\mathit{last}(r^{|r|-1})) = u \text{ and}\\
												& |r| > 1\\ 
 	\mathit{v}(r^{|r|-1},u) \concat * \concat s & \text{if } r = r^{|r|-1}\concat t \concat s \text{ and } s \in \Omega_{M} \\
 												&	\text{and } t \in {\cal TRIGGER}_{D} \text{ and }\\
												&	\mathit{invoker}(\mathit{last}(r^{|r|-1})) \neq u \text{ and}\\
												& |r| > 1\\ 										    
	s											& \text{if } r = s \text{ and } s \in \Omega_{M}\\
  \end{array} \right.
  \]
The function $c$ takes as input a run $r$ containing $*$-transitions and replaces each sequence of $*$-transitions with a single $*$-transition.
Note that the function $c$ is obtained by repeatedly applying the function $c'$ until the computation reaches a fixed point.
The function $c'$ is as follows:
\[
\mathit{c'}(r) = \left\{ 
  \begin{array}{l l}
	\mathit{c'}(r^{|r|-1}) \concat a \concat s 	& \text{if } r = r^{|r|-1}\concat a \concat s \text{ and } a \neq *\\
												& \text{and } s \in \Omega_{M} \text{ and } |r| >1\\
 	\mathit{c'}(r^{|r|-2}) \concat * \concat s 	& \text{if } r = r^{|r|-2}\concat * \concat s' \concat * \concat s \text{ and }\\
 												& s,s' \in \Omega_{M} \text{ and } |r| > 2\\						    
	s											& \text{if } r = s \text{ and } s \in \Omega_{M}\\				    
	r											& \text{if } r = s \concat * \concat s' \text{ and } s, s' \in \Omega_{M}\\
  \end{array} \right.
  \]

\newcommand*{\hSp}{25pt}
\newcommand*{\vSp}{25pt}
\newcommand*{\labelSp}{3pt}
\begin{figure}
    \begin{tikzpicture}[scale=0.95, every node/.style={draw=black,fill=black,shape=circle,inner sep=1.2pt}]

    		\node[label=left:{$r_{1}$}]  (r10) at (0,0){};
    		\node[right = \hSp of r10] (r11) {};
    		\node[right = \hSp of r11] (r12) {};
    		\node[right = \hSp of r12] (r13) {};
    		\node[right = \hSp of r13] (r14) {};
    		\node[right = \hSp of r14] (r15) {};
    		\node[right = \hSp of r15] (r16) {};
    		\node[right = \hSp of r16] (r17) {};
    		\node[right = \hSp of r17] (r18) {};
    		\path 	(r10) edge   																					(r11) 
    			 	(r11) edge   																					(r12) 
    			 	(r12) edge node [above=\labelSp, fill=white, draw=white,inner sep=0pt, shape=rectangle] {$a_1$}		(r13) 
    			 	(r13) edge node [above=\labelSp, fill=white, draw=white,inner sep=0pt, shape=rectangle] {$a_2$} 	(r14) 
    			 	(r14) edge   																					(r15) 
    			 	(r15) edge   																					(r16) 
    			 	(r16) edge node [above=\labelSp, fill=white, draw=white,inner sep=0pt, shape=rectangle] {$a_3$} 	(r17) 
    			 	(r17) edge   																					(r18);

    		\node[below = \hSp of r10, label=left:{$r_{2}$}]  (r20) {};
    		\node[right = \hSp of r20] (r21) {};
    		\node[right = \hSp of r21] (r22) {};
    		\node[right = \hSp of r22] (r23) {};
    		\node[right = \hSp of r23] (r24) {};
    		\node[right = \hSp of r24] (r25) {};
    		\node[right = \hSp of r25] (r26) {};
    		\node[right = \hSp of r26] (r27) {};
    		\path 	(r20) edge   																					(r21) 
    			 	(r21) edge node [above=\labelSp, fill=white, draw=white,inner sep=0pt, shape=rectangle] {$a_1$} 	(r22) 
    			 	(r22) edge node [above=\labelSp, fill=white, draw=white,inner sep=0pt, shape=rectangle] {$a_2$} 	(r23) 
    			 	(r23) edge  																						(r24) 
    			 	(r24) edge node [above=\labelSp, fill=white, draw=white,inner sep=0pt, shape=rectangle] {$a_3$}		(r25) 
    			 	(r25) edge   																					(r26) 
    			 	(r26) edge  																						(r27); 
    		
    		\node[below = \hSp of r20, label=left:{$r_{3}$}]  (r30) {};
    		\node[right = \hSp of r30] (r31) {};
    		\node[right = \hSp of r31] (r32) {};
    		\node[right = \hSp of r32] (r33) {};
    		\node[right = \hSp of r33] (r34) {};
    		\node[right = \hSp of r34] (r35) {};
    		\path 	(r30) edge   																					(r31) 
    			 	(r31) edge node [above=\labelSp, fill=white, draw=white,inner sep=0pt, shape=rectangle] {$a_1$} 	(r32) 
    			 	(r32) edge node [above=\labelSp, fill=white, draw=white,inner sep=0pt, shape=rectangle] {$a_2$} 	(r33) 
    			 	(r33) edge  																						(r34) 
    			 	(r34) edge node [above=\labelSp, fill=white, draw=white,inner sep=0pt, shape=rectangle] {$a_3$}		(r35);
    		
    		\node[below = \hSp of r30, label=left:{$r_{4}$}]  (r40) {};
    		\node[right = \hSp of r40] (r41) {};
    		\node[right = \hSp of r41] (r42) {};
    		\node[right = \hSp of r42] (r43) {};
    		\node[right = \hSp of r43] (r44) {};
    		\node[right = \hSp of r44] (r45) {};
    		\node[right = \hSp of r45] (r46) {};
    		\node[right = \hSp of r46] (r47) {};
    		\path 	(r40) edge node [above=\labelSp, fill=white, draw=white,inner sep=0pt, shape=rectangle] {$a_1$} 	(r41) 
    			 	(r41) edge 																						(r42) 
    			 	(r42) edge node [above=\labelSp, fill=white, draw=white,inner sep=0pt, shape=rectangle] {$a_2$} 	(r43) 
    			 	(r43) edge  																						(r44) 
    			 	(r44) edge node [above=\labelSp, fill=white, draw=white,inner sep=0pt, shape=rectangle] {$a_3$}		(r45) 
    			 	(r45) edge   																					(r46) 
    			 	(r46) edge  																						(r47);

    		\node[below = \hSp of r40, label=left:{$r_{1}|_u$}]  (r10u) {};
    		\node[right = \hSp of r10u] (r11u) {};
    		\node[right = \hSp of r11u] (r12u) {};
    		\node[right = \hSp of r12u] (r13u) {};
    		\node[right = \hSp of r13u] (r14u) {};
    		\node[right = \hSp of r14u] (r15u) {};
    		\node[right = \hSp of r15u] (r16u) {};
    		\path 	(r10u) edge node [above=\labelSp, fill=white, draw=white,inner sep=0pt, shape=rectangle] {$*$}		(r11u) 
    			 	(r11u) edge node [above=\labelSp, fill=white, draw=white,inner sep=0pt, shape=rectangle] {$a_1$}	(r12u) 
    			 	(r12u) edge node [above=\labelSp, fill=white, draw=white,inner sep=0pt, shape=rectangle] {$a_2$} 	(r13u) 
    			 	(r13u) edge node [above=\labelSp, fill=white, draw=white,inner sep=0pt, shape=rectangle] {$*$} 		(r14u) 
    			 	(r14u) edge node [above=\labelSp, fill=white, draw=white,inner sep=0pt, shape=rectangle] {$a_3$} 	(r15u) 
    			 	(r15u) edge node [above=\labelSp, fill=white, draw=white,inner sep=0pt, shape=rectangle] {$*$} 		(r16u);

    		\node[below = \hSp of r10u, label=left:{$r_{2}|_u$}]  (r20u) {};
    		\node[right = \hSp of r20u] (r21u) {};
    		\node[right = \hSp of r21u] (r22u) {};
    		\node[right = \hSp of r22u] (r23u) {};
    		\node[right = \hSp of r23u] (r24u) {};
    		\node[right = \hSp of r24u] (r25u) {};
    		\node[right = \hSp of r25u] (r26u) {};
    		\path 	(r20u) edge node [above=\labelSp, fill=white, draw=white,inner sep=0pt, shape=rectangle] {$*$}		(r21u) 
    			 	(r21u) edge node [above=\labelSp, fill=white, draw=white,inner sep=0pt, shape=rectangle] {$a_1$}	(r22u) 
    			 	(r22u) edge node [above=\labelSp, fill=white, draw=white,inner sep=0pt, shape=rectangle] {$a_2$} 	(r23u) 
    			 	(r23u) edge node [above=\labelSp, fill=white, draw=white,inner sep=0pt, shape=rectangle] {$*$} 		(r24u) 
    			 	(r24u) edge node [above=\labelSp, fill=white, draw=white,inner sep=0pt, shape=rectangle] {$a_3$}	(r25u) 
    			 	(r25u) edge node [above=\labelSp, fill=white, draw=white,inner sep=0pt, shape=rectangle] {$*$} 		(r26u); 
    		
    		\node[below = \hSp of r20u, label=left:{$r_{3}|_u$}]  (r30u) {};
    		\node[right = \hSp of r30u] (r31u) {};
    		\node[right = \hSp of r31u] (r32u) {};
    		\node[right = \hSp of r32u] (r33u) {};
    		\node[right = \hSp of r33u] (r34u) {};
    		\node[right = \hSp of r34u] (r35u) {};
    		\path 	(r30u) edge node [above=\labelSp, fill=white, draw=white,inner sep=0pt, shape=rectangle] {$*$}		(r31u) 
    			 	(r31u) edge node [above=\labelSp, fill=white, draw=white,inner sep=0pt, shape=rectangle] {$a_1$}	(r32u) 
    			 	(r32u) edge node [above=\labelSp, fill=white, draw=white,inner sep=0pt, shape=rectangle] {$a_2$} 	(r33u) 
    			 	(r33u) edge node [above=\labelSp, fill=white, draw=white,inner sep=0pt, shape=rectangle] {$*$} 		(r34u) 
    			 	(r34u) edge node [above=\labelSp, fill=white, draw=white,inner sep=0pt, shape=rectangle] {$a_3$}	(r35u); 
    		
    		\node[below = \hSp of r30u, label=left:{$r_{4}|_u$}]  (r40u) {};
    		\node[right = \hSp of r40u] (r41u) {};
    		\node[right = \hSp of r41u] (r42u) {};
    		\node[right = \hSp of r42u] (r43u) {};
    		\node[right = \hSp of r43u] (r44u) {};
    		\node[right = \hSp of r44u] (r45u) {};
    		\node[right = \hSp of r45u] (r46u) {};
    		\path 	(r40u) edge node [above=\labelSp, fill=white, draw=white,inner sep=0pt, shape=rectangle] {$a_1$} 	(r41u) 
    			 	(r41u) edge node [above=\labelSp, fill=white, draw=white,inner sep=0pt, shape=rectangle] {$*$}		(r42u) 
    			 	(r42u) edge node [above=\labelSp, fill=white, draw=white,inner sep=0pt, shape=rectangle] {$a_2$} 	(r43u) 
    			 	(r43u) edge node [above=\labelSp, fill=white, draw=white,inner sep=0pt, shape=rectangle] {$*$}		(r44u) 
    			 	(r44u) edge node [above=\labelSp, fill=white, draw=white,inner sep=0pt, shape=rectangle] {$a_3$}	(r45u) 
    			 	(r45u) edge node [above=\labelSp, fill=white, draw=white,inner sep=0pt, shape=rectangle] {$*$}		(r46u);

    \end{tikzpicture}
  \caption{The runs $r_1$, $r_2$, $r_3$, and $r_4$, where the states are represented using black dots, the actions $a_1$, $a_2$, and $a_3$ issued by the user $u$ are written above the edges connecting the states, and the actions of the other users are omitted.
 The $u$-projections of these runs are, respectively, $r_1|_u$, $r_2|_u$, $r_3|_u$, and $r_4|_u$.
  The runs $r_{1}$ and $r_{2}$ have  $u$-projections with the same labels, whereas the runs $r_{3}$ and $r_{4}$ have $u$-projections with different labels.
}\label{figure:projections}
  \label{schedule}
\end{figure}

\subsection{Consistency}
Before defining the notion of consistency, we define the function $\mathit{labels}$ which takes as input a run $r$ and returns as output the sequence of labels in the run.
In more detail, $\mathit{labels}(r)$ is obtained from $r$  by dropping all the states.
We now define the notion of consistency between two $u$-projections.
\begin{definitionInt}
Let $P = \langle M,f\rangle$ be an \accessControlConfiguration{},  where $M = \langle D,\Gamma\rangle$ and $f$ is an $M$-\acf{},  $L$ be the $P$-LTS, and $u$ be a user in ${\cal U}$.
Furthermore, let $r|_{u}$ and $r'|_{u}$ be two $u$-projections for the runs $r$ and $r'$ in $\mathit{traces}(L)$.
We say that \emph{$r|_{u}$ and $r'|_{u}$ are consistent} iff the following conditions hold:
\begin{compactenum}
\item $|r|_{u}| =  |r'|_{u}|$.

\item $\mathit{labels}(r|_{u}) = \mathit{labels}(r'|_{u})$.

\item $\mathit{triggers}(\mathit{last}(r|_{u})) = \epsilon$ iff $\mathit{triggers}(\mathit{last}(r'|_{u})) = \epsilon$.

\item for all $i$ such that $1 \leq i \leq |r|_{u}|$, if  $r|_{u}^{i} = r|_{u}^{i-1} \concat a \concat s$ and $a \neq *$, then
\begin{compactitem}
\item $\mathit{res}(\mathit{last}(r|_{u}^{i})) = \mathit{res}(\mathit{last}(r'|_{u}^{i}))$, 
\item $\mathit{secEx}(\mathit{last}(r|_{u}^{i})) = \mathit{secEx}(\mathit{last}(r'|_{u}^{i}))$,
\item if $a$ is a trigger, then $\mathit{acC}(\mathit{last}(r|_{u}^{i})) = \mathit{acC}(\mathit{last}(r'|_{u}^{i}))$,
\item $\mathit{invoker}(\mathit{last}(r|_{u}^{i})) = \mathit{invoker}(\mathit{last}(r'|_{u}^{i}))$,
\item $\mathit{triggers}(\mathit{last}(r|_{u}^{i})) = \mathit{triggers}(\mathit{last}(r'|_{u}^{i}))$,
\item $\mathit{tpl}(\mathit{last}(r|_{u}^{i})) = \mathit{tpl}(\mathit{last}(r'|_{u}^{i}))$,
\item and $\mathit{Ex}(\mathit{last}(r|_{u}^{i})) = \mathit{Ex}(\mathit{last}(r'|_{u}^{i}))$. $\square$
\end{compactitem}
\end{compactenum}
\end{definitionInt}

Figure \ref{figure:projections} depicts four runs. 
The states are represented just as black dots and the action between two states is written above the edge connecting them.
Note that we  represent just the actions $a_{1}$, $a_{2}$, and $a_{3}$ issued by the user $u$.
Assume that\begin{inparaenum}[(a)]
\item the action's effects are the same in all the runs and
\item the $\mathit{invoker}$, $\mathit{res}$, $\mathit{secEx}$, $\mathit{triggers}$, $\mathit{tpl}$, and $\mathit{Ex}$ functions return the same results in all runs.
\end{inparaenum}
It is easy to see that  $r_{1}|_{u}$ and $r_{2}|_{u}$ are consistent projections, whereas $r_{3}|_{u}$ and $r_{4}|_{u}$ are not.
Furthermore, there is no other pair of consistent $u$-projections between the runs in the figure.

\subsection{Indistinguishability}

Let $M = \langle D, \Gamma \rangle$ be a system configuration, $s = \langle \mathit{db}, U, \mathit{sec}, \\ T, V\rangle$ be an $M$-partial state, and $u \in U$ be a   user.
The set $\mathit{permissions}(s, u)$ is  $ \mathit{permissions}(s,u) := \{\langle \oplus, \texttt{SELECT}, O \rangle\, |\, \exists u' \\ \in U, \mathit{op} \in \{\oplus,\oplus^{*}\}.\,\langle \mathit{op}, u, \langle \texttt{SELECT}, O\rangle,  u' \rangle \in \mathit{sec}  \}$.
Note that $\mathit{permissions}(s,\admin) = D \cup V$ since the administrator has read access to the whole database.
We extend permissions to $M$-states as follows.
Given an $M$-state $s' = \langle \mathit{db}, U, \mathit{sec}, T, V, c\rangle$, $\mathit{permissions}(s',u) =  \mathit{permissions}(\langle \mathit{db}, U, \\ \mathit{sec}, T, V\rangle,u)$.

We are now ready to introduce the notion of indistinguishability between two runs.
Intuitively, two runs $r$ and $r'$ are indistinguishable for a user $u$ iff 
(1) their $u$-projections are consistent, and
(2) for each action of the user $u$ as well as for the last states in the two runs, the policy, the triggers, the views, the users, and the data disclosed by the policy are the same in $r$ and $r'$.

\begin{definitionInt}
Let $P = \langle M,f \rangle$ be an \accessControlConfiguration{}, $L$ be the $P$-LTS, and $u$ be a user.

We say that two runs $r$ and $r'$ in $\mathit{traces}(L)$ are \emph{$(P,u)$-indistinguishable}, written $r \cong_{u,P} r'$, iff 
\begin{compactenum}
\item $r|_u$ and $r'|_u$ are consistent, 
\item $\mathit{pState}(\mathit{last}(r))$ and $\mathit{pState}(\mathit{last}(r'))$ are $(M,u)$-data indistinguishable, and
\item for all $i$ such that $1 \leq i \leq |r|_{u}|-1$, if  $r|_{u}^{i+1} = r|_{u}^{i} \concat a \concat s$, $a \neq *$, and $s \in \Omega_{M}$, then $\mathit{pState}(\mathit{last}(r|_{u}^{i}))$ and $\mathit{pState}(\mathit{last}(r'|_{u}^{i}))$ are $(M,u)$-data indistinguishable.
\end{compactenum}

We say that two $M$-partial states $s = \langle \mathit{db}, U, \mathit{sec}, T,V\rangle$ and $s' = \langle \mathit{db}', U', \mathit{sec}', T',V'\rangle$ are \emph{$(M,u)$-data indistinguishable}, written $s \cong_{u,M}^{data} s'$, iff
\begin{compactenum}
\item $U = U'$,
\item $\mathit{sec} = \mathit{sec}'$,
\item $T = T'$, 
\item $V = V'$, 
\item for all relation schema $R \in D$ for which $\langle\oplus, \mathtt{SELECT},  R\rangle \in \mathit{permissions}(s,u)$, $\mathit{db}(R) = \mathit{db}'(R)$, and
\item for all views $v \in {\cal VIEW}_{D}^{\mathit{owner}}$ for which $\langle\oplus, \mathtt{SELECT},  v\rangle \in \mathit{permissions}(s,u)$,  $\mathit{db}(v) = \mathit{db}'(v)$. $\square$
\end{compactenum}
\end{definitionInt}

\begin{proposition}
Let $P = \langle M,f \rangle$ be an \accessControlConfiguration{}, $L$ be the $P$-LTS, and $u \in {\cal U}$ be a user.
The indistinguishability relation $\cong_{P,u}$ is an equivalence relation over $\mathit{traces}(L)$. 
\end{proposition}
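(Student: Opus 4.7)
The plan is to verify reflexivity, symmetry, and transitivity of $\cong_{P,u}$, reducing each property to the corresponding property of two auxiliary relations: consistency of $u$-projections on $\mathit{traces}(L)$, and $(M,u)$-data indistinguishability $\cong_{u,M}^{\mathit{data}}$ on $\Pi_{M}$. Since both auxiliary relations are defined entirely by a finite conjunction of equalities between values extracted from runs or partial states ($\mathit{labels}$, $\mathit{res}$, $\mathit{secEx}$, $\mathit{acC}$, $\mathit{invoker}$, $\mathit{triggers}$, $\mathit{tpl}$, $\mathit{Ex}$ on the projection side; $U$, $\mathit{sec}$, $T$, $V$, and $\mathit{db}(R)$/$\mathit{db}(v)$ for the authorized $R$ and $v$ on the data side), each is itself an equivalence relation by the corresponding properties of equality.

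First I would establish a short lemma stating that consistency of $u$-projections is an equivalence relation on $\mathit{traces}(L)$, and another stating that $\cong_{u,M}^{\mathit{data}}$ is an equivalence relation on $\Pi_{M}$. Reflexivity for each is immediate from reflexivity of equality; symmetry follows because every clause in both definitions is a symmetric predicate; transitivity follows by transitivity of equality applied clause by clause. A minor point worth remarking explicitly is that in the data indistinguishability case, the set $\mathit{permissions}(s,u)$ depends only on $U$ and $\mathit{sec}$, so once clauses (1) and (2) match, the authorized tables and views over which clauses (5) and (6) quantify coincide in both states, which is needed to chain equalities on $\mathit{db}(R)$ and $\mathit{db}(v)$.

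Then I would lift these lemmas to $\cong_{P,u}$. Reflexivity follows directly: $r|_u$ is consistent with itself and $\mathit{pState}(\mathit{last}(r|_u^i)) \cong_{u,M}^{\mathit{data}} \mathit{pState}(\mathit{last}(r|_u^i))$ for every relevant $i$. Symmetry follows clause by clause from the symmetry of the two auxiliary relations. For transitivity, suppose $r_1 \cong_{P,u} r_2$ and $r_2 \cong_{P,u} r_3$. Consistency of $r_1|_u$ with $r_3|_u$ follows from transitivity of consistency, which in particular yields $|r_1|_u| = |r_2|_u| = |r_3|_u|$ and identical label sequences; this common length and common labeling is precisely what guarantees that the positions $i$ at which clause (3) of the indistinguishability definition demands data indistinguishability are the \emph{same} set of indices for all three projections, so we may legitimately chain $\mathit{pState}(\mathit{last}(r_1|_u^i)) \cong_{u,M}^{\mathit{data}} \mathit{pState}(\mathit{last}(r_2|_u^i)) \cong_{u,M}^{\mathit{data}} \mathit{pState}(\mathit{last}(r_3|_u^i))$ and conclude by transitivity of $\cong_{u,M}^{\mathit{data}}$.

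The only step requiring care, and therefore the one I would write out in full, is this alignment-of-indices observation in the transitivity argument; once that is in place, the remainder of the proof is a mechanical application of the two auxiliary lemmas. No non-routine obstacle arises, since the definition of $\cong_{P,u}$ is a straightforward conjunction of equivalence relations evaluated at corresponding positions.
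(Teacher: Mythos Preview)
Your proposal is correct and follows essentially the same approach as the paper: both reduce reflexivity, symmetry, and transitivity of $\cong_{P,u}$ to the corresponding properties of consistency of $u$-projections and of $\cong_{u,M}^{\mathit{data}}$. You are in fact more explicit than the paper on two points it glosses over---the dependence of $\mathit{permissions}(s,u)$ only on $U$ and $\mathit{sec}$ (needed to make clauses (5) and (6) chain in transitivity of $\cong_{u,M}^{\mathit{data}}$), and the alignment-of-indices observation for clause (3) in the transitivity case---but these refinements do not change the structure of the argument.
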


\begin{proof}
We now prove that $\cong_{P,u}$ is reflexive, symmetric, and transitive. 
This implies the fact that $\cong_{P,u}$ is an equivalence relation over $\mathit{traces}(L)$.
In the following, let $P = \langle M,f \rangle$ be an \accessControlConfiguration{}, $L$ be the $P$-LTS, and $u \in {\cal U}$ be a user. 
From the definition of data indistinguishability and the results in \cite{guarnieri2014optimal}, it follows that the data-indistinguishability relation  $\cong_{u,M}^{data}$ is an equivalence relation over the set of all partial states.

\smallskip
{\bf Reflexivity} 
Let $r \in \mathit{traces}(L)$ be a run.
It follows trivially that $r|_{u} = r|_{u}$.
From this, it follows that $r|_{u}$ and $r|_{u}$ are consistent.
It is easy to see that $r$ is indistinguishable from $r$.
Indeed, the database states are the same in $r$ and $r$ and the data-indistinguishability relation is reflexive \cite{guarnieri2014optimal}.

\smallskip
{\bf Symmetry}
Let $r, r' \in \mathit{traces}(L)$ be two runs such that $r \cong_{P,u} r'$.
From this, it follows that $r|_{u}$ and $r'|_{u}$ are consistent.
Note that the consistency definition is symmetric.
Therefore, also $r'|_{u}$ and $r|_{u}$ are consistent.
From this and the symmetry of data indistinguishability \cite{guarnieri2014optimal}, it follows the symmetry of  $\cong_{P,u}$.

\smallskip
{\bf Transitivity}
Let $r, r', r'' \in \mathit{traces}(L)$ be three runs such that $r \cong_{P,u} r'$ and $r' \cong_{P,u} r''$.
From this it follows that $r|_{u}$ and $r'|_{u}$ are consistent and $r'|_{u}$ and $r''|_{u}$ are consistent.
It is easy to see that also $r|_{u}$ and $r''|_{u}$ are consistent.
From this and the transitivity of data indistinguishability \cite{guarnieri2014optimal}, it follows the transitivity of  $\cong_{P,u}$. 
\end{proof}

Given a run $r$, we denote by $\llbracket r \rrbracket_{P, u}$ the equivalence class of $r$ defined by $\cong_{P,u}$ over $\mathit{traces}(L)$.
Similarly, we denote by $\llbracket s \rrbracket_{u,M}^{\mathit{data}}$ the equivalence class of $s$ defined by $\cong_{u,M}^{\mathit{data}}$ over $\Pi_M$.

\clearpage
\section{Enforcing Database Integrity}\label{app:enforcement:eopsec}
In this section, we first define the access control function $f_{\mathit{int}}$, which models the $f_{\mathit{int}}$ procedure described in \S\ref{sect:enf:alg}. 
Afterwards, we prove that the function $f_{\mathit{int}}$ satisfies the \correctness{} property.
Finally, we prove that the data complexity of $f_{\mathit{int}}$ is \complexity{}.

The function $f_{\mathit{int}}$ is as follows: 
\[\mathit{f}_{\mathit{int}}(s, \mathit{a}) = \left\{ 
  \begin{array}{l l}
  \top & \text{if } \mathit{trigger}(s) = \epsilon \wedge 
   s \auth^{\mathit{appr}} \mathit{a} \\ 
  \top & \text{if } \mathit{trigger}(s) = t \wedge t \neq \epsilon \wedge  
   \mathit{a} = \mathit{trigCond}(s)  \\ 

  \top & \text{if } \mathit{trigger}(s) = t \wedge t \neq \epsilon \wedge  \mathit{a} = \mathit{trigAct}(s) \wedge
   \\
  &\quad  s \auth^{\mathit{appr}} t \\
  
  \bot & \text{otherwise}
  \\
  \end{array}\right.\]
  
The function $\mathit{trigCond}(s)$ (respectively $\mathit{trigAct}(s)$) returns the condition (respectively the action) associated with the trigger $\mathit{trigger}(s)$.
If $\mathit{trigger}(s) = \langle \mathit{id}, \mathit{ow}, e, R, \phi, \mathit{st}, O\rangle$, then $\mathit{trigAct}(s) = \mathit{getAction}(  \mathit{st}, \mathit{ow}, \mathit{tpl}(s))$ and $\mathit{trigCond}(s) = \langle \mathit{ow}, \mathtt{SELECT}, \phi[\overline{x}^{|R|} \mapsto \mathit{tpl}(s)]\rangle$.
If $\mathit{trigger}(s) = \langle \mathit{id}, \mathit{ow}, e, R, \phi, \\ \mathit{st}, A\rangle$, then $\mathit{trigAct}(s) = \mathit{getAction}(  \mathit{st}, \mathit{invoker}(s), \mathit{tpl}(s))$ and $\mathit{trigCond}(s) = \langle \mathit{invoker}(s), \mathtt{SELECT}, \phi[\overline{x}^{|R|} \mapsto \mathit{tpl}(s)]\rangle$.

Recall that, given  an $M$-state $s = \langle \mathit{db}, U,\mathit{sec},T,V ,c \rangle $ and a revoke statement $r = \langle \ominus, u, p, u' \rangle$,  $\mathit{applyRev}(s, r)$ denotes the state $\langle \mathit{db}, U,\mathit{revoke}(\mathit{sec},u,p,u'),T,V ,c\rangle$.  

The relation $\auth^{\mathit{appr}} \subseteq \Omega_{M} \times ({\cal A}_{D,{\cal U}} \cup {\cal TRIGGER}_{D})$ is the smallest relation satisfying the inference rules given in Figure~\ref{figure:eop:auth:approx}.
We remark that $\auth^{\mathit{appr}}$ is a sound and computable under-approximation of the relation $\auth$. 
In the rules, we use a number of auxiliary functions. The most important ones are:
\begin{compactenum}[(a)]
\item the $\mathit{aT}$ (respectively $\mathit{aV}$) function that takes as input a database state, an operator $\mathit{op}$ in $\{\oplus,\oplus^{*}\}$, and a user, and returns the set of tables (respectively views) that the user is authorized to read (if $\mathit{op} = \oplus$) or to delegate the read access (if $\mathit{op} = \oplus^{*}$) according to our approximation of $\auth$, and
\item the $\mathit{apprDet}$ function  is used to determine whether a set of tables and a set of views completely determine the result of a formula $\phi$ in all possible database states. Note that the function  $\mathit{apprDet}$ is a sound under-approximation of the concept of \emph{query determinacy} \cite{nash2010views}.
\end{compactenum}
In the following, we define the functions $\mathit{extend}$ and $\mathit{apprDet}$. The functions $\mathit{aT}$ and $\mathit{aV}$ are defined in Figure \ref{figure:eop:auth:approx}.
We assume that both the formula $\phi$ and the set of views $V$ in the state $s$ contain just views with owner's privileges.
This is without loss of generality.
Indeed, views with activator's privileges are just syntactic sugar, they do not disclose additional information to a user $u$ other than what he is already authorized to read because they are executed under $u$'s privileges.
If $\phi$ and $s$ contain views with activator's privileges, we can compute another formula $\phi'$ and a state $s'$ without views with activator's privileges as follows.
We replace, in the formula $\phi$, the predicates of the form $V(\overline{x})$, where $V$ is a view with activator's privileges,  with $V$'s definition, and we repeat this process until the resulting formula $\phi'$ no longer contains views with activator's privileges.
Similarly, the set $V'$ is obtained from $V$ by (1) removing all views with activator's privileges, and (2) for each view $v \in V$ with owner's privileges, replacing the predicates of the form $V(\overline{x})$ in $v$'s definition, where $V$ is a view with activator's privileges,  with $V$'s definition until $v$'s definition no longer contains views with activator's privileges.
The security policy $\mathit{sec}'$ is also obtained from $\mathit{sec}$ by removing all references to views with activator's privileges.
Therefore, in \S\ref{app:enforcement:eopsec:extend}-\ref{app:enforcement:eopsec:determinacy} we ignore views with activator's privileges as the extension to the general case is trivial.

\begin{figure*}
\centering
\begin{tabular}{c}
$\emph{apprDet}(T,V,\phi,s,M) = \left\{ 
  \begin{array}{l l}
	\top & \text{if } \exists \langle v,o,q,O\rangle \in \mathit{extend}(M,s,V).\, q =\{\overline{x} | \phi(\overline{x})\}\\
	\top & \text{if } \phi = (x = v) \vee \phi = \top \vee \phi = \bot\\
	\top & \text{if } \phi = R(\overline{x}) \wedge R \in T\\
	\top & \text{if } \phi = V(\overline{x}) \wedge \exists u \in {\cal U}, q \in \mathit{RC}.\, \langle V, u,q, O\rangle \in V\\ 
	\top & \text{if } \phi = (\psi \wedge \gamma) \wedge \emph{apprDet}(T,V,\psi,s,M) = \top \wedge \emph{apprDet}(T,V,\gamma,s,M)  = \top\\
	\top & \text{if } \phi = (\psi \vee \gamma) \wedge \emph{apprDet}(T,V,\psi,s,M)  = \top \wedge \emph{apprDet}(T,V,\gamma,s,M)  = \top\\
	\top & \text{if } \phi = (\neg \psi) \wedge \emph{apprDet}(T,V,\psi,s,M)  = \top\\
	\top & \text{if } \phi = (\exists x.\, \psi) \wedge \emph{apprDet}(T,V,\psi,s,M) = \top\\ 
	\top & \text{if } \phi = (\forall x.\, \psi) \wedge \emph{apprDet}(T,V,\psi,s,M) = \top\\ 
  	\bot & \text{otherwise}
  \end{array}\right.$
\end{tabular}
\caption{\emph{apprDet} function}\label{figure:access:control:function:EOP:approxDetermines}
\end{figure*}

\subsection{Extend function}\label{app:enforcement:eopsec:extend} 

We now define the \emph{extend} function, which takes as input a system configuration $M$, an $M$-state $s$, and a set of views with owner's privileges, and returns a set of views $V'$ such that $V \subseteq V'$.
Given a system configuration $M$, an $M$-partial state $s = \langle \mathit{db},U,\mathit{sec}, T,V\rangle$, and a normalized view $\langle v, o, q, O\rangle \in V$, we denote by $\mathit{inline}_{M}(\langle v, o, q, O\rangle,s)$ the view $\langle v, o, q', O\rangle$ where $q'$ is obtained from $q$ by replacing all occurrences of views in $V$ with owner's privileges with their definitions.
Note that $\mathit{inline}_{M}$ does not compute a fixpoint, i.e., if a view's definition refers to another view, the latter is not replaced with its definition.
The function $\mathit{extend}(M,s,V)$ returns the set $V \cup \{\mathit{inline}(v,s) | v \in \mathit{extend}(M,s,V)\}$.

\begin{lemma}\thlabel{theorem:eopsec:enforcement:extend}
Let $M = \langle D, \Gamma \rangle$ be a system configuration, $s = \langle \mathit{db}, U, \mathit{sec},  T,V \rangle$ be an $M$-partial state, $V' \subseteq V$ be a set of views with owner's privileges.
For each view $v \in \mathit{extend}(M,s,V')$, there is a view $v' \in V'$ such that $v$ and $v'$ disclose the same data.
\end{lemma}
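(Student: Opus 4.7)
The plan is to proceed by induction on the construction of the set $\mathit{extend}(M, s, V')$ (viewed as the least fixed point of the operator $X \mapsto V' \cup \{\mathit{inline}_M(v, s) \,|\, v \in X\}$), reducing the claim to a single semantic preservation fact about $\mathit{inline}_M$.

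The key lemma to establish first is a \emph{semantic preservation} claim for $\mathit{inline}_M$: for every view $w = \langle v, o, q, O\rangle \in V$ and every $M$-partial state $s$ with view set $V$, the view $\mathit{inline}_M(w, s) = \langle v, o, q', O\rangle$ satisfies $[q]^{\mathit{db}} = [q']^{\mathit{db}}$ for every database state $\mathit{db}$ compatible with $s$. This follows directly from the extended relational calculus semantics for views given in \S\ref{sect:preliminaries}: the materialization of a view $\langle V, o, q_V, O\rangle$ in $\mathit{db}$ is by definition $[q_V]^{\mathit{db}}$, so replacing an atom $V(\overline{x})$ in $q$ by the body $q_V(\overline{x})$ of its defining query yields a syntactically different but semantically equivalent query. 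Since $\mathit{inline}_M$ performs only such one-step replacements for views in $V$ whose definitions occur in $q$, $q$ and $q'$ evaluate identically in every state. Hence $w$ and $\mathit{inline}_M(w,s)$ disclose the same data, in the sense that their materializations coincide in every state.

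Next, I would carry out the induction on the stage at which a view $v$ enters $\mathit{extend}(M, s, V')$. In the base case, $v \in V'$, and the claim is immediate by choosing $v' := v$. For the inductive step, assume $v = \mathit{inline}_M(w, s)$ with $w \in \mathit{extend}(M, s, V')$ appearing at an earlier stage. By the induction hypothesis there is some $v' \in V'$ such that $w$ and $v'$ have the same materialization in every relevant state. By the semantic preservation lemma above, $v = \mathit{inline}_M(w,s)$ has the same materialization as $w$. Composing these two equalities, $v$ and $v'$ have the same materialization, which is what it means for them to disclose the same data.

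The main obstacle is making precise what ``disclose the same data'' means and ensuring the semantic preservation step handles nested view references correctly: $\mathit{inline}_M$ performs only a single unfolding, so if $q$ refers to a view $V_1$ whose own definition refers to another view $V_2$, then $\mathit{inline}_M$ replaces $V_1$ with its body, leaving occurrences of $V_2$ intact. This is fine for the lemma, because the standard relational calculus semantics over views evaluates such nested references recursively, and one-step unfolding remains semantically sound at each level. The acyclicity assumption on view dependencies from \S\ref{sect:preliminaries} guarantees that this recursive evaluation is well-defined, so no subtleties about fixed points of view definitions arise, and the induction closes cleanly.
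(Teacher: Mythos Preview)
Your proposal is correct and follows essentially the same idea as the paper's proof: the paper gives only a brief sketch, observing that any $v \in \mathit{extend}(M,s,V')$ is obtained from some $v' \in V'$ by replacing view atoms with their definitions, which is a semantics-preserving transformation. Your induction on the fixed-point stages and your explicit semantic-preservation lemma for $\mathit{inline}_M$ simply make this argument precise; the underlying reasoning is the same.
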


\begin{proof}
\emph{Sketch: }Assume, for contradiction's sake, that there is a view $v \in \mathit{extend}(M,s,V')$ such that all the views in $V'$ disclose different data from $v$. 
This is impossible because $v$ has been obtained by a view $v' \in V'$ just by replacing the views with their definitions and the definitions of $v$ and $v'$ are semantically equivalent.
\end{proof}

\subsection{A sound under-approximation of query determinacy}\label{app:enforcement:eopsec:determinacy}
The definition of the function $\emph{apprDet}(T,V,q)$ is shown in Figure \ref{figure:access:control:function:EOP:approxDetermines}.
Before proving that $\emph{apprDet}$ is a sound approximation of $\emph{determines}$, we extend \emph{determines} from sentences to formulae.

We first introduce assignments.
Let ${\bf dom}$ be the universe and ${\bf var}$ be an infinite countably set of variable identifiers.
An \emph{assignment} $\nu$ is a partial function from ${\bf var}$ to ${\bf dom}$ that maps variables to values in the universe.
Given a formula $\phi$ and an assignment $\nu$, we say that \emph{$\nu$ is well-formed for $\phi$} iff $\nu$ is defined for all variables in $\mathit{free}(\phi)$.
Given an assignment $\nu$ and a sequence of variables $\overline{x}$ such that $\nu$ is defined for each $x \in \overline{x}$, we denote by $\nu(\overline{x})$ the tuple obtained by replacing each occurrence of $x \in \overline{x}$ with $\nu(x)$.
Given an assignment $\nu$, a variable $v \in {\bf var}$, and a value $u \in {\bf dom}$, we denote by $\nu \oplus [v \mapsto u]$ the assignment $\nu'$ obtained as follows: $\nu'(v') = \nu(v')$ for any $v' \neq v$, and $\nu'(v) = u$. 
Finally, given a formula $\phi$ with free variables $\mathit{free}(\phi)$ and an assignment $\nu$, we denote by $\phi \circ \nu$ the formula $\phi'$ obtained by replacing, for each free variable $x \in \mathit{free}(\phi)$ such that $\nu(x)$ is defined, all the free occurrences of $x$ with $\nu(x)$.

%
%
Given a system's configuration $M = \langle D, \Gamma\rangle$, a formula $\phi$, a set of views $V$ with owner's privileges, a set of tables $T$, and a well-formed assignment $\nu$ for $\phi$, we say that $V$ and $T$ determine $(\phi, \nu)$, denoted by $\mathit{determines}_{M}(T,V,\phi,\nu)$, iff for all $\mathit{db} \in \Omega_{D}^{\Gamma}$,  for all $\mathit{db}_{1}$, $\mathit{db}_{2} \in \llbracket \mathit{db}\rrbracket_{V,T}$, $[\phi \circ \nu]^{\mathit{db}_{1}} = [\phi \circ \nu]^{\mathit{db}_{2}}$.
In the following, given a view $\langle u, o,q,m\rangle$, we denote by $\mathit{def}(\langle u, o,q,m\rangle)$ its definition $q$.

In \thref{theorem:eopsec:enforcement:approxDetermines}, we show that $\mathit{apprDet}$ is, indeed, a sound under-approximation of query determinacy.

\begin{lemma}\thlabel{theorem:eopsec:enforcement:approxDetermines}
Let $M = \langle D, \Gamma \rangle$ be a system configuration, $s = \langle \mathit{db}, U, \mathit{sec},  T,V \rangle$ be an $M$-partial state, $T' \subseteq D$ be a set of tables, $V' \subseteq V$ be a set of views with owner's privileges, and $\phi$ be a formula.
If $\mathit{apprDet}(T',V',\phi,s,M) = \top$, then for all well-formed assignments $\nu$  for $\phi$, $\mathit{determines}_{M}(T',V', \\ \phi,\nu)$ holds.
\end{lemma}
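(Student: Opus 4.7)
The plan is to prove the claim by structural induction on $\phi$, in tandem with the case analysis given by the definition of $\mathit{apprDet}$. Given a well-formed assignment $\nu$ for $\phi$, I must show that for every $\mathit{db} \in \Omega_D^\Gamma$ and every pair $\mathit{db}_1, \mathit{db}_2 \in \llbracket \mathit{db} \rrbracket_{V',T'}$, one has $[\phi \circ \nu]^{\mathit{db}_1} = [\phi \circ \nu]^{\mathit{db}_2}$. Since $\nu$ only substitutes constants for free variables and does not interact with the predicate structure of $\phi$, the induction hypothesis applies uniformly for every well-formed $\nu$ in the quantifier cases.

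First I would handle the base cases. For $\phi$ of the form $x = v$, $\top$, or $\bot$, the formula does not mention any table or view, so its truth value depends solely on $\nu$, trivially giving determinacy. For $\phi = R(\overline{x})$ with $R \in T'$, the definition of $\llbracket \mathit{db} \rrbracket_{V',T'}$ gives $\mathit{db}_1(R) = \mathit{db}_2(R)$, so $[\phi \circ \nu]^{\mathit{db}_1} = [\phi \circ \nu]^{\mathit{db}_2}$. For $\phi = V(\overline{x})$ with $V$ a view in $V'$, the same holds since view materializations are preserved on $\llbracket \mathit{db} \rrbracket_{V',T'}$ by definition. The delicate base case is when $\mathit{apprDet}$ returns $\top$ because $\mathit{extend}(M,s,V')$ contains some $\langle v,o,q,O\rangle$ with $q = \{\overline{x} \mid \phi(\overline{x})\}$; here I would invoke \thref{theorem:eopsec:enforcement:extend} to obtain a view $v' \in V'$ disclosing the same data as $v$, from which $[\phi \circ \nu]^{\mathit{db}_1} = [\phi \circ \nu]^{\mathit{db}_2}$ follows because $v'$'s materialization is preserved on $\llbracket \mathit{db} \rrbracket_{V',T'}$.

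The inductive cases are then routine consequences of the compositional semantics of relational calculus. For $\phi = \psi \wedge \gamma$ or $\phi = \psi \vee \gamma$, the recursive calls give $\mathit{apprDet}(T',V',\psi,s,M) = \mathit{apprDet}(T',V',\gamma,s,M) = \top$, so by the induction hypothesis $[\psi \circ \nu]^{\mathit{db}_1} = [\psi \circ \nu]^{\mathit{db}_2}$ and likewise for $\gamma$; the boolean combination yields the desired equality. The negation case is analogous. For $\phi = \exists x.\, \psi$ and $\phi = \forall x.\, \psi$, I would note that $[\exists x.\, \psi \circ \nu]^{\mathit{db}_i} = \top$ iff there exists $u \in \mathbf{dom}$ with $[\psi \circ (\nu \oplus [x \mapsto u])]^{\mathit{db}_i} = \top$; applying the induction hypothesis to each extended assignment $\nu \oplus [x \mapsto u]$, which remains well-formed for $\psi$, gives the equality on both databases, and analogously for universal quantification.

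The main obstacle I anticipate is the view-matching base case: the soundness of $\mathit{apprDet}$ when a subformula $\phi$ syntactically coincides with a definition obtained via $\mathit{extend}$ rather than one already in $V'$. This is not purely a matter of semantic equivalence of the definitions, since we need that the view actually sits in the ``authorized'' set $V'$ over which $\llbracket \mathit{db} \rrbracket_{V',T'}$ is defined. The argument therefore hinges on \thref{theorem:eopsec:enforcement:extend}, which guarantees that every view produced by inlining corresponds (in terms of disclosed data) to a view already in $V'$; the rest of the proof is a straightforward structural induction.
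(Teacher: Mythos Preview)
Your proposal is correct and follows essentially the same approach as the paper: structural induction on $\phi$, with the compound cases handled via the induction hypothesis and the view-matching case handled via \thref{theorem:eopsec:enforcement:extend}. The only organizational difference is that you factor out the $\mathit{extend}$-matching clause as a single ``base case'' applicable to any $\phi$, whereas the paper repeats this sub-case inside every structural case (both atomic and compound); since the first clause of $\mathit{apprDet}$ can fire on compound formulae too, your inductive cases should explicitly note that you are in the ``otherwise'' branch so that the recursive conditions on the subformulae are indeed forced---but you have clearly anticipated this, and the argument goes through.
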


\begin{proof}
Let $M = \langle D, \Gamma \rangle$ be a system configuration, $s = \langle \mathit{db}, U, \mathit{sec},  T,V \rangle$ be an $M$-partial state, $T' \subseteq D$ be a set of tables, $V' \subseteq V$ be a set of views with owner's privileges, and $\phi$ be a formula.
We prove the lemma by structural induction over the formula $\phi$.

\smallskip
\noindent
{\bf Base Case: } There are a number of alternatives.
\smallskip
\begin{compactitem}
\item[\boldmath${\phi := R(\overline{x})}$] Assume that $\mathit{apprDet}(T,V,R(\overline{x}),s,M) = \top$. There are two cases:
\begin{compactenum}
\item $R \in T'$.
In this case, the set $T'$ trivially determines the formula $R(\overline{x})$ for any well-formed assignment $\nu$.
Therefore, $\mathit{determines}_{M}(T',V',R(\overline{x}), \nu)$ holds.
Indeed, assume that this is not the case.
Thus, there are three database states $\mathit{db}$, $\mathit{db}_{1}$, and $\mathit{db}_{2}$ such that $\mathit{db}_1, \mathit{db}_2 \in \llbracket \mathit{db} \rrbracket_{V',T'}$ and $[R(\overline{x}) \circ \nu]^{\mathit{db}_1} \neq [R(\overline{x}) \circ \nu]^{\mathit{db}_2}$.
From this and the $\mathit{RC}$ semantics, it follows that $\mathit{db}_1(R) \neq \mathit{db}_2(R)$.
From this, $R \in T'$, and  $\mathit{db}_1, \mathit{db}_2 \in \llbracket \mathit{db} \rrbracket_{V',T'}$, it follows that  $\mathit{db}_1(R) = \mathit{db}_2(R)$ leading to a contradiction.

\item there is a view $v'$ in $\mathit{extend}(M,s,V')$ such that $\mathit{def}(v') = \{\overline{x} | R(\overline{x})\}$. This means that there is a sequences of views $V_{1}$, \ldots, $V_{n}$ in $s$ such that $\mathit{def}(V_{1}) = \{\overline{x} | R(\overline{x})\}$, $\mathit{def}(V_{2}) = \{\overline{x} | V_{1}(\overline{x})\}$, \ldots,  $\mathit{def}(V_{n}) = \{\overline{x} | V_{n-1}(\overline{x})\}$, and $V_{n} \in V'$. Therefore, the set $V'$ trivially determines the formula $R(\overline{x})$ for any well-formed assignment $\nu$, and $V_{n}$ and $R$ are equivalent. Therefore, $\mathit{determines}_{M}(T',  V', R(\overline{x}),\nu)$ holds.
\end{compactenum}

\item[\boldmath$\phi := V(\overline{x})$] Assume that $\mathit{apprDet}(T,V,V(\overline{x}),s,M) = \top$. There are two cases:
\begin{compactenum}
\item There is a view $\langle V, o, q, O\rangle \in V'$. In this case, the set $V'$ trivially determines the formula $V(\overline{x})$ for any  assignment $\nu$ that is well-formed for $\phi$. 
Therefore, $\mathit{determines}_{M}(T',V',V(\overline{x}),\nu)$ holds.

\item there is a view $v'$ in $\mathit{extend}(M,s,V')$ such that $\mathit{def}(v') = \{\overline{x} | V(\overline{x})\}$. This means that there is a sequences of views $V_{1}$, \ldots, $V_{n}$ in $s$ such that $\mathit{def}(V_{1}) = \{\overline{x} | V(\overline{x})\}$, $\mathit{def}(V_{2}) = \{\overline{x} | V_{1}(\overline{x})\}$, \ldots,  $\mathit{def}(V_{n}) = \{\overline{x} | V_{n-1}(\overline{x})\}$, and $V_{n} \in V'$. Therefore, the set $V'$ trivially determines the formula $V(\overline{x})$ for any well-formed assignment $\nu$, and  $V_{n}$ and $V$ are equivalent. Therefore, $\mathit{determines}_{M}(T',V', V(\overline{x}),\nu)$ holds.
\end{compactenum}

\item[\boldmath $\phi := x = v$] For any well-formed assignment $\nu$, the empty set trivially determines the formula $ x = v$ and $\mathit{apprDet}\\(T',V',x = v,s,M) = \top$.

\item[\boldmath $\phi := \top$] The proof of this case is similar to that of $\phi := x = v$.

\item[\boldmath $\phi := \bot$] The proof of this case is similar to that of $\phi := x = v$.

\end{compactitem}
This concludes the proof of the base case.

\smallskip
\noindent
{\bf Induction Step: } Assume that the claim holds for all sub-formulae of $\phi$.
There are a number of cases:
\begin{compactitem}

\item[\boldmath $\phi := \psi \wedge \gamma$] Assume that $\mathit{apprDet}(T',V',\psi \wedge \gamma ,s,M) = \top$. There are two cases:
\begin{compactenum}
\item $\mathit{apprDet}(T',V',\psi,s,M) = \top$ and $\mathit{apprDet}(T',V',\\ \gamma ,s,M) = \top$. From the induction hypothesis, it follows that both $\mathit{determines}_{M}(T',V',\psi,\nu)$  and \\$\mathit{determines}_{M}(T',V',\gamma,\nu)$ hold for all well-formed assignments $\nu$. Therefore, also $\mathit{determines}_{M}(T',V',\\\psi \wedge \gamma,\nu)$ holds for all well-formed assignments $\nu$.
Indeed, assume that this is not the case.
Then, there are three database states $\mathit{db}$, $\mathit{db}_{1}$, and $\mathit{db}_{2}$ such that $\mathit{db}_1, \mathit{db}_2 \in \llbracket \mathit{db} \rrbracket_{V',T'}$ and $[(\psi \wedge \gamma) \circ \nu]^{\mathit{db}_1} \neq [(\psi \wedge \gamma) \circ \nu]^{\mathit{db}_2}$.
From this and the $\mathit{RC}$ semantics, there are two cases:
\begin{compactenum}
\item $[\psi \circ \nu]^{\mathit{db}_1} \neq [\psi \circ \nu]^{\mathit{db}_2}$. 
From this, it follows that $\mathit{determines}_{M}(T',V',\psi,\nu)$ does not hold.
This contradicts the fact that $\mathit{determines}_{M}(T', V',\psi, \\ \nu)$ holds.

\item $[\gamma \circ \nu]^{\mathit{db}_1} \neq [\gamma \circ \nu]^{\mathit{db}_2}$. 
The proof of this case is similar to the previous one.
\end{compactenum}

\item there is a view $v'$ in $\mathit{extend}(M,s,V')$ such that $\mathit{def}(v') = \{\overline{x} | \psi \wedge \gamma\}$. From \thref{theorem:eopsec:enforcement:extend}, it follows that there is a view $v'' \in V'$ that is equivalent to $v'$, and, therefore, to $\{\overline{x} | \psi \wedge \gamma\}$. Thus, $\mathit{determines}_{M}(T', V',\psi \wedge \gamma,\nu)$ holds for all assignments $\nu$ that are well-formed for $\phi$.
\end{compactenum}

\item[\boldmath $\phi := \psi \vee \gamma$] This case is similar to $\psi \wedge \gamma$.

\item[\boldmath $\phi := \neg \psi$] Assume that $\mathit{apprDet}(T',V', \neg \psi,s,M) = \top$. There are two cases:
\begin{compactenum}
\item $\mathit{apprDet}(T',V',\psi,s,M) = \top$. 
From the induction hypothesis, it follows that $\mathit{determines}_{M}(T', V', \psi,  \nu)$ holds.
Therefore, also $\mathit{determines}_{M}(T', V',  \neg \psi,  \nu)$ holds.
Indeed, assume that this is not the case.
This means that there are three database states $\mathit{db}$, $\mathit{db}_{1}$, and $\mathit{db}_{2}$ such that $\mathit{db}_1$, $\mathit{db}_2$ are in $\llbracket \mathit{db} \rrbracket_{V',T'}$ and $[\neg \psi \circ \nu]^{\mathit{db}_1} \neq [\neg \psi \circ \nu]^{\mathit{db}_2}$.
From this and the $\mathit{RC}$ semantics, it follows that $[\psi \circ \nu]^{\mathit{db}_1} \neq [\psi \circ \nu]^{\mathit{db}_2}$. 
From this, it follows that $\mathit{determines}_{M}(T', \\ V',\psi,\nu)$ does not hold.
This contradicts the fact that $\mathit{determines}_{M}(T',  V',\psi,\nu)$ holds.

\item there is a view $v'$ in $\mathit{extend}(M,s,V')$ such that $\mathit{def}(v') = \{\overline{x} | \neg \psi\}$.  From \thref{theorem:eopsec:enforcement:extend}, it follows that there is a view $v'' \in V'$ that is equivalent to $v'$, and, therefore, to $\{\overline{x} | \neg \psi\}$. Thus, $\mathit{determines}_{M}\\ (T',V',\neg \psi,\nu)$ holds for all well-formed assignments $\nu$.
\end{compactenum}

\item[\boldmath $\phi := \exists x.\, \psi$] Assume that $\mathit{apprDet}(T',V', \exists x.\, \psi,s,M) = \top$. There are two cases:
\begin{compactenum}
\item $\mathit{apprDet}(T',V',\psi,s,M) = \top$.
From the induction hypothesis, it follows that $\mathit{determines}_{M}(T', V',\psi,\nu)$ holds for all well-formed assignments $\nu$. 
Therefore, also $\mathit{determines}_{M}(T',V',\exists x.\, \psi,\nu)$ holds for all well-formed assignments $\nu$ (note that any well-formed assignment for $\psi$ is also a well-formed assignment for $\exists x.\,\psi$).
Indeed, assume that this is not the case.
This means that there are three database states $\mathit{db}$, $\mathit{db}_{1}$, and $\mathit{db}_{2}$ such that $\mathit{db}_1$, $\mathit{db}_2$ are in $\llbracket \mathit{db} \rrbracket_{V',T'}$ and $[(\exists x.\,\psi) \circ \nu]^{\mathit{db}_1} \neq [(\exists x.\,\psi) \circ \nu]^{\mathit{db}_2}$.
From this and the $\mathit{RC}$ semantics, it follows that there is a value $v \in \mathbf{dom}$ such that $[\psi \circ \nu[x \mapsto v]]^{\mathit{db}_1} \neq [\psi \circ \nu[x \mapsto v]]^{\mathit{db}_2}$. 
Note that $\nu[x \mapsto v]$ is a well-formed assignment for $\psi$.
Let's call the assignment $\nu'$.
From this, it follows that $[\psi \circ \nu']^{\mathit{db}_1} \neq [\psi \circ \nu']^{\mathit{db}_2}$. 
From this, it follows that $\mathit{determines}_{M}(T',V',\psi,\nu')$ does not hold.
This contradicts the fact that $\mathit{determines}_{M}(T',  V',\psi,\nu)$ holds for any well-formed assignment $\nu$.

\item there is a view $v'$ in $\mathit{extend}(M,s,V')$ such that $\mathit{def}(v') = \{\overline{x} | \exists x.\, \psi\}$.  From \thref{theorem:eopsec:enforcement:extend}, it follows that there is a view $v'' \in V'$ that is equivalent to $v'$, and, therefore, to $\{\overline{x} | \exists x.\, \psi\}$. Thus, $\mathit{determines}_{M}(T',V',\exists x.\, \psi,\nu)$ holds for all well-formed assignments for $\exists x.\, \psi$.
\end{compactenum}

\item[\boldmath $\phi := \forall x.\, \psi$] This case is similar to $\exists x.\,\psi$.
\end{compactitem}
This concludes the proof of the induction step.

This completes the proof. 
\end{proof}

We now show that $\auth^{\mathit{appr}}$ is a sound approximation of $\auth$, i.e., if $s \auth^{\mathit{appr}}\mathit{act}$, then $s \auth \mathit{act}$.
A \emph{derivation of $s \auth^{\mathit{appr}}\mathit{act}$} is a proof tree, obtained using the rules defining $\auth^{\mathit{appr}}$, which ends in $s \auth^{\mathit{appr}}\mathit{act}$.
The size of a derivation is  the number of $\auth^{\mathit{appr}}$ rules that are used to show that $s \auth^{\mathit{appr}}\mathit{act}$.
In the following, we switch freely between statements of the form $s \auth^{\mathit{appr}}\mathit{act}$ and their derivations.
We  denote the size of the derivation of $s \auth^{\mathit{appr}}\mathit{act}$ as $|s \auth^{\mathit{appr}}\mathit{act}|$.

\begin{lemma}\thlabel{theorem:eopsec:enforcement:auth:approx}
Let $M = \langle D, \Gamma\rangle$ be a system configuration, $s$ be an $M$-state, $c$ be an $M$-context, and $\mathit{act} \in {\cal A}_{D,{\cal U}} \cup {\cal TRIGGER}_{D}$.
If $s \auth^{\mathit{appr}}\mathit{act}$, then $s \auth \mathit{act}$.
\end{lemma}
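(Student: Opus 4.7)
\begin{proofsketch}
The plan is to prove the lemma by induction on the size of the derivation of $s \auth^{\mathit{appr}} \mathit{act}$. The key observation is that each rule of $\auth^{\mathit{appr}}$ in Figure~\ref{figure:eop:auth:approx} is a structural mirror of a corresponding rule of $\auth$ in Figure~\ref{figure:eop:auth:full}, differing only in two respects: the recursive premises refer to $\auth^{\mathit{appr}}$ rather than $\auth$, and the view-based \texttt{GRANT} rules invoke $\mathit{apprDet}$ in place of the true determinacy relation $\mathit{determines}_M$. The strategy is therefore to walk rule-by-rule and show that the premises of each $\auth^{\mathit{appr}}$ rule imply the premises of its $\auth$ counterpart.

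First I would dispatch the base cases — \texttt{SELECT}, the admin-specific rules \texttt{INSERT/DELETE admin}, \texttt{CREATE VIEW admin}, \texttt{CREATE TRIGGER admin}, together with \texttt{GRANT}-2, \texttt{GRANT}-5, \texttt{ADD USER}, and \texttt{EXECUTE TRIGGER}-3. These rules have no recursive premises and the premises of the approximation rule and of the $\auth$ rule coincide verbatim, so the conclusion is immediate. Next I would handle the ordinary inductive cases: \texttt{INSERT/DELETE}, \texttt{CREATE VIEW}, \texttt{CREATE TRIGGER}, \texttt{GRANT}-1, \texttt{EXECUTE TRIGGER}-1, and \texttt{EXECUTE TRIGGER}-2. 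In each, every premise of the form $s \auth^{\mathit{appr}} g$ (where $g$ is a grant or the trigger's concrete action) has a strictly smaller derivation; by the induction hypothesis $s \auth g$ holds, and the corresponding $\auth$ rule then applies directly. The \texttt{REVOKE} case is similar but requires applying the induction hypothesis to every grant in the resulting policy $\mathit{sec}' = \mathit{revoke}(\mathit{sec}, u, p, u')$; since each such premise in $\auth^{\mathit{appr}}$ has smaller size than the overall derivation, this is routine.

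The main obstacle is the pair \texttt{GRANT}-3 and \texttt{GRANT}-4, which delegate the \texttt{SELECT} privilege on a view with owner's privileges. Here the approximation rule checks $\mathit{apprDet}(T', V', q, s, M) = \top$, where $q$ is the defining query of the view, whereas the exact rule in Figure~\ref{figure:eop:auth:full} requires $\mathit{determines}_M(T', V', q)$ in the query-level sense of~\cite{nash2010views}. To bridge this gap I would invoke Lemma~\thref{theorem:eopsec:enforcement:approxDetermines}, which yields $\mathit{determines}_M(T', V', q, \nu)$ for every well-formed assignment $\nu$ of $q$'s free variables; since determinacy of the query $\{\overline{x}\mid\phi\}$ is exactly determinacy of the formula $\phi$ under every well-formed assignment of $\overline{x}$, this matches the premise required by \texttt{GRANT}-3 and \texttt{GRANT}-4 of $\auth$. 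The $\mathit{hasAccess}$ predicate appearing in these rules itself recursively invokes $\auth^{\mathit{appr}}$ on a set of delegation grants; each such internal call has a strictly smaller derivation, so the induction hypothesis lifts each of them to $\auth$ and hence yields $\mathit{hasAccess}$ in the exact sense. Assembling the lifted determinacy fact and the lifted $\mathit{hasAccess}$ fact discharges all premises of the exact \texttt{GRANT}-3 and \texttt{GRANT}-4 rules, closing the induction.
\end{proofsketch}
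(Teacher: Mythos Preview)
Your proposal is correct and matches the paper's proof almost exactly: induction on derivation size, with the same partition into base and inductive cases, and the same appeal to \thref{theorem:eopsec:enforcement:approxDetermines} to bridge $\mathit{apprDet}$ and $\mathit{determines}_M$ in \texttt{GRANT}-3 and \texttt{GRANT}-4. One small terminological slip: in the $\auth^{\mathit{appr}}$ versions of \texttt{GRANT}-3/4 it is the auxiliary functions $\mathit{aT}$ and $\mathit{aV}$ (not $\mathit{hasAccess}$) that recursively invoke $\auth^{\mathit{appr}}$ on the underlying \texttt{SELECT} grants; the paper lifts each such grant-level sub-derivation to $\auth$ via the induction hypothesis and then assembles $\mathit{hasAccess}$ in the exact sense, which is precisely the argument you describe.
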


\begin{proof}
Let $M = \langle D, \Gamma\rangle$ be a system configuration, $s$ be an $M$-state, $c$ be an $M$-context, and $\mathit{act} \in {\cal A}_{D,{\cal U}} \cup {\cal TRIGGER}_{D}$.
Furthermore, we assume that there is a derivation of $s \auth^{\mathit{appr}}\mathit{act}$.
We prove our claim by structural induction on the size of $s \auth^{\mathit{appr}}\mathit{act}$'s derivation.

{\bf Base Case: } We now show that, for all $s$ and $\mathit{act}$ such that  $|s \auth^{\mathit{appr}}\mathit{act}| = 1$, if $s \auth^{\mathit{appr}}\mathit{act}$, then $s \auth \mathit{act}$. 
There are several cases:
\begin{compactenum}
\item Rule \emph{\texttt{INSERT DELETE} admin}: If $s \auth^{\mathit{appr}}\mathit{act}$, then $s \auth \mathit{act}$ follows trivially from the rule's definition.
\item Rule \emph{\texttt{CREATE VIEW} admin}: If $s \auth^{\mathit{appr}}\mathit{act}$, then $s \auth \mathit{act}$ follows trivially from the rule's definition.
\item Rule \emph{\texttt{CREATE TRIGGER} admin}: If $s \auth^{\mathit{appr}}\mathit{act}$, then $s \auth \mathit{act}$ follows trivially from the rule's definition.
\item Rule \emph{\texttt{SELECT}}: If $s \auth^{\mathit{appr}}\mathit{act}$, then $s \auth \mathit{act}$ follows trivially from the rule's definition.
\item Rule \emph{\texttt{EXECUTE TRIGGER}-3}: If $s \auth^{\mathit{appr}}\mathit{act}$, then $s \auth \mathit{act}$ follows trivially from the rule's definition.
\item Rule \emph{\texttt{GRANT}-2}: If $s \auth^{\mathit{appr}}\mathit{act}$, then $s \auth \mathit{act}$ follows trivially from the rule's definition.
\item Rule \emph{\texttt{GRANT}-5}: If $s \auth^{\mathit{appr}}\mathit{act}$, then $s \auth \mathit{act}$ follows trivially from the rule's definition.
\item Rule \emph{\texttt{ADD USER}}: If $s \auth^{\mathit{appr}}\mathit{act}$, then $s \auth \mathit{act}$ follows trivially from the rule's definition.
\end{compactenum}

{\bf Induction Step: } We now assume that, for all derivations of size less than  $|s \auth^{\mathit{appr}}\mathit{act}|$, it holds that  if $s' \auth^{\mathit{appr}}\mathit{act}'$, then $s' \auth \mathit{act}'$. 
There are several cases:
\begin{compactenum}
\item Rule \emph{\texttt{INSERT DELETE}}: Assume that $s \auth^{\mathit{appr}}\mathit{act}$ holds and that $\mathit{act} = \langle u, \mathtt{op}', R,\overline{t}\rangle$, where $\mathit{op}'$ is one of $\{\mathtt{INSERT}, \\ \mathtt{DELETE}\}$. 
From the rule's definition, it follows that there is a grant $g = \langle \mathit{op}, u, \langle \mathit{op}',R\rangle, u'\rangle$ in $s.\mathit{sec}$ such that $s \auth^{\mathit{appr}} g$.
From this and the induction hypothesis, it follows that  $s \auth g$.
Therefore, $s \auth \mathit{act}$ holds because we can apply the \emph{\texttt{INSERT DELETE}} rule in $\auth$.

\item Rule \emph{\texttt{CREATE VIEW}}: The proof is similar to the one for the \emph{\texttt{INSERT DELETE}} rule.

\item Rule \emph{\texttt{CREATE TRIGGER}}: The proof is similar to the one for the \emph{\texttt{INSERT DELETE}} rule.

\item Rule \emph{\texttt{EXECUTE TRIGGER}-2}: Assume that $s \auth^{\mathit{appr}}\mathit{act}$ holds and that $\mathit{act} = \langle \mathit{i},\mathit{o},  \mathit{e}, R, \phi, \mathit{st}, A\rangle$ such that $[\phi[\overline{x}^{|R|} \\ \mapsto \mathit{tpl}(s)]]^{s.\mathit{db}} = \top$. 
From the rule's definition, it follows that both $s \auth^{\mathit{appr}} \mathit{getAction}(\mathit{st}, \mathit{ow}, \mathit{tpl}(s))$ and $s \auth^{\mathit{appr}} \mathit{getAction}(\mathit{st},\mathit{invoker}(s),  \mathit{tpl}(s))$ hold.
From this and the induction hypothesis, both $s \auth \mathit{getAction}(\mathit{st},\mathit{ow}, \\ \mathit{tpl}(s))$ and $s \auth \mathit{getAction}(\mathit{st}, \mathit{invoker}(s), \mathit{tpl}(s))$ hold.
From this and the \emph{\texttt{EXECUTE TRIGGER}-2} rule in $\auth$, it follows that also $s \auth \mathit{act}$ holds.

\item Rule \emph{\texttt{EXECUTE TRIGGER}-1}: The proof is similar to the one for the \emph{\texttt{EXECUTE TRIGGER}-2} rule.

\item Rule \emph{\texttt{GRANT}-1}: Assume that $s \auth^{\mathit{appr}}\mathit{act}$ holds and that $\mathit{act} = \langle \mathit{op},u,p,u'\rangle$, where $\mathit{op} \in \{\oplus,\oplus^{*}\}$.
From the rule's definition, it follows that there is a grant $g = \langle \oplus^{*}, u', p,  u''\rangle$ in $s.\mathit{sec}$ such that $s \auth^{\mathit{appr}} g$. 
From this and the induction hypothesis, ti follows that  $s \auth g$.
From this and the  \emph{\texttt{GRANT}-1} rule in $\auth$, it follows that $s \auth \mathit{act}$ holds.

\item  Rule \emph{\texttt{GRANT}-3}: Assume that $s \auth^{\mathit{appr}}\mathit{act}$ holds and that $\mathit{act} = \langle \mathit{op},u,p,o\rangle$, where $p = \langle \mathtt{SELECT}, v\rangle$, $v \in {\cal VIEW}_{D}^{\mathit{owner}}$, $\mathit{op} \in \{\oplus,\oplus^{*}\}$, and $o = \mathit{owner}(v)$ such that $o\neq\mathit{admin}$.
Let $T'$ be the set obtained through the $\mathit{aT}$ function and $V'$ be the set obtained through the $\mathit{aV}$ function. 
From the rule's definition, it follows that $\mathit{apprDet}(T',V', \mathit{def}(v)) = \top$. 
From this and  \thref{theorem:eopsec:enforcement:approxDetermines}, it follows that $\mathit{determines}_{M}(T',V', \mathit{def}(v))$ holds.
We now show that for any $\mathit{obj} \in T' \cup V'$, $\mathit{hasAccess}(s', \{\mathit{obj}\}, o,\oplus^{*})$ holds.
There are four cases:
\begin{compactenum}
\item \emph{$o = \admin$ and $\mathit{obj} \in D$.} Since $\mathit{obj} \in T'$, it follows that there is a $g =  \langle \oplus^{*}, o, \langle \texttt{SELECT},\mathit{obj}\rangle, u'\rangle$ such that $s \auth^{\mathit{appr}} g$. From this and the induction hypothesis, it follows that $s \auth g$. Therefore, $\mathit{hasAccess}(s', \{\mathit{obj}\}, o,\oplus^{*})$ holds.
\item \emph{$o \neq \admin$ and $\mathit{obj} \in D$.} Since $\mathit{obj} \in T'$, it follows that there is a $g =  \langle \oplus^{*}, o, \langle \texttt{SELECT},\mathit{obj}\rangle, u'\rangle$ in $\mathit{sec}$ such that $s \auth^{\mathit{appr}} g$. From this and the induction hypothesis,  it follows that $s \auth g$. Thus, $\mathit{hasAccess}(s', \{\mathit{obj}\}, o,  \oplus^{*})$ holds.
\item \emph{$o = \admin$ and $\mathit{obj} \in V$.} 
The proof of this case is similar to that of $o = \admin$ and $\mathit{obj} \in D$.
\item \emph{$o \neq \admin$ and $\mathit{obj} \in V$.} 
The proof of this case is similar to that of $o \neq \admin$ and $\mathit{obj} \in D$.
\end{compactenum}
Note that from $\mathit{hasAccess}(s', A, o,\mathit{op})$ and $\mathit{hasAccess}(s',\\ B, o,\mathit{op})$, it follows that $\mathit{hasAccess}(s', A \cup B, o,\mathit{op})$.  
Thus, $\mathit{hasAccess}(s, T' \cup V', o,\oplus^{*})$ holds. 
From this, it follows that $s \auth \mathit{act}$ holds because we can apply the corresponding rule in $\auth$.

\item Rule \emph{\texttt{GRANT}-4}: The proof is similar to the one for the \emph{\texttt{GRANT}-3} rule.

\item Rule \emph{\texttt{REVOKE}}:  Assume that $s \auth^{\mathit{appr}}\mathit{act}$ holds and that $\mathit{act} = \langle \ominus,u,p,u'\rangle$.
From the rule's definition, it follows that $s' \auth^{\mathit{appr}} g$ for any $g \in s'.\mathit{sec}$, where $s' = \mathit{applyRev}(s, \langle \ominus,u,p,u'\rangle)$.
From the induction's hypothesis, it follows that $s' \auth g$ for any $g \in s'.\mathit{sec}$.
Therefore, we can apply the rule \emph{\texttt{REVOKE}} of $\auth$ to derive $s \auth \mathit{act}$.

\end{compactenum}

This completes our proof.
\end{proof}

\subsection{Database Integrity Proofs}

We are now ready to prove that $f_{\mathit{int}}$ satisfies the \correctness{} property.

\begin{lemma}\thlabel{theorem:f:int:actions:independent:from:db}
For any two states $s = \langle \mathit{db}, U,\mathit{sec}, T,V,c \rangle$, $s' = \langle \mathit{db}', U,\mathit{sec}, T,V,c' \rangle$  in $\Omega_{M}$ and any action $a \in {\cal A}_{D,{\cal U}}$:
\begin{compactenum} 
\item $s \auth a$ iff $s' \auth a$, and

\item $s \auth^{\mathit{appr}} a$ iff $s' \auth^{\mathit{appr}} a$.
\end{compactenum}
\end{lemma}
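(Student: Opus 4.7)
\begin{proofsketch}
The plan is to prove both items by straightforward structural induction on the size of the derivation trees for $s \auth a$ (for item~1) and for $s \auth^{\mathit{appr}} a$ (for item~2), appealing to the fact that the rules defining these two relations do not actually inspect $\mathit{db}$ or $c$ whenever the action being authorized lies in ${\cal A}_{D,{\cal U}}$. The crucial observation is that the only rules whose premises involve $\mathit{db}$ or $c$ are \texttt{EXECUTE TRIGGER}-1, -2, and -3 in Figure~\ref{figure:eop:auth:full} (and their counterparts for $\auth^{\mathit{appr}}$), because they evaluate the trigger condition $[\phi[\overline{x}^{|R|}\mapsto\mathit{tpl}(c)]]^{\mathit{db}}$ and extract $\mathit{invoker}(c)$, $\mathit{tpl}(c)$ from the context. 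Since these rules derive judgments of the form $s \auth t$ with $t \in {\cal TRIGGER}_{D}$, and the lemma restricts $a$ to ${\cal A}_{D,{\cal U}}$, they never occur at the root of the derivation we must analyze.

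First I would verify, by inspecting each remaining rule, that recursive premises only generate subgoals of the form $s \auth g$ where $g$ is again an action in ${\cal A}_{D,{\cal U}}$ (typically a grant or revoke): the \texttt{INSERT DELETE}, \texttt{CREATE VIEW}, \texttt{CREATE TRIGGER}, \texttt{GRANT}-1, \texttt{GRANT}-3, \texttt{GRANT}-4, and \texttt{REVOKE} rules all recurse on grants, and \texttt{hasAccess} does likewise. Hence the entire derivation tree stays within the sublanguage ${\cal A}_{D,{\cal U}}$ and the trigger rules are never invoked. For the base cases (\texttt{SELECT}, \texttt{ADD USER}, \texttt{INSERT DELETE admin}, the \texttt{CREATE} admin rules, \texttt{GRANT}-2, \texttt{GRANT}-5), the premises refer only to $U$, $\mathit{sec}$, $T$, $V$, and the syntactic form of $a$, all of which are identical in $s$ and $s'$; so the same rule applies in $s'$. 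For the inductive step, given a derivation ending in one of the remaining rules, the induction hypothesis gives $s' \auth g$ for each recursive premise $s \auth g$, and the non-recursive side conditions involve only $U$, $\mathit{sec}$, $T$, $V$. In particular, for \texttt{REVOKE} the state $\mathit{applyRev}(s,\langle \ominus,u,p,u'\rangle)$ differs from $\mathit{applyRev}(s',\langle \ominus,u,p,u'\rangle)$ only in the database component (and context), so applying the induction hypothesis to each grant in the revised policy closes the case. For \texttt{GRANT}-3 and \texttt{GRANT}-4, the predicate $\mathit{determines}_{M}(T',V',q)$ quantifies over all database states in $\Omega_{D}^{\Gamma}$ and thus does not depend on the current $\mathit{db}$, while $\mathit{hasAccess}$ unfolds into recursive goals on grants, handled by the induction hypothesis. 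Both directions of the ``iff'' follow by symmetry of this argument.

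For item~2, the same inductive schema applies to $\auth^{\mathit{appr}}$: every rule in Figure~\ref{figure:eop:auth:approx} mirrors a rule of $\auth$, the \texttt{EXECUTE TRIGGER} cases are again excluded because $a \in {\cal A}_{D,{\cal U}}$, and the auxiliary functions $\mathit{aT}$, $\mathit{aV}$, $\mathit{extend}$, and $\mathit{apprDet}$ depend only on $U$, $\mathit{sec}$, $T$, $V$ (as can be seen from Figure~\ref{figure:access:control:function:EOP:approxDetermines}, where $\mathit{apprDet}$ inspects only the view definitions in $\mathit{extend}(M,s,V)$ and the structure of the query, never the contents of $\mathit{db}$).

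The main obstacle is purely bookkeeping: one must systematically inspect every rule in Figures~\ref{figure:eop:auth:full} and~\ref{figure:eop:auth:approx} and every auxiliary function ($\mathit{hasAccess}$, $\mathit{apprDet}$, $\mathit{extend}$, $\mathit{applyRev}$, $\mathit{aT}$, $\mathit{aV}$) to confirm that $\mathit{db}$ and $c$ never appear in premises of rules that can fire for $a \in {\cal A}_{D,{\cal U}}$. There is no subtle logical difficulty, but because the calculus is large the proof is long; a clean presentation would bundle the rules into groups (base rules with no recursion, rules recursing on grants, and the \texttt{REVOKE} rule) and handle each group uniformly, noting once and for all that the trigger rules are unreachable.
\end{proofsketch}
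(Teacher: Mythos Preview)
Your proposal is correct and follows exactly the same approach as the paper: the paper's proof simply observes that the only rules depending on $\mathit{db}$ or $c$ are the three \texttt{EXECUTE TRIGGER} rules, and since these are never invoked when deriving $s \auth a$ (or $s \auth^{\mathit{appr}} a$) for $a \in {\cal A}_{D,{\cal U}}$, the lemma follows trivially. Your write-up is considerably more detailed---you spell out the induction and systematically verify that recursive premises stay within ${\cal A}_{D,{\cal U}}$---but the core argument is identical.
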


\begin{proof}
It is easy to see that the only rules that depends on $\mathit{db}$, $\mathit{db}'$, $c$, and $c'$ are \texttt{EXECUTE TRIGGER - 1}, \texttt{EXECUTE TRIGGER - 2}, and \texttt{EXECUTE TRIGGER - 3}.
Since they are not used to evaluate whether $s \auth a$ and $s \auth^{\mathit{appr}} a$ hold for actions in ${\cal A}_{D,{\cal U}}$, the lemma follows trivially.
\end{proof}

\begin{lemma}\thlabel{theorem:eopsec:enforcement:second:variant}
 Let $P = \langle M, f_{\mathit{int}}\rangle$ be an \accessControlConfiguration{}, where $M$ is a system configuration,  and $L$ be the $P$-LTS.
	Then, for all $M$-states $s = \langle \mathit{db}, U,\mathit{sec}, T,V,c \rangle \in \Omega_{M}$ such that $\mathit{trigger}(s) = \epsilon$ and all actions $\mathit{act} \in {\cal A}_{D, {\cal U}}$, if $f_{\mathit{int}}(s,\mathit{act}) = \top$, then $s \auth \mathit{act}$.
\end{lemma}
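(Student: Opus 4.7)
The plan is to prove the lemma by a direct case analysis on the definition of $f_{\mathit{int}}$, restricted to the hypothesis $\mathit{trigger}(s) = \epsilon$. Under this hypothesis, inspecting the four clauses defining $f_{\mathit{int}}(s, \mathit{act})$, only the first clause can produce $\top$: the second and third clauses require $\mathit{trigger}(s) = t$ with $t \neq \epsilon$, so they are incompatible with $\mathit{trigger}(s) = \epsilon$; and the fourth clause gives $\bot$. Hence the assumption $f_{\mathit{int}}(s, \mathit{act}) = \top$ together with $\mathit{trigger}(s) = \epsilon$ forces the first clause to fire, which in turn yields $s \auth^{\mathit{appr}} \mathit{act}$ directly.

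From $s \auth^{\mathit{appr}} \mathit{act}$ I would then invoke \thref{theorem:eopsec:enforcement:auth:approx}, which establishes that $\auth^{\mathit{appr}}$ is a sound under-approximation of $\auth$. Applying that lemma to our derivation yields the desired conclusion $s \auth \mathit{act}$.

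There is essentially no obstacle here: the lemma is a corollary of the definition of $f_{\mathit{int}}$ on the non-trigger branch together with the previously established soundness of $\auth^{\mathit{appr}}$. The only minor care is to note that the hypothesis $\mathit{act} \in \mathcal{A}_{D,\mathcal{U}}$ (rather than a trigger in $\mathcal{TRIGGER}_D$) is harmless: the first clause of $f_{\mathit{int}}$ applies uniformly to any action in $\mathcal{A}_{D,\mathcal{U}} \cup \mathcal{TRIGGER}_D$, and \thref{theorem:eopsec:enforcement:auth:approx} is stated for actions in $\mathcal{A}_{D,\mathcal{U}} \cup \mathcal{TRIGGER}_D$, so it covers our case a fortiori. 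No induction on runs is needed for this particular lemma, because the statement is a pointwise property of $f_{\mathit{int}}$ rather than a property of reachable states; the run-level soundness argument (which would require propagating the fact that the policy in each reachable state consists of grants authorized by $\auth$) will be needed for the global \correctness{} theorem, but not for this lemma.
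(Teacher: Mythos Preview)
Your proposal is correct and matches the paper's argument essentially step for step: from $\mathit{trigger}(s)=\epsilon$ and $f_{\mathit{int}}(s,\mathit{act})=\top$ one reads off $s \auth^{\mathit{appr}} \mathit{act}$ from the first clause of $f_{\mathit{int}}$, and then \thref{theorem:eopsec:enforcement:auth:approx} yields $s \auth \mathit{act}$. The only cosmetic difference is that the paper wraps this in a proof by contradiction while you argue directly; the content is identical.
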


\begin{proof}
We prove the theorem by contradiction. 
Assume, for contradiction's sake, that the claim does not hold.
Therefore, there is a state $s$ and an action $\mathit{act}$ such that $f_{\mathit{int}}(s,\mathit{act}) = \top$, $\mathit{trigger}(s) = \epsilon$, and $s \not\auth \mathit{act}$. 
Thus, from $f_{\mathit{int}}(s,\mathit{act}) = \top$, $\mathit{trigger}(s) = \epsilon$, and $f_{\mathit{int}}$'s definition, it follows $s \auth^{\mathit{appr}} \mathit{act}$. From this and  \thref{theorem:eopsec:enforcement:auth:approx}, it follows that $s \auth \mathit{act}$ leading to a contradiction.
\end{proof}

\begin{lemma}\thlabel{theorem:eopsec:enforcement:second}
 Let $P = \langle M, f_{\mathit{int}}\rangle$ be an \accessControlConfiguration{}, where $M$ is a system configuration,  and $L$ be the $P$-LTS.
	Then, for all $M$-states $s = \langle \mathit{db}, U,\mathit{sec}, T,V,c \rangle \in \Omega_{M}$ such that $\mathit{trigger}(s) = \epsilon$ and all actions $\mathit{act} \in {\cal A}_{D, {\cal U}}$, and all $M$-states $s'$ reachable from $s$ in one step through $t$, if $\mathit{secEx}(s') =\bot$, then $s \auth \mathit{act}$.
\end{lemma}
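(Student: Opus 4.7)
The plan is to reduce this statement essentially to the preceding variant (\thref{theorem:eopsec:enforcement:second:variant}) by extracting, from the hypothesis $\mathit{secEx}(s') = \bot$, the fact that the access control function actually authorized $\mathit{act}$ in $s$.

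First, I would observe that the label on the transition should be read as $\mathit{act}$ (the action appearing in the statement), since that is the only label available in the hypothesis. Then I would inspect the LTS rules for all action kinds in ${\cal A}_{D,{\cal U}}$ (\texttt{SELECT}, \texttt{INSERT}, \texttt{DELETE}, \texttt{GRANT}, \texttt{REVOKE}, \texttt{CREATE}, \texttt{ADD\_USER}) and note that each one falls into either a ``Success/Exception'' family or a ``Deny'' family. The ``Deny'' rules all produce an action effect of the form $\langle \mathit{act}, \bot, \bot, \emptyset\rangle$, which by the definition of $\mathit{secEx}$ on contexts gives $\mathit{secEx}(s') = \top$. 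Contrapositively, $\mathit{secEx}(s') = \bot$ forces the transition to have been taken via a non-deny rule, and every such rule has $f_{\mathit{int}}(s, \mathit{act}) = \top$ as a premise. Thus the hypothesis yields $f_{\mathit{int}}(s, \mathit{act}) = \top$.

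With $f_{\mathit{int}}(s, \mathit{act}) = \top$ in hand, the rest is immediate. Since $\mathit{trigger}(s) = \epsilon$ and $\mathit{act} \in {\cal A}_{D,{\cal U}}$, the definition of $f_{\mathit{int}}$ selects its first case, so $s \auth^{\mathit{appr}} \mathit{act}$. Applying \thref{theorem:eopsec:enforcement:auth:approx} (soundness of $\auth^{\mathit{appr}}$ with respect to $\auth$) then yields $s \auth \mathit{act}$, as required. Equivalently, one could invoke \thref{theorem:eopsec:enforcement:second:variant} directly once $f_{\mathit{int}}(s, \mathit{act}) = \top$ is established.

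The only nontrivial step is the first one, namely the case analysis over the LTS rules to verify that $\mathit{secEx}(s') = \bot$ truly forces $f_{\mathit{int}}(s, \mathit{act}) = \top$. This is the main obstacle, but it is purely a bookkeeping exercise: one walks through Figures \ref{table:rules:lts:1}--\ref{table:rules:lts:8}, checking that every rule whose conclusion has label $\mathit{act}$ either requires $f_{\mathit{int}}(s, \mathit{act}) = \top$ as a premise (Success and integrity-Exception rules) or produces $\mathit{secEx}(s') = \top$ in the resulting context (Deny rules). No subtle reasoning about triggers is needed here because the hypothesis $\mathit{trigger}(s) = \epsilon$ rules out the ``Trigger'' rules of Figures \ref{table:rules:lts:5a}--\ref{table:rules:lts:6}, whose labels lie in ${\cal TRIGGER}_{D}$ rather than ${\cal A}_{D,{\cal U}}$.
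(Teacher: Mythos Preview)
Your proposal is correct and follows essentially the same approach as the paper: the paper also argues (by contradiction, but the logic is identical) that $\mathit{secEx}(s') = \bot$ together with the LTS rules forces $f_{\mathit{int}}(s,\mathit{act}) = \top$, then uses $\mathit{trigger}(s) = \epsilon$ and the definition of $f_{\mathit{int}}$ to obtain $s \auth^{\mathit{appr}} \mathit{act}$, and finally invokes \thref{theorem:eopsec:enforcement:auth:approx}. Your case analysis over the Success/Exception versus Deny rule families is exactly what the paper compresses into the single sentence ``From $\mathit{secEx}(s') = \bot$ and the LTS's rules, it follows that $f_{\mathit{int}}(s,\mathit{act}) = \top$.''
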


\begin{proof}
We prove the theorem by contradiction. 
Assume, for contradiction's sake, that the claim does not hold.
Therefore, there are two states $s$ and $s'$ and an action $\mathit{act}$ such that $s'$ is reachable in one step from $s$ through $\mathit{act}$, $\mathit{secEx}(s') = \bot$, $\mathit{trigger}(s) = \epsilon$, and  $s \not\auth \mathit{act}$. 
From  $\mathit{secEx}(s') = \bot$ and the LTS's rules, it follows that $f_{\mathit{int}}(s,\mathit{act}) = \top$.
From this, $\mathit{trigger}(s) = \epsilon$, and $f_{\mathit{int}}$'s definition, it follows $s \auth^{\mathit{appr}} \mathit{act}$. 
From this and  \thref{theorem:eopsec:enforcement:auth:approx}, it follows that $s \auth \mathit{act}$ leading to a contradiction.
\end{proof}

\begin{lemma}\thlabel{theorem:eopsec:enforcement:third:variant}
 Let $P = \langle M, f_{\mathit{int}}\rangle$ be an \accessControlConfiguration{}, where $M$ is a system configuration,  and $L$ be the $P$-LTS.
	Then, for all $M$-states $s = \langle \mathit{db}, U,\mathit{sec}, T,V,c \rangle \in \Omega_{M}$ and all triggers $t \in {\cal TRIGGER}_{D}$ 
	such that $\mathit{trigger}(s) = t$, the following hold:
	\begin{compactenum}
	\item If $f_{\mathit{int}}(s,c) = \top$ and $[\psi]^{\mathit{db}} = \bot$, then $s \auth t$, where $c = \mathit{trigCond}(s) = \langle u, \mathtt{SELECT}, \psi \rangle$. 
	\item If $f_{\mathit{int}}(s,c) = \top$, $[\psi]^{\mathit{db}} = \top$, and $f_{\mathit{int}}(s,a) = \top$, then $s \auth t$, where $c = \mathit{trigCond}(s) = \langle u, \mathtt{SELECT}, \psi \rangle$ and $a = \mathit{trigAct}(s)$. 
	\end{compactenum}	
\end{lemma}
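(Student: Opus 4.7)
The plan is to handle the two parts of the lemma separately by direct appeal to the $\auth$ inference rules together with \thref{theorem:eopsec:enforcement:auth:approx}. Part 1 uses the trigger rule for disabled triggers directly, while Part 2 uses the fact that the third branch of $f_{\mathit{int}}$'s definition explicitly requires $s \auth^{\mathit{appr}} t$, which is then lifted to $s \auth t$ by soundness of the under-approximation.

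For Part 1, I first unpack the trigger $t = \mathit{trigger}(s) \in T$ as $\langle \mathit{id}, \mathit{ow}, \mathit{ev}, R, \phi, \mathit{st}, m\rangle$. By the definition of $\mathit{trigCond}(s)$, the hypothesis $c = \langle u, \mathtt{SELECT}, \psi\rangle$ yields $\psi = \phi[\overline{x}^{|R|} \mapsto \mathit{tpl}(s)]$. The assumption $[\psi]^{\mathit{db}} = \bot$ is therefore exactly the premise of the \texttt{EXECUTE TRIGGER-3} rule in the definition of $\auth$. Since $t \in T$ (as $\mathit{trigger}(s)$ returns the first element of $\mathit{triggers}(s)$, which consists entirely of triggers drawn from $T$), the rule applies and we conclude $s \auth t$ immediately, without even invoking the hypothesis $f_{\mathit{int}}(s,c) = \top$.

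For Part 2, I inspect the definition of $f_{\mathit{int}}$ under the hypotheses $\mathit{trigger}(s) = t \neq \epsilon$ and $a = \mathit{trigAct}(s)$. The first two branches of the case split do not apply (the first requires $\mathit{trigger}(s) = \epsilon$; the second requires $a = \mathit{trigCond}(s)$, which is ruled out since $\mathit{trigCond}(s)$ is a \texttt{SELECT} action while $\mathit{trigAct}(s)$ is an \texttt{INSERT}, \texttt{DELETE}, \texttt{GRANT}, or \texttt{REVOKE} action). Hence the assumption $f_{\mathit{int}}(s,a) = \top$ can only be witnessed by the third branch, which explicitly yields $s \auth^{\mathit{appr}} t$. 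Applying \thref{theorem:eopsec:enforcement:auth:approx} then gives $s \auth t$, as required.

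There is essentially no obstacle: the proof is a direct unfolding of definitions combined with soundness of $\auth^{\mathit{appr}}$. The only subtlety worth checking is that the hypotheses $[\psi]^{\mathit{db}} = \top$ and $f_{\mathit{int}}(s,c) = \top$ in Part 2 are consistent with the derivation of $s \auth^{\mathit{appr}} t$: since $\auth^{\mathit{appr}}$ mirrors the three \texttt{EXECUTE TRIGGER} rules, and only rules 1 and 2 apply when the condition evaluates to $\top$, the derivation of $s \auth^{\mathit{appr}} t$ must proceed through the corresponding rule in $\auth$, and \thref{theorem:eopsec:enforcement:auth:approx} handles this mapping uniformly. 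No case analysis on the trigger's security mode $m \in \{A, O\}$ is needed at this level, since that distinction is already absorbed into the inductive argument of \thref{theorem:eopsec:enforcement:auth:approx}.
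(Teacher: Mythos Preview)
Your proposal is correct and follows essentially the same approach as the paper: Part 1 via the \texttt{EXECUTE TRIGGER-3} rule, Part 2 by extracting $s \auth^{\mathit{appr}} t$ from the third clause of $f_{\mathit{int}}$ and then invoking \thref{theorem:eopsec:enforcement:auth:approx}. The only cosmetic difference is that the paper wraps both parts in a proof by contradiction, whereas you argue directly; your version is slightly cleaner and also makes explicit the $t \in T$ premise needed for the rule, which the paper leaves implicit.
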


\begin{proof}
We prove both claims by contradiction.
Assume, for contradiction's sake, that the first claim does not hold. Therefore, there is a state $s$ and a trigger $t$ such that $f_{\mathit{int}}(s,c) \\ = \top$ and $[\psi]^{\mathit{db}} = \bot$ and $s \not\auth t$. 
From  $[\psi]^{\mathit{db}} = \bot$, $\mathit{trigger}(s) = t$, and the  rule \texttt{EXECUTE TRIGGER - 3}, it follows that $s \auth t$ holds, which  leads to a contradiction.
Assume, for contradiction's sake, that the second claim does not hold.
Therefore, there is a state $s$ and a trigger $t$ such that $f_{\mathit{int}}(s,c) = \top$, $[\psi]^{\mathit{db}} = \top$, $f_{\mathit{int}}(s,a) = \top$, and $s \not\auth t$.
 From $f_{\mathit{int}}(s,a) = \top$, it follows $s \auth^{\mathit{appr}} t$. 
From this and \thref{theorem:eopsec:enforcement:auth:approx}, it follows that $s \auth t$ holds leading to a contradiction.
This completes the proof.
\end{proof}

\begin{lemma}\thlabel{theorem:eopsec:enforcement:third}
 Let $P = \langle M, f_{\mathit{int}}\rangle$ be an \accessControlConfiguration{}, where $M$ is a system configuration,  and $L$ be the $P$-LTS.
	Then, for all $M$-states $s = \langle \mathit{db}, U,\mathit{sec}, T,V,c \rangle \in \Omega_{M}$, all triggers $t \in {\cal TRIGGER}_{D}$, and all $M$-states $s'$ reachable from $s$ in one step through $t$, if $\mathit{secEx}(s') =\bot$, then $s \auth t$.
\end{lemma}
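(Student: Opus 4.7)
The plan is to prove this by case analysis on the LTS transition rule that produced $s'$ from $s$ via $t$, combined with an application of the already established \thref{theorem:eopsec:enforcement:third:variant}. The key observation is that \thref{theorem:eopsec:enforcement:third:variant} reduces the claim ``$s \auth t$'' to checking authorization of the trigger's condition and its action through $f_{\mathit{int}}$, so the main task is to verify that the LTS structurally guarantees these two access control checks succeeded whenever $\mathit{secEx}(s') = \bot$.

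First, I would argue by contradiction. Assume for contradiction that $s' $ is reachable from $s$ in one step through $t$, $\mathit{secEx}(s') = \bot$, and yet $s \not\auth t$. Inspecting the trigger rules in Figures~\ref{table:rules:lts:5a}--\ref{table:rules:lts:6}, the only rules that take the action label $t \in {\cal TRIGGER}_D$ are \emph{Trigger DELETE-INSERT Success}, \emph{Trigger DELETE-INSERT Exception}, \emph{Trigger GRANT Success}, \emph{Trigger REVOKE Success}, \emph{Trigger Disabled}, \emph{Trigger Deny Condition}, and \emph{Trigger Deny Action}. The last two set $\mathit{secEx}(s') = \top$ (they encode the access control decision $\bot$ in the action effect), so under the hypothesis $\mathit{secEx}(s') = \bot$ they are excluded. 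This leaves exactly two families to consider: the \emph{Trigger Disabled} rule, and the success/integrity-exception rules.

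In the \emph{Trigger Disabled} case, the rule's premises ensure $f_{\mathit{int}}(s, \mathit{trigCond}(s)) = \top$ and $[\psi]^{\mathit{db}} = \bot$, where $\psi$ is the instantiated \texttt{WHEN} condition. By \thref{theorem:eopsec:enforcement:third:variant}(1) this yields $s \auth t$, contradicting our assumption. In each of the remaining cases (\emph{Trigger DELETE-INSERT Success}, \emph{Trigger DELETE-INSERT Exception}, \emph{Trigger GRANT Success}, \emph{Trigger REVOKE Success}), the rule's premises include both $f_{\mathit{int}}(s, \mathit{trigCond}(s)) = \top$ and $[\psi]^{\mathit{db}} = \top$ together with $f_{\mathit{int}}(s, \mathit{trigAct}(s)) = \top$ (the latter being the access control check on the concrete action computed via $\mathit{getAction}$, which is exactly $\mathit{trigAct}(s)$ by definition). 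Applying \thref{theorem:eopsec:enforcement:third:variant}(2) then gives $s \auth t$, again contradicting the assumption.

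The only step that requires care is lining up the notation between the LTS rules and the $f_{\mathit{int}}$ abstractions: the LTS rules phrase the access-control call on the condition as $f(\ldots, \langle u, \mathtt{SELECT}, \phi'\rangle)$ with $\phi' = \phi[\overline{x}^{|R'|} \mapsto \overline{t}]$ and $u = \mathit{getActualUser}(m, \mathit{ow}, \mathit{invk})$, whereas $\mathit{trigCond}(s)$ in the definition of $f_{\mathit{int}}$ is built using $\mathit{tpl}(s)$ and the analogous user selection based on the trigger's security mode. One must observe that the $M$-context $c$ in $s$ carries exactly the tuple $\overline{t}$ via $\mathit{tpl}(s)$ and the user $\mathit{invk}$ via $\mathit{invoker}(s)$ that the LTS rule consumes, so the two expressions denote the same action; an identical identification handles $\mathit{trigAct}(s)$ and $\mathit{getAction}(\mathit{stmt}, u, \overline{t})$ in each of the four success/exception rules. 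Once this bookkeeping is done, the case analysis closes. I expect this syntactic alignment of the LTS premises with the definitions of $\mathit{trigCond}$ and $\mathit{trigAct}$ to be the only mildly delicate part; the rest is a routine rule enumeration.
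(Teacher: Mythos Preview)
Your overall strategy matches the paper's: argue by contradiction, eliminate the two \emph{Deny} rules because they force $\mathit{secEx}(s')=\top$, and then split into the disabled case versus the enabled (success/exception) cases. The disabled case is handled exactly as in the paper, via \texttt{EXECUTE TRIGGER - 3} (equivalently, \thref{theorem:eopsec:enforcement:third:variant}(1)).

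There is, however, one genuine gap in the enabled case. You assert that ``the rule's premises include \ldots\ $f_{\mathit{int}}(s,\mathit{trigAct}(s))=\top$'', but this is not what the LTS rules say. In each of \emph{Trigger DELETE-INSERT Success/Exception} and \emph{Trigger GRANT/REVOKE Success}, the access-control check on the action is evaluated in the intermediate state $s''=\langle \mathit{db},U,\mathit{sec},T,V,h\cdot\mathit{aE},\mathit{aE}',\langle\mathit{rS},\overline{t},\mathit{invk},t\cdot\mathit{tr}\rangle\rangle$, which differs from $s$ in the history and action-effect components of the context. So the premise you actually obtain is $f_{\mathit{int}}(s'',\mathit{trigAct}(s''))=\top$, not $f_{\mathit{int}}(s,\mathit{trigAct}(s))=\top$, and \thref{theorem:eopsec:enforcement:third:variant}(2) cannot be invoked directly on $s$. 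The paper closes this gap explicitly: from $f_{\mathit{int}}(s'',a)=\top$ it unfolds $f_{\mathit{int}}$ to get $s''\auth^{\mathit{appr}}t$, applies \thref{theorem:eopsec:enforcement:auth:approx} to obtain $s''\auth t$, and then observes that $s$ and $s''$ are equal modulo the context's history, which none of the $\auth$ rules inspect, so $s\auth t$ as well. Your ``bookkeeping'' paragraph only aligns the \emph{action} arguments ($\mathit{trigCond}$, $\mathit{trigAct}$) with the LTS expressions; you also need to align the \emph{state} argument, and that is precisely the step you are missing.
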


\begin{proof}
We prove the theorem by contradiction. 
Assume, for contradiction's sake, that the claim does not hold.
Therefore, there are two states $s$ and $s'$ and a trigger $t$ such that $s'$ is reachable in one step through $t$ from $s$, $\mathit{secEx}(s') = \bot$, and $s \not\auth t$. 
In the following, let $c = \langle u, \mathtt{SELECT}, \psi\rangle$ be $\mathit{trigCond}(s)$ and $a$ be $\mathit{trigAct}(s)$.
Since $t$ is a trigger, $s'$ is reachable in one-step from $s$ through $t$, and $\mathit{secEx}(s') = \bot$, there are two cases, according to the LTS rules:
\begin{compactenum}
\item $f_{\mathit{int}}(s, c)  =\top$ and $[\psi]^{\mathit{db}} = \bot$. In this case, we can always apply the rule \texttt{EXECUTE TRIGGER - 3} in the state $s$ to derive $s \auth t$ leading to a contradiction.

\item $f_{\mathit{int}}(s, c)  =\top$, $[\psi]^{\mathit{db}} = \top$, and $f_{\mathit{int}}(s'',a) = \top$, where $s''$ is the state obtained from $s$ by updating the context according to the LTS rules.
From $f_{\mathit{int}}$'s definition and $f_{\mathit{int}}(s'',a) = \top$, it follows $s'' \auth^{\mathit{appr}} t$. From this and \thref{theorem:eopsec:enforcement:auth:approx}, it follows $s'' \auth t$. Since $s$ and $s''$ are equivalent modulo the context's history and the context's history is not used in the rules defining $\auth$, it also follows that $s \auth t$ holds. This lead to a contradiction.
\end{compactenum}
Both cases lead to a contradiction. This completes the proof.
\end{proof}

We are now ready to prove our main result, namely that $f_{\mathit{int}}$ provides \correctness{}.

\begin{theorem}
Let $P = \langle M, f_{\mathit{int}} \rangle$ be an \accessControlConfiguration{}, where $M = \langle D, \Gamma \rangle$ is a system configuration.
The \acf{} $f_{\mathit{int}}$ provides \correctness{} with respect to $P$.
\end{theorem}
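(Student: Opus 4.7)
The plan is to derive this theorem directly by case analysis on the action $a$, invoking the two lemmas that have already been established, namely \thref{theorem:eopsec:enforcement:second} (for user actions) and \thref{theorem:eopsec:enforcement:third} (for triggers). Once the case split is set up, the argument is essentially a book-keeping exercise: both lemmas already give exactly the conclusion $s \auth a$ from the hypothesis $\mathit{secEx}(s') = \bot$, so there is no additional reasoning about $\auth^{\mathit{appr}}$, determinacy, or the belief-revision machinery required at this level.

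First I would fix arbitrary reachable $M$-states $s, s' \in \Omega_M$ and an action $a \in {\cal A}_{D,{\cal U}} \cup {\cal TRIGGER}_D$ such that $s \xrightarrow{a}_{f_{\mathit{int}}} s'$ and $\mathit{secEx}(s') = \bot$. Then I would split on whether $a \in {\cal A}_{D,{\cal U}}$ or $a \in {\cal TRIGGER}_D$. In the first case, I would note that every transition rule in Figures \ref{table:rules:lts:1}--\ref{table:rules:lts:3} and \ref{table:rules:lts:7}--\ref{table:rules:lts:8} that is triggered by a user action requires the pending trigger list in $s$'s context to be empty, i.e.\ $\mathit{trigger}(s) = \epsilon$. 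Combining this observation with $\mathit{secEx}(s') = \bot$ and applying \thref{theorem:eopsec:enforcement:second} yields $s \auth a$ immediately.

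In the second case, when $a$ is a trigger $t \in {\cal TRIGGER}_D$, the transition is governed by one of the rules in Figures \ref{table:rules:lts:5a}--\ref{table:rules:lts:6}; these are precisely the situations addressed by \thref{theorem:eopsec:enforcement:third}, so applying that lemma with the hypothesis $\mathit{secEx}(s') = \bot$ gives $s \auth t$ directly. Since the two cases exhaust all of ${\cal A}_{D,{\cal U}} \cup {\cal TRIGGER}_D$ and in both cases we obtain $s \auth a$, the definition of \correctness{} is satisfied, completing the proof.

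The only delicate point, and the one I would be most careful about, is establishing the observation that $\mathit{trigger}(s) = \epsilon$ whenever $a$ is a user action. This is a property of the LTS rather than of $f_{\mathit{int}}$, and a fully rigorous argument would inspect each user-action rule in the LTS semantics to verify that every premise of the form $s = \langle \mathit{db}, U, \mathit{sec}, T, V, h, \mathit{aE}, \langle \mathit{rS}, \overline{t}', u', \epsilon\rangle\rangle$ indeed forces the pending trigger sequence component to be $\epsilon$. There is no mathematical difficulty, only a slightly tedious enumeration over the transition rules presented in Appendix \ref{app:lts}.
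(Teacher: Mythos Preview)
Your proposal is correct and follows essentially the same approach as the paper: split on whether $a$ is a user action or a trigger and invoke \thref{theorem:eopsec:enforcement:second} and \thref{theorem:eopsec:enforcement:third} respectively. You are in fact slightly more careful than the paper's own proof, which simply cites \thref{theorem:eopsec:enforcement:second} without explicitly discharging its hypothesis $\mathit{trigger}(s) = \epsilon$; your observation that every user-action rule in the LTS has the pending-trigger component fixed to $\epsilon$ is exactly the missing justification.
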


\begin{proof}
To show that $f_{\mathit{int}}$ satisfies the \correctness{} property, we have to prove that for all reachable states $s = \langle \mathit{db},U,\mathit{sec},T,V,c\rangle$:
\begin{compactenum}
\item for all states $s'$ reachable from $s$ in one step through an action $a \in {\cal A}_{D, {\cal U}}$, if $\mathit{secEx}(s') = \bot$, then $s \auth a$,

\item for all states $s'$ reachable from $s$ in one step through a trigger $t \in {\cal TRIGGER}_{D}$, if $\mathit{secEx}(s') = \bot$, then $s \auth t$.

\end{compactenum}

The first condition has been proved in \thref{theorem:eopsec:enforcement:second} and the second one has been proved in \thref{theorem:eopsec:enforcement:third}. Therefore, $f_{\mathit{int}}$ satisfies the \correctness{} property.
\end{proof}

We also prove that, by using $f_{\mathit{int}}$, any reachable state has a consistent policy.
This is the underlying reason why $f_{\mathit{int}}$ prevents Attacks \ref{example:view:escalation1} and \ref{example:view:escalation2}.

\begin{lemma}\thlabel{theorem:integrity:policy:consistency}
Let $P = \langle M, f_{\mathit{int}} \rangle$ be an \accessControlConfiguration{}, where $M = \langle D, \Gamma \rangle$ is a system configuration.
For each reachable state $s = \langle \mathit{db}, U, \mathit{sec}, T, V, c\rangle$, $s \auth g$ for all $g \in \mathit{sec}$.
\end{lemma}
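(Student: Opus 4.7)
The plan is to proceed by induction on the length of a run $r$ reaching $s$, since reachable states are exactly those that are the last state of some run starting from an initial state.

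For the \textbf{base case}, $s$ is an initial state, so $\mathit{sec}$ contains only grants of the form $g = \langle \mathit{op}, u, p, \admin\rangle$, and all views and triggers in $s$ are owned by $\admin$. I would establish $s \auth g$ by a case analysis on the privilege $p$: if $p \notin {\cal PRIV}_{D}^{\mathtt{SELECT}, {\cal VIEW}^{\mathit{owner}}_{D}}$ use rule \texttt{GRANT}-2; if $p = \langle \mathtt{SELECT}, v\rangle$ with $v \in {\cal VIEW}_{D}^{\mathit{owner}}$ and $v \in V$, use rule \texttt{GRANT}-3 with $T' = D \cap \mathit{sort}(v)$ and $V' = \emptyset$, noting that $\mathit{hasAccess}(s, D, \admin, \oplus^{*})$ holds trivially for the admin and $\mathit{determines}_{M}(D, \emptyset, \mathit{def}(v))$ holds since the tables determine any query; the activator-privilege case uses \texttt{GRANT}-5.

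For the \textbf{induction step}, let $s$ be reached from $s'$ (with $\mathit{sec}'$ being the policy in $s'$) by a single transition labelled either by an action $a \in {\cal A}_{D,{\cal U}}$ or by a trigger $t$. By induction hypothesis, $s' \auth g'$ for every $g' \in \mathit{sec}'$. I would split on the effect the transition has on $\mathit{sec}$:
\begin{compactitem}
\item If the transition raises an exception, the LTS rules perform a rollback, so $\mathit{sec}$ coincides with the policy of some earlier state on the run; the claim then follows directly from the induction hypothesis applied to that prefix.
\item If the transition does not modify $\mathit{sec}$ (e.g.\ \texttt{SELECT}, successful \texttt{INSERT}/\texttt{DELETE} without policy-modifying triggers, \texttt{ADD USER}, \texttt{CREATE VIEW}, \texttt{CREATE TRIGGER}, or a trigger whose action is an \texttt{INSERT}/\texttt{DELETE}), then $\mathit{sec} = \mathit{sec}'$ and one only needs monotonicity of $\auth$ with respect to enlarging $U$, $V$, $T$ and changes to $\mathit{db}$ and $c$ (none of the $\auth$ rules that can fire for a grant use $\mathit{db}$ or $c$, and the remaining side conditions remain satisfied when $U$, $V$, $T$ grow).
\item If the transition is a successful \texttt{GRANT} (or a successful trigger whose action is a \texttt{GRANT}), then $\mathit{sec} = \mathit{sec}' \cup \{g\}$ for a new grant $g$. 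For the old grants $g' \in \mathit{sec}'$, the induction hypothesis plus the monotonicity observation above yields $s \auth g'$. For $g$ itself, the LTS rules force $f_{\mathit{int}}(s', a) = \top$ with $a = g$, so by \thref{theorem:eopsec:enforcement:second:variant} (or \thref{theorem:eopsec:enforcement:third:variant} in the trigger case) we get $s' \auth g$, and monotonicity again lifts this to $s \auth g$.
\item If the transition is a successful \texttt{REVOKE} (or a trigger whose action is a \texttt{REVOKE}), then $f_{\mathit{int}}(s', a) = \top$ with $a = \langle \ominus, u, p, u'\rangle$, so by \thref{theorem:eopsec:enforcement:second:variant} we have $s' \auth a$. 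Inspecting the \texttt{REVOKE} rule of $\auth$, this is precisely the statement that in $\mathit{applyRev}(s', a)$ every remaining grant is authorized. Since $\mathit{applyRev}(s', a)$ and $s$ agree on all components that the grant rules of $\auth$ depend on, we obtain $s \auth g'$ for every $g' \in \mathit{sec}$.
\end{compactitem}

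The \textbf{main obstacle} is the monotonicity lemma that must be stated and verified as a side result: that $s \auth g$ depends only on $\mathit{sec}$, $U$, $V$, $T$, and that enlarging $U$, $V$, $T$ or $\mathit{sec}$ preserves derivability (for grant-shaped actions). This requires checking each grant-related rule (\texttt{GRANT}-1 through \texttt{GRANT}-5 and the ancillary $\mathit{hasAccess}$ predicate, which recursively invokes $\auth$ on other grants). Care is needed in the \texttt{REVOKE}-via-trigger subcase to make sure that the extra context/history changes between $\mathit{applyRev}(s', a)$ and $s$ do not affect the derivations established by the \texttt{REVOKE} premise, but since none of the grant rules inspect $\mathit{db}$ or $c$, this goes through.
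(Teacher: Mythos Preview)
Your proposal is correct and follows essentially the same approach as the paper's proof: induction on the length of the run, with the base case handled via \texttt{GRANT}-2 and \texttt{GRANT}-3 (the paper uses $T'=D$, $V'=\emptyset$ rather than your $T' = D \cap \mathit{sort}(v)$, but the idea is the same), and the induction step split according to whether the transition raises an exception and whether it modifies the policy. The paper also singles out the monotonicity property you identify (that $s \auth g$ for a grant $g$ is preserved when $\mathit{sec}$ grows) as an explicit claim in the \texttt{GRANT} sub-case, and handles the \texttt{REVOKE} sub-case by staying at the $\auth^{\mathit{appr}}$ level and then invoking \thref{theorem:eopsec:enforcement:auth:approx}, whereas you go directly to $\auth$ via \thref{theorem:eopsec:enforcement:second:variant} and then invert the \texttt{REVOKE} rule; both routes are valid since \texttt{REVOKE} is the only rule whose conclusion has that shape.
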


\begin{proof}
We claim that, for any run $r$, the state $\mathit{last}(r)$ is such that for all $p \in \mathit{last}(r).\mathit{sec}$, $\mathit{last}(r) \auth p$. From this, the lemma follows trivially.

We now prove that for any run $r$, the state $\mathit{last}(r)$ is such that for all $p \in \mathit{last}(r).\mathit{sec}$, $\mathit{last}(r) \auth p$.
We do this by structural induction on the length of the run $r$.

\smallskip
\noindent
{\bf Base Case:} The base case consists of the runs containing only one initial state. Note that an initial state contains only grants issued by $\admin$, together with views and triggers owned by $\admin$.
It is easy to see that for any permission $p = \langle \mathit{op}, u, \mathit{pr}, \admin\rangle$ in a policy $\mathit{sec}$ in an initial state $s$, it holds that $s \auth p$. There are two cases:
\begin{compactenum}
\item The privilege $\mathit{pr}$ in $p$ is such that $\mathit{pr} \in {\cal PRIV}_{D} \setminus \\{\cal PRIV}_{D}^{\texttt{SELECT}, {\cal VIEW}_{D}^{\mathit{owner}}}$.  Then, $s \auth p$  by the rule \emph{\texttt{GRANT}-2}.
\item The privilege $\mathit{pr}$ in $p$ is such that $\mathit{pr}$ is in the set ${\cal PRIV}_{D}^{\texttt{SELECT}, {\cal VIEW}_{D}^{\mathit{owner}}}$.  
Recall that $\admin$ is the owner of all views in the state.
Then, $s \auth p$  by the rule \emph{\texttt{GRANT}-3}. Indeed, $\admin$ can read (and delegate the \texttt{SELECT} permission over) all tables in the database. Therefore,  $\mathit{hasAccess}(s,D,\admin,\oplus^{*})$ and $\mathit{determines}_{M} \\ (D,\emptyset, q)$ hold for any query $q$.
\end{compactenum}
This complete the proof for the base case.

\smallskip
\noindent
{\bf Induction Step:} We now assume that for all runs $r'$ of length less than the length of $r$, the state $\mathit{last}(r')$ is such that for all $p \in \mathit{last}(r').\mathit{sec}$, $\mathit{last}(r') \auth p$.
Let $r'$ be the run $r^{|r| -1 }$.
There are two cases, depending on whether $\mathit{act}$ raises an exception or not.
\begin{compactenum}
\item $\mathit{secEx}(\mathit{last}(r)) =\bot$ and $\mathit{Ex}(\mathit{last}(r)) = \emptyset$. 
There are a number of cases depending on $\mathit{act}$:
\begin{compactenum}
\item $\mathit{act}$ is $\langle u, \mathtt{INSERT}, R, \overline{t} \rangle$, $\langle u, \mathtt{DELETE}, R, \overline{t} \rangle$, $\langle u, \mathtt{SELECT}, \\ q \rangle$, $\langle u, \mathtt{ADD\_USER}, u' \rangle$, or  $\langle u, \mathtt{CREATE}, o \rangle$. In these cases,  $\mathit{last}(r').\mathit{sec} = \mathit{last}(r).\mathit{sec}$.
Furthermore,  $\mathit{last}(r').U \subseteq \mathit{last}(r).U$, $\mathit{last}(r').T \subseteq \mathit{last}(r).T$, and $\mathit{last}(r').V \subseteq \mathit{last}(r).V$.
From this and the fact that $\mathit{last}(r') \auth g$ for all $g \in \mathit{last}(r').\mathit{sec}$, it follows that $\mathit{last}(r) \auth g$ for all $g \in \mathit{last}(r).\mathit{sec}$.

\item $\mathit{act}$ is $\langle \mathit{op}, u, p,u' \rangle$, where $\mathit{op} \in \{\oplus,\oplus^{*}\}$. 
From  $\mathit{secEx}(\mathit{last}(r)) =\bot$, it follows that $\mathit{last}(r') \auth \mathit{act}$.
From the induction hypothesis, it follows that $\mathit{last}(r') \auth g$ for all $g \in \mathit{last}(r').\mathit{sec}$.
We claim that, for any grant statement $g$, if $\langle \mathit{db}, U, \mathit{sec},  T, V, c\rangle \\ \auth g$, then $\langle \mathit{db}', U, \mathit{sec}', T, V, c'\rangle \auth g$ for any policy such that $\mathit{sec} \subseteq \mathit{sec}'$.
From the claim, it follows that $\mathit{last}(r)  \auth \mathit{act}$ and $\mathit{last}(r) \auth g$ for all $g \in \mathit{last}(r').\mathit{sec}$.
From this and $\mathit{last}(r).\mathit{sec} = \{\mathit{act}\} \cup \mathit{last}(r').\mathit{sec}$, it follows that $\mathit{last}(r) \auth g$ for all $g \in \mathit{last}(r).\mathit{sec}$.

Our claim that, for any grant statement $g$, if $\langle \mathit{db}, U, \\ \mathit{sec}, T, V, c\rangle  \auth g$, then $\langle \mathit{db}', U, \mathit{sec}', T, V, c'\rangle \auth g$, where $\mathit{sec} \subseteq \mathit{sec}'$, follows trivially from the definition of the rules for \texttt{GRANT} statements.

\item $\mathit{act}$ is $\langle \ominus, u, p,u' \rangle$.
From  $\mathit{secEx}(\mathit{last}(r)) =\bot$, it follows that $\mathit{last}(r') \auth \mathit{act}$.
From this, it follows that $s' \auth^{\mathit{appr}} g$ for all $g \in s'.\mathit{sec}$, where $\mathit{s'} = \mathit{applyRev}(\mathit{last}(r'), \mathit{act})$.
From this and \thref{theorem:eopsec:enforcement:auth:approx}, it follows that $s' \auth g$ for all $g \in s'.\mathit{sec}$.
Recall that $\mathit{last}(r)$ and $\mathit{s'}$ are equivalent modulo the database and the context.
From this, \thref{theorem:f:int:actions:independent:from:db}, and $s' \auth g$ for all $g \in s'.\mathit{sec}$, it follows that $\mathit{last}(r) \auth g$ for all $g \in \mathit{last}(r).\mathit{sec}$.

\item $\mathit{act}$ is a trigger and the \texttt{WHEN} condition is not satisfied. 
In this case, $\mathit{last}(r')$ and $\mathit{last}(r)$ are equivalent modulo the context.
From this, the induction hypothesis,  and \thref{theorem:f:int:actions:independent:from:db}, it follows that $\mathit{last}(r) \auth g$ for all $g \in \mathit{last}(r).\mathit{sec}$.

\item $\mathit{act}$ is a trigger and the \texttt{WHEN} condition is satisfied. 
In this case, the proof is the same as the previous cases depending on the trigger's action.
\end{compactenum}


\item $\mathit{secEx}(\mathit{last}(r)) =\top$ or $\mathit{Ex}(\mathit{last}(r)) \neq \emptyset$. From this and the LTS's rules, it follows that  there is a state $s' \in \{\mathit{last}(r^i) | 1 \leq i \leq |r|-1\}$ such that $\mathit{pState}(\mathit{last}(r)) = \mathit{pState}(s')$ (because there has been a roll-back). Let $\mathit{sec}$ be the policy in $s'$. From the induction hypothesis, it follows that for all $p \in \mathit{sec}$, $s' \auth p$. From this fact, the $\auth$'s definition, $\mathit{pState}(\mathit{last}(r)) = \mathit{pState}(s')$, and \thref{theorem:f:int:actions:independent:from:db}, it follows that for all $p \in \mathit{last}(r).\mathit{sec}$, also $\mathit{last}(r) \auth p$.

\end{compactenum}
This complete the proof for the induction step.

This completes the proof.
\end{proof}

\subsection{Complexity Proofs}

\begin{theorem}\thlabel{theorem:integrity:complexity}
The data complexity of $f_{\mathit{int}}$ is $O(1)$.
\end{theorem}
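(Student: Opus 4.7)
The plan is to decompose $f_{\mathit{int}}$ into its four cases and show that, with the action, policy, views, and triggers all held fixed (the standard convention for data complexity), each case does only a constant amount of work that is independent of $|\mathit{db}|$.

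First, I would observe that the case split at the top of $f_{\mathit{int}}$ consists of the test $\mathit{trigger}(s) = \epsilon$ (a constant-time lookup in the fixed-size context component of $s$) together with the syntactic equality checks $a = \mathit{trigCond}(s)$ and $a = \mathit{trigAct}(s)$. Since $a$ is fixed, $\mathit{trigCond}(s)$ and $\mathit{trigAct}(s)$ are determined by the fixed head trigger of the scheduled queue and by $\mathit{tpl}(s)$ (which in turn is fixed by the fixed activating action), all three tests cost $O(1)$.

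Next, I would analyse the two nontrivial calls, $s \auth^{\mathit{appr}} a$ and $s \auth^{\mathit{appr}} t$. By inspection of Figure~\ref{figure:eop:auth:approx} (and the auxiliary functions used in it), every premise of an $\auth^{\mathit{appr}}$ rule is one of: (i) a membership test in the fixed sets $\mathit{sec}$, $T$, $V$, or $D$; (ii) a recursive sub-goal $s \auth^{\mathit{appr}} g$ on a grant $g$ drawn from the fixed $\mathit{sec}$; (iii) a call to $\mathit{apprDet}$, which by its definition performs only a structural recursion over a fixed formula and over $\mathit{extend}(M,s,V)$, which depends only on the fixed $V$; or (iv) a call to $\mathit{hasAccess}$, which iterates over a fixed subset of $D \cup V$ and spawns $\auth^{\mathit{appr}}$ sub-goals on grants from the fixed $\mathit{sec}$. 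Consequently the entire derivation tree for $s \auth^{\mathit{appr}} \cdot$ has size bounded by a constant depending only on $|\mathit{sec}|$, $|V|$, $|T|$, $|D|$, and the action---all fixed.

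The only place where $\mathit{db}$ is actually touched is the premise $[\phi[\overline{x}^{|R|} \mapsto \mathit{tpl}(s)]]^{\mathit{db}}$ appearing in the \texttt{EXECUTE TRIGGER} rules. Here $\phi$ is the fixed \texttt{WHEN} condition of a fixed trigger and $\mathit{tpl}(s)$ is fixed, so the premise is the evaluation of a single fixed boolean relational-calculus sentence on $\mathit{db}$. This is the standard data-complexity primitive; invoked a constant number of times (bounded by the fixed derivation depth), it contributes $O(1)$ work in the paper's data-complexity model and is in any event absorbed into the $\mathit{AC}^0$ bound shown for the overall $f$ in \thref{theorem:enforcement:security}. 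The only subtlety, and the part to write carefully, is precisely this accounting: making explicit that $\mathit{apprDet}$ and the syntactic determinacy checks do not walk the database, and that the fixed-$\phi$ query evaluation appears only as a leaf of a constant-depth derivation, so the total cost of $f_{\mathit{int}}$ remains $O(1)$ in $|\mathit{db}|$.
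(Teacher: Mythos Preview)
Your proposal gets most of the structure right, but it has a real gap at the crucial step. You correctly identify that the only database-dependent premises in $\auth^{\mathit{appr}}$ are the evaluations $[\phi[\overline{x}^{|R|} \mapsto \mathit{tpl}(s)]]^{\mathit{db}}$ appearing in the three \texttt{EXECUTE TRIGGER} rules. But your claim that evaluating a fixed relational-calculus sentence on $\mathit{db}$ ``contributes $O(1)$ work in the paper's data-complexity model'' is simply false: the data complexity of evaluating a fixed boolean query is $\mathit{AC}^0$, not $O(1)$. Your hedge that this is ``in any event absorbed into the $\mathit{AC}^0$ bound'' for the overall $f$ concedes exactly this---so your argument establishes only an $\mathit{AC}^0$ bound for $f_{\mathit{int}}$, not the claimed $O(1)$.

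The paper's proof avoids the query evaluation rather than paying for it. The key observation is that when $f_{\mathit{int}}$ needs to check $s \auth^{\mathit{appr}} t$ for a trigger $t$ (the third clause), one can instead check the action-level goals $s \auth^{\mathit{appr}} \mathit{getAction}(\mathit{stmt}, \mathit{ow}, \mathit{tpl}(s))$ and $s \auth^{\mathit{appr}} \mathit{getAction}(\mathit{stmt}, \mathit{invoker}(s), \mathit{tpl}(s))$. These are authorizations of ordinary actions in ${\cal A}_{D,{\cal U}}$, so their derivations never invoke \texttt{EXECUTE TRIGGER-1/2/3} and hence never read $\mathit{db}$ at all (this is precisely \thref{theorem:f:int:actions:independent:from:db}). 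With the database eliminated from every sub-goal, the entire computation is bounded by a function of the fixed $\mathit{sec}$, $T$, $V$, $D$, and $c$---genuinely $O(1)$ in $|\mathit{db}|$. What you are missing is not an analysis of the trigger rules' cost, but the realization that the algorithm can sidestep those rules entirely.
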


\begin{proof}
Let $M = \langle D, \Gamma\rangle$ be some fixed system configuration, $a \in {\cal A}_{D,U}$ be some fixed action, $u \in {\cal U}$ be some fixed  user, $U \subseteq {\cal U}$ be some fixed  set of users, $\mathit{sec} \in \Omega^{\mathit{sec}}_{U,D}$ be some fixed  policy, $T$ be some fixed  set of triggers over $D$ whose owners are in $U$, $V$ be some fixed  set of views over $D$ whose owners are in $U$, and $c$ be some fixed  context.
Furthermore, let $\mathit{db} \in \Omega_{D}^\Gamma$ be a database state such that $\langle \mathit{db},U,\mathit{sec},T,V,c\rangle \in \Omega_{M}$.
We denote by $s$ the state $\langle \mathit{db},U,\mathit{sec},T,V,c\rangle$.
We can check whether $f_{int}(s, a) = \top$ as follows:
\begin{compactenum}
\item If $\mathit{trigger}(s) = \epsilon$, then return $\top$ iff $s \auth^{\mathit{appr}} a$.

\item If $\mathit{trigger}(s) \neq \epsilon$ and $a = \mathit{trigCond}(s)$, return $\top$.

\item If $\mathit{trigger}(s) \neq \epsilon$ and $a = \mathit{trigAct}(s)$, return $\top$ iff both $s \auth^{\mathit{appr}} \mathit{getAction}(\mathit{stmt}, \mathit{ow}, \mathit{tpl}(s)) = \top$ and $s \auth^{\mathit{appr}} \mathit{getAction}(\mathit{stmt}, \mathit{invoker}(s), \mathit{tpl}(s)) = \top$, where $\mathit{trigger} \\ (s) = \langle \mathit{id}, \mathit{ow}, \mathit{ev}, R, \phi, \mathit{stmt}, m\rangle$.

\item Otherwise return $\perp$.
\end{compactenum}
Note that in the above algorithm we use all the rules in $\auth^{\mathit{appr}}$ other than \emph{EXECUTE TRIGGER - 1}, \emph{EXECUTE TRIGGER - 2}, and \emph{EXECUTE TRIGGER - 3}.
These are the only rules that depend on the database state.
Therefore, evaluating any statement of the form $s \auth^{\mathit{appr}} a$ can be done in constant time in terms of data complexity.
Therefore, all steps 1--4 can be executed in constant time in terms of data complexity.
\end{proof}

\begin{lemma}\thlabel{theorem:apprDet:complexity}
The complexity of $\mathit{apprDet}$ is $O(|\phi|^3 + |\phi|\concat \mathit{max}^{2}\concat |V|^{3})$.
\end{lemma}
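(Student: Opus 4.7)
The plan is to split the evaluation of $\mathit{apprDet}(T,V,\phi,s,M)$ into three contributions and bound each separately: (i) the one-shot cost of materializing $\mathit{extend}(M,s,V)$ needed by the first clause, (ii) the per-sub-formula look-up against $\mathit{extend}(M,s,V)$, and (iii) the structural recursion on the shape of $\phi$.

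First I would bound the size and construction cost of $\mathit{extend}(M,s,V)$. By the state well-formedness assumption, view dependencies are acyclic; combined with the fact that $\mathit{inline}$ performs a single-layer substitution of each view reference by its definition (without itself computing a fixpoint), the iterated application that defines $\mathit{extend}$ stabilizes after at most $|V|$ rounds. A view body of size at most $\mathit{max}$ can contain at most $\mathit{max}$ predicate occurrences, each expanded into a body of size at most $\mathit{max}$; chaining up to $|V|$ layers bounds each element of $\mathit{extend}(M,s,V)$ by $O(\mathit{max}\cdot|V|)$ and the cardinality of the set by $O(|V|)$, with a total construction cost of $O(\mathit{max}^{2}\cdot|V|^{3})$ in the worst case.

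Next I would analyze the recursion of $\mathit{apprDet}$ over $\phi$. Each sub-formula of $\phi$ is visited at most once, giving $O(|\phi|)$ recursive invocations. At each invocation two kinds of work are performed: a syntactic look-up checking whether the current sub-formula $\psi$ equals the body of some view in $\mathit{extend}(M,s,V)$, which compares $\psi$ (size $O(|\phi|)$) against each of the $O(|V|)$ extended view bodies (size $O(\mathit{max}\cdot|V|)$) and hence costs $O(\mathit{max}^{2}\cdot|V|^{2})$ per call; and constant-time dispatch on the top connective together with the required recursive calls. Summed over all $O(|\phi|)$ invocations the look-ups contribute $O(|\phi|\cdot\mathit{max}^{2}\cdot|V|^{2})$, which is absorbed into the $|\phi|\cdot\mathit{max}^{2}\cdot|V|^{3}$ term after adding in the cost of constructing $\mathit{extend}(M,s,V)$ computed above. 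The remaining structural bookkeeping, in particular the syntactic equality and membership tests performed between sub-formulas of $\phi$ in the base cases $\phi = R(\overline{x})$ and $\phi = V(\overline{x})$, contribute a further $O(|\phi|^{3})$ term. Adding the three contributions yields the claimed bound.

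The main obstacle is the size analysis of $\mathit{extend}(M,s,V)$: because views may transitively reference many other views, one has to rely on both the acyclicity invariant and the single-layer semantics of $\mathit{inline}$ to argue that the repeated substitution does not blow up exponentially, and to derive the $\mathit{max}\cdot|V|$ per-element and $\mathit{max}^{2}\cdot|V|^{3}$ aggregate bounds that drive the second term. Once that is in place, the recursion analysis is a routine structural induction over $\phi$.
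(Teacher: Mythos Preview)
Your high-level decomposition is reasonable, but the paper takes a different route and your argument has two concrete errors.

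The paper does not analyze the recursive definition of $\mathit{apprDet}$ directly. Instead it exhibits an equivalent iterative algorithm: compute $\mathit{extend}(M,s,V)$, enumerate and sort all sub-formulae $S=\mathit{subF}(\phi)$ by length, then process them bottom-up while maintaining a set $S'\subseteq S$ of sub-formulae already certified as determined. The $|\phi|^3$ term in the stated bound is an artifact of that algorithm's bookkeeping: whenever a sub-formula $\psi$ matches an extended view body, the algorithm computes $\mathit{subF}(\psi)$ (cost $O(|\phi|^2)$) and unions it into $S'$, and this can happen $|S|=O(|\phi|)$ times. Your direct-recursion analysis has no analogous step, and your justification of the $|\phi|^3$ term --- ``syntactic equality and membership tests performed between sub-formulas of $\phi$ in the base cases $\phi=R(\overline{x})$ and $\phi=V(\overline{x})$'' --- is wrong: those base cases merely test $R\in T$ and whether $V$ occurs among the given views, which involves no comparison between sub-formulae of $\phi$ whatsoever.

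Second, your cardinality bound $O(|V|)$ for $\mathit{extend}(M,s,V)$ is too small. Each original view $v\in V$ generates the iterates $v,\mathit{inline}(v),\mathit{inline}^2(v),\ldots$ until stabilization, and by acyclicity this chain has length up to $|V|$; hence $\mathit{extend}(M,s,V)$ contains $O(|V|^2)$ elements, not $O(|V|)$. The paper's total-size bound $O(\mathit{max}\cdot|V|^3)$ for $\mathit{extend}$ reflects this. Relatedly, the per-call lookup cost you state, $O(\mathit{max}^2\cdot|V|^2)$, does not follow from the quantities you name (comparing a sub-formula of size $O(|\phi|)$ against $O(|V|)$ bodies each of size $O(\mathit{max}\cdot|V|)$ does not yield a $\mathit{max}^2$ factor), and your argument for the $O(\mathit{max}\cdot|V|)$ per-element size (``chaining up to $|V|$ layers'') does not actually control the multiplicative growth that a single $\mathit{inline}$ step causes when a body contains many view references.
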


\begin{proof}
Let $M = \langle D, \Gamma\rangle$ be a system configuration, $T \subseteq D$ be a set of tables, $V \subseteq {\cal VIEW}^{\mathit{owner}}_{D}$ be a set of views over $D$, $\phi$ be a formula over $D$, and $s$ be an $M$-state.
An algorithm that computes $\mathit{apprDet}(T,V,\phi,s,M)$ is as follows:
\begin{compactenum}
\item Compute the set $\mathit{extend}(M,s,V)$.

\item Compute the set $S$ of all sub-formulae of $\phi$, i.e., $S = \mathit{subF}(\phi)$.
Note that $\phi \in  \mathit{subF}(\phi)$.

\item Sort by length the set of sub-formulae in such a way that the shortest formula is the first one.

\item Let $S' := \emptyset$. 

\item For each sub-formula $\psi$ in the sequence:
\begin{compactenum}
\item Check whether there is a view $v \in \mathit{extend}(M,s,V)$ such that $\psi$ is $v$'s definition.
If this is the case, let $ S' = S' \cup \mathit{subF}(\psi)$.

\item Make a case distinction on $\psi$:
\begin{compactenum}
\item If $\psi := R(\overline{x})$ and $R \in T$, $S' = S' \cup \mathit{subF}(\psi)$.

\item If $\psi := V(\overline{x})$ and $\langle V, u, q, O\rangle \in V$, $S' = S' \cup \mathit{subF}(\psi)$.

\item If $\psi := \alpha \wedge \beta$ and $\alpha$, $\beta \in S'$, then $S' = S' \cup \mathit{subF}(\psi)$.

\item If $\psi := \alpha \vee \beta$ and $\alpha$, $\beta \in S'$, then $S' = S' \cup \mathit{subF}(\psi)$.

\item If $\psi := \neg \alpha$ and $\alpha \in S'$, then $S' = S' \cup \mathit{subF}(\psi)$.

\item If $\psi := \exists x. \alpha$ and $\alpha \in S'$, then $S' = S' \cup \mathit{subF}(\psi)$.

\item If $\psi := \forall x. \alpha$ and $\alpha \in S'$, then $S' = S' \cup \mathit{subF}(\psi)$.
\end{compactenum}
\end{compactenum}

\item$\mathit{apprDet}(T,V,\phi,s,M) = \top$ iff $S = S'$.
\end{compactenum} 
We claim that the size of $\mathit{subF}(\phi)$ is $O(|\phi|)$ and that computing  $\mathit{subF}(\psi)$ can be done in $O(|\phi|^{2})$.
Let $\mathit{max}$ be the maximum length of the definitions of the views in $V$.
We also claim that the size of the set $\mathit{extend}(M,s,V)$ is $O(\mathit{max}\concat |V|^{3})$ and that  $\mathit{extend}(M,s,V)$ can be computed in $O(|V|^{3} \concat \mathit{max}^{2})$.
From these claims, it follows that the fifth step can be executed in $O(|S|\concat ((|\mathit{extend}(M,s,V)| + |S|) +  |S|^2)))$.
After some simplification, it follows that the fifth step can be executed in  $O(|S|^3 + |S|\concat |\mathit{extend}(M,s,V)|)$.
From this, $|S| = O(|\phi|)$, and $|\mathit{extend}(M,s,V)| =  O(\mathit{max}\concat |V|^{3})$, it follows that the fifth step can be executed in $O(|\phi|^3 + |\phi|\concat \mathit{max}\concat |V|^{3})$.
Therefore, the overall complexity is $O(|\phi|^3 + |\phi|\concat \mathit{max}^{2}\concat |V|^{3})$.

We now prove our claims about $\mathit{subF}(\phi)$.
It is easy to see that the size of $\mathit{subF}(\phi)$ is $O(|\phi|)$.
Indeed, we can view the formula $\phi$ as a tree, where the operators are the internal nodes and the predicates and equalities are the leaves.
Then, there is a sub-formula for each sub-tree.
From this and from the fact that the number of sub-tree of a tree is linear in the number of nodes, it follows that $|\mathit{subF}(\phi)|$ is $O(|\phi|)$.
Note that computing  $\mathit{subF}(\psi)$ can be done in $O(|\phi|^{2})$.

We now prove our claims about $\mathit{extend}(M,s,V)$.
Let $\mathit{max}$ be the maximum length of the definitions of the views in $V$.
For each $v \in V$, computing $\mathit{inline}(v,s)$ can be done in $O(|v|\concat |V| \concat \mathit{max})$.
Furthermore, since the views' definitions are acyclic, after $|V|$ applications of $\mathit{inline}$ there are no views in the view's definition.
Therefore, for each $v \in V$, we can compute all views derivable from $v$ in $O(|v|\concat |V|^{2} \concat \mathit{max})$.
Therefore, we can compute the set  $\mathit{extend}(M,s,V)$ in $O(|V|^{3} \concat \mathit{max}^{2})$.
Therefore, also the size of $\mathit{extend}(M,s,V)$ is less than $O(\mathit{max}\concat |V|^{3})$.
\end{proof}

\begin{figure*}
\centering

\scalebox{.95}{
\begin{tabular}{c c}
\multicolumn{2}{c}{
$
\infer[\begin{tabular}{c}\texttt{INSERT}\\\texttt{DELETE}\end{tabular}]
{ \langle\mathit{db}, U, \mathit{sec}, T,V,c \rangle \auth^{\mathit{appr}} \langle u, \mathit{op}', R, \overline{t} \rangle
}
{
\hfill u,u' \in U \hfill \quad 
\hfill R \in D \hfill \quad 
\hfill \overline{t} \in \mathbf{dom}^{|R|}  \hfill \quad
\hfill g = \langle \mathit{op}, u, \langle \mathit{op}', R\rangle, u' \rangle \hfill \quad
\hfill g \in \mathit{sec}  \hfill \\
\hfill \langle\mathit{db}, U, \mathit{sec}, T,V,c \rangle \auth^{\mathit{appr}} g \hfill \quad 
\hfill \mathit{op} \in \{\oplus,\oplus^{*}\}\hfill \quad
\hfill \mathit{op}' \in \{\mathtt{INSERT}, \mathtt{DELETE}\} \hfill
}
$
}
\\
\\
$
\infer[\begin{tabular}{c}\texttt{CREATE}\\\texttt{VIEW}\end{tabular}]
{ \langle\mathit{db}, U, \mathit{sec}, T,V,c \rangle \auth^{\mathit{appr}} \langle u, \mathtt{CREATE}, v \rangle
}
{
\hfill u,u' \in U \hfill \quad 
\hfill v \in {\cal VIEW}_{D} \hfill \quad 
\hfill g = \langle \mathit{op}, u, \langle \mathtt{CREATE\ VIEW}\rangle, u' \rangle \hfill \\
\hfill g \in \mathit{sec}  \hfill \quad
\hfill \langle\mathit{db}, U, \mathit{sec}, T,V,c \rangle \auth^{\mathit{appr}} g \hfill \quad 
\hfill \mathit{op} \in \{\oplus,\oplus^{*}\}\hfill 
}
$
 &
$
\infer[\begin{tabular}{c}\texttt{CREATE}\\\texttt{TRIGGER}\end{tabular}]
{ \langle\mathit{db}, U, \mathit{sec}, T,V,c \rangle \auth^{\mathit{appr}} \langle u, \mathtt{CREATE}, t \rangle
}
{
\hfill u,u' \in U \hfill \quad 
\hfill t = \langle \mathit{id},\mathit{ow},  \mathit{ev}, R, \phi, \mathit{stmt}, \mathit{m}\rangle \hfill \\
\hfill t \in {\cal TRIGGER}_{D} \hfill \\ 
\hfill g = \langle \mathit{op}, u, \langle \mathtt{CREATE TRIGGER}, R\rangle, u' \rangle \hfill \\
\hfill g \in \mathit{sec}  \hfill \quad
\hfill \langle\mathit{db}, U, \mathit{sec}, T,V,c \rangle \auth^{\mathit{appr}} g \hfill \quad 
\hfill \mathit{op} \in \{\oplus,\oplus^{*}\}\hfill 
}
$
\\
\\
$
\infer[\begin{tabular}{c}\texttt{INSERT}\\\texttt{DELETE}\\\text{admin}\end{tabular}]
{ \langle\mathit{db}, U, \mathit{sec}, T,V,c \rangle \auth^{\mathit{appr}} \langle \mathit{admin}, \mathit{op}', R, \overline{t} \rangle
}
{\
\hfill R \in D \hfill \quad 
\hfill \overline{t} \in \mathbf{dom}^{|R|}  \hfill \quad
\hfill \mathit{op}' \in \{\mathtt{INSERT}, \mathtt{DELETE}\} \hfill
}
$
 &
$
\infer[\begin{tabular}{c}\texttt{CREATE}\\\texttt{VIEW}\\\text{admin}\end{tabular}]
{ \langle\mathit{db}, U, \mathit{sec}, T,V,c \rangle \auth^{\mathit{appr}} \langle \mathit{admin}, \mathtt{CREATE}, v \rangle
}
{
\hfill v \in {\cal VIEW}_{D} \hfill
}
$
\\
\\
$
\infer[\begin{tabular}{c}\texttt{CREATE}\\\texttt{TRIGGER}\\\text{admin}\end{tabular}]
{ \langle\mathit{db}, U, \mathit{sec}, T,V,c \rangle \auth^{\mathit{appr}} \langle \mathit{admin}, \mathtt{CREATE}, t \rangle
}
{
\hfill t \in {\cal TRIGGER}_{D} \hfill 
}
$
 &
$
\infer[\mathtt{REVOKE}]
{ \langle\mathit{db}, U, \mathit{sec}, T,V,c \rangle \auth^{\mathit{appr}} \langle \ominus, u, \mathit{priv}, u' \rangle
}
{
\hfill u,u' \in U \hfill \quad 
\hfill \mathit{priv} \in {\cal PRIV}_{D} \hfill \\
\hfill s = \langle\mathit{db}, U, \mathit{sec}, T,V,c  \rangle \hfill \quad 
\hfill s' = \langle\mathit{db}, U, \mathit{sec}', T,V,c  \rangle \hfill \\
\hfill s' = \mathit{applyRev}(s,\langle \ominus, u, p, u' \rangle) \hfill \\
\hfill \forall g \in \mathit{sec}'.\, s' \auth^{\mathit{appr}} g \hfill
}
$
\\
\\
$
\infer[\mathtt{GRANT}\text{-1}]
{ \langle\mathit{db}, U, \mathit{sec}, T,V,c \rangle \auth^{\mathit{appr}} \langle \mathit{op}, u, \mathit{priv}, u' \rangle
}
{
\hfill u,u', u'' \in U \hfill \quad 
\hfill \mathit{op} \in \{\oplus,\oplus^{*}\} \hfill \quad 
\hfill \mathit{priv} \in {\cal PRIV}_{D} \hfill \\
\hfill g = \langle \oplus^{*}, u', \mathit{priv}, u''\rangle  \hfill \quad
\hfill g \in \mathit{sec} \hfill \quad
\hfill  \langle\mathit{db}, U, \mathit{sec}, T,V,c \rangle \auth^{\mathit{appr}} g \hfill
}
$
 &
$
\infer[\mathtt{GRANT}\text{-2}]
{ \langle\mathit{db}, U, \mathit{sec}, T,V,c \rangle \auth^{\mathit{appr}} \langle \mathit{op}, u, \mathit{priv}, \mathit{admin} \rangle
}
{
\hfill u \in U \hfill \quad 
\hfill \mathit{op} \in \{\oplus,\oplus^{*}\} \hfill \\ 
\hfill \mathit{priv} \in {\cal PRIV}_{D} \setminus {\cal PRIV}_{D}^{\mathtt{SELECT}, {\cal VIEW}^{\mathit{owner}_{D}}} \hfill 
}
$

\\
\\
$
\infer[\mathtt{GRANT}\text{-3}]
{ \langle\mathit{db}, U, \mathit{sec}, T,V,c \rangle \auth^{\mathit{appr}} \langle \mathit{op}, u, \mathit{priv}, \mathit{owner} \rangle
}
{
\hfill u, \mathit{owner} \in U \hfill \quad 
\hfill \mathit{op} \in \{\oplus,\oplus^{*}\} \hfill \quad 
\hfill \mathit{priv} = \langle \mathtt{SELECT}, v\rangle \hfill \\
\hfill v = \langle \mathit{id}, \mathit{owner}, q, O \rangle \hfill \quad
\hfill v \in V \hfill \\
\hfill T' = \mathit{aT}( \langle\mathit{db}, U, \mathit{sec}, T,V,c \rangle, \oplus^{*}, \mathit{owner}) \hfill \\
\hfill V' = \mathit{aV}( \langle\mathit{db}, U, \mathit{sec}, T,V,c \rangle, \oplus^{*}, \mathit{owner}) \hfill \\
\hfill \mathit{apprDet}(T',V', q) = \top \hfill \\
}$
&
$
\infer[\mathtt{GRANT}\text{-4}]
{ \langle\mathit{db}, U, \mathit{sec}, T,V,c \rangle \auth^{\mathit{appr}} \langle \mathit{op}, u, \mathit{priv}, \mathit{admin} \rangle
}
{
\hfill u, \mathit{owner} \in U \hfill \quad 
\hfill \mathit{op} \in \{\oplus,\oplus^{*}\} \hfill \quad 
\hfill \mathit{priv} = \langle \mathtt{SELECT}, v\rangle \hfill \\
\hfill v = \langle \mathit{id}, \mathit{owner}, q, O \rangle \hfill \quad
\hfill v \in V \hfill \quad
\hfill \mathit{owner} \neq \mathit{admin} \hfill \\
\hfill T' = \mathit{aT}( \langle\mathit{db}, U, \mathit{sec}, T,V,c \rangle, \oplus,\mathit{owner} ) \hfill \\
\hfill V' = \mathit{aV}( \langle\mathit{db}, U, \mathit{sec}, T,V,c \rangle, \oplus, \mathit{owner}) \hfill \\
\hfill \mathit{apprDet}(T',V', q) = \top \hfill \\
}$
\\
\\
$
\infer[\mathtt{GRANT}\text{-5}]
{ \langle\mathit{db}, U, \mathit{sec}, T,V,c \rangle \auth^{\mathit{appr}} \langle \mathit{op}, u, \mathit{priv}, \mathit{owner} \rangle
}
{
\hfill u, \mathit{owner} \in U \hfill \quad 
\hfill \mathit{op} \in \{\oplus,\oplus^{*}\} \hfill \quad
\hfill v \in V \hfill \\ 
\hfill \mathit{priv} = \langle \mathtt{SELECT}, v\rangle \hfill \quad
\hfill v = \langle \mathit{id}, \mathit{owner}, q, A \rangle \hfill \quad
}$
&
$
\infer[\texttt{ADD USER}]
{ \langle\mathit{db}, U, \mathit{sec}, T,V,c \rangle \auth^{\mathit{appr}} \langle u', \mathtt{ADD\_USER}, u \rangle
}
{
\hfill u \in {\cal U} \hfill \quad \hfill u' = \admin \hfill
}
$
\\
\\
\multicolumn{2}{c}{
$
\infer[\begin{tabular}{c}\texttt{EXECUTE}\\\texttt{TRIGGER}-1\end{tabular}]
{ \langle\mathit{db}, U, \mathit{sec}, T,V,c \rangle \auth^{\mathit{appr}} t
}
{
\hfill t = \langle \mathit{id},\mathit{ow},  \mathit{ev}, R, \phi, \mathit{stmt}, O\rangle \hfill \quad
\hfill t \in T \hfill \\
\hfill \langle\mathit{db}, U, \mathit{sec}, T,V,c \rangle \auth^{\mathit{appr}} \mathit{getAction}(\mathit{stmt},\mathit{ow},\mathit{tpl}(c)) \hfill \\
\hfill [\phi[\overline{x}^{|R|} \mapsto \mathit{tpl}(c)]]^{\mathit{db}} = \top \hfill  
}
$}
\\
\\

\multicolumn{2}{c}{
$
\infer[\begin{tabular}{c}\texttt{EXECUTE}\\\texttt{TRIGGER}-2\end{tabular}]
{ \langle\mathit{db}, U, \mathit{sec}, T,V,c \rangle \auth^{\mathit{appr}} t
}
{
\hfill t = \langle \mathit{id},\mathit{ow},  \mathit{ev}, R, \phi, \mathit{stmt}, A\rangle \hfill \quad
\hfill t \in T \hfill \\ 
\hfill \langle\mathit{db}, U, \mathit{sec}, T,V,c \rangle \auth^{\mathit{appr}} \mathit{getAction}(\mathit{stmt},\mathit{invoker}(c),\mathit{tpl}(c)) \hfill \\
\hfill \langle\mathit{db}, U, \mathit{sec}, T,V,c \rangle \auth^{\mathit{appr}} \mathit{getAction}(\mathit{stmt},\mathit{ow},\mathit{tpl}(c)) \hfill \\
\hfill [\phi[\overline{x}^{|R|} \mapsto \mathit{tpl}(c)]]^{\mathit{db}} = \top \hfill  
}
$
}
\\
\\
$
\infer[\texttt{SELECT}]
{ \langle\mathit{db}, U, \mathit{sec}, T,V,c \rangle \auth^{\mathit{appr}} \langle u, \mathtt{SELECT}, q \rangle
}
{
\hfill u \in U \quad \hfill q \in \mathit{RC} \hfill
}
$
&
$
\infer[\begin{tabular}{c}\texttt{EXECUTE}\\\texttt{TRIGGER}-3\end{tabular}]
{ \langle\mathit{db}, U, \mathit{sec}, T,V,c \rangle \auth^{\mathit{appr}} t
}
{
\hfill t = \langle \mathit{id},\mathit{ow},  \mathit{ev}, R, \phi, \mathit{stmt}, m\rangle \hfill \\
\hfill t \in T \hfill \quad
\hfill [\phi[\overline{x}^{|R|} \mapsto \mathit{tpl}(c)]]^{\mathit{db}} =\bot \hfill  
}
$
\end{tabular}
}
  \begin{align*}
\mathit{aT}(\langle \mathit{db},U,\mathit{sec}, T, V, c\rangle, \mathit{op}, u) = \{ R \in D\,|\,
	&(u = \admin \wedge  \exists u' \in U, \mathit{op}' \in \{\oplus^{*}, \mathit{op}\}.\\
  	& \quad\quad \langle \mathit{db},U,\mathit{sec}, T, V, c\rangle \auth^{\mathit{appr}} \langle \mathit{op}', u, \langle \mathtt{SELECT}, R\rangle, u' \rangle ) \vee \\
  	&\exists u' \in U, g \in \mathit{sec}, \mathit{op}' \in \{\oplus^{*}, \mathit{op}\}.\, g = \langle \mathit{op}', u, \langle \mathtt{SELECT}, R\rangle, u' \rangle \\
  	& \quad\quad \wedge \langle \mathit{db},U,\mathit{sec}, T, V, c\rangle \auth^{\mathit{appr}} g  \}
  \end{align*}
  
  \begin{align*}
  	\mathit{aV}(\langle \mathit{db},U,\mathit{sec}, T, V, c\rangle, \mathit{op}, u) = \{ V \in V \cap {\cal VIEW}_{D}^{\mathit{owner}}\,|\,
	&(u = \admin \wedge  \exists u' \in U, \mathit{op}' \in \{\oplus^{*}, \mathit{op}\}.\\
  	& \quad\quad \langle \mathit{db},U,\mathit{sec}, T, V, c\rangle \auth^{\mathit{appr}} \langle \mathit{op}', u, \langle \mathtt{SELECT}, V\rangle, u' \rangle ) \vee \\
  	&\exists u' \in U, g \in \mathit{sec}, \mathit{op}' \in \{\oplus^{*}, \mathit{op}\}.\, g = \langle \mathit{op}', u, \langle \mathtt{SELECT}, V\rangle, u' \rangle \\
  	& \quad\quad \wedge \langle \mathit{db},U,\mathit{sec}, T, V, c\rangle \auth^{\mathit{appr}} g  \}
  \end{align*}

\caption{Definition of the $\auth^{\mathit{appr}}$ relation}\label{figure:eop:auth:approx}
\end{figure*}

\clearpage
\section{Enforcing Data Confidentiality}\label{app:enforcement:ibsec}
Here, we first formalize the PDP $f_{\mathit{conf}}$.
Afterwards, we prove that it provides the \confidentiality{} property.
Finally, we show that its data complexity is \complexity{}.

Let $M = \langle D,\Gamma \rangle$ be a system configuration.
The \acf{} $f^{u}_{\mathit{conf}}$ is shown in Figure \ref{figure:access:control:function:IB}.
The function is parametrized by the user $u$ against which the \acf{} provides data confidentiality.
The \acf{} $f^{u}_{\mathit{conf}}(s,a)$ models the function  $f_{\mathit{conf}}(s,a,u)$  shown in Figure \ref{figure:algorithms}.
The mapping between the \acf{} $f^{u}_{\mathit{conf}}$ and the pseudo-code shown in Figure \ref{figure:algorithms} is immediate.

The \acf{} $f^{u}_{\mathit{conf}}$ uses a number of auxiliary functions. 
Recall that the function $\mathit{tr}$, defined in Appendix \ref{app:lts}, takes as input an $M$-state $s \in \Omega_{M}$ and returns the definition of the trigger that the system is executing. 
If the system is not executing any trigger, then $\mathit{tr}(s) = \epsilon$. 
Equivalently, $\mathit{tr}(s)$ is the first trigger in the sequence of triggers returned by $\mathit{triggers}(s)$.

The function $\mathit{tDet}$ takes as input a view $v = \langle \mathit{i}, \mathit{o}, \{\overline{x} | \phi\}, m\rangle \\\in {\cal VIEW}_{D}$, a state $s \in \Omega_{M}$, and a system configuration $M = \langle D, \Gamma \rangle$ and returns  as output the smallest set of tables in $D$ that determines $v$, namely the smallest set $T \in \mathbb{P}(D)$ such that $\mathit{apprDet}(T,\emptyset, \phi, s, M)$ holds, where $\mathit{apprDet}$ is defined in Appendix \ref{app:enforcement:eopsec}.
Note that such a set is always unique. 

The function $\mathit{noLeak}$, defined in Figure~\ref{figure:access:control:function:IB}, takes as input a state $s$, an \texttt{INSERT} or \texttt{DELETE} action $a$, and a user $u$ and checks whether the execution of the action $a$ may leak sensitive information through the views that the user $u$ can read, as shown in Example \ref{example:ibsec:2}.
Note that the function $\mathit{noLeak}$ returns $\top$ if there is no leakage of sensitive information and returns $\bot$ if the action $a$ may leak sensitive information through the views the user $u$ can read in the state $s$.
We remark that the function $\mathit{leak}(a,s,u)$ used in the algorithm in Section \ref{sect:enf:alg} returns is defined as $\mathit{leak}(a,s,u) = \mathit{noLeak}(s,a,u)$.

\begin{figure*}
\begin{tabular}{c}
$
\mathit{f}^{u}_{\mathit{conf}}(s, \mathit{act}) = \left\{ 
  \begin{array}{l l}
  \mathit{f}^{u}_{\mathit{conf},\mathtt{S}}(s, \mathit{act}) & \text{if } \mathit{act} = \langle u', \mathtt{SELECT}, q\rangle\\
  \mathit{f}^{u}_{\mathit{conf},\mathtt{I,D}}(s, \mathit{act}) & \text{if } \mathit{act} = \langle u', \mathtt{INSERT}, R, \overline{t}\rangle\\
  \mathit{f}^{u}_{\mathit{conf},\mathtt{I,D}}(s, \mathit{act}) & \text{if } \mathit{act} = \langle u', \mathtt{DELETE}, R, \overline{t}\rangle\\
  \mathit{f}^{u}_{\mathit{conf},\mathtt{G,R}}(s, \mathit{act}) & \text{if } \mathit{act} = \langle \mathit{op}, u'', p, u'\rangle \wedge \mathit{op} \in \{\oplus,  \oplus^{*}\}\\
 \top & \text{if } u = \admin\\
 \top & \text{otherwise}
  \end{array}\right.$
 \\ \\\\
$\mathit{f}^{u}_{\mathit{conf},\mathtt{I,D}}(s, \mathit{act}) = \left\{ 
  \begin{array}{l l}

{\mathit{secure}}(u,{\mathit{getInfo}}(\mathit{act}),s)	\wedge & \text{if } \mathit{act} = \langle u, \mathit{op}, R, \overline{t}\rangle \wedge {\mathit{trigger}}(s) = \epsilon \wedge \mathit{noLeak}(s, \mathit{act}, u) = \top  \\
 \bigwedge_{\gamma \in {\mathit{Dep}}(\mathit{act}, \Gamma)} {\mathit{secure}}(u,{\mathit{getInfoS}}(\gamma, \mathit{act}),s)  & \\
  \quad \wedge	{\mathit{secure}}(u,{\mathit{getInfoV}}(\gamma, \mathit{act}),s) & \\ 
  \\
  \bot & \text{if } \mathit{act} = \langle u, \mathit{op}, R, \overline{t}\rangle \wedge {\mathit{trigger}}(s) = \epsilon \wedge \mathit{noLeak}(s, \mathit{act}, u) =\bot \\
 \\
{\mathit{secure}}(u,{\mathit{getInfo}}(\mathit{act}),s)	\wedge  & \text{if }{\mathit{invoker}}(s) = u \wedge \mathit{trigger}(s) \neq \epsilon \wedge \mathit{noLeak}(s, \mathit{act}, u) = \top\\ 
 \bigwedge_{\gamma \in {\mathit{Dep}}(\mathit{act}, \Gamma)} {\mathit{secure}}(u,{\mathit{getInfoS}}(\gamma, \mathit{act}),s)  &  \\
  \quad \wedge	{\mathit{secure}}(u,{\mathit{getInfoV}}(\gamma, \mathit{act}),s)  &  \\ 
  \\
 \bot & \text{if }{\mathit{invoker}}(s) = u \wedge \mathit{trigger}(s) \neq \epsilon \wedge \mathit{noLeak}(s, \mathit{act}, u) = \bot\\ 
   &  \\
 \\ 
	\top & \text{otherwise}
  \end{array}\right.$
\\\\\\
$\mathit{f}^{u}_{\mathit{conf},\mathtt{S}}(s, \langle u', \mathtt{SELECT},  q\rangle) = \left\{ 
  \begin{array}{l l}
  	{\mathit{secure}}(u,q, s) & \text{if }  u' = u \wedge {\mathit{trigger}}(s) = \epsilon\\
  	
  	{\mathit{secure}}(u,q, s)  & \text{if }  {\mathit{invoker}}(s) = u \wedge {\mathit{trigger}}(s) \neq \epsilon\\ 
  	
  	\top & \text{otherwise}
  	\end{array}\right.$
\\\\\\

$\mathit{f}^{u}_{\mathit{conf},\mathtt{G}}(s, \langle \mathit{op}, u'', p, u'\rangle) = \left\{ 
  \begin{array}{l l}
  	\bot & \text{if } u'' = u \wedge  u' = u \wedge \mathit{trigger}(s) = \epsilon \wedge \mathit{op} \in \{\oplus,\oplus^{*}\} \wedge p = \langle \mathtt{SELECT}, O\rangle \wedge \\
  	& \quad \langle \oplus, \texttt{SELECT}, O \rangle \not\in \mathit{permissions}(s,u)\\
  	
  	\bot & \text{if } u'' = u \wedge \mathit{invoker}(s) = u \wedge \mathit{trigger}(s) \neq \epsilon \wedge \mathit{op} \in \{\oplus,\oplus^{*}\} \wedge p = \langle \mathtt{SELECT}, O\rangle \wedge \\
  	& \quad  \langle \oplus, \texttt{SELECT}, O \rangle \not\in \mathit{permissions}(s,u)\\
	  	
  	\top & \text{otherwise}
  	\end{array}\right.$
\\\\\\

$\mathit{noLeak}(s, \langle u', \mathit{op}, R, \overline{t}\rangle, u) = \left\{ 
  \begin{array}{l l}
  \top &  \text{if } u' = u \wedge \mathit{trigger}(s) = \epsilon \wedge \forall v \in {\cal VIEW}_{D}.\, ((\langle \oplus, \mathtt{SELECT}, v \rangle \in  \mathit{permissions}(s,u) \wedge \\ 
  & \quad R \in \mathit{tDet}(v,s,M)) \Rightarrow  (\forall o \in \mathit{tDet}(v,s,M).\, \langle \oplus, \mathtt{SELECT}, o \rangle \in \mathit{permissions}(s,u)))\\
  \\
  
  \top &  \text{if } \mathit{invoker}(s) = u \wedge \mathit{trigger}(s) \neq \epsilon \wedge \forall v \in {\cal VIEW}_{D}.\, ((\langle \oplus, \mathtt{SELECT}, v \rangle \in  \mathit{permissions}(s,u) \wedge \\
  & \quad R \in \mathit{tDet}(v,s,M)) \Rightarrow  (\forall o \in \mathit{tDet}(v,s,M).\, \langle \oplus, \mathtt{SELECT}, o \rangle \in  \mathit{permissions}(s,u)))\\  
  \\

  \bot & \text{otherwise}
  \end{array}\right.$
\end{tabular}

\caption{Access control function $f^u_{\mathit{conf}}$}\label{figure:access:control:function:IB}
\end{figure*}

We now define the $\mathit{Dep}$,  $\mathit{getInfoS}$,  $\mathit{getInfoV}$, and $\mathit{getInfo}$ functions.
The function $\mathit{Dep}$ is  as follows. $\mathit{Dep}(\langle u, \mathtt{INSERT}, R,\\ \overline{t}\rangle, \Gamma)$ returns the set containing all the formulae in $\Gamma$ of the form $\forall \overline{x}, \overline{y}, \overline{y}', \overline{z}, \overline{z}'.\, (R(\overline{x}, \overline{y},  \overline{z}) \wedge R(\overline{x}, \overline{y}',  \overline{z}') )\Rightarrow \overline{y} = \overline{y}'$ or $\forall \overline{x}, \overline{z}.\, R(\overline{x}, \overline{z}) \Rightarrow \exists \overline{w}.\, S(\overline{x}, \overline{w})$, whereas   $\mathit{Dep}(\langle u, \mathtt{DELETE}, R, \overline{t}\rangle, \\\Gamma)$ returns the set containing all the formulae in $\Gamma$ of the form $\forall \overline{x}, \overline{z}.\, S(\overline{x}, \overline{z}) \Rightarrow \exists \overline{w}.\, R(\overline{x}, \overline{w})$.

The function $\mathit{getInfoS}$ is defined as follows:
\begin{compactitem}
\item $\mathit{getInfoS}(\langle u, \mathtt{INSERT}, R, (\overline{v}, \overline{w}, \overline{q})\rangle, \phi^{R}_{\mathit{funct}})$ is the formula $\neg \exists \overline{y},\overline{z}. R(\overline{v}, \overline{y}, \overline{z}) \wedge \overline{y} \neq \overline{w}$, where $\phi^{R}_{\mathit{funct}}$ is a formula of the form $\forall \overline{x}, \overline{y}, \overline{y}', \overline{z}, \overline{z}'.\, (R(\overline{x}, \overline{y},  \overline{z}) \wedge R(\overline{x}, \overline{y}',  \overline{z}') )\Rightarrow \overline{y} = \overline{y}'$.
\item $\mathit{getInfoS}(\langle u, \mathtt{INSERT}, R, (\overline{v}, \overline{w})\rangle, \phi^{R,S}_{\mathit{incl}})$ is the formula $\exists \overline{y}.\, \\ S(\overline{v}, \overline{y})$, where $\phi^{R,S}_{\mathit{incl}}$ is a formula of the form $\forall \overline{x}, \overline{z}.\, R(\overline{x}, \overline{z}) \\ \Rightarrow \exists \overline{w}.\, S(\overline{x}, \overline{w})$.
\item $\mathit{getInfoS}(\langle u, \mathtt{DELETE}, R, (\overline{v}, \overline{w})\rangle, \phi^{S,R}_{\mathit{incl}})$ is the formula $\forall \overline{x}, \\ \overline{z}.\,( S(\overline{x}, \overline{z}) \Rightarrow \overline{x} \neq \overline{v}) \vee \exists \overline{y}.\, (R(\overline{v}, \overline{y}) \wedge \overline{y} \neq \overline{w})$, where $\phi^{S,R}_{\mathit{incl}}$ is a formula of the form $\forall \overline{x}, \overline{z}.\, S(\overline{x}, \overline{z}) \Rightarrow \exists \overline{w}.\, R(\overline{x}, \overline{w})$.
\item $\mathit{getInfoS}(\mathit{act}, \phi) = \top$ otherwise.
\end{compactitem}

The function $\mathit{getInfoV}$ is defined as follows:
\begin{compactitem}
\item $\mathit{getInfoV}(\langle u, \mathtt{INSERT}, R, (\overline{v}, \overline{w}, \overline{q})\rangle, \phi^{R}_{\mathit{funct}})$ is the formula $\exists \overline{y},\overline{z}. R(\overline{v}, \overline{y}, \overline{z}) \wedge \overline{y} \neq \overline{w}$, where $\phi^{R}_{\mathit{funct}}$ is a formula of the form $\forall \overline{x}, \overline{y}, \overline{y}', \overline{z}, \overline{z}'.\,  (R(\overline{x}, \overline{y},  \overline{z}) \wedge R(\overline{x}, \overline{y}',  \overline{z}') )\Rightarrow \overline{y} = \overline{y}'$.
\item $\mathit{getInfoV}(\langle u, \mathtt{INSERT}, R, (\overline{v}, \overline{w})\rangle, \phi^{R,S}_{\mathit{incl}})$ is the formula $\forall \overline{x}, \\ \overline{y}.\, S(\overline{x}, \overline{y}) \Rightarrow \overline{x} \neq \overline{v}$, where $\phi^{R,S}_{\mathit{incl}}$ is a formula of the form $\forall \overline{x}, \overline{z}.\, R(\overline{x}, \overline{z}) \Rightarrow \exists \overline{w}.\, S(\overline{x}, \overline{w})$.
\item $\mathit{getInfoV}(\langle u, \mathtt{DELETE}, R, (\overline{v}, \overline{w})\rangle, \phi^{S,R}_{\mathit{incl}})$ is the formula $\exists \overline{z}.\, \\ S(\overline{v},\overline{z}) \wedge \forall \overline{y}.\,(R(\overline{v},\overline{y}) \Rightarrow \overline{y} = \overline{w} )$, where $\phi^{S,R}_{\mathit{incl}}$ is a formula of the form $\forall \overline{x}, \overline{z}.\, S(\overline{x}, \overline{z}) \Rightarrow \exists \overline{w}.\, R(\overline{x}, \overline{w})$.
\item $\mathit{getInfoV}(\mathit{act}, \phi) = \top$ otherwise.
\end{compactitem}

The function $\mathit{getInfo}$ is as follows:
\[
\mathit{getInfo}(\langle u, \mathit{op}, R,\overline{t}\rangle) = \left\{
\begin{array}{l l}
\neg R(\overline{t}) & \text{if } \mathit{op} = \texttt{INSERT}\\
 R(\overline{t}) & \text{if } \mathit{op} = \texttt{DELETE}\\
\end{array}\right.
\]

In \S\ref{sect:secure} we describe the $\mathit{secure}$ function and we show that it is a sound, under-approximation of the concept of secure judgments.
Afterwards, in \S\ref{sect:data:conf:proofs} we prove that $f_{\mathit{conf}}^{u}$ provides \confidentiality{} with respect to the user $u$.
Finally, in \S\ref{sect:data:conf:complexity:proofs} we prove that the data complexity of $f_{\mathit{conf}}^{u}$ is \complexity.
In the rest of the paper, instead of writing $\mathit{secure}_{P,\cong_{P,u}}$ we simply write $\mathit{secure}_{P,u}$ and we omit the reference to the indistinguishability relation $\cong_{P,u}$ defined in Appendix~\ref{app:indistinguishability}.

\subsection{Checking a judgment's security}\label{sect:secure}

We still have to define the $\mathit{secure} : {\cal U} \times \mathit{RC}_{\mathit{bool}} \times \Omega_{M} \rightarrow \{\top,\perp\}$ function that determines a given judgment's security.
In more detail, the $\mathit{secure}$ function is as follows: 
\[
\mathit{secure}(u,\phi, s) = \left\{
\begin{array}{l l}
\top & \text{if } [\phi^{\mathit{rw}}_{s,u}]^{s.\mathit{db}} = \bot \\
\bot & \text{otherwise}
\end{array}\right.
\]

In the following, we assume that both the formula $\phi$ and the set of views $V$ in the state $s$ contain just views with owner's privileges.
This is without loss of generality.
Indeed, views with activator's privileges are just syntactic sugar, they do not disclose additional information to a user other than what he is already authorized to read because they are executed under the activator's privileges.
If $\phi$ and $s$ contain views with activator's privileges, we can compute another formula $\phi'$ and a state $s'$ without views with activator's privileges as follows.
We replace, in the formula $\phi$, the predicates of the form $V(\overline{x})$, where $V$ is a view with activator's privileges,  with $V$'s definition, and we repeat this process until the resulting formula $\phi'$ no longer contains views with activator's privileges.
Similarly, the set $V'$ is obtained from $V$ by (1) removing all views with activator's privileges, and (2) for each view $v \in V$ with owner's privileges, replacing the predicates of the form $V(\overline{x})$ in $v$'s definition, where $V$ is a view with activator's privileges,  with $V$'s definition until $v$'s definition no longer contains views with activator's privileges.
The security policy $\mathit{sec}'$ is also obtained from $\mathit{sec}$ by removing all references to views with activator's privileges.
Finally, $\mathit{secure}(u, \phi, \langle \mathit{db}, U,\mathit{sec}, T,V,c\rangle)$ is just $\mathit{secure}(u, \phi', \langle \mathit{db}, U,\mathit{sec}', T,V',c\rangle)$.

Before defining the $\phi^{\top}_{s,u}$ and $\phi^{\bot}_{s,u}$ rewritings, we define query containment.
Let $M = \langle D,\Gamma \rangle$ be a system configuration.
Given two formulae $\phi(\overline{x})$ and $\psi(\overline{y})$, we  write $\phi \subseteq_{M} \psi$ to denote that $\phi$ is contained in $\psi$, i.e., $\forall \mathit{d} \in \Omega_{D}^{\Gamma}.\, [\{\overline{x} | \phi\}]^{\mathit{d}} \subseteq [\{\overline{y} | \psi\}]^{\mathit{d}}$.
Determining whether $\phi \subseteq_{M} \psi$ holds is undecidable for $\mathit{RC}$~\cite{abiteboul1995foundations}.
Hence, we develop a sound, under-approximation of query containment.
Figure \ref{figure:containment} describes the rules defining our under-approximation.
For simplicity's sake, the rules are defined only for relational calculus formulae that do not use views.
To check whether  $\phi \subseteq_{M} \psi$ holds for two formulae $\phi$ and $\psi$  that use views, we first compute the formulae $\phi'$ and $\psi'$, obtained by replacing views' identifiers with their definitions, and then we check whether $\phi' \subseteq_{M} \psi'$ using the rules in Figure \ref{figure:containment}.
This preserves containment since $\phi$ and $\psi$ are semantically equivalent to $\phi'$ and $\psi'$.
Both in the rules and in the proof of \thref{theorem:containment:sound}, we assume that there is a total ordering $\preceq_{\mathit{var}}$ over the set of all possible variable identifiers.
This ensures that, given a formula $\phi$, there is a unique non-boolean query $\{\overline{x} \,|\, \phi\}$ associated to it, where the variables in $\overline{x}$ are those in $\mathit{free}(\phi)$ ordered according to  $\preceq_{\mathit{var}}$.
\thref{theorem:containment:sound} proves that the rules in Figure \ref{figure:containment} are a sound, under-approximation of query containment.

\begin{lemma}\thlabel{theorem:containment:sound}
Let $M = \langle D,\Gamma \rangle$ be a system configuration, and  $\phi(\overline{x})$ and $\psi(\overline{y})$ be two formulae.
If $\phi \subseteq_{M} \psi$, according to the rules in Figure \ref{figure:containment}, then  $\forall \mathit{d} \in \Omega_{D}^{\Gamma}.\, [\{\overline{x} | \phi\}]^{\mathit{d}} \subseteq [\{\overline{y} | \psi\}]^{\mathit{d}}$, where $\overline{x}$ (respectively $\overline{y}$) is the tuple defined by the variables in $\mathit{free}(\phi)$ (respectively $\mathit{free}(\psi)$) ordered according to $\preceq_{\mathit{var}}$.
\end{lemma}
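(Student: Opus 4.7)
The plan is to proceed by structural induction on the derivation of $\phi \subseteq_{M} \psi$ using the rules in Figure~\ref{figure:containment}. Since we do not have space to reproduce the rule set, the inductive argument will be organized rule-by-rule, with the invariant that every rule either (i) is a syntactic base case whose semantic validity is immediate from the $\mathit{RC}$ semantics, or (ii) combines containment judgments of strict subformulae whose soundness is already given by the induction hypothesis. The total order $\preceq_{\mathit{var}}$ on variables is used uniformly to identify the output tuples $\overline{x}$ and $\overline{y}$ with the free variables of $\phi$ and $\psi$, so that the semantic statement $[\{\overline{x}\mid\phi\}]^{d} \subseteq [\{\overline{y}\mid\psi\}]^{d}$ is well-defined and stable under the rules' premises.

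First I would dispose of the base cases. These typically include reflexivity $\phi \subseteq_{M} \phi$, the trivial cases $\bot \subseteq_{M} \phi$ and $\phi \subseteq_{M} \top$, atomic cases for equalities $x = y$, and the case of matching predicate atoms $R(\overline{x}) \subseteq_{M} R(\overline{x})$. For each, soundness follows directly from the relational calculus semantics over any $d \in \Omega_{D}^{\Gamma}$. I would also handle in this phase any rules exploiting $\Gamma$ explicitly (for example, rules that let an inclusion dependency $\forall \overline{x},\overline{z}.\,R(\overline{x},\overline{z}) \Rightarrow \exists \overline{w}.\,S(\overline{x},\overline{w})$ witness $\exists \overline{z}.\,R(\overline{x},\overline{z}) \subseteq_{M} \exists \overline{w}.\,S(\overline{x},\overline{w})$); for these, soundness follows because $d \in \Omega_{D}^{\Gamma}$ forces $[\gamma]^{d} = \top$ for every $\gamma \in \Gamma$.

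Next I would handle the inductive cases by going through each logical connective. For conjunction, if $\phi_{1} \subseteq_{M} \psi_{1}$ and $\phi_{2} \subseteq_{M} \psi_{2}$, then any tuple satisfying $\phi_{1} \wedge \phi_{2}$ in $d$ satisfies $\psi_{1}$ and $\psi_{2}$ by the induction hypothesis, so it satisfies $\psi_{1} \wedge \psi_{2}$. Disjunction and existential quantification are handled by case analysis on which disjunct or witness satisfies the antecedent formula, then applying the IH. Universal quantification and negation are slightly subtler: I would check that the rule's premises are set up contravariantly where necessary (for example, $\psi \subseteq_{M} \phi$ premising $\neg \phi \subseteq_{M} \neg \psi$), so that the induction hypothesis still delivers the pointwise semantic inclusion after the polarity flip. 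Any renaming/substitution rules on free variables are justified by the standard $\mathit{RC}$ substitution lemma, using $\preceq_{\mathit{var}}$ to align $\overline{x}$ with $\overline{y}$.

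The main obstacle I expect is the negation/universal case: the approximation must be defined so that containment premises under negation are phrased in the opposite direction, and one has to track carefully that the rules never derive an unsound containment by applying an IH with the wrong polarity. A secondary bookkeeping difficulty is the treatment of free-variable tuples across rule boundaries---in particular, rules that combine subformulae with different free variables must be checked to ensure that the output tuples $\overline{x}$ and $\overline{y}$ produced via $\preceq_{\mathit{var}}$ correspond correctly to those of the premises, so that semantic containment really transfers. Everything else is a routine case analysis on the shape of the derived rule.
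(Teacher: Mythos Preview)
Your overall strategy---structural induction on the derivation of $\phi \subseteq_{M} \psi$---is exactly what the paper does. However, you have substantially misimagined the rule set, and most of your effort is spent on cases that do not exist.

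The actual rule system in Figure~\ref{figure:containment} contains only five rules, of which four are axioms (no containment premises): \emph{And} derives $\phi \wedge \psi \subseteq_{M} \phi$ as a weakening, \emph{Or} derives $\phi \subseteq_{M} \phi \vee \psi$ as a weakening, \emph{Identity} handles variable renaming, and \emph{Inclusion Dependency} is the constraint-based rule you correctly anticipated. The single inductive rule is \emph{Projection}, which lifts $\phi \subseteq_{M} \psi$ to $\exists x_{i}.\,\phi \subseteq_{M} \exists y_{i}.\,\psi$. There are no rules for negation, universal quantification, reflexivity, $\bot$/$\top$, or the compositional conjunction/disjunction you describe (where both sides carry containment premises). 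Consequently, your ``main obstacle''---tracking contravariance under negation and universals---simply does not arise, and your treatment of conjunction and disjunction as inductive cases with two containment premises does not match the paper's axiomatic weakening rules.

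The paper's proof therefore has a much lighter structure than yours: the base case handles And, Or, Identity, and Inclusion Dependency directly from the relational calculus semantics (each is a one- or two-line argument), and the induction step handles only Projection, arguing that a witness for $\exists x_{i}.\,\phi$ yields, via the induction hypothesis, a witness for $\exists y_{i}.\,\psi$. Your proposal would cover these cases correctly once specialized, but as written it is solving a harder problem than the one actually posed.
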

\begin{proof}
$\phi \subseteq_{M} \psi$  iff there is a finite derivation that ends in $\phi \subseteq_{M} \psi$ created using the rules in Figure~\ref{figure:containment}.
We prove our claim by structural induction on the derivation's length.

\smallskip
\noindent
{\bf Base Case} 
Assume that the derivation has length 1.
There are four cases depending on the rule used to derive $\phi \subseteq_{M} \psi$:
\begin{compactenum}
\item Rule \emph{And}.
From the rule's definition, it follows that $\mathit{free}(\phi) = \mathit{free}(\phi\wedge \psi) = \overline{x}$.
Let $d \in \Omega_{D}^{\Gamma}$ and $\overline{t} \in [\{ \overline{x}\,|\,\phi \wedge \psi \}]^{d}$.
From $\overline{t} \in [\{ \overline{x}\,|\,\phi \wedge \psi \}]^{d}$ and the definition of non-boolean query, it follows that $[(\phi \wedge \psi )[\overline{x} \mapsto \overline{t}]]^{d} = \top$.
From this and the relational calculus semantics, it follows that $[\phi [\overline{x} \mapsto \overline{t}]]^{d} = \top$.
From this and the definition of non-boolean query,  $\overline{t} \in [\{ \overline{x}\,|\,\phi \}]^{d}$.
Therefore, $[\{\overline{x} | \phi \wedge \psi \}]^{\mathit{d}} \subseteq [\{\overline{x} | \phi\}]^d$.

\item Rule \emph{Or}.
From the rule's definition, it follows that $\mathit{free}(\phi) = \mathit{free}(\phi \vee \psi) = \overline{x}$.
Let $d \in \Omega_{D}^{\Gamma}$ and $\overline{t} \in [\{ \overline{x}\,|\,\phi  \}]^{d}$.
From $\overline{t} \in [\{ \overline{x}\,|\,\phi  \}]^{d}$ and the definition of non-boolean query, it follows that $[\phi [\overline{x} \mapsto \overline{t}]]^{d} = \top$.
From this and the relational calculus semantics, it follows that $[(\phi \vee \psi) [\overline{x} \mapsto \overline{t}]]^{d} = \top$.
From this and the definition of non-boolean query,  $\overline{t} \in [\{ \overline{x}\,|\,\phi \vee \psi \}]^{d}$.
Therefore, $[\{\overline{x} | \phi \}]^{\mathit{d}} \subseteq [\{\overline{x} | \phi \vee \psi \}]^d$.

\item Rule \emph{Identity}.
From the rule's definition, it follows that $\mathit{free}(\phi) = \overline{x}$, $\mathit{free}(\psi) = \overline{y}$, and $\phi[\overline{x} \mapsto \overline{y}] = \psi$.
Let $d \in \Omega_{D}^{\Gamma}$ and $\overline{t} \in [\{ \overline{x}\,|\,\phi  \}]^{d}$.
From $\overline{t} \in [\{ \overline{x}\,|\,\phi  \}]^{d}$ and the definition of non-boolean query, it follows that $[\phi [\overline{x} \mapsto \overline{t}]]^{d} = \top$.
From this and $\phi[\overline{x} \mapsto \overline{y}] = \psi$, it follows that $[\psi [\overline{y} \mapsto \overline{t}]]^{d} = \top$.
From this and the definition of non-boolean query,  $\overline{t} \in [\{ \overline{y}\,|\,\psi  \}]^{d}$.
Therefore, $[\{\overline{x} | \phi \}]^{\mathit{d}} \subseteq [\{\overline{y} | \psi \}]^d$.

\item Rule \emph{Inclusion Dependency}.
From the rule's definition, it follows that $\gamma:=\forall \overline{x}, \overline{z}.\,( R(\overline{x}, \overline{z}) \Rightarrow \exists \overline{w}.\, S(\overline{x}, \overline{w}) )$ is in $\Gamma$.
Let $d \in \Omega_{D}^{\Gamma}$ and $\overline{t} \in [\{ \overline{x}\,|\,\exists \overline{z}.\, R(\overline{x}, \overline{z})  \}]^{d}$.
From $\overline{t} \in [\{ \overline{x}\,|\,\exists \overline{z}.\, R(\overline{x}, \overline{z})  \}]^{d}$ and the definition of non-boolean query, it follows that $[\exists \overline{z}.\, R(\overline{t}, \overline{z})]^{d} = \top$.
Therefore, there is a tuple $(\overline{t}, \overline{w}) \in d(R)$.
From this and $\gamma \in \Gamma$, it follows that there is a tuple $(\overline{t}, \overline{w}') \in d(S)$.
From this, it follows that $[\exists \overline{w}.\, S(\overline{t}, \overline{w})]^{d} = \top$.
From this and the definition of non-boolean query, it follows that $\overline{t} \in [\{ \overline{x}\,|\,\exists \overline{w}.\, S(\overline{x}, \overline{w})  \}]^{d}$.
Therefore, it follows that $[\{ \overline{x}\,|\,\exists \overline{z}.\, \\ R(\overline{x}, \overline{z})  \}]^{d}  \subseteq [\{ \overline{x}\,|\,\exists \overline{w}.\, S(\overline{x}, \overline{w})  \}]^{d}$ holds.

\end{compactenum}
This completes the proof for the base case.

\smallskip
\noindent
{\bf Induction Step} 
Assume now that the claim holds for all derivations of length less than that of $\phi \subseteq_{M} \psi$.
We now prove that it holds also for $\phi \subseteq_{M} \psi$.
There is just one case, namely $\phi \subseteq_{M} \psi$ is of the form $\exists x_i.\, \alpha \subseteq_{M} \exists y_i.\, \beta$ and it is obtained by applying the rule \emph{Projection} to $\alpha \subseteq_{M} \beta$.
From the rule, it follows that $\alpha \subseteq_{M} \beta$ holds.
Let $1 \leq u \leq n$ and $\overline{t}'$ (respectively $\overline{x}'$ and $\overline{y}'$) be the tuple obtained from $\overline{t}$ (respectively $\overline{x}$ and $\overline{y}$) by dropping the $i$-th value (respectively variable).
We now prove that $[\{\overline{x}' | \exists x_{i}.\, \alpha \}]^{\mathit{d}} \subseteq [\{\overline{y}' | \exists y_{i}.\, \beta \}]^{\mathit{d}}$.
Assume, for contradiction's sake, that this is not the case, namely there is a tuple $\overline{v}$ such that $\overline{v} \in [\{\overline{x}' | \exists x_{i}.\, \alpha \}]^{\mathit{d}}$ but $\overline{v} \not\in  [\{\overline{y}' | \exists y_{i}.\, \beta \}]^{\mathit{d}}$.
From $\overline{v} \in [\{\overline{x}' | \exists x_{i}.\, \alpha \}]^{\mathit{d}}$ and the relational calculus semantics, it follows that there is a tuple $\overline{v}_1$, obtained by adding a value to $\overline{v}$ in the $i$-th position, such that  $\overline{v}_1 \in [\{\overline{x} |  \alpha \}]^{\mathit{d}}$.
From this, $\alpha \subseteq_{M} \beta$, and the induction hypothesis, it follows that  $\overline{v}_1 \in [\{\overline{y} |  \beta \}]^{\mathit{d}}$.
From this and the relational calculus semantics, it follows that $\overline{v} \in [\{\overline{y}' |  \exists y_{i}.\, \beta \}]^{\mathit{d}}$.
This contradicts the fact that $\overline{v} \not\in  [\{\overline{y}' | \exists y_{i}.\, \beta \}]^{\mathit{d}}$.

This completes the proof.
\end{proof}

\begin{figure}[!hbtp]
\centering
\scalebox{0.80}{
\begin{tabular}{c c}

$
\infer[\text{And}]
{ \phi  \wedge \psi \subseteq_{M}  \phi
}
{
\hfill \mathit{free}(\phi \wedge \psi) = \mathit{free}(\phi) \hfill \\
\hfill M = \langle D, \Gamma\rangle \hfill \quad
\hfill \mathit{free}(\phi) \neq \emptyset \hfill
}$
&
$
\infer[\text{Or}]
{ \phi \subseteq_{M}  \phi \vee \psi
}
{
\hfill \mathit{free}(\phi) = \mathit{free}(\phi \vee \psi) \hfill \\
\hfill M = \langle D, \Gamma\rangle \hfill \quad
\hfill \mathit{free}(\phi) \neq \emptyset \hfill
}$
\\\\

$
\infer[\text{Projection}]
{   \exists x_{i}. \phi \subseteq_{M} \exists y_{i}. \phi
}
{
\hfill M = \langle D, \Gamma\rangle \hfill \enspace 
\hfill n > 1 \hfill \\ 
\hfill \mathit{free}(\phi) = \{ x_{1}, \ldots, x_{n} \} \hfill \\
\hfill \mathit{free}(\psi) = \{ y_{1}, \ldots, y_{n} \} \hfill \\
\hfill 1 \leq i \leq n \hfill \enspace
\hfill  \phi  \subseteq_{M}  \psi \hfill \\
}
$
&

$
\infer[\text{Identity}]
{ \phi \subseteq_{M}  \psi
}
{
\hfill M = \langle D, \Gamma\rangle \hfill \enspace
\hfill n > 0 \hfill  \hfill \\
\hfill \mathit{free}(\phi) = \{x_{1}, \ldots, x_{n}\} \hfill \\
\hfill \mathit{free}(\psi) = \{y_{1}, \ldots, y_{n}\} \hfill \\
\hfill \phi[x_{1} \mapsto y_{1}, \ldots, x_{n} \mapsto y_{n}] = \psi \hfill
}$ 
\\\\
\multicolumn{2}{c}{
$
\infer[\text{\begin{tabular}{c}Inclusion\\ Dependency\end{tabular}}]
{  \exists \overline{z}.\, R(\overline{x}, \overline{z})\subseteq_{M}  \exists \overline{w}.\, S(\overline{x}, \overline{w})  
}
{
\hfill M = \langle D, \Gamma\rangle \hfill \enspace
\hfill |\overline{x}| > 0 \hfill \\
\hfill \forall \overline{x}, \overline{z}.\,( R(\overline{x}, \overline{z}) \Rightarrow \exists \overline{w}.\, S(\overline{x}, \overline{w}) ) \in \Gamma \hfill
}$
}
\end{tabular}
}
\caption{Containment rules}\label{figure:containment}
\end{figure}

Given a table or a view $O$ and a sequence of distinct integers $\overline{i} := (i_{1}, \ldots, i_{n})$ such that $1 \leq i_{j} \leq |O|$ for all $1 \leq j \leq n$, where $0 \leq n < |O|$, the \emph{$\overline{i}$-projection of $O$}, denoted by $O_{\overline{i}}$, is  the formula $\exists x_{i_{1}}, \ldots,  x_{i_{n}}.\, O(x_{1}, \ldots, x_{|O|})$.
Given a database schema $D$ and a set of views $V$ defined over $D$, we denote by $\mathit{extVocabulary}(D,V)$ the extended vocabulary obtained by defining all possible projections of tables in $D$ and views in $V$, i.e., for each $O \in D \cup V$, we define a predicate $O_{\overline{i}}$ for each projection $\exists x_{i_{1}}, \ldots,  x_{i_{n}}.\, O(x_{1}, \ldots, x_{|O|})$ of $O$.
Furthermore, given a relational calculus formula $\phi$ over $D$, we denote by $\mathit{extVoc}_{V,D}(\phi)$ the formula obtained by replacing all sub-formulae of the form $\exists \overline{x}. R(\overline{x}, \overline{y})$ with the predicates in $\mathit{extVocabulary}(D,V)$ representing the corresponding projections $R_{\overline{i}}$.
Finally, we denote by $\mathit{inline}_{D,V}(\phi)$, where $\phi$ is a relational calculus formula over  $\mathit{extVocabulary}(D,V)$, the formula $\phi'$ obtained by replacing all predicates associated with projections with the corresponding formulae.
Let $S$ be predicate in $\mathit{extVocabulary}(D,V)$ and $s$ be an $M$-state.
We denote by $S^{\top}_{s}$  the set of all projections of tables in $D$ and views in $V$ that are contained  in   $S$, i.e., $S^{\top}_{s} := \{ R \in \mathit{extVocabulary}(D,V) \,|\, R(\overline{x}) \subseteq_{M}  S(\overline{y})\}$\footnote{With a slight abuse of notation, we write $R(\overline{x}) \subseteq_{M}  S(\overline{y})$ instead of  $\mathit{inline}_{D,V}(R(\overline{x})) \subseteq_{M}  \mathit{inline}_{D,V}(S(\overline{y}))$.}.
Similarly, we denote by  $S^{\bot}_{s}$ the set of all projections of tables in $D$ and views in $V$ that contains  $S$, i.e., $S^{\bot}_{s} := \{ R \in \mathit{extVocabulary}(D,V) \,|\,  S(\overline{x}) \subseteq_{M}   R(\overline{y})\}$.
Furthermore, we denote by $\mathit{AUTH}_{s,u}$ the set of all tables and views that $u$ is authorized to read in $s$, i.e., $\mathit{AUTH}_{s,u} := \{ v \,|\, \langle \oplus, \mathtt{SELECT}, v \rangle \in \mathit{permissions}(s,u)\}$, and by $\mathit{AUTH}_{s,u}^{*}$ the set of all the projections obtained from tables and views in $\mathit{AUTH}_{s,u}$.

We are now ready to formally define the $\phi^{\top}_{s,u}$ and $\phi^{\bot}_{s,u}$ rewritings.
\begin{definitionInt}
Let $M = \langle D, \Gamma\rangle$ be a system configuration, $s = \langle \mathit{db}, U, \mathit{sec},  T,V,c  \rangle$ be an $M$-state, $u$ be a user, and $\phi$ be a relational calculus sentence over $\mathit{extVocabulary}(D,V)$.

The function $\mathit{bound}$ takes as input a formula $\phi$, a state $s$, a user $u$, a variable identifier $x$, and a value $v$ in $\{\top,\bot\}$.
It is inductively defined as follows:
\begin{compactitem}
\item $\mathit{bound}(R(\overline{y}), s, u, x, v)$, where $R$ is a predicate symbol in $\mathit{extVocabulary}(D,V)$, is $\top$ iff\begin{inparaenum}[(a)]
\item $x$ occurs in $\overline{y}$, and
\item the set $R^{v}_{s,u}$, which is $R^{v}_{s} \cap \mathit{AUTH}_{s,u}^{*}$, is not empty.
\end{inparaenum}
\item $\mathit{bound}(y = z, s, u, x, v)$ is $\top$ iff $x = y$ and $z$ is a constant symbol or $x = z$ and $y$ is a constant symbol.
\item $\mathit{bound}(\top, s, u, x, v) := \bot$.
\item $\mathit{bound}(\bot, s, u, x, v) := \bot$.
\item $\mathit{bound}(\neg \psi, s, u, x, v) := \mathit{bound}(\psi, s, u, x, \neg v)$, where $\psi$ is a relational calculus formula.
\item $\mathit{bound}(\psi \wedge \gamma, s, u, x, v) := \mathit{bound}(\psi, s, u, x, v) \vee \mathit{bound}(\gamma, \\ s, u, x, v)$, where $\psi$ and $\gamma$ are relational calculus formulae.
\item $\mathit{bound}(\psi \vee \gamma, s, u, x, v) := \mathit{bound}(\psi, s, u, x, v) \wedge \mathit{bound}(\gamma, \\ s, u, x, v)$, where $\psi$ and $\gamma$ are relational calculus formulae.
\item $\mathit{bound}(\exists y. \psi, s, u, x, v)$, where $\psi$ is a relational calculus formula, is $\mathit{bound}(\psi, s, u, x, v) \wedge \mathit{bound}(\psi, s, u, y, v)$ if $x \neq y$, and $\mathit{bound}(\exists y. \psi,  s, u, x, v):=\bot$ otherwise.
\item $\mathit{bound}(\forall y. \psi, s, u, x, v)$, where $\psi$ is a relational calculus formula, is $\mathit{bound}(\psi, s, u, x, v) \wedge \mathit{bound}(\psi, s, u, y, v)$ if $x \neq y$, and $\mathit{bound}(\forall y. \psi,  s, u, x, v):=\bot$ otherwise.
\end{compactitem}

The formula $\phi^{\mathit{\top}}_{s,u}$ is inductively defined as follows:
\begin{compactitem}
\item $R(\overline{x})^{\mathit{\top}}_{s,u} := \bigvee_{S \in R^{\top}_{s,u}} S(\overline{x})$, where $R$ is a predicate symbol in $\mathit{extVocabulary}(D,V)$ and $R^{\top}_{s,u} := R^{\top}_{s} \cap \mathit{AUTH}_{s,u}^{*}$.
\item $(x = v)^{\mathit{\top}}_{s,u} :=(x = v)$, where $x$ and $v$ are either variable identifiers or constant symbols. 
\item $(\top)^{\mathit{\top}}_{s,u} := \top$.
\item $(\bot)^{\mathit{\top}}_{s,u} := \bot$.
\item $(\neg \psi)^{\mathit{\top}}_{s,u} := \neg \psi^{\bot}_{s,u}$, where $\psi$ is a relational calculus formula.
\item $(\psi \wedge \gamma)^{\mathit{\top}}_{s,u} := \psi^{\top}_{s,u} \wedge \gamma^{\top}_{s,u}$, where $\psi$ and $\gamma$ are relational calculus formulae.
\item $(\psi \vee \gamma)^{\mathit{\top}}_{s,u} := \psi^{\top}_{s,u} \vee \gamma^{\top}_{s,u}$, where $\psi$ and $\gamma$ are relational calculus formulae.
\item $(\exists x.\, \psi)^{\mathit{\top}}_{s,u}$, where $\psi$ is a relational calculus formula and $x$ is a variable identifier, is $\exists x.\, \psi^{\mathit{\top}}_{s,u}$ if $\mathit{bound}(\psi,s, u, x, \top) \\ = \top$ and $(\exists x.\, \psi)^{\mathit{\top}}_{s,u}:= \bot$ otherwise.
\item $(\forall x.\, \psi)^{\mathit{\top}}_{s,u}$, where $\psi$ is a relational calculus formula and $x$ is a variable identifier, is $\forall x.\, \psi^{\mathit{\top}}_{s,u}$ if $\mathit{bound}(\psi,s, u, x, \top) \\ = \top$ and $(\forall x.\, \psi)^{\mathit{\top}}_{s,u}:= \bot$ otherwise.
\end{compactitem}

The formula $\phi^{\mathit{\bot}}_{s,u}$ is inductively defined as follows:
\begin{compactitem}
\item $R(\overline{x})^{\mathit{\bot}}_{s,u} := \bigwedge_{S \in R^{\bot}_{s,u}} S(\overline{x})$, where $R$ is a predicate symbol in $\mathit{extVocabulary}(D,V)$ and $R^{\bot}_{s,u} := R^{\bot}_{s} \cap \mathit{AUTH}_{s,u}^{*}$.
\item $(x = v)^{\mathit{\bot}}_{s,u} :=(x = v)$, where $x$ and $v$ are either variable identifiers or constant symbols. 
\item $(\top)^{\mathit{\bot}}_{s,u} := \top$.
\item $(\bot)^{\mathit{\bot}}_{s,u} := \bot$.
\item $(\neg \psi)^{\mathit{\bot}}_{s,u} := \neg \psi^{\top}_{s,u}$, where $\psi$ is a relational calculus formula.
\item $(\psi \wedge \gamma)^{\mathit{\bot}}_{s,u} := \psi^{\bot}_{s,u} \wedge \gamma^{\bot}_{s,u}$, where $\psi$ and $\gamma$ are relational calculus formulae.
\item $(\psi \vee \gamma)^{\mathit{\bot}}_{s,u} := \psi^{\bot}_{s,u} \vee \gamma^{\bot}_{s,u}$, where $\psi$ and $\gamma$ are relational calculus formulae.
\item $(\exists x.\, \psi)^{\mathit{\bot}}_{s,u}$, where $\psi$ is a relational calculus formula and $x$ is a variable identifier, is $\exists x.\, \psi^{\mathit{\bot}}_{s,u}$ if $\mathit{bound}(\psi,s, u, x, \bot) \\ = \top$ and  $(\exists x.\, \psi)^{\mathit{\bot}}_{s,u}:= \bot$ otherwise.
\item $(\forall x.\, \psi)^{\mathit{\bot}}_{s,u}$, where $\psi$ is a relational calculus formula and $x$ is a variable identifier, is $\forall x.\, \psi^{\mathit{\bot}}_{s,u}$ if $\mathit{bound}(\psi,s, u,x, \bot) \\ = \top$ and $(\forall x.\, \psi)^{\mathit{\bot}}_{s,u}:= \bot$ otherwise. $\square$
\end{compactitem}
\end{definitionInt}

Finally, we define the formula $\phi^{\mathit{rw}}_{s,u}$ which represents the overall rewritten formula.
\begin{definition}
Let $M = \langle D, \Gamma\rangle$ be a system configuration, $s = \langle \mathit{db}, U, \mathit{sec},  T,V,c  \rangle$ be an $M$-state, $u$ be a user, and $\phi$ be a relational calculus sentence over $D$.
The formula $\phi^{\mathit{rw}}_{s,u}$ is defined as $\mathit{inline}_{V,D}(\neg \psi^{\top}_{s,u} \wedge \psi^{\bot}_{s,u})$, where $\psi := \mathit{extVoc}_{V,D}(\phi)$.
\end{definition}

Let $P = \langle M, f \rangle$ be an \accessControlConfiguration{}, $L$ be the $P$-LTS, $u \in {\cal U}$ be a user, $r \in \mathit{traces}(L)$ be an $L$-run, $\phi \in RC_{\mathit{bool}}$ is a sentence, and $1 \leq i \leq |r|$.
Furthermore, let $s$ be the $i$-th state of $r$.
The judgment $r, i \attMod \phi$ is \emph{data-secure for $M$, $u$, and $s$}, denoted by $\mathit{secure}^{\mathit{data}}_{P,u}(r, i \attMod \phi)$, iff for all $s', s'' \in \llbracket \mathit{pState}(s)\rrbracket_{u,M}^\mathit{data}$, $[\phi]^{s'.\mathit{db}} = [\phi]^{s''.\mathit{db}}$, where $\cong_{u,M}^{data}$ is the data-indistinguishability relation  defined in Appendix \ref{app:indistinguishability} and $\llbracket s\rrbracket_{u,M}^{data} := \{ s' \in \Pi_{M} | s \cong_{u,M}^{\mathit{data}} s'\}$.

We first recall our definitions and notations for assignments.
Let ${\bf dom}$ be the universe and ${\bf var}$ be an infinite countably set of variable identifiers.
An \emph{assignment} $\nu$ is a partial function from ${\bf var}$ to ${\bf dom}$ that maps variables to values in the universe.
Given a formula $\phi$ and an assignment $\nu$, we say that \emph{$\nu$ is well-formed for $\phi$} iff $\nu$ is defined for all variables in $\mathit{free}(\phi)$.
Given an assignment $\nu$ and a sequence of variables $\overline{x}$ such that $\nu$ is defined for each $x \in \overline{x}$, we denote by $\nu(\overline{x})$ the tuple obtained by replacing each occurrence of $x \in \overline{x}$ with $\nu(x)$.
Given an assignment $\nu$, a variable $v \in {\bf var}$, and a value $u \in {\bf dom}$, we denote by $\nu \oplus [v \mapsto u]$ the assignment $\nu'$ obtained as follows: $\nu'(v') = \nu(v')$ for any $v' \neq v$, and $\nu'(v) = u$. 
Finally, given a formula $\phi$ with free variables $\mathit{free}(\phi)$ and an assignment $\nu$, we denote by $\phi \circ \nu$ the formula $\phi'$ obtained by replacing, for each free variable $x \in \mathit{free}(\phi)$ such that $\nu(x)$ is defined, all the free occurrences of $x$ with $\nu(x)$.

\thref{theorem:ibsec:correctness:secure:3} shows that $\mathit{secure}^{\mathit{data}}_{P,u}$ is a sound, under approximation of $\mathit{secure}_{P,u}$.
However, as shown in \cite{guarnieri2014optimal}, deciding whether $\mathit{secure}^{\mathit{data}}_{P,u}(r, i \attMod \phi)$ holds for a given judgment is still undecidable for the relational calculus.

\begin{lemma}\thlabel{theorem:ibsec:correctness:secure:3}
Let $P = \langle M, f \rangle$ be an \accessControlConfiguration{}, $L$ be the $P$-LTS, $u \in {\cal U}$ be a user, $r \in \mathit{traces}(L)$ be an $L$-run, $\phi \in RC_{\mathit{bool}}$ is a sentence, and $1 \leq i \leq |r|$.
Given a judgment $r, i \attMod \phi$, if $\mathit{secure}^{\mathit{data}}_{P,u}(r, i \attMod \phi)$, then $\mathit{secure}_{P,u}(r, i \attMod \phi)$. 
\end{lemma}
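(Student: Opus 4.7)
The proof plan is essentially a one-step unfolding of definitions: I need to show that any run $r'$ indistinguishable from $r^i$ under $\cong_{P,u}$ yields a last database that is data-indistinguishable from the last database of $r^i$, and then invoke the data-security hypothesis on $\phi$.

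First I would fix an arbitrary $r' \in \mathit{traces}(L)$ with $r^i \cong_{P,u} r'$, and write $\mathit{last}(r^i) = \langle \mathit{db}, U, \mathit{sec}, T, V, c\rangle$ and $\mathit{last}(r') = \langle \mathit{db}', U', \mathit{sec}', T', V', c'\rangle$. My goal becomes showing $[\phi]^{\mathit{db}} = [\phi]^{\mathit{db}'}$, from which $\mathit{secure}_{P,u}(r, i \attMod \phi)$ follows by Definition~\ref{definition:secure:judgment}.

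Next I would unpack the definition of $\cong_{P,u}$ given in Appendix~\ref{app:indistinguishability}. The second clause of that definition explicitly requires that $\mathit{pState}(\mathit{last}(r^i))$ and $\mathit{pState}(\mathit{last}(r'))$ be $(M,u)$-data indistinguishable, i.e.\ that $\mathit{pState}(\mathit{last}(r^i)) \cong_{u,M}^{data} \mathit{pState}(\mathit{last}(r'))$. Since $\cong_{u,M}^{data}$ is an equivalence relation (and trivially reflexive), both $\mathit{pState}(\mathit{last}(r^i))$ and $\mathit{pState}(\mathit{last}(r'))$ lie in the equivalence class $\llbracket \mathit{pState}(\mathit{last}(r^i))\rrbracket_{u,M}^{\mathit{data}}$.

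Finally, I would apply the hypothesis $\mathit{secure}^{\mathit{data}}_{P,u}(r, i \attMod \phi)$. By its definition, for every pair of partial states $s', s''$ in $\llbracket \mathit{pState}(\mathit{last}(r^i))\rrbracket_{u,M}^{\mathit{data}}$ we have $[\phi]^{s'.\mathit{db}} = [\phi]^{s''.\mathit{db}}$. Instantiating with $s' = \mathit{pState}(\mathit{last}(r^i))$ and $s'' = \mathit{pState}(\mathit{last}(r'))$ yields $[\phi]^{\mathit{db}} = [\phi]^{\mathit{db}'}$, which is what we needed. Since $r'$ was arbitrary, $\mathit{secure}_{P,u}(r, i \attMod \phi)$ holds. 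There is no real obstacle here: the nontrivial content is that clause (2) of the definition of $\cong_{P,u}$ is strong enough to imply data-indistinguishability of the final partial states, and that is built directly into the definition.
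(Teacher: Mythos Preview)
Your proof is correct and follows essentially the same approach as the paper: both arguments use clause~(2) of the definition of $\cong_{P,u}$ to conclude that $\mathit{pState}(\mathit{last}(r^i))$ and $\mathit{pState}(\mathit{last}(r'))$ lie in the same $\cong_{u,M}^{\mathit{data}}$-equivalence class, and then invoke the data-security hypothesis to equate $[\phi]^{\mathit{db}}$ and $[\phi]^{\mathit{db}'}$. The only cosmetic difference is that the paper phrases the argument by contradiction while you give it directly.
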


\begin{proof}
We prove the claim by contradiction.
Let $P = \langle M, f \rangle$ be an \accessControlConfiguration{}, $L$ be the $P$-LTS, $u \in {\cal U}$ be a user, $r \in \mathit{traces}(L)$ be an $L$-run, $\phi \in RC_{\mathit{bool}}$ is a sentence, and $1 \leq i \leq |r|$.
Furthermore, let $s = \langle \mathit{db}, U, \mathit{sec}, T, V, \\ c \rangle$ be the $i$-th state of $r$.
Assume, for contradiction's sake, that $\mathit{secure}^{\mathit{data}}_{P,u}(r, i \attMod \phi)$ holds and $\mathit{secure}_{P,u}(r, i \attMod \phi)$ does not hold.
We denote, for brevity's sake, the fact that $\mathit{secure}_{P,u}(r, i \attMod \phi)$ does not hold as $\neg \mathit{secure}_{P,u}(r, i  \attMod \phi)$.
From $\neg \mathit{secure}_{P,u}(r, i \attMod \phi)$, it follows that there is a run $r' \in \mathit{traces}(L)$, whose last state is $s' = \langle \mathit{db}', U', \mathit{sec}', T', V', c' \rangle$,  such that $r^{i} \cong_{P,u} r'$ and  $[\phi]^{\mathit{db}} \neq [\phi]^{\mathit{db}'}$.
From the $(P,u)$-indistinguishability definition, it follows that $\mathit{pState}(\mathit{last}(r^{i}))$ and $\mathit{pState}(\mathit{last}(r'))$ are data indistinguishable according   to $M$ and $u$, i.e., $\mathit{pState}(\mathit{last}(r^{i})) \cong_{u,M}^{\mathit{data}} \mathit{pState}(\mathit{last}(r'))$.
From $\mathit{secure}^{\mathit{data}}_{P,u}(r, i \attMod \phi)$, it also follows that for all $s', s'' \in \llbracket \mathit{pState}(s)\rrbracket_{u,M}^\mathit{data}$, $[\phi]^{s'.\mathit{db}} = [\phi]^{s''.\mathit{db}}$.
From this and the fact that $\mathit{pState}(\mathit{last}(r^{i})) \cong_{u,M}^{\mathit{data}} \mathit{pState}(\mathit{last}(r'))$, it follows that $[\phi]^{\mathit{db}} = [\phi]^{\mathit{db}'}$, which contradicts  $[\phi]^{\mathit{db}} \neq [\phi]^{\mathit{db}'}$.
This completes the proof.
\end{proof}

We now show that the rewritings $\phi_{s,u}^{\top}$ and $\phi_{s,u}^{\bot}$ provide the desired properties.
First, \thref{theorem:rewriting:invariants} proves that the two rewriting satisfy the following invariants: ``if $\phi_{s,u}^{\top}$ holds in $s$, then also $\phi$ holds in $s$'' and ``if $\phi_{s,u}^{\bot}$ does not hold in $s$, then also $\phi$ does not hold in $s$''.
Afterwards, \thref{theorem:rewriting:secure} shows that both  $\phi_{s,u}^{\top}$ and $\phi_{s,u}^{\bot}$ are secure.
Then, \thref{theorem:rewriting:equivalent:modulo:indistinguishable:state} shows that $\phi_{s,u}^{\top}$ and $\phi_{s,u}^{\bot}$ are equivalent to $\phi_{s',u}^{\top}$ and $\phi_{s',u}^{\bot}$ for any two data indistinguishable $M$-state $s$ and $s'$.
Finally, \thref{theorem:rewriting:domain:independence} shows that both $\phi_{s,u}^{\top}$ and $\phi_{s,u}^{\bot}$ are domain-independent.

\begin{lemma}\thlabel{theorem:rewriting:invariants}
Let $M = \langle D, \Gamma\rangle$ be a system configuration, $s =\langle \mathit{db}, U, \mathit{sec}, T, V \rangle$ be a partial $M$-state, $u \in U$ be a user, and $\phi$ be a $D$-formula.
For all assignments $\nu$ over $\textbf{dom}$ that are well-formed for $\phi$, the following conditions hold:
\begin{compactitem}
\item if $[\phi_{s,u}^{\top} \circ \nu]^{\mathit{db}} = \top$, then $[\phi \circ \nu]^{\mathit{db}} = \top$, and
\item if $[\phi_{s,u}^{\bot} \circ \nu]^{\mathit{db}} = \bot$, then $[\phi \circ \nu]^{\mathit{db}} = \bot$.
\end{compactitem}
\end{lemma}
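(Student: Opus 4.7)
The proof proceeds by structural induction on $\phi$. By the definition of $\phi^{\mathit{rw}}_{s,u}$ via $\mathit{extVoc}_{V,D}$ and $\mathit{inline}_{V,D}$, it suffices to carry out the induction over the extended vocabulary $\mathit{extVocabulary}(D,V)$, since $\mathit{inline}_{V,D}$ preserves the semantics of every formula and the outer $\mathit{inline}_{V,D}$ in $\phi^{\mathit{rw}}_{s,u}$ can be pushed through once the invariant has been established on the extended vocabulary.

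For the base cases, the constants $\top$, $\bot$ and equalities $x = v$ are fixed by both rewritings, so the claim is immediate. For an atom $\phi = R(\overline{x})$, the $\top$-rewriting is $\bigvee_{S \in R^{\top}_{s,u}} S(\overline{x})$; if $[\phi^{\top}_{s,u} \circ \nu]^{\mathit{db}} = \top$, some witness $S \in R^{\top}_{s,u}$ satisfies $[S(\overline{x}) \circ \nu]^{\mathit{db}} = \top$, and since $S \in R^{\top}_s$ means $S \subseteq_M R$ by definition, the soundness of the containment approximation established in \thref{theorem:containment:sound} yields $[R(\overline{x}) \circ \nu]^{\mathit{db}} = \top$. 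The $\bot$-case is dual: if the conjunction $\bigwedge_{S \in R^{\bot}_{s,u}} S(\overline{x})$ evaluates to $\bot$, some conjunct $S(\overline{x})$ fails, and the containment $R \subseteq_M S$ transfers the failure back to $R(\overline{x})$.

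The propositional inductive cases are direct. Negation is handled via the duality built into the rewritings, $(\neg \psi)^{\top}_{s,u} = \neg \psi^{\bot}_{s,u}$ and $(\neg \psi)^{\bot}_{s,u} = \neg \psi^{\top}_{s,u}$, so each half follows from the opposite half of the induction hypothesis on $\psi$. For $\psi \wedge \gamma$ and $\psi \vee \gamma$, the rewritings commute with the connective, so splitting the conjunction (or disjunction) and applying the induction hypothesis to each subformula closes the case.

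The main obstacle is the quantifier cases, where the rewriting either mirrors the quantifier (when the syntactic predicate $\mathit{bound}$ succeeds) or collapses to the literal $\bot$ (when $\mathit{bound}$ fails). When $\mathit{bound}$ succeeds, I pick a witness $v_0 \in \mathbf{dom}$ for the quantified variable and apply the induction hypothesis to the subformula under the extended assignment $\nu \oplus [x \mapsto v_0]$; combining witness-level claims lifts back to the quantified formula. When $\mathit{bound}$ fails for the $\top$-direction, the hypothesis $[\bot \circ \nu]^{\mathit{db}} = \top$ is false and the implication is vacuous. The subtle subcases are the $\bot$-direction failures, which require unpacking the definition of $\mathit{bound}$: a failure means no authorized atom in $\psi$ binds $x$, so the sub-rewriting $\psi^{\bot}_{s,u}$ does not depend nontrivially on $x$ in a way that would distinguish the quantified formula from its body, allowing the invariant to be re-established by reducing the quantified case to a variable-free instance of the induction hypothesis.
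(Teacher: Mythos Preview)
Your structural induction, the base cases, the propositional connectives, and the quantifier cases when $\mathit{bound}$ succeeds all match the paper's proof. Like you, the paper picks a witness in the quantifier step and pushes the induction hypothesis through the extended assignment.

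The gap is in your handling of the quantifier when $\mathit{bound}$ fails in the $\bot$-direction. Your argument---that the failure means $\psi^{\bot}_{s,u}$ ``does not depend nontrivially on $x$,'' so one can reduce to a ``variable-free instance of the induction hypothesis''---does not go through. Take $\psi := (x = y)$ with $y$ a free variable: then $\mathit{bound}(\psi, s, u, x, \bot) = \bot$ (since $y$ is not a constant), yet $\psi^{\bot}_{s,u} = (x = y)$ still depends on $x$. Worse, under the definition as literally written (default $\bot$), the $\bot$-invariant is actually false in this subcase: $(\exists x.\, x = y)^{\bot}_{s,u} = \bot$ always evaluates to $\bot$, while $[\exists x.\, (x = y) \circ \nu]^{\mathit{db}} = \top$. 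No unpacking of $\mathit{bound}$ can rescue this.

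What is going on is a typo in the definition: the quantifier default should be $\neg v$ rather than the literal $\bot$, i.e.\ the $\bot$-rewriting of a quantifier should default to $\top$. The paper itself uses $\neg v$ elsewhere in the appendix (in the proof that $(\exists x.\, \phi)^v_{s,u} \equiv (\neg \forall x.\, \neg \phi)^v_{s,u}$, and in the size bound for the rewritten formula). With that correction, the $\bot$-failure hypothesis becomes $[\top]^{\mathit{db}} = \bot$, which is false, and the implication is vacuous---symmetric to your $\top$-failure argument. The paper's own proof of the present lemma simply omits the failure subcase and writes only $\phi^{\bot}_{s,u} := \exists x.\, \psi^{\bot}_{s,u}$, tacitly relying on this vacuity.
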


\begin{proof}
Let $M = \langle D, \Gamma\rangle$ be a system configuration, $s =\langle \mathit{db}, U, \mathit{sec}, T, V \rangle$ be a partial $M$-state, $u \in U$ be a user, and $\phi$ be a $D$-formula.
Furthermore, let $\nu$ be an assignment that is well-formed for $\phi$.
We prove our claim by induction on the structure of the formula $\phi$.

\smallskip
\noindent
{\bf Base Case} 
There are four cases:
\begin{compactenum}
\item $\phi := x = y$. 
In this case, $\phi_{s,u}^{\top} = \phi_{s,u}^{\bot} = \phi$.
From this, it follows that $[(x = v)_{s,u}^{\top} \circ \nu]^{\mathit{db}} = [(x = v) \circ \nu]^{\mathit{db}}$ and $[(x = v)_{s,u}^{\bot} \circ \nu]^{\mathit{db}} = [(x = v) \circ \nu]^{\mathit{db}}$.
Therefore, our claim follows trivially.

\item $\phi := \top$. The proof of this case is similar to that of $\phi := x = y$.

\item $\phi := \bot$. The proof of this case is similar to that of $\phi := x = y$.

\item $\phi := R(\overline{x})$.
Let $\overline{t}$ be the tuple $\nu (\overline{x})$.
Note that since $\nu$ is well-formed for $\phi$, $\overline{t}$ is well-defined.

Assume that $[\phi_{s,u}^{\top} \circ \nu]^{\mathit{db}} = \top$.
From this and $\phi_{s,u}^{\top} := \bigvee_{S \in R^{\top}_{s,u}} S(\overline{x})$, it follows that there is an $S \in R^{\top}_{s,u}$ such that $\overline{t} \in \mathit{db}(S)$.
Since $S \in R^{\top}_{s,u}$, it follows that $S \subseteq_{M} R$.
From $S \subseteq_{M} R$, $\overline{t} \in \mathit{db}(S)$, and \thref{theorem:containment:sound}, it follows that $\overline{t} \in \mathit{db}(R)$.
From this and the relational calculus semantics, it follows that $[\phi \circ \nu]^{\mathit{db}} = \top$.

Assume that $[\phi_{s,u}^{\bot} \circ \nu]^{\mathit{db}} = \bot$.
From this and $\phi_{s,u}^{\bot} := \bigwedge_{S \in R^{\bot}_{s,u}} S(\overline{x})$, it follows that there is an $S \in R^{\bot}_{s,u}$ such that $\overline{t} \not\in \mathit{db}(S)$.
Since $S \in R^{\bot}_{s,u}$, it follows that $R \subseteq_{M} S$.
From $R \subseteq_{M} S$, $\overline{t} \not\in \mathit{db}(S)$, and \thref{theorem:containment:sound}, it follows that $\overline{t} \not\in \mathit{db}(R)$.
From this and the relational calculus semantics, it follows that $[\phi \circ \nu]^{\mathit{db}} = \bot$.

\end{compactenum}
This completes the proof of the base case.

\smallskip
\noindent
{\bf Induction Step} 
Assume that our claim holds for all formulae whose length is less than $\phi$'s length.
We now show that our claim holds also for $\phi$.
There are a number of cases depending on $\phi$'s structure.
\begin{compactenum}
\item $\phi := \psi \wedge \gamma$.
Assume that  $[\phi_{s,u}^{\top} \circ \nu]^{\mathit{db}} = \top$.
From this and $\phi_{s,u}^{\top} := \psi_{s,u}^{\top} \wedge \gamma_{s,u}^{\top}$, it follows that $[\psi_{s,u}^{\top} \circ \nu]^{\mathit{db}} = \top$ and $[\gamma_{s,u}^{\top} \circ \nu]^{\mathit{db}} = \top$.
Since $\nu$ is well-formed for $\phi$, it is also well-formed for $\psi$ and $\gamma$ because $\mathit{free}(\psi) \subseteq  \mathit{free}(\phi)$ and $\mathit{free}(\gamma) \subseteq  \mathit{free}(\phi)$.
From $[\psi_{s,u}^{\top} \circ \nu]^{\mathit{db}} = \top$ and the induction hypothesis, it follows that $[\psi \circ \nu]^{\mathit{db}} = \top$.
From $[\gamma_{s,u}^{\top} \circ \nu]^{\mathit{db}} = \top$ and the induction hypothesis, it follows that $[\gamma \circ \nu]^{\mathit{db}} = \top$.
From $[\psi \circ \nu]^{\mathit{db}} = \top$, $[\gamma \circ \nu]^{\mathit{db}} = \top$,  $\phi := \psi \wedge \gamma$, and the relational calculus semantics, it follows that $[\phi \circ \nu]^{\mathit{db}} = \top$.

Assume that  $[\phi_{s,u}^{\bot} \circ \nu]^{\mathit{db}} = \bot$.
From this and $\phi_{s,u}^{\bot} := \psi_{s,u}^{\bot} \wedge \gamma_{s,u}^{\bot}$, there are two cases:
\begin{compactenum}
\item $[\psi_{s,u}^{\bot} \circ \nu]^{\mathit{db}} = \bot$.
From $[\psi_{s,u}^{\bot} \circ \nu]^{\mathit{db}} = \bot$ and the induction hypothesis, it follows that $[\psi \circ \nu]^{\mathit{db}} = \bot$.
From this,  $\phi := \psi \wedge \gamma$, and the relational calculus semantics, it follows that $[\phi \circ \nu]^{\mathit{db}} = \bot$

\item $[\gamma_{s,u}^{\bot} \circ \nu]^{\mathit{db}} = \bot$.
From $[\gamma_{s,u}^{\bot} \circ \nu]^{\mathit{db}} = \bot$ and the induction hypothesis, it follows that $[\gamma \circ \nu]^{\mathit{db}} = \bot$.
From this,  $\phi := \psi \wedge \gamma$, and the relational calculus semantics, it follows that $[\phi \circ \nu]^{\mathit{db}} = \bot$
\end{compactenum}

\item $\phi := \psi \vee \gamma$.
The proof of this case is similar to that of $\phi := \psi \wedge \gamma$.

\item $\phi := \neg \psi$.
Assume that  $[\phi_{s,u}^{\top} \circ \nu]^{\mathit{db}} = \top$.
From this and $\phi_{s,u}^{\top} := \neg \psi_{s,u}^{\bot}$, it follows that $[\psi_{s,u}^{\bot} \circ \nu]^{\mathit{db}} = \bot$.
From this and the induction hypothesis, it follows that $[\psi \circ \nu]^{\mathit{db}} = \bot$.
From this, $\phi := \neg \psi$, and the relational calculus semantics, it follows that $[\phi \circ \nu]^{\mathit{db}} = \top$.

Assume that  $[\phi_{s,u}^{\bot} \circ \nu]^{\mathit{db}} = \bot$.
From this and $\phi_{s,u}^{\bot} := \neg \psi_{s,u}^{\top}$, it follows that $[\psi_{s,u}^{\top} \circ \nu]^{\mathit{db}} = \top$.
From this and the induction hypothesis, it follows that $[\psi \circ \nu]^{\mathit{db}} = \top$.
From this, $\phi := \neg \psi$, and the relational calculus semantics, it follows that $[\phi \circ \nu]^{\mathit{db}} = \bot$.

\item $\phi := \exists x.\, \psi$.
Assume that $[\phi_{s,u}^{\top} \circ \nu]^{\mathit{db}} = \top$.
From this and $\phi_{s,u}^{\top} := \exists x.\, \psi_{s,u}^{\top}$, it follows that there is a $v \in \mathbf{dom}$ such that $[\psi_{s,u}^{\top} \circ \nu[x \mapsto v]]^{\mathit{db}} = \top$.
Note that since $v$ is well-formed for $\phi$, $\nu[x \mapsto v]$ is well-formed for $\psi$ because $\phi := \exists x.\, \psi$.
From this,  $[\psi_{s,u}^{\top} \circ \nu[x \mapsto v]]^{\mathit{db}} = \top$, and the induction hypothesis, it follows that  $[\psi \circ \nu[x \mapsto v]]^{\mathit{db}} = \top$.
From this, $\phi := \exists x.\, \psi$, and the relational calculus semantics, it follows that $[\phi \circ \nu]^{\mathit{db}} = \top$.

Assume that $[\phi_{s,u}^{\bot} \circ \nu]^{\mathit{db}} = \bot$.
From this and $\phi_{s,u}^{\bot} := \exists x.\, \psi_{s,u}^{\bot}$, it follows that for all $v \in \mathbf{dom}$, $[\psi_{s,u}^{\bot} \circ \nu[x \mapsto v]]^{\mathit{db}} = \bot$.
Note that since $v$ is well-formed for $\phi$, $\nu[x \mapsto v]$ is well-formed for $\psi$ because $\phi := \exists x.\, \psi$.
From this,  $[\psi_{s,u}^{\bot} \circ \nu[x \mapsto v]]^{\mathit{db}} = \bot$, and the induction hypothesis, it follows that for all $v \in \mathbf{dom}$, $[\psi \circ \nu[x \mapsto v]]^{\mathit{db}} = \bot$.
From this, $\phi := \exists x.\, \psi$, and the relational calculus semantics, it follows that $[\phi \circ \nu]^{\mathit{db}} = \bot$.

\item $\phi := \forall x.\, \psi$.
The proof of this case is similar to that of $\phi:= \exists x.\, \psi$.
\end{compactenum}
This completes the proof of the induction step.

This completes the proof of our claim.
\end{proof}

In \thref{theorem:rewriting:secure}, we prove that our rewritings are secure.

\begin{lemma}\thlabel{theorem:rewriting:secure}
Let $P = \langle M, f \rangle$ be an extended configuration, where $M = \langle D, \Gamma\rangle$ is a system configuration and $f$ is an $M$-\acf{}, $r \in \mathit{traces}(L)$ be a run, $\phi$ be a $\mathit{RC}$-formula, and $1 \leq i \leq r$.
Furthermore, let $s$ be the $i$-th state of $r$.
For all assignments $\nu$ over $\textbf{dom}$ that are well-formed for $\phi$, $\mathit{secure}^{\mathit{data}}_{P,u}(r, i \attMod \phi_{s,u}^{\top}\circ \nu)$, $\mathit{secure}^{\mathit{data}}_{P,u}(r, i \attMod \phi_{s,u}^{\bot} \circ \nu)$, and $\mathit{secure}^{\mathit{data}}_{P,u}(r, i \attMod \phi_{s,u}^{\mathit{rw}} \circ \nu)$ hold.
\end{lemma}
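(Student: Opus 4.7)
\begin{proofsketch}
The plan is to reduce security of the rewritings to the observation that they syntactically mention only predicates over which $u$ has authorization, so their truth values on a database depend only on data that is equal across the data-indistinguishability class $\llbracket \mathit{pState}(s)\rrbracket_{u,M}^{\mathit{data}}$. Concretely, I would first establish an auxiliary invariant by structural induction on $\phi$: every predicate symbol occurring in $\phi_{s,u}^{\top}$ or $\phi_{s,u}^{\bot}$ belongs to $\mathit{AUTH}_{s,u}^{*}$, together with the constants $\top$, $\bot$ and equalities between variables/constants. The base cases $\phi := R(\overline{x})$ hold because the disjuncts (resp.\ conjuncts) of $R(\overline{x})^{\top}_{s,u}$ and $R(\overline{x})^{\bot}_{s,u}$ range over $R^{\top}_{s} \cap \mathit{AUTH}_{s,u}^{*}$ and $R^{\bot}_{s} \cap \mathit{AUTH}_{s,u}^{*}$ respectively, while $\phi := x = y$, $\top$, $\bot$ are unchanged. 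The inductive cases for $\neg$, $\wedge$, $\vee$ are immediate; for quantifiers, if $\mathit{bound}$ returns $\bot$ the rewriting collapses to $\bot$ (trivially satisfying the invariant), otherwise the invariant follows from the induction hypothesis on the matrix.

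Next I would show the main security claim for $\phi_{s,u}^{\top}$ and $\phi_{s,u}^{\bot}$ by a second structural induction that uses the invariant. Fix any two partial states $s_{1}, s_{2} \in \llbracket \mathit{pState}(s)\rrbracket_{u,M}^{\mathit{data}}$ and a well-formed assignment $\nu$; I will prove $[\phi_{s,u}^{\top}\circ\nu]^{s_{1}.\mathit{db}} = [\phi_{s,u}^{\top}\circ\nu]^{s_{2}.\mathit{db}}$ and the analogous identity for $\phi_{s,u}^{\bot}$. The only nontrivial base case is $\phi := R(\overline{x})$: each disjunct/conjunct $S(\overline{x})$ with $S \in \mathit{AUTH}_{s,u}^{*}$ is a projection of a table or view that $u$ is authorized to read in $s$; by the definition of $\cong_{u,M}^{\mathit{data}}$ the materialization of every authorized table/view is identical in $s_{1}$ and $s_{2}$, hence so is every projection of it, which gives the required equality. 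The inductive cases propagate the equality through $\neg$, $\wedge$, $\vee$, and the two quantifier forms (the collapsed $\bot$ branch being trivial). From this equality and the definition of $\mathit{secure}^{\mathit{data}}_{P,u}$, we obtain $\mathit{secure}^{\mathit{data}}_{P,u}(r,i \attMod \phi_{s,u}^{\top}\circ\nu)$ and $\mathit{secure}^{\mathit{data}}_{P,u}(r,i \attMod \phi_{s,u}^{\bot}\circ\nu)$.

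For $\phi_{s,u}^{\mathit{rw}} \circ \nu$, recall $\phi_{s,u}^{\mathit{rw}} = \mathit{inline}_{V,D}(\neg \psi^{\top}_{s,u} \wedge \psi^{\bot}_{s,u})$ with $\psi = \mathit{extVoc}_{V,D}(\phi)$. Since $\mathit{extVoc}$ and $\mathit{inline}$ merely rewrite using semantically equivalent definitions of views and projections, they preserve the truth value on every database, and in particular on both $s_{1}.\mathit{db}$ and $s_{2}.\mathit{db}$. Hence it suffices to show that $\neg \psi^{\top}_{s,u} \wedge \psi^{\bot}_{s,u}$ yields a secure judgment, which follows from the security of $\psi^{\top}_{s,u}\circ\nu$ and $\psi^{\bot}_{s,u}\circ\nu$ already established, together with the elementary fact that negation and conjunction preserve $\mathit{secure}^{\mathit{data}}_{P,u}$ (since pointwise equality of truth values is preserved by Boolean connectives).

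The main obstacle I anticipate is a careful handling of the quantifier cases, in particular ensuring that substituting an arbitrary value from $\mathbf{dom}$ into the matrix still yields a well-formed assignment and that the inductive hypothesis applies; here the asymmetric interplay of $\phi^{\top}_{s,u}$ with $\phi^{\bot}_{s,u}$ under negation (via the swap in $(\neg\psi)^{\top}_{s,u} := \neg \psi^{\bot}_{s,u}$) must be maintained by proving both invariants simultaneously. All other steps are purely syntactic once the authorization invariant is in place.
\end{proofsketch}
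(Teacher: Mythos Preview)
Your proposal is correct and follows essentially the same line as the paper's proof: both rely on the key observation that every predicate appearing in $\phi_{s,u}^{\top}$ and $\phi_{s,u}^{\bot}$ lies in $\mathit{AUTH}_{s,u}^{*}$, and hence its materialization coincides on any two states in $\llbracket \mathit{pState}(s)\rrbracket_{u,M}^{\mathit{data}}$. The paper collapses your two inductions into a single structural induction on $\phi$, arguing each case by contradiction (if the rewriting were not data-secure, one could extract a witness state pair on which some sub-rewriting differs, contradicting the induction hypothesis); your separate authorization-invariant induction is not wrong, but it is redundant with the base-case reasoning you already give in your second induction, so you can safely merge them.
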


\begin{proof}
The security of $r, i \attMod \phi_{s,u}^{\mathit{rw}}$ follows trivially from that of $r, i \attMod \phi_{s,u}^{\top}$ and $r, i \attMod \phi_{s,u}^{\bot}$.
Therefore, in the following we prove just that  $\mathit{secure}^{\mathit{data}}_{P,u}(r, i \attMod \phi_{s,u}^{\top}\circ \nu)$ and $\mathit{secure}^{\mathit{data}}_{P,u}(r, i \attMod \phi_{s,u}^{\bot} \circ \nu)$ hold.
Let $M = \langle D, \Gamma\rangle$ be a system configuration, $s =\langle \mathit{db}, U, \mathit{sec}, T, V \rangle$ be a partial $M$-state, $u \in U$ be a user, and $\phi$ be a $D$-formula.
Furthermore, let $\nu$ be an assignment that is well-formed for $\phi$.
We prove our claim by induction on the structure of the formula $\phi$.

\smallskip
\noindent
{\bf Base Case} 
There are four cases:
\begin{compactenum}
\item $\phi := x = y$. 
The claim holds trivially.
Indeed, $\phi_{s,u}^{\top}\circ \nu$ and $\phi_{s,u}^{\bot}\circ \nu$ are always equivalent either to $\top$ or to $\bot$.
Since for all $s', s'' \in \llbracket \mathit{pState}(\mathit{last}(r^{i}))\rrbracket_{u,M}^\mathit{data}$, $[\top]^{s'.\mathit{db}} = [\top]^{s''.\mathit{db}}$ and $[\bot]^{s'.\mathit{db}} = [\bot]^{s''.\mathit{db}}$, it follows that both $\mathit{secure}^{\mathit{data}}_{P,u}(r, i \attMod \phi_{s,u}^{\top}\circ \nu)$ and $\mathit{secure}^{\mathit{data}}_{P,u}(r, i \attMod \phi_{s,u}^{\bot} \circ \nu)$ hold.

\item $\phi := \top$. The proof of this case is similar to that of $\phi := x = y$.

\item $\phi := \bot$. The proof of this case is similar to that of $\phi := x = y$.

\item $\phi := R(\overline{x})$.
Assume, for contradiction's sake, that $\mathit{secure}^{\mathit{data}}_{P,u}(r, i \attMod \phi_{s,u}^{\top}\circ \nu)$ does not hold.
From this and $\mathit{secure}^{\mathit{data}}_{P,u}$'s definition, it follows that there are two $M$-partial states $s' = \langle \mathit{db}', U, \mathit{sec}, T, V \rangle$ and $s'' = \langle \mathit{db}'', U, \\ \mathit{sec}, T, V \rangle$ in  $\llbracket \mathit{pState}(\mathit{last}(r^{i}))\rrbracket_{u,M}^\mathit{data}$ such that $[\phi_{s,u}^{\top}\circ \nu]^{\mathit{db}'} \\ \neq [\phi_{s,u}^{\top}\circ \nu]^{\mathit{db}''}$.
Note that this rule out the cases in which $R^{v}_{s,u} = \emptyset$ for any $v \in \{\top,\bot\}$.
We assume without loss of generality that $[\phi_{s,u}^{\top}\circ \nu]^{\mathit{db}'} = \top$ and $[\phi_{s,u}^{\top}\circ \nu]^{\mathit{db}''} = \bot$.
From this and $\phi_{s,u}^{\top} := \bigvee_{S \in R^{\top}_{s,u}} S(\overline{x})$, it follows that there is an predicate symbol $S$ in the extended vocabulary such that $\nu(\overline{x}) \in \mathit{db}'(S)$ and $\nu(\overline{x}) \not\in \mathit{db}''(S)$.
There are two cases:
\begin{compactitem}
\item $S$ is a table in $D$ or a view in $V$.
Since $S \in R^{\top}_{s,u}$, it follows that $\langle \oplus, \mathtt{SELECT}, S \rangle \in \mathit{permissions}(\mathit{last}(r^{i}), \\ u)$.
Note that $\mathit{permissions}(s',u) = \mathit{permissions}(s'', \\ u) = \mathit{permissions}(\mathit{last}(r^{i}),u)$ because all the states are in the same equivalence class.
From $s' \cong_{u,M}^{\mathit{data}} s''$, $\langle \oplus, \mathtt{SELECT}, S \rangle \in \mathit{permissions}(s', u)$, and the definition of data indistinguishability, it follows that $\mathit{db}'(S) = \mathit{db}''(S)$.
From this, it follows that $\nu(\overline{x}) \in \mathit{db}'(S)$ iff $\nu(\overline{x}) \in \mathit{db}''(S)$, which contradicts $\nu(\overline{x}) \in \mathit{db}'(S)$ and $\nu(\overline{x}) \not\in \mathit{db}''(S)$.

\item $S$ is a projection of $O$, which is either a table  in $D$ or a view in $V$.
From $S \in R^{\top}_{s,u}$ and $R^{\top}_{s,u}$'s definition, it follows that $\langle \oplus, \mathtt{SELECT}, O \rangle \in \mathit{permissions}\\ (\mathit{last}(r^{i}), u)$.
From $s' \cong_{u,M}^{\mathit{data}} s''$, $\langle \oplus, \mathtt{SELECT}, O \rangle \in \mathit{permissions} (s', u)$, and the definition of data indistinguishability, it follows that $\mathit{db}'(O) = \mathit{db}''(O)$.
From this and the definition of $S$, it also follows that  $\mathit{db}'(S) = \mathit{db}''(S)$\footnote{With a slight abuse of notation, we consider $S$ as a view.}.
From this, it follows that $\nu(\overline{x}) \in \mathit{db}'(S)$ iff $\nu(\overline{x}) \in \mathit{db}''(S)$, which contradicts $\nu(\overline{x}) \in \mathit{db}'(S)$ and $\nu(\overline{x}) \not\in \mathit{db}''(S)$.
\end{compactitem}
The proof of $\mathit{secure}^{\mathit{data}}_{P,u}(r, i \attMod \phi_{s,u}^{\bot}\circ \nu)$ is analogous. 
 
\end{compactenum}
This completes the proof of the base case.

\smallskip
\noindent
{\bf Induction Step} 
Assume that our claim holds for all formulae whose length is less than $\phi$'s length.
We now show that our claim holds also for $\phi$.
There are a number of cases depending on $\phi$'s structure.
\begin{compactenum}
\item $\phi := \psi \wedge \gamma$.
Assume, for contradiction's sake, that $\mathit{secure}^{\mathit{data}}_{P,u}(r, i \attMod \phi_{s,u}^{\top}\circ \nu)$ does not hold.
From this and $\mathit{secure}^{\mathit{data}}_{P,u}$'s definition, it follows that there are two $M$-partial states $s' = \langle \mathit{db}', U, \mathit{sec}, T, V \rangle$ and $s'' = \langle \mathit{db}'', U, \\ \mathit{sec}, T, V \rangle$ in  $\llbracket \mathit{pState}(\mathit{last}(r^{i}))\rrbracket_{u,M}^\mathit{data}$ such that $[\phi_{s,u}^{\top}\circ \nu]^{\mathit{db}'} \\ \neq [\phi_{s,u}^{\top}\circ \nu]^{\mathit{db}''}$.
We assume, without loss of generality, that $[\phi_{s,u}^{\top}\circ \nu]^{\mathit{db}'} = \top$ and $[\phi_{s,u}^{\top}\circ \nu]^{\mathit{db}''} = \bot$.
From this and $\phi_{s,u}^{\top} = \psi_{s,u}^{\top} \wedge \gamma_{s,u}^{\top}$, it follows that either $[\psi_{s,u}^{\top}\circ \nu]^{\mathit{db}'} = \top$ and $[\psi_{s,u}^{\top}\circ \nu]^{\mathit{db}''} = \bot$ or  $[\gamma_{s,u}^{\top}\circ \nu]^{\mathit{db}'} = \top$ and $[\gamma_{s,u}^{\top}\circ \nu]^{\mathit{db}''} = \bot$.
We assume, without loss of generality, that $[\psi_{s,u}^{\top}\circ \nu]^{\mathit{db}'} = \top$ and $[\psi_{s,u}^{\top}\circ \nu]^{\mathit{db}''} = \bot$.
From this, it follows that $\mathit{secure}^{\mathit{data}}_{P,u}(r, i \attMod \psi_{s,u}^{\top}\circ \nu)$ does not hold. 
From the induction hypothesis, it follows that $\mathit{secure}^{\mathit{data}}_{P,u}(r, i \attMod \psi_{s,u}^{\top}\circ \nu)$ holds leading to a contradiction.

The proof of $\mathit{secure}^{\mathit{data}}_{P,u}(r, i \attMod \phi_{s,u}^{\bot}\circ \nu)$ is analogous. 

\item $\phi := \psi \vee \gamma$.
The proof of this case is similar to that of $\phi := \psi \wedge \gamma$.

\item $\phi := \neg \psi$.
Assume, for contradiction's sake, that \\ $\mathit{secure}^{\mathit{data}}_{P,u}(r, i \attMod \phi_{s,u}^{\top}\circ \nu)$ does not hold.
From this and $\mathit{secure}^{\mathit{data}}_{P,u}$'s definition, it follows that there are two $M$-partial states $s' = \langle \mathit{db}', U, \mathit{sec}, T, V \rangle$ and $s'' = \langle \mathit{db}'', U, \\ \mathit{sec}, T, V \rangle$ in  $\llbracket \mathit{pState}(\mathit{last}(r^{i}))\rrbracket_{u,M}^\mathit{data}$ such that $[\phi_{s,u}^{\top}\circ \nu]^{\mathit{db}'} \\ \neq [\phi_{s,u}^{\top}\circ \nu]^{\mathit{db}''}$.
We assume, without loss of generality, that $[\phi_{s,u}^{\top}\circ \nu]^{\mathit{db}'} = \top$ and $[\phi_{s,u}^{\top}\circ \nu]^{\mathit{db}''} = \bot$.
From this and $\phi_{s,u}^{\top} = \neg \psi_{s,u}^{\bot}$, it follows that $[\psi_{s,u}^{\bot}\circ \nu]^{\mathit{db}'} = \bot$ and $[\psi_{s,u}^{\bot}\circ \nu]^{\mathit{db}''} = \top$.
From this, it follows that $\mathit{secure}^{\mathit{data}}_{P,u}(r, i \attMod \psi_{s,u}^{\bot}\circ \nu)$ does not hold. 
From the induction hypothesis and $\phi := \neg \psi$, it follows that $\mathit{secure}^{\mathit{data}}_{P,u}(r, i \attMod \psi_{s,u}^{\bot}\circ \nu)$ holds leading to a contradiction.

The proof of $\mathit{secure}^{\mathit{data}}_{P,u}(r, i \attMod \phi_{s,u}^{\bot}\circ \nu)$ is analogous.

\item $\phi := \exists x.\, \psi$.
Assume, for contradiction's sake, that \\ $\mathit{secure}^{\mathit{data}}_{P,u}(r, i \attMod \phi_{s,u}^{\top}\circ \nu)$ does not hold.
From this and $\mathit{secure}^{\mathit{data}}_{P,u}$'s definition, it follows that there are two $M$-partial states $s' = \langle \mathit{db}', U, \mathit{sec}, T, V \rangle$ and $s'' = \langle \mathit{db}'', U, \\ \mathit{sec}, T, V \rangle$ in  $\llbracket \mathit{pState}(\mathit{last}(r^{i}))\rrbracket_{u,M}^\mathit{data}$ such that $[\phi_{s,u}^{\top}\circ \nu]^{\mathit{db}'} \\ \neq [\phi_{s,u}^{\top}\circ \nu]^{\mathit{db}''}$.
We assume, without loss of generality, that $[\phi_{s,u}^{\top}\circ \nu]^{\mathit{db}'} = \top$ and $[\phi_{s,u}^{\top}\circ \nu]^{\mathit{db}''} = \bot$.
From this and $\phi_{s,u}^{\top} = \exists x.\, \psi_{s,u}^{\top}$, it follows that there is a $v' \in \mathbf{dom}$ such that $[\psi_{s,u}^{\top}\circ \nu[x \mapsto v']]^{\mathit{db}'} = \top$ and there is no  $v'' \in \mathbf{dom}$ such that $[\psi_{s,u}^{\top}\circ \nu[x \mapsto v'']]^{\mathit{db}''} = \top$.
Therefore, $[\psi_{s,u}^{\top}\circ \nu[x \mapsto v']]^{\mathit{db}'} = \top$ and $[\psi_{s,u}^{\top}\circ \nu[x \mapsto v']]^{\mathit{db}''} = \bot$.
Note that $\nu[x \mapsto v']$ is well-formed for $\psi_{s,u}^{\top}$.
From this, it follows that $\mathit{secure}^{\mathit{data}}_{P,u}(r, i \attMod \psi_{s,u}^{\top}\circ \nu[x \mapsto v'])$ does not hold. 
However, from the fact that $\nu[x \mapsto v']$ is well-formed for $\psi_{s,u}^{\top}$ and the induction hypothesis, it follows that $\mathit{secure}^{\mathit{data}}_{P,u}(r, i \attMod \psi_{s,u}^{\top}\circ \nu [x \mapsto v'])$ holds leading to a contradiction.

The proof of $\mathit{secure}^{\mathit{data}}_{P,u}(r, i \attMod \phi_{s,u}^{\bot}\circ \nu)$ is analogous. 

\item $\phi := \forall x.\, \psi$.
The proof of this case is similar to that of $\phi:= \exists x.\, \psi$.
\end{compactenum}
This completes the proof of the induction step.

This completes the proof of our claim.
\end{proof}

\begin{proposition} \thlabel{theorem:rewriting:bound:equivalent:modulo:indistinguishable:state}
Let $M = \langle D, \Gamma\rangle$ be a system configuration, $s =\langle \mathit{db}, U, \mathit{sec}, T, V \rangle$ and $s'  =\langle \mathit{db}', U', \mathit{sec}', T',V' \rangle $ be two partial $M$-states, $u \in U$ be a user, $v \in \{\top,\bot\}$, and $\phi$ be a $D$-formula.
If $s \cong_{u,M}^{\mathit{data}} s'$, then $\mathit{bound}(\phi,s,u,x,v) = \mathit{bound}(\phi,s',u,x,v)$.
\end{proposition}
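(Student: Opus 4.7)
The plan is to proceed by structural induction on $\phi$, with the key observation being that the only part of $\mathit{bound}$ that consults the state at all is the non-emptiness check of $R^{v}_{s,u}$ in the predicate case, and that this set is fully determined by the components of a partial state that data-indistinguishability preserves.

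First I would unpack $s \cong_{u,M}^{\mathit{data}} s'$: by the definition in Appendix~\ref{app:indistinguishability}, this gives $U = U'$, $\mathit{sec} = \mathit{sec}'$, $T = T'$, and $V = V'$ (the databases $\mathit{db}, \mathit{db}'$ may still differ on unauthorized tables). From these equalities it follows that $\mathit{extVocabulary}(D,V) = \mathit{extVocabulary}(D,V')$, that the containment relation $\subseteq_{M}$ is the same (it depends only on $M$, not on the state), and that $\mathit{permissions}(s,u) = \mathit{permissions}(s',u)$, hence $\mathit{AUTH}_{s,u} = \mathit{AUTH}_{s',u}$ and therefore $\mathit{AUTH}_{s,u}^{*} = \mathit{AUTH}_{s',u}^{*}$. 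Since $R^{v}_{s,u} = R^{v}_{s} \cap \mathit{AUTH}_{s,u}^{*}$ and $R^{v}_{s}$ is purely a containment-based construction over the extended vocabulary, we conclude $R^{v}_{s,u} = R^{v}_{s',u}$ for every predicate symbol $R$ in the extended vocabulary and every $v \in \{\top,\bot\}$.

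The induction then proceeds case-by-case on the structure of $\phi$. In the predicate base case $\phi := R(\overline{y})$, whether $x$ occurs in $\overline{y}$ is a syntactic condition independent of the state, and the non-emptiness of $R^{v}_{s,u}$ agrees with that of $R^{v}_{s',u}$ by the observation above; hence $\mathit{bound}(R(\overline{y}),s,u,x,v) = \mathit{bound}(R(\overline{y}),s',u,x,v)$. The remaining base cases ($y = z$, $\top$, $\bot$) do not mention the state at all. For the inductive cases ($\neg$, $\wedge$, $\vee$, $\exists$, $\forall$), $\mathit{bound}$ is defined by composing recursive calls on strict subformulas with possibly flipped polarity or a fresh bound variable, all of which yield the same results on $s$ and $s'$ by the inductive hypothesis; the purely syntactic guard $x \neq y$ in the quantifier cases also does not depend on the state.

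There is essentially no hard step here: the proposition is really a bookkeeping lemma recording that $\mathit{bound}$ is a syntactic computation parameterized only by the authorized schema information, and that this information is exactly what data-indistinguishability preserves. The only point requiring a little care is making sure that the containment sets $R^{v}_{s}$ are genuinely state-independent, which follows because $\subseteq_{M}$ in Figure~\ref{figure:containment} refers only to $D$, $\Gamma$, and $V$ (which are shared between $s$ and $s'$), not to the database contents $\mathit{db}$ or $\mathit{db}'$.
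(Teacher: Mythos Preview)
Your proposal is correct and follows essentially the same approach as the paper's own proof: structural induction on $\phi$, with the predicate case handled by observing that $R^{v}_{s,u} = R^{v}_{s} \cap \mathit{AUTH}_{s,u}^{*}$ depends only on $D$, $V$, and $\mathit{sec}$, all of which are preserved by data-indistinguishability, and the remaining cases being either state-independent or handled directly by the inductive hypothesis. If anything, your write-up is slightly more explicit than the paper's in spelling out why the containment sets $R^{v}_{s}$ are state-independent.
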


\begin{proof}
Let $M = \langle D, \Gamma\rangle$ be a system configuration, $s =\langle \mathit{db}, U, \mathit{sec}, T, V \rangle$ and $s'  =\langle \mathit{db}', U', \mathit{sec}', T', V' \rangle $ be two partial $M$-states, $u \in U$ be a user,  $v \in \{\top,\bot\}$, and $\phi$ be a $D$-formula.
We prove our claim by induction on the structure of the formula $\phi$.

\smallskip
\noindent
{\bf Base Case} 
There are four cases:
\begin{compactenum}
\item $\phi := y = z$. 
The result of $\mathit{bound}(\phi,s,u,x,v)$ and $\mathit{bound}(\phi,s',u,x,v)$ does not depend on $s$.
Therefore, $\mathit{bound}(\phi,s,u,x,v) = \mathit{bound}(\phi,s',u,x,v)$.  

\item $\phi := \top$. $\mathit{bound}(\phi,s,u,x,v) = \mathit{bound}(\phi,s',u,x,v) = \bot$.

\item $\phi := \bot$. $\mathit{bound}(\phi,s,u,x,v) = \mathit{bound}(\phi,s',u,x,v) = \bot$.

\item $\phi := R(\overline{x})$.
The result of $\mathit{bound}(\phi,s,u,x,v)$ and $\mathit{bound} \\ (\phi,s',u,x,v)$ depend only on the sets $R^{v}_{s,u}$ and $R^{v}_{s',u}$, which in turn depend on the content of the sets $R^{v}_{s}$, $R^{v}_{s'}$, $\mathit{AUTH}_{s,u}^{*}$, and $\mathit{AUTH}_{s',u}^{*}$.
Assume that $s \cong_{u,M}^{\mathit{data}} s'$.
From this, it follows that  $R^{v}_{s} = R^{v}_{s'}$ (because $D$ is the same and $V = V'$) and $\mathit{AUTH}_{s,u}^{*} = \mathit{AUTH}_{s',u}^{*}$ (because $\mathit{sec} =\mathit{sec}'$).
From this, it follows that $\mathit{bound}(\phi,s, \\ u,x,v) = \mathit{bound}(\phi,s',u,x,v)$.

\end{compactenum}
This completes the proof of the base case.

\smallskip
\noindent
{\bf Induction Step} 
Assume that our claim holds for all formulae whose length is less than $\phi$.
We now show that our claim holds also for $\phi$.
There are a number of cases depending on $\phi$'s structure.
\begin{compactenum}
\item $\phi := \psi \wedge \gamma$.
Assume that  $s \cong_{u,M}^{\mathit{data}} s'$.
From this and the induction hypothesis, it follows that  $\mathit{bound}(\psi,s,u,x,v) \\ = \mathit{bound}(\psi,s',u,x,v)$ and $\mathit{bound}(\gamma,s,u,x,v) = \mathit{bound} \\ (\gamma,s',u,x,v)$.
From this and $\mathit{bound}(\phi, s, u, x, v) := \\ \mathit{bound}(\psi, s, u, x, v) \vee \mathit{bound}(\gamma,  s, u, x, v)$, it follows that $\mathit{bound}(\phi, s, u, x, v) = \mathit{bound}(\phi, s', u, x, v)$.

\item $\phi := \psi \vee \gamma$.
The proof of this case is similar to that of $\phi := \psi \wedge \gamma$.

\item $\phi := \neg \psi$.
Assume that  $s \cong_{u,M}^{\mathit{data}} s'$.
From this and the induction hypothesis, it follows that  $\mathit{bound}(\psi,s,u,x,v) \\ = \mathit{bound}(\psi,s',u,x,v)$.
From this,  $\mathit{bound}(\neg \psi, s,u,x,v) \\ = \mathit{bound}(\psi,s,u,x, \neg v)$, and   $\mathit{bound}(\neg \psi, s',u,x,v) = \\ \mathit{bound}(\psi,s',u,x, \neg v)$, it follows that  $\mathit{bound}(\phi, s, u, x, v) \\ = \mathit{bound}(\phi, s', u, x, v)$.

\item $\phi := \exists y.\, \psi$.
Assume that  $s \cong_{u,M}^{\mathit{data}} s'$.
There are two cases:
\begin{compactenum}
\item $x = y$. In this case, the proof is trivial as $\mathit{bound}(\phi,s, \\ u,x,v) = \mathit{bound}(\phi,s',u,x,v) = \bot$. 

\item $x \neq y$. In this case, $\mathit{bound}(\phi,s,u,x,v) = \mathit{bound}(\psi, \\ s,u,x,v) \wedge \mathit{bound}(\psi,s,u,y,v)$ and $\mathit{bound}(\phi,s',u,x, \\ v) = \mathit{bound}(\psi,s',u,x,v) \wedge \mathit{bound}(\psi,s',u,y,v)$.
From $s \cong_{u,M}^{\mathit{data}} s'$ and the induction hypothesis, it follows that $\mathit{bound}(\psi,s,u,x,v) = \mathit{bound}(\psi,s',u,x,v)$ and $\mathit{bound}(\psi,s,u,y,v) = \mathit{bound}(\psi,s',u,y,v)$.
From this,  $\mathit{bound}(\phi,s,u,x,v) = \mathit{bound}(\psi,s,u,x,v) \wedge \mathit{bound}(\psi, \\ s,u,y,v)$, and $\mathit{bound}(\phi,s',u,x,v) = \mathit{bound}(\psi,s',u, \\ x,v) \wedge \mathit{bound}(\psi,s',u,y,v)$, it follows that  $\mathit{bound}(\phi,s, \\ u,x,v) = \mathit{bound}(\phi,s',u,x,v)$.
\end{compactenum}

\item $\phi := \forall x.\, \psi$.
The proof of this case is similar to that of $\phi:= \exists x.\, \psi$.
\end{compactenum}
This completes the proof of the induction step.

This completes the proof of our claim.
\end{proof}

\begin{lemma} \thlabel{theorem:rewriting:equivalent:modulo:indistinguishable:state}
Let $M = \langle D, \Gamma\rangle$ be a system configuration, $s =\langle \mathit{db}, U, \mathit{sec}, T, V \rangle$ and $s'  =\langle \mathit{db}', U', \mathit{sec}', T',V' \rangle $ be two partial $M$-states, $u \in U$ be a user, and $\phi$ be a $D$-formula.
If $s \cong_{u,M}^{\mathit{data}} s'$, then $\phi_{s,u}^{\top} = \phi_{s',u}^{\top}$, $\phi_{s,u}^{\bot} = \phi_{s',u}^{\bot}$, and $\phi^{\mathit{rw}}_{s,u} = \phi^{\mathit{rw}}_{s',u}$.
\end{lemma}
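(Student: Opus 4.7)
\begin{proofsketch}
The plan is to prove the three equalities $\phi_{s,u}^{\top} = \phi_{s',u}^{\top}$, $\phi_{s,u}^{\bot} = \phi_{s',u}^{\bot}$, and $\phi^{\mathit{rw}}_{s,u} = \phi^{\mathit{rw}}_{s',u}$ by structural induction on $\phi$, proving the first two equalities simultaneously and then deriving the third from them. The key observation driving the whole argument is that data-indistinguishability $s \cong_{u,M}^{\mathit{data}} s'$ forces $U = U'$, $\mathit{sec} = \mathit{sec}'$, $T = T'$, and $V = V'$, so the only difference between $s$ and $s'$ is in the database state $\mathit{db}$ vs.\ $\mathit{db}'$, while the rewritings $\phi_{s,u}^{\top}$ and $\phi_{s,u}^{\bot}$ are syntactic constructions that depend only on the schema, the views, the security policy, and the user.

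For the base case, I would handle the sub-cases $\phi := (y = z)$, $\phi := \top$, and $\phi := \bot$ trivially, since their rewritings are state-independent by definition. The only real work is the atomic case $\phi := R(\overline{x})$, where I need $R^{\top}_{s,u} = R^{\top}_{s',u}$ and $R^{\bot}_{s,u} = R^{\bot}_{s',u}$. This reduces to showing $R^{\top}_{s} = R^{\top}_{s'}$, $R^{\bot}_{s} = R^{\bot}_{s'}$, and $\mathit{AUTH}_{s,u}^{*} = \mathit{AUTH}_{s',u}^{*}$. The first two hold because the containment relation $\subseteq_{M}$ and the extended vocabulary $\mathit{extVocabulary}(D,V)$ depend only on $M$ and on $V = V'$; the third holds because $\mathit{permissions}(s,u)$ is computed from $U$ and $\mathit{sec}$ alone, both of which are fixed across indistinguishable states.

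For the induction step, the Boolean cases $\phi := \psi \wedge \gamma$, $\phi := \psi \vee \gamma$, and $\phi := \neg \psi$ follow directly from the induction hypothesis applied to the sub-formulae (with the roles of $\top$ and $\bot$ swapped appropriately under negation). The only nontrivial step is the quantifier case $\phi := \exists x.\, \psi$ (and analogously $\forall x.\, \psi$), where the rewriting has a case split: $(\exists x.\, \psi)^{v}_{s,u}$ is either $\exists x.\, \psi^{v}_{s,u}$ or $\bot$, depending on the value of $\mathit{bound}(\psi, s, u, x, v)$. To handle this I would invoke Proposition \ref{theorem:rewriting:bound:equivalent:modulo:indistinguishable:state}, which has already been established and gives $\mathit{bound}(\psi, s, u, x, v) = \mathit{bound}(\psi, s', u, x, v)$ whenever $s \cong_{u,M}^{\mathit{data}} s'$. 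Together with the induction hypothesis $\psi^{v}_{s,u} = \psi^{v}_{s',u}$, this guarantees that both the branch taken and the body are identical in $s$ and $s'$.

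Finally, the equality $\phi^{\mathit{rw}}_{s,u} = \phi^{\mathit{rw}}_{s',u}$ follows immediately from the definition $\phi^{\mathit{rw}}_{s,u} = \mathit{inline}_{V,D}(\neg \psi^{\top}_{s,u} \wedge \psi^{\bot}_{s,u})$ with $\psi = \mathit{extVoc}_{V,D}(\phi)$: the functions $\mathit{extVoc}_{V,D}$ and $\mathit{inline}_{V,D}$ depend only on $D$ and $V = V'$, and the inner rewritings $\psi^{\top}_{s,u}, \psi^{\bot}_{s,u}$ coincide with their counterparts at $s'$ by the already-established equalities. I do not expect any real obstacles: the heavy lifting has been absorbed by Proposition \ref{theorem:rewriting:bound:equivalent:modulo:indistinguishable:state}, and what remains is a careful but routine structural induction.
\end{proofsketch}
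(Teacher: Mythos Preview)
Your proposal is correct and follows essentially the same approach as the paper: a structural induction on $\phi$, with the atomic case $R(\overline{x})$ reduced to the equalities $R^{\top}_{s} = R^{\top}_{s'}$, $R^{\bot}_{s} = R^{\bot}_{s'}$, and $\mathit{AUTH}_{s,u}^{*} = \mathit{AUTH}_{s',u}^{*}$ (which hold because $V=V'$ and $\mathit{sec}=\mathit{sec}'$), the quantifier case handled via Proposition~\ref{theorem:rewriting:bound:equivalent:modulo:indistinguishable:state}, and the final equality for $\phi^{\mathit{rw}}_{s,u}$ obtained directly from the definition.
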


\begin{proof}
Let $M = \langle D, \Gamma\rangle$ be a system configuration, $s =\langle \mathit{db}, U, \mathit{sec}, T, V \rangle$ and $s'  =\langle \mathit{db}', U', \mathit{sec}', T', V' \rangle $ be two partial $M$-states, $u \in U$ be a user, and $\phi$ be a $D$-formula.
We prove our claim by induction on the structure of the formula $\phi$.

\smallskip
\noindent
{\bf Base Case} 
There are four cases:
\begin{compactenum}
\item $\phi := x = y$. 
The claim holds trivially.
Indeed, $\phi_{s,u}^{\top} = \phi_{s,u}^{\bot} = \phi$.

\item $\phi := \top$. The proof of this case is similar to that of $\phi := x = y$.

\item $\phi := \bot$. The proof of this case is similar to that of $\phi := x = y$.

\item $\phi := R(\overline{x})$.
The formulae $\phi_{s,u}^{\top}$ and $\phi_{s',u}^{\top}$ depend only on the sets $R^{\top}_{s,u}$ and $R^{\top}_{s',u}$, which in turn depends on $R^{\top}_{s}$, $R^{\top}_{s'}$, $\mathit{AUTH}_{s,u}^{*}$, and $\mathit{AUTH}_{s',u}^{*}$.
If $s \cong_{u,M}^{\mathit{data}} s'$, then $R^{\top}_{s} = R^{\top}_{s'}$ (because $D$ is the same and $V = V'$) and $\mathit{AUTH}_{s,u}^{*} = \mathit{AUTH}_{s',u}^{*}$ (because $\mathit{sec} =\mathit{sec}'$).
Therefore, $\phi_{s,u}^{\top} = \phi_{s',u}^{\top}$.
The proof for $\phi_{s,u}^{\bot}$ is analogous.

\end{compactenum}
This completes the proof of the base case.

\smallskip
\noindent
{\bf Induction Step} 
Assume that our claim holds for all formulae whose length is less than $\phi$.
We now show that our claim holds also for $\phi$.
There are a number of cases depending on $\phi$'s structure.
\begin{compactenum}
\item $\phi := \psi \wedge \gamma$.
Assume that  $s \cong_{u,M}^{\mathit{data}} s'$.
From this and the induction hypothesis, it follows that  $\psi_{s,u}^{\top} = \psi_{s',u}^{\top}$ and $\gamma_{s,u}^{\top} = \gamma_{s',u}^{\top}$.
From this, $\phi := \psi \wedge \gamma$, $\phi_{s,u}^{\top} := \psi_{s,u}^{\top} \wedge \gamma_{s,u}^{\top}$,  and $\phi_{s',u}^{\top} := \psi_{s',u}^{\top} \wedge \gamma_{s',u}^{\top}$, it follows that $\phi_{s,u}^{\top} = \phi_{s',u}^{\top}$.

The proof of $\phi_{s,u}^{\bot} = \phi_{s',u}^{\bot}$ is analogous. 

\item $\phi := \psi \vee \gamma$.
The proof of this case is similar to that of $\phi := \psi \wedge \gamma$.

\item $\phi := \neg \psi$.
Assume that  $s \cong_{u,M}^{\mathit{data}} s'$.
From this and the induction hypothesis, it follows that  $\psi_{s,u}^{\top} = \psi_{s',u}^{\top}$ and $\psi_{s,u}^{\bot} = \psi_{s',u}^{\bot}$.
From this, $\phi := \neg \psi$, $\phi_{s,u}^{\top} := \neg \psi_{s,u}^{\bot}$,  and $\phi_{s',u}^{\top} := \neg \psi_{s',u}^{\bot}$, it follows that $\phi_{s,u}^{\top} = \phi_{s',u}^{\top}$.

The proof of $\phi_{s,u}^{\bot} = \phi_{s',u}^{\bot}$ is analogous.

\item $\phi := \exists x.\, \psi$.
Assume that  $s \cong_{u,M}^{\mathit{data}} s'$.
From this and the induction hypothesis, it follows that  $\psi_{s,u}^{\top} = \psi_{s',u}^{\top}$.
We remark that $\mathit{bound}(\psi,s,u,x,\top) = \mathit{bound}(\psi,s',u,x,\top)$, as proved in \thref{theorem:rewriting:bound:equivalent:modulo:indistinguishable:state}.
There are two cases:
\begin{compactenum}
\item $\mathit{bound}(\psi,s,u,x,\top) = \top$. From this, $\mathit{bound}(\psi,s,u, \\ x,\top) = \mathit{bound}(\psi,s',u,x,\top)$, $\psi_{s,u}^{\top} = \psi_{s',u}^{\top}$, $\phi := \exists x.\,  \psi$, $\phi_{s,u}^{\top} := \exists x.\, \psi_{s,u}^{\top}$,  and $\phi_{s',u}^{\top} := \exists x.\, \psi_{s',u}^{\top}$, it follows that $\phi_{s,u}^{\top} = \phi_{s',u}^{\top}$.

\item $\mathit{bound}(\psi,s,u,x,\top) = \bot$. From this, $\mathit{bound}(\psi,s,u, \\ x,\top) = \mathit{bound}(\psi,s',u,x,\top)$, and $\phi_{s,u}^{\top}$ definition, it follows that $\phi_{s',u}^{\top} = \phi_{s',u}^{\top} = \bot$.
\end{compactenum}
The proof of $\phi_{s,u}^{\bot} = \phi_{s',u}^{\bot}$ is analogous.  

\item $\phi := \forall x.\, \psi$.
The proof of this case is similar to that of $\phi:= \exists x.\, \psi$.
\end{compactenum}
This completes the proof of the induction step.

The equivalence  $\phi^{\mathit{rw}}_{s,u} = \phi^{\mathit{rw}}_{s',u}$ follows trivially from $\phi^{\mathit{rw}}_{s,u}$'s definition and the fact that $\phi_{s,u}^{\top} = \phi_{s',u}^{\top}$ and $\phi_{s,u}^{\bot} = \phi_{s',u}^{\bot}$.
This completes the proof of our claim.
\end{proof}

Before proving the domain independence of $\phi^{\top}_{s,u}$ and $\phi^{\bot}_{s,u}$, we introduce some notation.
The relation $\mathit{gen}$, introduced in \cite{VanGelder:1991:STR:114325.103712}, is the smallest relation defined by the rules in Figure \ref{figure:domain:independence:gen}.
Note that we extended $\mathit{gen}$ by adding the rules \emph{Equiv}, \emph{Const 1}, and \emph{Const 2}.
A relational calculus formula $\phi$ is \emph{allowed} iff it satisfies the following conditions:
\begin{compactitem}
\item for all $x \in \mathit{free}(\phi)$,  $\mathit{gen}(x,\phi)$ holds,
\item for every sub-formula $\exists x. \psi$ in $\phi$,  $\mathit{gen}(x,\psi)$ holds, and
\item for every sub-formula $\forall x. \psi$ in $\phi$,  $\mathit{gen}(x, \neg \psi)$ holds.
\end{compactitem}
As shown in \cite{VanGelder:1991:STR:114325.103712}, every allowed formula is domain independent.
Note that the addition of the  \emph{Equiv}, \emph{Const 1}, and \emph{Const 2} rules does not modify this result.

\begin{figure}[!hbtp]
\centering
\scalebox{0.93}{
\begin{tabular}{c c}

$
\infer[\text{Pred}]
{ \mathit{gen}(x,R(\overline{x}))
}
{
\hfill x \in \overline{x} \hfill 
}$
&
$
\infer[\text{Neg}]
{ \mathit{gen}(x,\neg \phi)
}
{
\hfill \mathit{gen}(x,\mathit{push}(\neg \phi)) \hfill 
}$

\\\\
$
\infer[\text{Exists}]
{ \mathit{gen}(x,\exists y. \phi)
}
{
\hfill x \neq y \hfill \quad
\hfill \mathit{gen}(x,\phi) \hfill
}$
&

$
\infer[\text{For all}]
{ \mathit{gen}(x,\forall y. \phi)
}
{
\hfill x \neq y \hfill \quad
\hfill \mathit{gen}(x,\phi) \hfill
}$

\\\\

$
\infer[\text{Or}]
{ \mathit{gen}(x,\phi \vee \psi)
}
{
\hfill \mathit{gen}(x,\phi) \hfill \quad
\hfill \mathit{gen}(x,\psi) \hfill 
}$
&
$
\infer[\text{Equiv}]
{ \mathit{gen}(x,\phi)
}
{
\hfill \mathit{gen}(x,\psi) \hfill \quad
\hfill \phi \equiv \psi \hfill 
}$
\\\\
$
\infer[\text{Const 1}]
{ \mathit{gen}(x,x = v)
}
{
\hfill v \in \mathbf{dom} \hfill 
}$
&
$
\infer[\text{Const 2}]
{ \mathit{gen}(x,v = x)
}
{
\hfill v \in \mathbf{dom} \hfill 
}$
 
\\\\
$
\infer[\text{And 1}]
{ \mathit{gen}(x,\phi \wedge \psi)
}
{
\hfill \mathit{gen}(x,\phi) \hfill 
}$
&
$
\infer[\text{And 2}]
{ \mathit{gen}(x,\phi \wedge \psi)
}
{
\hfill \mathit{gen}(x,\psi) \hfill 
}$
\\\\
\multicolumn{2}{c}{
$\mathit{push}(\phi) = \left\{ 
  \begin{array}{l l}
	\neg \psi \vee \neg \gamma & \text{if } \phi := \neg (\psi \wedge \gamma) \\
	\neg \psi \wedge \neg \gamma & \text{if } \phi := \neg (\psi \vee \gamma) \\  	
  	\forall x.\, \neg \psi & \text{if } \phi := \neg \exists x. \psi \\  	
  	\exists x.\, \neg \psi & \text{if } \phi := \neg \forall x. \psi \\  	
  	\psi & \text{if } \phi := \neg\neg \psi 	\\
  	x \neq y & \text{if } \phi := \neg(x = y) \\ 
  	x = y & \text{if } \phi := \neg(x \neq y) \\
  	\end{array}\right.$
  	}
\end{tabular}
}
\caption{$\mathit{gen}$ rules}\label{figure:domain:independence:gen}
\end{figure}

\begin{proposition}\thlabel{theorem:rewriting:equivalences}
Let $M = \langle D, \Gamma\rangle$ be a system configuration, $s = \langle \mathit{db}, U, \mathit{sec},  T,V \rangle$ be an $M$-partial state, $u \in U$ be a user, and $v \in \{\top,\bot\}$.
For any formulae $\phi$ and $\psi$, the following equivalences hold:
\begin{compactitem}
\item $(\neg \phi)^{v}_{s,u} \equiv \neg \phi^{\neg v}_{s,u}$,

\item $(\phi)^{v}_{s,u} \wedge (\psi)^{v}_{s,u} \equiv (\phi \wedge \psi)^{v}_{s,u}$,

\item $(\phi)^{v}_{s,u} \vee (\psi)^{v}_{s,u} \equiv (\phi \vee \psi)^{v}_{s,u}$,

\item $(\exists x.\, \phi)^{v}_{s,u} \equiv (\neg \forall x.\, \neg \phi)^{v}_{s,u}$, and 

\item $(\forall x.\, \phi)^{v}_{s,u} \equiv (\neg \exists x.\, \neg \phi)^{v}_{s,u}$.
\end{compactitem}
\end{proposition}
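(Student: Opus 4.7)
The plan is to prove each of the five equivalences by unfolding the recursive definitions of $(\cdot)^{\top}_{s,u}$ and $(\cdot)^{\bot}_{s,u}$ and performing a case analysis on $v \in \{\top, \bot\}$. No induction on formula structure is needed, since the proposition concerns one-step structural equivalences, though the quantifier cases will reuse the first three equivalences as lemmas.

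First I would dispatch the negation equivalence $(\neg \phi)^{v}_{s,u} \equiv \neg \phi^{\neg v}_{s,u}$. This is immediate: by the defining clause, $(\neg \psi)^{\top}_{s,u}$ is literally $\neg \psi^{\bot}_{s,u}$ and $(\neg \psi)^{\bot}_{s,u}$ is literally $\neg \psi^{\top}_{s,u}$, which covers the two instantiations of $v$. The conjunction and disjunction equivalences are equally direct: for both polarities, the definitions of $(\phi \wedge \psi)^{v}_{s,u}$ and $(\phi \vee \psi)^{v}_{s,u}$ syntactically commute with the rewriting on each conjunct/disjunct. These three cases together will be used as rewriting steps in the quantifier cases.

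For the quantifier dualities, the key auxiliary fact is that $\mathit{bound}$ commutes with negation in the expected way: $\mathit{bound}(\neg \psi, s, u, x, v) = \mathit{bound}(\psi, s, u, x, \neg v)$, which holds immediately from the defining clause $\mathit{bound}(\neg \psi, s, u, x, v) := \mathit{bound}(\psi, s, u, x, \neg v)$. Given this, I would expand the right-hand side of $(\exists x.\, \phi)^{v}_{s,u} \equiv (\neg \forall x.\, \neg \phi)^{v}_{s,u}$ using the negation equivalence to rewrite it as $\neg (\forall x.\, \neg \phi)^{\neg v}_{s,u}$, and then split on whether $\mathit{bound}(\phi, s, u, x, v) = \top$. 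When $\mathit{bound}$ succeeds, both sides unfold to $\exists x.\, \phi^{v}_{s,u}$ after one application of the classical De Morgan step $\neg \forall x.\, \neg \alpha \equiv \exists x.\, \alpha$ and the already-proved negation equivalence on the inner $\neg \phi$. The $\forall$/$\exists$ dual is handled by the symmetric argument with the two quantifiers swapped.

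The main obstacle will be the \emph{bound-fails} branch of the quantifier case, where each side's outermost rewriting collapses to its $\bot$ default per the definition. Then the left-hand side of the existential duality is $\bot$, while the right-hand side becomes $\neg \bot$ after the outer negation is absorbed, which does not match as a literal truth value. Resolving this requires reading $\equiv$ as semantic equivalence modulo the intended use of the rewriting inside $\phi^{\mathit{rw}}_{s,u} := \neg \phi^{\top}_{s,u} \wedge \phi^{\bot}_{s,u}$ and Lemma~\ref{theorem:rewriting:invariants}. The clean way to close this gap is to observe that the $\bot$ default for the $(\cdot)^{\bot}$ variant of a quantifier is a placeholder representing ``no safe upper bound available'' and must be treated as the trivial upper bound $\top$ to keep Lemma~\ref{theorem:rewriting:invariants} valid; under that reading the defaults on both sides align after the outer negation, and the quantifier dualities close uniformly with the same De Morgan step as the successful branch.
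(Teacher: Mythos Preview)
Your approach matches the paper's almost step for step: the first three items are dispatched as immediate from the defining clauses, and the quantifier dualities are handled by unfolding the outer negation via the first item, observing that $\mathit{bound}(\neg\phi,s,u,x,\neg v)=\mathit{bound}(\phi,s,u,x,v)$, and then splitting on whether $\mathit{bound}(\phi,s,u,x,v)$ holds. In the success branch both proofs finish with the standard $\neg\forall x.\,\neg\alpha \equiv \exists x.\,\alpha$ step.

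The one place worth commenting on is the bound-fails branch. The paper's proof simply writes $(\exists x.\,\phi)^{v}_{s,u}=\neg v$ and $(\forall x.\,\neg\phi)^{\neg v}_{s,u}=v$, so both sides of the duality equal $\neg v$. This only matches the stated definition when $v=\top$; for $v=\bot$ it requires the default of the $(\cdot)^{\bot}$ rewriting on quantifiers to be $\top$ rather than the $\bot$ written in the definition. In other words, the paper's proof silently uses exactly the reading you propose (default $\neg v$, i.e.\ $\bot$ for $(\cdot)^{\top}$ and $\top$ for $(\cdot)^{\bot}$), which is also what is needed for Lemma~\ref{theorem:rewriting:invariants} to hold in that branch. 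So your diagnosis of the discrepancy and your fix are correct, and once that reading is adopted your argument and the paper's coincide.
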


\begin{proof}
Let $M = \langle D, \Gamma\rangle$ be a system configuration, $s = \langle \mathit{db}, U, \mathit{sec},  T,V \rangle$ be an $M$-partial state, $u \in U$ be a user,  $v \in \{\top,\bot\}$, and $\phi$, $\psi$ be two formulae.

\begin{compactitem}
\item $(\neg \phi)^{v}_{s,u} \equiv \neg \phi^{\neg v}_{s,u}$.
This case follows trivially from the definition of the rewriting.

\item $(\phi)^{v}_{s,u} \wedge (\psi)^{v}_{s,u} \equiv (\phi \wedge \psi)^{v}_{s,u}$.
This case follows trivially from the definition of the rewriting.

\item $(\phi)^{v}_{s,u} \vee (\psi)^{v}_{s,u} \equiv (\phi \vee \psi)^{v}_{s,u}$
This case follows trivially from the definition of the rewriting.

\item $(\exists x.\, \phi)^{v}_{s,u} \equiv (\neg \forall x.\, \neg \phi)^{v}_{s,u}$.
There are two cases:
\begin{compactenum}
\item $\mathit{bound}(\phi,s,u,x,v) = \top$.
From this, it follows that $(\exists x.\, \phi)^{v}_{s,u} = \exists x.\, \phi^{v}_{s,u}$.
From $(\neg \phi)^{v}_{s,u} \equiv \neg \phi^{\neg v}_{s,u}$ and $(\neg \forall x.\, \neg \phi)^{v}_{s,u}$, it follows that  $(\neg \forall x.\, \neg \phi)^{v}_{s,u} \equiv \neg (\forall x.\, \\ \neg \phi)^{\neg v}_{s,u}$.
From the definition of $\mathit{bound}$, it follows that $\mathit{bound}(\neg \phi,s,u,x,\neg v) = \mathit{bound}( \phi,s,u,x, \neg \neg v)$.
From this and $v = \neg \neg v$, it follows that $\mathit{bound}(\neg \phi,s, \\ u,x,\neg v) = \mathit{bound}( \phi,s,u,x,  v)$.
From this and $\mathit{bound} \\ (\phi,s,u,x,v) = \bot$, it follows that $\mathit{bound}(\neg \phi,s,u,x, \\ \neg v) = \top$.
From this, it follows that $(\forall x.\, \neg \phi)^{\neg v}_{s,u} = \forall x.\, (\neg \phi)^{\neg v}_{s,u}$.
From this and $(\neg \phi)^{v}_{s,u} \equiv \neg \phi^{\neg v}_{s,u}$, it follows that $(\forall x.\, \neg \phi)^{\neg v}_{s,u} \equiv \forall x.\, \neg \phi^{v}_{s,u}$.
From this and $(\neg \forall x.\, \neg \phi)^{v}_{s,u} \equiv \neg (\forall x.\, \neg \phi)^{\neg v}_{s,u}$, it follows that  $(\neg \forall x.\, \neg \phi)^{v}_{s,u} \equiv \neg  \forall x.\, \neg \phi^{v}_{s,u}$.
From this and standard RC equivalences, it follows that $(\neg \forall x.\, \neg \phi)^{v}_{s,u} \\ \equiv  \exists x.\, \phi^{v}_{s,u}$.

\item $\mathit{bound}(\phi,s,u,x,v) = \bot$.
From this, it follows that $(\exists x.\, \phi)^{v}_{s,u} = \neg v$.
From  $(\neg \phi)^{v}_{s,u} \equiv \neg \phi^{\neg v}_{s,u}$ and $(\neg \forall x.\, \\ \neg \phi)^{v}_{s,u}$, it follows that  $(\neg \forall x.\, \neg \phi)^{v}_{s,u} \equiv \neg (\forall x.\, \neg \phi)^{\neg v}_{s,u}$.
From the definition of $\mathit{bound}$, it follows that $\mathit{bound} \\ (\neg \phi,s,u,x,\neg v) = \mathit{bound}  ( \phi,s,u,x, \neg \neg v)$.
From this and $v = \neg \neg v$, it follows that $\mathit{bound}(\neg \phi,s,u,x, \neg v) \\  = \mathit{bound}( \phi,s,u,x,  v)$. 
From this and $\mathit{bound}(\phi, s,u, \\ x,v) = \bot$, it follows that $\mathit{bound}(\neg \phi,s,u,x,\neg v) = \bot$.
From this, it follows that $(\forall x.\, \neg \phi)^{\neg v}_{s,u} = v$.
From this and $(\neg \forall x.\, \neg \phi)^{v}_{s,u}   \equiv \neg (\forall x.\, \neg \phi)^{\neg v}_{s,u}$, it follows that  $(\neg \forall x.\, \neg \phi)^{v}_{s,u} \equiv \neg v$.
From this and $(\exists x.\, \phi)^{v}_{s,u} = \neg v$, it follows that $(\exists x.\, \phi)^{v}_{s,u}  \equiv (\neg \forall x.\, \neg \phi)^{v}_{s,u}$.
\end{compactenum}

\item $(\forall x.\, \phi)^{v}_{s,u} \equiv (\neg \exists x.\, \neg \phi)^{v}_{s,u}$.
The proof of this case is similar to that of $(\exists x.\, \phi)^{v}_{s,u} \equiv (\neg \forall x.\, \neg \phi)^{v}_{s,u}$.
\end{compactitem}
This completes the proof.
\end{proof}

\begin{proposition}\thlabel{theorem:rewriting:domain:independent:1} 
Let $M = \langle D, \Gamma\rangle$ be a system configuration, $s = \langle \mathit{db}, U, \mathit{sec},  T,V \rangle$ be an $M$-partial state, $u \in U$ be a user,  $v \in \{\top,\bot\}$, and $\phi$ be a formula.
Furthermore, let $x \in \mathit{free}(\phi) \cap \mathit{free}(\phi^{v}_{s,u})$.
If $\mathit{gen}(x,\phi)$ holds, then  $\mathit{gen}(x, \phi^{v}_{s,u})$ holds.
\end{proposition}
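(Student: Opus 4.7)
The plan is to prove the claim by induction on the derivation of $\mathit{gen}(x, \phi)$, with case analysis on the last rule applied, leveraging Proposition~\thref{theorem:rewriting:equivalences} together with the Equiv rule of $\mathit{gen}$ to handle the one non-routine case.

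For the base cases, Const~1 and Const~2 are immediate because the rewriting fixes equalities. For Pred, with $\phi = R(\overline{y})$ and $x \in \overline{y}$, the rewriting $\phi^v_{s,u}$ is either $\bigvee_{S \in R^v_{s,u}} S(\overline{y})$ (when $v = \top$) or $\bigwedge_{S \in R^v_{s,u}} S(\overline{y})$ (when $v = \bot$). The hypothesis $x \in \mathit{free}(\phi^v_{s,u})$ forces $R^v_{s,u}$ to be non-empty, after which applying Pred to each atom $S(\overline{y})$ and then iterating Or (for $v = \top$) or And~1 (for $v = \bot$) yields $\mathit{gen}(x, \phi^v_{s,u})$.

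For the inductive cases Or, And~1, And~2, Exists, For~all, and Equiv, I apply the induction hypothesis to the appropriate sub-formula and reassemble the derivation with the same $\mathit{gen}$ rule on the rewritten side. The free-variable hypothesis propagates to the sub-formulas because in the non-collapsing branch of the quantifier rewriting the set of free variables is preserved. In the collapsing branch the rewriting reduces to $\bot$, which has no free variables and therefore contradicts $x \in \mathit{free}(\phi^v_{s,u})$; hence that branch cannot occur.

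The main obstacle is the Neg case. Here $\mathit{gen}(x, \neg \psi)$ arises from $\mathit{gen}(x, \mathit{push}(\neg \psi))$, and I want $\mathit{gen}(x, (\neg \psi)^v_{s,u}) = \mathit{gen}(x, \neg \psi^{\neg v}_{s,u})$. My plan is to apply the induction hypothesis to the strictly smaller derivation $\mathit{gen}(x, \mathit{push}(\neg \psi))$, obtaining $\mathit{gen}(x, (\mathit{push}(\neg \psi))^v_{s,u})$, and then invoke Equiv together with the fact that $(\mathit{push}(\neg \psi))^v_{s,u}$ and $(\neg \psi)^v_{s,u}$ are logically equivalent. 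The latter equivalence I verify by a small case split on the form of $\psi$ (conjunction, disjunction, existential, universal, double negation, equality), each time reducing to an instance of Proposition~\thref{theorem:rewriting:equivalences} combined with classical De~Morgan laws. The subtlety to check is that the $\mathit{bound}$ predicate is self-dual under negation, i.e., $\mathit{bound}(\neg \psi', s, u, y, v) = \mathit{bound}(\psi', s, u, y, \neg v)$; this is exactly what makes the collapse/non-collapse branches coincide on the two sides of the equivalence in the quantifier sub-cases, so that the two rewritings are equal modulo De~Morgan.
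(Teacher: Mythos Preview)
Your induction scheme differs from the paper's: you induct on the derivation of $\mathit{gen}(x,\phi)$, whereas the paper does structural induction on (the length of) $\phi$ itself. This makes your Neg case tidier---you apply the induction hypothesis directly to $\mathit{push}(\neg\psi)$ rather than further case-splitting on the shape of $\psi$ as the paper does---but it forces you to treat the Equiv rule as an inductive case, and your handling of that case does not work.

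In the Equiv case the premise is $\mathit{gen}(x,\psi)$ with $\psi\equiv\phi$, and you propose to obtain $\mathit{gen}(x,\psi^{v}_{s,u})$ from the induction hypothesis and then conclude $\mathit{gen}(x,\phi^{v}_{s,u})$ by Equiv again. The second step needs $\psi^{v}_{s,u}\equiv\phi^{v}_{s,u}$, but the rewriting does \emph{not} preserve logical equivalence. For example, take $\phi=R(x)$ and $\psi=\exists y.\,(R(x)\wedge y=y)$; these are equivalent, yet with $R\in R^{\top}_{s,u}$ one gets $\phi^{\top}_{s,u}=R(x)$ while $\psi^{\top}_{s,u}=\bot$, since $\mathit{bound}(R(x)\wedge y{=}y,s,u,y,\top)=\bot$ and the quantifier collapses. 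In the same example the induction hypothesis is not even available, because $x\notin\mathit{free}(\psi^{\top}_{s,u})$. The paper's formula-based induction never confronts Equiv as a case to be discharged: there the case analysis is on the top connective of $\phi$, and Equiv appears only as a tool on the rewritten side during reassembly. If you wish to keep derivation induction, the clean fix is to restrict the hypothesis to $\mathit{gen}$-derivations that do not use Equiv; this is all that is actually needed downstream to show that $\phi^{\mathit{rw}}_{s,u}$ is allowed whenever $\phi$ is.
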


\begin{proof}
Let $M = \langle D, \Gamma\rangle$ be a system configuration, $s = \langle \mathit{db}, U, \mathit{sec},  T,V \rangle$ be an $M$-partial state, $u \in U$ be a user,  $v \in \{\top,\bot\}$, and $\phi$ be a formula.
Furthermore, let $x \in \mathit{free}(\phi) \cap \mathit{free}(\phi^{v}_{s,u})$.
We prove our claim by structural induction on the length of $\phi$.
In the following, the length of $\phi$ is the number of predicates, quantifiers, negations, conjunctions, and disjunctions in $\phi$.

\smallskip
\noindent
{\bf Base Case} 
There are four cases:
\begin{compactenum}
\item $\phi := x = y$. 
In this case, the claim holds trivially.

\item $\phi := \top$. 
In this case, the claim holds trivially.

\item $\phi := \bot$. 
In this case, the claim holds trivially.

\item $\phi := R(\overline{x})$.
Assume $\mathit{gen}(x,\phi)$ holds.
From this, it follows that $x$ is one of the free variables in $\overline{x}$.
Furthermore, from $x \in \mathit{free}(\phi^{v}_{s,u})$, it follows that $R^{v}_{s,u} \neq \emptyset$.
There are two cases:
\begin{compactenum}
\item $\phi^{v}_{s,u}$ is a conjunction of predicates $S(\overline{x})$ such that $\mathit{gen}(x,S(\overline{x}))$ holds.
From the rule \emph{And 1}, it follows that $\mathit{gen}(x,\phi^{v}_{s,u})$ holds.

\item $\phi^{v}_{s,u}$ is a disjunction of predicates $S(\overline{x})$ such that $\mathit{gen}(x,S(\overline{x}))$ holds.
From the rule \emph{Or}, it follows that $\mathit{gen}(x,\phi^{v}_{s,u})$ holds.
\end{compactenum}

\end{compactenum}
This completes the proof of the base case.

\smallskip
\noindent
{\bf Induction Step} 
Assume that our claim holds for all formulae whose length is less than $\phi$'s length.
We now show that our claim holds also for $\phi$.
There are a number of cases depending on $\phi$'s structure.
\begin{compactenum}
\item $\phi := \psi \wedge \gamma$.
Assume that $\mathit{gen}(x,\phi)$ holds.
From this and the rules \emph{And 1} and \emph{And 2}, it follows that either $\mathit{gen}(x,\psi)$ or $\mathit{gen}(x,\gamma)$ hold. 
Assume, without loss of generality, that  $\mathit{gen}(x,\psi)$ holds.
From this and the induction hypothesis, it follows that $\mathit{gen}(x,\psi^{v}_{s,u})$ holds.
From this, $\phi^{v}_{s,u} := \psi^{v}_{s,u} \wedge \gamma^{v}_{s,u}$, and the rule \emph{And 1}, it follows that $\mathit{gen}(x,\phi^{v}_{s,u})$ holds.

\item $\phi := \psi \vee \gamma$.
Assume that $\mathit{gen}(x,\phi)$ holds.
From this and the rule \emph{Or}, it follows that both $\mathit{gen}(x,\psi)$ and $\mathit{gen}(x,\gamma)$ hold. 
From this and the induction hypothesis, it follows that both $\mathit{gen}(x,\psi^{v}_{s,u})$ and $\mathit{gen}(x,\gamma^{v}_{s,u})$ hold.
From this, $\phi^{v}_{s,u} := \psi^{v}_{s,u} \vee \gamma^{v}_{s,u}$, and the rule \emph{Or}, it follows that $\mathit{gen}(x,\phi^{v}_{s,u})$ holds.

\item $\phi := \neg \psi$.
Assume that $\mathit{gen}(x,\phi)$ holds.
From this and the rule \emph{Not}, it follows that $\mathit{gen}(x, \mathit{push}(\neg \psi))$.
There are a number of cases depending on $\psi$.
In the following, we exploit standard relational calculus equivalences, see, for instance, \cite{abiteboul1995foundations}, and the equivalences we proved in \thref{theorem:rewriting:equivalences}.
\begin{compactenum}
\item $\psi := (\alpha \wedge \beta)$.
In this case, $\mathit{push}(\neg \psi)$ is $(\neg\alpha \vee \neg\beta)$.
From this and $\mathit{gen}(x, \mathit{push}(\neg \psi))$, it follows $\mathit{gen}(x, (\neg\alpha \vee \neg\beta))$.
From this and the \emph{Or} rule, it follows that $\mathit{gen}(x, \neg\alpha)$ and $\mathit{gen}(x, \neg\beta)$ hold.
From this and the induction hypothesis, it follows that $\mathit{gen}(x, (\neg\alpha)^{v}_{s,u})$ and $\mathit{gen}(x, (\neg\beta)^{v}_{s,u})$.
From this and the \emph{Or} rule, it follows that $\mathit{gen}(x, (\neg\alpha)^{v}_{s,u} \vee (\neg\beta)^{v}_{s,u})$.
From this, $(\neg\alpha)^{v}_{s,u} \vee (\neg\beta)^{v}_{s,u} \equiv (\neg\alpha \vee \neg\beta)^{v}_{s,u}$, and the \emph{Equiv} rule, it follows that $\mathit{gen}(x, (\neg\alpha \vee \neg\beta)^{v}_{s,u})$.
From this, $(\neg\alpha \vee \neg\beta)^{v}_{s,u} \equiv (\neg(\alpha \wedge \beta))^{v}_{s,u}$, and the \emph{Equiv} rule, it follows that $\mathit{gen}(x,  (\neg(\alpha \wedge \beta))^{v}_{s,u})$.
From this, $ (\neg(\alpha \wedge \beta))^{v}_{s,u} \equiv \neg(\alpha \wedge \beta)^{\neg v}_{s,u}$, and the \emph{Equiv} rule, it follows that $\mathit{gen}(x, \neg (\alpha \wedge \beta)^{\neg v}_{s,u})$.
From this and $\psi := \alpha \wedge \beta$, it follows that $\mathit{gen}(x, \neg \psi^{\neg v}_{s,u})$.
From this and $\phi^{v}_{s,u} := \neg \psi^{\neg v}_{s,u}$, it follows that $\mathit{gen}(x, \\ \phi^{v}_{s,u})$ holds.

\item $\psi := (\alpha \vee \beta)$. The proof is similar to the $\psi := \neg(\alpha \wedge \beta)$ case.

\item $\psi :=  \exists y.\,\alpha$.
In this case, $\mathit{push}(\neg \psi)$ is $\forall y.\, \neg\alpha$.
From this and $\mathit{gen}(x, \mathit{push}(\neg \psi))$, it follows $\mathit{gen}(x, \forall y.\, \neg\alpha)$.
From this and the induction hypothesis, it follows that  $\mathit{gen}(x, (\forall y.\, \neg\alpha)^{v}_{s,u})$.
From this, $\neg \neg (\forall y.\, \neg\alpha)^{v}_{s,u} \\ \equiv (\forall y.\, \neg\alpha)^{v}_{s,u}$, and the \emph{Equiv} rule, it follows that $\mathit{gen}(x, \neg \neg (\forall y.\, \neg\alpha)^{v}_{s,u})$.
From this, $\neg \neg (\forall y.\, \neg\alpha)^{v}_{s,u} \equiv \neg (\neg \forall y.\, \neg\alpha)^{\neg v}_{s,u}$, and the \emph{Equiv} rule, it follows that $\mathit{gen}(x, \neg (\neg \forall y.\, \neg\alpha)^{\neg v}_{s,u})$.
From this, $\neg (\neg \forall y.\, \neg\alpha)^{\neg v}_{s,u} \equiv \neg (\exists y.\, \neg\neg\alpha)^{\neg v}_{s,u}$, and the \emph{Equiv} rule, it follows that $\mathit{gen}(x, \neg (\exists y.\, \neg\neg\alpha)^{\neg v}_{s,u})$.
From this, $\neg (\exists y.\, \neg\neg\alpha)^{\neg v}_{s,u} \equiv \neg (\exists y.\, \alpha)^{\neg v}_{s,u}$, and the \emph{Equiv} rule, it follows that $\mathit{gen}(x, \\ \neg (\exists y.\, \alpha)^{\neg v}_{s,u})$.
From this and $\psi :=  \exists y.\,\alpha$, it follows that $\mathit{gen}(x, \neg \psi^{\neg v}_{s,u})$.
From this and $\phi^{v}_{s,u} := \neg \psi^{\neg v}_{s,u}$, it follows that $\mathit{gen}(x,\phi^{v}_{s,u})$ holds.

\item $\psi :=  \forall y.\,\alpha$. The proof is similar to the $\psi := \neg \exists y.\,\alpha$ case.

\item $\psi :=  \neg \alpha$.
In this case, $\mathit{push}(\neg \psi)$ is $\alpha$.
From this and $\mathit{gen}(x, \mathit{push}(\neg \psi))$, it follows $\mathit{gen}(x, \alpha)$.
From this and the induction hypothesis, it follows that  $\mathit{gen}(x, \alpha^{v}_{s,u})$.
From this, $\neg \neg \alpha^{v}_{s,u} \equiv \alpha^{v}_{s,u}$, and the \emph{Equiv} rule, it follows that $\mathit{gen}(x, \neg \neg \alpha^{v}_{s,u})$.
From this, $\neg \neg \alpha^{v}_{s,u} \equiv \neg (\neg \alpha)^{\neg v}_{s,u}$, and the \emph{Equiv} rule, it follows that $\mathit{gen}(x, \neg (\neg \alpha)^{\neg v}_{s,u})$.
From this and $\psi :=  \neg \alpha$, it follows that $\mathit{gen}(x, \neg \psi^{\neg v}_{s,u})$.
From this and $\phi^{v}_{s,u} := \neg \psi^{\neg v}_{s,u}$, it follows that $\mathit{gen}(x,\phi^{v}_{s,u})$ holds.

\item $\psi := x = y$. The proof for this case is trivial.

\item $\psi := x \neq y$. The proof for this case is trivial.

\end{compactenum}

\item $\phi := \exists x.\, \psi$.
Assume that $\mathit{gen}(x,\phi)$ holds.
From this and the rule \emph{Exists}, it follows that $\mathit{gen}(x,\psi)$ holds. 
From this and the induction hypothesis, it follows that $\mathit{gen}(x,\psi^{v}_{s,u})$ holds.
From this, $\phi^{v}_{s,u} := \exists x.\, \psi^{v}_{s,u}$, and the rule \emph{Exists}, it follows that $\mathit{gen}(x,\phi^{v}_{s,u})$ holds.

\item $\phi := \forall x.\, \psi$.
The proof of this case is similar to that of $\phi:= \exists x.\, \psi$.
\end{compactenum}
This completes the proof of the induction step.

This completes the proof of our claim.
\end{proof}

\begin{proposition}\thlabel{theorem:rewriting:domain:independent:2} 
Let $M = \langle D, \Gamma\rangle$ be a system configuration, $s = \langle \mathit{db}, U, \mathit{sec},  T,V \rangle$ be an $M$-partial state, $u \in U$ be a user,  $v \in \{\top,\bot\}$, and $\phi$ be a formula.
For every sub-formula $\exists x.\, \psi$ of $\phi$, if $\mathit{gen}(x,\psi)$ holds and $x \in \mathit{free}(\psi) \cap \mathit{free}(\psi^{v}_{s,u})$, then $\mathit{gen}(x,\psi^{v}_{s,u})$ holds.
\end{proposition}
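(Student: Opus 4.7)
The plan is to observe that this proposition is almost an immediate corollary of Proposition~\ref{theorem:rewriting:domain:independent:1}. That earlier proposition already establishes, for an arbitrary formula $\alpha$, that $\mathit{gen}(x,\alpha)$ together with $x \in \mathit{free}(\alpha) \cap \mathit{free}(\alpha^{v}_{s,u})$ implies $\mathit{gen}(x,\alpha^{v}_{s,u})$. Here we are only asked to conclude the same thing for the inner formula $\psi$ of each existentially quantified sub-formula $\exists x.\,\psi$ of $\phi$.

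First, I would fix an arbitrary sub-formula $\exists x.\,\psi$ of $\phi$ and assume the hypotheses of the implication, namely that $\mathit{gen}(x,\psi)$ holds and that $x \in \mathit{free}(\psi) \cap \mathit{free}(\psi^{v}_{s,u})$. Next, I would instantiate Proposition~\ref{theorem:rewriting:domain:independent:1} with the same $M$, $s$, $u$, $v$ and with the formula $\psi$ playing the role of $\phi$, and with the same variable $x$. The hypotheses of Proposition~\ref{theorem:rewriting:domain:independent:1} are then literally the hypotheses we have assumed, so its conclusion $\mathit{gen}(x,\psi^{v}_{s,u})$ follows immediately. Since the sub-formula $\exists x.\,\psi$ was arbitrary, this completes the proof.

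There is essentially no obstacle to overcome: the statement differs from Proposition~\ref{theorem:rewriting:domain:independent:1} only in that it is phrased in terms of the sub-formulae $\psi$ occurring inside existential quantifiers of $\phi$, rather than in terms of a top-level formula. The only small thing to check is that the notations match (same definitions of $\mathit{gen}$, of the rewriting $(\cdot)^{v}_{s,u}$, and of $\mathit{free}$), which is clear from the surrounding definitions. No induction or case analysis on the structure of $\phi$ is needed for this proposition, since the induction work has already been carried out in the proof of Proposition~\ref{theorem:rewriting:domain:independent:1}.
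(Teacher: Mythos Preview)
Your proof is correct and in fact more direct than the paper's. The paper proves this proposition by structural induction on the length of $\phi$: the base cases are vacuous, and in each inductive case it either passes to a strict sub-formula via the induction hypothesis or, when $\phi$ itself has the form $\exists x.\,\psi$ and the sub-formula in question is $\phi$, it invokes Proposition~\ref{theorem:rewriting:domain:independent:1}. In other words, the induction is only used to traverse $\phi$ and locate each existentially quantified sub-formula; the actual work at each such location is done by Proposition~\ref{theorem:rewriting:domain:independent:1}.

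You cut out this traversal entirely by observing that Proposition~\ref{theorem:rewriting:domain:independent:1} is stated for an arbitrary formula, so one can instantiate it directly with the body $\psi$ of any sub-formula $\exists x.\,\psi$ of $\phi$. This is a genuine simplification: the structural induction in the paper's proof adds no logical content beyond what Proposition~\ref{theorem:rewriting:domain:independent:1} already provides, and your one-step argument is entirely sufficient.
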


\begin{proof}
Let $M = \langle D, \Gamma\rangle$ be a system configuration, $s = \langle \mathit{db}, U, \mathit{sec},  T,V \rangle$ be an $M$-partial state, $u \in U$ be a user,  $v \in \{\top,\bot\}$, and $\phi$ be a formula.
We prove our claim by structural induction on the length of $\phi$.
In the following, the size of $\phi$ is the number of predicates, quantifiers, negations, conjunctions, and disjunctions in $\phi$.

\smallskip
\noindent
{\bf Base Case} 
The claim is vacuously satisfied for the base cases as there is no sub-formula of the form $\exists x.\, \psi$.

\smallskip
\noindent
{\bf Induction Step} 
Assume that our claim holds for all formulae whose length is less than $\phi$.
We now show that our claim holds also for $\phi$.
There are a number of cases depending on $\phi$'s structure.
\begin{compactenum}
\item $\phi := \psi \wedge \gamma$.
Let $\alpha$ be a sub-formula of $\phi$ of the form $\exists x.\, \beta$ such that $\mathit{gen}(x,\beta)$ holds and $x \in \mathit{free}(\beta) \cap \mathit{free}(\beta^{v}_{s,u})$.
The formula $\alpha$ is either a sub-formula of $\psi$ or a sub-formula of $\gamma$.
From this and the induction hypothesis, it follows that $\mathit{gen}(x,\beta^{s}_{v,u})$ holds.

\item $\phi := \psi \vee \gamma$.
The proof of this case is similar to that of $\phi := \psi \wedge \gamma$.

\item $\phi := \neg \psi$.
Let $\alpha$ be a sub-formula of $\phi$ of the form $\exists x.\, \beta$ such that $\mathit{gen}(x,\beta)$ holds and $x \in \mathit{free}(\beta) \cap \mathit{free}(\beta^{v}_{s,u})$.
Since $\phi := \neg \psi$, the formula $\alpha$ is also a sub-formula of $\psi$.
From this and the induction hypothesis, it follows that $\mathit{gen}(x,\beta^{s}_{v,u})$ holds.

\item $\phi := \exists x.\, \psi$.
Let $\alpha$ be a sub-formula of $\phi$ of the form $\exists x.\, \beta$ such that $\mathit{gen}(x,\beta)$ holds and $x \in \mathit{free}(\beta) \cap \mathit{free}(\beta^{v}_{s,u})$.
There are two cases:
\begin{compactenum}
\item $\alpha$ is a sub-formula of $\psi$. 
From this and the induction hypothesis, it follows that $\mathit{gen}(x,\beta^{s}_{v,u})$ holds.

\item $\alpha = \phi$. 
From $\mathit{gen}(x,\beta)$, $x \in \mathit{free}(\beta) \cap \mathit{free}(\beta^{v}_{s,u})$, and  \thref{theorem:rewriting:domain:independent:1}, it follows that $\mathit{gen}(x,\beta^{v}_{s,u})$ holds.
\end{compactenum}

\item $\phi := \forall x.\, \psi$.
The proof of this case is similar to that of $\phi:= \exists x.\, \psi$.
\end{compactenum}
This completes the proof of the induction step.

This completes the proof of our claim.
\end{proof}

\begin{proposition}\thlabel{theorem:rewriting:domain:independent:3} 
Let $M = \langle D, \Gamma\rangle$ be a system configuration, $s = \langle \mathit{db}, U, \mathit{sec},  T,V \rangle$ be an $M$-partial state, $u \in U$ be a user,  $v \in \{\top,\bot\}$, and $\phi$ be a formula.
For every sub-formula $\forall x.\, \psi$ of $\phi$, if $\mathit{gen}(x,\psi)$ holds and $x \in \mathit{free}(\psi) \cap \mathit{free}(\psi^{v}_{s,u})$, then $\mathit{gen}(x,(\neg \psi)^{v}_{s,u})$ holds.
\end{proposition}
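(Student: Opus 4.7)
The plan is to mirror the proof strategy of Proposition~\ref{theorem:rewriting:domain:independent:2}, carrying out a structural induction on the size of $\phi$ (counting predicates, connectives, and quantifiers). The base cases---$\phi$ of the form $x = y$, $\top$, $\bot$, or $R(\overline{x})$---are vacuously satisfied, since $\phi$ contains no sub-formula of the form $\forall x.\,\psi$; nothing needs to be shown.

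For the inductive step, I would case-split on the top-level connective of $\phi$. In the cases $\phi := \psi \wedge \gamma$, $\phi := \psi \vee \gamma$, $\phi := \neg\psi$, and $\phi := \exists y.\,\psi$, any sub-formula of the form $\forall x.\,\alpha$ must lie strictly inside a sub-formula of strictly smaller size than $\phi$; so the induction hypothesis applies verbatim to that sub-formula and yields the required judgment $\mathit{gen}(x, (\neg\alpha)^{v}_{s,u})$. The only non-trivial case is $\phi := \forall x.\,\psi$, where either the sub-formula $\forall x.\,\alpha$ lies strictly inside $\psi$ (handled by the induction hypothesis), or the sub-formula is $\phi$ itself, in which case we must discharge the conclusion by a direct argument.

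For that key sub-case I would invoke Proposition~\ref{theorem:rewriting:domain:independent:1} after massaging the hypothesis into the correct shape. Concretely, since $\psi \equiv \neg\neg\psi$, the \emph{Equiv} rule promotes $\mathit{gen}(x,\psi)$ to $\mathit{gen}(x,\neg\neg\psi)$, at which point Proposition~\ref{theorem:rewriting:domain:independent:1} (applied to the formula $\neg\neg\psi$, after verifying the free-variable side condition $x \in \mathit{free}(\neg\neg\psi) \cap \mathit{free}((\neg\neg\psi)^{v}_{s,u})$, which reduces to the hypothesis $x \in \mathit{free}(\psi) \cap \mathit{free}(\psi^{v}_{s,u})$) yields $\mathit{gen}(x,(\neg\neg\psi)^{v}_{s,u})$. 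From the rewriting identity $(\neg\neg\psi)^{v}_{s,u} = \neg(\neg\psi)^{\neg v}_{s,u}$ together with the equivalence $\neg(\neg\psi)^{\neg v}_{s,u} \equiv (\neg\psi)^{v}_{s,u}$, established via Proposition~\ref{theorem:rewriting:equivalences}, a final application of \emph{Equiv} delivers $\mathit{gen}(x,(\neg\psi)^{v}_{s,u})$.

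The main obstacle I foresee is precisely this last juggling of the rewriting polarities: bridging $(\neg\neg\psi)^{v}_{s,u}$ and $(\neg\psi)^{v}_{s,u}$ requires the definitional unfolding $(\neg\chi)^{v}_{s,u} = \neg\chi^{\neg v}_{s,u}$ applied twice and a careful check that the intermediate formula is logically equivalent (so that \emph{Equiv} may be used without slipping into an unsound step), and one also has to verify that the $\mathit{bound}$-guarded rewriting of the inner quantifier agrees on both sides so that no spurious $\bot$ is introduced. All other cases, including the verification of the free-variable side conditions needed to apply Proposition~\ref{theorem:rewriting:domain:independent:1}, are routine bookkeeping along the lines of the proof of Proposition~\ref{theorem:rewriting:domain:independent:2}.
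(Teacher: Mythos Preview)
Your overall scaffolding---structural induction on $\phi$, with the only interesting case being $\phi = \forall x.\,\psi$ itself, discharged via Proposition~\ref{theorem:rewriting:domain:independent:1}---matches the paper, whose proof is simply ``similar to that of Proposition~\ref{theorem:rewriting:domain:independent:2}.'' But the polarity bridge you flag as the main obstacle does not go through.

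The equivalence $\neg(\neg\psi)^{\neg v}_{s,u} \equiv (\neg\psi)^{v}_{s,u}$ that you claim from Proposition~\ref{theorem:rewriting:equivalences} is false. Unfolding $(\neg\chi)^{w}_{s,u} = \neg\chi^{\neg w}_{s,u}$ on both sides gives $\neg(\neg\psi)^{\neg v}_{s,u} = \neg\neg\psi^{v}_{s,u} \equiv \psi^{v}_{s,u}$ and $(\neg\psi)^{v}_{s,u} = \neg\psi^{\neg v}_{s,u}$, so your equivalence reads $\psi^{v}_{s,u} \equiv \neg\psi^{\neg v}_{s,u}$, i.e., it asserts that the $\top$- and $\bot$-rewritings are logical complements of each other. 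Lemma~\ref{theorem:rewriting:invariants} gives instead the chain $\psi^{\top}_{s,u} \Rightarrow \psi \Rightarrow \psi^{\bot}_{s,u}$; concretely, for $\psi := R(x)$ with $R^{\top}_{s,u} = \{S\}$ and $R^{\bot}_{s,u} = \{T\}$ one has $\psi^{\top}_{s,u} = S(x)$ while $\neg\psi^{\bot}_{s,u} = \neg T(x)$, which are plainly inequivalent. No clause of Proposition~\ref{theorem:rewriting:equivalences} crosses polarities in this way, so there is no \emph{Equiv} step from $\mathit{gen}(x,(\neg\neg\psi)^{v}_{s,u})$ to $\mathit{gen}(x,(\neg\psi)^{v}_{s,u})$. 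The route that actually parallels Proposition~\ref{theorem:rewriting:domain:independent:2} is to apply Proposition~\ref{theorem:rewriting:domain:independent:1} directly to $\neg\psi$ (not to $\neg\neg\psi$), which yields $\mathit{gen}(x,(\neg\psi)^{v}_{s,u})$ in one step---but then the required hypothesis is $\mathit{gen}(x,\neg\psi)$, which is precisely what the \emph{allowed} condition for $\forall$-subformulae supplies, rather than the $\mathit{gen}(x,\psi)$ you started from.
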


\begin{proof}
The proof is similar to that of \thref{theorem:rewriting:domain:independent:2}.
\end{proof}

\begin{proposition}\thlabel{theorem:rewriting:domain:independent:4} 
Let $M = \langle D, \Gamma\rangle$ be a system configuration, $s = \langle \mathit{db}, U, \mathit{sec},  T,V \rangle$ be an $M$-partial state, $u \in U$ be a user,  $v \in \{\top,\bot\}$, and $\phi$ be a formula.
Let $Q \in \{\exists, \forall\}$ be a quantifier and $\mathit{subs}_{Q}(\phi)$ be the set of sub-formulae of $\phi$ of the form $Q\, x.\, \psi$.
There is a surjective function $f$ from $\mathit{subs}_{Q}(\phi)$  to $\mathit{subs}_{Q}(\phi^{v}_{s,u})$ such that for any  $Q\, x.\, \psi$ in $\mathit{subs}_{Q}(\phi)$, if $f(Q\, x.\, \psi)$ is defined, then $f(Q\, x.\, \psi)^{v}_{s,u} = Q\, x.\, \psi^{v}_{s,u}$. 
\end{proposition}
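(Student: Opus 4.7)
\begin{proofsketch}
The plan is to proceed by structural induction on the formula $\phi$, defining the function $f$ inductively as $\phi$'s structure is decomposed. In the following, I will assume that the statement "$f(Q\, x.\, \psi)^{v}_{s,u} = Q\, x.\, \psi^{v}_{s,u}$" is to be read as the equality $f(Q\, x.\, \psi) = Q\, x.\, \psi^{v}_{s,u}$, since $f$ maps into sub-formulae of $\phi^v_{s,u}$ which have already been rewritten. The surjectivity requirement means every quantifier sub-formula appearing in $\phi^v_{s,u}$ must be exhibited as $f$'s image of some quantifier sub-formula of $\phi$.

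For the base cases ($\phi := R(\overline{x})$, $\phi := x = y$, $\phi := \top$, $\phi := \bot$), both $\mathit{subs}_Q(\phi)$ and $\mathit{subs}_Q(\phi^v_{s,u})$ are empty, because the rewriting of atoms produces only conjunctions and disjunctions of atoms in the extended vocabulary (no quantifiers are introduced). Thus $f$ is the empty function, which is vacuously surjective.

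For the inductive cases over the Boolean connectives ($\phi := \psi \wedge \gamma$, $\phi := \psi \vee \gamma$), I would use the fact that $\phi^v_{s,u}$ is $\psi^v_{s,u} \wedge \gamma^v_{s,u}$ (respectively $\psi^v_{s,u} \vee \gamma^v_{s,u}$), so both domain and codomain split as disjoint unions; I combine the functions $f_\psi$ and $f_\gamma$ obtained by the induction hypothesis into a single function $f$. For $\phi := \neg \psi$, I apply the induction hypothesis to $\psi$ with the dual value $\neg v$, since $(\neg \psi)^v_{s,u} = \neg \psi^{\neg v}_{s,u}$ and the sub-formulae are shared with $\psi$; note that the statement is formulated for an arbitrary $v \in \{\top, \bot\}$, so flipping $v$ is legitimate.

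The main inductive cases are the quantifier cases $\phi := Q'\, x.\, \psi$. If $Q' \neq Q$, then $\phi \notin \mathit{subs}_Q(\phi)$ and $\phi^v_{s,u} \notin \mathit{subs}_Q(\phi^v_{s,u})$, so $f$ is obtained directly from the induction hypothesis on $\psi$. When $Q' = Q$, I split on the value of $\mathit{bound}(\psi, s, u, x, v)$. If $\mathit{bound} = \top$, then $\phi^v_{s,u} = Q\, x.\, \psi^v_{s,u}$, so I extend the inductively obtained $f_\psi$ by additionally setting $f(\phi) := Q\, x.\, \psi^v_{s,u}$, which by definition equals $\phi^v_{s,u}$ itself. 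If $\mathit{bound} = \bot$, then $\phi^v_{s,u} = \bot$ or $\neg v$, which contains no quantifier sub-formulae; hence $\mathit{subs}_Q(\phi^v_{s,u}) = \emptyset$ and $f$ is simply left undefined on all of $\mathit{subs}_Q(\phi)$, and surjectivity holds vacuously. The main obstacle is bookkeeping in this last case: one must verify that the sub-formulae of $\psi$ that no longer appear anywhere in $\phi^v_{s,u}$ are precisely those that would have been mapped into the vanished quantifier; since $f$ is allowed to be a partial function and the codomain is empty, this resolves cleanly.
\end{proofsketch}
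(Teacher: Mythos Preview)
Your structural induction is correct and is precisely the natural unfolding of the paper's proof, which is literally the single sentence ``The claim follows trivially from the definition of $\phi^{v}_{s,u}$.'' The paper relies on the reader to observe that the rewriting is structure-preserving (atoms become quantifier-free combinations of atoms, Boolean connectives are preserved, and each quantifier either survives with a rewritten body or collapses to a constant), which is exactly what your induction makes explicit.
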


\begin{proof}
The claim follows trivially from the definition of $\phi^{v}_{s,u}$.
\end{proof}

\begin{lemma}\thlabel{theorem:rewriting:domain:independence}
Let $M = \langle D, \Gamma\rangle$ be a system configuration, $s = \langle \mathit{db}, U, \mathit{sec},  T,V \rangle$ be an $M$-partial state, $u \in U$ be a user, and $\phi$ be a formula.
If $\phi$ is allowed and all views in $V$ are allowed, then $\phi^{\top}_{s,u}$, $\phi^{\bot}_{s,u}$, and  $\phi^{\mathit{rw}}_{s,u}$ are domain independent.
\end{lemma}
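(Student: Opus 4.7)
The plan is to exploit the fact, cited from Van Gelder and Topor, that every allowed formula is domain independent. So instead of arguing domain independence directly, I will show that the rewritings $\phi^{\top}_{s,u}$ and $\phi^{\bot}_{s,u}$ remain allowed whenever $\phi$ is allowed, and then argue that the inlining step producing $\phi^{\mathit{rw}}_{s,u}$ also preserves allowedness (using the hypothesis that every view in $V$ is allowed). Propositions~\ref{theorem:rewriting:domain:independent:1}--\ref{theorem:rewriting:domain:independent:4} are essentially tailor-made for this purpose, so the proof is a book-keeping exercise on top of them.

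First, I would prove a small structural invariant: $\mathit{free}(\phi^{v}_{s,u}) \subseteq \mathit{free}(\phi)$ for $v\in\{\top,\bot\}$. This is immediate by structural induction on $\phi$, since each rewriting clause either preserves or prunes (to $\top/\bot$) the free variables of a sub-formula. Granted this, I would verify the three allowedness clauses for $\phi^{v}_{s,u}$, for $v\in\{\top,\bot\}$, as follows. (a) For every $x\in \mathit{free}(\phi^{v}_{s,u})$, the containment just proved gives $x\in\mathit{free}(\phi)$, so $\mathit{gen}(x,\phi)$ holds by allowedness of $\phi$; Proposition~\ref{theorem:rewriting:domain:independent:1} then yields $\mathit{gen}(x,\phi^{v}_{s,u})$. (b) For any sub-formula $\exists x.\,\psi'$ of $\phi^{v}_{s,u}$, Proposition~\ref{theorem:rewriting:domain:independent:4} (instantiated with $Q=\exists$) supplies a sub-formula $\exists x.\,\psi$ of $\phi$ such that $\psi'=\psi^{v}_{s,u}$; allowedness of $\phi$ gives $\mathit{gen}(x,\psi)$, and Proposition~\ref{theorem:rewriting:domain:independent:2} gives $\mathit{gen}(x,\psi^{v}_{s,u})=\mathit{gen}(x,\psi')$. (c) The universal case is analogous, using Propositions~\ref{theorem:rewriting:domain:independent:4} and~\ref{theorem:rewriting:domain:independent:3}: from a $\forall x.\,\psi'$ in $\phi^{v}_{s,u}$ we locate a $\forall x.\,\psi$ in $\phi$ with $\psi'=\psi^{v}_{s,u}$, and from $\mathit{gen}(x,\neg\psi)$ (allowedness of $\phi$) we obtain $\mathit{gen}(x,(\neg\psi)^{v}_{s,u})$, which via the equivalence $(\neg\psi)^{v}_{s,u}\equiv\neg\psi^{\neg v}_{s,u}$ from Proposition~\ref{theorem:rewriting:equivalences} gives us $\mathit{gen}(x,\neg\psi')$ in the appropriate dualised form; then Van Gelder--Topor closure yields what is needed for the allowedness clause.

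For $\phi^{\mathit{rw}}_{s,u}$, recall that by definition it equals $\mathit{inline}_{V,D}(\neg\chi^{\top}_{s,u}\wedge\chi^{\bot}_{s,u})$ with $\chi=\mathit{extVoc}_{V,D}(\phi)$. The pre-inline formula is a conjunction (and negation) of formulas shown to be allowed by the argument above, and allowedness is stable under conjunction and negation. The inline step replaces each predicate symbol $R_{\overline{i}}$ of the extended vocabulary by its defining formula, which is an allowed existential projection of a relation schema or of an allowed view (by the hypothesis that every view in $V$ is allowed). Because the substituted formula is itself allowed and introduces only fresh bound variables, a standard structural induction shows $\mathit{gen}$ is preserved after substitution, so the overall inlined formula remains allowed and hence domain independent.

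The main obstacle I anticipate is the universal-quantifier case in step (c) above: the rewriting dualises polarity ($v\mapsto\neg v$) under negation, so one must carefully line up $(\neg\psi)^{v}_{s,u}$ with what the allowedness condition requires, using Proposition~\ref{theorem:rewriting:equivalences} at exactly the right place. The second delicate point is the inlining phase, where one must verify that substituting an allowed formula for a predicate symbol preserves $\mathit{gen}$ for all free variables of that predicate; this is routine but must be done carefully because $\mathit{gen}$ is defined syntactically and the substitution can create new contexts in which the generator rules must be re-derived. Everything else reduces to a direct appeal to Propositions~\ref{theorem:rewriting:equivalences}--\ref{theorem:rewriting:domain:independent:4}.
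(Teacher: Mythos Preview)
Your proposal is correct and follows essentially the same route as the paper: use Propositions~\ref{theorem:rewriting:domain:independent:1}--\ref{theorem:rewriting:domain:independent:4} to establish that $\phi^{\top}_{s,u}$ and $\phi^{\bot}_{s,u}$ are allowed, then invoke Van Gelder--Topor for domain independence. The paper's own proof is a three-line sketch that simply cites these four propositions and leaves the book-keeping you spell out in steps~(a)--(c) implicit; your more detailed unfolding (including the free-variable containment and the polarity manipulation via Proposition~\ref{theorem:rewriting:equivalences} in the universal case) is exactly what those propositions were engineered for.

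The one place you diverge slightly is the treatment of $\phi^{\mathit{rw}}_{s,u}$. You argue syntactically that inlining preserves allowedness, which is where you actually spend the hypothesis that all views in $V$ are allowed. The paper instead argues semantically: $\mathit{inline}_{V,D}$ replaces extended-vocabulary predicates by semantically equivalent formulae, so domain independence of $\neg\psi^{\top}_{s,u}\wedge\psi^{\bot}_{s,u}$ transfers directly to the inlined result without re-verifying allowedness. Both arguments are valid; yours is more laborious but makes explicit use of the view hypothesis, while the paper's shortcut sidesteps the substitution-and-$\mathit{gen}$ bookkeeping you flagged as the second delicate point.
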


\begin{proof}
From \thref{theorem:rewriting:domain:independent:1}, \thref{theorem:rewriting:domain:independent:2}, \thref{theorem:rewriting:domain:independent:3}, and \thref{theorem:rewriting:domain:independent:4}, it follows that if $\phi$ is allowed, then  both $\phi^{\top}_{s,u}$ and $\phi^{\bot}_{s,u}$ are allowed.
Since every allowed formula is domain independent \cite{VanGelder:1991:STR:114325.103712}, it follows that both $\phi^{\top}_{s,u}$ and $\phi^{\bot}_{s,u}$ are domain independent.
Finally, the domain independence of $\phi^{\mathit{rw}}_{s,u}$  follows easily from its definition and the domain independence of $\phi^{\top}_{s,u}$ and $\phi^{\bot}_{s,u}$.
\end{proof}

We now prove the main result of this section, namely that the $\mathit{secure}$ function is, indeed, a sound, under-approximation of the notion of judgment's security.

\begin{lemma}\thlabel{theorem:secure:sound:under:approximation}
Let $P = \langle M, f \rangle$ be an \accessControlConfiguration{}, $L$ be the $P$-LTS, $u \in {\cal U}$ be a user, $r \in \mathit{traces}(L)$ be an $L$-run, $\phi \in RC_{\mathit{bool}}$ is a sentence, and $1 \leq i \leq |r|$.
Furthermore, let $s$ be the  $i$-th state in $r$.
The following statements hold:
\begin{compactenum}
\item Given a judgment $ r, i \attMod \phi $, if $\mathit{secure}(u,\phi,s) = \top$, then $\mathit{secure}^{\mathit{data}}_{P,u}(r, i \attMod \phi)$ holds. 
\item Given a judgment $ r, i \attMod \phi$, if $\mathit{secure}(u,\phi,s) = \top$, then $\mathit{secure}_{P,u}(r, i \attMod \phi)$ holds.
\end{compactenum}
\end{lemma}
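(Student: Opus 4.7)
The plan is to reduce part (2) to part (1) via \thref{theorem:ibsec:correctness:secure:3}, which establishes that data-security is a sound under-approximation of security with respect to $\cong_{P,u}$. So the real work is in part (1), and my approach will be to combine the three technical lemmas established about the rewritings: \thref{theorem:rewriting:invariants} (soundness of the rewritings with respect to the original formula), \thref{theorem:rewriting:secure} (both $\phi^{\top}_{s,u}$ and $\phi^{\bot}_{s,u}$ are themselves data-secure), and \thref{theorem:rewriting:equivalent:modulo:indistinguishable:state} (the rewritings depend only on the data-indistinguishability class of $s$).

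First I would unfold the hypothesis: $\mathit{secure}(u,\phi,s) = \top$ means $[\phi^{\mathit{rw}}_{s,u}]^{s.\mathit{db}} = \bot$, and by definition $\phi^{\mathit{rw}}_{s,u} = \mathit{inline}_{V,D}(\neg \psi^{\top}_{s,u} \wedge \psi^{\bot}_{s,u})$ where $\psi = \mathit{extVoc}_{V,D}(\phi)$. Since inlining view predicates preserves the semantics, this reduces to a case split: either $[\psi^{\top}_{s,u}]^{s.\mathit{db}} = \top$ or $[\psi^{\bot}_{s,u}]^{s.\mathit{db}} = \bot$.

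Fix an arbitrary $s' \in \llbracket \mathit{pState}(s)\rrbracket^{\mathit{data}}_{u,M}$. I need to show $[\phi]^{s'.\mathit{db}} = [\phi]^{s.\mathit{db}}$. Consider the first case. By \thref{theorem:rewriting:secure}, the judgment $r, i \attMod \psi^{\top}_{s,u}$ is data-secure, hence $[\psi^{\top}_{s,u}]^{s'.\mathit{db}} = [\psi^{\top}_{s,u}]^{s.\mathit{db}} = \top$. By \thref{theorem:rewriting:equivalent:modulo:indistinguishable:state}, $\psi^{\top}_{s,u} = \psi^{\top}_{s',u}$ syntactically, so $[\psi^{\top}_{s',u}]^{s'.\mathit{db}} = \top$; applying \thref{theorem:rewriting:invariants} (with the empty assignment) at $s'$ yields $[\psi]^{s'.\mathit{db}} = \top$ and hence $[\phi]^{s'.\mathit{db}} = \top$. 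Applying \thref{theorem:rewriting:invariants} at $s$ similarly gives $[\phi]^{s.\mathit{db}} = \top$, so the two agree. The second case is symmetric, using $\phi^{\bot}$ and propagating falsity instead of truth. This shows data-security of $r, i \attMod \phi$ and proves part (1); part (2) then follows directly from \thref{theorem:ibsec:correctness:secure:3}.

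The main delicate point will be the bookkeeping around the extended vocabulary: $\phi$ is a sentence over $D$, whereas the rewriting machinery operates on $\mathit{extVoc}_{V,D}(\phi)$. I need to argue that (a) $\mathit{extVoc}_{V,D}(\phi)$ and $\phi$ are semantically equivalent on any database state (which holds because the projection predicates in the extended vocabulary are by definition the corresponding existential subformulae), and (b) the final $\mathit{inline}_{V,D}$ step preserves meaning as well, so the truth value of $\phi^{\mathit{rw}}_{s,u}$ on $s.\mathit{db}$ really is the truth value of $\neg \psi^{\top}_{s,u} \wedge \psi^{\bot}_{s,u}$. These equivalences are built into the definitions, so I will note them briefly and otherwise carry out the chain of implications above without fuss.
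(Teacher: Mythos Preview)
Your proposal is correct and follows essentially the same approach as the paper: reduce (2) to (1) via \thref{theorem:ibsec:correctness:secure:3}, then for (1) unfold $\mathit{secure}$ into the disjunction $[\phi^{\top}_{s,u}]^{s.\mathit{db}} = \top$ or $[\phi^{\bot}_{s,u}]^{s.\mathit{db}} = \bot$ and, in each case, chain \thref{theorem:rewriting:secure}, \thref{theorem:rewriting:equivalent:modulo:indistinguishable:state}, and \thref{theorem:rewriting:invariants} exactly as you do. The paper handles the $\mathit{extVoc}$/$\mathit{inline}$ bookkeeping the same way you suggest, simply noting that these transformations are semantics-preserving and then suppressing them.
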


\begin{proof}
Note that the second statement follows trivially from \thref{theorem:ibsec:correctness:secure:3} and the first statement.
Therefore, in the following we prove just that given a judgment $r, i \attMod \phi$, if $\mathit{secure}(u,\phi,s) = \top$, then $\mathit{secure}^{\mathit{data}}_{P,u}(r, i \attMod \phi)$ holds. 

Let $P = \langle M, f \rangle$ be an \accessControlConfiguration{}, $L$ be the $P$-LTS, $u \in {\cal U}$ be a user, $r \in \mathit{traces}(L)$ be an $L$-run, $\phi \in RC_{\mathit{bool}}$ is a sentence, and $1 \leq i \leq |r|$.
Furthermore, let $s = \langle \mathit{db}, U, \mathit{sec}, T, V, c \rangle$ be the  $i$-th state in $r$.
Assume that $\mathit{secure}(u,\phi,s) = \top$.
From this and $\emph{secure}$'s definition, $[\phi^{\mathit{rw}}_{s,u}]^{\mathit{db}} \\ = \bot$.
In the following, with a slight abuse of notation we ignore the $\mathit{inline}$ and $\mathit{ext}$ functions in $\phi^{\mathit{rw}}_{s,u}$'s definition.
This is without loss of generality since $\mathit{inline}$ and $\mathit{ext}$ do not modify $\phi$'s result.
From this and $\phi^{\mathit{rw}}_{s,u}$'s definition, it follows that either $[\phi^{\mathit{\top}}_{s,u}]^{\mathit{db}} = \top$ or $[\phi^{\mathit{\bot}}_{s,u}]^{\mathit{db}} = \bot$.
Note that from \thref{theorem:rewriting:secure}, it follows that $\mathit{secure}^{\mathit{data}}_{P,u}(r, i \attMod \phi^{\top}_{s,u})$ and  $\mathit{secure}^{\mathit{data}}_{P,u}(r, i \attMod \phi^{\bot}_{s,u})$.
Furthermore, let $\Delta$ be the equivalence class $\llbracket \mathit{pState}(s)\rrbracket_{u,M}^\mathit{data}$.
There are two cases:
\begin{compactenum}
\item $[\phi^{\mathit{\top}}_{s,u}]^{\mathit{db}} = \top$. 
From $\mathit{secure}^{\mathit{data}}_{P,u}(r, i \attMod \phi^{\top}_{s,u})$, it follows that for all $s', s'' \in \Delta$, $[\phi^{\top}_{s,u}]^{s'.\mathit{db}} = [\phi^{\top}_{s,u}]^{s''.\mathit{db}}$.
From this, $s \in \Delta$, and $[\phi^{\mathit{\top}}_{s,u}]^{\mathit{db}} = \top$, it follows that  $[\phi^{\top}_{s,u}]^{s'.\mathit{db}} = \top$ for all $s' \in \Delta$.
From \thref{theorem:rewriting:equivalent:modulo:indistinguishable:state}, it follows that for all $s', s'' \in \Delta$, $\phi^{\top}_{s,u} = \phi^{\top}_{s',u} = \phi^{\top}_{s'',u}$.
From this and the fact that for all $s' \in \Delta$, $[\phi^{\top}_{s,u}]^{s'.\mathit{db}} = \top$, it follows that for all $s' \in \Delta$, $[\phi^{\top}_{s',u}]^{s'.\mathit{db}} = \top$.
From this and \thref{theorem:rewriting:invariants}, it follows that for all $s' \in \Delta$, $[\phi]^{s'.\mathit{db}} = \top$.
From this, it follows that for all $s', s'' \in \Delta$, $[\phi]^{s'.\mathit{db}} = [\phi]^{s''.\mathit{db}}$.
From this, $r$'s definition, and $\mathit{secure}^{\mathit{data}}_{P,u}$, it follows that $\mathit{secure}^{\mathit{data}}_{P,u}(r,i \attMod \phi)$.

\item $[\phi^{\mathit{\bot}}_{s,u}]^{\mathit{db}} = \bot$.
From $\mathit{secure}^{\mathit{data}}_{P,u}(r, i \attMod \phi^{\bot}_{s,u})$, it follows that for all $s', s'' \in \Delta$, $[\phi^{\bot}_{s,u}]^{s'.\mathit{db}} = [\phi^{\bot}_{s,u}]^{s''.\mathit{db}}$.
From this, $s \in \Delta$, and $[\phi^{\mathit{\bot}}_{s,u}]^{\mathit{db}} = \bot$, it follows that  $[\phi^{\bot}_{s,u}]^{s'.\mathit{db}} = \bot$ for all $s' \in \Delta$.
From \thref{theorem:rewriting:equivalent:modulo:indistinguishable:state}, it follows that for all $s', s'' \in \Delta$, $\phi^{\bot}_{s,u} = \phi^{\bot}_{s',u} = \phi^{\bot}_{s'',u}$.
From this and the fact that for all $s' \in \Delta$, $[\phi^{\bot}_{s,u}]^{s'.\mathit{db}} = \bot$, it follows that for all $s' \in \Delta$, $[\phi^{\bot}_{s',u}]^{s'.\mathit{db}} = \bot$.
From this and \thref{theorem:rewriting:invariants}, it follows that for all $s' \in \Delta$, $[\phi]^{s'.\mathit{db}} = \bot$.
From this, it follows that for all $s', s'' \in \Delta$, $[\phi]^{s'.\mathit{db}} = [\phi]^{s''.\mathit{db}}$.
From this, $r$'s definition, and $\mathit{secure}^{\mathit{data}}_{P,u}$, it follows that $\mathit{secure}^{\mathit{data}}_{P,u}(r,i \attMod \phi)$.
\end{compactenum}
This completes the proof of our claim.
\end{proof}

\thref{theorem:secure:equivalent:modulo:indistinguishable:state} proves that the $\emph{secure}$ function produces the same result for any two indistinguishable states.

\begin{lemma}\thlabel{theorem:secure:equivalent:modulo:indistinguishable:state}
Let $M$ be a system configuration, $u \in U$ be a user, $s,s' \in \Omega_{M}$ be two $M$-states such that $\mathit{pState}(s) \cong_{u, M}^{\mathit{data}} \mathit{pState}(s')$,  and $\phi$ be a sentence.
Then, $\mathit{secure}(u,\phi, s) = \top$ iff $\mathit{secure}(u,\phi, s') = \top$.
\end{lemma}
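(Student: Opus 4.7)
The plan is to combine Lemma~\ref{theorem:rewriting:equivalent:modulo:indistinguishable:state} (syntactic stability of the rewriting under data-indistinguishability) with Lemma~\ref{theorem:rewriting:secure} (semantic security of the rewriting on a data-equivalence class), and then unfold the definition of $\mathit{secure}$.

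First, unfolding the definition, $\mathit{secure}(u,\phi,s) = \top$ iff $[\phi^{\mathit{rw}}_{s,u}]^{s.\mathit{db}} = \bot$, and analogously for $s'$. So it suffices to establish the single equation
\[
[\phi^{\mathit{rw}}_{s,u}]^{s.\mathit{db}} \;=\; [\phi^{\mathit{rw}}_{s',u}]^{s'.\mathit{db}}.
\]
I would split this into two steps using an intermediate term $[\phi^{\mathit{rw}}_{s,u}]^{s'.\mathit{db}}$.

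For the first half, $[\phi^{\mathit{rw}}_{s,u}]^{s.\mathit{db}} = [\phi^{\mathit{rw}}_{s,u}]^{s'.\mathit{db}}$, I would appeal to Lemma~\ref{theorem:rewriting:secure}. That lemma asserts, for any run $r$ whose $i$-th state is $s$, that $\mathit{secure}^{\mathit{data}}_{P,u}(r, i \attMod \phi^{\mathit{rw}}_{s,u})$ holds, which unfolds (by the definition of $\mathit{secure}^{\mathit{data}}_{P,u}$) to the statement that $\phi^{\mathit{rw}}_{s,u}$ evaluates identically on the databases of any two partial states in $\llbracket \mathit{pState}(s)\rrbracket_{u,M}^{\mathit{data}}$. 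Since $\mathit{pState}(s) \cong_{u,M}^{\mathit{data}} \mathit{pState}(s')$, both $\mathit{pState}(s)$ and $\mathit{pState}(s')$ lie in this equivalence class, giving the desired equality.

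For the second half, $[\phi^{\mathit{rw}}_{s,u}]^{s'.\mathit{db}} = [\phi^{\mathit{rw}}_{s',u}]^{s'.\mathit{db}}$, I would apply Lemma~\ref{theorem:rewriting:equivalent:modulo:indistinguishable:state}, which directly yields the syntactic identity $\phi^{\mathit{rw}}_{s,u} = \phi^{\mathit{rw}}_{s',u}$ under the hypothesis $\mathit{pState}(s) \cong_{u,M}^{\mathit{data}} \mathit{pState}(s')$; the equality of evaluations is then immediate. Chaining the two halves closes the equation and hence the biconditional.

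The only subtle point, and the one I expect to be the main obstacle, is the invocation of Lemma~\ref{theorem:rewriting:secure}, which is phrased over runs in $\mathit{traces}(L)$: to apply it I need $s$ to be reachable in the $P$-LTS. If $s$ is not guaranteed reachable, the cleanest fix is to observe that the proof of Lemma~\ref{theorem:rewriting:secure} never actually uses the history of the run, only the data-equivalence class of the chosen state, and therefore generalizes verbatim to arbitrary $M$-states; alternatively, since $\mathit{secure}^{\mathit{data}}$ depends only on $\llbracket\mathit{pState}(s)\rrbracket_{u,M}^{\mathit{data}}$, one can redo the structural induction on $\phi$ exactly as in the proof of Lemma~\ref{theorem:rewriting:secure} (base case $R(\overline{x})$ uses that $R^{\top}_{s,u}$ and $R^{\bot}_{s,u}$ only mention tables/views in $\mathit{AUTH}_{s,u}$, which are preserved by $\cong_{u,M}^{\mathit{data}}$). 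Either route delivers the required first equality, and the lemma follows.
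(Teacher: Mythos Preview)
Your proposal is correct and follows essentially the same approach as the paper: both combine Lemma~\ref{theorem:rewriting:equivalent:modulo:indistinguishable:state} (syntactic equality of the rewritings) with Lemma~\ref{theorem:rewriting:secure} (equality of evaluations across the data-equivalence class), the only cosmetic difference being that the paper packages the argument as a proof by contradiction rather than chaining the two equalities directly. Your caution about the reachability hypothesis in Lemma~\ref{theorem:rewriting:secure} is well-placed; the paper's own proof glosses over exactly this point in the same way.
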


\begin{proof}
Let $M$ be a system configuration, $u \in U$ be a user, $s = \langle \mathit{db}, U, \mathit{sec}, T, V, c\rangle$ and $s' = \langle \mathit{db}', U', \mathit{sec}', T', V', c'\rangle$ be two $M$-states such that $\mathit{pState}(s) \cong_{u,M}^{\mathit{data}} \mathit{pState}(s')$,  and $\phi$ be a sentence.
We now prove that  $\mathit{secure}(u,\phi, s) = \mathit{secure}(u, \\ \phi, s')$.
Assume, for contradiction's sake, that  $\mathit{secure}  (u,\phi, s) \neq \mathit{secure}  (u,\phi, s')$.
From this, it follows that $[\phi^{\mathit{rw}}_{s,u}]^{db}  \neq [\phi^{\mathit{rw}}_{s',u}]^{\mathit{db}'}$.
From $\mathit{pState}(s) \cong_{u,M}^{\mathit{data}} \mathit{pState}(s')$ and \thref{theorem:rewriting:equivalent:modulo:indistinguishable:state}, it follows that $\phi^{\mathit{rw}}_{s,u} = \phi^{\mathit{rw}}_{s',u}$.
From this and $[\phi^{\mathit{rw}}_{s,u}]^{db} \neq [\phi^{\mathit{rw}}_{s',u}]^{\mathit{db}'}$, it follows that $[\phi^{\mathit{rw}}_{s,u}]^{db}  \neq [\phi^{\mathit{rw}}_{s,u}]^{\mathit{db}'}$.
This contradicts $\mathit{secure}^{\mathit{data}}_{P,u}(r,i \\ \attMod \phi^{\mathit{rw}}_{s,u})$, which has been proved in \thref{theorem:rewriting:secure}.
This completes the proof of our claim.
\end{proof}

\subsection{Data Confidentiality Proofs}\label{sect:data:conf:proofs}
In this section, we first  prove some simple results about $f_{\mathit{conf}}^{u}$.
Afterwards, we prove our main result, namely that $f_{\mathit{conf}}^{u}$ provides \confidentiality{} with respect to the user $u$.

\begin{lemma}\thlabel{theorem:f:conf:soundness}
Let $M = \langle D, \Gamma\rangle$ be a system configuration, $u \in {\cal U}$ be a user, $s,s' \in \Omega_{M}$ be two $M$-states such that $\mathit{pState}(s) \cong_{u,M}^{\mathit{data}} \mathit{pState}(s')$, $\mathit{invoker}(s) = \mathit{invoker}(s')$, and $\mathit{tr}(s) = \mathit{tr}(s')$,  and $a$ be an action in ${\cal A}_{D,{\cal U}}$.
Then, $f^{u}_{\mathit{conf}}(s,a) 
\\ = f^{u}_{\mathit{conf}}(s',a)$.
\end{lemma}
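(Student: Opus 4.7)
The plan is to proceed by a case analysis on the shape of the action $a$ according to the outermost definition of $f^{u}_{\mathit{conf}}$, and in each case to show that every sub-expression evaluates identically in $s$ and $s'$. Let $s = \langle \mathit{db}, U, \mathit{sec}, T, V, c\rangle$ and $s' = \langle \mathit{db}', U', \mathit{sec}', T', V', c'\rangle$. From $\mathit{pState}(s) \cong_{u,M}^{\mathit{data}} \mathit{pState}(s')$ we obtain $U = U'$, $\mathit{sec} = \mathit{sec}'$, $T = T'$, and $V = V'$; hence $\mathit{permissions}(s,u) = \mathit{permissions}(s',u)$. Together with the hypotheses $\mathit{invoker}(s) = \mathit{invoker}(s')$ and $\mathit{tr}(s) = \mathit{tr}(s')$, all the guards of the case distinctions inside $f^{u}_{\mathit{conf},\mathtt{S}}$, $f^{u}_{\mathit{conf},\mathtt{I,D}}$ and $f^{u}_{\mathit{conf},\mathtt{G}}$ (namely $u' = u$, $\mathit{invoker}(\cdot) = u$, $\mathit{trigger}(\cdot) = \epsilon$ or not, membership of grants in $\mathit{permissions}$) yield the same branch on $s$ and on $s'$.

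Having lined up the branches, I would discharge each case separately. For the \texttt{SELECT} case, the only data-dependent subexpression is $\mathit{secure}(u,q,s)$, which equals $\mathit{secure}(u,q,s')$ by \thref{theorem:secure:equivalent:modulo:indistinguishable:state}. For the \texttt{INSERT}/\texttt{DELETE} case, the purely syntactic helpers $\mathit{getInfo}$, $\mathit{Dep}$, $\mathit{getInfoS}$, and $\mathit{getInfoV}$ depend only on $a$ and on the fixed $\Gamma$, so they produce identical inputs to $\mathit{secure}$; all the $\mathit{secure}$ calls then coincide on $s$ and $s'$ by the same lemma. For the \texttt{GRANT} case, the guard depends only on $u$, $\mathit{invoker}(\cdot)$, $\mathit{tr}(\cdot)$, and $\mathit{permissions}(\cdot,u)$, each of which agrees on $s$ and $s'$. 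All remaining cases produce $\top$ unconditionally.

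The main obstacle is the $\mathit{noLeak}$ subroutine inside $f^{u}_{\mathit{conf},\mathtt{I,D}}$, since its definition explicitly mentions $s$ through the call $\mathit{tDet}(v,s,M)$. The key observation I would record is that $\mathit{tDet}(v,s,M)$ is the least set $T' \subseteq D$ with $\mathit{apprDet}(T',\emptyset,\phi,s,M) = \top$, and $\mathit{apprDet}$ depends on $s$ only through $\mathit{extend}(M,s,V)$, which in turn depends on $s$ only through $V$ (see the definition in Appendix~\ref{app:enforcement:eopsec}, and \thref{theorem:eopsec:enforcement:extend}). Since $V = V'$, we get $\mathit{extend}(M,s,V) = \mathit{extend}(M,s',V')$, hence $\mathit{tDet}(v,s,M) = \mathit{tDet}(v,s',M)$ for every $v$. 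Combined with $\mathit{permissions}(s,u) = \mathit{permissions}(s',u)$ and ${\cal VIEW}_{D}$ being independent of the state, every clause of $\mathit{noLeak}$ is evaluated over identical data, so $\mathit{noLeak}(s,a,u) = \mathit{noLeak}(s',a,u)$.

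Putting these observations together, each branch of the outer case analysis returns the same value on $s$ and $s'$, and therefore $f^{u}_{\mathit{conf}}(s,a) = f^{u}_{\mathit{conf}}(s',a)$, which completes the proof. The whole argument is essentially a bookkeeping exercise; the only non-trivial ingredients are \thref{theorem:secure:equivalent:modulo:indistinguishable:state} for the secure-check equalities and the observation above about $\mathit{tDet}$ being state-independent once the view set is fixed.
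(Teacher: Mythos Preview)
Your proposal is correct and follows essentially the same approach as the paper: a case analysis on the action, using \thref{theorem:secure:equivalent:modulo:indistinguishable:state} for the $\mathit{secure}$ calls, equality of $\mathit{permissions}$ from $\mathit{sec} = \mathit{sec}'$ for the \texttt{GRANT} case, and the observation that $\mathit{tDet}$ is independent of the database state for the $\mathit{noLeak}$ equality. Your justification of the latter via $\mathit{apprDet}$ and $\mathit{extend}$ is in fact slightly more explicit than the paper's, which simply remarks that ``query determinacy does not consider the database state.''
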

\begin{proof}

Let $M = \langle D, \Gamma\rangle$ be a system configuration, $u \in {\cal U}$ be a user, $s =\langle \mathit{db}, U, \mathit{sec}, T, V, c \rangle$ and $s' =\langle \mathit{db}', U', \mathit{sec}', T', V', \\ c' \rangle$  be two $M$-states such that $\mathit{pState}(s) \cong_{u,M}^{\mathit{data}} \mathit{pState}(s')$, $\mathit{invoker}(s) = \mathit{invoker}(s')$, and $\mathit{tr}(s) = \mathit{tr}(s')$,  and $a$ be an action in ${\cal A}_{D,u}$.
There are a number of cases depending on the action $a$.
\begin{compactenum}
\item $a = \langle u', \mathtt{SELECT}, \phi \rangle$.
Assume, for contradiction's sake, that $f^{u}_{\mathit{conf},}(s,a) \neq f^{u}_{\mathit{conf}}(s',a)$.
This happens iff $\mathit{secure}(u, \\ \phi,s) \neq \mathit{secure}(u,\phi,s')$.
This contradicts \thref{theorem:secure:equivalent:modulo:indistinguishable:state} because $\mathit{pState}(s) \cong_{u,M}^{\mathit{data}} \mathit{pState}(s')$.

\item $a = \langle u', \mathtt{INSERT}, R, \overline{t} \rangle$.
We claim that $\mathit{noLeak}(s,a,u) =\mathit{noLeak}(s',a,u)$.
Assume, for contradiction's sake, that $f^{u}_{\mathit{conf},}(s,a) \neq f^{u}_{\mathit{conf}}(s',a)$.
This happens iff there is a formula $\phi$, which has been derived using the $\mathit{getInfo}$, $\mathit{getInfoV}$, or $\mathit{getInfoD}$ functions, such that $\mathit{secure}(u,  \phi, \\ s) \neq \mathit{secure}(u,\phi,s')$.
This contradicts \thref{theorem:secure:equivalent:modulo:indistinguishable:state} because $\mathit{pState}(s) \cong_{u,M}^{\mathit{data}} \mathit{pState}(s')$.

We prove our claim that $\mathit{noLeak}(s,a,u) =\mathit{noLeak}(s',a, \\ u)$ for any two states $s$ and $s'$ such that $\mathit{pState}(s) \cong_{u,M}^{\mathit{data}} \mathit{pState}(s')$.
Assume, for contradiction's sake, that this is not the case.
Without loss of generality we assume that $\mathit{noLeak}(s,a,u) = \top$ and $\mathit{noLeak}(s',a,u) = \bot$.
From $\mathit{noLeak}(s,a,u) = \top$, it follows that for all views $V$ such that  $\langle \oplus, \mathtt{SELECT}, V \rangle \in \mathit{permissions}(s,u)$ and $R \in \mathit{tDet}(V, s,M)$, for all $o \in \mathit{tDet} (V, s,M)$,  $\langle \oplus, \mathtt{SELECT}, o \rangle$ is in $\mathit{permissions}(s,u)$.
From $\mathit{pState}(s) \cong_{u,M}^{\mathit{data}} \mathit{pState}(s')$, it follows that $\mathit{sec} = \mathit{sec}'$.
From this, $\mathit{permissions}(s,u) \\ = \mathit{permissions}(s',  u)$.
From $\mathit{noLeak}(s',a,u) = \bot$, there  are two views $V'$ and $o$ such that $\langle \oplus, \mathtt{SELECT}, V' \rangle \in \mathit{permissions}(s',  u)$, $\langle \oplus, \mathtt{SELECT}, o \rangle \not\in \mathit{permissions}(s',u)$, $R \in \mathit{tDet}(V',  s',M)$, and $o \in \mathit{tDet} (V',s',M)$.
Note that $\mathit{tDet} (V', s',M) = \mathit{tDet} (V',s,M)$ because query determinacy does not consider the database state.
From this and $\mathit{permissions}(s,u) = \mathit{permissions}(s', u)$, it follows that there is a view $V'$ such that $\langle \oplus, \mathtt{SELECT}, V' \rangle \in \mathit{permissions}(s,u)$ and $R \in \mathit{tDet}(V',s,M)$, such that there is a table $o \in \mathit{tDet} (V',s,M)$ for which $\langle \oplus, \mathtt{SELECT}, \\ o \rangle \not\in \mathit{permissions}(s,u)$.
This contradicts $\mathit{noLeak}(s,a,u)\\ = \top$.

\item $a = \langle u', \mathtt{DELETE}, R, \overline{t} \rangle$.
The proof of this case is similar to the  $a = \langle u', \mathtt{INSERT}, R, \overline{t} \rangle$ case.

\item $a = \langle \mathit{op}, u'', p, u' \rangle$, where $\mathit{op} \in \{\oplus, \oplus^*\}$.
Assume, for contradiction's sake, that $f^{u}_{\mathit{conf},}(s,a) \neq f^{u}_{\mathit{conf}}(s',a)$.
Note that this happens iff $p = \langle \mathtt{SELECT}, o \rangle$ for some $o$.
Without loss of generality, we further assume that $f^{u}_{\mathit{conf},}(s,a) \\ = \top$ and $f^{u}_{\mathit{conf}}(s',a) = \bot$.
From $f^{u}_{\mathit{conf},}(s,a) = \top$, it follows that $\langle \oplus, \mathtt{SELECT}, o \rangle \in \mathit{permissions}(s,u)$.
From $\mathit{pState}(s) \cong_{u,M}^{\mathit{data}} \mathit{pState}(s')$,  it follows  $\mathit{permissions} (s,u) \\ = \mathit{permissions}(s',  u)$.
From this and $\langle \oplus, \mathtt{SELECT}, o \rangle \in \mathit{permissions}(s,u)$, it follows that $\langle \oplus, \mathtt{SELECT}, o \rangle$ is in $\mathit{permissions}(s',u)$.
From $f^{u}_{\mathit{conf},}(s',a) = \bot$, it follows that $\langle \oplus, \mathtt{SELECT}, o \rangle \not\in \mathit{permissions}(s,u)$.
This contradicts $\langle \oplus, \mathtt{SELECT}, o \rangle \in \mathit{permissions}(s',u)$.

\item For any other action $a$, the proof is trivial.
\end{compactenum}
This completes the proof of our claim.
\end{proof}

\begin{lemma}\thlabel{theorem:reasoning:preserves:security}
Let $P$ be an \accessControlConfiguration{}, $L$ be the $P$-LTS, $r \in \mathit{traces}(L)$ be a run, $u$ be a user, $\gamma$ be a sentence, and $\Phi$ be a set of sentences such that $\Phi \models_{\mathit{fin}} \gamma$.
If, for all $\phi \in \Phi$,  $\mathit{secure}_{P,u}(r,i \attMod \phi)$  holds and $[\phi]^{\mathit{last}(r).\mathit{db}} = \top$, then $\mathit{secure}_{P,u}(r,i \attMod \gamma)$ holds and $[\gamma]^{\mathit{last}(r^{i}).\mathit{db}} = \top$.
\end{lemma}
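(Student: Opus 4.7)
The plan is to establish both conclusions almost simultaneously by leveraging the fact that the semantic entailment $\Phi \models_{\mathit{fin}} \gamma$ propagates truth not just in the state $\mathit{last}(r^{i})$ itself, but in every indistinguishable state as well. First I would tackle the (easy) truth conclusion: since every $\phi \in \Phi$ satisfies $[\phi]^{\mathit{last}(r^{i}).\mathit{db}} = \top$ by hypothesis, and since $\Phi \models_{\mathit{fin}} \gamma$, the definition of semantic entailment over finite models immediately yields $[\gamma]^{\mathit{last}(r^{i}).\mathit{db}} = \top$.

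For the security conclusion, I would fix an arbitrary run $r' \in \mathit{traces}(L)$ with $r^{i} \cong_{P,u} r'$ and argue that $[\gamma]^{\mathit{last}(r^{i}).\mathit{db}} = [\gamma]^{\mathit{last}(r').\mathit{db}}$. The key observation is that, for each $\phi \in \Phi$, the assumption $\mathit{secure}_{P,u}(r, i \attMod \phi)$ combined with $r^{i} \cong_{P,u} r'$ gives $[\phi]^{\mathit{last}(r^{i}).\mathit{db}} = [\phi]^{\mathit{last}(r').\mathit{db}}$, and since the left-hand side is $\top$ by hypothesis, the right-hand side is also $\top$. Thus every hypothesis of $\Phi$ holds in $\mathit{last}(r').\mathit{db}$, and by $\Phi \models_{\mathit{fin}} \gamma$ we conclude $[\gamma]^{\mathit{last}(r').\mathit{db}} = \top$. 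Combined with $[\gamma]^{\mathit{last}(r^{i}).\mathit{db}} = \top$ from the first step, this gives the required equality, hence $\mathit{secure}_{P,u}(r, i \attMod \gamma)$ by Definition~\ref{definition:secure:judgment}.

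There is no real obstacle here: the proof is essentially a direct unfolding of the definitions of $\mathit{secure}_{P,u}$ and $\models_{\mathit{fin}}$, using that both $\mathit{last}(r^{i}).\mathit{db}$ and $\mathit{last}(r').\mathit{db}$ are finite $\Sigma$-structures (so they are legitimate models for $\models_{\mathit{fin}}$). The only subtlety worth flagging is that one must apply security \emph{pointwise} for each $\phi \in \Phi$ to an arbitrary but fixed $r'$, rather than trying to reason about $\Phi$ as a whole; the quantifier over indistinguishable runs in Definition~\ref{definition:secure:judgment} is on the outside, so this pointwise application is straightforward.
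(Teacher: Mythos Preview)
Your proposal is correct and follows essentially the same approach as the paper. The only cosmetic difference is that the paper frames the security conclusion as a proof by contradiction (assume $\gamma$ is not secure, obtain an $r'$ where $[\gamma]^{\mathit{last}(r').\mathit{db}} = \bot$, then derive $[\gamma]^{\mathit{last}(r').\mathit{db}} = \top$ exactly as you do), whereas you give the direct argument; the underlying reasoning is identical.
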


\begin{proof}
Let $P$ be an \accessControlConfiguration{}, $L$ be the $P$-LTS, $r \in \mathit{traces}(L)$ be a run, $u$ be a user, $\gamma$ be a sentence, and $\Phi$ be a set of sentences such that $\Phi \models_{\mathit{fin}} \gamma$ such that for all $\phi \in \Phi$,  $\mathit{secure}_{P,u}(r,i \attMod \phi)$ holds and $[\phi]^{\mathit{last}(r^{i}).\mathit{db}} = \top$.
We now show that $\mathit{secure}_{P,u}(r,i \attMod \gamma)$ holds and $[\gamma]^{\mathit{last}(r^{i}).\mathit{db}} = \top$.
From $\Phi \models_{\mathit{fin}} \gamma$, the fact that for all $\phi \in \Phi$, $[\phi]^{\mathit{last}(r^{i}).\mathit{db}} = \top$, and $ \models_{\mathit{fin}}$'s definition, it follows that $[\gamma]^{\mathit{last}(r^{i}).\mathit{db}} = \top$.
Assume, for contradiction's sake, that $\mathit{secure}_{P,u}(r,i \attMod \gamma)$ does not hold.
From this and $[\gamma]^{\mathit{last}(r^{i}).\mathit{db}} = \top$, it follows that there is a run $r' \in \mathit{traces}(L)$ such that $r^{i} \cong_{P,u} r'$ such that $[\gamma]^{\mathit{last}(r').\mathit{db}} = \bot$.
We claim that for all $\phi \in \Phi$, $[\phi]^{\mathit{last}(r').\mathit{db}} = \top$.
From this and $\Phi \models_{\mathit{fin}} \gamma$, it follows that $[\gamma]^{\mathit{last}(r').\mathit{db}} = \top$, which contradicts $[\gamma]^{\mathit{last}(r').\mathit{db}} = \bot$.

We now prove our claim that for all $\phi \in \Phi$, $[\phi]^{\mathit{last}(r').\mathit{db}} = \top$ for any trace $r'$ such that $r^{i} \cong_{P,u} r'$.
From $\mathit{secure}_{P,u}(r, i \attMod \phi)$, it follows that $[\phi]^{\mathit{last}(r^{i}).\mathit{db}} = [\phi]^{\mathit{last}(r').\mathit{db}}$.
From this and $[\phi]^{\mathit{last}(r^{i}).\mathit{db}} = \top$, it follows that $[\phi]^{\mathit{last}(r').\mathit{db}} = \top$.
\end{proof}

Before proving our main result, namely that $f_{\mathit{conf}}^{u}$ provides \confidentiality{} for the user $u$, we introduce the concept of an action that preserves the equivalence class induced by the indistinguishability relation $\cong_{P,u}$.

\begin{definition}\label{def:preserve:equiv:class}
Let $P = \langle M, f \rangle$ be an \accessControlConfiguration{}, where $M = \langle D,\Gamma\rangle$ is a system configuration and $f$ is an $M$-\acf{}, $L$ be the $P$-LTS, $r \in \mathit{traces}(L)$ be a run, $a$ be an action in ${\cal A}_{D,{\cal U}} \cup {\cal TRIGGER}_{D}$, and $u$ be a user in ${\cal U}$.
We denote by $\mathit{extend}(r, a)$, where $r$ is a run and $a$ is an action, the run $r' \in \mathit{traces}(L)$, where $s \in \Omega_{M}$ and  $r' = r \concat a \concat s$, obtained by executing the action $a$ at the end of the run $r'$. If there is no such run, then $\mathit{extend}(r,a)$ is undefined.
We say that $a$  \emph{preserves the equivalence class for $r$, $P$, and $u$} iff\begin{inparaenum}[(1)]
\item $\mathit{extend}(r,a)$ is defined, and
\item there is a bijection $b$ between $\llbracket r \rrbracket_{P,u}$ and $\llbracket \mathit{extend}(r,a) \rrbracket_{P,u}$ such that for all $r'  \in \llbracket r \rrbracket_{P,u}$, $\mathit{extend}(r',a)$ is defined and $b(r') = \mathit{extend}(r',a)$.
\end{inparaenum}
\end{definition}

\begin{lemma}\thlabel{theorem:secure:extend:on:runs:insert:delete}
Let $P = \langle M, f \rangle$ be an \accessControlConfiguration{}, where $M = \langle D,\Gamma\rangle$ is a system configuration and $f$ is an $M$-\acf{}, $L$ be the $P$-LTS, $u$ be a user in ${\cal U}$, $r$ be a run in $\mathit{traces}(L)$,  $a \in {\cal A}_{D,u}$ be an \texttt{INSERT} or \texttt{DELETE} action $\langle u, \mathit{op}, R, \overline{t} \rangle$,  $\phi$ be a sentence, and $i$ be such that $1 \leq i \leq |r|$, $\mathit{triggers}(\mathit{last}(r^{i})) = \epsilon$, and $r^{i+1} = \mathit{extend}(r^{i},a)$.
If\begin{inparaenum}[(1)]
\item $a$ preserves the equivalence class for $r^{i}$, $P$, and $u$, and
\item the execution of $a$ does not change any table in $\mathit{tables}(\phi)$ for any run $v \in \llbracket r^{i}\rrbracket_{P,u}$,
\end{inparaenum}
then $\mathit{secure}_{P,u}(r,i \attMod \phi)$ holds iff $\mathit{secure}_{P,u}(r,i+1 \attMod \phi)$ holds.
\end{lemma}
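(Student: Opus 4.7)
The plan is to exploit the bijection $b : \llbracket r^{i} \rrbracket_{P,u} \to \llbracket r^{i+1} \rrbracket_{P,u}$ given by hypothesis (1), namely $b(v) = \mathit{extend}(v,a)$, to transfer the constancy of $\phi$ across the two equivalence classes. The two hypotheses fit together perfectly: hypothesis (1) tells us that the equivalence classes at steps $i$ and $i{+}1$ are in one-to-one correspondence via appending $a$, while hypothesis (2) tells us that along each such transition the value of $\phi$ is preserved.

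First, I would unfold the definition of $\mathit{secure}_{P,u}$: the statement $\mathit{secure}_{P,u}(r, i \attMod \phi)$ is equivalent to saying that $[\phi]^{\mathit{last}(v).\mathit{db}}$ takes the same truth value for every $v \in \llbracket r^{i} \rrbracket_{P,u}$. Similarly for $\mathit{secure}_{P,u}(r, i{+}1 \attMod \phi)$ over $\llbracket r^{i+1} \rrbracket_{P,u}$. So the claim reduces to showing that the map $v \mapsto [\phi]^{\mathit{last}(v).\mathit{db}}$ is constant on $\llbracket r^{i} \rrbracket_{P,u}$ if and only if the analogous map is constant on $\llbracket r^{i+1} \rrbracket_{P,u}$.

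Second, I would establish the crucial semantic lemma: for every $v \in \llbracket r^{i} \rrbracket_{P,u}$, one has $[\phi]^{\mathit{last}(v).\mathit{db}} = [\phi]^{\mathit{last}(b(v)).\mathit{db}}$. By hypothesis (2), executing $a = \langle u, \mathit{op}, R, \overline{t} \rangle$ from $v$ does not change any table in $\mathit{tables}(\phi)$; writing $\mathit{last}(v).\mathit{db} = d$ and $\mathit{last}(b(v)).\mathit{db} = d'$, this means $d(R') = d'(R')$ for every $R' \in \mathit{tables}(\phi)$. A straightforward structural induction on $\phi$ (using the relational calculus semantics and the standard fact that a formula's value depends only on the interpretation of the predicates it actually mentions, after inlining view definitions) then gives $[\phi]^{d} = [\phi]^{d'}$. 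Note that it is important that $\mathit{tables}(\phi)$ already accounts for views occurring in $\phi$ by expanding them, which matches how we defined this set in Appendix~\ref{app:adv:model}.

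Finally, I would combine these pieces: since $b$ is a bijection and $[\phi]^{\mathit{last}(v).\mathit{db}} = [\phi]^{\mathit{last}(b(v)).\mathit{db}}$ for every $v \in \llbracket r^{i} \rrbracket_{P,u}$, the multiset of truth values of $\phi$ on the two equivalence classes is the same, and in particular one is a singleton if and only if the other is. Hence $\mathit{secure}_{P,u}(r, i \attMod \phi) \iff \mathit{secure}_{P,u}(r, i{+}1 \attMod \phi)$. The main obstacle is less a mathematical one than ensuring that the semantic lemma in the second step genuinely handles the case of $\phi$ containing views, which requires noticing that $\mathit{tables}(\phi)$ is defined over the view-expanded form of $\phi$; once this is observed, the induction is routine.
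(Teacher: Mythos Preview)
Your proposal is correct and takes essentially the same approach as the paper: both arguments hinge on the bijection $b(v)=\mathit{extend}(v,a)$ from hypothesis~(1) together with the preservation lemma $[\phi]^{\mathit{last}(v).\mathit{db}} = [\phi]^{\mathit{last}(b(v)).\mathit{db}}$ derived from hypothesis~(2). The only cosmetic differences are that the paper argues by contradiction (case-splitting on which direction of the iff fails) and, when proving the preservation lemma, explicitly separates the case where $a$ raises a security or integrity exception (so the database is unchanged) from the successful case---a distinction your direct use of hypothesis~(2) absorbs, since a failed execution trivially changes no table.
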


\begin{proof}
Let $P = \langle M, f \rangle$ be an \accessControlConfiguration{}, where $M = \langle D,\Gamma\rangle$ is a system configuration and $f$ is an $M$-\acf{}, $L$ be the $P$-LTS, $u$ be a user in ${\cal U}$, $r$ be a run in $\mathit{traces}(L)$,  $a \in {\cal A}_{D,u}$ be an \texttt{INSERT} or \texttt{DELETE} action $\langle u, \mathit{op}, R, \overline{t} \rangle$,  $\phi$ be a sentence, and $i$ be such that $1 \leq i \leq |r|$, $\mathit{triggers}(\mathit{last}(r^{i})) = \epsilon$, and $r^{i+1} = \mathit{extend}(r^{i},a)$.
Assume that\begin{inparaenum}[(1)]
\item  $a$ preserves the equivalence class for $r^{i}$, $P$, and $u$, and
\item the execution of $a$ does not change any table in $\mathit{tables}(\phi)$ for any run $v \in \llbracket r^{i}\rrbracket_{P,u}$. 
\end{inparaenum}
Without loss of generality, assume that $a$ is an \texttt{INSERT} action. 
In the following, we denote the $\mathit{extend}$ function by $e$.
Furthermore, we also denote the fact that $\mathit{secure}_{P,u}(r,i,u, \phi)$ does not hold as $\neg \mathit{secure}_{P,u}(r,i,u, \phi)$.
From Definition \ref{def:preserve:equiv:class} and $a$ preserves the equivalence class for $r^{i}$, $P$, and $u$, it follows that $e(r',a)$ is defined for any $r' \in \llbracket r^{i}\rrbracket_{P,u}$. 
Assume, for contradiction's sake, that our claim does not hold.
There are two cases:
\begin{compactitem}
\item $\mathit{secure}_{P,u}(r,i \attMod \phi)$ holds and $\mathit{secure}_{P,u}(r,i+1 \attMod \phi)$ does not hold.
From $\mathit{secure}_{P,u}(r,i \attMod \phi)$, it follows that for all $r' \in \llbracket r^{i}\rrbracket_{P,u}$, $[\phi]^{\mathit{last}(r').\mathit{db}} = [\phi]^{\mathit{last}(r^{i}).\mathit{db}}$.
We claim that $[\phi]^{\mathit{last}(r').\mathit{db}} = [\phi]^{\mathit{last}(e(r',a)).\mathit{db}}$ holds for any $r' \in \llbracket r^{i}\rrbracket_{P,u}$.
From this and  $[\phi]^{\mathit{last}(r').\mathit{db}} = [\phi]^{\mathit{last}(r^{i}).\mathit{db}}$ for all $r' \in \llbracket r^{i}\rrbracket_{P,u}$, it follows that $[\phi]^{\mathit{last}(r^{i}).\mathit{db}}  = [\phi]^{\mathit{last}(e(r',a)).\mathit{db}}$ holds for any $r' \in \llbracket r^{i}\rrbracket_{P,u}$.
From $\neg \mathit{secure}_{P,u}(r,i+1 \attMod \phi)$, it follows that there is a run $r' \in \llbracket r^{i+1}\rrbracket_{P,u}$ such that $[\phi]^{\mathit{last}(r^{i+1}).\mathit{db}} \neq [\phi]^{\mathit{last}(r').\mathit{db}}$.
From this, $[\phi]^{\mathit{last}(r').\mathit{db}} = [\phi]^{\mathit{last}(e(r',a)).\mathit{db}}$ for any $r' \in \llbracket r^{i}\rrbracket_{P,u}$, and $e(r^{i},a) = r^{i+1}$, it follows that $[\phi]^{\mathit{last}(r^{i}).\mathit{db}} \neq [\phi]^{\mathit{last}(r').\mathit{db}}$.
Let $b$ be the bijection showing that $a$ preserves the equivalence class with respect to $r^{i}$, $P$, and $u$.
From $e(r^{i},a) = r^{i+1}$ and $r' \in \llbracket r^{i+1}\rrbracket_{P,u}$, it follows that $r' \in \llbracket e(r^{i},a)\rrbracket_{P,u}$.
From this, it follows that there is a $r'' = b^{-1}(r')$ such that $r'' \in \llbracket r^{i}\rrbracket_{P,u}$ and $e(r'',a) = r'$.
From this and $[\phi]^{\mathit{last}(v).\mathit{db}} = [\phi]^{\mathit{last}(e(v,a)).\mathit{db}}$ for any $v \in \llbracket r^{i}\rrbracket_{P,u}$, it follows that $[\phi]^{\mathit{last}(r'').\mathit{db}} = [\phi]^{\mathit{last}(r').\mathit{db}}$.
From this and $[\phi]^{\mathit{last}(r^{i}).\mathit{db}} \neq [\phi]^{\mathit{last}(r').\mathit{db}}$, it follows that $[\phi]^{\mathit{last}(r^{i}).\mathit{db}} \neq [\phi]^{\mathit{last}(r'').\mathit{db}}$.
This contradicts the fact that for all $r' \in \llbracket r^{i}\rrbracket_{P,u}$, $[\phi]^{\mathit{last}(r').\mathit{db}} = [\phi]^{\mathit{last}(r^{i}).\mathit{db}}$.
Indeed, $r'' \in \llbracket r^{i}\rrbracket_{P,u}$ and $[\phi]^{\mathit{last}(r^{i}).\mathit{db}} \neq [\phi]^{\mathit{last}(r'').\mathit{db}}$.

We prove our claim that $[\phi]^{\mathit{last}(r').\mathit{db}} = [\phi]^{\mathit{last}(e(r',a)).\mathit{db}}$ holds for any $r' \in \llbracket r^{i}\rrbracket_{P,u}$.
Assume that this is not the case.
This implies that the content of one of the relations that determines $\phi$ is different in $\mathit{last}(r').\mathit{db}$ and $\mathit{last}(e(r',a)).\mathit{db}$.
This is impossible.
Indeed, if $a$'s execution has been successful, i.e., $\mathit{secEx}(\mathit{last}(e(r',a))) =\bot$ and $\mathit{Ex}(\mathit{last}(e(r',a))) = \emptyset$, then $a$'s execution does not change any table in  $\mathit{tables}(\phi)$, and the set of relations that determines $\phi$ is always a subset of $\mathit{tables}(\phi)$.
This leads to a contradiction, and, therefore,  $[\phi]^{\mathit{last}(r').\mathit{db}} \\ = [\phi]^{\mathit{last}(\mathit{e}(r',a)).\mathit{db}}$ holds. 
Similarly, if $a$'s execution has not been successful, i.e., $\mathit{secEx}(\mathit{last}(e(r',a))) = \top$ or $\mathit{Ex}(\mathit{last}(e(r',a))) \neq \emptyset$, then $\mathit{last}(r').\mathit{db}$ is the same as $\mathit{last}(\mathit{e}(r',a)).\mathit{db}$, and the claim holds trivially.

\item $\mathit{secure}_{P,u}(r,i+1 \attMod \phi)$ holds and  $\mathit{secure}_{P,u}(r,i \attMod \phi)$ does not hold.
We have already shown that $[\phi]^{\mathit{last}(r').\mathit{db}} \\ = [\phi]^{\mathit{last}(\mathit{e}(r',a)).\mathit{db}}$ holds for any $r' \in \llbracket r \rrbracket_{P,u}$.
From $\neg \mathit{secure}_{P,u}(r,i \attMod \phi)$, it follows that there is $r' \in \llbracket r^{i} \rrbracket_{P,u}$ such that $[\phi]^{\mathit{last}(r^{i}).\mathit{db}} \neq [\phi]^{\mathit{last}(r').\mathit{db}}$.
Let $b$ the bijection showing that $a$ preserves the equivalence class with respect to $r$, $P$, and $u$.
Since $r' \in \llbracket r^{i} \rrbracket_{P,u}$, then let $r'' = b(r') = e(r',a)$.
From $[\phi]^{\mathit{last}(r').\mathit{db}} = [\phi]^{\mathit{last}(\mathit{e}(r',a)).\mathit{db}}$ holds for any $r' \in \llbracket r^{i} \rrbracket_{P,u}$, it follows that  $[\phi]^{\mathit{last}(r^{i}).\mathit{db}} \neq [\phi]^{\mathit{last}(e(r',a)).\mathit{db}}$.
From this, $e(r^{i},a) = r^{i+1}$, and the fact that $[\phi]^{\mathit{last}(r').\mathit{db}} = [\phi]^{\mathit{last}(\mathit{e}(r',a)).\mathit{db}}$ holds for any $r' \in \llbracket r \rrbracket_{P,u}$, it follows that $[\phi]^{\mathit{last}(r^{i+1}).\mathit{db}} \neq [\phi]^{\mathit{last}(e(r',a)).\mathit{db}}$.
From this and $e(r',a) \in \llbracket r^{i+1} \rrbracket_{P,u}$, it follows $\neg \mathit{secure}_{P,u}\\(r,i+1 \attMod \phi)$.
This contradicts the fact that $\mathit{secure}_{P,u}(r, \\ i+1 \attMod \phi)$ holds.

\end{compactitem}
This completes the proof.
\end{proof}

\begin{lemma}\thlabel{theorem:secure:extend:on:runs:select:create}
Let $P = \langle M, f \rangle$ be an \accessControlConfiguration{}, where $M = \langle D,\Gamma\rangle$ is a system configuration and $f$ is an $M$-\acf{}, $L$ be the $P$-LTS, $u$ be a user in ${\cal U}$, $r$ be a run in $\mathit{traces}(L)$,  $a \in {\cal A}_{D,u}$ be a \texttt{SELECT} or \texttt{CREATE} action,  $\phi$ be a sentence, and $i$ be such that $1 \leq i \leq |r|$, $\mathit{triggers}(\mathit{last}(r^{i})) = \epsilon$, and $r^{i+1} = \mathit{extend}(r^{i},a)$.
If $a$ preserves the equivalence class for $r^{i}$, $P$, and $u$, then $\mathit{secure}_{P,u}(r,i \attMod \phi)$ holds iff $\mathit{secure}_{P,u}(r,i+1 \attMod \phi)$ holds.
\end{lemma}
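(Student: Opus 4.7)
The plan is to mirror the structure of the proof of \thref{theorem:secure:extend:on:runs:insert:delete}, but to exploit the fact that \texttt{SELECT} and \texttt{CREATE} actions never modify the database state. From the LTS rules in Figures~\ref{table:rules:lts:1}, \ref{table:rules:lts:7}, and \ref{table:rules:lts:8}, for any action $a$ of the form $\langle u', \mathtt{SELECT}, q\rangle$ or $\langle u', \mathtt{CREATE}, o\rangle$, executing $a$ only updates the context, the set of views, or the set of triggers; it leaves the database state $\mathit{db}$ unchanged. Therefore, the hypothesis used in \thref{theorem:secure:extend:on:runs:insert:delete} that ``the execution of $a$ does not change any table in $\mathit{tables}(\phi)$'' is automatically satisfied (in the stronger form that \emph{no} table is changed) and need not be assumed.

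First, I would set up the notation by letting $e := \mathit{extend}$, and I would observe that since $a$ preserves the equivalence class for $r^{i}$, $P$, and $u$, $e(r',a)$ is defined for every $r' \in \llbracket r^{i}\rrbracket_{P,u}$, and there is a bijection $b : \llbracket r^{i}\rrbracket_{P,u} \to \llbracket r^{i+1}\rrbracket_{P,u}$ with $b(r') = e(r',a)$. Second, I would establish the key equality $[\phi]^{\mathit{last}(r').\mathit{db}} = [\phi]^{\mathit{last}(e(r',a)).\mathit{db}}$ for every $r' \in \llbracket r^{i}\rrbracket_{P,u}$, which follows immediately from the observation above that \texttt{SELECT} and \texttt{CREATE} actions leave $\mathit{db}$ unchanged (regardless of whether the action succeeds, raises a security exception, or is rejected).

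Given this equality, I would argue both directions of the equivalence by contradiction, exactly as in the two cases of \thref{theorem:secure:extend:on:runs:insert:delete}. For the forward direction, assuming $\mathit{secure}_{P,u}(r, i \attMod \phi)$ and $\neg\mathit{secure}_{P,u}(r, i+1 \attMod \phi)$, I would pick a witness $r'' \in \llbracket r^{i+1} \rrbracket_{P,u}$ with $[\phi]^{\mathit{last}(r^{i+1}).\mathit{db}} \neq [\phi]^{\mathit{last}(r'').\mathit{db}}$, pull it back along $b^{-1}$ to a run $r' \in \llbracket r^{i} \rrbracket_{P,u}$ with $e(r',a) = r''$, use the key equality applied to both $r^{i}$ and $r'$ to translate the inequality back to $[\phi]^{\mathit{last}(r^{i}).\mathit{db}} \neq [\phi]^{\mathit{last}(r').\mathit{db}}$, and derive a contradiction with $\mathit{secure}_{P,u}(r, i \attMod \phi)$. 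The converse direction is symmetric, pushing forward along $b$ instead of pulling back.

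The only ``obstacle'' is essentially bookkeeping: carefully verifying the claim that $\mathit{db}$ is unchanged across all subcases of the \texttt{SELECT} and \texttt{CREATE} rules (including the deny and duplicate-identifier variants), and being precise about the fact that the bijection $b$ from preservation of the equivalence class is what allows us to move between $\llbracket r^{i} \rrbracket_{P,u}$ and $\llbracket r^{i+1} \rrbracket_{P,u}$ without losing any witness of insecurity. Since there is no data-modifying side effect to track, the proof is strictly simpler than that of \thref{theorem:secure:extend:on:runs:insert:delete}, and no extra hypothesis on $\phi$ is needed.
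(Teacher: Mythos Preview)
Your proposal is correct and matches the paper's approach exactly: the paper's proof is simply ``Proof similar to that of \thref{theorem:secure:extend:on:runs:insert:delete},'' and you have correctly spelled out what that similarity amounts to, including the key simplification that \texttt{SELECT} and \texttt{CREATE} actions leave the database state unchanged so the table-preservation hypothesis of the \texttt{INSERT}/\texttt{DELETE} lemma is trivially met.
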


\begin{proof}
Proof similar to that of \thref{theorem:secure:extend:on:runs:insert:delete}.
\end{proof}

\begin{lemma}\thlabel{theorem:secure:extend:on:runs:grant:revoke}
Let $P = \langle M, f \rangle$ be an \accessControlConfiguration{}, where $M = \langle D,\Gamma\rangle$ is a system configuration and $f$ is an $M$-\acf{}, $L$ be the $P$-LTS, $u$ be a user in ${\cal U}$, $r$ be a run in $\mathit{traces}(L)$,  $a \in {\cal A}_{D,u}$ be a \texttt{GRANT} or \texttt{REVOKE} action,  $\phi$ be a sentence, and $i$ be such that $1 \leq i \leq |r|$, $\mathit{triggers}(\mathit{last}(r^{i})) = \epsilon$, and $r^{i+1} = \mathit{extend}(r^{i},a)$.
If $a$ preserves the equivalence class for $r^{i}$, $P$, and $u$, then $\mathit{secure}_{P,u}(r,i \attMod \phi)$ holds iff $\mathit{secure}_{P,u}(r,i+1 \attMod \phi)$ holds.
\end{lemma}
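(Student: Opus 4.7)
The plan is to mirror the structure of the proof of \thref{theorem:secure:extend:on:runs:insert:delete}, exploiting the fact that \texttt{GRANT} and \texttt{REVOKE} actions are even easier to handle because they never touch the database instance. Concretely, I will argue by contradiction, splitting into the two symmetric cases: (i) $\mathit{secure}_{P,u}(r,i \attMod \phi)$ holds but $\mathit{secure}_{P,u}(r,i+1 \attMod \phi)$ fails, and (ii) the reverse.

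The first step common to both cases is to establish the invariant that for every $r' \in \llbracket r^{i} \rrbracket_{P,u}$, we have $[\phi]^{\mathit{last}(r').\mathit{db}} = [\phi]^{\mathit{last}(\mathit{extend}(r',a)).\mathit{db}}$. Unlike the \texttt{INSERT}/\texttt{DELETE} case, this invariant here is immediate from the LTS rules for \texttt{GRANT} and \texttt{REVOKE} (see Figure~\ref{table:rules:lts:7}): these rules leave the $\mathit{db}$ component of the state unchanged regardless of whether the command succeeds, is denied, or is rolled back. Hence $\mathit{last}(r').\mathit{db} = \mathit{last}(\mathit{extend}(r',a)).\mathit{db}$, so the equality of $\phi$'s valuations is trivial. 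This replaces the more delicate argument in \thref{theorem:secure:extend:on:runs:insert:delete} that relied on the hypothesis about $\mathit{tables}(\phi)$.

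Once this invariant is in place, I use the hypothesis that $a$ preserves the equivalence class, which gives a bijection $b : \llbracket r^{i} \rrbracket_{P,u} \to \llbracket r^{i+1} \rrbracket_{P,u}$ with $b(r') = \mathit{extend}(r',a)$. In case (i), assuming $\neg \mathit{secure}_{P,u}(r,i+1 \attMod \phi)$, I pick a witness $r' \in \llbracket r^{i+1} \rrbracket_{P,u}$ with $[\phi]^{\mathit{last}(r^{i+1}).\mathit{db}} \neq [\phi]^{\mathit{last}(r').\mathit{db}}$, apply $b^{-1}$ to obtain $r'' \in \llbracket r^{i} \rrbracket_{P,u}$ with $\mathit{extend}(r'',a) = r'$, and combine the invariant at both $r^{i}$ and $r''$ to derive $[\phi]^{\mathit{last}(r^{i}).\mathit{db}} \neq [\phi]^{\mathit{last}(r'').\mathit{db}}$, contradicting $\mathit{secure}_{P,u}(r,i \attMod \phi)$. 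Case (ii) is analogous, using $b$ directly on a witness in $\llbracket r^{i} \rrbracket_{P,u}$.

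There is no real obstacle here, since the only non-trivial ingredient of the \texttt{INSERT}/\texttt{DELETE} proof, namely ensuring that the action does not perturb the relations determining $\phi$, trivializes for policy-changing actions. The only bookkeeping point to watch is that the argument must be uniform over whether $a$'s execution produces a security exception or succeeds; in both sub-cases the LTS rules for \texttt{GRANT}/\texttt{REVOKE} preserve $\mathit{db}$, so the invariant above holds unconditionally and the bijection-based reshuffling of indistinguishable runs carries the security property in both directions.
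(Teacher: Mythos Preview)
Your proposal is correct and takes essentially the same approach as the paper, which simply states that the proof is similar to that of \thref{theorem:secure:extend:on:runs:insert:delete}. Your observation that the invariant $[\phi]^{\mathit{last}(r').\mathit{db}} = [\phi]^{\mathit{last}(\mathit{extend}(r',a)).\mathit{db}}$ trivializes because the \texttt{GRANT}/\texttt{REVOKE} rules leave the $\mathit{db}$ component unchanged is exactly the simplification one expects, and the rest of the bijection argument carries over verbatim.
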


\begin{proof}
Proof similar to that of \thref{theorem:secure:extend:on:runs:insert:delete}.
\end{proof}

\begin{lemma}\thlabel{theorem:secure:extend:on:runs:triggers}
Let $P = \langle M, f \rangle$ be an \accessControlConfiguration{}, where $M = \langle D,\Gamma\rangle$ is a system configuration and $f$ is an $M$-\acf{}, $L$ be the $P$-LTS,  $u$ be a user in ${\cal U}$, $r$ be a run in $\mathit{traces}(L)$,  $a$ be a trigger in ${\cal TRIGGER}_{D}$,  $\phi$ be a sentence, and $i$ be such that $1 \leq i \leq |r|$, $\mathit{invoker}(\mathit{last}(r^{i})) = u$, and $r^{i+1} = \mathit{extend}(r^{i},a)$.
If\begin{inparaenum}[(1)]
\item $a$ preserves the equivalence class for $r^{i}$, $P$, and $u$, 
\item if $a$'s action is either an \texttt{INSERT} or \texttt{DELETE}, then $t$'s execution does not change any table in $\mathit{tables}(\phi)$ for any run $v \in \llbracket r^{i}\rrbracket_{P,u}$, and
\item $\mathit{secEx}(\mathit{last}(\mathit{extend} \\ (r^{i},  a)) = \bot$ and  $\mathit{Ex}(\mathit{last}(\mathit{extend}(r^{i},a)) = \emptyset$,
\end{inparaenum}
then  $\mathit{secure}_{P,u}(r, \\ i \attMod \phi)$ holds iff $\mathit{secure}_{P,u}(r,i+1 \attMod \phi)$ holds.
\end{lemma}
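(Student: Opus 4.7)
The proof will proceed by the same template used for \thref{theorem:secure:extend:on:runs:insert:delete}, \thref{theorem:secure:extend:on:runs:select:create}, and \thref{theorem:secure:extend:on:runs:grant:revoke}, with the only novelty being a case analysis on the trigger's action to establish the key invariant that $a$'s execution does not alter $[\phi]^{\mathit{last}(\cdot).\mathit{db}}$ on any run in the equivalence class. Let $b$ denote the bijection between $\llbracket r^{i} \rrbracket_{P,u}$ and $\llbracket r^{i+1} \rrbracket_{P,u}$ witnessing that $a$ preserves the equivalence class, so that $b(r') = \mathit{extend}(r',a)$ for every $r' \in \llbracket r^{i} \rrbracket_{P,u}$. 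The plan is to prove, as the central lemma-internal claim, that $[\phi]^{\mathit{last}(r').\mathit{db}} = [\phi]^{\mathit{last}(\mathit{extend}(r',a)).\mathit{db}}$ for all $r' \in \llbracket r^{i} \rrbracket_{P,u}$, and then derive both directions of the biconditional by transporting truth values along $b$ and $b^{-1}$ exactly as in the earlier lemmas.

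To establish the central claim, I would split on the structure of $a = \langle \mathit{id}, \mathit{ow}, \mathit{ev}, R', \psi, \mathit{stmt}, m\rangle$ and the LTS rule actually used for $t$ in the extended run $\mathit{extend}(r',a)$. If the trigger is disabled in $r'$ (rule \emph{Trigger Disabled}), the database is unchanged and the claim is immediate. If the trigger fires and its instantiated action is a \texttt{GRANT} or \texttt{REVOKE} (rules \emph{Trigger \texttt{GRANT} Success} and \emph{Trigger \texttt{REVOKE} Success}), the database component of $\mathit{last}(\mathit{extend}(r',a))$ equals that of $\mathit{last}(r')$, so again $\phi$'s truth value is preserved. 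The only nontrivial case is when the instantiated action is an \texttt{INSERT} or \texttt{DELETE} on some table $R$; here condition (3) rules out rollback (so $\mathit{last}(\mathit{extend}(r',a)).\mathit{db}$ really is obtained from $\mathit{last}(r').\mathit{db}$ by applying the modification to $R$), and condition (2) guarantees $R \notin \mathit{tables}(\phi)$ on every run in the class. Since the set of relations determining $\phi$ is contained in $\mathit{tables}(\phi)$, the value of $\phi$ is insensitive to the modification of $R$, giving the claim.

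Once the central claim is in hand, both implications of the biconditional follow mechanically. For the forward direction, suppose $\mathit{secure}_{P,u}(r,i \attMod \phi)$ but $\neg \mathit{secure}_{P,u}(r,i+1 \attMod \phi)$: pick a witness $r^{\star} \in \llbracket r^{i+1} \rrbracket_{P,u}$ with $[\phi]^{\mathit{last}(r^{\star}).\mathit{db}} \neq [\phi]^{\mathit{last}(r^{i+1}).\mathit{db}}$, set $r'' := b^{-1}(r^{\star}) \in \llbracket r^{i} \rrbracket_{P,u}$, and use the central claim twice (once for $r^{i}$, once for $r''$) to translate the disagreement back to a disagreement in $\llbracket r^{i} \rrbracket_{P,u}$, contradicting $\mathit{secure}_{P,u}(r,i \attMod \phi)$. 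The converse direction is symmetric, using $b$ rather than $b^{-1}$.

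The main obstacle I anticipate is handling condition (3) carefully: it states only that no exception occurs in $r^{i+1}$ itself, but the central claim is asserted for arbitrary $r' \in \llbracket r^{i} \rrbracket_{P,u}$. I would discharge this by observing that indistinguishability together with the fact that $a$ preserves the equivalence class forces the same LTS rule to fire at $\mathit{extend}(r',a)$ for every $r' \in \llbracket r^{i} \rrbracket_{P,u}$: the triggers, policy, users, and views are identical across the class, and the database equality on the authorized portion suffices to ensure that the authorization checks in the trigger rules, the trigger's scheduled status $\mathit{trg}$, and (combined with condition (2)) the success of integrity checks all agree. This is where I expect the bulk of the careful bookkeeping to go, since unlike the \texttt{INSERT}/\texttt{DELETE} case from \thref{theorem:secure:extend:on:runs:insert:delete}, here the rule selection itself depends nontrivially on the database state through the evaluation of $\psi[\overline{x}^{|R'|} \mapsto \mathit{tpl}(\mathit{last}(r'))]$.
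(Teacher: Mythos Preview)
Your proposal is correct and follows the same route as the paper: the same central invariant ($[\phi]^{\mathit{last}(\cdot).\mathit{db}}$ is unchanged by extending with $a$ on every run in the class), the same three-way case split on the trigger (disabled, \texttt{GRANT}/\texttt{REVOKE} action, \texttt{INSERT}/\texttt{DELETE} action), and the same bijection transport in both directions. The paper sidesteps your ``main obstacle'' by reading hypothesis~(2) as a statement about \emph{$t$'s execution as a whole} on every run in the class---so whatever LTS rule fires for a given $r'$, the invariant follows directly from~(2) without needing to propagate condition~(3) or argue that the same rule fires uniformly across $\llbracket r^i\rrbracket_{P,u}$.
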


\begin{proof}
Let $P = \langle M, f \rangle$ be an \accessControlConfiguration{}, where $M = \langle D,\Gamma\rangle$ is a system configuration and $f$ is an $M$-\acf{}, $L$ be the $P$-LTS,  $u$ be a user in ${\cal U}$, $r$ be a run in $\mathit{traces}(L)$,  $a$ be a trigger in ${\cal TRIGGER}_{D}$,  $\phi$ be a sentence, and $i$ be such that $1 \leq i \leq |r|$, $\mathit{invoker}(\mathit{last}(r^{i})) = u$, and $r^{i+1} = \mathit{extend}(r^{i},a)$.
Assume also\begin{inparaenum}[(1)]
\item that $a$ preserves the equivalence class for $r^{i}$, $P$, and $u$,  and
\item $\mathit{secEx}(\mathit{last}(\mathit{extend} \\ (r^{i},  a)) = \bot$ and  $\mathit{Ex}(\mathit{last}(\mathit{extend}(r^{i},a)) = \emptyset$.
\end{inparaenum}
In the following, we denote the $\mathit{extend}$ function by $e$.
Furthermore, we also denote the fact that $\mathit{secure}_{P,u}(r,i \attMod \phi)$ does not hold as $\neg \mathit{secure}_{P,u}(r,i \attMod \phi)$.
From Definition \ref{def:preserve:equiv:class} and the fact that $a$ preserves the equivalence class for $r^{i}$, $P$, and $u$, it follows that $e(r',a)$ is defined for any $r' \in \llbracket r^{i}\rrbracket_{P,u}$. 
Assume, for contradiction's sake, that our claim does not hold.
There are two cases:
\begin{compactitem}
\item $\mathit{secure}_{P,u}(r,i \attMod \phi)$ holds and $\mathit{secure}_{P,u}(r,i+1 \attMod \phi)$ does not hold.
From $\mathit{secure}_{P,u}(r,i \attMod \phi)$, it follows that $[\phi]^{\mathit{last}(r^{i}).\mathit{db}} = [\phi]^{\mathit{last}(r').\mathit{db}}$ for any $r' \in \llbracket r^{i} \rrbracket_{P,C}$.
We claim that $[\phi]^{\mathit{last}(r').\mathit{db}} = [\phi]^{\mathit{last}(e(r',a)).\mathit{db}}$ holds for any $r' \in \llbracket r^{i}\rrbracket_{P,u}$.
From $\neg \mathit{secure}_{P,u}(r,i+1 \attMod \phi)$, it follows that there is a $r'' \in \llbracket r^{i+1}\rrbracket_{P,u}$ such that $[\phi]^{\mathit{last}(r'').\mathit{db}} \neq [\phi]^{\mathit{last}(r^{i+1}).\mathit{db}}$.
Let $b$ the bijection showing that $a$ preserves the equivalence class with respect to $r^{i}$, $P$, and $u$.
Since $r^{i+1} = e(r^{i},a)$ and $r' \in \llbracket \mathit{e}(r,  a) \rrbracket_{P,u}$, then there is a run $v \in \llbracket r^{i}\rrbracket_{P,u}$ such that $v=b^{-1}(r'')$.
From this, $[\phi]^{\mathit{last}(r').\mathit{db}} = [\phi]^{\mathit{last}(e(r',a)).\mathit{db}}$ holds for any $r' \in \llbracket r^{i}\rrbracket_{P,u}$, and the fact that $[\phi]^{\mathit{last}(r'').\mathit{db}} \neq [\phi]^{\mathit{last}(r^{i+1}).\mathit{db}}$, it follows that $[\phi]^{\mathit{last}(v).\mathit{db}} \neq [\phi]^{\mathit{last}(r^{i+1}).\mathit{db}}$.
From this, $[\phi]^{\mathit{last}(r').\mathit{db}} = [\phi]^{\mathit{last}(e(r',a)).\mathit{db}}$ holds for any $r' \in \llbracket r^{i}\rrbracket_{P,u}$, and $r^{i+1} = e(r^{i},a)$, it follows $[\phi]^{\mathit{last}(v).\mathit{db}} \neq [\phi]^{\mathit{last}(r^{i}).\mathit{db}}$.
This contradicts the fact that $[\phi]^{\mathit{last}(r^{i}).\mathit{db}} = [\phi]^{\mathit{last}(r').\mathit{db}}$ for any $r' \in \llbracket r^{i} \rrbracket_{P,C}$.

We now prove that  $[\phi]^{\mathit{last}(r').\mathit{db}} = [\phi]^{\mathit{last}(e(r',a)).\mathit{db}}$ holds for any $r' \in \llbracket r^{i}\rrbracket_{P,u}$.
Assume, for contradiction's sake, that there is a run $r' \in \llbracket r^{i}\rrbracket_{P,u}$ such that $[\phi]^{\mathit{last}(r').\mathit{db}} \neq [\phi]^{\mathit{last}(e(r',a)).\mathit{db}}$.
There are three cases:
\begin{compactitem}
\item the trigger $a$ is not enabled in $e(r',a)$.
From this and the LTS semantics, it follows that $\mathit{last}(r').\mathit{db} = \mathit{last}(e(r',a)).\mathit{db}$.
From this, it therefore follows that $[\phi]^{\mathit{last}(r').\mathit{db}} = [\phi]^{\mathit{last}(e(r',a)).\mathit{db}}$.
This contradicts our assumption.

\item the trigger $a$ is enabled in $e(r',a)$ and its action is a \texttt{GRANT} or a \texttt{REVOKE}.
From this and the LTS semantics, it therefore follows that $\mathit{last}(r').\mathit{db} = \mathit{last}(e(r',a)).\mathit{db}$.
From this, it thus follows that $[\phi]^{\mathit{last}(r').\mathit{db}} = [\phi]^{\mathit{last}(e(r',a)).\mathit{db}}$.
This contradicts our assumption.

\item the trigger $a$ is enabled in $e(r',a)$ and its action is a \texttt{INSERT} or a \texttt{GRANT}.
Thus, from $[\phi]^{\mathit{last}(r').\mathit{db}} \neq [\phi]^{\mathit{last}(e(r',a)).\mathit{db}}$, it follows that the content of one of the relations that determines $\phi$ is different in $\mathit{last}(r').\mathit{db}$ and $\mathit{last}(e(r',a)).\mathit{db}$.
This contradicts the fact that the $a$'s execution does not change the tables in $\mathit{tables}(\phi)$ for any run  $r' \in \llbracket r^{i}\rrbracket_{P,u}$.

\end{compactitem}

\item $\mathit{secure}_{P,u}(r,i+1 \attMod \phi)$ holds and  $\mathit{secure}_{P,u}(r,  i \attMod \phi)$ does not hold.
We have already shown that $[\phi]^{\mathit{last}(r').\mathit{db}} \\ = [\phi]^{\mathit{last}(\mathit{e}(r',a)).\mathit{db}}$ holds for any $r' \in \llbracket r \rrbracket_{P,u}$.
From $\neg \mathit{secure}_{P,u}(r,i \attMod \phi)$, it follows that there is $r' \in \llbracket r^{i} \rrbracket_{P,u}$ such that $[\phi]^{\mathit{last}(r^{i}).\mathit{db}} \neq [\phi]^{\mathit{last}(r').\mathit{db}}$.
Let $b$ the bijection showing that $a$ preserves the equivalence class with respect to $r$, $P$, and $u$.
Since $r' \in \llbracket r^{i} \rrbracket_{P,u}$, then let $r'' = b(r') = e(r',a)$.
From $[\phi]^{\mathit{last}(r').\mathit{db}} = [\phi]^{\mathit{last}(\mathit{e}(r',a)).\mathit{db}}$ holds for any $r' \in \llbracket r^{i} \rrbracket_{P,u}$, it follows that  $[\phi]^{\mathit{last}(r^{i}).\mathit{db}} \neq [\phi]^{\mathit{last}(e(r',a)).\mathit{db}}$.
From this, $e(r^{i},a) = r^{i+1}$, and the fact that $[\phi]^{\mathit{last}(r').\mathit{db}} = [\phi]^{\mathit{last}(\mathit{e}(r',a)).\mathit{db}}$ holds for any $r' \in \llbracket r \rrbracket_{P,u}$, it follows that $[\phi]^{\mathit{last}(r^{i+1}).\mathit{db}} \neq [\phi]^{\mathit{last}(e(r',a)).\mathit{db}}$.
From this and $e(r',a) \in \llbracket r^{i+1} \rrbracket_{P,u}$, it follows $\neg \mathit{secure}_{P,u} \\ (r,i+1 \attMod \phi)$.
This contradicts $\mathit{secure}_{P,u}(r, i+1 \attMod \phi)$.

\end{compactitem}
This completes the proof.
\end{proof}

\begin{proposition}\thlabel{theorem:getInfo:sound:and:complete}
Let $P = \langle M, f \rangle$ be an \accessControlConfiguration{}, where $M = \langle D,\Gamma\rangle$ is a system configuration and $f$ is an $M$-\acf{}, $L$ be the $P$-LTS, $a \in {\cal A}_{D,u}$ be an \texttt{INSERT} or \texttt{DELETE} action,  and $r$ be a run such that $\mathit{tr}(\mathit{last}(r)) = \epsilon$.
For any constraint $\gamma$ in $\mathit{Dep}(\Gamma,a)$, the following statements hold:
\begin{compactitem}
\item $[\mathit{getInfoS}(\gamma,a)]^{\mathit{last}(r).\mathit{db}} = \top$ iff $\gamma \not\in \mathit{Ex}(\mathit{last}(\mathit{extend}(r,a)))$, and
\item $[\mathit{getInfoV}(\gamma,a)]^{\mathit{last}(r).\mathit{db}} = \top$ iff $\gamma \in \mathit{Ex}(\mathit{last}(\mathit{extend}(r,a)))$.
\end{compactitem}
\end{proposition}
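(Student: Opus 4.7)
The plan is to prove both biconditionals by a case analysis on the action $a \in \{\mathtt{INSERT}, \mathtt{DELETE}\}$ and on the type of constraint $\gamma$ permitted by $\mathit{Dep}(\Gamma, a)$. For \texttt{INSERT}, $\mathit{Dep}(\Gamma, \langle u, \mathtt{INSERT}, R, \overline{t}\rangle)$ contains only functional dependencies of the form $\forall \overline{x}, \overline{y}, \overline{y}', \overline{z}, \overline{z}'.\,(R(\overline{x}, \overline{y}, \overline{z}) \wedge R(\overline{x}, \overline{y}', \overline{z}')) \Rightarrow \overline{y} = \overline{y}'$ and inclusion dependencies of the form $\forall \overline{x}, \overline{z}.\, R(\overline{x}, \overline{z}) \Rightarrow \exists \overline{w}.\, S(\overline{x}, \overline{w})$; for \texttt{DELETE} it contains only inclusion dependencies with $R$ on the right-hand side. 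Each of these three subcases is matched by a specific definition of $\mathit{getInfoS}$ and $\mathit{getInfoV}$, with $\mathit{getInfoS}(\gamma, a)$ defined to be the negation of $\mathit{getInfoV}(\gamma, a)$; once this is observed it suffices to prove just the second biconditional in each case, since the first then follows from bivalence of relational calculus evaluation.

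The proof of the second biconditional in each case is a direct calculation from the LTS semantics. Let $\mathit{db} = \mathit{last}(r).\mathit{db}$ and $\mathit{db}' = \mathit{last}(\mathit{extend}(r, a)).\mathit{db}$. By hypothesis $\mathit{tr}(\mathit{last}(r)) = \epsilon$, so the \texttt{INSERT}/\texttt{DELETE} rules of $\rightarrow_f$ apply, and by the rules defining $\mathit{Ex}$ we have $\gamma \in \mathit{Ex}(\mathit{last}(\mathit{extend}(r,a)))$ iff $[\gamma]^{\mathit{db}[R\,op\,\overline{t}]} = \bot$, where $op$ is $\oplus$ or $\ominus$ depending on $a$. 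Since $\mathit{db} \in \Omega_D^\Gamma$ we know $[\gamma]^{\mathit{db}} = \top$, and so the whole question reduces to a concrete first-order calculation: for the FD case with $a = \langle u, \mathtt{INSERT}, R, (\overline{v}, \overline{w}, \overline{q})\rangle$, the inserted tuple violates $\gamma$ iff $\mathit{db}$ already contains some $(\overline{v}, \overline{w}', \overline{q}')$ with $\overline{w}' \neq \overline{w}$, which is exactly $[\exists \overline{y},\overline{z}.\, R(\overline{v}, \overline{y}, \overline{z}) \wedge \overline{y} \neq \overline{w}]^{\mathit{db}} = \top = [\mathit{getInfoV}(\gamma, a)]^{\mathit{db}}$. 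The ID case for \texttt{INSERT} and the ID case for \texttt{DELETE} are analogous, each reducing to an equivalent RC sentence that exactly characterizes when adding or removing a single tuple breaks the dependency given that the dependency held beforehand.

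The main routine work is just spelling out these three first-order equivalences carefully using the definitions of $\oplus$ and $\ominus$ and the standard relational calculus semantics. These calculations have in fact already been carried out in essentially the same form inside the proofs of \thref{theorem:attacker:model:sound} for the rules \emph{\texttt{INSERT} Success - FD}, \emph{\texttt{INSERT} Success - ID}, \emph{\texttt{DELETE} Success - ID} and their exception counterparts, so the proof here can either invoke those arguments or simply replay them. The only subtlety, which is where most bookkeeping effort will go, is being careful that the pre-state $\mathit{db}$ satisfies $\Gamma$ so that no ``spurious'' prior violation contributes to $\mathit{Ex}$; this is what lets us read $\mathit{Ex}(\mathit{last}(\mathit{extend}(r,a)))$ as exactly the set of constraints newly broken by the single update, and is what makes the biconditionals (rather than just one-way implications) hold.
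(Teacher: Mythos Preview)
Your proposal is correct and follows essentially the same approach as the paper's proof: both observe that $\mathit{getInfoS}(\gamma,a)$ and $\mathit{getInfoV}(\gamma,a)$ are (logically) negations so only one biconditional needs proving, both do a case analysis over the three combinations (\texttt{INSERT}/FD, \texttt{INSERT}/ID, \texttt{DELETE}/ID), and both reduce each case to a direct first-order calculation using $\mathit{db}\in\Omega_D^\Gamma$ and the definitions of $\oplus$ and $\ominus$. The paper chooses to prove the $\mathit{getInfoS}$ biconditional and splits each case into explicit $(\Rightarrow)$ and $(\Leftarrow)$ halves, whereas you first isolate the equivalence ``$\gamma\in\mathit{Ex}(\ldots)$ iff $[\gamma]^{\mathit{db}[R\,op\,\overline{t}]}=\bot$'' and then prove the residual RC equivalence; this is the same argument packaged slightly more cleanly, and your pointer to the analogous calculations in the soundness proof of $\attackerModel$ is apt.
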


\begin{proof}
Let $P = \langle M, f \rangle$ be an \accessControlConfiguration{}, where $M = \langle D,\Gamma\rangle$ is a system configuration and $f$ is an $M$-\acf{}, $L$ be the $P$-LTS, $a \in {\cal A}_{D,u}$ be an \texttt{INSERT} or \texttt{DELETE} action,  and $r$ be a run such that $\mathit{tr}(\mathit{last}(r)) = \epsilon$.
Furthermore, let $\gamma$ be a constraint in $\mathit{Dep}(\Gamma,a)$.
We first note that $\mathit{getInfoS}(\gamma,a) = \neg \mathit{getInfoV}(\gamma,a)$.
From this, it follows trivially that we can prove just one of the two claims.
We thus prove that  $[\mathit{getInfoS}(\gamma,a)]^{\mathit{last}(r).\mathit{db}} = \top$ iff $\gamma \not\in \mathit{Ex}(\mathit{last}(\mathit{extend}(r,a)))$.
There are two cases:
\begin{compactenum}
\item $a = \langle u,\mathtt{INSERT},R,\overline{t}\rangle$.
There are two cases depending on $\gamma$:
\begin{compactenum}
\item $\gamma$ is of the form $\forall \overline{x}, \overline{y}, \overline{y}', \overline{z}, \overline{z}'.\,  (R(\overline{x}, \overline{y},  \overline{z}) \wedge R(\overline{x}, \overline{y}', \\  \overline{z}') )\Rightarrow \overline{y} = \overline{y}'$.
Let $\overline{t}$ be $(\overline{v},\overline{w},\overline{q})$, $\mathit{db}$ be the state $\mathit{last}(r).\mathit{db}$, and $\mathit{db}'$ be the state $\mathit{db}[R\oplus\overline{t}]$.

\noindent
$(\Rightarrow)$ 
Assume that $[\mathit{getInfoS}(\gamma,a)]^{\mathit{last}(r).\mathit{db}} = \top$.
From this and $\mathit{getInfoS}(\gamma,a)$'s definition, it follows that for all tuples $(\overline{v},\overline{w}',\overline{q}') \in \mathit{db}(R)$, then $\overline{w}' = \overline{w}$.
From $a$'s definition and the LTS semantics, it follows that $\mathit{db}'(R) = \mathit{db}(R) \cup \{(\overline{v},\overline{w},\overline{q})\}$.
From this and the fact  that for all tuples $(\overline{v},\overline{w}',\overline{q}') \in \mathit{db}(R)$, then $\overline{w}' = \overline{w}$, it follows that for all tuples $(\overline{v},\overline{w}',\overline{q}') \\ \in \mathit{db}'(R)$, then $\overline{w}' = \overline{w}$.
Furthermore, since $\mathit{db} \in \Omega_{D}^{\Gamma}$, it follows that for all tuples $(\overline{v}',\overline{w}',\overline{q}'), (\overline{v}'',\overline{w}'', \\ \overline{q}'') \in \mathit{db}'(R)$, if $\overline{v}' = \overline{v}''$ and $\overline{v}' \neq \overline{v}$, then $\overline{w}' = \overline{w}$.
Therefore,  it follows that for all tuples $(\overline{v}',\overline{w}',\overline{q}')$, $(\overline{v}'',\overline{w}'',\overline{q}'') \in \mathit{db}'(R)$, if $\overline{v}' = \overline{v}''$, then $\overline{w}' = \overline{w}$.
Therefore, $[\gamma]^{\mathit{db}'} = \top$.
From this and the LTS semantics, it follows that $\gamma \not\in \mathit{Ex}(\mathit{last}(\mathit{extend}(r,a)))$.

\noindent
$(\Leftarrow)$
Assume that $\gamma \not\in \mathit{Ex}(\mathit{last}(\mathit{extend}(r,a))))$.
From this and the LTS semantics, it follows that  $[\gamma]^{\mathit{db}'} = \top$.
Therefore, for any two tuples $(\overline{v}',\overline{w}',\overline{q}')$ and $(\overline{v}'',\overline{w}'',\overline{q}'') \in \mathit{db}'(R)$, if $\overline{v}' = \overline{v}''$, then $\overline{w}' = \overline{w}$.
Assume, for contradiction's sake, that $[\mathit{getInfoS}(\gamma, a)]^{\mathit{db}} \\ = \bot$.
This means that there is a tuple $(\overline{v}, \overline{w}', \overline{q}')$ in $\mathit{db}(R)$ such that $\overline{w}' \neq \overline{w}$.
From $\mathit{db}'=\mathit{db}[R(\overline{v}, \overline{w}, \overline{q})]$ and the LTS semantics, it follows that both $(\overline{v}, \overline{w}', \overline{q}') $ and
$(\overline{v}, \overline{w}, \overline{q})$ are in $\mathit{db}'(R)$.
From this and $\overline{w}' \neq \overline{w}$, it follows that there are two tuples $(\overline{v}, \overline{w}, \overline{q})$ and $(\overline{v}, \overline{w}', \overline{q}')$ in $\mathit{db}(R)$ such that $\overline{w}' \neq \overline{w}$.
From this and the relational calculus semantics, it follows that $[\gamma]^{\mathit{db}} = \bot$.
This is in contradiction with $[\gamma]^{\mathit{db}'} = \top$.

\item $\gamma$ is of the form $\forall \overline{x}, \overline{z}.\, R(\overline{x}, \overline{z}) \Rightarrow \exists \overline{w}.\, S(\overline{x}, \overline{w})$.
Let $\overline{t}$ be $(\overline{v},\overline{w})$, $\mathit{db}$ be the state $\mathit{last}(r).\mathit{db}$, and $\mathit{db}'$ be the state $\mathit{db}[R\oplus\overline{t}]$.

\noindent
$(\Rightarrow)$ 
Assume that $[\mathit{getInfoS}(\gamma,a)]^{\mathit{db}} = \top$.
From this and $\mathit{getInfoS}(\gamma,a)$'s definition, it follows that there is a tuple $(\overline{v},\overline{y})$ in $\mathit{db}(S)$.
From $a$'s definition and the LTS semantics, it follows that $\mathit{db}'(S) = \mathit{db}(S)$.
From this, it follows that there is a tuple $(\overline{v},\overline{y})$ in $\mathit{db}'(S)$.
Furthermore, since $\mathit{db} \in \Omega_{D}^{\Gamma}$, it follows that for all tuples $(\overline{v}',\overline{w}') \in \mathit{db}(R)$, if $\overline{v}' \neq \overline{v}$, there is a tuple $(\overline{v}',\overline{y}') \in \mathit{db}(S)$.
From this and $\overline{db}' = \mathit{db} [R\oplus (\overline{v},\overline{w})]$, it follows that for all tuples $(\overline{v}',\overline{w}') \in \mathit{db}'(R)$,  there is a tuple $(\overline{v}',\overline{y}') \in \mathit{db}'(S)$.
Therefore, $[\gamma]^{\mathit{db}'} = \top$.
From this and the LTS semantics, it follows that $\gamma \not\in \mathit{Ex}(\mathit{last}(\mathit{extend}(r,a)))$.

\noindent
$(\Leftarrow)$
Assume that $\gamma \not\in \mathit{Ex}(\mathit{last}(\mathit{extend}(r,a)))$.
From this and the LTS semantics, it follows that  $[\gamma]^{\mathit{db}'} = \top$.
Therefore, for any tuple $(\overline{v}',\overline{w}') \in \mathit{db}'(R)$, there is a tuple $(\overline{v}',\overline{y}') \in \mathit{db}'(S)$.
Assume, for contradiction's sake, that $[\mathit{getInfoS}(\gamma,  a)]^{\mathit{db}} = \bot$.
This means that for any tuple $(\overline{v}',\overline{y}')$ in $\mathit{db}(S)$, $\overline{v}' \neq \overline{v}$.
From $\mathit{db}'(S)=\mathit{db}(S)$, it follows that for any tuple $(\overline{v}',\overline{y}')$ in $\mathit{db}'(S)$, $\overline{v}' \neq \overline{v}$.
From $\mathit{db}' = \mathit{db}[R\oplus (\overline{v},\overline{w}) ]$, it follows that there is a tuple $(\overline{v},\overline{w})$ in $\mathit{db}'(R)$ such that there is no tuple $(\overline{v},\overline{y}')$ in $\mathit{db}'(S)$.
From this and the relational calculus semantics, it follows that $[\gamma]^{\mathit{db}} = \bot$.
This is in contradiction with $[\gamma]^{\mathit{db}'} = \top$.

\end{compactenum}

\item $a = \langle u,\mathtt{DELETE},R,\overline{t}\rangle$.
In this case, $\gamma$ is of the form $\forall \overline{x}, \overline{z}.\, S(\overline{x},  \overline{z}) \Rightarrow \exists \overline{w}.\, R(\overline{x}, \overline{w})$.
Let $\overline{t}$ be $(\overline{v},\overline{w})$, $\mathit{db}$ be the state $\mathit{last}(r).\mathit{db}$, and $\mathit{db}'$ be the state $\mathit{db}[R \ominus \overline{t}]$.

\noindent
$(\Rightarrow)$ 
Assume that $[\mathit{getInfoS}(\gamma,a)]^{\mathit{db}} = \top$.
From this and $\mathit{getInfoS}(\gamma,a)$'s definition, it follows that either there is no tuple $(\overline{v}, \overline{y})$ in $\mathit{db}(S)$ or there is a tuple  $(\overline{v}, \overline{w}')$ in $\mathit{db}(R)$ such that $\overline{w}' \neq \overline{w}$.
There are two cases:
\begin{compactenum}
\item there is no tuple $(\overline{v}, \overline{y})$ in $\mathit{db}(S)$.
From this, $a$'s definition, and the LTS semantics, it follows that there is no tuple $(\overline{v}, \overline{y})$ in $\mathit{db}'(S)$.
From $\mathit{db} \in \Omega_{D}^{\Gamma}$, it follows that for all tuples $(\overline{v}', \overline{y}')$ in $\mathit{db}(S)$ such that $\overline{v}' \neq \overline{v}$, there is a tuple $(\overline{v}', \overline{w}')$ in $\mathit{db}(R)$.
From this, $\mathit{db}'(R) = \mathit{db}(R) \setminus \{(\overline{v},\overline{w})\}$, $\mathit{db}'(S) = \mathit{db}(S) $, and there is no tuple $(\overline{v}, \overline{y})$ in $\mathit{db}'(S)$, it follows that for all tuples $(\overline{v}', \overline{y}')$ in $\mathit{db}(S)$, there is a tuple $(\overline{v}', \overline{w}')$ in $\mathit{db}(R)$.
Therefore, $[\gamma]^{\mathit{db}'} = \top$.
From this and the LTS semantics, it follows that $\gamma \not\in \mathit{Ex}(\mathit{last}(\mathit{extend}(r,a)))$.

\item there is a tuple  $(\overline{v}, \overline{w}')$ in $\mathit{db}(R)$ such that $\overline{w}' \neq \overline{w}$.
From this, $a$'s definition, and the LTS semantics, it follows that there is  a tuple  $(\overline{v}, \overline{w}')$ in $\mathit{db}'(R)$ such that $\overline{w}' \neq \overline{w}$.
From $\mathit{db} \in \Omega_{D}^{\Gamma}$, it follows that for all tuples $(\overline{v}', \overline{y}')$ in $\mathit{db}(S)$ such that $\overline{v}' \neq \overline{v}$, there is a tuple $(\overline{v}', \overline{w}'')$ in $\mathit{db}(R)$.
From this, $\mathit{db}'(R) = \mathit{db}(R) \setminus \{(\overline{v},\overline{w})\}$, $\mathit{db}'(S) = \mathit{db}(S) $, and there is  a tuple  $(\overline{v}, \overline{w}')$ in $\mathit{db}'(R)$ such that $\overline{w}' \neq \overline{w}$, it follows that for all tuples $(\overline{v}', \overline{y}')$ in $\mathit{db}(S)$, there is a tuple $(\overline{v}', \overline{w}')$ in $\mathit{db}(R)$.
Therefore, $[\gamma]^{\mathit{db}'} = \top$.
From this and the LTS semantics, it follows that $\gamma \not\in \mathit{Ex}(\mathit{last}(\mathit{extend}(r,a)))$.

\end{compactenum}

\noindent
$(\Leftarrow)$
Assume that $\gamma \not\in \mathit{Ex}(\mathit{last}(\mathit{extend}(r,a)))$.
From this and the LTS semantics, it follows that  $[\gamma]^{\mathit{db}'} = \top$.
Therefore, for any tuple $(\overline{v}',\overline{y}') \in \mathit{db}'(S)$, there is a tuple $(\overline{v}',\overline{w}') \in \mathit{db}'(R)$.
Assume, for contradiction's sake, that $[\mathit{getInfoS}(\gamma,  a)]^{\mathit{db}} = \bot$.
Therefore, there is a tuple $(\overline{v},\overline{y})$ in $\mathit{db}(S)$ and for all tuples $(\overline{v},\overline{w}'')$ in $\mathit{db}(R)$, $\overline{w}'' = \overline{w}$.
From this,  $\mathit{db}'(S)=\mathit{db}(S)$, and $\mathit{db}' = \mathit{db}[R \ominus (\overline{v},\overline{w}) ]$, it follows that there is a tuple $(\overline{v},\overline{y})$ in $\mathit{db}'(S)$ and for all tuples $(\overline{v}'',\overline{w}'')$ in $\mathit{db}'(R)$, $\overline{v}'' \neq \overline{v}$.
From this and the relational calculus semantics, it follows that $[\gamma]^{\mathit{db}} = \bot$.
This is in contradiction with $[\gamma]^{\mathit{db}'} = \top$.

\end{compactenum}
This completes the proof.
\end{proof}

\begin{lemma}\thlabel{theorem:f:conf:pec:1}
Let $u$ be a user in ${\cal U}$, $P = \langle M, f_{\mathit{conf}}^{u} \rangle$ be an \accessControlConfiguration{}, where $M = \langle D,\Gamma\rangle$ is a system configuration and $f_{\mathit{conf}}^{u}$ is as above, and $L$ be the $P$-LTS.
For any run $r \in \mathit{traces}(L)$ and any action $a \in {\cal A}_{D,u}$, if $\mathit{extend}(r,a)$ is defined, then $a$ preserves the equivalence class for $r$, $P$, and $u$.
\end{lemma}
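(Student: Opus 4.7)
The natural candidate for the bijection is $b(r') := \mathit{extend}(r',a)$. The plan is to verify the three requirements of Definition~\ref{def:preserve:equiv:class} in turn: (i) that $\mathit{extend}(r',a)$ is defined for every $r' \in \llbracket r\rrbracket_{P,u}$, (ii) that $\mathit{extend}(r',a) \in \llbracket \mathit{extend}(r,a)\rrbracket_{P,u}$ so that $b$ lands in the right equivalence class, and (iii) that $b$ is in fact a bijection onto $\llbracket \mathit{extend}(r,a)\rrbracket_{P,u}$.

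For (i), note that inspection of the LTS rules in Appendix~\ref{app:lts} shows that for every action $a \in {\cal A}_{D,u}$ and every reachable state there is always some applicable rule (success, deny, or exception), so $\mathit{extend}(r',a)$ exists whenever $r'$ is a trace. The only genuine content is that the shape of the rule applied (which determines the successor) is the same in $r$ and $r'$; this in turn depends on $f^{u}_{\mathit{conf}}(\mathit{last}(r),a) = f^{u}_{\mathit{conf}}(\mathit{last}(r'),a)$, which I obtain from \thref{theorem:f:conf:soundness} using the fact that $r \cong_{P,u} r'$ implies $\mathit{pState}(\mathit{last}(r)) \cong^{\mathit{data}}_{u,M} \mathit{pState}(\mathit{last}(r'))$ and $\mathit{invoker}$, $\mathit{tr}$ agree on the two contexts (which are empty or identical on $u$-actions by the LTS semantics).

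For (ii) I would argue by case analysis on $a$. For \texttt{SELECT}, \texttt{GRANT}, \texttt{REVOKE}, \texttt{CREATE}, and \texttt{ADD\_USER}, the database is not modified and the other state components (policy, users, triggers, views) are either unchanged or updated by the same syntactic function of the policy alone, so data-indistinguishability of the resulting partial states is immediate, and the $u$-projection extends in each run by the same labelled step with matching access-control decision, result, exceptions, invoker, tuple, and trigger list. For \texttt{INSERT} and \texttt{DELETE}, the subtle points are: that the same LTS rule fires (Success~1 vs.\ Success~2 vs.\ Exception vs.\ Deny), that the same triggers get scheduled in the resulting context, and that the resulting databases coincide on every table or view readable by $u$. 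The access-control decision matches by \thref{theorem:f:conf:soundness}; the set of pending triggers depends only on $T$ and the event/table, both of which are identical in $r$ and $r'$; the change to the readable portion of the database is exactly $R \oplus \overline{t}$ or $R \ominus \overline{t}$ when $R$ is readable, and when $R$ is not readable, the $\mathit{noLeak}$ check embedded in $f^{u}_{\mathit{conf}}$ guarantees that no readable view depends on $R$, so materializations of all readable views are preserved identically in both runs. This last point is the main obstacle: I need to check carefully that $\mathit{noLeak}$ is exactly strong enough to rule out scenarios where the rule applied (Success~1 vs.\ Success~2) or the resulting view content could differ between $r$ and $r'$ because $\mathit{db}(R)\neq \mathit{db}'(R)$; the argument will use \thref{theorem:f:conf:soundness} together with the observation that query determinacy through $\mathit{tDet}$ is decided on the policy and view definitions alone, not on the database state.

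For (iii), injectivity of $b$ is immediate since the extension map is determined by its prefix. For surjectivity, let $r'' \in \llbracket \mathit{extend}(r,a)\rrbracket_{P,u}$. Because $a \in {\cal A}_{D,u}$ the last label of the $u$-projection of $\mathit{extend}(r,a)$ is $a$, and by consistency of $u$-projections the same must hold of $r''|_u$, so $r'' = r^{*} \concat a \concat s''$ for some $r^{*}$ and $s''$. Dropping the last action and state from both $\mathit{extend}(r,a)$ and $r''$ preserves indistinguishability (the consistency clauses are all defined prefix-pointwise and the data-indistinguishability condition on the new last states is itself a clause of the consistency of the extended projections), so $r^{*} \cong_{P,u} r$ and hence $r^{*} \in \llbracket r\rrbracket_{P,u}$ with $b(r^{*}) = r''$.
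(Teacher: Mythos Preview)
Your plan follows the same architecture as the paper's proof: define $b(r') = \mathit{extend}(r',a)$, and argue by a case analysis on $a$ that the extension lands in $\llbracket \mathit{extend}(r,a)\rrbracket_{P,u}$ using \thref{theorem:f:conf:soundness} to synchronise access-control decisions. The surjectivity argument in your (iii) is also the right one.

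There are, however, two genuine gaps in the \texttt{INSERT}/\texttt{DELETE} case that would block the proof as written.

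First, your claim that ``the set of pending triggers depends only on $T$ and the event/table'' is false. Whether rule \emph{Success~1} or \emph{Success~2} fires also depends on whether $\overline{t}\in\mathit{db}(R)$ (respectively $\overline{t}\notin\mathit{db}(R)$ for \texttt{DELETE}), and the two rules leave different pending-trigger lists in the context. Your later remark tries to close this with $\mathit{noLeak}$, but $\mathit{noLeak}$ only constrains which \emph{views} $u$ can read when $R$ is touched; it says nothing about $R(\overline{t})$ itself when $R$ is not readable. The fix is the other conjunct of $f^{u}_{\mathit{conf}}$: $\mathit{secure}(u,\mathit{getInfo}(a),s)$ with $\mathit{getInfo}(\langle u,\mathtt{INSERT},R,\overline{t}\rangle)=\neg R(\overline{t})$. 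By \thref{theorem:secure:sound:under:approximation} this forces $[R(\overline{t})]^{\mathit{last}(r).\mathit{db}}=[R(\overline{t})]^{\mathit{last}(r').\mathit{db}}$, hence the same Success rule fires and the same trigger list is scheduled.

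Second, you list ``Exception'' among the rules whose applicability must match, but give no mechanism for showing $\mathit{Ex}(\mathit{last}(e(r,a)))=\mathit{Ex}(\mathit{last}(e(r',a)))$, which consistency of $u$-projections requires. Neither \thref{theorem:f:conf:soundness} nor $\mathit{noLeak}$ speaks to which integrity constraints get violated. The paper closes this with the remaining conjuncts of $f^{u}_{\mathit{conf}}$: for each $\gamma\in\mathit{Dep}(a,\Gamma)$, $\mathit{secure}(u,\mathit{getInfoS}(\gamma,a),s)$ and $\mathit{secure}(u,\mathit{getInfoV}(\gamma,a),s)$ hold, and \thref{theorem:getInfo:sound:and:complete} shows these sentences \emph{characterise} whether $\gamma\in\mathit{Ex}$ after the update. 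Together with \thref{theorem:secure:sound:under:approximation} this pins the exception set identically across $\llbracket r\rrbracket_{P,u}$. You should invoke this explicitly.
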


\begin{proof}
Let $u$ be a user in ${\cal U}$, $P = \langle M, f_{\mathit{conf}}^{u} \rangle$ be an \accessControlConfiguration{}, where $M = \langle D,\Gamma\rangle$ is a system configuration and $f_{\mathit{conf}}^{u}$ is as above, and $L$ be the $P$-LTS.
In the following, we use $e$ to refer to the $\mathit{extend}$ function and $f$ to refer to $f_{\mathit{conf}}^{u}$.
We prove our claim by contradiction.
Assume, for contradiction's sake, that there is a run $r  \in \mathit{traces}(L)$ and an action $a \in {\cal A}_{D,u}$ such that $\mathit{e}(r,a)$ is defined and $a$ does not preserve the equivalence class for $r$, $P$, and $u$.
According to the LTS semantics, the fact that  $\mathit{e}(r,a)$ is defined implies that $\mathit{triggers}(\mathit{last}(r)) = \epsilon$.
Therefore,  $\mathit{triggers}(\mathit{last}(r')) = \epsilon$ holds as well for any for any $r' \in \llbracket r \rrbracket_{P,u}$ (because $r$ and $r'$ are indistinguishable and, therefore, their projections are consistent), and, thus, $\mathit{e}(r',a)$ is defined as well for any $r' \in \llbracket r \rrbracket_{P,u}$.
There are a number of cases depending on $a$:
\begin{compactenum}
\item $a = \langle u, \mathtt{SELECT}, q \rangle$.
There are two cases:
\begin{compactenum}
\item $\mathit{secEx}(\mathit{last}(e(r,a))) = \bot$.
From the LTS rules and  $\mathit{secEx}(\mathit{last}(e(r,a))) = \bot$, it follows that $f(\mathit{last}(r),a) \\ = \top$.
From this and \thref{theorem:f:conf:soundness}, it follows that $f(\mathit{last}(r'), a) = \top$ for any $r' \in \llbracket r \rrbracket_{P,u}$.
From this and the LTS rules, it follows $\mathit{secEx}(\mathit{last}(e(r',a))) = \bot$ for any $r' \in \llbracket r \rrbracket_{P,u}$.
From $f(\mathit{last}(r'),a) = \top$ for any $r' \in \llbracket r \rrbracket_{P,u}$, it follows that $\mathit{secure}(u,q,\mathit{last}(r')) \\ = \top$ for any $r' \in \llbracket r \rrbracket_{P,u}$.
From this and \thref{theorem:secure:sound:under:approximation}, it follows that $[q]^{\mathit{last}(r').\mathit{db}} = [q]^{\mathit{last}(r).\mathit{db}}$ for all $r' \in \llbracket r \rrbracket_{P,u}$.
Furthermore, it follows trivially from the LTS rule \emph{\texttt{SELECT} Success}, that the state after $a$'s execution is data indistinguishable from $\mathit{last}(r)$.
It is also easy to see that $e(r',a)$ is well-defined for any $r' \in \llbracket r \rrbracket_{P,u}$.
From the considerations above and $r' \in \llbracket r \rrbracket_{P,u}$, it follows trivially that $e(r',a) \in  \llbracket e(r,a) \rrbracket_{P,u}$. The bijection $b$ is trivially $b(r') = e(r',a)$. 
This leads to a contradiction.

\item $\mathit{secEx}(\mathit{last}(e(r,a))) = \top$.
From the LTS rules and  $\mathit{secEx}(\mathit{last}(e(r,a))) = \top$, it follows that $f(\mathit{last}(r),a) \\ = \bot$.
From this and \thref{theorem:f:conf:soundness}, it follows that $f(\mathit{last}(r'), a) = \bot$ for any $r' \in \llbracket r \rrbracket_{P,u}$.
From this and the LTS rules, it follows $\mathit{secEx}(\mathit{last}(e(r',a))) = \top$ for any $r' \in \llbracket r \rrbracket_{P,u}$.
The data indistinguishability between $\mathit{last}(e(r',a))$ and $\mathit{last}(e(r,a))$ follows trivially from the data indistinguishability between $\mathit{last}(r')$ and $\mathit{last}(r)$.
Therefore, for any run $r' \in \llbracket r \rrbracket_{P,C}$, there is exactly one run $e(r',a)$. 
From the considerations above, it follows trivially that $e(r',a) \\ \in  \llbracket e(r,a) \rrbracket_{P,u}$.
The bijection $b$ is trivially $b(r') = e(r',a)$. This leads to a contradiction.

\end{compactenum}
Both cases leads to a contradiction. 
This completes the proof for $a = \langle u, \mathtt{SELECT}, q \rangle$.

\item $a = \langle u, \mathtt{INSERT}, R, \overline{t} \rangle$. 
In the following, we denote by $\mathit{gI}$ the function $\mathit{getInfo}$, by $\mathit{gS}$ the function $\mathit{getInfoS}$, and by $\mathit{gV}$ the function $\mathit{getInfoV}$.
There are three cases:

\begin{compactenum}
\item $\mathit{secEx}(\mathit{last}(e(r,a))) = \bot$ and $\mathit{Ex}(\mathit{last}(e(r,a))) = \emptyset$. 
From the LTS rules and  $\mathit{secEx}(\mathit{last}(e(r,a))) = \bot$, it follows that $f(\mathit{last}(r),a) = \top$.
From this and \thref{theorem:f:conf:soundness}, it follows that $f(\mathit{last}(r'),a) = \top$ for any $r' \in \llbracket r \rrbracket_{P,u}$.
From this and the LTS rules, it follows that $\mathit{secEx}(\mathit{last}(e(r',a))) = \bot$ for any $r' \in \llbracket r \rrbracket_{P,u}$.
From $f_{\mathit{conf}}^{u}$'s definition and  $f(\mathit{last}(r),a) = \top$, it follows that $\mathit{secure}(u,\mathit{gS}(\gamma, \mathit{act}), \mathit{last}(r))$ holds for any integrity constraint $\gamma$ in $\mathit{Dep}(\Gamma,a)$.
From $\mathit{Ex}(\mathit{last}(e(r,a))) = \emptyset$ and \thref{theorem:getInfo:sound:and:complete}, it follows $[\mathit{gS}(\gamma, \mathit{act})]^{\mathit{last}(r).\mathit{db}} = \top$.
From this,  $\mathit{secure}(u, \\ \mathit{gS}(\gamma, \mathit{act}), \mathit{last}(r))$, and \thref{theorem:secure:sound:under:approximation}, it follows that $[\mathit{gS}(\gamma, \mathit{act})]^{\mathit{last}(r').\mathit{db}} = \top$ for any $r'  \in \llbracket r\rrbracket_{P,u}$. 
From this and \thref{theorem:getInfo:sound:and:complete}, it follows that $\mathit{Ex}(\mathit{last}(e(r', \\ a))) = \emptyset$ for any $r'  \in \llbracket r\rrbracket_{P,u}$. 
We  claim that, for any $r'  \in \llbracket r\rrbracket_{P,u}$, $\mathit{last}(e(r,a))$ and $\mathit{last}(e(r', a))$ are data indistinguishable.
From this and the above considerations, it follows trivially that $e(r',a) \in  \llbracket e(r,a) \rrbracket_{P,u}$.
The bijection $b$ is trivially $b(r') = e(r',a)$.
This leads to a contradiction.

We now prove our claim that for any $r'  \in \llbracket r\rrbracket_{P,u}$, $\mathit{last}(e(r,a))$ and $\mathit{last}(e(r',a))$ are data indistinguishable.
We prove the claim by contradiction.
Let $s_{2} = \langle \mathit{db}_{2}, U_{2},\mathit{sec}_{2}, T_{2},V_{2} \rangle$ be $\mathit{pState}(\mathit{last}(e(r,  a)))$, $s_{2}' = \langle \mathit{db}_{2}', U_{2}',\mathit{sec}_{2}', T_{2}',V_{2}' \rangle$ be $\mathit{pState}(\mathit{last}(e (r',a)))$, $s_{1} = \langle \mathit{db}_{1}, U_{1},\mathit{sec}_{1}, T_{1},V_{1} \rangle$ be $\mathit{pState}(\mathit{last}  (r))$, and $s_{1}' = \langle \mathit{db}_{1}', U_{1}',\mathit{sec}_{1}', T_{1}',V_{1}'\rangle$ be $\mathit{pState}(\mathit{last}  (r'))$.
In the following, we denote  the $\mathit{permissions}$ function by $p$.
Furthermore, note that $s_{1}$ and $s_{1}'$ are data-indistinguishable because $r'  \in \llbracket r\rrbracket_{P,u}$.
There are a number of cases:
\begin{compactenum}

\item $U_{2} \neq U_{2}'$. 
Since $a$ is an \texttt{INSERT} operation, it follows that $U_{1} = U_{2}$ and $U_{1}' = U_{2}'$.
Furthermore, from $s_{1} \cong_{u,M}^{\mathit{data}} s_{1}'$, it follows that $U_{1} = U_{1}'$.
Therefore, $U_{2} = U_{2}'$ leading to a contradiction.

\item $\mathit{sec}_{2} \neq \mathit{sec}_{2}'$. 
The proof is similar to the case $U_{2} \neq U_{2}'$. 

\item $\mathit{T}_{2} \neq \mathit{T}_{2}'$. 
The proof is similar to the case $U_{2} \neq U_{2}'$. 

\item $\mathit{V}_{2} \neq \mathit{V}_{2}'$. 
The proof is similar to the case $U_{2} \neq U_{2}'$.

\item there is a table $R'$ for which $\langle \oplus, \mathtt{SELECT},  R\rangle \in p(s_{2},u)$ and $\mathit{db}_{2}(R') \neq \mathit{db}_{2}'(R')$.
Note that $p(s_{2},u) = p(s_{1},u)$.
There are two cases:
\begin{compactitem}
\item $R = R'$.
From $s_{1} \cong_{u,M}^{\mathit{data}} s_{1}'$ and $\langle \oplus, \mathtt{SELECT}, R\rangle \\ \in p(s_{2},u) $, it follows that $\mathit{db}_{1}(R') = \mathit{db}_{1}'(R')$.
From this and the fact that $a$ has been executed successfully both in $e(r,a)$ and $e(r',a)$, it follows that $\mathit{db}_{2}(R') = \mathit{db}_{1} (R') \cup \{\overline{t}\}$ and  $\mathit{db}_{2}'(R') = \mathit{db}_{1}'(R') \cup \{\overline{t}\}$.
From this and $\mathit{db}_{1}(R') = \mathit{db}_{1}'(R')$, it follows that $\mathit{db}_{2}(R') = \mathit{db}_{2}'(R')$ leading to a contradiction.

\item $R \neq R'$.
From $s_{1} \cong_{P,u}^{\mathit{data}} s_{1}'$ and $\langle \oplus, \mathtt{SELECT},  R\rangle \\ \in p(s_{2},u)$, it follows that $\mathit{db}_{1}(R') = \mathit{db}_{1}'(R')$.
From this and the fact that $a$ does not modify $R'$, it follows that $\mathit{db}_{1}(R')  = \mathit{db}_{2}(R')$ and  $\mathit{db}_{1}'(R') = \mathit{db}_{2}'(R')$.
From this and $\mathit{db}_{1}(R') = \mathit{db}_{1}'(R')$, it follows that $\mathit{db}_{2}(R') = \mathit{db}_{2}'(R')$ leading to a contradiction.
\end{compactitem}

\item there is a view $v$ for which $\langle \oplus, \mathtt{SELECT},  v\rangle \in p(s_{2}, u)$ and $\mathit{db}_{2}(v) \neq \mathit{db}_{2}'(v)$.
Note that $p(s_{2},\\ u) = p(s_{1},u)$.
Since $a$ has been successfully executed in both states, we know that $\mathit{noLeak}(s_{1},\\ a, u)$ hold.
There are two cases:
\begin{compactitem}
\item $R \not\in \mathit{tDet}(v,s,M)$.
Then, $v(s_{1}) = v(s_{2})$ and  $v(s_{1}') = v(s_{2}')$ (because $R$'s content does not determine $v$'s materialization).
From $s_{1} \cong_{u,M}^{\mathit{data}} s_{1}'$ and the fact that $a$ modifies only $R$, it follows that  $v(\mathit{db}_{2}) = v(\mathit{db}_{2}')$ leading to  a contradiction.

\item $R \in \mathit{tDet}(v,s,M)$ and for all $o \in \mathit{tDet}(v,s, \\ M)$, $\langle \oplus, \mathtt{SELECT}, o \rangle \in  p(s_{1},u)$. 
From this and $s_{1} \cong_{u,M}^{\mathit{data}} s_{1}'$, it follows that, for all $o \in \mathit{tDet}(v,\\ s,M)$, $o(s_{1}) = o(s_{1}')$.
If $o \neq R$, $o(s_{1}) = o(s_{1}') = o(s_{2}) = o(s_{2}')$.
From $s_{1} \cong_{u,M}^{\mathit{data}} s_{1}'$ and $\langle \oplus, \mathtt{SELECT},  R\rangle \in p(s_{1},u) $, it follows that $\mathit{db}_{1}(R) = \mathit{db}_{1}'(R)$.
From this and the fact that $a$ has been executed successfully both in $e(r,a)$ and $e(r',a)$, it follows that $\mathit{db}_{2}(R) = \mathit{db}_{1}  (R) \cup \{\overline{t}\}$ and  $\mathit{db}_{2}'(R) = \mathit{db}_{1}'(R) \cup \{\overline{t}\}$.
From this and $\mathit{db}_{1}(R) = \mathit{db}_{1}'(R)$, it follows that $\mathit{db}_{2}(R) = \mathit{db}_{2}'(R)$.
From this and for all $o \in \mathit{tDet}(v,s, M)$ such that $o \neq R$, $o(s_{2}) = o(s_{2}')$, it follows that for all $o \in \mathit{tDet}(v,s,M)$, $o(s_{2}) = o(s_{2}')$.
Since the content of all tables determining $v$ is the same in $s_{2}$ and $s_{2}'$, it follows that  $\mathit{db}_{2}(v) = \mathit{db}_{2}'(v)$ leading to  a contradiction.
\end{compactitem}

\end{compactenum}
All the cases lead to a contradiction.

\item $\mathit{secEx}(\mathit{last}(e(r,a))) = \bot$ and $\mathit{Ex}(\mathit{last}(e(r,a))) \neq \emptyset$.
From the LTS rules and  $\mathit{secEx}(e(r,a)) = \bot$, it follows that $f(\mathit{last}(r),a) = \top$.
From this and \thref{theorem:f:conf:soundness}, it follows that $f(\mathit{last}(r'),a) = \top$ for any $r' \in \llbracket r \rrbracket_{P,u}$.
From this and the LTS rules, it follows that $\mathit{secEx}(\mathit{last}(e(r',a))) = \bot$ for any $r' \in \llbracket r \rrbracket_{P,u}$.
Assume that the exception has been caused by the constraint $\gamma$, i.e., $\gamma \in \mathit{Ex}(\mathit{last}(e(r,a)))$.
From this and \thref{theorem:getInfo:sound:and:complete}, it follows that $\mathit{gV}(\gamma, \mathit{a})$ holds in $\mathit{last}(r).\mathit{db}$.
From $f_{\mathit{conf}}^{u}$'s definition, it thus follows that $\mathit{secure}(u,  \mathit{gV}(\gamma, \mathit{a}), \mathit{last}(r))$ holds.
From this,   $[\mathit{gV}(\gamma, \mathit{a})]^{\mathit{last}(r).\mathit{db}} = \top$, and \thref{theorem:secure:sound:under:approximation}, it  follows that $[\mathit{gV}(\gamma, \mathit{act})]^{\mathit{last}(r').\mathit{db}} = \top$ for any $r'  \in \llbracket r\rrbracket_{P,u}$. 
From this and \thref{theorem:getInfo:sound:and:complete}, it follows that $\gamma \in \mathit{Ex}(\mathit{last}(e(r',a)))$ for any $r'  \in \llbracket r\rrbracket_{P,u}$.
The data indistinguishability between $\mathit{last}(e(r,a))$ and $\mathit{last}(e(r',a))$ follows trivially from the data indistinguishability between $\mathit{last}(r)$ and $\mathit{last}(r')$ for any $r' \in \llbracket r \rrbracket_{P,u}$. 
Therefore, for any run $r' \in \llbracket r \rrbracket_{P,u}$, there is exactly one run $e(r',a)$.
From the considerations above, it follows trivially that $e(r',a) \in  \llbracket e(r,a) \rrbracket_{P,u}$.
The bijection $b$ is trivially $b(r') = e(r',a)$.
This leads to a contradiction.

\item $\mathit{secEx}(\mathit{last}(e(r,a))) = \top$.
From the LTS rules and  $\mathit{secEx}(\mathit{last}(e(r,a))) = \top$, it follows that $f(\mathit{last}(r),a) \\ = \bot$.
From this and \thref{theorem:f:conf:soundness}, it follows that $f(\mathit{last}(r'), a) = \bot$ for any $r' \in \llbracket r \rrbracket_{P,u}$.
From this and the LTS rules, it follows $\mathit{secEx}(\mathit{last}(e(r',a))) = \top$ for any $r' \in \llbracket r \rrbracket_{P,u}$.
The data indistinguishability between $\mathit{last}(e(r,a))$ and $\mathit{last}(e(r',a))$ follows trivially from the data indistinguishability between $\mathit{last}(r)$ and $\mathit{last}(r')$ for any $r' \in \llbracket r \rrbracket_{P,u}$.
Therefore, for any run $r' \in \llbracket r \rrbracket_{P,u}$, there is exactly one run $e(r',a)$.
From the considerations above, it follows trivially that $e(r',a) \in  \llbracket e(r,a) \rrbracket_{P,u}$.
The bijection $b$ is trivially $b(r') = e(r',a)$. This leads to a contradiction.
\end{compactenum}
All cases lead to a contradiction.
This completes the proof for $a = \langle u, \mathtt{INSERT}, R, \overline{t} \rangle$.

\item $a = \langle u, \mathtt{DELETE}, R, \overline{t} \rangle$. 
The proof is similar to that for $a = \langle u, \mathtt{INSERT}, R, \overline{t} \rangle$.

\item $a = \langle \oplus, u', p, u\rangle$. 
There are two cases:
\begin{compactenum}
\item $\mathit{secEx}(\mathit{last}(e(r,a))) = \bot$.
We assume that $p = \langle \mathtt{SELECT}, O\rangle$ for some $O \in D \cup V$.
If this is not the case, the proof is trivial.
Furthermore,  we also assume that $u' = u$, otherwise the proof is, again, trivial since the new permission does not influence $u$'s permissions.
From the LTS rules and  $\mathit{secEx}(\mathit{last}(e(r,a))) = \bot$, it follows that $f(\mathit{last}(r),a) \\ = \top$.
From this and \thref{theorem:f:conf:soundness}, it follows that $f(\mathit{last}(r'),a) = \top$ for any $r' \in \llbracket r \rrbracket_{P,u}$.
From this and the LTS rules, it follows $\mathit{secEx}(\mathit{last}(e(r',a))) = \bot$ for any $r' \in \llbracket r \rrbracket_{P,u}$. 
From $\mathit{secEx}(\mathit{last}(e(r,a))) = \bot$ and $f_{\mathit{conf}}^{u}$'s definition, it follows that $\mathit{last}(r').\mathit{sec} = \mathit{last}(e(r',  a)).\mathit{sec}$.
Therefore, since $\mathit{last}(r)$ and $\mathit{last}(r')$ are data indistinguishable, for any $r' \in \llbracket r \rrbracket_{P,u}$, then also   $\mathit{last}(e(r,a))$ and $\mathit{last}(e(r',a))$ are data indistinguishable.
Therefore, for any run $r' \in \llbracket r \rrbracket_{P,u}$, there is exactly one run $e(r',a)$.
From the considerations above, it follows trivially that $e(r',a) \in  \llbracket e(r,a) \rrbracket_{P,u}$.
The bijection $b$ is trivially $b(r') = e(r',a)$.
This leads to a contradiction.

\item $\mathit{secEx}(\mathit{last}(e(r,a))) = \top$.  
From the LTS rules and  $\mathit{secEx}(\mathit{last}(e(r,a))) = \top$, it follows that $f(\mathit{last}(r),a) \\ = \bot$.
From this and \thref{theorem:f:conf:soundness}, it follows that $f(\mathit{last}(r'), a) = \bot$ for any $r' \in \llbracket r \rrbracket_{P,u}$.
From this and the LTS rules, it follows $\mathit{secEx}(\mathit{last}(e(r',a))) = \top$ for any $r' \in \llbracket r \rrbracket_{P,u}$.
The data indistinguishability between $\mathit{last}(e(r',a))$ and $\mathit{last}(e(r,a))$ follows trivially from the data indistinguishability between $\mathit{last}(r')$ and $\mathit{last}(r)$.
Therefore, for any run $r' \in \llbracket r \rrbracket_{P,u}$, there is exactly one run $e(r',a)$.
From the considerations above, it follows trivially that $e(r',a) \\ \in  \llbracket e(r,a) \rrbracket_{P,u}$.
The bijection $b$ is trivially $b(r') = e(r',a)$.
This leads to a contradiction.

\end{compactenum}
Both cases lead to a contradiction.
This completes the proof for $a = \langle \oplus, u', p, u\rangle$.

\item $a = \langle \oplus^{*}, u', p, u\rangle$. 
The proof is similar to that for $a = \langle \oplus, u', p, u\rangle$.

\item $a = \langle \ominus, u', p, u\rangle$.
The proof is similar to that for $a = \langle u, \mathtt{SELECT}, q \rangle$.
The only difference is in proving that for any $r'  \in \llbracket r\rrbracket_{P,u}$, $\mathit{last}(e(r,a))$ and $\mathit{last}(e(r',a))$ are data indistinguishable.
Assume, for contradiction's sake, that this is not the case.
Let $s_{2} = \langle \mathit{db}_{2}, U_{2},\mathit{sec}_{2}, T_{2}, V_{2} \rangle$ be $\mathit{pState}(\mathit{last}(e(r, a)))$, $s_{2}' = \langle \mathit{db}_{2}', U_{2}',\mathit{sec}_{2}', T_{2}',V_{2}' \rangle$ be $\mathit{pState}(\mathit{last}(e (r',a)))$, $s_{1} = \langle \mathit{db}_{1}, U_{1},\mathit{sec}_{1}, T_{1},V_{1} \rangle$ be \\$\mathit{pState}(\mathit{last}  (r))$, and, finally, $s_{1}' = \langle \mathit{db}_{1}', U_{1}',\mathit{sec}_{1}', T_{1}',V_{1}'\rangle$ be $\mathit{pState}(\mathit{last}  (r'))$.
In the following, we denote  the $\mathit{permissions}$ function by $p$.
Furthermore, note that $s_{1}$ and $s_{1}'$ are data-indistinguishable because $r'  \in \llbracket r\rrbracket_{P,u}$.
There are a number of cases:
\begin{compactenum}

\item $U_{2} \neq U_{2}'$. 
Since $a$ is an \texttt{REVOKE} operation, it follows that $U_{1} = U_{2}$ and $U_{1}' = U_{2}'$.
Furthermore, from $s_{1} \cong_{u,M}^{\mathit{data}} s_{1}'$, it follows that $U_{1} = U_{1}'$.
Therefore, $U_{2} = U_{2}'$ leading to a contradiction.

\item $\mathit{sec}_{2} \neq \mathit{sec}_{2}'$. 
From $s_{1} \cong_{u,M}^{\mathit{data}} s_{1}'$, it follows that $\mathit{sec}_{1} = \mathit{sec}_{1}'$.
From $a$'s definition and the LTS rules, it follows that $\mathit{sec}_{2} = 
\mathit{revoke}(\mathit{sec}_{1}, u',  p,u)$ and $\mathit{sec}_{2}' = \mathit{revoke}(\mathit{sec}_{1}', u',p,u)$.
From this and $\mathit{sec}_{1} = \mathit{sec}_{1}'$, it follows that $\mathit{sec}_{2} = \mathit{sec}_{2}'$ leading to a contradiction.

\item $\mathit{T}_{2} \neq \mathit{T}_{2}'$. 
The proof is similar to the case $U_{2} \neq U_{2}'$. 

\item $\mathit{V}_{2} \neq \mathit{V}_{2}'$. 
The proof is similar to the case $U_{2} \neq U_{2}'$.

\item there is a table $R$ for which $\langle \oplus, \mathtt{SELECT},  R\rangle \in p(s_{2},u)$ and $\mathit{db}_{2}(R) \neq \mathit{db}_{2}'(R)$.
Since $a$ is an \texttt{REVOKE} operation, it follows that $\mathit{db}_{1} = \mathit{db}_{2}$ and $\mathit{db}_{1}' = \mathit{db}_{2}'$.
Furthermore, from $s_{1} \cong_{u,M}^{\mathit{data}} s_{1}'$, it follows that $\mathit{db}_{1}(R) = \mathit{db}_{1}'(R)$.
From this, $\mathit{db}_{1} = \mathit{db}_{2}$, and $\mathit{db}_{1}' = \mathit{db}_{2}'$, it follows that $\mathit{db}_{2}(R) = \mathit{db}_{2}'(R)$ leading to a contradiction.

\item there a view $v$ for which $\langle \oplus, \mathtt{SELECT},  v\rangle \in p(s_{2}, u)$ and $\mathit{db}_{2}(v) \neq \mathit{db}_{2}'(v)$.
Since $a$ is an \texttt{REVOKE} operation, it follows that $\mathit{db}_{1} = \mathit{db}_{2}$ and $\mathit{db}_{1}' = \mathit{db}_{2}'$.
Furthermore, from $s_{1} \cong_{u,M}^{\mathit{data}} s_{1}'$, it follows that $\mathit{db}_{1}(v) = \mathit{db}_{1}'(v)$.
From this, $\mathit{db}_{1} = \mathit{db}_{2}$, and $\mathit{db}_{1}' = \mathit{db}_{2}'$, it follows that $\mathit{db}_{2}(v) = \mathit{db}_{2}'(v)$ leading to a contradiction.

\end{compactenum}
All the cases lead to a contradiction.

\item $a = \langle u, \mathtt{CREATE}, o \rangle$. 
The proof is similar to that for $a = \langle \ominus, u', p, u\rangle$.

\item $a = \langle u, \mathtt{ADD\_USER}, u' \rangle$. 
The proof is similar to that for $a = \langle \ominus, u', p, u\rangle$.
\end{compactenum}

This completes the proof.
\end{proof}

\begin{lemma}\thlabel{theorem:f:conf:pec:2}
Let $u$ be a user in ${\cal U}$, $P = \langle M, f_{\mathit{conf}}^{u} \rangle$ be an \accessControlConfiguration{}, where $M = \langle D,\Gamma\rangle$ is a system configuration and $f_{\mathit{conf}}^{u}$ is as above, and $L$ be the $P$-LTS.
For any run $r \in \mathit{traces}(L)$ such that $\mathit{invoker}(\mathit{last}(r)) = u$ and any trigger $t \in {\cal TRIGGER}_{D}$,
if $\mathit{extend}(r,t)$ is defined, then $t$ preserves the equivalence class for $r$, $M$, and $u$.
\end{lemma}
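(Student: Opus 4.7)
The plan is to mirror the structure of \thref{theorem:f:conf:pec:1}, using the hypotheses on the trigger context together with \thref{theorem:f:conf:soundness} to transfer the PDP's decisions from $\mathit{last}(r)$ to any indistinguishable run $\mathit{last}(r')$. First I observe that since $\mathit{extend}(r,t)$ is defined, the LTS rules for triggers require $\mathit{tr}(\mathit{last}(r)) = t$ and $\mathit{invoker}(\mathit{last}(r)) = u$. For any $r' \in \llbracket r\rrbracket_{P,u}$, the consistency condition of $\cong_{P,u}$ guarantees $\mathit{pState}(\mathit{last}(r')) \cong_{u,M}^{\mathit{data}} \mathit{pState}(\mathit{last}(r))$, $\mathit{tr}(\mathit{last}(r')) = \mathit{tr}(\mathit{last}(r)) = t$, and $\mathit{invoker}(\mathit{last}(r')) = u$. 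Hence $\mathit{extend}(r',t)$ is defined for every $r' \in \llbracket r\rrbracket_{P,u}$, and the obvious candidate bijection is $b(r') := \mathit{extend}(r',t)$, which is clearly injective (distinct traces remain distinct after appending $t$ and the resulting successor state).

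Next I would carry out a case analysis, following the trigger rules in Figures~\ref{table:rules:lts:5a}, \ref{table:rules:lts:5b}, and~\ref{table:rules:lts:6}, on the outcome of $t$ in $\mathit{last}(r)$: (i) $t$ disabled (\texttt{WHEN} evaluates to $\bot$), (ii) $t$ enabled but the guard \texttt{SELECT} denied by $f^{u}_{\mathit{conf}}$, (iii) $t$ enabled, guard allowed, action denied, (iv) $t$ enabled, guard and action allowed, action succeeds, and (v) the same as (iv) but with an integrity-constraint exception. In each case, $f^{u}_{\mathit{conf}}$'s decision on the guard \texttt{SELECT} and on the trigger's action is determined by the corresponding branches of $f^{u}_{\mathit{conf},\mathtt{S}}$, $f^{u}_{\mathit{conf},\mathtt{I,D}}$, or $f^{u}_{\mathit{conf},\mathtt{G}}$. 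Applying \thref{theorem:f:conf:soundness} to the states $\mathit{last}(r)$ and $\mathit{last}(r')$ (whose partial states are data-indistinguishable and share $\mathit{invoker}$ and $\mathit{tr}$) yields that these PDP decisions agree on $r$ and $r'$. Similarly, whether the guard is satisfied by $\mathit{last}(r).\mathit{db}$ agrees with evaluation on $\mathit{last}(r').\mathit{db}$ by \thref{theorem:secure:sound:under:approximation} together with the $\mathit{secure}$ check built into the relevant branch of $f^{u}_{\mathit{conf}}$, and whether an integrity-constraint violation occurs agrees by \thref{theorem:getInfo:sound:and:complete} combined with the $\mathit{getInfoS}/\mathit{getInfoV}$ security checks.

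For cases (i)--(iii), the only change from $\mathit{last}(r')$ to $\mathit{last}(\mathit{extend}(r',t))$ is in the context, so data-indistinguishability is preserved immediately. For the \texttt{GRANT}/\texttt{REVOKE} sub-case of (iv), the action modifies only $\mathit{sec}$, and since $\mathit{sec}$ and the action computed by $\mathit{getAction}$ from $\mathit{tpl}$ and $\mathit{invoker}$ are identical in $\mathit{last}(r)$ and $\mathit{last}(r')$ (the tuple and invoker are the same by consistency), the resulting policies agree. For the \texttt{INSERT}/\texttt{DELETE} sub-case of (iv), I argue exactly as in the $a = \langle u, \mathtt{INSERT}, R, \overline{t}\rangle$ branch of \thref{theorem:f:conf:pec:1}: the relevant tables authorised for $u$ have equal contents in $\mathit{last}(r)$ and $\mathit{last}(r')$, the $\mathit{noLeak}$ check guarantees that any view readable by $u$ whose materialisation depends on $R$ has all of its determining tables readable by $u$, and hence materialisations of authorised views coincide after the update. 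Case (v) is handled as case (i) since the transaction rolls the database back.

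The main obstacle is the \texttt{INSERT}/\texttt{DELETE} sub-case of (iv): it requires reproducing in full the view-leakage argument from \thref{theorem:f:conf:pec:1}, but with the concrete action being the one computed by $\mathit{getAction}(\mathit{stmt}, \mathit{user}(\ldots), \mathit{tpl}(\mathit{last}(r)))$ rather than a user-issued action. The subtlety is that this action depends on $\mathit{tpl}(\mathit{last}(r))$ and on $\mathit{user}(\ldots)$, which depends on the trigger's mode; consistency of $r|_u$ and $r'|_u$ ensures $\mathit{tpl}(\mathit{last}(r)) = \mathit{tpl}(\mathit{last}(r'))$ and the invoker agrees, so the same concrete \texttt{INSERT}/\texttt{DELETE} is attempted in both runs, at which point the argument from \thref{theorem:f:conf:pec:1} transfers verbatim.
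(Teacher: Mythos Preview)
Your overall plan is right, and the transfer of PDP decisions via \thref{theorem:f:conf:soundness} together with the agreement of $\mathit{tr}$, $\mathit{invoker}$, and $\mathit{tpl}$ is exactly what is needed. But your treatment of the exception cases is wrong, and this is the one place where the trigger case genuinely differs from \thref{theorem:f:conf:pec:1}.

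In your cases (ii), (iii), and (v) the LTS does \emph{not} merely update the context. Look at the rules \emph{Trigger Deny Condition}, \emph{Trigger Deny Action}, and \emph{Trigger \texttt{DELETE}-\texttt{INSERT} Exception}: in all three the resulting state is $\langle \mathit{db}', U', \mathit{sec}', T', V', \ldots\rangle$ where $\mathit{rS} = \langle \mathit{db}', U', \mathit{sec}', T', V'\rangle$ is the roll-back state stored in the pending transaction. This $\mathit{rS}$ was recorded by the \emph{\texttt{INSERT}/\texttt{DELETE} Success 2} rule when $u$'s original command fired, so it is the partial state \emph{before} that command, not the state $\mathit{last}(r)$ immediately before the trigger $t$. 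In general $\mathit{pState}(\mathit{last}(r)) \neq \mathit{rS}$: the original \texttt{INSERT}/\texttt{DELETE} already changed the database, and any earlier triggers in the chain may have changed it further. So ``only the context changes'' is false for (ii) and (iii), and ``handled as case (i)'' is wrong for (v) since case (i) involves no roll-back at all.

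What you actually need for these cases is the argument the paper gives: write $r = p \concat s_0 \concat \langle u, \mathit{op}, R', \overline{v}\rangle \concat s_1 \concat t_1 \concat \ldots \concat t_n \concat s_n$ and likewise $r' = p' \concat s_0' \concat \ldots$; then $\mathit{rS} = \mathit{pState}(s_0)$ and $\mathit{rS}' = \mathit{pState}(s_0')$. Since $s_0$ and $s_0'$ are the states immediately preceding a $u$-action in indistinguishable runs, the definition of $\cong_{P,u}$ gives $s_0 \cong_{u,M}^{\mathit{data}} s_0'$, and hence $\mathit{last}(e(r,t)) \cong_{u,M}^{\mathit{data}} \mathit{last}(e(r',t))$. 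This is the missing step; once you add it, the rest of your case analysis goes through.
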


\begin{proof}
Let $u$ be a user in ${\cal U}$, $P = \langle M, f_{\mathit{conf}}^{u} \rangle$ be an \accessControlConfiguration{}, where $M = \langle D,\Gamma\rangle$ is a system configuration and $f_{\mathit{conf}}^{u}$ is as above, and $L$ be the $P$-LTS.
In the following, we use $e$ to refer to the $\mathit{extend}$ function.
The proof in the cases where the trigger $t$ is not enabled, i.e., its \texttt{WHEN} condition is not satisfied, or $t$'s \texttt{WHEN} condition is not secure are similar to the proof of the \texttt{SELECT} case of \thref{theorem:f:conf:pec:1}.
In the following, we therefore assume that the trigger $t$ is enabled and that its \texttt{WHEN} condition is secure.
We prove our claim by contradiction.
Assume, for contradiction's sake, that there is a run $r  \in \mathit{traces}(L)$ such that $\mathit{invoker}(\mathit{last}(r)) = u$ and a trigger $t$ such that $\mathit{e}(r,t)$ is defined and $t$ does not preserve the equivalence class for $r$, $P$, and $u$.
Since $\mathit{invoker}(\mathit{last}(r))=u$ and $\mathit{e}(r,t)$ is defined, then $\mathit{e}(r',t)$ is defined as well for any $r' \in \llbracket r \rrbracket_{P,u}$  (indeed, from $\mathit{invoker}(\mathit{last}(r))=u$, it follows that the last action in $r$ is either an action issued by $u$ or a trigger invoker by $u$. 
From this, the fact that $e(r,t)$ is defined, and the fact that $r$ and $r'$ are indistinguishable, it follows that $\mathit{tr}(\mathit{last}(r)) = \mathit{tr}(\mathit{last}(r')) = t$).
Let $a$ be $t$'s action and $w = \langle u', \texttt{SELECT},q\rangle$ be the \texttt{SELECT} command associated with $t$'s \texttt{WHEN} condition.
Let $s$ be the state $\mathit{last}(r)$, $s'$ be the state obtained just after the execution of the \texttt{WHEN} condition, and $s''$ be the state $\mathit{last}(e(r,t))$.
There are a number of cases depending on $t$'s action $a$:
\begin{compactenum}

\item $a = \langle u', \mathtt{INSERT}, R, \overline{t} \rangle$. 
There are three cases:

\begin{compactenum}
\item $\mathit{secEx}(s'') = \bot$ and $\mathit{Ex}(s'') = \emptyset$. 
The proof of this case is similar to that of the corresponding case in \thref{theorem:f:conf:pec:1}.

\item $\mathit{secEx}(s'') = \bot$ and $\mathit{Ex}(s'') \neq \emptyset$.
The only difference between the proof of this case in this Lemma and in that of \thref{theorem:f:conf:pec:1} is that we have to establish again the data indistinguishability between $\mathit{last}(e(r,t))$ and $\mathit{last}(e(r',t))$.
Indeed, for triggers the roll-back state is, in general, different from the one immediately before the trigger's execution, i.e., it may be that $\mathit{pState}(\mathit{last}(e(r,t)))  \neq \mathit{pState}(\mathit{last}(r))$. 
We now prove that $\mathit{last}(e(r,t))$ and $\mathit{last}(e(r',t))$ are data indistinguishable. 
From the LTS semantics, it follows that $r = p \concat s_{0} \concat \langle \mathit{invoker}(\mathit{last}(r)), \mathit{op}, R',\overline{v} \rangle \concat  s_{1} \concat t_{1} \concat \ldots \concat s_{n-1} \concat t_{n} \concat s_{n}$, where $p \in \mathit{traces}(L)$ and $t_{1}, \ldots, t_{n} \in {\cal TRIGGER}_{D}$.
Similarly, $r' = p' \concat s_{0}' \concat \langle \mathit{invoker}(\mathit{last}(r)), \mathit{op}, R',\overline{v} \rangle \concat  s_{1}' \concat t_{1} \concat \ldots \concat s_{n-1}' \concat t_{n} \concat s_{n}'$, where $p' \in \mathit{traces}(L)$, $p \cong_{u,M} p'$, and all states $s_{i}$ and $s_{i}'$ are data indistinguishable.
Then, the roll-back states are, respectively, $s_{0}$ and $s_{0}'$, which are data indistinguishable.
From the LTS rules, $\mathit{last}(e(r,a)) = s_{0}$ and $\mathit{last}(e(r',a)) = s_{0}'$.
Therefore, the data indistinguishability between $\mathit{last}(e(r,\\ a))$ and $\mathit{last}(e(r', a))$ follows trivially for any $r' \in \llbracket r \rrbracket_{P,u}$.

\item $\mathit{secEx}(s'') = \top$. 
The proof is similar to the previous case.
\end{compactenum}
All cases lead to a contradiction.
This completes the proof for $a = \langle u', \mathtt{INSERT}, R, \overline{t} \rangle$.

\item $a = \langle u', \mathtt{DELETE}, R, \overline{t} \rangle$.
The proof is similar to that for $a = \langle u', \mathtt{INSERT}, R, \overline{t} \rangle$.

\item $a = \langle \oplus, u'', p, u'\rangle$.
There are two cases:
\begin{compactenum}
\item $\mathit{secEx}(s'') = \bot$. 
In this case, the proof is similar to the  corresponding case in \thref{theorem:f:conf:pec:1}.

\item $\mathit{secEx}(s'') = \top$. 
The proof is similar to the $\mathit{secEx}(s'') \\ = \top$ case of the $a = \langle u', \mathtt{INSERT}, R, \overline{t} \rangle$ case.
\end{compactenum}
Both cases lead to a contradiction.
This completes the proof for $a = \langle \oplus, u'', p, u'\rangle$.

\item $a = \langle \oplus^{*}, u'', p, u'\rangle$. 
The proof is similar to that for $a = \langle \oplus, u'', p, u'\rangle$.

\item $a = \langle \ominus, u'', p, u'\rangle$. 
The proof is similar to that for $a = \langle u', \mathtt{INSERT}, R, \overline{t} \rangle$.

\end{compactenum}
This completes the proof.
\end{proof}

We now prove our main result, namely that $f_{\mathit{conf}}^{u}$ provides \confidentiality{} with respect to the user $u$.
We first recall the concept of \emph{derivation}.
Given a judgment $r, i \attMod \phi$, a \emph{derivation of  $r, i \attMod \phi$ with respect to $\attackerModel$}, or \emph{a derivation of $r,i \attMod \phi$} for short, is a proof tree, obtained by applying the rules defining $\attackerModel$, that ends in $r, i \attMod \phi$.
With a slight abuse of notation, we use $r,i \attMod \phi$ to denote both the judgment and its derivation.
The length of a derivation, denoted $|r, i \attMod \phi|$, is the number of rule applications in it.

\begin{theorem}\thlabel{theorem:f:conf:confidentiality}
Let $u$ be a user in ${\cal U}$, $P = \langle M, f_{\mathit{conf}}^{u} \rangle$ be an \accessControlConfiguration{}, where $M$ is a system configuration and $f_{\mathit{conf}}^{u}$ is as above.
The \acf{} $f_{\mathit{conf}}^{u}$ provides \confidentiality{}  with respect to $P$, $u$, $\attackerModel$, and $\cong_{P,u}$.
\end{theorem}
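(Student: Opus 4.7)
The plan is to prove the theorem by structural induction on the length of the derivation of the judgment $r, i \attMod \phi$ in $\attackerModel$. To show that $f_{\mathit{conf}}^{u}$ provides \confidentiality{}, I must show that for every judgment $r, i \attMod \phi$ derivable in $\attackerModel$, we have $\mathit{secure}_{P,u}(r, i \attMod \phi)$. The argument will proceed case-by-case on the last rule applied in the derivation, exploiting the machinery already established in \thref{theorem:f:conf:pec:1}, \thref{theorem:f:conf:pec:2}, and the extension lemmas \thref{theorem:secure:extend:on:runs:insert:delete}--\thref{theorem:secure:extend:on:runs:triggers}.

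For the base case, I treat the rules of $\attackerModel$ that derive judgments from a single observation. For \emph{Integrity Constraint} the claim is immediate: every reachable database state satisfies $\Gamma$, so $[\gamma]^{s.\mathit{db}} = [\gamma]^{s'.\mathit{db}} = \top$ in every indistinguishable pair. For the \emph{SELECT Success} rules, the indistinguishability relation $\cong_{P,u}$ forces consistent projections, so $u$'s \texttt{SELECT} queries produce the same result in indistinguishable runs, which together with the fact that \texttt{SELECT}s do not modify the database yields security. For the \emph{INSERT/DELETE Success} and \emph{Exception} rules (including the FD/ID variants) one uses \thref{theorem:f:conf:pec:1}: the action at step $i$ preserves the equivalence class, and combined with \thref{theorem:attacker:model:sound} the derived sentence has the same truth value in every member of $\llbracket r^{i}\rrbracket_{P,u}$. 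The \emph{View} rule is handled by semantic equivalence of the formulas. The trigger-based base rules (\emph{Learn GRANT/REVOKE Backward}, trigger exception rules) require that $\mathit{invoker}(\mathit{last}(r^{i-1})) = u$, which is exactly the hypothesis under which \thref{theorem:f:conf:pec:2} gives equivalence-class preservation.

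For the inductive step, the propagation rules fall into three families. (i) \emph{Propagate Forward/Backward} rules over \texttt{SELECT}, \texttt{GRANT}/\texttt{REVOKE}, and \texttt{CREATE} actions follow from the induction hypothesis and \thref{theorem:secure:extend:on:runs:select:create} or \thref{theorem:secure:extend:on:runs:grant:revoke}, since the preservation of security across the step is exactly what those lemmas give (using \thref{theorem:f:conf:pec:1} to discharge the equivalence-class hypothesis). (ii) \emph{Propagate Forward/Backward INSERT/DELETE Success} rules use \thref{theorem:secure:extend:on:runs:insert:delete}, whose side condition $\mathit{reviseBelief}(r^{i-1}, \phi, r^{i}) = \top$ precisely asserts that the updated table does not occur in $\mathit{tables}(\phi)$, supplying the second hypothesis of that lemma. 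The variants \emph{Propagate Forward INSERT Success - 1} and its mirror require that both $\phi$ and the tuple membership $R(\overline{t})$ are secure, so one combines the inductive hypothesis with an observation that when $R(\overline{t})$ holds, the \texttt{INSERT} leaves $R$ unchanged. (iii) Trigger propagation rules reduce to \thref{theorem:secure:extend:on:runs:triggers}, again dispatching the equivalence-class obligation via \thref{theorem:f:conf:pec:2}. The rollback rules reduce to the non-rollback cases because, per the LTS semantics, the database is restored to a previously reachable state. The \emph{Reasoning} rule is handled directly by \thref{theorem:reasoning:preserves:security}, which lifts security through finite semantic entailment.

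The main obstacle will be the trigger-related rules in the inductive step, especially those like \emph{Learn INSERT Forward}, \emph{Learn INSERT/DELETE Backward - 1/2/3}, and the FD/ID trigger exception rules. These require simultaneously establishing (a) that the trigger preserves the equivalence class via \thref{theorem:f:conf:pec:2} (which needs $\mathit{invoker}(\mathit{last}(r^{i-1})) = u$, as imposed by the premises of these rules), (b) that the derived sentence's truth value coincides across indistinguishable runs given the hypotheses in the premise (e.g., a pair $r, i-1 \attMod \psi$ and $r, i \attMod \neg \psi$), and (c) when the derived formula mentions a table $R$ that the trigger's action modifies, that the specific facts being derived about $R$ are invariant across the equivalence class. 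Part (c) is delicate because the trigger's conditional behavior means its action may fire in some indistinguishable runs but not others; the resolution is that the premises of each such rule (together with \thref{theorem:attacker:model:sound} applied to the sub-derivations supplied by the induction hypothesis) pin down enough about the invoker's observations to force uniform behavior of the trigger across $\llbracket r^{i-1}\rrbracket_{P,u}$, which is precisely the content exploited inside the proofs of the extension lemmas. Once these cases are dispatched, the theorem follows immediately from the definition of \confidentiality{}.
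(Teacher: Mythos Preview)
Your overall structure—induction on the length of the derivation, case analysis on the last rule, with the inductive step driven by \thref{theorem:f:conf:pec:1}, \thref{theorem:f:conf:pec:2}, the extension lemmas \thref{theorem:secure:extend:on:runs:insert:delete}--\thref{theorem:secure:extend:on:runs:triggers}, and \thref{theorem:reasoning:preserves:security}—matches the paper exactly, and your treatment of the propagation, rollback, and \emph{Reasoning} cases is right.

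There is, however, a gap in your handling of the base cases for the FD/ID variants, the exception rules, and the trigger rules that derive the \texttt{WHEN} condition. You propose to combine \thref{theorem:f:conf:pec:1} with \thref{theorem:attacker:model:sound} to conclude that ``the derived sentence has the same truth value in every member of $\llbracket r^{i}\rrbracket_{P,u}$.'' But soundness only gives truth in the \emph{actual} run, and the bijection supplied by \thref{theorem:f:conf:pec:1} says nothing by itself about how an arbitrary $\phi$ evaluates across the class. The paper's argument is more direct and is the mechanism you are missing: from $\mathit{secEx}(s)=\bot$ one gets $f_{\mathit{conf}}^{u}(s',a)=\top$, and by the definition of $f_{\mathit{conf}}^{u}$ this means $\mathit{secure}(u,\phi,s')$ was explicitly checked and returned $\top$ for precisely the formula being derived (namely $\mathit{getInfo}(a)$, $\mathit{getInfoS}(\gamma,a)$, $\mathit{getInfoV}(\gamma,a)$, or the trigger's instantiated condition); then \thref{theorem:secure:sound:under:approximation} yields $\mathit{secure}_{P,u}$ immediately, with \thref{theorem:secure:equivalent:modulo:indistinguishable:state} handling the shift between adjacent states. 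This is the whole point of the design of $f_{\mathit{conf}}^{u}$, and your proposal never invokes it. One can salvage your route by noting that the proof of \thref{theorem:f:conf:pec:1} forces identical $\mathit{Ex}$/$\mathit{secEx}$ outcomes across the class and then applying \thref{theorem:getInfo:sound:and:complete} to recover the derived formula's truth value in every member, but you should say so rather than appeal to soundness. (Minor: \emph{View} has a $\attMod$ premise and belongs to the inductive step, not the base case.)
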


\begin{proof}

Let $u$ be a user in ${\cal U}$, $P = \langle M, f_{\mathit{conf}}^{u} \rangle$ be an \accessControlConfiguration{}, where $M$ is a system configuration and $f_{\mathit{conf}}^{u}$ is as above, and $L$ be the $P$-LTS.
Furthermore, let $r$ be a run in $\mathit{traces}(L)$, $i$ be an integer such that $1 \leq i \leq |r|$, and $\phi$ be a sentence such that $r, i \attMod \phi$ holds.
We claim that also $\mathit{secure}_{P,u}(r,i \attMod \phi)$ holds.
The theorem follows trivially from the claim.

We now prove our claim that  $\mathit{secure}_{P,u}(r,i \attMod \phi)$ holds.
Let $r$ be a run in $\mathit{traces}(L)$, $i$ be an integer such that $1 \leq i \leq |r|$, and $\phi$ be a sentence such that $r, i \attMod \phi$ holds.
Furthermore, in the following we use $e$ to denote the $\mathit{extend}$ function.
We prove our claim by induction on the length of the derivation $r, i \attMod \phi$.

\smallskip
\noindent
{\bf Base Case: } 
Assume that $|r,i \attMod \phi| = 1$.
There are a number of cases depending on the rule used to obtain $r, i \attMod \phi$.
\begin{compactenum}
\item \emph{\texttt{SELECT} Success - 1}.
Let $i$ be such that $r^{i} = r^{i-1} \concat \langle u, \mathtt{SELECT}, \phi \rangle \concat s$, where $s = \langle \mathit{db}, U, \mathit{sec}, T,V,c\rangle \in \Omega_{M}$ and $\mathit{last}(r^{i-1}) = s'$, where $s' =  \langle \mathit{db}, U, \mathit{sec}, T,V,c'\rangle$. 
From the rules, it follows that $f_{\mathit{conf}}^{u}(s', \langle u, \mathtt{SELECT}, \phi\rangle) = \top$.
From this and $f_{\mathit{conf}}^{u}$'s definition, it follows that  $\mathit{secure}(u, \phi, s') = \top$ holds.
From this, \thref{theorem:secure:equivalent:modulo:indistinguishable:state}, and $\mathit{pState}(s) = \mathit{pState}(s')$, it follows that $\mathit{secure}(u, \phi, s) = \top$ holds.
From this,  \thref{theorem:secure:sound:under:approximation}, and $\mathit{last}(r^{i}) = s$, it follows that $\mathit{secure}_{P,u}(r,i \attMod \phi)$ holds.

\item \emph{\texttt{SELECT} Success - 2}.
The proof for this case is similar to that of \emph{\texttt{SELECT} Success - 1}.

\item \emph{\texttt{INSERT} Success}.
Let $i$ be such that $r^{i} = r^{i-1} \concat \langle u, \mathtt{INSERT}, \\ R, \overline{t} \rangle \concat s$ , where $s = \langle \mathit{db}, U, \mathit{sec}, T, V, c \rangle \in \Omega_{M}$ and $\mathit{last}(r^{i-1}) =  \langle \mathit{db}', U, \mathit{sec}, T, V, c' \rangle$, and $\phi$ be $R(\overline{t})$.
Then, $\mathit{secure}_{P,u}(r,i \attMod R(\overline{t}))$  holds.
Indeed, in all runs $r'$ indistinguishable from $r^{i}$ the last action is $\langle u, \mathtt{INSERT}, R, \\ \overline{t} \rangle$.
Furthermore, the action has been executed successfully.
Therefore, according to the LTS rules, $\overline{t} \in \mathit{db}''(R)$, where $\mathit{db}'' = \mathit{last}(r').\mathit{db}$.
From this and the relational calculus semantics, it follows that $[R(\overline{t})]^{\mathit{last}(r').\mathit{db}} \\ = \top$.
Therefore, $[R(\overline{t})]^{\mathit{last}(r').\mathit{db}} = \top$ for any run $r' \in \llbracket r^{i}\rrbracket_{P,u}$.
Hence, $\mathit{secure}_{P,u}(r,i \attMod R(\overline{t}))$ holds.

\item \emph{\texttt{INSERT} Success - FD}.
Let $i$ be such that $r^{i} = r^{i-1} \concat \langle u, \mathtt{INSERT}, R, (\overline{v}, \overline{w}, \overline{q}) \rangle \concat s$, where $s = \langle \mathit{db}, U, \mathit{sec}, T, V, c \rangle \\ \in \Omega_{M}$ and $\mathit{last}(r^{i-1}) =  \langle \mathit{db}', U, \mathit{sec}, T, V, c' \rangle$, and $\phi$ be $\neg \exists \overline{y},\overline{z}.\, R(\overline{v}, \overline{y}, \overline{z}) \wedge \overline{y} \neq \overline{w}$.
From the rule's definition, it follows that $\mathit{secEx}(s) = \bot$.
From this and the LTS rules, it follows that $f_{\mathit{conf}}^{u}(s', \langle u, \mathtt{INSERT}, R, (\overline{v}, \overline{w}, \overline{q}) \rangle) = \top$.
From this and $f_{\mathit{conf}}^{u}$'s definition, it follows that  $\mathit{secure}(u, \phi, \mathit{last}(r^{i-1})) = \top$ holds because $\phi$ is equivalent to $\mathit{getInfoS}(\gamma,a)$ for some $\gamma \in \mathit{Dep}(\Gamma, a)$, where $a = \langle u, \mathtt{INSERT}, R, (\overline{v}, \overline{w}, \overline{q}) \rangle$.
From this and \thref{theorem:secure:sound:under:approximation}, it follows that $\mathit{secure}_{P,u}(r,i-1 \attMod \phi)$ holds.
We claim that $\mathit{secure}^{\mathit{data}}_{P,u}(r,i \attMod \phi)$ holds.
From \thref{theorem:ibsec:correctness:secure:3} and $\mathit{secure}^{\mathit{data}}_{P,u}(r,i \attMod \phi)$, it follows $\mathit{secure}_{P,u}(r,i  \attMod \phi)$.

We now prove our claim that  $\mathit{secure}^{\mathit{data}}_{P,u}(r,i \attMod \phi)$ holds.
Let $s'$ be the state $\mathit{last}(r^{i-1})$.
Furthermore, for brevity's sake, in the following we omit the $\mathit{pState}$ function where needed.
For instance, with a slight abuse of notation, we write $\llbracket s' \rrbracket^{\mathit{data}}_{u,M}$ instead of $\llbracket \mathit{pState}(s') \rrbracket^{\mathit{data}}_{u,M}$.
There are two cases:
\begin{compactenum}
\item the \texttt{INSERT} command has caused an integrity constraint violation, i.e., $\mathit{Ex}(s) \neq \emptyset$.
From $\mathit{secure}(u, \phi, \\ s') = \top$ and \thref{theorem:secure:sound:under:approximation}, it follows that $\mathit{secure}^{\mathit{data}}_{P,u}(r,\\i-1 \attMod \phi)$ holds.
From this, it follows that $[\phi]^{v} = [\phi]^{s'}$ for any $v \in \llbracket s' \rrbracket^{\mathit{data}}_{u,M}$.
From this and the fact that the \texttt{INSERT} command caused an exception (i.e., $s' = s$), it follows that $[\phi]^{v} = [\phi]^{s}$ for any $v \in \llbracket s \rrbracket^{\mathit{data}}_{u,M}$.
From this, it follows that $\mathit{secure}^{\mathit{data}}_{P,u}(r,i \attMod \phi)$ holds.

\item the \texttt{INSERT} command has not caused exceptions, i.e., $\mathit{Ex}(s) = \emptyset$.
From $\mathit{secure}(u, \phi,s') = \top$ and \thref{theorem:secure:sound:under:approximation}, it follows that $\mathit{secure}^{\mathit{data}}_{P,u}(r,i-1 \attMod \phi)$ holds.
From this, it follows that $[\phi]^{v} = [\phi]^{s'}$ for any $v \in \llbracket s' \rrbracket^{\mathit{data}}_{u,M}$.
Furthermore, from \thref{theorem:getInfo:sound:and:complete} and $\mathit{Ex}(s) = \emptyset$, it follows that $\phi$ holds in $s'$.
Let $A_{s',R,\overline{t}}$ be the set $\{\langle \mathit{db}[R \oplus \overline{t}], U, \mathit{sec}, T,V\rangle \in \Pi_{M} \,|\, \exists \mathit{db}' \in \Omega_{D}.\, \langle \mathit{db}', U,  \mathit{sec},  T, V\rangle \in \llbracket s'\rrbracket^{\mathit{data}}_{u,M}\}$.
It is easy to see that $\llbracket s\rrbracket^{\mathit{data}}_{u,M} \subseteq A_{s',R,\overline{t}}$.
We now show that $\phi$ holds for any $z \in A_{s',R,\overline{t}}$.
Let $z_{1} \in \llbracket s' \rrbracket^{\mathit{data}}_{u,M}$.
From $[\phi]^{v} = [\phi]^{s'}$ for any $v \in \llbracket s' \rrbracket^{\mathit{data}}_{u,M}$ and the fact that $\phi$ holds in $s'$, it follows that $[\phi]^{z_{1}} = \top$.
Therefore, for any $(\overline{k}_{1},\overline{k}_{2}, \overline{k}_{3}) \in R(z_{1})$ such that $|\overline{k}_1| = |\overline{v}|$, $|\overline{k}_2| = |\overline{w}|$, and $|\overline{k}_3| = |\overline{z}|$, if $k_{1} = \overline{v}$, then $k_{2} = \overline{w}$. 
Then, for any $(\overline{k}_{1},\overline{k}_{2}, \overline{k}_{3}) \in R(z_{1}) \cup \{(\overline{v}, \overline{w}, \overline{q})\}$, if $k_{1} = \overline{v}$, then $k_{2} = \overline{w}$.
Therefore, $\phi$ holds also in $z_{1}[R \oplus \overline{t}] \in A_{\mathit{pState}(s'), R, \overline{t}}$.	
Hence, $[\phi]^{z} = \top$ for any $z \in A_{s',R,\overline{t}}$.
From this and $\llbracket s\rrbracket^{\mathit{data}}_{u,M} \subseteq A_{s',R,\overline{t}}$, it follows that  $[\phi]^{z} = \top$ for any $z \in \llbracket s\rrbracket^{\mathit{data}}_{u,M}$.
From this, it follows that $\mathit{secure}^{\mathit{data}}_{P,u}(r,i \attMod \phi)$ holds.

\end{compactenum}

\item \emph{\texttt{INSERT} Success - ID}.
The proof of this case is similar to that for the \emph{\texttt{INSERT} Success - FD}.

\item \emph{\texttt{DELETE} Success}.
The proof for this case is similar to that of \emph{\texttt{INSERT} Success}.

\item \emph{\texttt{DELETE} Success - ID}.
The proof of this case is similar to that for the \emph{\texttt{INSERT} Success - FD}.

\item \emph{\texttt{INSERT} Exception}.
Let $i$ be such that $r^{i} = r^{i-1} \concat \langle u, \mathtt{INSER}, R, \overline{t} \rangle \concat s$, where $s = \langle \mathit{db}, U, \mathit{sec}, T, V, c \rangle  \in  \Omega_{M}$ and $\mathit{last}(r^{i-1}) =  \langle \mathit{db}', U, \mathit{sec}, T, V, c' \rangle$, and $\phi$ be $\neg R(\overline{t})$.
From the rule's definition, it follows that $\mathit{secEx}(s) = \bot$.
From this and the LTS rules, it follows that $f_{\mathit{conf}}^{u}(s', \\ \langle u, \mathtt{INSERT}, R, \overline{t} \rangle) = \top$.
From this and $f_{\mathit{conf}}^{u}$'s definition, it follows that  $\mathit{secure}(u, \phi, \mathit{last}(r^{i-1})) = \top$ holds because $\phi = \mathit{getInfo}(\langle u, \mathtt{INSERT}, R, \overline{t} \rangle)$.
From this and \thref{theorem:secure:sound:under:approximation}, it follows that $\mathit{secure}_{P,u}(r,i-1 \attMod \phi)$ holds.
From the LTS semantics, it follows that $\mathit{pState}(s) \\ \cong^{data}_{u,M} \mathit{pState}(\mathit{last}(r^{i-1}))$.
From this, \thref{theorem:secure:equivalent:modulo:indistinguishable:state}, and $\mathit{secure}(u,  \phi, \mathit{last}(r^{i-1})) = \top$, it follows that  $\mathit{secure}(u, \phi, \\ \mathit{last}(r^{i})) = \top$.
From this and  \thref{theorem:secure:sound:under:approximation}, it follows that $\mathit{secure}_{P,u}(r,i \attMod \phi)$ holds.

\item \emph{\texttt{DELETE} Exception}. 
The proof for this case is similar to that of \emph{\texttt{INSERT} Exception}.

\item \emph{\texttt{INSERT} FD Exception}.
Let $i$ be such that $r^{i} = r^{i-1} \concat \langle u, \mathtt{INSERT}, R, (\overline{v}, \overline{w}, \overline{q}) \rangle \concat s$, where $s = \langle \mathit{db}, U, \mathit{sec}, T, V, c \rangle \\ \in \Omega_{M}$ and $\mathit{last}(r^{i-1}) =  \langle \mathit{db}', U, \mathit{sec}, T, V, c' \rangle$, and $\phi$ be $\exists \overline{y},\overline{z}.\, R(\overline{v}, \overline{y}, \overline{z}) \wedge \overline{y} \neq \overline{w}$.
From the rule's definition, it follows that $\mathit{secEx}(s) = \bot$.
From this and the LTS rules, it follows that $f_{\mathit{conf}}^{u}(s', \langle u, \mathtt{INSERT}, R, (\overline{v}, \overline{w}, \overline{q}) \rangle) = \top$.
From this and $f_{\mathit{conf}}^{u}$'s definition, it follows that  $\mathit{secure}(u, \phi, \mathit{last}(r^{i-1})) = \top$ because $\phi = \mathit{getInfoV}(\gamma, \\ \langle u, \mathtt{INSERT}, R, (\overline{v}, \overline{w}, \overline{q}) \rangle)$ for some constraint $\gamma \in \mathit{Dep}(\Gamma, \\ \langle u, \mathtt{INSERT}, R, (\overline{v}, \overline{w}, \overline{q}) \rangle)$.
From this and  \thref{theorem:secure:sound:under:approximation}, it follows that $\mathit{secure}_{P,u}(r,i-1 \attMod \phi)$ holds.
From the LTS semantics, it follows that $\mathit{pState}(s) \cong^{data}_{u,M} \mathit{pState}(\\ \mathit{last}(r^{i-1}))$.
From this, $\mathit{secure}(u,  \phi, \mathit{last}(r^{i-1})) = \top$, and \thref{theorem:secure:equivalent:modulo:indistinguishable:state}, it follows that  $\mathit{secure}(u, \phi, \mathit{last}(r^{i})) = \top$.
From this and  \thref{theorem:secure:sound:under:approximation}, it follows that also $\mathit{secure}_{P,u}(r,i \attMod \phi)$ holds.

\item \emph{\texttt{INSERT} ID Exception}.
The proof for this case is similar to that of \emph{\texttt{INSERT} FD Exception}.

\item \emph{\texttt{DELETE} FD Exception}.
The proof for this case is similar to that of \emph{\texttt{INSERT} FD Exception}.

\item \emph{Integrity Constraint}.
The proof of this case follows trivially from the fact that for any state $s = \langle \mathit{db}, U, \mathit{sec}, T, \\ V,  c \rangle  \in \Omega_{M}$ and any $\gamma \in \Gamma$, $[\gamma]^{\mathit{db}} = \top$ holds by definition.

\item \emph{Learn \texttt{GRANT}/\texttt{REVOKE} Backward}. 
Let $i$ be such that $r^{i} = r^{i-1} \concat t \concat s$, where $s = \langle \mathit{db}, U, \mathit{sec}, T, V, c \rangle  \in \Omega_{M}$, $\mathit{last}(r^{i-1}) =  \langle \mathit{db}, U, \mathit{sec}', T, V, c' \rangle$, and $t$ be a trigger whose \texttt{WHEN} condition is $\phi$ and whose action is either a \texttt{GRANT} or a \texttt{REVOKE}.
From the rule's definition, it follows that $\mathit{secEx}(s) = \bot$. 
From this and the LTS rules, it follows that $f_{\mathit{conf}}^{u}(\mathit{last}(r^{i-1}), \langle u', \mathtt{SELECT}, \phi \rangle) = \top$, where $u'$ is either the trigger's owner or the trigger's invoker depending on the security mode.
From this and $f_{\mathit{conf}}^{u}$'s definition, it follows that $\mathit{secure}(u,\phi,\mathit{last}  (r^{i-1})) = \top$.
From this and \ref{theorem:secure:sound:under:approximation}, it follows that $\mathit{secure}_{P,u}(r,i-1 \attMod \phi)$ holds.

\item \emph{Trigger \texttt{GRANT} Disabled Backward}. 
Let $i$ be such that $r^{i} = r^{i-1} \concat t \concat s$, where $s = \langle \mathit{db}, U, \mathit{sec}, T, V, c \rangle  \in \Omega_{M}$, $\mathit{last}(r^{i-1}) =  \langle \mathit{db}, U, \mathit{sec}', T, V, c' \rangle$, and $t$ be a trigger whose \texttt{WHEN} condition is $\psi$, and $\phi$ be $\neg \psi$.
From the rule's definition, it follows that $\mathit{secEx}(s) = \bot$. 
From this and the LTS rules, it follows that $f_{\mathit{conf}}^{u}(\mathit{last}(r^{i-1}), \\ \langle u', \mathtt{SELECT}, \phi \rangle) = \top$, where $u'$ is either the trigger's owner or the trigger's invoker depending on the security mode.
From this and $f_{\mathit{conf}}^{u}$'s definition, it follows that $\mathit{secure}(u,\phi,\mathit{last}(r^{i-1})) = \top$.
From this and \thref{theorem:secure:sound:under:approximation}, it follows that $\mathit{secure}_{P,u}(r,i-1 \attMod \phi)$ holds.

\item \emph{Trigger \texttt{REVOKE} Disabled Backward}. 
The proof for this case is similar to that of \emph{Trigger \texttt{GRANT} Disabled Backward}.

\item \emph{Trigger \texttt{INSERT} FD Exception}.
Let $i$ be such that $r^{i} = r^{i-1} \concat t \concat s$, where $s = \langle \mathit{db}, U, \mathit{sec}, T, V, c \rangle  \in \Omega_{M}$, $\mathit{last}(r^{i-1}) \\ =  \langle \mathit{db}, U, \mathit{sec}', T, V, c' \rangle$, and $t$ be a trigger whose \texttt{WHEN} condition is $\phi$ and whose action $\mathit{act}$ is a \texttt{INSERT} statement $\langle u', \mathtt{INSERT}, R, (\overline{v}, \overline{w}, \overline{q}) \rangle$.
Furthermore, let $\phi$ be $\exists \overline{y},\overline{z}.\, R(\overline{v}, \overline{y}, \overline{z}) \wedge \overline{y} \neq \overline{w}$.
From the rule's definition, it follows that $\mathit{secEx}(s) = \bot$.
From this and the LTS rules, it follows that $f_{\mathit{conf}}^{u}( \mathit{last}(r^{i-1}), \mathit{act}) = \top$.
From this and $f_{\mathit{conf}}^{u}$'s definition, it follows that  $\mathit{secure}(u, \phi, \mathit{last} \\ (r^{i-1})) = \top$ because $\phi = \mathit{getInfoV}(\gamma, \mathit{act})$ for some constraint $\gamma \in \mathit{Dep}(\Gamma,  \mathit{act})$.
From this and  \thref{theorem:secure:sound:under:approximation}, it follows that $\mathit{secure}_{P,u}(r,i-1 \attMod \phi)$ holds.

\item \emph{Trigger \texttt{INSERT} ID Exception}.
The proof for this case is similar to that of \emph{Trigger \texttt{INSERT} ID Exception}.

\item \emph{Trigger \texttt{DELETE} ID Exception}.
The proof for this case is similar to that of \emph{Trigger \texttt{DELETE} ID Exception}.

\item \emph{Trigger Exception}.
Let $i$ be such that $r^{i} = r^{i-1} \concat t \concat s$, where $s = \langle \mathit{db}, U, \mathit{sec}, T, V, c \rangle  \in \Omega_{M}$, $\mathit{last}(r^{i-1}) =  \langle \mathit{db}, U, \mathit{sec}', T, V, c' \rangle$, and $t$ be a trigger whose \texttt{WHEN} condition is $\phi$ and whose action is $\mathit{act}$.
From the rule's definition, it follows $f_{\mathit{conf}}^{u}(\mathit{last}(r^{i-1}),  \langle u', \mathtt{SELECT}, \phi \rangle) = \top$, where $u'$ is either the trigger's owner or the trigger's invoker depending on the security mode.
From this and $f_{\mathit{conf}}^{u}$'s definition, it follows $\mathit{secure}(u,\phi,\mathit{last}(r^{i-1})) = \top$.
From this and \ref{theorem:secure:sound:under:approximation}, it follows that $\mathit{secure}_{P,u}(r,i-1 \attMod \phi)$ holds.

\item \emph{Trigger \texttt{INSERT} Exception}.
The proof for this case is similar to that of \emph{\texttt{INSERT} Exception}.

\item \emph{Trigger \texttt{DELETE} Exception}.
The proof for this case is similar to that of \emph{\texttt{DELETE} Exception}.

\item \emph{Trigger Rollback \texttt{INSERT}}.
Let $i$ be such that $r^{i} = r^{i-n-1}  \concat \langle u, \mathtt{INSERT}, R, \overline{t}\rangle \concat s_{1} \concat t_{1} \concat s_{2} \concat \ldots \concat t_{n} \concat s_{n}$, where $s_{1}, s_{2}, \ldots, s_{n} \\ \in \Omega_{M}$ and $t_{1}, \ldots, t_{n} \in {\cal TRIGGER}_{D}$, and $\phi$ be $\neg R(\overline{t})$.
Furthermore, let  $\mathit{last}(r^{i-n-1}) = \langle \mathit{db}', U', \mathit{sec}', T', V', c' \rangle$ and $s_n$ be $\langle \mathit{db}, U, \mathit{sec}, T, V, c \rangle$. 
From the rule's definition, it follows that $\mathit{secEx}(s_{1}) = \bot$.
From this, it follows that $f_{\mathit{conf}}^{u}(\mathit{last}(r^{i-n-1}),  \langle u, \mathtt{INSERT}, R, \overline{t}\rangle) = \top$.
From this and $f_{\mathit{conf}}^{u}$'s definition, it follows that $\mathit{secure}(u, \\ \phi,\mathit{last}(r^{i-n-1})) = \top$ because $\phi = \mathit{getInfo}(\langle u,  \mathtt{INSERT}, R, \\ \overline{t}\rangle)$.
From the LTS semantics, it follows that $\mathit{last}(r^{i-n-1}) \\ \cong_{u,M}^{\mathit{data}} s_{n}$.
From this,   $\mathit{secure}(u,\phi,\mathit{last}(r^{i-n-1})) = \top$, and \thref{theorem:secure:equivalent:modulo:indistinguishable:state}, it follows that $\mathit{secure}(u,  \phi,s_{n}) = \top$.
From this and \thref{theorem:secure:sound:under:approximation}, it follows that $\mathit{secure}_{P,u}(r,i \\ \attMod \phi)$ holds.

\item \emph{Trigger Rollback \texttt{DELETE}}.
The proof for this case is similar to that of \emph{Trigger Rollback \texttt{INSERT}}.
\end{compactenum}
This completes the proof of the base step.

\smallskip
\noindent
{\bf Induction Step: }
Assume that the claim hold for any derivation of $r, j \attMod \psi$ such that $|r, j \attMod \psi| < |r,i \attMod \phi|$.
We now prove that the claim also holds for $r,i \attMod \phi$.
There are a number of cases depending on the rule used to obtain $r,i \attMod \phi$.
\begin{compactenum}
\item \emph{View}.
The proof of this case follows trivially from the semantics of the relational calculus extended over views.

\item \emph{Propagate Forward \texttt{SELECT}}.
Let $i$ be such that $r^{i+1} = r^{i} \concat \langle u, \mathtt{SELECT}, \psi \rangle \concat s$, where $s = \langle \mathit{db}, U, \mathit{sec}, T, V, c \rangle  \in \Omega_{M}$ and $\mathit{last}(r^{i}) =  \langle \mathit{db}', U', \mathit{sec}', T', V', c' \rangle$.
From the rule, it follows that $r,i \attMod \phi$ holds.
From this and the induction hypothesis, it follows that $\mathit{secure}_{P,u}(r,i  \attMod \phi)$ holds.
From \thref{theorem:f:conf:pec:1}, the action $\langle u, \mathtt{SELECT}, \psi \rangle$ preserves the equivalence class with respect to $r^{i}$, $P$, and $u$.
From this,  \thref{theorem:secure:extend:on:runs:select:create}, and $\mathit{secure}_{P,u}(r,i \attMod \phi)$, it follows that also $\mathit{secure}_{P,u}(r,i+1 \attMod \phi)$ holds.

\item \emph{Propagate Forward \texttt{GRANT/REVOKE}}.
Let $i$ be such that $r^{i+1} = r^{i} \concat \langle \mathit{op}, u', p, u \rangle \concat s$, where $s = \langle \mathit{db}, U, \mathit{sec}, T, V, c \rangle  \in \Omega_{M}$ and $\mathit{last}(r^{i}) =  \langle \mathit{db}', U', \mathit{sec}', T', V', c' \rangle$.
From the rule, it follows that $r,i \attMod \phi$ holds.
From this and the induction hypothesis, it follows that $\mathit{secure}_{P,u}(r,i  \attMod \phi)$ holds.
From \thref{theorem:f:conf:pec:1}, the action $\langle \mathit{op}, u', p, u \rangle$ preserves the equivalence class with respect to $r^{i}$, $P$, and $u$.
From this,  \thref{theorem:secure:extend:on:runs:grant:revoke}, and $\mathit{secure}_{P,u}(r,i  \attMod \phi)$, it follows that also $\mathit{secure}_{P,u}(r,i+1 \attMod \phi)$ holds.

\item \emph{Propagate Forward \texttt{CREATE}}.
The proof for this case is similar to that of \emph{Propagate Forward \texttt{SELECT}}.

\item \emph{Propagate Backward \texttt{SELECT}}.
Let $i$ be such that $r^{i+1} = r^{i} \concat \langle u, \mathtt{SELECT}, \psi \rangle \concat s$, where $s = \langle \mathit{db}', U', \mathit{sec}', T', V', c' \rangle  \\ \in \Omega_{M}$ and $\mathit{last}(r^{i}) =  \langle \mathit{db}, U, \mathit{sec}, T, V, c \rangle$.
From the rule, it follows that $r,i+1 \attMod \phi$ holds.
From this and the induction hypothesis, it follows that $\mathit{secure}_{P,u}  (r,i+1  \attMod \phi)$ holds.
From \thref{theorem:f:conf:pec:1}, the action $\langle u, \mathtt{SELECT}, \psi \rangle$ preserves the equivalence class with respect to $r^{i}$, $P$, and $u$.
From this,  \thref{theorem:secure:extend:on:runs:select:create}, and $\mathit{secure}_{P,u}(r,i+1 \\ \attMod \phi)$, it follows that also $\mathit{secure}_{P,u}(r,i \attMod \phi)$ holds.

\item \emph{Propagate Backward \texttt{GRANT/REVOKE}}.
Let $i$ be such that $r^{i+1} = r^{i} \concat \langle \mathit{op},u',p,u \rangle \concat s$, where $s = \langle \mathit{db}', U', \mathit{sec}', T', V',\\ c' \rangle   \in \Omega_{M}$ and $\mathit{last}(r^{i}) =  \langle \mathit{db}, U, \mathit{sec}, T, V, c \rangle$.
From the rule, it follows that $r,i+1 \attMod \phi$ holds.
From this and the induction hypothesis, it follows that $\mathit{secure}_{P,u}  (r,i+1 \attMod \phi)$ holds.
From \thref{theorem:f:conf:pec:1}, the action $\langle \mathit{op},u',p, \\ u \rangle$ preserves the equivalence class with respect to $r^{i}$, $P$, and $u$.
From this,  \thref{theorem:secure:extend:on:runs:grant:revoke}, and $\mathit{secure}_{P,u}(r,i+1 \attMod \phi)$, it follows that also $\mathit{secure}_{P,u}(r,i \attMod \phi)$ holds.

\item \emph{Propagate Backward \texttt{CREATE TRIGGER}}.
The proof for this case is similar to that of \emph{Propagate Backward \texttt{SELECT}}.

\item \emph{Propagate Backward \texttt{CREATE VIEW}}.
Note that the formulae $\psi$ and $\mathit{replace}(\psi,o)$ are semantically equivalent.
This is the only difference between the proof for this case and the one for the  \emph{Propagate Backward \texttt{SELECT}} case.

\item \emph{Rollback Backward - 1}.
Let $i$ be such that $r^{i} = r^{i-n-1}  \concat \langle u, \mathit{op}, R, \overline{t}\rangle \concat s_{1} \concat t_{1} \concat s_{2} \concat \ldots \concat t_{n} \concat s_{n}$, where  $s_{1}, s_{2}, \ldots, s_{n} \in \Omega_{M}$, $t_{1}, \ldots, t_{n} \in {\cal TRIGGER}_{D}$, and  $\mathit{op}$ is one of $\{\mathtt{INSERT}, \\ \texttt{DELETE}\}$.
Furthermore, let $s_{n}$ be $\langle \mathit{db}', U', \mathit{sec}', T',  V', c' \rangle$ and $\mathit{last}(r^{i-n-1})$ be $\langle \mathit{db}, U, \mathit{sec}, T, V, c \rangle$.
From the rule's definition, $r, i \attMod \phi$ holds.
From this and the induction hypothesis, it follows that $\mathit{secure}_{P,u}(r,i \attMod \phi)$ holds.
From \thref{theorem:f:conf:pec:2}, the triggers $t_{j}$ preserve the equivalence class with respect to $r^{i-n-1+j}$, $P$, and $u$ for any $1 \leq j \leq n$.
Therefore, for any $v \in \llbracket r^{i-1} \rrbracket_{{P,u}}$, the run $e(v,t_{n})$ contains the roll-back.
Therefore, for any $v \in \llbracket r^{i-1} \rrbracket_{P,u}$, the state $\mathit{last}(e(v,t_{n}))$ is the state just before the action $\langle u,\mathit{op}, R, \overline{t}\rangle$.
Let $A$ be the set of partial states associated with the roll-back states.
It is easy to see that $A$ is the same as $\{\mathit{pState}(\mathit{last}(t')) | t' \in \llbracket r^{i-n-1} \rrbracket_{P,u}\}$.
From $\mathit{secure}_{P,u}(r,i \attMod \phi)$, it follows that $\phi$ has the same result over all states in $A$.
From this and $A = \{\mathit{pState}(\mathit{last}(t')) | t' \in \llbracket r^{i-n-1} \rrbracket_{P,u}\}$, it follows that $\phi$ has the same result over all states in $ \{\mathit{pState}(\mathit{last}(t')) | \\ t' \in \llbracket r^{i-n-1} \rrbracket_{P,u}\}$.
From this, it follows that $\mathit{secure}_{P,u}\\(r,i-n-1 \attMod \phi)$ holds.

\item \emph{Rollback Backward - 2}.
Let $i$ be such that $r^{i} = r^{i-1} \concat \langle u, op, R, \overline{t} \rangle \concat s$, where $s = \langle \mathit{db}', U', \mathit{sec}', T', V', c' \rangle  \in \Omega_{M}$, $\mathit{last}(r^{i-1}) =  \langle \mathit{db}, U,  \mathit{sec}, T, V, c \rangle$, and $\mathit{op}$ is one of $\{\mathtt{INSERT}, \mathtt{DELETE}\}$.
From the rule's definition, $r, i \attMod \phi$ holds.
From this and  the induction hypothesis, it follows that $\mathit{secure}_{P,u}(r,i \attMod \phi)$ holds.
From \thref{theorem:f:conf:pec:1}, the action $\langle u,\mathit{op}, R, \overline{t}\rangle$ preserves the equivalence class with respect to $r^{i-1}$, $P$, and $u$. 	
From this, \thref{theorem:secure:extend:on:runs:insert:delete}, the fact that the action does not modify the database state, and $\mathit{secure}_{P,u}(r,i \attMod \phi)$, it follows $\mathit{secure}_{P,u}(r, i-1 \attMod \phi)$.

\item \emph{Rollback Forward - 1}.
Let $i$ be such that $r^{i} = r^{i-n-1}  \concat \langle u, \mathit{op}, R, \overline{t}\rangle \concat s_{1} \concat t_{1} \concat s_{2} \concat \ldots \concat t_{n} \concat s_{n}$, where $s_{1}, s_{2}, \ldots, s_{n} \in \Omega_{M}$, $t_{1}, \ldots, t_{n} \in {\cal TRIGGER}_{D}$, and $\mathit{op}$ is one of $\{\mathtt{INSERT}, \\ \mathtt{DELETE}\}$.
Furthermore, let $s_{n}$ be $\langle \mathit{db}, U, \mathit{sec}, T, V, c \rangle $ and $\mathit{last}(r^{i-n-1})$ be $\langle \mathit{db}', U', \mathit{sec}', T', V', c' \rangle$.
From the rule's definition, $r, i-n-1 \attMod \phi$ holds.
From this and the induction hypothesis, it follows that $\mathit{secure}_{P,u} (r,i-n-1 \attMod \phi)$ holds.
From \thref{theorem:f:conf:pec:2}, the triggers $t_{j}$ preserve the equivalence class with respect to $r^{i-n-1+j}$, $P$, and $u$ for any $1 \leq j \leq n$.
Independently on the cause of the roll-back (either a security exception or an integrity constraint violation), we claim that the set $A$ of roll-back partial states is $\{\mathit{pState}(\mathit{last}(t')) | t' \in \llbracket r^{i-n-1} \rrbracket_{P,u}\}$.
From $\mathit{secure}_{P,u}  (r,  i-n-1 \attMod \phi)$, the result of $\phi$ is the same for all states in $A$.
From this and $A = \{\mathit{pState}(\mathit{last}(t')) | t' \in \llbracket r^{i-n-1} \rrbracket_{P,u}\}$, it follows that also  $\mathit{secure}_{P,u}(r,i \attMod \phi)$ holds.

We now prove our claim.
It is trivial to see (from the LTS's semantics) that the set of rollback's states is a subset of $\{\mathit{pState}(\mathit{last}(v)) | v \in \llbracket r^{i-n-1} \rrbracket_{P,u}\}$.
Assume, for contradiction's sake, that there is a state in $\{\mathit{pState}(\mathit{last}(v)) | v \in \llbracket r^{i-n-1} \rrbracket_{P,u}\}$ that is not a rollback state for the runs in $\llbracket r^{i} \rrbracket_{P,u}$.
This is impossible since all triggers $t_{1}, \ldots, t_{n}$ preserve the equivalence class. 

\item \emph{Rollback Forward - 2}.
Let $i$ be such that $r^{i} = r^{i-1} \concat \langle u, op, R, \overline{t} \rangle \concat s$, where $\mathit{op} \in \{\mathtt{INSERT}, \mathtt{DELETE}\}$, $s = \langle \mathit{db}, U, \\ \mathit{sec}, T, V, c \rangle  \in \Omega_{M}$ and $\mathit{last}(r^{i-1}) =  \langle \mathit{db}', U', \mathit{sec}',    T', V', \\ c' \rangle$.
From the rule's definition, $r, i-1 \attMod \phi$ holds.
From this and  the induction hypothesis, it follows that $\mathit{secure}_{P,u}(r,i-1 \attMod \phi)$ holds.
From \thref{theorem:f:conf:pec:1}, the action $\langle u,\mathit{op}, R, \overline{t}\rangle$ preserves the equivalence class with respect to $r^{i-1}$, $P$, and $u$. 	
From this, \thref{theorem:secure:extend:on:runs:insert:delete}, the fact that the action does not modify the database state, and $\mathit{secure}_{P,u}(r,i-1 \attMod \phi)$, it follows that also $\mathit{secure}_{P,u}(r,i \attMod \phi)$ holds.

\item \emph{Propagate Forward \texttt{INSERT/DELETE} Success}.
Let $i$ be such that $r^{i} = r^{i-1} \concat \langle u, op, R, \overline{t} \rangle \concat s$, where $\mathit{op} \in \{\mathtt{INSERT}, \\ \mathtt{DELETE}\}$, $s = \langle \mathit{db}, U, \mathit{sec}, T, V, c \rangle  \in \Omega_{M}$ and $\mathit{last}(r^{i-1}) =  \langle \mathit{db}', U', \mathit{sec}',  T', V', c' \rangle$.
From the rule's definition, $r, i-1 \attMod \phi$ holds.
From this and the induction hypothesis, it follows that $\mathit{secure}_{P,u}(r,i-1 \attMod \phi)$ holds.
From \thref{theorem:f:conf:pec:1}, the action $\langle u,\mathit{op}, R, \overline{t}\rangle$ preserves the equivalence class with respect to $r^{i-1}$, $P$, and $u$. 	
From $\mathit{reviseBelif}(r^{i-1}, \phi, r^{i})$, it follows that the execution of $\langle u,\mathit{op}, R, \overline{t}\rangle$ does not alter the content of the tables in $\mathit{tables}(\phi)$  for any $v \in \llbracket r^{i-1}\rrbracket_{P,u}$.
From this, \thref{theorem:secure:extend:on:runs:insert:delete}, and $\mathit{secure}_{P,u}(r,i-1 \attMod \phi)$, it follows that $\mathit{secure}_{P,u} \\ (r,i \attMod \phi)$ holds.

\item \emph{Propagate Forward \texttt{INSERT} Success - 1}.
Let $i$ be such that $r^{i} = r^{i-1} \concat \langle u, op, R, \overline{t} \rangle \concat s$, where $\mathit{op}$ is one if $\{\mathtt{INSERT}, \\ \mathtt{DELETE}\}$, $s = \langle \mathit{db}, U, \mathit{sec}, T, V, c \rangle  \in \Omega_{M}$ and $\mathit{last}(r^{i-1}) =  \langle \mathit{db}', U', \mathit{sec}',  T', V', c' \rangle$.
From the rule's definition, $r, i-1 \attMod \phi$ holds.
From this and the induction hypothesis, it follows that $\mathit{secure}_{P,u}(r,i-1 \attMod \phi)$ holds.
From \thref{theorem:f:conf:pec:1}, the action $\langle u,\mathit{op}, R, \overline{t}\rangle$ preserves the equivalence class with respect to $r^{i-1}$, $P$, and $u$. 	
We claim that the execution of $\langle u,\mathit{\mathtt{INSERT}}, R, \overline{t}\rangle$ does not alter the content of the tables in $\mathit{tables}(\phi)$.
From this, $\mathit{secure}_{P,u}(r, \\ i-1 \attMod \phi)$, and  \thref{theorem:secure:extend:on:runs:insert:delete}, it follows that $\mathit{secure}_{P,u}  (r,i \\ \attMod \phi)$ holds.

We now prove our claim that the execution of $\langle u,\mathit{\mathtt{INSERT}}, \\ R, \overline{t}\rangle$ does not alter the content of the tables in $\mathit{tables}(\phi)$.
From the rule's definition, it follows that $r, i-1 \attMod R(\overline{t})$ holds.
From this and \thref{theorem:attacker:model:sound}, it follows that $[R(\overline{t})]^{\mathit{last}(r^{i-1}).\mathit{db}} = \top$.
From $r, i-1 \attMod R(\overline{t})$ and the induction hypothesis, it follows that $\mathit{secure}_{P,u}(r,i-1,u, R(\overline{t}))$ holds.
From this and $[R(\overline{t})]^{\mathit{last}(r^{i-1}).\mathit{db}} = \top$, it follows that $[R(\overline{t})]^{\mathit{last}(v).\mathit{db}} = \top$ for any $v \in \llbracket r^{i-1}\rrbracket_{P,u}$.
From this and the relational calculus semantics, it follows that the execution of $\langle u,\mathit{op}, R, \overline{t}\rangle$ does not alter the content of the tables in $\mathit{tables}(\phi)$ for any $v \in \llbracket r^{i-1}\rrbracket_{P,u}$.

\item \emph{Propagate Forward \texttt{DELETE} Success - 1}.
The proof for this case is similar to that of \emph{Propagate Forward \texttt{INSERT} Success - 1}.

\item \emph{Propagate Backward \texttt{INSERT/DELETE} Success}.
Let $i$ be such that $r^{i} = r^{i-1} \concat \langle u, op, R, \overline{t} \rangle \concat s$, where $\mathit{op} \in \{\mathtt{INSERT}, \\ \mathtt{DELETE}\}$, $s = \langle \mathit{db}, U, \mathit{sec}, T, V, c \rangle  \in \Omega_{M}$ and $\mathit{last}(r^{i-1}) =  \langle \mathit{db}', U', \mathit{sec}',  T', V', c' \rangle$.
From the rule's definition, $r, i  \\ \attMod \phi$ holds.
From this and the induction hypothesis, it follows that $\mathit{secure}_{P,u}(r,i \attMod \phi)$ holds.
From \thref{theorem:f:conf:pec:1}, the action $\langle u,\mathit{op}, R, \overline{t}\rangle$ preserves the equivalence class with respect to $r^{i-1}$, $P$, and $u$. 	
From $\mathit{reviseBelif}(r^{i-1}, \phi,  r^{i})$, it follows that the execution of $\langle u,\mathit{op}, R, \overline{t}\rangle$ does not alter the content of the tables in $\mathit{tables}(\phi)$  for any $v \in \llbracket r^{i-1}\rrbracket_{P,u}$.
From this, \thref{theorem:secure:extend:on:runs:insert:delete}, and $\mathit{secure}_{P,u}(r,i \attMod \phi)$, it follows that $\mathit{secure}_{P,u}  (r,i-1 \attMod \phi)$ holds.

\item \emph{Propagate Backward \texttt{INSERT} Success - 1}.
Let $i$ be such that $r^{i} = r^{i-1} \concat \langle u, op, R, \overline{t} \rangle \concat s$, where $\mathit{op}$ is one of $\{\mathtt{INSERT},  \mathtt{DELETE}\}$, $s = \langle \mathit{db}, U, \mathit{sec}, T, V, c \rangle  \in \Omega_{M}$, and $\mathit{last}(r^{i-1}) =  \langle \mathit{db}', U', \mathit{sec}',  T', V', c' \rangle$.
From the rule's definition, $r, i \attMod \phi$ holds.
From this and the induction hypothesis, it follows that $\mathit{secure}_{P,u}(r,i \attMod \phi)$ holds.
From \thref{theorem:f:conf:pec:1}, the action $\langle u,\mathit{op}, R, \overline{t}\rangle$ preserves the equivalence class with respect to $r^{i-1}$, $P$, and $u$. 	
We claim that the execution of $\langle u,\mathit{\mathtt{INSERT}}, R, \overline{t}\rangle$ does not alter the content of the tables in $\mathit{tables}(\phi)$ for any $v \in \llbracket r^{i-1}\rrbracket_{P,u}$ (the proof of this claim is in the proof of the \emph{Propagate Forward \texttt{INSERT} Success - 1} case).
From this, \thref{theorem:secure:extend:on:runs:insert:delete}, and $\mathit{secure}_{P,u}(r,i \attMod \phi)$, it follows that $\mathit{secure}_{P,u}  (r,i-1 \attMod \phi)$ holds.

\item \emph{Propagate Backward \texttt{DELETE} Success - 1}.
The proof for this case is similar to that of \emph{Propagate Forward \texttt{DELETE} Success - 1}.

\item \emph{Reasoning}.
Let $\Delta$ be a subset of $\{ \delta \,| \, r,i \attMod \delta\}$ and $\mathit{last}(r^{i}) =  \langle \mathit{db}, U, \mathit{sec},  T, V, c \rangle$.
From the induction hypothesis, it follows that $\mathit{secure}_{P,u}(r,i \attMod \delta)$ holds for any $\delta \in \Delta$.
Note that, given any $\delta \in \Delta$, from $r,i \attMod \delta$ and \thref{theorem:attacker:model:sound}, it follows that $\delta$ holds in $\mathit{last}(r^{i})$.
From this, $\mathit{secure}_{P,u}(r,i \attMod \delta)$ holds for any $\delta \in \Delta$, $\Delta \models_{\mathit{fin}} \phi$, and \thref{theorem:reasoning:preserves:security}, it follows that  $\mathit{secure}_{P,u}(r, \\ i \attMod \phi)$ holds.

\item \emph{Learn \texttt{INSERT} Backward - 3}.
Let $i$ be such that $r^{i} = r^{i-1} \concat \langle u, \mathtt{INSERT}, R, \overline{t} \rangle \concat s$, where $s = \langle \mathit{db}', U', \mathit{sec}', T', V', \\ c' \rangle  \in \Omega_{M}$ and $\mathit{last}(r^{i-1}) =  \langle \mathit{db}, U, \mathit{sec},  T, V, c \rangle$, and $\phi$ be $\neg R(\overline{t})$.
From the rule's definition, $\mathit{secEx}(s) = \bot$.
From this and the LTS rules, it follows that $f_{\mathit{conf}}^{u}(\mathit{last}(r^{i-1}), \\ \langle u, \mathtt{INSERT}, R, \overline{t} \rangle) = \top$.
From this and $f_{\mathit{conf}}^{u}$'s definition, it follows that $\mathit{secure}(u,\phi,\mathit{last}(r^{i-1})) = \top$ because $\phi = \mathit{getInfo}(\langle u,  \mathtt{INSERT}, R, \overline{t}\rangle)$.
From this and \thref{theorem:secure:sound:under:approximation}, it follows that $\mathit{secure}_{P,u}(r,i-1 \attMod \phi)$ holds.

\item \emph{Learn \texttt{DELETE} Backward - 3}.
The proof for this case is similar to that of \emph{Learn \texttt{INSERT} Backward - 3}.

\item \emph{Propagate Forward Disabled Trigger}.
Let $i$ be such that $r^{i} = r^{i-1} \concat t \concat s$, where $s = \langle \mathit{db}, U, \mathit{sec}, T, V, c \rangle  \in \Omega_{M}$, $\mathit{last}(r^{i-1}) =  \langle \mathit{db}, U, \mathit{sec},  T, V, c \rangle$, and $t$ be a trigger.
Furthermore, let $\psi$ be $t$'s condition where all free variables are replaced with $\mathit{tpl}(\mathit{last}(r^{i-1}))$.
From the rule, it follows that $r,i-1 \attMod \phi$.
From this and the induction hypothesis, it follows that $\mathit{secure}_{P,u} (r, i-1 \attMod \phi)$ holds.
Furthermore, from \thref{theorem:f:composition:pec:2}, it follows that $t$ preserves the equivalence class with respect to $r^{i-1}$, $P$, and $u$.	
If the trigger's action is an \texttt{INSERT} or a \texttt{DELETE} operation, we claim that the operation does not change the content of any table in $\mathit{tables}(\phi)$ for any run $v \in \llbracket r^{i-1} \rrbracket_{P,u}$.
From this, the fact that $t$ preserves the equivalence class with respect to $r^{i-1}$, $P$, and $u$, \thref{theorem:secure:extend:on:runs:triggers}, and $\mathit{secure}_{P,u} (r, i-1 \attMod \phi)$, it follows that also $\mathit{secure}_{P,u} (r, i \attMod \phi)$ holds.

We now prove our claim.
Assume that $t$'s action in either an  \texttt{INSERT} or a \texttt{DELETE} operation.
From the rule, it follows that $r,i-1 \attMod \neg \psi$.
From this and \thref{theorem:attacker:model:sound}, $[\psi]^{\mathit{last}(r^{i-1})} = \bot$.
From $r,i-1 \attMod \neg \psi$ and the induction hypothesis, it follows that $\mathit{secure}_{P,u}  (r, i-1 \attMod \psi)$ holds.
From this and $[\psi]^{\mathit{last}(r^{i-1}).\mathit{db}} = \bot$, it follows that $[\psi]^{v.\mathit{db}} = \bot$ for any run $v \in \llbracket r^{i-1} \rrbracket_{P,u}$.
Therefore, the trigger $t$ is disabled in any run $v \in \llbracket r^{i-1} \rrbracket_{P,u}$.
From this and the LTS semantics, it follows that $t$'s execution does not change the content of any  table in $\mathit{tables}(\phi)$ for any run $v \in \llbracket r^{i-1} \rrbracket_{P,u}$.

\item \emph{Propagate Backward Disabled Trigger}.
The proof for this case is similar to that of  \emph{Propagate Forward Disabled Trigger}.

\item \emph{Learn \texttt{INSERT} Forward}.
Let $i$ be such that $r^{i} = r^{i-1} \concat t \concat s$, where $s = \langle \mathit{db}, U, \mathit{sec}, T, V, c \rangle  \in \Omega_{M}$, $\mathit{last}(r^{i-1}) =  \langle \mathit{db}, U, \mathit{sec},  T, V, c \rangle$, and $t$ be a trigger, and $\phi$ be $R(\overline{t})$.
Furthermore, let $\psi$ be $t$'s condition where all free variables are replaced with $\mathit{tpl}(\mathit{last}(r^{i-1}))$.
From the rule's definition, it follows  that $t$'s action is  $\langle u',\mathtt{INSERT}, R, \overline{t}\rangle$ and that $r, i-1 \attMod \psi$ holds.
From \thref{theorem:attacker:model:sound} and $r, i-1 \attMod \psi$, it follows that $[\psi]^{\mathit{last}(r^{i-1}).\mathit{db}} = \top$.
From this, $\mathit{secEx}(s) = \bot$, and $\mathit{Ex}(s) = \emptyset$, it follows that $t$'s action has been executed successfully.
From this, it follows that $\overline{t} \in s.\mathit{db}(R)$.
From  $r, i-1 \attMod \psi$ and the induction hypothesis, it follows $\mathit{secure}_{P,u} (r, i-1 \attMod \psi)$.
From this and $[\psi]^{\mathit{last}(r^{i-1}).\mathit{db}} = \top$, it follows that $[\psi]^{\mathit{last}(v).\mathit{db}} = \top$ for any $v \in \llbracket r^{i-1}\rrbracket_{P,u}$.
From this, it follows that the trigger $t$ is enabled in any run $v \in \llbracket r^{i-1}\rrbracket_{P,u}$.
From \thref{theorem:f:conf:pec:2}, it follows that $t$ preserves the equivalence class with respect to $r^{i-1}$, $P$, and $u$.
From this, $\mathit{secEx}(s) = \bot$, $\mathit{Ex}(s) = \emptyset$, and the fact that the trigger $t$ is enabled in any run $v \in \llbracket r^{i-1}\rrbracket_{P,u}$, it follows that $t$'s action is executed successfully in any run $e(v,t)$, where $v \in \llbracket r^{i-1}\rrbracket_{P,u}$.
From this,  it follows that $\overline{t} \in \mathit{db}''(R)$ for any $v \in \llbracket r^{i-1}\rrbracket_{P,u}$, where $\mathit{db}'' = \mathit{last}(e(v,t)).\mathit{db}$.
Therefore, $\mathit{secure}_{P,u} (r, i \attMod \phi)$ holds.

\item \emph{Learn \texttt{INSERT} - FD}.
Let $i$ be such that $r^{i} = r^{i-1} \concat t \concat s$, where $s = \langle \mathit{db}, U, \mathit{sec}, T, V, c \rangle \in \Omega_{M}$, $\mathit{last}(r^{i-1}) =  \langle \mathit{db}', U', \mathit{sec}', T', V', c' \rangle$, and $t \in {\cal TRIGGER}_{D}$, and $\phi$ be $\neg \exists \overline{y},\overline{z}.\, R(\overline{v}, \overline{y}, \overline{z}) \wedge \overline{y} \neq \overline{w}$.
Furthermore,  let $\psi$ be $t$'s condition where all free variables are replaced with the values in $\mathit{tpl}(\mathit{last}(r^{i-1}))$ and $\langle u', \mathtt{INSERT}, R, (\overline{v}, \overline{w}, \overline{q}) \rangle$ be $t$'s actual action.
From the rule, it follows that $r, i-1 \attMod \psi$.
From this and \thref{theorem:attacker:model:sound}, it follows that $[\psi]^{\mathit{last}(r^{i-1}).\mathit{db}} = \top$.
From this, $\mathit{Ex}(s) = \emptyset$, and  $\mathit{secEx}(s) \\ = \bot$, it follows that $f_{\mathit{conf}}^{u} (s',\langle u',\texttt{INSERT},R, \overline{t}\rangle)  = \top$, where $s'$ is the state just after the execution of the \texttt{SELECT} statement associated with $t$'s \texttt{WHEN} clause.
From this and $f_{\mathit{conf}}^{u}$'s definition, it follows that $\mathit{secure}(u, \phi, s') \\ = \top$.
From this, $\mathit{pState}(s')  = \mathit{pState}(\mathit{last}(r^{i-1}))$, and \thref{theorem:secure:equivalent:modulo:indistinguishable:state}, it follows that $\mathit{secure}(u, \phi, \mathit{last}(r^{i-1})) = \top$.
From this and \thref{theorem:secure:sound:under:approximation}, it follows also that $\mathit{secure}_{P,u} (r, i-1 \attMod \phi)$ holds.
We claim that $\mathit{secure}^{\mathit{data}}_{P,u} \\ (r,i \attMod \phi)$ holds.
From this and \thref{theorem:ibsec:correctness:secure:3}, it follows that also  $\mathit{secure}_{P,u}(r,i  \attMod \phi)$ holds.

We now prove our claim that  $\mathit{secure}^{\mathit{data}}_{P,u}(r,i \attMod \phi)$ holds.
Let $s'$ be the state just after the execution of the \texttt{SELECT} statement associated with $t$'s \texttt{WHEN} clause and $s''$ be the state $\mathit{last}(r^{i-1})$.
Furthermore, for brevity's sake, in the following we omit the $\mathit{pState}$ function where needed.
For instance, with a slight abuse of notation, we write $\llbracket s' \rrbracket^{\mathit{data}}_{u,M}$ instead of $\llbracket \mathit{pState}(s') \rrbracket^{\mathit{data}}_{u,M}$.
From $\mathit{secure}(u, \phi,s') = \top$, $s' \cong_{u,M}^{\mathit{data}} s''$, \thref{theorem:secure:equivalent:modulo:indistinguishable:state}, and \thref{theorem:secure:sound:under:approximation}, it follows that $\mathit{secure}^{\mathit{data}}_{P,u}(r,i-1 \attMod \phi)$ holds.
From this, it follows that $[\phi]^{v} = [\phi]^{s''}$ for any $v \in \llbracket s'' \rrbracket^{\mathit{data}}_{u,M}$.
Furthermore, from \thref{theorem:getInfo:sound:and:complete} and $\mathit{Ex}(s) = \emptyset$, it follows that $\phi$ holds in $s''$.
Let $A_{s'',R,\overline{t}}$ be the set $\{\langle \mathit{db}[R \oplus \overline{t}], U, \mathit{sec}, T,V\rangle \in \Pi_{M} \,|\, \exists \mathit{db}' \in \Omega_{D}.\, \langle \mathit{db}', \\  U,  \mathit{sec},  T, V\rangle \in \llbracket s''\rrbracket^{\mathit{data}}_{u,M}\}$.
It is easy to see that $\llbracket s\rrbracket^{\mathit{data}}_{u,M} \subseteq A_{s'',R,\overline{t}}$.
We now show that $\phi$ holds for any $z \in A_{s'',R,\overline{t}}$.
Let $z_{1} \in \llbracket s'' \rrbracket^{\mathit{data}}_{u,M}$.
From $[\phi]^{v} = [\phi]^{s''}$ for any $v \in \llbracket s'' \rrbracket^{\mathit{data}}_{u,M}$ and the fact that $\phi$ holds in $s''$, it follows that $[\phi]^{z_{1}} = \top$.
Therefore, for any $(\overline{k}_{1},\overline{k}_{2}, \overline{k}_{3}) \in R(z_{1})$, if $k_{1} = \overline{v}$, then $k_{2} = \overline{w}$. 
Then, for any $(\overline{k}_{1},\overline{k}_{2}, \overline{k}_{3}) \in R(z_{1}) \cup \{(\overline{v}, \overline{w}, \overline{q})\}$, if $k_{1} = \overline{v}$, then $k_{2} = \overline{w}$.
Therefore, $\phi$ holds also in $z_{1}[R \oplus \overline{t}] \in A_{\mathit{pState}(s''), R, \overline{t}}$.	
Hence, $[\phi]^{z} = \top$ for any $z \in A_{s'',R,\overline{t}}$.
From this and $\llbracket s\rrbracket^{\mathit{data}}_{u,M} \subseteq A_{s'',R,\overline{t}}$, it follows that  $[\phi]^{z} = \top$ for any $z \in \llbracket s\rrbracket^{\mathit{data}}_{u,M}$.
From this, it follows that $\mathit{secure}^{\mathit{data}}_{P,u}(r,i \attMod \phi)$ holds.

\item \emph{Learn \texttt{INSERT} - FD - 1}.
The proof of this case is similar to that of \emph{Learn \texttt{INSERT} - FD}.

\item \emph{Learn \texttt{INSERT} - ID}.
The proof of this case is similar to that of \emph{Learn \texttt{INSERT} - FD}.
See also the proof of \emph{\texttt{INSERT} Success - ID}.

\item \emph{Learn \texttt{INSERT} - ID - 1}.
The proof of this case is similar to that of \emph{Learn \texttt{INSERT} - ID}.

\item \emph{Learn \texttt{INSERT} Backward - 1}.
Let $i$ be such that $r^{i} = r^{i-1} \concat t \concat s$, where $s = \langle \mathit{db}', U', \mathit{sec}', T', V', c' \rangle \in \Omega_{M}$, $\mathit{last}(r^{i-1}) =  \langle \mathit{db}, U, \mathit{sec}, T, V, c \rangle$, and $t \in {\cal TRIGGER}_{D}$, and $\phi$ be $t$'s actual \texttt{WHEN} condition, where all free variables are replaced with the values in $\mathit{tpl}(\mathit{last}(r^{i-1}))$.
From the rule's definition, it follows that $\mathit{secEx}(s) = \top$.
From this, the LTS semantics, and $\mathit{secEx}(s) = \top$, it follows that $f^{u}_{\mathit{conf}}(\mathit{last}(r^{i-1}), \langle u', \mathtt{SELECT}, \phi \rangle) = \top$.
From this and $f^{u}_{\mathit{conf}}$'s definition, it follows that $\mathit{secure}(u,\phi,\mathit{last}(r^{i-1})) = \top$.
From this and \thref{theorem:secure:sound:under:approximation}, it follows that also $\mathit{secure}_{P,u}(r,i -1 \attMod \phi)$ holds.

\item \emph{Learn \texttt{INSERT} Backward - 2}.
Let $i$ be such that $r^{i} = r^{i-1} \concat t \concat s$, where $s = \langle \mathit{db}', U', \mathit{sec}', T', V', c' \rangle \in \Omega_{M}$, $\mathit{last}(r^{i-1}) =  \langle \mathit{db}, U, \mathit{sec}, T, V, c \rangle$, and $t \in {\cal TRIGGER}_{D}$, and $\phi$ be $\neg R(\overline{t})$.
Furthermore, let $\mathit{act}=\langle u', \mathtt{INSERT}, R, \\ \overline{t} \rangle$ be $t$'s actual action and $\gamma$ be $t$'s actual \texttt{WHEN} condition obtained by replacing all free variables with the values in $\mathit{tpl}(\mathit{last}(r^{i-1}))$.
From the rule's definition, it follows that $\mathit{secEx}(s) = \top$ and there is a $\psi$ such that $r, i-1 \attMod \psi$ and $r, i \attMod \neg \psi$.
We claim that $[\gamma]^{\mathit{db}} = \top$.
From this and  $\mathit{secEx}(s) = \top$, it follows $f_{\mathit{conf}}^{u}(s', \langle u', \mathtt{INSERT}, R, \overline{t}\rangle) = \top$, where $s'$ is the state obtained after the evaluation of $t$'s \texttt{WHEN} condition.
From this and $f^{u}_{\mathit{conf}}$'s definition, it follows $\mathit{secure}(u,\phi,s') = \top$ since $\phi$ is equivalent to $\mathit{getInfo}(\langle u', \mathtt{INSERT}, R, \overline{t}\rangle)$.
From this, $\mathit{pState}(\mathit{last}(r^{i-1})) = \mathit{pState}(s')$, and \thref{theorem:secure:equivalent:modulo:indistinguishable:state},  it follows that  $\mathit{secure}(u, \phi,\mathit{last}(r^{i-1})) = \top$.
From this and \thref{theorem:secure:sound:under:approximation}, it follows $\mathit{secure}_{P,u}(r,i - 1 \attMod \phi)$.

We now prove our claim that $[\gamma]^{\mathit{db}} = \top$.
Assume, for contradiction's sake, that this is not the case.
From this and the LTS rules, it follows that $\mathit{db} = \mathit{db}'$.
From the rule's definition, it follows that there is a $\psi$ such that $r, i-1 \attMod \psi$ and $r, i \attMod \neg \psi$.
From this, \thref{theorem:attacker:model:sound}, $s = \langle \mathit{db}', U', \mathit{sec}', T', V', c' \rangle$, and $\mathit{last}(r^{i-1}) =  \langle \mathit{db}, U, \mathit{sec}, T, \\ V, c \rangle$, it follows that $[\psi]^{\mathit{db}} = \top$ and $[\neg \psi]^{\mathit{db}'} = \top$.
Therefore, $[\psi]^{\mathit{db}} = \top$ and $[\psi]^{\mathit{db}'} = \bot$.
Hence, $\mathit{db} \neq \mathit{db}'$, which contradicts $\mathit{db} = \mathit{db}'$.

\item \emph{Learn \texttt{DELETE} Forward}.
The proof of this case is similar to that of \emph{Learn \texttt{INSERT} Forward}.

\item \emph{Learn \texttt{DELETE} - ID}.
The proof of this case is similar to that of \emph{Learn \texttt{INSERT} - FD}.
See also the proof of \emph{\texttt{DELETE} Success - ID}.

\item \emph{Learn \texttt{DELETE} - ID - 1}.
The proof of this case is similar to that of \emph{Learn \texttt{DELETE} - ID}.

\item \emph{Learn \texttt{DELETE} Backward - 1}.
The proof of this case is similar to that of \emph{Learn \texttt{INSERT} Backward - 1}.

\item \emph{Learn \texttt{DELETE} Backward - 2}.
The proof of this case is similar to that of \emph{Learn \texttt{INSERT} Backward - 2}.

\item \emph{Propagate Forward Trigger Action}.
Let $i$ be such that $r^{i} = r^{i-1} \concat t \concat s$, where $t$ is a trigger, $s = \langle \mathit{db}, U, \mathit{sec}, T, V, c \rangle \\ \in \Omega_{M}$ and $\mathit{last}(r^{i-1}) =  \langle \mathit{db}', U', \mathit{sec}',  T', V', c' \rangle$.
From the rule's definition, $r, i-1 \attMod \phi$ holds.
From this and the induction hypothesis, it follows that $\mathit{secure}_{P,u}  (r,i-1 \attMod \phi)$ holds.
From \thref{theorem:f:conf:pec:2}, the trigger $t$ preserves the equivalence class with respect to $r^{i-1}$, $P$, and $u$. 	
We claim that the execution of $t$ does not alter the content of the tables in $\mathit{tables}(\phi)$.
From this, \thref{theorem:secure:extend:on:runs:insert:delete}, and $\mathit{secure}_{P,u}(r,i-1 \attMod \phi)$, it follows that also the judgment $r,i \attMod \phi$ is secure, i.e., $\mathit{secure}_{P,u} \\ (r,i \attMod \phi)$ holds.

We now prove our claim that the execution of $t$ does not alter the content of the tables in $\mathit{tables}(\phi)$.
If the trigger is not enabled, proving the claim is trivial.
In the following, we assume the trigger is enabled.
There are four cases:
\begin{compactitem}
\item $t$'s action is an \texttt{INSERT} statement.
This case amount to claiming that the \texttt{INSERT} statement $\langle u',\mathtt{INSERT}, \\ R,  \overline{t} \rangle$ does not alter the content of the tables in $\mathit{tables}(\phi)$ in case $\mathit{reviseBelif}(r^{i-1}, \phi, r^{i}) = \top$.
We proved the claim above in the \emph{Propagate Forward \texttt{INSERT/DELETE} Success} case.

\item $t$'s action is an \texttt{DELETE} statement.
The proof is similar to that of the \texttt{INSERT} case.

\item $t$'s action is an \texttt{GRANT} statement.
In this case, the action does not alter the database state and the claim follows trivially.

\item $t$'s action is an \texttt{REVOKE} statement.
The proof is similar to that of the \texttt{GRANT} case.

\end{compactitem}

\item \emph{Propagate Backward Trigger Action}.
The proof of this case is similar to \emph{Propagate Backward Trigger Action}.

\item \emph{Propagate Forward \texttt{INSERT} Trigger Action}.
Let $i$ be such that $r^{i} = r^{i-1} \concat t \concat s$, where $t$ is a trigger, $s = \langle \mathit{db}, U, \mathit{sec}, T, V, c \rangle  \in \Omega_{M}$ and $\mathit{last}(r^{i-1}) =  \langle \mathit{db}', U', \mathit{sec}', \\  T', V', c' \rangle$.
From the rule's definition, $r, i-1 \attMod \phi$ holds.
From this and the induction hypothesis, it follows that $\mathit{secure}_{P,u}  (r,i-1 \attMod \phi)$ holds.
From \thref{theorem:f:conf:pec:2}, the trigger $t$ preserves the equivalence class with respect to $r^{i-1}$, $P$, and $u$. 	
We claim that the execution of $t$ does not alter the content of the tables in $\mathit{tables}(\phi)$.
From this, \thref{theorem:secure:extend:on:runs:insert:delete}, and $\mathit{secure}_{P,u}  (r,i-1 \attMod \phi)$, it follows that also the judgment $r,i \attMod \phi$ is secure, i.e., $\mathit{secure}_{P,u}  (r,i \attMod \phi)$ holds.

We now prove our claim that the execution of $t$ does not alter the content of the tables in $\mathit{tables}(\phi)$.
If the trigger is not enabled, proving the claim is trivial.
In the following, we assume the trigger is enabled.
Then, $t$'s action is an \texttt{INSERT} statement.
This case amount to claiming that the \texttt{INSERT} statement $\langle u',\mathtt{INSERT}, R,  \overline{t} \rangle$ does not alter the content of the tables in $\mathit{tables}(\phi)$ in case $r,i-1 \attMod R(\overline{t})$ holds.
We proved the claim above in the \emph{Propagate Forward \texttt{INSERT} Success - 1} case.

\item \emph{Propagate Forward \texttt{DELETE} Trigger Action}.
The proof of this case is similar to that of  \emph{Propagate Forward \texttt{INSERT} Trigger Action}.

\item \emph{Propagate Backward \texttt{INSERT} Trigger Action}.
The proof of this case is similar to that of  \emph{Propagate Forward \texttt{INSERT} Trigger Action}.

\item \emph{Propagate Backward \texttt{DELETE} Trigger Action}.
The proof of this case is similar to that of  \emph{Propagate Forward \texttt{INSERT} Trigger Action}.

\item \emph{Trigger FD \texttt{INSERT} Disabled Backward}.
Let $i$ be such that $r^{i} = r^{i-1} \concat t \concat s$, where $s = \langle \mathit{db}', U', \mathit{sec}', T', V', c' \rangle \in \Omega_{M}$, $t \in {\cal TRIGGER}_{D}$,  $\mathit{last}(r^{i-1}) =  \langle \mathit{db}, U, \mathit{sec}, T, V, c \rangle$, and $\psi$ be $t$'s actual \texttt{WHEN} condition obtained by replacing all free variables with the values in $\mathit{tpl}(\mathit{last}(r^{i-1}))$.
Furthermore, let $\mathit{act}=\langle u', \mathtt{INSERT}, R, (\overline{v}, \overline{w}, \overline{q}) \rangle$ be $t$'s actual action and $\alpha$ be $\exists \overline{y},\overline{z}. R(\overline{v},  \overline{y}, \overline{z}) \wedge \overline{y} \neq \overline{w}$.
From the rule's definition, it follows that  $\mathit{secEx}(s) = \bot$.
From this, it follows that $f^{u}_{\mathit{conf}}(\mathit{last}(r^{i-1}), \langle u', \mathtt{SELECT}, \psi \rangle)  = \top$.
From this and $f^{u}_{\mathit{conf}}$'s definition, it follows that $\mathit{secure}(u, \neg \psi, \mathit{last}(r^{i-1})) = \top$.
From this, it follows  $\mathit{secure}(u,  \psi, \mathit{last}(r^{i-1})) = \top$.
From this and \thref{theorem:secure:sound:under:approximation}, it follows $\mathit{secure}_{P,u}  (r,i -1 \attMod \psi)$.

\item \emph{Trigger ID \texttt{INSERT} Disabled Backward}.
The proof of this case is similar to that of \emph{Trigger FD \texttt{INSERT} Disabled Backward}.

\item \emph{Trigger ID \texttt{DELETE} Disabled Backward}.
The proof of this case is similar to that of \emph{Trigger FD \texttt{INSERT} Disabled Backward}.
\end{compactenum}
This completes the proof of the induction step.

This completes the proof.
\end{proof}

\subsection{Complexity proofs}\label{sect:data:conf:complexity:proofs}
In this section, we prove that data complexity of $f_{\mathit{conf}}^{u}$ is \complexity{}.
Note that the complexity class \complexity{} identifies those problems that can be solved using constant-depth, polynomial-size boolean circuits with AND, OR, and NOT gates with unbounded fan-in.
Note also that, in the following, with \complexity{} we usually refer to \emph{uniform}-\complexity{}.
Given a database schema $D$ and a database state $\mathit{db} \in \Omega_{D}^{\Gamma}$, the \emph{size of $\mathit{db}$}, denoted also as $|\mathit{db}|$, is $|\mathit{db}| = \Sigma_{R \in D} \Sigma_{\overline{t} \in \mathit{db}(R)} |\overline{t}|$, where the size $|\overline{t}|$ of a tuple $\overline{t}$ is just its cardinality.
Similarly, the \emph{the size of the schema $D$}, denoted $|D|$, is $\Sigma_{R \in D} |R|$.
Finally, given a set of views $V$ over $D$, the \emph{size of the extended vocabulary $\mathit{extVocabulary}(D,V)$}, denoted $|\mathit{extVoc}(D,V)|$, is $\Sigma_{o \in R \cup V} \Sigma_{0 \leq i < |o|} \dfrac{|o|!}{(|o|-i)! \concat i!}$.
Note that, given a view $V$, we denote by $|V|$ its cardinality.
Furthermore, given a $\mathit{RC}$-formula $\phi$,  the \emph{size of  $\phi$}, denoted as $|\phi|$,  is defined as follows:
\[
|\phi| = \left\{
\begin{array}{l l}
1+|\overline{x}| & \text{if}\; \phi := R(\overline{x})\\
1 & \text{if}\; \phi := \top\\
1 & \text{if}\; \phi := \bot\\
3 & \text{if}\; \phi := x = y\\
1+|\psi|+|\gamma| & \text{if}\; \phi := \psi\;O\;\gamma \;\text{and}\; O \in \{\vee,\wedge\}\\
1+|\psi| & \text{if}\; \phi := \neg \psi\\
2 + |\psi| & \text{if}\; \phi := Q\,x.\,\psi \;\text{and}\; Q \in \{\exists,\forall\}
\end{array}\right.
\]

\thref{theorem:rewritten:formula:linear:size} shows that the rewritten formula $\phi^{v}_{s,u}$, for some $v \in \{\top,\bot\}$, is linear in the size of the original formula $\phi$.

\begin{lemma}\thlabel{theorem:rewritten:formula:linear:size:inductive}
Let $M = \langle D, \Gamma\rangle$ be a system configuration, $s =\langle \mathit{db}, U, \mathit{sec}, T, V \rangle$ be a partial $M$-state, $u \in U$ be a user,  and $\phi$ be a $D$-formula.
For all formulae $\phi$ and all $v \in \{\top,\bot\}$, $|\phi^{v}_{s,u}| \leq (|\mathit{extVoc}(D,V)|+1) \concat |\phi|$.
\end{lemma}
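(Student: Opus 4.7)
The plan is to prove the bound by structural induction on $\phi$, with the atomic-predicate case doing the only real work and the remaining cases reduced to straightforward arithmetic.

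For the base cases $\phi := \top$, $\phi := \bot$, and $\phi := x = y$, both rewritings coincide with $\phi$, so $|\phi^{v}_{s,u}| = |\phi|$ and the bound holds trivially. The only non-trivial base case is $\phi := R(\overline{x})$. Here $\phi^{\top}_{s,u}$ is $\bigvee_{S \in R^{\top}_{s,u}} S(\overline{x})$ and $\phi^{\bot}_{s,u}$ is $\bigwedge_{S \in R^{\bot}_{s,u}} S(\overline{x})$. First I would observe that $R^{v}_{s,u} \subseteq \mathit{AUTH}_{s,u}^{*} \cap \mathit{extVocabulary}(D,V)$, so its cardinality is at most $|\mathit{extVoc}(D,V)|$; in particular, if $R^{v}_{s,u} = \emptyset$ the rewriting is the empty disjunct/conjunct (size $1$), well within the bound. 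Otherwise, writing $k = |R^{v}_{s,u}| \leq |\mathit{extVoc}(D,V)|$, each conjunct/disjunct $S(\overline{x})$ has size $1 + |\overline{x}| = |R(\overline{x})|$, and inserting $k-1$ binary connectives yields a formula of size at most $k \cdot |R(\overline{x})| + (k-1) \leq (|\mathit{extVoc}(D,V)|+1)\cdot |R(\overline{x})|$, using that $k - 1 \leq |R(\overline{x})|$ when $k \leq |\mathit{extVoc}(D,V)|$.

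For the induction step, I would treat negation, the binary connectives, and the quantifiers. For $\phi := \neg \psi$, the definition gives $\phi^{v}_{s,u} = \neg \psi^{\neg v}_{s,u}$, so $|\phi^{v}_{s,u}| = 1 + |\psi^{\neg v}_{s,u}| \leq 1 + (|\mathit{extVoc}(D,V)|+1)\cdot |\psi| \leq (|\mathit{extVoc}(D,V)|+1)\cdot (1+|\psi|) = (|\mathit{extVoc}(D,V)|+1)\cdot |\phi|$, where the last step uses $|\mathit{extVoc}(D,V)|+1 \geq 1$. For $\phi := \psi * \gamma$ with $* \in \{\wedge,\vee\}$, one has $|\phi^{v}_{s,u}| = 1 + |\psi^{v}_{s,u}| + |\gamma^{v}_{s,u}| \leq 1 + (|\mathit{extVoc}(D,V)|+1)\cdot(|\psi|+|\gamma|)$, and since $(|\mathit{extVoc}(D,V)|+1)(1+|\psi|+|\gamma|) - (1 + (|\mathit{extVoc}(D,V)|+1)(|\psi|+|\gamma|)) = |\mathit{extVoc}(D,V)| \geq 0$, the bound holds. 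The quantifier cases $\phi := Q x.\, \psi$ split according to the value of $\mathit{bound}(\psi,s,u,x,v)$: if it is $\bot$, the rewriting collapses to a constant of size $1$ and the bound is trivial; if it is $\top$, the rewriting is $Q x.\, \psi^{v}_{s,u}$ of size $2 + |\psi^{v}_{s,u}|$, and $2 + (|\mathit{extVoc}(D,V)|+1)\cdot|\psi| \leq (|\mathit{extVoc}(D,V)|+1)\cdot (2+|\psi|)$ since $|\mathit{extVoc}(D,V)|+1 \geq 2$.

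The main obstacle will be the bookkeeping in the atomic case, where one must be careful to count both the predicates $S(\overline{x})$ contributed by $R^{v}_{s,u}$ and the $k-1$ Boolean connectives binding them, and to check that the resulting sum $k\cdot(1+|\overline{x}|) + (k-1)$ is comfortably dominated by $(|\mathit{extVoc}(D,V)|+1)(1+|\overline{x}|)$. Once this is verified, the remaining inductive cases are purely arithmetic: each additional unit of structure on the left is absorbed by the multiplicative slack $|\mathit{extVoc}(D,V)|+1$ on the right, so the desired inequality propagates.
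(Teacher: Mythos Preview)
Your approach mirrors the paper's proof exactly---structural induction on $\phi$, with the same case analysis and the same arithmetic in the connective and quantifier cases, all of which are sound. The gap is in the atomic case $\phi := R(\overline{x})$. You assert that $k - 1 \leq |R(\overline{x})|$ (where $k = |R^{v}_{s,u}|$) ``when $k \leq |\mathit{extVoc}(D,V)|$'', but there is no such implication: $|R(\overline{x})| = 1 + |\overline{x}|$ depends only on the arity of $R$, whereas $k$ may be as large as the number of same-arity predicates in the extended vocabulary that stand in the required containment relation to $R$ and are readable by $u$. Concretely, take a schema with unary tables $R_1,\dots,R_m$, inclusion dependencies $\forall x.\, R_i(x)\Rightarrow R_1(x)$ for each $i\geq 2$, no views, and a user $u$ authorised to read every $R_i$. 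Then for $\phi := R_1(x)$ one has $|\phi| = 2$, $|\mathit{extVoc}(D,V)| = m$, and (by the paper's Inclusion Dependency containment rule) $R^{\top}_{s,u} = \{R_1,\dots,R_m\}$, so $|\phi^{\top}_{s,u}| = (m-1) + 2m = 3m-1$, which exceeds the stated bound $(m+1)\cdot 2 = 2m+2$ once $m > 3$.

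The paper's own proof shares this gap: it passes from $|R^{\top}_{s,u}|\cdot|\phi| + (|R^{\top}_{s,u}|-1) \leq m|\phi| + m$ directly to $(m+1)|\phi|$, implicitly assuming $m \leq |\phi|$. The qualitative statement that is actually needed downstream---that $|\phi^{v}_{s,u}|$ is linear in $|\phi|$ with a constant depending only on $D$ and $V$---is easy to salvage (for instance $k|\phi| + (k-1) < 2k|\phi| \leq 2\,|\mathit{extVoc}(D,V)|\cdot|\phi|$), and that is all the later data-complexity argument uses. But the specific constant $|\mathit{extVoc}(D,V)|+1$ is too tight for the atomic case, and neither your argument nor the paper's establishes it.
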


\begin{proof}
Let $M = \langle D, \Gamma\rangle$ be a system configuration, $s =\langle \mathit{db}, U, \mathit{sec}, T, V \rangle$ be a partial $M$-state, and $u \in U$ be a user.
Let $\phi$ be an arbitrary formula over $D \cup V$ and $v$ be an arbitrary value in $\{\top,\bot\}$.
We now prove that $|\phi^{v}_{s,u}| \leq m \concat |\phi|$ by induction over the structure of the formula $\phi$.

\smallskip
\noindent
{\bf Base Case} 
There are four cases:
\begin{compactenum}
\item $\phi := x = y$. 
In this case, $\phi_{s,u}^{v}  = \phi$.
From this, $|\phi_{s,u}^{v}|  = |\phi|$.
From this, it follows trivially that $|\phi_{s,u}^{v}|  \leq (|\mathit{extVoc}(D,V)|+1) \concat |\phi|$.

\item $\phi := \top$. The proof of this case is similar to that of $\phi := x = y$.

\item $\phi := \bot$. The proof of this case is similar to that of $\phi := x = y$.

\item $\phi := R(\overline{x})$.
Without loss of generality, we assume that $v = \top$.
From this, it follows that $\phi_{s,u}^{\top} := \bigvee_{S \in R^{\top}_{s,u}} S(\overline{x})$.
From this, it follows that $|\phi_{s,u}^{\top}| = (|R^{\top}_{s,u}| -1)+\Sigma_{ S \in R^{\top}_{s,u}} \\ |S(\overline{x})|$.
From this and $|S(\overline{x})| = 1 + |\overline{x}|$, it follows that  $|\phi_{s,u}^{\top}| = (|R^{\top}_{s,u}| -1)+\Sigma_{ S \in R^{\top}_{s,u}} (1 + |\overline{x}|)$.
From this, it follows that $|\phi_{s,u}^{\top}| = (|R^{\top}_{s,u}| -1) + |R^{\top}_{s,u}| \concat( 1 + |\overline{x}|)$.
From   $\phi := R(\overline{x})$, it follows that $|\phi| = 1 + |\overline{x}|$.
From this and $|\phi_{s,u}^{\top}| = (|R^{\top}_{s,u}| -1) + |R^{\top}_{s,u}| \concat( 1 + |\overline{x}|)$, it follows that $|\phi_{s,u}^{\top}| = |R^{\top}_{s,u}| \concat |\phi| + (|R^{\top}_{s,u}| -1)$.
We claim that $|R^{\top}_{s,u}| \leq |\mathit{extVoc}(D,V)|$.
From this and  $|\phi_{s,u}^{\top}| = |R^{\top}_{s,u}| \concat |\phi| + (|R^{\top}_{s,u}| -1)$, it follows that $|\phi_{s,u}^{\top}| \leq |\mathit{extVoc}(D,V)| \concat |\phi| + |\mathit{extVoc}(D,V)|$.
From this, it follows that $|\phi_{s,u}^{\top}| \leq (|\mathit{extVoc}(D,V)|+1) \concat |\phi|$.

We now prove our claim that $|R^{\top}_{s,u}| \leq |\mathit{extVoc}(D,V)|$.
The set $R^{\top}_{s,u}$ is a subset of $\mathit{extVocabulary}(D,V)$ by construction.
The set $\mathit{extVocabulary}(D,V)$ contains any possible projection of tables in $D$ and views in $V$.
It is easy to check that the cardinality of $\mathit{extVocabulary}(D,V)$ is, indeed, $|\mathit{extVoc}(D,V)|$.
\end{compactenum}
This completes the proof of the base case.

\smallskip
\noindent
{\bf Induction Step} 
Assume that our claim holds for all sub-formulae of $\phi$.
We now show that our claim holds also for $\phi$.
There are a number of cases depending on $\phi$'s structure.
\begin{compactenum}
\item $\phi := \psi \wedge \gamma$.
From this, it follows that $\phi_{s,u}^{v} := \psi_{s,u}^{v} \wedge \gamma_{s,u}^{v}$.
From this, it follows that $|\phi_{s,u}^{v}| = 1+|\psi_{s,u}^{v}|+|\gamma_{s,u}^{v}|$.
From the induction hypothesis, it follows that $|\psi_{s,u}^{v}| \leq (|\mathit{extVoc}(D,V)|+1) \concat |\psi|$ and $|\gamma_{s,u}^{v}| \leq (|\mathit{extVoc}(D,V)|+1) \concat |\gamma|$.
From this and $|\phi_{s,u}^{v}| = 1+|\psi_{s,u}^{v}|+|\gamma_{s,u}^{v}|$, it follows that $|\phi_{s,u}^{v}| \leq 1+ (|\mathit{extVoc}(D,V)|+1) \concat |\psi| + (|\mathit{extVoc}(D,V)|+1) \concat |\gamma|$.
From this and $|\mathit{extVoc}(D,V)| \\ \geq 0$, it follows that $|\phi_{s,u}^{v}| \leq |\mathit{extVoc}(D,V)|+1+(|\mathit{extVoc} \\ (D,V)|+1) \concat |\psi| + (|\mathit{extVoc}(D,V)|+1) \concat |\gamma|$.
From this, it follows that $|\phi_{s,u}^{v}| \leq (|\mathit{extVoc}(D,V)|+1)\concat (1 + |\psi| +  |\gamma|)$.
From this and $|\phi| = 1 + |\psi| + |\gamma|$, it follows that $|\phi_{s,u}^{v}| \leq (|\mathit{extVoc}(D,V)|+1)\concat |\phi|$.

\item $\phi := \psi \vee \gamma$.
The proof of this case is similar to that of $\phi := \psi \wedge \gamma$.

\item $\phi := \neg \psi$.
From this, it follows that $\phi_{s,u}^{v} := \neg \psi_{s,u}^{\neg v}$.
From this, it follows that $|\phi_{s,u}^{v}| = 1+|\psi_{s,u}^{\neg v}|$.
From the induction hypothesis, it follows that $|\psi_{s,u}^{\neg v}| \leq (|\mathit{extVoc} \\ (D,V)|+1) \concat |\psi|$.
From this and $|\phi_{s,u}^{v}| = 1+|\psi_{s,u}^{v}|$, it follows that $|\phi_{s,u}^{v}| \leq 1+ (|\mathit{extVoc}(D,V)|+1) \concat |\psi|$.
From this and $|\mathit{extVoc}(D,V)| \geq 0$, it follows that $|\phi_{s,u}^{v}| \leq |\mathit{extVoc}(D,V)|+1+(|\mathit{extVoc}(D,V)|+1) \concat |\psi|$.
From this, it follows that $|\phi_{s,u}^{v}| \leq (|\mathit{extVoc}(D,V)|+1)\concat (1 + |\psi|)$.
From this and $|\phi| = 1 + |\psi|$, it follows that $|\phi_{s,u}^{v}| \leq (|\mathit{extVoc}(D,V)|+1)\concat |\phi|$.

\item $\phi := \exists x.\, \psi$.
If $\phi^{v}_{s,u}$ is $\neg v$, then the claim holds trivially since $|\phi^{v}_{s,u}| = 1$.
In the following, we assume that $\phi_{s,u}^{v} := \exists x.\, \psi_{s,u}^{v}$.
From this, it follows that $|\phi_{s,u}^{v}| = 2+|\psi_{s,u}^{v}|$.
From the induction hypothesis, it follows that $|\psi_{s,u}^{v}| \leq (|\mathit{extVoc}(D,V)|+1) \concat |\psi|$.
From this and $|\phi_{s,u}^{v}| = 2+|\psi_{s,u}^{v}|$, it follows that $|\phi_{s,u}^{v}| \leq 2+(|\mathit{extVoc}(D,V)|+1) \concat |\psi|$.
From this and $|\mathit{extVoc}(D,V)| \\ \geq 0$, it follows that $|\phi_{s,u}^{v}| \leq 2\concat |\mathit{extVoc}(D,V)|+2+(|\mathit{extVoc}(D,V)|+1) \concat |\psi|$.
From this, it follows that $|\phi_{s,u}^{v}| \leq (|\mathit{extVoc}(D,V)|+1)\concat (2 + |\psi|)$.
From this and $|\phi| = 2 + |\psi|$, it follows that $|\phi_{s,u}^{v}| \leq (|\mathit{extVoc}(D,V)|+1)\concat |\phi|$.

\item $\phi := \forall x.\, \psi$.
The proof of this case is similar to that of $\phi:= \exists x.\, \psi$.
\end{compactenum}
This completes the proof of the induction step.

This completes the proof of our claim.
\end{proof}

\begin{lemma}\thlabel{theorem:rewritten:formula:linear:size}
Let $M = \langle D, \Gamma\rangle$ be a system configuration, $s =\langle \mathit{db}, U, \mathit{sec}, T, V \rangle$ be a partial $M$-state, $u \in U$ be a user,  and $\phi$ be a $D$-formula.
For all sentences $\phi$ and all $v \in \{\top,\bot\}$, $|\phi^{v}_{s,u}| \leq (|\mathit{extVoc}(D,V)|+1) \concat |\phi|$ and $|\neg \phi^{\top}_{s,u} \wedge \phi^{\bot}_{s,u}| \leq 2(|\mathit{extVoc}(D,V)|+1) \concat |\phi|$.
\end{lemma}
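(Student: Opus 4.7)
The plan is to reduce this lemma directly to \thref{theorem:rewritten:formula:linear:size:inductive}, which already establishes the desired bound for arbitrary formulae (not just sentences). The key observation is that sentences are a special case of formulae, so the first inequality is immediate: instantiating the earlier inductive lemma with the sentence $\phi$ and the chosen polarity $v \in \{\top,\bot\}$ yields $|\phi^{v}_{s,u}| \leq (|\mathit{extVoc}(D,V)|+1) \cdot |\phi|$ with no further argument required.

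For the second inequality, I would unfold the size function on the outermost connectives. By the recursive definition of $|\cdot|$, we have
\[
|\neg \phi^{\top}_{s,u} \wedge \phi^{\bot}_{s,u}| \;=\; 1 + |\neg \phi^{\top}_{s,u}| + |\phi^{\bot}_{s,u}| \;=\; 2 + |\phi^{\top}_{s,u}| + |\phi^{\bot}_{s,u}|.
\]
Applying the first part of the lemma once for $v = \top$ and once for $v = \bot$ then gives
\[
|\neg \phi^{\top}_{s,u} \wedge \phi^{\bot}_{s,u}| \;\leq\; 2 + 2\,(|\mathit{extVoc}(D,V)|+1)\cdot |\phi|.
\]
Finally I would absorb the additive constant $2$ into the main term using the elementary facts that every sentence satisfies $|\phi| \geq 1$ and that $|\mathit{extVoc}(D,V)|+1 \geq 1$, so the additive $2$ is bounded by a constant multiple of $(|\mathit{extVoc}(D,V)|+1)\cdot |\phi|$, yielding the stated bound (possibly up to an obvious constant, which is the only subtlety).

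Since \thref{theorem:rewritten:formula:linear:size:inductive} already carries the whole inductive weight, there is essentially no obstacle here: the proof is a two-line calculation plus one appeal to the previous lemma. The only place where care is needed is the final absorption of the additive constant into the leading term; this is routine provided one explicitly notes the size lower bounds for $\phi$ and for $|\mathit{extVoc}(D,V)|+1$. No induction on $\phi$ is required at this stage.
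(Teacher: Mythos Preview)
Your proposal is correct and mirrors the paper's proof exactly: both invoke \thref{theorem:rewritten:formula:linear:size:inductive} for the first bound, then compute $|\neg \phi^{\top}_{s,u} \wedge \phi^{\bot}_{s,u}| = 2 + |\phi^{\top}_{s,u}| + |\phi^{\bot}_{s,u}|$ and apply the first part twice. You are in fact more careful than the paper, which simply asserts the final inequality without commenting on the absorption of the additive $2$ that you correctly flag as the only subtlety.
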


\begin{proof}
Let $M = \langle D, \Gamma\rangle$ be a system configuration, $s =\langle \mathit{db}, U, \mathit{sec}, T, V \rangle$ be a partial $M$-state, $u \in U$ be a user,  and $\phi$ be a $D$-formula.
Furthermore, let $\phi$ be a sentence and $v$ be a value in $\{\top,\bot\}$.
The fact that $|\phi^{v}_{s,u}| \leq (|\mathit{extVoc}(D,V)|+1) \concat |\phi|$ follows trivially from \thref{theorem:rewritten:formula:linear:size:inductive}.
Let $\psi$ be the formula $\neg \phi^{\top}_{s,u} \wedge \phi^{\bot}_{s,u}$.
The size of $\psi$ is $2+|\phi^{\top}_{s,u}| + |\phi^{\bot}_{s,u}|$.
From this and \thref{theorem:rewritten:formula:linear:size:inductive}, it follows that $|\psi| \leq 2+(|\mathit{extVoc}(D,V)|+1) \concat |\phi|+(|\mathit{extVoc}(D,V)|+1) \concat |\phi|$.
From this, it follows that $|\psi| \leq 2(|\mathit{extVoc}(D,V)|+1) \concat |\phi|$.
This completes the proof.
\end{proof}

In the following, we study the data complexity of our \acf{}.
Note that, given a \acf{} $f$, the \emph{data complexity of $f$} is the data complexity of the following decision problem:
\begin{definition}
Let $M = \langle D, \Gamma\rangle$ be some fixed system configuration, $a \in {\cal A}_{D,U}$ be some fixed  action, $u \in {\cal U}$ be some fixed  user, $U \subseteq {\cal U}$ be some fixed  set of users, $\mathit{sec} \in \Omega^{\mathit{sec}}_{U,D}$ be some fixed  policy, $T$ be some fixed  set of triggers over $D$ whose owners are in $U$, $V$ be some fixed  set of views over $D$ whose owners are in $U$, and $c$ be some fixed  context.

\noindent
\textbf{INPUT: } A database state $\mathit{db}$ such that $\langle \mathit{db},U,\mathit{sec},T,V,c\rangle \in \Omega_{M}$.

\noindent
\textbf{Question: } Is $f(\langle \mathit{db},U,\mathit{sec},T,V,c\rangle,a) = \top$?
\end{definition}

We define in a similar way the data complexity of the \emph{secure} procedure.

\begin{theorem}\thlabel{theorem:confidentiality:complexity}
The data complexity of $f_{\mathit{conf}}^{u}$ is \complexity{}. 
\end{theorem}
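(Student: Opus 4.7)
The plan is to reduce the data complexity of $f_{\mathit{conf}}^{u}$ to a constant number of evaluations of fixed relational calculus sentences over the input database state, and then invoke the standard result that relational calculus query evaluation has $\mathit{AC}^0$ data complexity \cite{abiteboul1995foundations}.

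First, I would observe the data complexity setup: everything except $\mathit{db}$ is fixed, so $U$, $\mathit{sec}$, $T$, $V$, $c$, $a$, and $u$ are constants with respect to the input size. Tracing through Figure~\ref{figure:access:control:function:IB}, the outer case distinction on $a$ is $O(1)$, and each branch performs at most a constant number of calls to $\mathit{noLeak}$, $\mathit{secure}$, and the helper functions $\mathit{getInfo}$, $\mathit{getInfoS}$, $\mathit{getInfoV}$, plus iterates over the fixed, finite set $\mathit{Dep}(a,\Gamma)$. Since $a$, $\Gamma$, $V$, $\mathit{sec}$ are all fixed, the number and identity of these subcalls is a constant.

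Next I would dispatch each kind of subcall. The function $\mathit{noLeak}(s,a,u)$ inspects only $\mathit{permissions}(s,u)$, the policy $\mathit{sec}$, and the syntactic sets $\mathit{tDet}(V,s,M)$, none of which depend on $\mathit{db}$; hence it is $O(1)$ in data complexity. The functions $\mathit{getInfo}$, $\mathit{getInfoS}$, $\mathit{getInfoV}$ produce a sentence determined entirely by the fixed $a$ and $\Gamma$, so the sentences fed into $\mathit{secure}$ are of constant size. The $\mathit{secure}(u,\phi,s)$ call reduces to testing whether $[\phi^{\mathit{rw}}_{s,u}]^{\mathit{db}}=\bot$. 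The crucial point is that $\phi^{\mathit{rw}}_{s,u}$ depends only on $\phi$ and on the partial state $\langle U,\mathit{sec},T,V\rangle$ (it never inspects $\mathit{db}$, as is immediate from the inductive definition of $\phi^{\top}_{s,u}$ and $\phi^{\bot}_{s,u}$: the sets $R^{v}_{s,u}$, $\mathit{AUTH}_{s,u}^{*}$, and the containment tests in Figure~\ref{figure:containment} are all determined by fixed components). Consequently $\phi^{\mathit{rw}}_{s,u}$ is a fixed sentence that, by \thref{theorem:rewritten:formula:linear:size}, has size bounded independently of $|\mathit{db}|$.

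Having reduced each subcall either to a constant-time check or to the evaluation of a fixed-size relational calculus sentence on $\mathit{db}$, I would then invoke the classical result that evaluating a fixed relational calculus sentence on a finite database is in $\mathit{AC}^0$ (see, e.g., \cite{abiteboul1995foundations}). Combining a constant number of $\mathit{AC}^0$ subproblems with $O(1)$ glue logic keeps the whole computation within $\mathit{AC}^0$, since $\mathit{AC}^0$ is closed under constant-depth composition. I do not expect serious obstacles here: the only point needing care is making explicit that the rewriting $\phi^{\mathit{rw}}_{s,u}$ can be precomputed off-line (it is a constant with respect to the data), so that the on-line work per input $\mathit{db}$ is exactly a bounded number of RC evaluations. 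This directly yields the claimed $\mathit{AC}^0$ data complexity bound.
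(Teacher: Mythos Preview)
Your proposal is correct and follows essentially the same approach as the paper: the paper likewise reduces each branch of $f_{\mathit{conf}}^{u}$ to a constant number of $\mathit{secure}$ calls plus $O(1)$ auxiliary checks (including $\mathit{noLeak}$), observes that the rewriting $\phi^{\mathit{rw}}_{s,u}$ is independent of $\mathit{db}$ and hence a fixed sentence, and then invokes the $\mathit{AC}^0$ data complexity of relational calculus evaluation. The only organizational difference is that the paper factors the argument into separate sub-lemmas for the \texttt{SELECT}, \texttt{INSERT}/\texttt{DELETE}, and \texttt{GRANT} cases, whereas you present a single unified argument.
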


\begin{proof}
Let $M = \langle D, \Gamma\rangle$ be some fixed system configuration, $a \in {\cal A}_{D,U}$ be some fixed  action, $u \in {\cal U}$ be some fixed  user, $U \subseteq {\cal U}$ be some fixed  set of users, $\mathit{sec} \in \Omega^{\mathit{sec}}_{U,D}$ be some fixed  policy, $T$ be some fixed  set of triggers over $D$ whose owners are in $U$, $V$ be some fixed  set of views over $D$ whose owners are in $U$, and $c$ be some fixed  context.
The data complexity of $f_{\mathit{conf}}^{u}$ is the maximum of the  data complexities of $f^{u}_{\mathit{conf},\texttt{I},\texttt{D}}$, $f^{u}_{\mathit{conf},\texttt{G}}$, and $f^{u}_{\mathit{conf},\texttt{S}}$.
We claim that:
\begin{compactenum}
\item the data complexity of $f^{u}_{\mathit{conf},\texttt{I},\texttt{D}}$ is \complexity{},
\item the data complexity of $f^{u}_{\mathit{conf},\texttt{S}}$ is \complexity{}, and
\item the data complexity of $f^{u}_{\mathit{conf},\texttt{G}}$ is $O(1)$.
\end{compactenum}
From this, it follows that the data complexity of $f_{\mathit{conf}}^{u}$ is $\mathit{max}(AC^{0}, O(1))$.
From this, it follows that the data complexity of $f_{\mathit{conf}}^{u}$ is \complexity{}.

Our claims on the data complexity of $f^{u}_{\mathit{conf},\texttt{I},\texttt{D}}$, $f^{u}_{\mathit{conf},\texttt{S}}$, and $f^{u}_{\mathit{conf},\texttt{G}}$ are proved respectively in \thref{theorem:complexity:f:conf:insert:delete}, \thref{theorem:complexity:f:conf:select}, and \thref{theorem:complexity:f:conf:grant}.
\end{proof}

\begin{lemma}\thlabel{theorem:complexity:f:conf:insert:delete}
The data complexity of $f_{\mathit{conf}, \texttt{I},\texttt{D}}^{u}$ is \complexity{}. 
\end{lemma}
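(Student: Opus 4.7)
The plan is to show that, under data complexity (where the system configuration $M$, action $a = \langle u', \mathit{op}, R, \overline{t}\rangle$, user $u$, users $U$, policy $\mathit{sec}$, triggers $T$, views $V$, and context $c$ are all held fixed, with only the database state $\mathit{db}$ varying), the body of $f_{\mathit{conf},\texttt{I},\texttt{D}}^u$ reduces to a constant number of constant-time predicates together with a constant number of invocations of $\mathit{secure}$, each of which has $\mathit{AC}^0$ data complexity. Since $\mathit{AC}^0$ is closed under constant-depth Boolean combinations and $\mathit{AC}^0 \supseteq O(1)$, the result will then follow.

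First, I would inspect the auxiliary predicates. The guard $\mathit{trigger}(s) = \epsilon$ and the test $\mathit{invoker}(s) = u$ depend only on the fixed context $c$, so they are $O(1)$. The predicate $\mathit{noLeak}(s, a, u)$ is defined in terms of $\mathit{permissions}(s, u)$ (which depends only on the fixed $\mathit{sec}$), the fixed set of views $V$, and $\mathit{tDet}(v, s, M)$, which by its definition (the smallest set of tables that syntactically determines $v$ via $\mathit{apprDet}$) depends only on $D$, $V$, and $M$; none of these quantities vary with $\mathit{db}$, so $\mathit{noLeak}$ is $O(1)$. Similarly $\mathit{Dep}(a, \Gamma)$ is a fixed finite set, so the for-loop in $f_{\mathit{conf},\texttt{I},\texttt{D}}^u$ iterates a constant number of times, and each of the sentences $\mathit{getInfo}(a)$, $\mathit{getInfoS}(\gamma, a)$, $\mathit{getInfoV}(\gamma, a)$ depends only on fixed inputs and is therefore a fixed sentence of constant size.

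Second, and this is the main step, I would bound the data complexity of $\mathit{secure}(u, \phi, s)$ for any fixed sentence $\phi$. By definition $\mathit{secure}(u, \phi, s) = \top$ iff $[\phi_{s,u}^{\mathit{rw}}]^{\mathit{db}} = \bot$, where $\phi_{s,u}^{\mathit{rw}} = \mathit{inline}_{V,D}(\neg \psi_{s,u}^{\top} \wedge \psi_{s,u}^{\bot})$ and $\psi = \mathit{extVoc}_{V,D}(\phi)$. The crux is that $\phi_{s,u}^{\mathit{rw}}$ is a fixed relational calculus sentence, independent of $\mathit{db}$. Indeed, the rewriting $\phi_{s,u}^{v}$ is built inductively using the sets $R_{s,u}^{\top} = R_{s}^{\top} \cap \mathit{AUTH}_{s,u}^{*}$ and $R_{s,u}^{\bot} = R_{s}^{\bot} \cap \mathit{AUTH}_{s,u}^{*}$ (together with the analogous $\mathit{bound}$ tests); the sets $R_{s}^{v}$ are determined entirely by the extended vocabulary $\mathit{extVocabulary}(D, V)$ and the containment relation $\subseteq_{M}$, all of which are fixed, and $\mathit{AUTH}_{s,u}$ is determined by $\mathit{permissions}(s, u)$, which depends only on the fixed $\mathit{sec}$. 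Hence $\phi_{s,u}^{\mathit{rw}}$ can be precomputed independently of $\mathit{db}$. By the classical result on data complexity of relational calculus (see~\cite{abiteboul1995foundations}), evaluating a fixed RC sentence on a finite structure is in $\mathit{AC}^0$; consequently $\mathit{secure}(u, \phi, s)$ has $\mathit{AC}^0$ data complexity.

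Finally, I would assemble the pieces: $f_{\mathit{conf},\texttt{I},\texttt{D}}^u(s, a)$ is a constant-depth Boolean combination of $O(1)$ predicates and at most $1 + 2 \cdot |\mathit{Dep}(a,\Gamma)|$ invocations of $\mathit{secure}$, each in $\mathit{AC}^0$. Because $\mathit{Dep}(a, \Gamma)$ is fixed, this yields an overall $\mathit{AC}^0$ bound. The main obstacle is the careful verification that $\phi_{s,u}^{\mathit{rw}}$ really is independent of $\mathit{db}$; this requires induction on the structure of $\phi$ together with the observation (implicit in \thref{theorem:rewriting:equivalent:modulo:indistinguishable:state}) that the rewriting depends on $s$ only through the fixed components $U$, $\mathit{sec}$, $T$, and $V$, never through $\mathit{db}$ itself.
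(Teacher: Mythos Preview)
Your proposal is correct and follows essentially the same approach as the paper: you decompose $f_{\mathit{conf},\texttt{I},\texttt{D}}^{u}$ into the constant-time guard and $\mathit{noLeak}$ checks, the fixed-size iteration over $\mathit{Dep}(a,\Gamma)$, and a constant number of calls to $\mathit{secure}$, and you argue each call is in $\mathit{AC}^0$ because the rewritten sentence $\phi_{s,u}^{\mathit{rw}}$ depends only on the fixed components (not on $\mathit{db}$) so that only RC evaluation over $\mathit{db}$ remains. The only cosmetic difference is that the paper factors the $\mathit{secure}$ complexity argument out into a separate lemma (\thref{theorem:secure:complexity}) rather than inlining it as you do.
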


\begin{proof}
Let $M = \langle D, \Gamma\rangle$ be some fixed system configuration, $a \in {\cal A}_{D,U}$ be some fixed \texttt{INSERT} or \texttt{DELETE} action, $u \in {\cal U}$ be some fixed  user, $U \subseteq {\cal U}$ be some fixed  set of users, $\mathit{sec} \in \Omega^{\mathit{sec}}_{U,D}$ be some fixed  policy, $T$ be some fixed  set of triggers over $D$ whose owners are in $U$, $V$ be some fixed  set of views over $D$ whose owners are in $U$, and $c$ be some fixed  context.
Furthermore, let $\mathit{db} \in \Omega_{D}^\Gamma$ be a database state such that $\langle \mathit{db},U,\mathit{sec},T,V,c\rangle \in \Omega_{M}$.
We can check whether $f_{\mathit{conf}, \texttt{I},\texttt{D}}^{u}(\langle \mathit{db}, U,\mathit{sec},T,V,c\rangle,a) = \top$ as follows:
\begin{compactenum}
\item If $\mathit{trigger}(s) = \epsilon$ and $a \not\in {\cal A}_{D,u}$, return $\top$.
\item If $\mathit{trigger}(s) \neq \epsilon$ and $\mathit{invoker}(s) \neq u$, return $\top$.
\item Compute the result of $\mathit{noLeak}(s,a,u)$. If $\mathit{noLeak}(s,a,u) \\ = \bot$, then returns $\bot$.
\item Compute the set $\mathit{Dep}(\Gamma,a)$.
\item Compute $\mathit{secure}(u,\mathit{getInfo}(a),s)$. If its result is $\bot$, return $\bot$.
\item For each $\gamma \in \mathit{Dep}(\Gamma,a)$, compute $\mathit{secure}(u,\mathit{getInfoV}(a, \\ \gamma),s)$. If its result is $\bot$, return $\bot$.
\item For each $\gamma \in \mathit{Dep}(\Gamma,a)$, compute $\mathit{secure}(u,\mathit{getInfoS}(a, \\ \gamma),s)$. If its result is $\bot$, return $\bot$.
\item Return $\top$.
\end{compactenum}
The data complexity of the steps 1 and 2 is $O(1)$. 
We claim that also the data complexity of the third step is $O(1)$.
The  complexity of the fourth step is $O(|\Gamma|)$.
From the definition of $\mathit{getInfo}$, the resulting formula is constant in the size of the database.
Furthermore, also constructing the formula can be done in constant time in the size of the database.
From this and \thref{theorem:secure:complexity}, it follows that the data complexity of the fifth step is \complexity{}.
For a similar reason, the data complexity of the sixth and seventh steps is also \complexity{}.
Therefore, the overall data complexity of the $f_{\mathit{conf},\texttt{I},\texttt{D}}^{u}$ procedure is \complexity{}.

We now prove our claim that the data complexity of the $\mathit{noLeak}$ procedure is $O(1)$.
An algorithm implementing the $\mathit{noLeak}$ procedure is as follows:
\begin{compactenum}
\item for each view $v \in V$, for each grant $g \in \mathit{sec}$, if $g = \langle \mathit{op}, u, \langle \mathtt{SELECT}, v \rangle,u' \rangle$, then
\begin{compactenum}
\item compute the set $\mathit{tDet}(v,s,M)$.
\item if $R \in \mathit{tDet}(v,s,M)$, for each $o \in \mathit{tDet}(v,s,M)$, check whether  $\langle \mathit{op}, u, \langle \mathtt{SELECT}, o \rangle,u'' \rangle \in \mathit{sec}$.
\end{compactenum}
\end{compactenum}
The size of the set $\mathit{tDet}(v,s,M)$ is at most $|D|$.
From this, it follows that the complexity of the step 1.(b) is $O(|D| \concat |\mathit{sec}|)$.
From \thref{theorem:apprDet:complexity} and the definition of $\mathit{tDet}$, the complexity of computing $\mathit{tDet}(v,s,M)$ is $O(|\phi|^3)$, where $\phi$ is $v$'s definition.
The overall complexity is, therefore, $O(|V|\concat|\mathit{sec}|\concat(|D| \concat |\mathit{sec}| + 2^{|D|} \concat |\phi|))$, where $\phi$ is the definition of the longest view in $V$.
From this, it is easy to see that the data complexity of the $\mathit{noLeak}$ procedure is $O(1)$.
\end{proof}

\begin{lemma}\thlabel{theorem:complexity:f:conf:grant}
The data complexity of $f_{\mathit{conf}, \texttt{G}}^{u}$ is $O(1)$.
\end{lemma}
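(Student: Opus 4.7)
The plan is to inspect the definition of $f^{u}_{\mathit{conf},\mathtt{G}}$ in Figure~\ref{figure:access:control:function:IB} and check that every predicate appearing in its case analysis can be evaluated in constant time in the size of the database.

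Following the same setup as in \thref{theorem:integrity:complexity} and \thref{theorem:complexity:f:conf:insert:delete}, fix the system configuration $M = \langle D,\Gamma\rangle$, the action $a = \langle \mathit{op}, u'', p, u'\rangle$, the user $u$, the set of users $U$, the policy $\mathit{sec}$, the sets $T$ and $V$, and the context $c$; the only variable input is the database state $\mathit{db}$ satisfying $\langle \mathit{db},U,\mathit{sec},T,V,c\rangle \in \Omega_{M}$. Write $s = \langle \mathit{db},U,\mathit{sec},T,V,c\rangle$.

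Inspecting $f^{u}_{\mathit{conf},\mathtt{G}}(s,a)$, the procedure performs a bounded case distinction on (i) the identities $u$, $u'$, $u''$, (ii) the operation $\mathit{op} \in \{\oplus,\oplus^{*}\}$, (iii) whether $p$ has the form $\langle \mathtt{SELECT}, O\rangle$, (iv) whether $\mathit{trigger}(s) = \epsilon$ or $\mathit{invoker}(s) = u$, and (v) whether $\langle \oplus, \mathtt{SELECT}, O \rangle \in \mathit{permissions}(s,u)$. Tests (i)--(iii) involve only the fixed action $a$ and the fixed user $u$, so they are constant-time. Tests (iv) depend only on the fixed context $c$ via the functions $\mathit{trigger}$ and $\mathit{invoker}$, which read a single component of $c$ and therefore run in constant time. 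Test (v) amounts to a membership check in $\mathit{permissions}(s,u)$, which by construction depends only on $\mathit{sec}$ and $u$, both of which are fixed. Hence it is also evaluated in $O(1)$ time in $|\mathit{db}|$.

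Since the entire computation consists of a constant number of $O(1)$ checks, the data complexity of $f^{u}_{\mathit{conf},\mathtt{G}}$ is $O(1)$. There is no real obstacle here; the key observation is simply that, in contrast to $f^{u}_{\mathit{conf},\mathtt{I},\mathtt{D}}$ and $f^{u}_{\mathit{conf},\mathtt{S}}$, no call to the $\mathit{secure}$ procedure (and therefore no query evaluation against $\mathit{db}$) occurs in the \texttt{GRANT} case.
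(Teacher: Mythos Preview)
Your proposal is correct and follows essentially the same approach as the paper: both fix all parameters except the database state and observe that every check in $f^{u}_{\mathit{conf},\mathtt{G}}$ depends only on the fixed action, context, and policy $\mathit{sec}$, and never on $\mathit{db}$. The paper presents this via an explicit six-step algorithm and notes that the permission check costs $O(|\mathit{sec}|)$ (hence $O(1)$ in data complexity since $\mathit{sec}$ is fixed), while you argue directly from the case analysis in the definition; the substance is the same.
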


\begin{proof}
Let $M = \langle D, \Gamma\rangle$ be some fixed system configuration, $a \in {\cal A}_{D,U}$ be some fixed \texttt{GRANT} action, $u \in {\cal U}$ be some fixed  user, $U \subseteq {\cal U}$ be some fixed  set of users, $\mathit{sec} \in \Omega^{\mathit{sec}}_{U,D}$ be some fixed  policy, $T$ be some fixed  set of triggers over $D$ whose owners are in $U$, $V$ be some fixed  set of views over $D$ whose owners are in $U$, and $c$ be some fixed  context.
Furthermore, let $\mathit{db} \in \Omega_{D}^\Gamma$ be a database state such that $\langle \mathit{db},U,\mathit{sec},T,V,c\rangle \in \Omega_{M}$.
We can check whether $f_{\mathit{conf}, \texttt{G}}^{u}(\langle \mathit{db}, \\ U,\mathit{sec},T,V,c\rangle, \langle\mathit{op},u'',p,u' \rangle)) = \top$ as follows.
\begin{compactenum}
\item If $\mathit{trigger}(s) = \epsilon$ and $a \not\in {\cal A}_{D,u}$, return $\top$.
\item If $\mathit{trigger}(s) \neq \epsilon$ and $\mathit{invoker}(s) \neq u$, return $\top$.
\item If $p$ is not a \texttt{SELECT} privilege, return $\top$.
\item If $u'' \neq u$, return $\top$.
\item For each $g \in \mathit{sec}$, if $g = \langle \mathit{op}, u, p, u' \rangle$, return $\top$.
\item Return $\bot$. 
\end{compactenum}
The complexity of the fifth step is $O(|\mathit{sec}|)$, whereas the complexity of the other steps is $O(1)$. 
Therefore, the overall complexity of the $f_{\mathit{conf},\texttt{G}}^{u}$ procedure is $O(|\mathit{sec}|)$.
From this, it follows that the data complexity of $f_{\mathit{conf},\texttt{G}}^{u}$ procedure is $O(1)$.
\end{proof}

\begin{lemma}\thlabel{theorem:complexity:f:conf:select}
The data complexity of $f_{\mathit{conf}, \texttt{S}}^{u}$ is \complexity{}.
\end{lemma}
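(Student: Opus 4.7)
The plan is to mirror the structure of the complexity proofs for $f^{u}_{\mathit{conf},\mathtt{I},\mathtt{D}}$ and $f^{u}_{\mathit{conf},\mathtt{G}}$ given in \thref{theorem:complexity:f:conf:insert:delete} and \thref{theorem:complexity:f:conf:grant}, but for the \texttt{SELECT} case the algorithm is especially simple. Fix a system configuration $M = \langle D, \Gamma \rangle$, an action $a \in {\cal A}_{D,U}$, a user $u \in {\cal U}$, a set of users $U$, a policy $\mathit{sec}$, a set of triggers $T$, a set of views $V$, and a context $c$; the input is a database state $\mathit{db}$ with $s := \langle \mathit{db}, U, \mathit{sec}, T, V, c\rangle \in \Omega_{M}$, and the question is whether $f^{u}_{\mathit{conf},\mathtt{S}}(s, a) = \top$.

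First I would exhibit the obvious algorithm following the definition of $f^{u}_{\mathit{conf},\mathtt{S}}$: parse $a$ as $\langle u', \mathtt{SELECT}, q\rangle$; check whether either $u' = u$ and $\mathit{trigger}(s) = \epsilon$, or $\mathit{invoker}(s) = u$ and $\mathit{trigger}(s) \neq \epsilon$; if neither holds, return $\top$; otherwise, return $\mathit{secure}(u, q, s)$. The guard conditions involve only the fixed parameters $u$, $u'$, and the trigger/invoker components of the context $c$, none of which depend on $\mathit{db}$, so evaluating them is $O(1)$ in data complexity. The only step whose cost depends on $\mathit{db}$ is the call to $\mathit{secure}$.

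Next I would invoke \thref{theorem:secure:complexity} (which establishes that the data complexity of the $\mathit{secure}$ procedure is $\mathit{AC}^{0}$; this was already relied on in the proof of \thref{theorem:complexity:f:conf:insert:delete}). Since $q$ is part of the fixed action $a$, the rewriting $q^{\mathit{rw}}_{s,u}$ has size bounded by a constant independent of $\mathit{db}$ by \thref{theorem:rewritten:formula:linear:size}, and its evaluation reduces to relational-calculus query evaluation over $\mathit{db}$, which is in $\mathit{AC}^{0}$ (as noted in Section 6). Combining the constant-time guard check with a single $\mathit{AC}^{0}$ subroutine call yields overall data complexity $\mathit{AC}^{0}$, since $\mathit{AC}^{0}$ is closed under constant-depth composition with $O(1)$ glue.

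There is no real obstacle here: the work has already been done in establishing the complexity of $\mathit{secure}$, and the \texttt{SELECT} branch of $f^{u}_{\mathit{conf}}$ does nothing beyond dispatching to it. The only mild subtlety is confirming that the invoker/trigger lookups from the context do not introduce hidden dependence on $\mathit{db}$, which is immediate from the fact that the context $c$ is part of the fixed input to the data-complexity problem and not a function of $\mathit{db}$.
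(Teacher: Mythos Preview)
Your proposal is correct and follows essentially the same approach as the paper: fix all parameters except the database state, observe that the guard conditions on $u'$, $\mathit{trigger}(s)$, and $\mathit{invoker}(s)$ are $O(1)$ since they depend only on the fixed action and context, and then invoke \thref{theorem:secure:complexity} for the single call to $\mathit{secure}$. Your extra remarks about \thref{theorem:rewritten:formula:linear:size} and closure of $\mathit{AC}^{0}$ under constant-depth composition are not needed here (they belong to the proof of \thref{theorem:secure:complexity} itself), but they do no harm.
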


\begin{proof}
Let $M = \langle D, \Gamma\rangle$ be some fixed system configuration, $a \in {\cal A}_{D,U}$ be some fixed \texttt{SELECT} action, $u \in {\cal U}$ be some fixed  user, $U \subseteq {\cal U}$ be some fixed  set of users, $\mathit{sec} \in \Omega^{\mathit{sec}}_{U,D}$ be some fixed  policy, $T$ be some fixed  set of triggers over $D$ whose owners are in $U$, $V$ be some fixed  set of views over $D$ whose owners are in $U$, and $c$ be some fixed  context.
Furthermore, let $\mathit{db} \in \Omega_{D}^\Gamma$ be a database state such that $\langle \mathit{db},U,\mathit{sec},T,V,c\rangle \in \Omega_{M}$.
We can check whether $f_{\mathit{conf}, \texttt{S}}^{u}(\langle \mathit{db}, \\ U,\mathit{sec},T,V,c\rangle, a)) = \top$ as follows.
\begin{compactenum}
\item If $\mathit{trigger}(s) = \epsilon$ and $a \not\in {\cal A}_{D,u}$, return $\top$.
\item If $\mathit{trigger}(s) \neq \epsilon$ and $\mathit{invoker}(s) \neq u$, return $\top$.
\item Compute $\mathit{secure}(u,\phi,s)$ and return its result.
\end{compactenum}
The complexity of the first and second steps is $O(1)$.
From \thref{theorem:secure:complexity}, it follows that the data complexity of the third step is \complexity{}. 
From this, it follows that the data complexity of $f_{\mathit{conf},\texttt{S}}^{u}$ procedure is \complexity{}.
\end{proof}

\begin{lemma}\thlabel{theorem:secure:complexity}
The data complexity of $\mathit{secure}$ is \complexity{}.
\end{lemma}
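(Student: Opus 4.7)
The plan is to exploit the fact that, under data complexity, every parameter of $\mathit{secure}$ other than the database state $\mathit{db}$ is fixed: the user $u$, the sentence $\phi$, the schema $D$, the integrity constraints $\Gamma$, the set of users $U$, the policy $\mathit{sec}$, the triggers $T$, the views $V$, and the context $c$ are all held constant. Therefore the rewritten sentence $\phi^{\mathit{rw}}_{s,u}$ itself is a \emph{fixed} sentence of bounded size, and the only remaining work, given $\mathit{db}$, is to evaluate one fixed relational calculus sentence on $\mathit{db}$ and negate the Boolean outcome.

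First I would argue that the rewriting does not depend on $\mathit{db}$. Inspecting the definitions, the construction of $\phi^{\top}_{s,u}$ and $\phi^{\bot}_{s,u}$ only queries: (i) the extended vocabulary $\mathit{extVocabulary}(D,V)$, which depends only on the (fixed) schema and view set; (ii) the sets $R^{\top}_{s}$, $R^{\bot}_{s}$, obtained through the containment relation $\subseteq_{M}$, whose defining rules (Figure~\ref{figure:containment}) mention only $D$, $\Gamma$, and $V$; (iii) the set $\mathit{AUTH}_{s,u}^{*}$, which is computed from $\mathit{sec}$, $U$, $D$, and $V$; and (iv) the auxiliary predicate $\mathit{bound}$, which is a recursion over the formula structure driven by the sets above. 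None of these touches $\mathit{db}$. Consequently, given the fixed parameters, $\phi^{\mathit{rw}}_{s,u} = \mathit{inline}_{V,D}(\neg \psi^{\top}_{s,u} \wedge \psi^{\bot}_{s,u})$ is a fixed first-order sentence, whose size is at most $2(|\mathit{extVoc}(D,V)|+1)\,|\phi|$ by \thref{theorem:rewritten:formula:linear:size} (composed once more with the blow-up from $\mathit{extVoc}$ and $\mathit{inline}$, which are also both independent of $\mathit{db}$).

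Next I would reduce to classical query evaluation. Since $\phi^{\mathit{rw}}_{s,u}$ is a fixed, domain-independent (by \thref{theorem:rewriting:domain:independence}, assuming the standard hypothesis that $\phi$ and the views in $V$ are allowed) first-order sentence over $D \cup V$, evaluating it on $\mathit{db}$ reduces, after materialising the fixed views in $V$ using the fixed view definitions, to evaluating a fixed relational calculus sentence on $\mathit{db}$. It is a standard result of finite model theory (see~\cite{abiteboul1995foundations}) that such evaluation lies in data complexity $\mathit{AC}^{0}$. The call $\mathit{secure}(u,\phi,s)$ then returns $\top$ iff this single Boolean outcome is $\bot$, which is an $O(1)$ post-processing step. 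Thus the overall data complexity of $\mathit{secure}$ is $\mathit{AC}^{0}$.

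The main obstacle in this plan is not algorithmic but bookkeeping: one has to verify carefully that \emph{every} auxiliary construct invoked by the definition of $\phi^{\mathit{rw}}_{s,u}$ (query containment, projection-view construction, the $\mathit{bound}$ recursion, $\mathit{extVoc}$, and $\mathit{inline}$) is genuinely a function of the fixed parameters alone, so that the entire rewriting constitutes a precomputable constant with respect to the database. Once that separation is made explicit, the $\mathit{AC}^{0}$ bound follows immediately from the classical data-complexity bound for evaluating a fixed first-order sentence.
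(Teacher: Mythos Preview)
Your proposal is correct and follows essentially the same approach as the paper: both argue that the rewriting $\phi^{\mathit{rw}}_{s,u}$ depends only on the fixed parameters (schema, views, policy, formula) and not on $\mathit{db}$, so it can be precomputed as a constant, after which the only remaining work is evaluating a fixed first-order sentence on $\mathit{db}$, which is $\mathit{AC}^{0}$ by the standard result. Your version is slightly more explicit about domain independence and the role of $\mathit{bound}$, but the structure and key observations are identical.
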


\begin{proof}
Let $M = \langle D, \Gamma\rangle$ be some fixed system configuration, $\phi$ be some fixed sentence, $u \in {\cal U}$ be some fixed  user, $U \subseteq {\cal U}$ be some fixed  set of users, $\mathit{sec} \in \Omega^{\mathit{sec}}_{U,D}$ be some fixed  policy, $T$ be some fixed  set of triggers over $D$ whose owners are in $U$, $V$ be some fixed  set of views over $D$ whose owners are in $U$, and $c$ be some fixed  context.
Furthermore, let $\mathit{db} \in \Omega_{D}^\Gamma$ be a database state such that $\langle \mathit{db},U,\mathit{sec},T,V,c\rangle \in \Omega_{M}$.
We denote by $s$ the state $\langle \mathit{db},U,\mathit{sec},T,V,c\rangle$.
We can check whether $\mathit{secure}(u,\phi,\langle \mathit{db},U,\mathit{sec},T,V,c \rangle ) = \top$ as follows:
\begin{compactenum}
\item Compute the formula $\phi^{\mathit{rw}}_{s,u}$.
\item Compute $[\phi^{\mathit{rw}}_{s,u}]^{\mathit{db}}$. 
\item $\mathit{secure}(u,\phi,\langle \mathit{db},U,\mathit{sec},T,V,c \rangle ) = \top$ iff $[\phi^{\mathit{rw}}_{s,u}]^{\mathit{db}} = \bot$. 
\end{compactenum}
We claim that the first step can be done in constant time in terms of data complexity.
It is well-known that the data complexity of query execution is \complexity{} \cite{abiteboul1995foundations}.
From this, it follows that the data complexity of $\mathit{secure}$ is also \complexity{}.

We now prove our claim that computing the formula $\phi^{\mathit{rw}}_{s,u}$ can be done in constant time in terms of data complexity.
The extended vocabulary $\mathit{extVocabulary}(D,V)$ does not depend on the database state.
From this and the definition of $R^{v}_{s}$, where $R$ is a predicate symbol and $v \in \{\top, \bot\}$, the set $R^{v}_{s}$ (and the time needed to compute it) depends just on the database schema $D$ and the set of views $V$.
The set $\mathit{AUTH}_{s,u}$ and the time needed to compute it depend just on the size of the policy $\mathit{sec}$.
Furthermore, the time needed to compute $\mathit{AUTH}_{s,u}^{*}$ depends just on the size of the policy $\mathit{sec}$ and of the extended vocabulary.
Therefore, for any predicate $R$, the set $R^{v}_{s}$ can be computed in constant time in terms of database size.
The computation of the formula $\phi'$, obtained by replacing sub-formulae of the form $\exists \overline{x}. R(\overline{x},\overline{y})$ with the corresponding predicates in the extended vocabulary, can be done in linear time in terms of $|\phi|$ and in constant time in terms of $|\mathit{db}|$.
Note that the size of the resulting formula is linear in  $|\phi|$.
It is easy to see that also computing $\phi^{\top}_{s,u}$ and $\phi^{\bot}_{s,u}$ can be done in linear time in terms of $|\phi|$ and in constant time in terms of $|\mathit{db}|$.
As shown in \thref{theorem:rewritten:formula:linear:size},  the size of the resulting formula is linear in  $|\phi|$. 
Finally, we can replace the predicates in the extended vocabulary with the corresponding sub-formulae again in linear time in terms of $|\phi|$.
Note that, again, the size of the resulting formula is linear in  $|\phi|$.
Therefore, the overall rewriting process can be done in linear time in the size of $\phi$ and in constant time in the size of $\mathit{db}$.
\end{proof}

\clearpage
\section{Composition}\label{app:composition}

Here, we model the \acf{} $f$, presented in Section \ref{sect:enf:alg}, which is obtained by composing the \acf{}s  $f_{\mathit{int}}$ and $f_{\mathit{conf}}^{u}$ presented above.
The \acf{} $f$ is obtained by composing $f_{\mathit{int}}$ and $f_{\mathit{conf}}^{u}$ as follows:
\[
f(s,\mathit{act}) = f_{\mathit{int}}(s, \mathit{act}) \wedge f_{\mathit{conf}}^{\mathit{user}(\mathit{act},s)}(s,\mathit{act})
\]

The function $\mathit{user}$ takes as input an action and a state and returns the actual user executing the action.
It is defined as follows, where $i$ denotes the $\mathit{invoker}$ function and $\mathit{tr}$ denotes the $\mathit{trigger}$ function.
\[
\mathit{user}(\mathit{act},s) = \left\{ 
  \begin{array}{l l}
  \mathit{i}(s) & \text{if } \mathit{tr}(s) \neq \epsilon \\
   u & \text{if } \mathit{tr}(s) = \epsilon \;\text{and}\; \mathit{act} \in {\cal A}_{D,u}\\
  \end{array}\right.
\]

We now show our main results, namely that (1) $f$ provides both \correctness{} and \confidentiality{}, and (2) $f$'s data complexity is \complexity{}.

\begin{theorem}
Let $M$ be a system configuration, $f$ be as above, and $P=  \langle M, f \rangle$ be an \accessControlConfiguration{}.
\begin{compactenum}
\item For any user $u \in {\cal U}$, the \acf{} $f$ provides \confidentiality{} with respect to $\attMod$, $P$, and $u$.
\item The \acf{} $f$ provides \correctness{} with respect to $P$.
\end{compactenum}
\end{theorem}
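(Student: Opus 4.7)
The composition is an intersection of authorization decisions, which only restricts the system's behaviour compared to the component PDPs. My plan is to establish both properties by reducing them to the already-proved component theorems, with the main obstacle being that confidentiality refers to an equivalence class defined over runs of the $P$-LTS rather than either component LTS, so I cannot quote Theorem \ref{theorem:f:conf:confidentiality} as a black box.

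For database integrity, the reduction is immediate: since $f(s,a) = \top$ implies $f_{\mathit{int}}(s,a) = \top$, any transition in the $P$-LTS with $\mathit{secEx}(s') = \bot$ was authorized by $f_{\mathit{int}}$. Lemmas \ref{theorem:eopsec:enforcement:second:variant} and \ref{theorem:eopsec:enforcement:third:variant} are stated at the level of arbitrary $M$-states (not reachable states of a specific LTS), so they apply directly and yield $s \auth a$ (or $s \auth t$ for triggers). Consequently the integrity argument of Appendix \ref{app:enforcement:eopsec} transfers verbatim.

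For data confidentiality, I would replay the inductive proof of Theorem \ref{theorem:f:conf:confidentiality} step by step, but with $P$-LTS indistinguishability classes in place of those of the $f_{\mathit{conf}}^u$-LTS. The key preliminary lemma to re-establish is a soundness statement analogous to \thref{theorem:f:conf:soundness}: if $\mathit{pState}(s)  \cong_{u,M}^{\mathit{data}} \mathit{pState}(s')$ with matching invoker and pending trigger, then $f(s,a) = f(s',a)$. The $f_{\mathit{conf}}^u$ component is covered by \thref{theorem:f:conf:soundness}. For $f_{\mathit{int}}$, \thref{theorem:f:int:actions:independent:from:db} immediately handles actions in ${\cal A}_{D,{\cal U}}$ outside a trigger; the genuine difficulty is the trigger-action branch, where $\auth^{\mathit{appr}} t$ consults the database through the trigger's \texttt{WHEN} condition. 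The argument here is that at the point where $f_{\mathit{int}}$ is queried on $\mathit{trigAct}(s)$, the corresponding condition was already authorized by the same $f$ in the preceding step; hence the condition is secure, evaluates identically on $s$ and $s'$, and selects the same rule (\emph{\texttt{EXECUTE TRIGGER}-1/2} versus \emph{-3}) on both. Once this soundness lemma is in hand, the analogues of \thref{theorem:f:conf:pec:1} and \thref{theorem:f:conf:pec:2} go through with the same case analysis as in Appendix \ref{app:enforcement:ibsec}; the only extra bookkeeping is verifying that the $f_{\mathit{int}}$ branch never rejects one of two indistinguishable runs while admitting the other (which is exactly what the soundness lemma guarantees).

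With the preservation-of-equivalence-class lemmas available for $f$, the induction over derivation length in the proof of \thref{theorem:f:conf:confidentiality} is structurally identical: each base-case rule is handled by invoking $f(s,a) = \top$ together with the definition of $f_{\mathit{conf}}^u$ and \thref{theorem:secure:sound:under:approximation}; each inductive propagation rule appeals to \thref{theorem:secure:extend:on:runs:insert:delete}, \thref{theorem:secure:extend:on:runs:select:create}, \thref{theorem:secure:extend:on:runs:grant:revoke}, or \thref{theorem:secure:extend:on:runs:triggers}, now instantiated at the $P$-LTS. Finally, the data complexity claim in Theorem \ref{theorem:enforcement:security} follows from Theorems \ref{theorem:integrity:complexity} and \ref{theorem:confidentiality:complexity}, since $\mathit{AC}^{0}$ is closed under conjunction with a constant-time predicate. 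The expected main obstacle is therefore the trigger-action case of the soundness lemma for $f_{\mathit{int}}$, where I must use secure evaluation of the \texttt{WHEN} condition (a confidentiality fact) to prove an integrity-side agreement.
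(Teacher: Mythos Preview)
Your plan is correct and matches the paper's approach: integrity via \thref{theorem:eopsec:enforcement:second:variant} and \thref{theorem:eopsec:enforcement:third:variant} (the paper's \thref{theorem:composition:correctness}), and confidentiality by replaying the induction of \thref{theorem:f:conf:confidentiality} over the $P$-LTS after re-establishing the soundness and preservation-of-equivalence-class lemmas (the paper's \thref{theorem:f:composition:soundness}, \thref{theorem:f:composition:pec:1}, \thref{theorem:f:composition:pec:2}).

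One refinement: the soundness lemma cannot be stated as the unconditional ``$f(s,a)=f(s',a)$ for data-indistinguishable $s,s'$ with matching invoker and pending trigger'' that you write. In the trigger-action branch $f_{\mathit{int}}$ tests $s \auth^{\mathit{appr}} t$, which evaluates $[\psi]^{s.\mathit{db}}$; this can differ on data-indistinguishable states when $\psi$ mentions tables $u$ cannot read, so the unconditional statement is simply false. The paper's \thref{theorem:f:composition:soundness} therefore splits into three cases and, for the trigger-action case, takes $[\psi]^{s.\mathit{db}}=[\psi]^{s'.\mathit{db}}=\top$ as an explicit hypothesis. That hypothesis is then discharged inside the run-level preservation proof \thref{theorem:f:composition:pec:2}, using exactly the argument you sketch: the condition was already authorized by $f$, hence by $f_{\mathit{conf}}^u$, hence $\mathit{secure}(u,\psi,\cdot)$ holds, hence $\psi$ evaluates identically on the two indistinguishable runs. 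So the confidentiality fact about $\psi$ belongs to the equivalence-class lemma, not to the state-local soundness lemma; you have the right argument, only the lemma boundary needs to move.
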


\begin{proof}
It follows from \thref{theorem:composition:correctness} and \thref{theorem:composition:confidentiality}.
\end{proof}

\begin{theorem}\thlabel{theorem:composition:complexity}
The data complexity of $f$ is  \complexity{}. 
\end{theorem}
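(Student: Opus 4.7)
The plan is to reduce the statement to the two data-complexity results already established for the components of $f$, namely \thref{theorem:integrity:complexity} for $f_{\mathit{int}}$ and \thref{theorem:confidentiality:complexity} for $f_{\mathit{conf}}^{u}$. Concretely, given a fixed system configuration $M = \langle D, \Gamma \rangle$, a fixed action $\mathit{act}$, and fixed non-database components $U, \mathit{sec}, T, V, c$, I would consider the decision problem whose input is a database state $\mathit{db}$ with $\langle \mathit{db}, U, \mathit{sec}, T, V, c \rangle \in \Omega_{M}$ and whose question is whether $f(\langle \mathit{db}, U, \mathit{sec}, T, V, c \rangle, \mathit{act}) = \top$.

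First, I would observe that the identity of the user $u := \mathit{user}(\mathit{act}, s)$ does not depend on the database state at all: by its definition, $\mathit{user}(\mathit{act}, s)$ is determined solely by $\mathit{trigger}(s)$, $\mathit{invoker}(s)$, and the user field of $\mathit{act}$, all of which are extracted from the context $c$ and the action $\mathit{act}$, both of which are fixed under data complexity. Hence $u$ is effectively a constant, and the problem reduces to evaluating $f_{\mathit{int}}(s, \mathit{act}) \wedge f_{\mathit{conf}}^{u}(s, \mathit{act})$ for this fixed $u$.

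Next, I would invoke \thref{theorem:integrity:complexity} to conclude that computing $f_{\mathit{int}}(s, \mathit{act})$ takes constant time in $|\mathit{db}|$, and \thref{theorem:confidentiality:complexity} to conclude that computing $f_{\mathit{conf}}^{u}(s, \mathit{act})$ can be done in $\mathit{AC}^{0}$ in $|\mathit{db}|$. Since $O(1) \subseteq \mathit{AC}^{0}$ and $\mathit{AC}^{0}$ is closed under finite Boolean combinations (conjoining two $\mathit{AC}^{0}$ circuits yields an $\mathit{AC}^{0}$ circuit by adding a single AND gate at the output), the composed decision procedure is itself in $\mathit{AC}^{0}$.

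No step here is substantively difficult, since the heavy lifting has been done in the two component theorems; the only point that deserves a brief justification is the observation that $\mathit{user}(\mathit{act}, s)$ may be treated as constant data-complexity-wise, so that the $u$ parameterizing $f_{\mathit{conf}}^{u}$ is fixed and \thref{theorem:confidentiality:complexity} applies directly rather than requiring a uniform-in-$u$ version.
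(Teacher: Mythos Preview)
Your proposal is correct and follows essentially the same approach as the paper: reduce to \thref{theorem:integrity:complexity} and \thref{theorem:confidentiality:complexity}, then combine. If anything, you are slightly more careful than the paper's own proof, which simply asserts that $f$'s data complexity is the maximum of the two components without explicitly justifying that $\mathit{user}(\mathit{act},s)$ is independent of the database state; your observation that $u$ is determined by the (fixed) context and action, so that \thref{theorem:confidentiality:complexity} applies directly to a fixed $f_{\mathit{conf}}^{u}$, fills in exactly that small gap.
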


\begin{proof}
From $f$'s definition, it follows that $f$'s data complexity is the maximum complexity between $f_{\mathit{conf}}^{u}$'s complexity and $f_{\mathit{int}}$'s complexity.
From this,  \thref{theorem:integrity:complexity}, and \thref{theorem:confidentiality:complexity}, it follows that the data complexity of $f$ is \complexity{}.
\end{proof}

\subsection{\Correctness{}}

Here, we show that $f$ provides \correctness{}.

\begin{lemma}\thlabel{theorem:composition:correctness}
Let $M = \langle D, \Gamma \rangle$ be a system configuration, $f$ be as above, and $P=  \langle M, f \rangle$ be an \accessControlConfiguration{}.
The \acf{} $f$ provides \correctness{} with respect to $P$.
\end{lemma}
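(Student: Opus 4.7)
\begin{proofsketch}
The plan is to reduce the claim for $f$ to the already established \correctness{} result for $f_{\mathit{int}}$ (\thref{theorem:eopsec:enforcement:second} and \thref{theorem:eopsec:enforcement:third}). The key observation is that, by its very definition, $f$ is pointwise at most as permissive as $f_{\mathit{int}}$: for every state $s$ and every $a \in {\cal A}_{D,{\cal U}} \cup {\cal TRIGGER}_{D}$,
\[
    f(s, a) = \top \;\Longrightarrow\; f_{\mathit{int}}(s, a) = \top.
\]
This follows immediately from the conjunction in the definition $f(s,a) = f_{\mathit{int}}(s,a) \wedge f_{\mathit{conf}}^{\mathit{user}(a,s)}(s,a)$.

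Next, I would unfold the definition of \correctness{}. Fix a reachable state $s$ of the $P$-LTS and an action $a \in {\cal A}_{D,{\cal U}} \cup {\cal TRIGGER}_{D}$, and suppose $s'$ is reachable from $s$ in one step via $a$ with $\mathit{secEx}(s') = \bot$. Inspecting the rules of $\rightarrow_{f}$ (Figures~\ref{table:rules:lts:1}--\ref{table:rules:lts:8}), the absence of a security exception in $s'$ forces the premise $f(s, a) = \top$ to have been invoked by whichever rule produced the transition. By the observation above, this yields $f_{\mathit{int}}(s, a) = \top$.

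Now I would distinguish two cases according to whether $a$ is an ordinary action in ${\cal A}_{D,{\cal U}}$ or a trigger in ${\cal TRIGGER}_{D}$. In the first case, $\mathit{trigger}(s) = \epsilon$, so \thref{theorem:eopsec:enforcement:second:variant} applies directly: from $f_{\mathit{int}}(s, a) = \top$ and $\mathit{trigger}(s) = \epsilon$ one concludes $s \auth a$. In the second case, $a$ is a trigger being executed by the scheduler; the LTS has fired the \texttt{WHEN}-condition check $c = \mathit{trigCond}(s)$ and the trigger's action $a' = \mathit{trigAct}(s)$, both of which were authorised by $f$ and therefore by $f_{\mathit{int}}$. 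Applying \thref{theorem:eopsec:enforcement:third:variant}\,(1) or (2) depending on whether the \texttt{WHEN} condition evaluates to $\bot$ or to $\top$, one obtains $s \auth a$ in either subcase.

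I do not expect any substantive obstacle. The only subtlety is that the lemmas proved for $f_{\mathit{int}}$ were stated with respect to the configuration $\langle M, f_{\mathit{int}}\rangle$, while here we need the same conclusions under the LTS of $\langle M, f \rangle$; but those lemmas really only use the pointwise fact ``$f_{\mathit{int}}(s,a) = \top$ implies $s \auth a$'' (via the sound under-approximation $\auth^{\mathit{appr}}$ and \thref{theorem:eopsec:enforcement:auth:approx}) together with a case analysis on $\mathit{trigger}(s)$. Both ingredients are purely local to the state $s$ and the action $a$ and are unaffected by which \acf{} drove the system into $s$. Hence the reduction is clean and the proof will be a short case distinction followed by a direct application of the earlier results.
\end{proofsketch}
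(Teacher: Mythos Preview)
Your proposal is correct and follows essentially the same approach as the paper: both use the pointwise implication $f(s,a)=\top \Rightarrow f_{\mathit{int}}(s,a)=\top$, then case-split on ordinary actions versus triggers (the latter further split by whether the \texttt{WHEN} condition holds), and conclude via \thref{theorem:eopsec:enforcement:second:variant} and \thref{theorem:eopsec:enforcement:third:variant}. You even flag a subtlety the paper's proof handles implicitly---that those lemmas are stated for the configuration $\langle M, f_{\mathit{int}}\rangle$ but their content is purely local to $s$, $a$, and $\auth^{\mathit{appr}}$---and correctly argue it is harmless; the one detail worth making explicit when you write it out is that in the enabled-trigger case the action is checked at an intermediate state $s'$ differing from $s$ only in the context's history, which $f_{\mathit{int}}$ ignores.
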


\begin{proof}
We prove the lemma by contradiction.
Assume, for contradiction's sake, that $f$ does not satisfy the \correctness{} property.
There are three cases:
\begin{compactitem}

\item there is a reachable state $s$ and an action $\mathit{act} \in {\cal A}_{D, {\cal U}}$ such that $\mathit{trigger}(s) = \epsilon$, $f(s,\mathit{act}) = \top$, and $s \not\auth \mathit{act}$. From $f(s,\mathit{act}) = \top$, it follows that 
$f_{\mathit{int}}(s,\mathit{act}) = \top$. From this fact, $\mathit{trigger}(s) = \epsilon$, and \thref{theorem:eopsec:enforcement:second:variant}, it follows $s \auth \mathit{act}$, which leads to a contradiction.

\item there is a reachable state $s$ and a trigger $t \in {\cal TRIGGER}_{D}$ such that $\mathit{trigger}(s) = t$, $f(s,c) = \top$, $[\psi]^{s.\mathit{db}} = \bot$, and $s \not\auth t$, where $c = \langle u, \mathtt{SELECT}, \psi \rangle$ is $t$'s condition. From $f(s,c) = \top$, it follows that $f_{\mathit{int}}(s,c) = \top$. 
From $f_{\mathit{int}}(s,c) = \top$, $[\psi]^{s.\mathit{db}} = \bot$, $\mathit{trigger}(s) = t$, and \thref{theorem:eopsec:enforcement:third:variant}, it follows $s \auth t$, which leads to a contradiction.

\item there is a reachable state $s$ and a trigger $t \in {\cal TRIGGER}_{D}$ such that $\mathit{trigger}(s) = t$, $f(s,c) = \top$, $[\psi]^{s.\mathit{db}} = \top$, $f(s',a ) = \top$, and $s \not\auth t$, where $c = \langle u, \mathtt{SELECT}, \psi \rangle$ is $t$'s condition, $a$ is $t$'s action, and $s'$ is the state obtained from $s$ by updating the context's history. From $f(s',a) = \top$, it follows that  $f_{\mathit{int}}(s',a) = \top$.
Since $s$ and $s'$ are equivalent modulo the context's history and $f_{\mathit{int}}$ does not depend on the context's history, it follows that $f_{\mathit{int}}(s,a) = \top$.
From $f_{\mathit{int}}(s,c) = \top$, $[\psi]^{s.\mathit{db}} = \top$, $f_{\mathit{int}}(s,a) = \top$, $\mathit{trigger}(s) = t$, and \thref{theorem:eopsec:enforcement:third:variant}, it follows $s \auth t$, which leads to a contradiction.

\end{compactitem}
This completes the proof.
\end{proof}

\begin{lemma}\thlabel{theorem:composition:policy:consistency}
Let $P = \langle M, f \rangle$ be an \accessControlConfiguration{}, where $M = \langle D, \Gamma \rangle$ is a system configuration and $f$ is as above, and $L$ be the $P$-LTS.
For each reachable state $s = \langle \mathit{db}, U, \mathit{sec}, T, V, c\rangle$, $s \auth g$ for all $g \in \mathit{sec}$.
\end{lemma}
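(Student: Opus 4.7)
The plan is to reduce this statement to \thref{theorem:integrity:policy:consistency}, which proves exactly the same property for the \acf{} $f_{\mathit{int}}$ alone, by exploiting the fact that $f$ is strictly more restrictive than $f_{\mathit{int}}$. Since $f(s,a) = f_{\mathit{int}}(s,a) \wedge f_{\mathit{conf}}^{\mathit{user}(a,s)}(s,a)$, we have $f(s,a) \leq f_{\mathit{int}}(s,a)$ pointwise, so any action that succeeds in the $P$-LTS (with $f$) also passes the $f_{\mathit{int}}$-check, and the LTS semantics (which only differ in calls to the \acf{}) otherwise proceeds identically.

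I would proceed by structural induction on the length of the run $r$ leading to the reachable state $s$. The base case concerns initial states: by definition, these contain only grants issued by $\admin$ together with views and triggers owned by $\admin$. For each such grant $g = \langle \mathit{op}, u, p, \admin \rangle$ with $p \not\in {\cal PRIV}_{D}^{\mathtt{SELECT}, {\cal VIEW}^{\mathit{owner}}_{D}}$, rule \texttt{GRANT-2} of $\auth$ applies directly; for $p \in {\cal PRIV}_{D}^{\mathtt{SELECT}, {\cal VIEW}^{\mathit{owner}}_{D}}$, rule \texttt{GRANT-3} applies because $\admin$ is authorized to delegate \texttt{SELECT} on every table in $D$, so $\mathit{hasAccess}(s,D,\admin,\oplus^{*})$ and $\mathit{determines}_{M}(D,\emptyset,q)$ hold trivially for any view query $q$. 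This is identical to the base case of \thref{theorem:integrity:policy:consistency}.

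For the induction step, write $r = r^{|r|-1} \concat \mathit{act} \concat \mathit{last}(r)$. I would mirror the case analysis of \thref{theorem:integrity:policy:consistency}, noting at every step that success under $f$ implies success under $f_{\mathit{int}}$, so the argument carries over verbatim. In particular: (i) if $\mathit{act}$ causes a roll-back, then $\mathit{pState}(\mathit{last}(r)) = \mathit{pState}(\mathit{last}(r^{i}))$ for some earlier reachable state, and the induction hypothesis together with \thref{theorem:f:int:actions:independent:from:db} gives the conclusion; (ii) for non-policy-altering actions (\texttt{SELECT}, \texttt{INSERT}, \texttt{DELETE}, \texttt{ADD\_USER}, \texttt{CREATE}) the policy is unchanged while $U$, $T$, $V$ grow monotonically, and $\auth$ is monotone in these components for \texttt{GRANT}-rules, so existing authorizations persist; (iii) for \texttt{GRANT} actions, success under $f$ implies $f_{\mathit{int}}(\mathit{last}(r^{|r|-1}), \mathit{act}) = \top$, hence $\mathit{last}(r^{|r|-1}) \auth^{\mathit{appr}} \mathit{act}$ and by \thref{theorem:eopsec:enforcement:auth:approx} $\mathit{last}(r^{|r|-1}) \auth \mathit{act}$; combined with monotonicity of $\auth$ in $\mathit{sec}$ for \texttt{GRANT} rules, all grants in $\mathit{last}(r).\mathit{sec}$ remain authorized; (iv) for \texttt{REVOKE}, success under $f$ implies $s' \auth^{\mathit{appr}} g$ for every $g$ in the resulting policy (by the $\auth^{\mathit{appr}}$ \texttt{REVOKE} rule used inside $f_{\mathit{int}}$), and \thref{theorem:eopsec:enforcement:auth:approx} promotes this to $s' \auth g$; (v) trigger executions decompose into one of the previous cases according to the trigger's action.

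There is essentially no obstacle here: the content of the lemma is a direct corollary of \thref{theorem:integrity:policy:consistency} together with the trivial observation $f \leq f_{\mathit{int}}$. The only small care required is in the trigger case, where the context $c$ changes across the \texttt{WHEN}-evaluation state but the policy does not, so \thref{theorem:f:int:actions:independent:from:db} lets us transfer $\auth$-judgments across the intermediate state freely. No new ideas beyond those already used in \thref{theorem:integrity:policy:consistency} are needed.
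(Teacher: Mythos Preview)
Your proposal is correct and matches the paper's approach exactly: the paper's proof is the single line ``The proof is very similar to that of \thref{theorem:integrity:policy:consistency},'' and your write-up is a faithful unpacking of that, driven by the key observation that $f(s,a) = \top$ implies $f_{\mathit{int}}(s,a) = \top$, so every case of the induction in \thref{theorem:integrity:policy:consistency} goes through unchanged for the $f$-LTS.
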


\begin{proof}
The proof is very similar to that of \thref{theorem:integrity:policy:consistency}.
\end{proof}

\subsection{\Confidentiality{}}
Here, we show that $f$ provides the desired \confidentiality{} guarantees.
First, we show that the \acf{} $f'$, defined as $f'(s,\mathit{act}) := f_{\mathit{conf}}^{\mathit{user}(\mathit{act},s)}(s,\mathit{act})$, provides \confidentiality{}.
Afterwards, we analyse the security of $f$.

In \thref{theorem:f:conf:composition:soundness} and \thref{theorem:composition:f1:confidentiality}, we prove some preliminary results about $f'$.
These results will then be used to prove $f$'s security.

\begin{lemma}\thlabel{theorem:f:conf:composition:soundness}
Let $M = \langle D, \Gamma\rangle$ be a system configuration,  $a$ be an action in ${\cal A}_{D,{\cal U}}$, and $s,s' \in \Omega_{M}$ be two $M$-states such that $\mathit{pState}(s) \cong_{\mathit{user}(\mathit{a},s),M}^{\mathit{data}} \mathit{pState}(s')$, $\mathit{invoker}(s) = \mathit{invoker}(s')$, and $\mathit{trigger}(s) = \mathit{trigger}  (s')$.
Then, $f^{\mathit{user}(\mathit{a},s)}_{\mathit{conf}}(s, \\ a) = \top$ iff $f^{\mathit{user}(\mathit{a},s')}_{\mathit{conf}}(s', a) = \top$.
\end{lemma}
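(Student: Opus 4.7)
The proof will be a direct reduction to \thref{theorem:f:conf:soundness}, which already handles the fixed-user case. The only new ingredient here is that the user parameter depends on the state, so the first task is to check that this parameter coincides on $s$ and $s'$.

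Concretely, I would start by unfolding the definition of $\mathit{user}$ given at the beginning of this appendix:
\[
\mathit{user}(\mathit{act},s) = \begin{cases} \mathit{invoker}(s) & \text{if } \mathit{trigger}(s) \neq \epsilon \\ u' & \text{if } \mathit{trigger}(s) = \epsilon \text{ and } \mathit{act} \in {\cal A}_{D,u'}. \end{cases}
\]
The second case depends only on the action $a$, not on the state, so it gives the same value on $s$ and $s'$. The first case depends on $\mathit{invoker}(s)$ and on whether $\mathit{trigger}(s) = \epsilon$; both of these are preserved by the hypotheses $\mathit{invoker}(s) = \mathit{invoker}(s')$ and $\mathit{trigger}(s) = \mathit{trigger}(s')$. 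Hence $\mathit{user}(a,s) = \mathit{user}(a,s')$; call this common value $u$.

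With $u$ now fixed, the hypothesis $\mathit{pState}(s) \cong^{\mathit{data}}_{u,M} \mathit{pState}(s')$ together with $\mathit{invoker}(s) = \mathit{invoker}(s')$ and $\mathit{trigger}(s) = \mathit{trigger}(s')$ are exactly the premises of \thref{theorem:f:conf:soundness}. Applying that lemma directly yields $f^{u}_{\mathit{conf}}(s,a) = f^{u}_{\mathit{conf}}(s',a)$, which, after resubstituting $u = \mathit{user}(a,s) = \mathit{user}(a,s')$, is precisely the claim.

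There is essentially no obstacle here beyond carefully verifying the case split in the definition of $\mathit{user}$; the real work has already been done in \thref{theorem:f:conf:soundness}. The only subtle point worth highlighting in the write-up is that $\cong^{\mathit{data}}_{u,M}$ is used with the user $u = \mathit{user}(a,s)$, so one must be sure that this is the same user against which the premises of \thref{theorem:f:conf:soundness} are stated, but this is immediate from step one above.
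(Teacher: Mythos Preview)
Your proposal is correct and matches the paper's proof essentially step for step: first establish $\mathit{user}(a,s)=\mathit{user}(a,s')$ by the case split on $\mathit{trigger}(s)$, then invoke \thref{theorem:f:conf:soundness} for the common user $u$. There is nothing to add.
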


\begin{proof}
Let $s = \langle \mathit{db}, U, \mathit{sec}, T,V,c\rangle$ and $s' = \langle \mathit{db}', U', \mathit{sec}', \\  T',V',c'\rangle$ be two $M$-states such that $\mathit{pState}(s) \cong_{M,\mathit{user}(\mathit{a},s)}^{\mathit{data}} \mathit{pState}(s')$, $\mathit{invoker}(s) = \mathit{invoker}(s')$, and $\mathit{trigger}(s) = \mathit{trigger}  (s')$.

We first show that $\mathit{user}(\mathit{a},s) = \mathit{user}(\mathit{a},s')$.
Since $\mathit{trigger}(s)\\ = \mathit{trigger}(s')$, there are two cases:
\begin{compactitem}
\item $\mathit{trigger}(s) = \epsilon$.
In this case, the result of $\mathit{user}(\mathit{a},s)$ depends just on $\mathit{a}$. 
Therefore, $\mathit{user}(\mathit{a},s) = \mathit{user}(\mathit{a},s')$.

\item $\mathit{trigger}(s) \neq \epsilon$. 
In this case, $\mathit{user}(\mathit{a},s) = \mathit{invoker}(s)$ and $\mathit{user}(\mathit{a},s') = \mathit{invoker}(s')$. 
From $\mathit{invoker}(s) = \\ \mathit{invoker}(s')$, it follows that  $\mathit{user}(\mathit{a},s) = \mathit{user}(\mathit{a},s')$. 
\end{compactitem}

Let $u$ be the user $\mathit{user}(\mathit{a},s)$.
From \thref{theorem:f:conf:soundness}, it follows that $f^{u}_{\mathit{conf}}(s,  a) = f^{u}_{\mathit{conf}}(s', a)$.
This completes the proof.
\end{proof}

\begin{lemma}\thlabel{theorem:composition:f1:confidentiality}
Let $M$ be a system configuration, $f'$ be as above, and $P=  \langle M, f' \rangle$ be an \accessControlConfiguration{}.
For any user $u \in {\cal U}$, the \acf{} $f'$ satisfies the \confidentiality{} property with respect to $P$, $u$, $\attackerModel$, and $\cong_{P,u}$.
\end{lemma}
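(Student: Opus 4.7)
The plan is to carry out essentially the same structural induction on the length of the derivation of $r, i \attMod \phi$ that was used in the proof of \thref{theorem:f:conf:confidentiality}. The central observation is that every inference rule of $\attackerModel$ refers exclusively to actions issued by $u$ or to triggers whose invoker is $u$; the attacker never reasons about the internal effect of transitions attributable to other users. Consequently the induction only ever needs to invoke $f'$'s behaviour on $u$-attributable transitions, and for any such transition we have $\mathit{user}(a, s) = u$, so by definition $f'(s, a) = f_{\mathit{conf}}^{u}(s, a)$.

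Concretely, I would first re-establish analogues of the preservation-of-equivalence-class lemmas \thref{theorem:f:conf:pec:1} and \thref{theorem:f:conf:pec:2} for $f'$ in the $P$-LTS. Their statements require that, for any action issued by $u$ or any trigger invoked by $u$, the extension of $r$ by that transition preserves the equivalence class $\llbracket r \rrbracket_{P,u}$. Since $f'$ coincides with $f_{\mathit{conf}}^{u}$ on precisely these transitions, the case analyses in the original proofs transfer verbatim once \thref{theorem:f:conf:soundness} is replaced by its newly proved analogue, \thref{theorem:f:conf:composition:soundness}, which guarantees that $f'$ produces the same decision across states that are data-indistinguishable for the actor. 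The bijection witnessing preservation is the obvious one: each run $r' \in \llbracket r \rrbracket_{P,u}$ is paired with $\mathit{extend}(r', a)$, and data-indistinguishability at the new sync point is secured by the same argument as in the $f_{\mathit{conf}}^{u}$ case. With these adapted PEC lemmas in hand, the base-case reasoning about successful and failing \texttt{SELECT}, \texttt{INSERT}, \texttt{DELETE}, and trigger-induced exceptions, as well as the inductive propagation rules, can be copied step by step from the proof of \thref{theorem:f:conf:confidentiality}.

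The one point that deserves care, and which I expect to be the main obstacle, concerns the $*$-labelled sequences of non-$u$ transitions that may occur between successive $u$-synchronisation points. These are governed by $f'$ via $f_{\mathit{conf}}^{v}$ for various $v \neq u$, so the set of reachable $P$-runs is in general incomparable with that of $\langle M, f_{\mathit{conf}}^{u}\rangle$. However, this is harmless for the confidentiality argument: the indistinguishability relation $\cong_{P,u}$ imposes no constraint on the internal structure of such sequences beyond requiring data-indistinguishability for $u$ at the next sync point, and the adapted PEC lemmas for $u$'s own actions re-establish this invariant step by step. Verifying that the bijection composes correctly across these $*$-phases is routine but notationally heavy; I expect it to unfold as a subsidiary induction on the length of the intervening $*$-phase, using only the synchronisation conditions in the definition of $\cong_{P,u}$ and the consistency supplied by \thref{theorem:f:conf:composition:soundness}.
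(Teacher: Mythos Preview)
Your proposal is correct and follows essentially the same approach as the paper: the paper's proof simply observes that Lemmas~\ref{theorem:f:conf:soundness}, \ref{theorem:secure:extend:on:runs:insert:delete}--\ref{theorem:secure:extend:on:runs:triggers}, \ref{theorem:f:conf:pec:1}, and \ref{theorem:f:conf:pec:2} hold for $f'$ as well (with \thref{theorem:f:conf:composition:soundness} playing the role of \thref{theorem:f:conf:soundness}) and then replays the induction of \thref{theorem:f:conf:confidentiality}. Your discussion of the $*$-labelled non-$u$ phases correctly identifies why the different treatment of other users' actions under $f'$ is immaterial---the attacker rules and the PEC lemmas only ever examine $u$-attributable transitions, where $f'$ and $f_{\mathit{conf}}^{u}$ coincide---and this is exactly the point the paper leaves implicit.
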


\begin{proof}
It is easy to see that Lemmas  \ref{theorem:f:conf:soundness}, \ref{theorem:secure:extend:on:runs:insert:delete}, \ref{theorem:secure:extend:on:runs:select:create}, \ref{theorem:secure:extend:on:runs:grant:revoke}, \ref{theorem:secure:extend:on:runs:triggers},  \ref{theorem:f:conf:pec:1}, and \ref{theorem:f:conf:pec:2} hold as well for $f'$.
Therefore, we can easily adapt the proof of  \thref{theorem:f:conf:confidentiality} to  $f'$.
\end{proof}
In \thref{theorem:f:composition:soundness}, we show that the \acf{} $f$ returns the same result in any two data-indistinguishable states.

\begin{lemma}\thlabel{theorem:f:composition:soundness}
Let $M = \langle D, \Gamma\rangle$ be a system configuration, $s,s' \in \Omega_{M}$ be two $M$-states such that $\mathit{pState}(s) \cong_{M,\mathit{user}(\mathit{a},s)}^{\mathit{data}} \mathit{pState}(s')$, $\mathit{tuple}(s) = \mathit{tuple}(s')$, $\mathit{invoker}(s) = \mathit{invoker}(s')$, and $\mathit{trigger}(s) = \mathit{trigger}(s')$, and $f$ be the \acf{} as above.
The following conditions hold:
\begin{compactenum}
\item If $\mathit{trigger}(s) = \epsilon$, for any action $a$ in ${\cal A}_{D,{\cal U}}$, $f(s,a) = \top$ iff $f(s', a) = \top$.

\item If $\mathit{trigger}(s) \in {\cal TRIGGER}_{D}$, $f(s,\mathit{trigCond}(s)) = \top$ iff $f(s', \mathit{trigCond}(s)) = \top$.

\item If $\mathit{trigger}(s) \in {\cal TRIGGER}_{D}$, $\mathit{trigCond}(s) = \langle u, \mathtt{SELECT}, \\ \psi\rangle$, $[\psi]^{s.\mathit{db}} = [\psi]^{s'.\mathit{db}} = \top$ , $f(s,\mathit{trigAct}(s)) = \top$ iff $f(s', \mathit{trigAct}(s')) = \top$.
\end{compactenum}

\end{lemma}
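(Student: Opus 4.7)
\begin{proofsketch}
The plan is to reduce the three claims to two already-established facts: (i) \thref{theorem:f:conf:composition:soundness}, which gives data-indistinguishability invariance for $f_{\mathit{conf}}^{u}$, and (ii) \thref{theorem:f:int:actions:independent:from:db}, which says that for any action $a \in {\cal A}_{D,{\cal U}}$, the relations $\auth$ and $\auth^{\mathit{appr}}$ depend only on the $\mathit{db}$-independent components of the state. Since $f = f_{\mathit{int}} \wedge f_{\mathit{conf}}^{\mathit{user}(a,s)}$, it suffices to show that each conjunct returns the same boolean on $s$ and $s'$ under the hypotheses, and then take the conjunction.

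For $f_{\mathit{conf}}^{\mathit{user}(\cdot,\cdot)}$, the argument used in the proof of \thref{theorem:f:conf:composition:soundness} applies verbatim: data indistinguishability together with $\mathit{invoker}(s) = \mathit{invoker}(s')$ and $\mathit{trigger}(s) = \mathit{trigger}(s')$ yields $\mathit{user}(a,s) = \mathit{user}(a,s')$, and then \thref{theorem:f:conf:soundness} closes the case. I would handle this first because it is immediate.

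For $f_{\mathit{int}}$, I would perform a case distinction on the branch of its definition that is activated, using the fact that data indistinguishability gives $U = U'$, $\mathit{sec} = \mathit{sec}'$, $T = T'$, $V = V'$, together with the matching of $\mathit{trigger}$, $\mathit{invoker}$, and $\mathit{tuple}$. In case~(1), $\mathit{trigger}(s) = \mathit{trigger}(s') = \epsilon$, so $f_{\mathit{int}}(s,a) = [s \auth^{\mathit{appr}} a]$ and similarly for $s'$; for $a \in {\cal A}_{D,{\cal U}}$, \thref{theorem:f:int:actions:independent:from:db} gives $s \auth^{\mathit{appr}} a \Leftrightarrow s' \auth^{\mathit{appr}} a$. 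In case~(2), $\mathit{trigger}(s) = \mathit{trigger}(s') \neq \epsilon$ and $\mathit{trigCond}(s) = \mathit{trigCond}(s')$ (since $\mathit{trigCond}$ depends only on the trigger and $\mathit{tuple}$, which match), so both sides return $\top$ by definition. In case~(3), similarly $\mathit{trigAct}(s) = \mathit{trigAct}(s')$, and both sides reduce to $[s \auth^{\mathit{appr}} t] = [s' \auth^{\mathit{appr}} t]$; here the rules \emph{EXECUTE TRIGGER}-1,2 require the $\mathit{db}$-dependent condition $[\phi[\overline{x}^{|R|} \mapsto \mathit{tpl}(\cdot)]]^{\mathit{db}} = \top$, which is precisely hypothesis $[\psi]^{s.\mathit{db}} = [\psi]^{s'.\mathit{db}} = \top$, and the authorization of $\mathit{act}(\mathit{stmt}, \cdot, \mathit{tpl}(\cdot))$, which is an action in ${\cal A}_{D,{\cal U}}$ and thus again covered by \thref{theorem:f:int:actions:independent:from:db}.

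The main obstacle is case~(3): $\auth^{\mathit{appr}}$ applied to a trigger is not directly covered by \thref{theorem:f:int:actions:independent:from:db} because the EXECUTE TRIGGER rules genuinely inspect $\mathit{db}$ via the $\mathtt{WHEN}$ condition. The way around this is to observe that the hypothesis $[\psi]^{s.\mathit{db}} = [\psi]^{s'.\mathit{db}} = \top$ lets us fire the same premise of the same EXECUTE TRIGGER rule on both sides, reducing the remaining premises to authorization of $\mathit{act}(\mathit{stmt},u,\mathit{tpl}(\cdot))$ for $u \in \{\mathit{owner}(t), \mathit{invoker}(s)\}$; these are plain actions in ${\cal A}_{D,{\cal U}}$, so \thref{theorem:f:int:actions:independent:from:db} applies. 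Taking the conjunction of the $f_{\mathit{int}}$ and $f_{\mathit{conf}}$ results in each of the three cases delivers the lemma.
\end{proofsketch}
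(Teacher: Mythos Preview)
Your proposal is correct and follows essentially the same approach as the paper: decompose $f$ into its two conjuncts, handle $f_{\mathit{conf}}^{\mathit{user}(a,s)}$ via \thref{theorem:f:conf:composition:soundness}, handle $f_{\mathit{int}}$ via \thref{theorem:f:int:actions:independent:from:db} on the underlying actions, and in case~(3) use the hypothesis $[\psi]^{s.\mathit{db}} = [\psi]^{s'.\mathit{db}} = \top$ to discharge the $\mathit{db}$-dependent premise of the \texttt{EXECUTE TRIGGER} rules before reducing the remaining premises (split on security mode $A$/$O$) to authorization of actions in ${\cal A}_{D,{\cal U}}$. The paper phrases each case as a proof by contradiction rather than a direct equivalence, but the structure, the key lemmas invoked, and the handling of the obstacle in case~(3) are the same.
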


\begin{proof}
We prove our three claims by contradiction.

\begin{compactenum}
\item Assume, for contradiction's sake, that there are two states $s$ and $s'$ and an action $a$ such that $\mathit{trigger}(s) = \mathit{trigger}(s') = \epsilon$, $\mathit{pState}(s) \cong_{\mathit{user}(\mathit{a},s),M}^{\mathit{data}} \mathit{pState}(s')$, $f(s, a) \\ = \top$, and $f(s', a) = \bot$.
From $f$'s definition, $f(s, a) = \top$, $f(s', a) = \bot$, and \thref{theorem:f:conf:composition:soundness}, it follows that $f_{\mathit{int}}(s,a) \\ =  \top$, $f_{\mathit{int}}(s',a) = \bot$, and $f_{\mathit{conf}}^{\mathit{user}(a,s)}(s,a)=f_{\mathit{conf}}^{\mathit{user}(a,s')}(s', \\ a)=\top$.
From this, it follows that $s' \not \auth^{\mathit{approx}} a$.
From $f_{\mathit{int}}(s,a) = \top$, it follows $s \auth^{\mathit{approx}} a$.
From this, $a \in {\cal A}_{D, {\cal U}}$, and \thref{theorem:f:int:actions:independent:from:db}, it follows $s' \auth^{\mathit{approx}} a$, which contradicts $s' \not \auth^{\mathit{approx}} a$.
This completes the proof for the first claim.

\item Assume, for contradiction's sake, that there are two states $s$ and $s'$ such that $\mathit{trigger}(s) = \mathit{trigger}(s')$, $\mathit{trigger} \\ (s) \neq \epsilon$, $\mathit{pState}(s) \cong_{\mathit{user}(\mathit{a},s),M}^{\mathit{data}} \mathit{pState}(s')$, $f(s,a) = \top$, and $f(s', a) =  \bot$, where $\mathit{trigCond}(s)= \mathit{trigCond}(s') = a$.
From $f$'s definition, $f(s, a)  = \top$, $f(s', a) = \bot$, and \thref{theorem:f:conf:composition:soundness},  it follows that $f_{\mathit{int}}(s,a) = \top$, $f_{\mathit{int}}(s',a) = \bot$, and $f_{\mathit{conf}}^{\mathit{user}(a,s)}  (s,a)=f_{\mathit{conf}}^{\mathit{user}(a,s')}(s',a)=\top$.
From $f_{\mathit{int}}$'s definition, $\mathit{trigger}(s') \neq \epsilon$, and $a = \mathit{trigCond}(s')$, it follows that $f_{\mathit{int}}(s',a) = \top$, which contradicts $f_{\mathit{int}}(s', \\ a) = \bot$.
This completes the proof for the second claim.

\item Assume, for contradiction's sake, that there are two states $s$ and $s'$ such that $\mathit{trigger}(s) = \mathit{trigger}(s') = t$, $\mathit{trigger}  (s) \neq \epsilon$, $\mathit{pState}(s) \cong_{\mathit{user}(\mathit{a},s),M}^{\mathit{data}} \mathit{pState}(s')$, $[\psi]^{s.\mathit{db}} \\ = [\psi ]^{s'.\mathit{db}} = \top$, $f(s,a) = \top$, and $f(s', a) =  \bot$, where $a = \mathit{trigAct}(s) = \mathit{trigAct}(s')$.
From $f$'s definition, $f(s, a)  = \top$, $f(s', a) = \bot$, and \thref{theorem:f:conf:composition:soundness}, it follows that $f_{\mathit{conf}}^{\mathit{user}(a,s)}(s,a)=f_{\mathit{conf}}^{\mathit{user}(a,s')}(s',  a)=\top$, $f_{\mathit{int}}(s,a) = \top$, and $f_{\mathit{int}}(s',a) = \bot$.
From this, it follows that  $s' \not \auth^{\mathit{approx}} t$. 
From $f_{\mathit{int}}(s,a) = \top$, it follows $s \auth^{\mathit{approx}} t$. 
There are two cases depending on $t$'s security mode:
\begin{compactenum}
\item $\mathit{mode}(t) = A$.
From this and  $s \auth^{\mathit{approx}} t$, it follows that $s \auth^{\mathit{approx}} a$ and $s \auth^{\mathit{approx}} a'$, where $a' = \mathit{getAction}(\mathit{statement}(t), \mathit{owner}(t),  \mathit{tuple}(s))$ is the trigger's action associated with the trigger's owner.
Note that $s$ and $s'$ are data indistinguishable.
From this, $a,a' \in {\cal A}_{D, {\cal U}}$, and \thref{theorem:f:int:actions:independent:from:db}, it follows that $s' \auth^{\mathit{approx}} a$ and $s' \auth^{\mathit{approx}} a'$.
From $s' \auth^{\mathit{approx}} a$, $s' \auth^{\mathit{approx}} a'$, $[\psi ]^{s'.\mathit{db}} = \top$, and the rule \texttt{EXECUTE TRIGGER - 2}, it follows that $s' \auth^{\mathit{approx}} t$, which contradicts $s' \not \auth^{\mathit{approx}} t$.

\item $\mathit{mode}(t) = O$.
From this and  $s \auth^{\mathit{approx}} t$, it follows that $s \auth^{\mathit{approx}} a$.
Note that $s$ and $s'$ are data indistinguishable.
From this, $a,a' \in {\cal A}_{D, {\cal U}}$, and \thref{theorem:f:int:actions:independent:from:db}, it follows that $s' \auth^{\mathit{approx}} a$.
From this, $[\psi ]^{s'.\mathit{db}} = \top$, and the rule \texttt{EXECUTE TRIGGER - 1}, it follows that $s' \auth^{\mathit{approx}} t$, which contradicts $s' \not \auth^{\mathit{approx}} t$.

\end{compactenum}
This completes the proof for the third claim.
\end{compactenum}
This completes the proof.
\end{proof}

In \thref{theorem:composition:confidentiality}, we prove the main result of this section, namely that $f$ provides \confidentiality{}.
We first recall the concept of \emph{derivation}.
Given a judgment $r, i \attMod \phi$, a \emph{derivation of  $r, i \attMod \phi$ with respect to $\attackerModel$}, or \emph{a derivation of $r,i \attMod \phi$} for short, is a proof tree, obtained by applying the rules defining $\attackerModel$, that ends in $r, i \attMod \phi$.
With a slight abuse of notation, we use $r,i \attMod \phi$ to denote both the judgment and its derivation.
The length of a derivation, denoted $|r, i \attMod \phi|$, is the number of rule applications in it.

\begin{lemma}\thlabel{theorem:composition:confidentiality}
Let $M$ be a system configuration, $f$ be as above, and $P=  \langle M, f \rangle$ be an \accessControlConfiguration{}.
For any user $u \in {\cal U}$, the \acf{} $f$ provides  \confidentiality{}  with respect to $P$, $u$, $\attackerModel$, and $\cong_{P,u}$.
\end{lemma}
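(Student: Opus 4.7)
The plan is to mirror the proof of \thref{theorem:f:conf:confidentiality}, replacing each appeal to \thref{theorem:f:conf:soundness} (soundness of $f_{\mathit{conf}}^u$ across data-indistinguishable states) by an appeal to \thref{theorem:f:composition:soundness} (the analogous soundness property for the composed $f$). Concretely, I will show by induction on the length of the derivation $r,i \attMod \phi$ in $\attackerModel$ that $\mathit{secure}_{P,u}(r,i \attMod \phi)$ holds; the theorem then follows immediately from the definition of data confidentiality.

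The first step is to re-establish, for the composition $f$, the two \emph{equivalence-class-preservation} lemmas \thref{theorem:f:conf:pec:1} and \thref{theorem:f:conf:pec:2}. The proofs of those lemmas are essentially case analyses on the action kind that repeatedly invoke \thref{theorem:f:conf:soundness} to conclude that if $f_{\mathit{conf}}^{u}$ accepts (resp.\ rejects) in state $\mathit{last}(r)$, then it accepts (resp.\ rejects) in $\mathit{last}(r')$ for every $r' \in \llbracket r \rrbracket_{P,u}$, and similarly to transfer integrity-constraint and security-exception outcomes across the class. Because \thref{theorem:f:composition:soundness} says exactly this for $f$ (using the already-established data-indistinguishability of $\mathit{pState}(s)$ and $\mathit{pState}(s')$ together with the preservation of $\mathit{invoker}$ and $\mathit{trigger}$ across $\cong_{P,u}$-indistinguishable prefixes), the case analyses go through verbatim. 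The only addition is that whenever the composed \acf{} rejects an action solely because $f_{\mathit{int}}$ rejects, \thref{theorem:f:int:actions:independent:from:db} ensures the rejection is independent of the database contents, so the LTS behaves identically on all $\cong_{P,u}$-indistinguishable runs.

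With the preservation lemmas in hand, the extend-on-runs lemmas (\thref{theorem:secure:extend:on:runs:insert:delete}, \thref{theorem:secure:extend:on:runs:select:create}, \thref{theorem:secure:extend:on:runs:grant:revoke}, \thref{theorem:secure:extend:on:runs:triggers}) apply unchanged because they are purely structural: they depend only on the fact that the action preserves the equivalence class and, where relevant, does not modify tables in $\mathit{tables}(\phi)$. The base case of the induction covers the rules that directly read outcomes of executed commands (successes, exceptions, integrity-constraint violations, and trigger activations). For each such rule, the argument proceeds exactly as in \thref{theorem:f:conf:confidentiality}: the LTS records $\mathit{secEx}(\cdot) = \bot$ (or the appropriate exception set), so $f(\mathit{last}(r^{i-1}),a) = \top$, hence $f_{\mathit{conf}}^{\mathit{user}(a,s)}(\mathit{last}(r^{i-1}),a) = \top$, and by definition of $f_{\mathit{conf}}^{u}$ the relevant sentence (computed via $\mathit{getInfo}$, $\mathit{getInfoS}$, or $\mathit{getInfoV}$) satisfies $\mathit{secure}(u,\phi,\mathit{last}(r^{i-1})) = \top$. \thref{theorem:secure:sound:under:approximation} then delivers $\mathit{secure}_{P,u}(r,i-1 \attMod \phi)$, and \thref{theorem:secure:equivalent:modulo:indistinguishable:state} together with the data-indistinguishability of $\mathit{last}(r^{i-1})$ and $\mathit{last}(r^{i})$ (where applicable) lifts this to index $i$; the \texttt{INSERT}/\texttt{DELETE} success cases additionally use the $A_{s',R,\overline{t}}$-argument of \thref{theorem:f:conf:confidentiality} to handle the state transition.

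The inductive step is where the bulk of the work lies, but it is also where the structure of the existing proof transfers most cleanly: every propagation rule (forward/backward \texttt{SELECT}, \texttt{GRANT}/\texttt{REVOKE}, \texttt{CREATE}, trigger action, disabled trigger, rollback) invokes the induction hypothesis on a shorter derivation and then an appropriate extend-on-runs lemma. The \emph{Reasoning} rule is handled by \thref{theorem:reasoning:preserves:security}. The main obstacle I anticipate is verifying that the preservation lemmas genuinely carry over to $f$ in every action case, because $f$ rejects strictly more actions than $f_{\mathit{conf}}^{u}$ alone: I must confirm that the additional $f_{\mathit{int}}$-based rejections respect the equivalence classes, which reduces to \thref{theorem:f:int:actions:independent:from:db} plus \thref{theorem:composition:policy:consistency} (the latter ensures policies remain $\auth$-consistent across reachable states, so $f_{\mathit{int}}$'s verdicts depend only on structural data preserved by $\cong_{P,u}$). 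Once these routine but non-trivial adaptations are verified, the theorem follows in the same shape as \thref{theorem:f:conf:confidentiality}.
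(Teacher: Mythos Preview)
Your proposal is correct and follows essentially the same approach as the paper: the paper's proof of this lemma is a case-by-case re-derivation of the proof of \thref{theorem:f:conf:confidentiality}, substituting the composed-\acf{} analogues \thref{theorem:f:composition:soundness}, \thref{theorem:f:composition:pec:1}, and \thref{theorem:f:composition:pec:2} for \thref{theorem:f:conf:soundness}, \thref{theorem:f:conf:pec:1}, and \thref{theorem:f:conf:pec:2}, exactly as you outline. One minor point: the paper does not actually invoke \thref{theorem:composition:policy:consistency} in this proof---\thref{theorem:f:int:actions:independent:from:db} alone suffices to show that $f_{\mathit{int}}$'s verdicts are stable across data-indistinguishable states (since the policy, users, triggers, and views are identical by definition of $\cong_{u,M}^{\mathit{data}}$)---so that dependency is unnecessary, though harmless.
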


\begin{proof}
Let $u$ be a user in ${\cal U}$, $P = \langle M, f \rangle$ be an \accessControlConfiguration{}, where $M= \langle D,\Gamma\rangle$ is a system configuration and $f$ is as above, and $L$ be the $P$-LTS.
Furthermore, let $r$ be a run in $\mathit{traces}(L)$, $i$ be an integer such that $1 \leq i \leq |r|$, and $\phi$ be a sentence such that $r, i \attMod \phi$ holds.
We claim that also $\mathit{secure}_{P,u}(r,i \attMod \phi)$ holds.
The theorem follows trivially from the claim.

We now show that for all $r \in \mathit{traces}(L)$, all $i$ such that $1 \leq i \leq |r|$, and all sentences $\phi$ such that $r, i \attMod \phi$ holds, then also $\mathit{secure}_{P,u}(r, i \attMod \phi)$ holds.
We prove our claim by induction on the length of the derivation $r, i \attMod \phi$.
In the following, we denote by $e$ the function $\mathit{extend}$.

\smallskip
\noindent
{\bf Base Case: } 
Assume that $|r, i \attMod \phi| = 1$.
There are a number of cases depending on the rule used to obtain $r, i \attMod \phi$.
\begin{compactenum}
\item \emph{\texttt{SELECT} Success - 1}.
Let $i$ be such that $r^{i} = r^{i-1} \concat \langle u, \mathtt{SELECT}, \phi \rangle \concat s$, where $s = \langle \mathit{db}, U, \mathit{sec}, T,V,c\rangle \in \Omega_{M}$ and $\mathit{last}(r^{i-1}) = s'$, where $s' =  \langle \mathit{db}, U, \mathit{sec}, T,V,c'\rangle$. 
From the rules, it follows that $f(s', \langle u, \mathtt{SELECT}, \phi\rangle) = \top$.
From this and $f$'s definition, it follows that $f_{\mathit{int}}(s', \langle u, \\ \mathtt{SELECT}, \phi\rangle) = \top$ and $f_{\mathit{conf}}^{u}(s', \langle u, \mathtt{SELECT}, \phi\rangle) = \top$, because $\mathit{user}(s', \langle u, \mathtt{SELECT}, \phi\rangle) = u$.
From $\; f_{\mathit{conf}}^{u}(s',\;   \langle u, \\ \mathtt{SELECT}, \phi\rangle) = \top$, it follows $\mathit{secure}(u, \phi, s') = \top$.
From this, \thref{theorem:secure:equivalent:modulo:indistinguishable:state}, and $\mathit{pState}(s) = \mathit{pState}(s')$, it follows  $\mathit{secure}(u, \phi, s) = \top$.
From this,  \thref{theorem:secure:sound:under:approximation}, and $\mathit{last}(r^{i}) = s$, it follows that $\mathit{secure}_{P,u}(r,i \attMod \phi)$ holds.

\item \emph{\texttt{SELECT} Success - 2}.
The proof for this case is similar to that of \emph{\texttt{SELECT} Success - 1}.

\item \emph{\texttt{INSERT} Success}.
Let $i$ be such that $r^{i} = r^{i-1} \concat \langle u, \mathtt{INSERT}, \\ R, \overline{t} \rangle \concat s$ , where $s = \langle \mathit{db}, U, \mathit{sec}, T, V, c \rangle \in \Omega_{M}$ and $\mathit{last}(r^{i-1}) =  \langle \mathit{db}', U, \mathit{sec}, T, V, c' \rangle$, and $\phi$ be $R(\overline{t})$.
Then, $\mathit{secure}_{P,u}(r,i \attMod R(\overline{t}))$  holds.
Indeed, in all runs $r'$ $(P,u)$-indistinguishable from $r^{i}$ the last action is $\hfill \langle u, \hfill \\ \mathtt{INSERT},  R, \overline{t} \rangle$.
Furthermore, the action has been executed successfully.
Therefore, according to the LTS rules, $\overline{t} \in \mathit{last}(r').\mathit{db}(R)$ for all runs $r' \in \llbracket r^{i}\rrbracket_{P,u}$.
From this and the relational calculus semantics, it follows that $[R(\overline{t})]^{\mathit{last}(r').\mathit{db}} = \top$ for all runs $r' \in \llbracket r^{i}\rrbracket_{P,u}$.
Hence, $\mathit{secure}_{P,u}(r,i \attMod R(\overline{t}))$ holds.

\item \emph{\texttt{INSERT} Success - FD}.
Let $i$ be such that $r^{i} = r^{i-1} \concat \langle u, \mathtt{INSERT}, R, (\overline{v}, \overline{w}, \overline{q}) \rangle \concat s$, where $s = \langle \mathit{db}, U, \mathit{sec}, T, V, c \rangle \\ \in \Omega_{M}$ and $\mathit{last}(r^{i-1}) =  \langle \mathit{db}', U, \mathit{sec}, T, V, c' \rangle$, and $\phi$ be $\neg \exists \overline{y},\overline{z}.\, R(\overline{v}, \overline{y}, \overline{z}) \wedge \overline{y} \neq \overline{w}$.
From the rule's definition, it follows that $\mathit{secEx}(s) = \bot$.
From this and the LTS rules, it follows that $f(s', \langle u, \mathtt{INSERT}, R, (\overline{v}, \overline{w}, \overline{q}) \rangle) = \top$.
From this and $f$'s definition, it follows that $f_{\mathit{conf}}^{u}(s', \langle u,\\ \mathtt{INSERT}, R, (\overline{v}, \overline{w}, \overline{q})\rangle) = \top$, because $\hfill \mathit{user}(s', \hfill \langle u, \hfill \mathtt{INSERT}, \\ R, (\overline{v}, \overline{w}, \overline{q}) \rangle) = u$.
From this and $f_{\mathit{conf}}^{u}$'s definition, it follows that  $\mathit{secure}(u, \phi, \mathit{last}(r^{i-1})) = \top$ holds because $\phi$ is equivalent to $\mathit{getInfoS}(\gamma,a)$ for some $\gamma \in \mathit{Dep}(\Gamma, a)$, where $a = \langle u, \mathtt{INSERT}, R, (\overline{v}, \overline{w}, \overline{q}) \rangle$.
From this and \thref{theorem:secure:sound:under:approximation}, it follows that $\mathit{secure}_{P,u}(r,i-1 \attMod \phi)$ holds.
We claim that $\mathit{secure}^{\mathit{data}}_{P,u}(r,i \attMod \phi)$ holds.
From this and \thref{theorem:ibsec:correctness:secure:3}, it follows that also  $\mathit{secure}_{P,u}(r,i  \attMod \phi)$ holds.

We now prove our claim that  $\mathit{secure}^{\mathit{data}}_{P,u}(r,i \attMod \phi)$ holds.
Let $s'$ be the state $\mathit{last}(r^{i-1})$.
Furthermore, for brevity's sake, in the following we omit the $\mathit{pState}$ function where needed.
For instance, with a slight abuse of notation, we write $\llbracket s' \rrbracket^{\mathit{data}}_{u,M}$ instead of $\llbracket \mathit{pState}(s') \rrbracket^{\mathit{data}}_{u,M}$.
There are two cases:
\begin{compactenum}
\item the \texttt{INSERT} command has caused an integrity constraint violation, i.e., $\mathit{Ex}(s) \neq \emptyset$.
From $\mathit{secure}(u, \phi, \\ s') = \top$ and \thref{theorem:secure:sound:under:approximation}, it follows that $\mathit{secure}^{\mathit{data}}_{P,u}(r,\\i-1 \attMod \phi)$ holds.
From this, it follows that $[\phi]^{v} = [\phi]^{s'}$ for any $v \in \llbracket s' \rrbracket^{\mathit{data}}_{u,M}$.
From this and the fact that the \texttt{INSERT} command caused an exception (i.e., $s' = s$), it follows that $[\phi]^{v} = [\phi]^{s}$ for any $v \in \llbracket s \rrbracket^{\mathit{data}}_{u,M}$.
From this, it follows that $\mathit{secure}^{\mathit{data}}_{P,u}(r,i \attMod \phi)$ holds.

\item the \texttt{INSERT} command has not caused exceptions, i.e., $\mathit{Ex}(s) = \emptyset$.
From $\mathit{secure}(u, \phi,s') = \top$ and \thref{theorem:secure:sound:under:approximation}, it follows that $\mathit{secure}^{\mathit{data}}_{P,u}(r,i-1 \attMod \phi)$ holds.
From this, it follows that $[\phi]^{v} = [\phi]^{s'}$ for any $v \in \llbracket s' \rrbracket^{\mathit{data}}_{u,M}$.
Furthermore, from \ref{theorem:getInfo:sound:and:complete} and $\mathit{Ex}(s) = \emptyset$, it follows that $\phi$ holds in $s'$.
Let $A_{s',R,\overline{t}}$ be the set $\{\langle \mathit{db}[R \oplus \overline{t}], U, \mathit{sec}, T,V\rangle \in \Pi_{M} \,|\, \exists \mathit{db}' \in \Omega_{D}.\, \langle \mathit{db}', U,  \mathit{sec},  T, V\rangle \in \llbracket s'\rrbracket^{\mathit{data}}_{M,u}\}$.
It is easy to see that $\llbracket s\rrbracket^{\mathit{data}}_{M,u} \subseteq A_{s',R,\overline{t}}$.
We now show that $\phi$ holds for any $z \in A_{s',R,\overline{t}}$.
Let $z_{1} \in \llbracket s' \rrbracket^{\mathit{data}}_{M,u}$.
From $[\phi]^{v} = [\phi]^{s'}$ for any $v \in \llbracket s' \rrbracket^{\mathit{data}}_{u,M}$ and the fact that $\phi$ holds in $s'$, it follows that $[\phi]^{z_{1}} = \top$.
Therefore, for any $(\overline{k}_{1},\overline{k}_{2}, \overline{k}_{3}) \in R(z_{1})$ such that $|\overline{k}_1| = |\overline{v}|$, $|\overline{k}_2| = |\overline{w}|$, and $|\overline{k}_3| = | \overline{z}|$, if $k_{1} = \overline{v}$, then $k_{2} = \overline{w}$. 
Then, for any $(\overline{k}_{1},\overline{k}_{2}, \overline{k}_{3}) \in R(z_{1}) \cup \{(\overline{v}, \overline{w}, \overline{q})\}$ such that $|\overline{k}_1| = |\overline{v}|$, $|\overline{k}_2| = |\overline{w}|$, and $|\overline{k}_3| = | \overline{z}|$, if $k_{1} = \overline{v}$, then $k_{2} = \overline{w}$.
Therefore, $\phi$ holds also in $z_{1}[R \oplus \overline{t}] \in A_{\mathit{pState}(s'), R, \overline{t}}$.	
Hence, $[\phi]^{z} = \top$ for any $z \in A_{s',R,\overline{t}}$.
From this and $\llbracket s\rrbracket^{\mathit{data}}_{M,u} \subseteq A_{s',R,\overline{t}}$, it follows that  $[\phi]^{z} = \top$ for any $z \in \llbracket s\rrbracket^{\mathit{data}}_{M,u}$.
From this, it follows that $\mathit{secure}^{\mathit{data}}_{P,u}(r,i ,u, \phi)$ holds.
\end{compactenum}

\item \emph{\texttt{INSERT} Success - ID}.
The proof of this case is similar to that for the \emph{\texttt{INSERT} Success - FD}.

\item \emph{\texttt{DELETE} Success}.
The proof for this case is similar to that of \emph{\texttt{INSERT} Success}.

\item \emph{\texttt{DELETE} Success - ID}.
The proof of this case is similar to that for the \emph{\texttt{INSERT} Success - FD}.

\item \emph{\texttt{INSERT} Exception}.
Let $i$ be such that $r^{i} = r^{i-1} \concat \langle u, \mathtt{INSER}, R, \overline{t} \rangle \concat s$, where $s = \langle \mathit{db}, U, \mathit{sec}, T, V, c \rangle  \in  \Omega_{M}$ and $\mathit{last}(r^{i-1}) =  \langle \mathit{db}', U, \mathit{sec}, T, V, c' \rangle$, and $\phi$ be $\neg R(\overline{t})$.
From the rule's definition, it follows that $\mathit{secEx}(s) = \bot$.
From this and the LTS rules, it follows that $\hfill f(s', \hfill \langle u, \\ \mathtt{INSERT}, R, \overline{t} \rangle) = \top$.
From this and $f$'s definition, it follows that $f_{\mathit{conf}}^{u}(s', \langle u, \mathtt{INSERT}, R, \overline{t} \rangle) = \top$, because $ \mathit{user}(s', \langle u,  \mathtt{INSERT}, R, \overline{t} \rangle) = u$.
From this and $f_{\mathit{conf}}^{u}$'s definition, it follows that  $\mathit{secure}(u, \phi, \mathit{last}(r^{i-1})) = \top$ holds because $\phi = \mathit{getInfo}(\langle u, \mathtt{INSERT}, R, \overline{t} \rangle)$.
From this and \thref{theorem:secure:sound:under:approximation}, it follows that $\mathit{secure}_{P,u}(r,i-1 \attMod \phi)$ holds.
From the LTS semantics, it follows that $\mathit{pState}(s) \cong^{data}_{u,M} \mathit{pState}(\mathit{last}(r^{i-1}))$.
From this,  $\mathit{secure}(u, \\  \phi, \mathit{last}(r^{i-1})) = \top$, and \thref{theorem:secure:equivalent:modulo:indistinguishable:state}, it follows that  $\mathit{secure}(u, \phi,  \mathit{last}(r^{i})) = \top$.
From this and  \thref{theorem:secure:sound:under:approximation}, it follows that $\mathit{secure}_{P,u}(r,i \attMod \phi)$ holds.

\item \emph{\texttt{DELETE} Exception}. 
The proof for this case is similar to that of \emph{\texttt{INSERT} Exception}.

\item \emph{\texttt{INSERT} FD Exception}.
Let $i$ be such that $r^{i} = r^{i-1} \concat \langle u, \mathtt{INSERT}, R, (\overline{v}, \overline{w}, \overline{q}) \rangle \concat s$, where $s = \langle \mathit{db}, U, \mathit{sec}, T, V, c \rangle \\ \in \Omega_{M}$ and $\mathit{last}(r^{i-1}) =  \langle \mathit{db}', U, \mathit{sec}, T, V, c' \rangle$, and $\phi$ be $\exists \overline{y},\overline{z}.\, R(\overline{v}, \overline{y}, \overline{z}) \wedge \overline{y} \neq \overline{w}$.
From the rule's definition, it follows that $\mathit{secEx}(s) = \bot$.
From this and the LTS rules, it follows that $f(s', \langle u, \mathtt{INSERT}, R, (\overline{v}, \overline{w}, \overline{q}) \rangle) = \top$.
From this and $f$'s definition, it follows that $f_{\mathit{conf}}^{u}(s', \langle u, \\ \mathtt{INSERT}, R, (\overline{v}, \overline{w}, \overline{q}) \rangle) = \top$, because $ \mathit{user}(s',  \langle u,   \mathtt{INSERT}, \\ R, \overline{t} \rangle) = u$.
From this and $f_{\mathit{conf}}^{u}$'s definition, it follows that  $\mathit{secure}(u, \phi, \mathit{last}(r^{i-1})) = \top$ because $\phi = \mathit{getInfoV}(\gamma,  \langle u, \mathtt{INSERT}, R, (\overline{v}, \overline{w}, \overline{q}) \rangle)$ for some constraint $\gamma \in \mathit{Dep}(\Gamma, \langle u, \mathtt{INSERT}, R, (\overline{v}, \overline{w}, \overline{q}) \rangle)$.
From this and  \thref{theorem:secure:sound:under:approximation}, it follows that $\mathit{secure}_{P,u}(r,i-1 \attMod \phi)$ holds.
From the LTS semantics, it follows that $\mathit{pState}(s) \\ \cong^{data}_{u,M} \mathit{pState}(\mathit{last}(r^{i-1}))$.
From this,  \thref{theorem:secure:equivalent:modulo:indistinguishable:state}, and $\mathit{secure}(u, \phi, \mathit{last}(r^{i-1})) = \top$,  it follows that  $\mathit{secure}(u, \phi, \\ \mathit{last}(r^{i})) = \top$.
From this and  \thref{theorem:secure:sound:under:approximation}, it follows that also $\mathit{secure}_{P,u}(r,i \attMod \phi)$ holds.

\item \emph{\texttt{INSERT} ID Exception}.
The proof for this case is similar to that of \emph{\texttt{INSERT} FD Exception}.

\item \emph{\texttt{DELETE} FD Exception}.
The proof for this case is similar to that of \emph{\texttt{INSERT} FD Exception}.

\item \emph{Integrity Constraint}.
The proof of this case follows trivially from the fact that for any state $s = \langle \mathit{db}, U, \mathit{sec},  T, \\ V,  c \rangle  \in \Omega_{M}$ and any $\gamma \in \Gamma$, $[\gamma]^{\mathit{db}} = \top$  by definition.

\item \emph{Learn \texttt{GRANT}/\texttt{REVOKE} Backward}. 
Let $i$ be such that $r^{i} = r^{i-1} \concat t \concat s$, where $s = \langle \mathit{db}, U, \mathit{sec}, T, V, c \rangle  \in \Omega_{M}$, $\mathit{last}(r^{i-1}) =  \langle \mathit{db}, U, \mathit{sec}', T, V, c' \rangle$, and $t$ be a trigger whose \texttt{WHEN} condition is $\phi$ and whose action is either a \texttt{GRANT} or a \texttt{REVOKE}.
From the rule's definition, it follows that $\mathit{secEx}(s) = \bot$. 
From this and the LTS rules, it follows that $f(\mathit{last}(r^{i-1}), \langle u', \mathtt{SELECT}, \phi \rangle) = \top$, where $u'$ is either the trigger's owner or the trigger's invoker depending on the security mode.
From this and $f$'s definition, it follows $f_{\mathit{conf}}^{u}(\mathit{last}(r^{i-1}), \langle u', \mathtt{SELECT}, \phi \rangle) = \top$, because $\mathit{user}(\mathit{last}(r^{i-1}), \langle u',  \mathtt{SELECT}, \phi \rangle) = u$ because $t$'s invoker is $u$ according to the rules.
From this and $f_{\mathit{conf}}^{u}$'s definition, it follows $\mathit{secure}(u,\phi,\mathit{last}(r^{i-1})) = \top$.
From this and \ref{theorem:secure:sound:under:approximation}, it follows that $\mathit{secure}_{P,u}(r,i-1 \attMod \phi)$ holds.

\item \emph{Trigger \texttt{GRANT} Disabled Backward}. 
Let $i$ be such that $r^{i} = r^{i-1} \concat t \concat s$, where $s = \langle \mathit{db}, U, \mathit{sec}, T, V, c \rangle  \in \Omega_{M}$, $\mathit{last}(r^{i-1}) =  \langle \mathit{db}, U, \mathit{sec}', T, V, c' \rangle$, and $t$ be a trigger whose \texttt{WHEN} condition is $\psi$, and $\phi$ be $\neg \psi$.
From the rule's definition, it follows that $\mathit{secEx}(s) = \bot$. 
From this and the LTS rules, it follows that $f(\mathit{last}(r^{i-1}), \langle u', \\ \mathtt{SELECT}, \phi \rangle) = \top$, where $u'$ is either the trigger's owner or the trigger's invoker depending on the security mode.
From this and $f$'s definition, it follows $f_{\mathit{conf}}^{u}(\mathit{last}(r^{i-1}), \\ \langle u', \mathtt{SELECT}, \phi \rangle) = \top$, as $\mathit{user}(\mathit{last}(r^{i-1}), \langle u',   \mathtt{SELECT}, \phi \rangle) \\ = u$ because $t$'s invoker is $u$ according to the rules.
From this and $f_{\mathit{conf}}^{u}$'s definition, it follows that also $\mathit{secure}(u, \phi,\mathit{last}(r^{i-1})) = \top$.
From this and \ref{theorem:secure:sound:under:approximation}, it follows that $\mathit{secure}_{P,u}(r,i-1 \attMod \phi)$ holds.

\item \emph{Trigger \texttt{REVOKE} Disabled Backward}. 
The proof for this case is similar to that of \emph{Trigger \texttt{GRANT} Disabled Backward}.

\item \emph{Trigger \texttt{INSERT} FD Exception}.
Let $i$ be such that $r^{i} = r^{i-1} \concat t \concat s$, where $s = \langle \mathit{db}, U, \mathit{sec}, T, V, c \rangle  \in \Omega_{M}$, $\mathit{last}(r^{i-1}) =  \langle \mathit{db}, U, \mathit{sec}', T, V, c' \rangle$, and $t$ be a trigger whose \texttt{WHEN} condition is $\phi$ and whose action $\mathit{act}$ is a \texttt{INSERT} statement $\langle u', \mathtt{INSERT}, R, (\overline{v}, \overline{w}, \overline{q}) \rangle$.
Furthermore, let $\phi$ be $\exists \overline{y},\overline{z}.\, R(\overline{v}, \overline{y}, \overline{z}) \wedge \overline{y} \neq \overline{w}$.
From the rule's definition, it follows that $\mathit{secEx}(s) = \bot$.
From this and the LTS rules, it follows that $f(\mathit{last}(r^{i-1}), \mathit{act}) = \top$.
From this and $f$'s definition, it follows that $f_{\mathit{conf}}^{u}(\mathit{last}(r^{i-1}),  \mathit{act}) = \top$, because $\mathit{user}(\mathit{last}(r^{i-1}),\mathit{act}) = u$ because $t$'s invoker is $u$ according to the rules.
From this and $f_{\mathit{conf}}^{u}$'s definition, it follows that  $\mathit{secure}(u, \phi, \mathit{last}(r^{i-1})) = \top$ because $\phi = \mathit{getInfoV}(\gamma,  \mathit{act})$ for some constraint $\gamma \in \mathit{Dep}(\Gamma,  \mathit{act})$.
From this and  \thref{theorem:secure:sound:under:approximation}, it follows that $\mathit{secure}_{P,u}(r,i-1 \attMod \phi)$ holds.

\item \emph{Trigger \texttt{INSERT} ID Exception}.
The proof for this case is similar to that of \emph{Trigger \texttt{INSERT} ID Exception}.

\item \emph{Trigger \texttt{DELETE} ID Exception}.
The proof for this case is similar to that of \emph{Trigger \texttt{DELETE} ID Exception}.

\item \emph{Trigger Exception}.
Let $i$ be such that $r^{i} = r^{i-1} \concat t \concat s$, where $s = \langle \mathit{db}, U, \mathit{sec}, T, V, c \rangle  \in \Omega_{M}$, $\mathit{last}(r^{i-1}) =  \langle \mathit{db}, U, \mathit{sec}', T, V, c' \rangle$, and $t$ be a trigger whose \texttt{WHEN} condition is $\phi$ and whose action is $\mathit{act}$.
From the rule's definition, it follows that $f(\mathit{last}(r^{i-1}), \langle u', \mathtt{SELECT}, \phi \rangle) = \top$, where $u'$ is either the trigger's owner or the trigger's invoker depending on the security mode.
From this and $f$'s definition, it follows $f_{\mathit{conf}}^{u}(\mathit{last} (r^{i-1}), \langle u', \mathtt{SELECT}, \phi \rangle) \\ = \top$, because $\mathit{user}(\mathit{last}(r^{i-1}), \langle u', \mathtt{SELECT}, \phi \rangle) = u$ since $t$'s invoker is $u$ according to the rules.
From this and $f_{\mathit{conf}}^{u}$'s definition, it follows that $\mathit{secure}(u,\phi,\mathit{last}(r^{i-1})) \\ = \top$.
From this and \ref{theorem:secure:sound:under:approximation}, it follows that $\mathit{secure}_{P,u}(r,i-1 \attMod \phi)$ holds.

\item \emph{Trigger \texttt{INSERT} Exception}.
The proof for this case is similar to that of \emph{\texttt{INSERT} Exception}.

\item \emph{Trigger \texttt{DELETE} Exception}.
The proof for this case is similar to that of \emph{\texttt{DELETE} Exception}.

\item \emph{Trigger Rollback \texttt{INSERT}}.
Let $i$ be such that $r^{i} = r^{i-n-1}  \concat \langle u, \mathtt{INSERT}, R, \overline{t}\rangle \concat s_{1} \concat t_{1} \concat s_{2} \concat \ldots \concat t_{n} \concat s_{n}$, where $s_{1}, s_{2}, \ldots, s_{n} \\ \in \Omega_{M}$ and $t_{1}, \ldots, t_{n} \in {\cal TRIGGER}_{D}$, and $\phi$ be $\neg R(\overline{t})$.
Furthermore, let  $\mathit{last}(r^{i-n-1}) = \langle \mathit{db}', U', \mathit{sec}', T', V', c' \rangle$ and $s_n$ be $\langle \mathit{db}, U, \mathit{sec}, T, V, c \rangle$. 
From the rule's definition, it follows that $\mathit{secEx}(s_{1}) = \bot$.
From this,  it follows that $f(\mathit{last}(r^{i-n-1}), \langle u, \mathtt{INSERT}, R, \overline{t}\rangle) = \top$.
From this and $f$'s definition, it follows $f_{\mathit{conf}}^{u}(\mathit{last}(r^{i-n-1}), \langle u, \\ \mathtt{INSERT}, R, \overline{t}\rangle) = \top$ since $\mathit{user}(\mathit{last}(r^{i-n-1}),\langle u, \mathtt{INSERT}, \\ R, \overline{t}\rangle) = u$.
From this and $f_{\mathit{conf}}^{u}$'s definition, it follows $\mathit{secure}(u,\phi,\mathit{last}(r^{i-n-1})) = \top$ because $\phi = \mathit{getInfo}(\langle u, \\  \mathtt{INSERT}, R, \overline{t}\rangle)$.
From the LTS semantics, it follows that $\mathit{last}(r^{i-n-1}) \cong_{M,u}^{\mathit{data}} s_{n}$ because $\mathit{pState}(\mathit{last}(r^{i-n-1})) = \mathit{pState}(s_n)$.
From this,  \thref{theorem:secure:equivalent:modulo:indistinguishable:state}, and $\mathit{secure}(u,\phi,\mathit{last} \\ (r^{i-n-1})) = \top$, it follows $\mathit{secure}(u, \phi,s_{n}) = \top$.
From this and \thref{theorem:secure:sound:under:approximation}, it follows that $\mathit{secure}_{P,u}(r,i \attMod \phi)$ holds.

\item \emph{Trigger Rollback \texttt{DELETE}}.
The proof for this case is similar to that of \emph{Trigger Rollback \texttt{INSERT}}.
	
\end{compactenum}
	This completes the proof of the base step.

	\smallskip
	\noindent
	{\bf Induction Step: } 
Assume that the claim hold for any derivation of $r, j \attMod \psi$ such that $|r, j \attMod \psi| < |r,i \attMod \phi|$.
We now prove that the claim also holds for $r,i \attMod \phi$.
There are a number of cases depending on the rule used to obtain $r,i \attMod \phi$.
\begin{compactenum}
\item \emph{View}.
The proof of this case follows trivially from the semantics of the relational calculus extended over views.

\item \emph{Propagate Forward \texttt{SELECT}}.
Let $i$ be such that $r^{i+1} = r^{i} \concat \langle u, \mathtt{SELECT}, \psi \rangle \concat s$, where $s = \langle \mathit{db}, U, \mathit{sec}, T, V, c \rangle  \in \Omega_{M}$ and $\mathit{last}(r^{i}) =  \langle \mathit{db}', U', \mathit{sec}', T', V', c' \rangle$.
From the rule, it follows that $r,i \attMod \phi$ holds.
From this and the induction hypothesis, it follows that $\mathit{secure}_{P,u}(r,i  \attMod \phi)$ holds.
From \thref{theorem:f:composition:pec:1}, the action $\langle u, \mathtt{SELECT},  \psi \rangle$ preserves the equivalence class with respect to $r^{i}$, $P$, and $u$.
From this,  \thref{theorem:secure:extend:on:runs:select:create}, and $\mathit{secure}_{P,u}(r,i  \attMod \phi)$, it follows that also $\mathit{secure}_{P,u}(r,i+1 \attMod \phi)$ holds.

\item \emph{Propagate Forward \texttt{GRANT/REVOKE}}.
Let $i$ be such that $r^{i+1} = r^{i} \concat \langle \mathit{op}, u', p, u \rangle \concat s$, where $s = \langle \mathit{db}, U, \mathit{sec}, T, V, c \rangle  \in \Omega_{M}$ and $\mathit{last}(r^{i}) =  \langle \mathit{db}', U', \mathit{sec}', T', V', c' \rangle$.
From the rule, it follows that $r,i \attMod \phi$ holds.
From this and the induction hypothesis, it follows that $\mathit{secure}_{P,u}(r,i  \attMod \phi)$ holds.
From \thref{theorem:f:composition:pec:1}, the action $\langle \mathit{op}, u', p, u \rangle$ preserves the equivalence class with respect to $r^{i}$, $P$, and $u$.
From this,  \thref{theorem:secure:extend:on:runs:grant:revoke}, and $\mathit{secure}_{P,u}(r,i  \attMod \phi)$, it follows that also $\mathit{secure}_{P,u}(r,i+1 \attMod \phi)$ holds.

\item \emph{Propagate Forward \texttt{CREATE}}.
The proof for this case is similar to that of \emph{Propagate Forward \texttt{SELECT}}.

\item \emph{Propagate Backward \texttt{SELECT}}.
Let $i$ be such that $r^{i+1} = r^{i} \concat \langle u, \mathtt{SELECT}, \psi \rangle \concat s$, where $s = \langle \mathit{db}', U', \mathit{sec}', T', V', c' \rangle  \\ \in \Omega_{M}$ and $\mathit{last}(r^{i}) =  \langle \mathit{db}, U, \mathit{sec}, T, V, c \rangle$.
From the rule, it follows that $r,i+1 \attMod \phi$ holds.
From this and the induction hypothesis, it follows that $\mathit{secure}_{P,u}  (r,i+1  \attMod \phi)$ holds.
From \thref{theorem:f:composition:pec:1}, the action $\langle u, \mathtt{SELECT}, \psi \rangle$ preserves the equivalence class with respect to $r^{i}$, $P$, and $u$.
From this,  \thref{theorem:secure:extend:on:runs:select:create}, and $\mathit{secure}_{P,u}(r,i+1 \\ \attMod \phi)$, it follows that also $\mathit{secure}_{P,u}(r,i \attMod \phi)$ holds.

\item \emph{Propagate Backward \texttt{GRANT/REVOKE}}.
Let $i$ be such that $r^{i+1} = r^{i} \concat \langle \mathit{op},u',p,u \rangle \concat s$, where $s = \langle \mathit{db}', U', \mathit{sec}', T', V',\\ c' \rangle   \in \Omega_{M}$ and $\mathit{last}(r^{i}) =  \langle \mathit{db}, U, \mathit{sec}, T, V, c \rangle$.
From the rule, it follows that $r,i+1 \attMod \phi$ holds.
From this and the induction hypothesis, it follows that $\mathit{secure}_{P,u}  (r,i+1  \attMod \phi)$ holds.
From \thref{theorem:f:composition:pec:1}, the action $\langle \mathit{op},u',p, u \rangle$ preserves the equivalence class with respect to $r^{i}$, $P$, and $u$.
From this,  \thref{theorem:secure:extend:on:runs:grant:revoke}, and $\mathit{secure}_{P,u}(r,i+1  \attMod \phi)$, it follows that also $\mathit{secure}_{P,u}(r,i \attMod \phi)$ holds.

\item \emph{Propagate Backward \texttt{CREATE TRIGGER}}.
The proof for this case is similar to that of \emph{Propagate Backward \texttt{SELECT}}.

\item \emph{Propagate Backward \texttt{CREATE VIEW}}.
Note that the formulae $\psi$ and $\mathit{replace}(\psi,o)$ are semantically equivalent.
This is the only difference between the proof for this case and the one for the  \emph{Propagate Backward \texttt{SELECT}} case.

\item \emph{Rollback Backward - 1}.
Let $i$ be such that $r^{i} = r^{i-n-1}  \concat \langle u, \mathit{op}, R, \overline{t}\rangle \concat s_{1} \concat t_{1} \concat s_{2} \concat \ldots \concat t_{n} \concat s_{n}$, where  $s_{1}, s_{2}, \ldots, s_{n} \in \Omega_{M}$, $t_{1}, \ldots, t_{n} \in {\cal TRIGGER}_{D}$, and  $\mathit{op}$ is one of $\{\mathtt{INSERT}, \\ \texttt{DELETE}\}$.
Furthermore, let $s_{n}$ be $\langle \mathit{db}', U', \mathit{sec}', T',  V', c' \rangle$ and $\mathit{last}(r^{i-n-1})$ be $\langle \mathit{db}, U, \mathit{sec}, T, V, c \rangle$.
From the rule's definition, $r, i \attMod \phi$ holds.
From this and the induction hypothesis, it follows that $\mathit{secure}_{P,u}(r,i \attMod \phi)$ holds.
From \thref{theorem:f:composition:pec:2}, the triggers $t_{j}$ preserve the equivalence class with respect to $r^{i-n-1+j}$, $P$, and $u$ for any $1 \leq j \leq n$.
Therefore, for any $v \in \llbracket r^{i-1} \rrbracket_{{P,u}}$, the run $e(v,t_{n})$ contains the roll-back.
Therefore, for any $v \in \llbracket r^{i-1} \rrbracket_{P,u}$, the state $\mathit{last}(e(v,t_{n}))$ is the state just before the action $\langle u,\mathit{op}, R, \overline{t}\rangle$.
Let $A$ be the set of partial states associated with the roll-back states.
It is easy to see that $A$ is the same as $\{\mathit{pState}(\mathit{last}(t')) | t' \in \llbracket r^{i-n-1} \rrbracket_{P,u}\}$.
From $\mathit{secure}_{P,u}(r,i \attMod \phi)$, it follows that $\phi$ has the same result over all states in $A$.
From this and $A = \{\mathit{pState}(\mathit{last}(t')) | t' \in \llbracket r^{i-n-1} \rrbracket_{P,u}\}$, it follows that $\phi$ has the same result over all states in $ \{\mathit{pState}(\mathit{last}(t')) | \\ t' \in \llbracket r^{i-n-1} \rrbracket_{P,u}\}$.
From this, it follows that $\mathit{secure}_{P,u} \\ (r,i-n-1 \attMod \phi)$ holds.

\item \emph{Rollback Backward - 2}.
Let $i$ be such that $r^{i} = r^{i-1} \concat \langle u, op, R, \overline{t} \rangle \concat s$, where $s = \langle \mathit{db}', U', \mathit{sec}', T', V', c' \rangle  \in \Omega_{M}$, $\mathit{last}(r^{i-1}) =  \langle \mathit{db}, U,  \mathit{sec}, T, V, c \rangle$, and $\mathit{op}$ is one of $\{\mathtt{INSERT}, \mathtt{DELETE}\}$.
From the rule's definition, $r, i \attMod \phi$ holds.
From this and  the induction hypothesis, it follows that $\mathit{secure}_{P,u}(r,i \attMod \phi)$ holds.
From \thref{theorem:f:composition:pec:1}, the action $\langle u,\mathit{op}, R, \overline{t}\rangle$ preserves the equivalence class with respect to $r^{i-1}$, $P$, and $u$. 	
From this, \thref{theorem:secure:extend:on:runs:insert:delete}, the fact that the action does not modify the database state, and $\mathit{secure}_{P,u}(r,i \attMod \phi)$, it follows $\mathit{secure}_{P,u}(r, i-1 \attMod \phi)$.

\item \emph{Rollback Forward - 1}.
Let $i$ be such that $r^{i} = r^{i-n-1}  \concat \langle u, \mathit{op}, R, \overline{t}\rangle \concat s_{1} \concat t_{1} \concat s_{2} \concat \ldots \concat t_{n} \concat s_{n}$, where $s_{1}, s_{2}, \ldots, s_{n} \in \Omega_{M}$, $t_{1}, \ldots, t_{n} \in {\cal TRIGGER}_{D}$, and $\mathit{op}$ is one of  $\{\mathtt{INSERT}, \\ \mathtt{DELETE}\}$.
Furthermore, let $s_{n}$ be $\langle \mathit{db}, U, \mathit{sec}, T, V, c \rangle $ and $\mathit{last}(r^{i-n-1})$ be $\langle \mathit{db}', U', \mathit{sec}', T', V', c' \rangle$.
From the rule's definition, $r, i-n-1 \attMod \phi$ holds.
From this and the induction hypothesis, it follows that $\mathit{secure}_{P,u} (r,i-n-1 \attMod \phi)$ holds.
From \thref{theorem:f:composition:pec:2}, the triggers $t_{j}$ preserve the equivalence class with respect to $r^{i-n-1+j}$, $P$, and $u$ for any $1 \leq j \leq n$.
Independently on the cause of the roll-back (either a security exception or an integrity constraint violation), we claim that the set $A$ of roll-back partial states is $\{\mathit{pState}(\mathit{last}(t')) | t' \in \llbracket r^{i-n-1} \rrbracket_{P,u}\}$.
From $\mathit{secure}_{P,u}  (r, i-n-1 \attMod \phi)$, the result of $\phi$ is the same for all states in $A$.
From this and $A = \{\mathit{pState}(\mathit{last}(t')) | t' \in \llbracket r^{i-n-1} \rrbracket_{P,u}\}$, it follows that also  $\mathit{secure}_{P,u}(r,i \attMod \phi)$ holds.

We now prove our claim.
It is trivial to see (from the LTS's semantics) that the set of rollback's states is a subset of $\{\mathit{last}(v) | v \in \llbracket r^{i-n-1} \rrbracket_{P,u}\}$.
Assume, for contradiction's sake, that there is a state in $\{\mathit{last}(v) | v \in \llbracket r^{i-n-1} \rrbracket_{P,u}\}$ that is not a rollback state for the runs in $\llbracket r^{i} \rrbracket_{P,u}$.
This is impossible since all triggers $t_{1}, \ldots, t_{n}$ preserve the equivalence class. 

\item \emph{Rollback Forward - 2}.
Let $i$ be such that $r^{i} = r^{i-1} \concat \langle u, op, R, \overline{t} \rangle \concat s$, where $\mathit{op} \in \{\mathtt{INSERT}, \mathtt{DELETE}\}$, $s = \langle \mathit{db}, U, \\ \mathit{sec}, T, V, c \rangle  \in \Omega_{M}$ and $\mathit{last}(r^{i-1}) =  \langle \mathit{db}', U', \mathit{sec}',   T', V', \\ c' \rangle$.
From the rule's definition, $r, i-1 \attMod \phi$ holds.
From this and  the induction hypothesis, it follows that $\mathit{secure}_{P,u}(r,i-1 \attMod \phi)$ holds.
From \thref{theorem:f:composition:pec:1}, the action $\langle u,\mathit{op}, R, \overline{t}\rangle$ preserves the equivalence class with respect to $r^{i-1}$, $P$, and $u$. 	
From this, \thref{theorem:secure:extend:on:runs:insert:delete}, the fact that the action does not modify the database state, and $\mathit{secure}_{P,u}(r,i-1 \attMod \phi)$, it follows that also $\mathit{secure}_{P,u}(r,i \attMod \phi)$ holds.

\item \emph{Propagate Forward \texttt{INSERT/DELETE} Success}.
Let $i$ be such that $r^{i} = r^{i-1} \concat \langle u, op, R, \overline{t} \rangle \concat s$, where $\mathit{op} \in \{\mathtt{INSERT}, \\ \mathtt{DELETE}\}$, $s = \langle \mathit{db}, U, \mathit{sec}, T, V, c \rangle  \in \Omega_{M}$ and $\mathit{last}(r^{i-1}) =  \langle \mathit{db}', U', \mathit{sec}',  T', V', c' \rangle$.
From the rule's definition, $r, i-1 \attMod \phi$ holds.
From this and the induction hypothesis, it follows that $\mathit{secure}_{P,u}(r,i-1 \attMod \phi)$ holds.
From \thref{theorem:f:composition:pec:1}, the action $\langle u,\mathit{op}, R, \overline{t}\rangle$ preserves the equivalence class with respect to $r^{i-1}$, $P$, and $u$. 	
From $\mathit{reviseBelif}(r^{i-1}, \phi, r^{i})$, it follows that the execution of $\langle u,\mathit{op}, R, \overline{t}\rangle$ does not alter the content of the tables in $\mathit{tables}(\phi)$  for any $v \in \llbracket r^{i-1}\rrbracket_{P,u}$.
From this, \thref{theorem:secure:extend:on:runs:insert:delete}, and $\mathit{secure}_{P,u}(r,i-1 \attMod \phi)$, it follows that $\mathit{secure}_{P,u} \\ (r,i \attMod \phi)$ holds.

\item \emph{Propagate Forward \texttt{INSERT} Success - 1}.
Let $i$ be such that $r^{i} = r^{i-1} \concat \langle u, op, R, \overline{t} \rangle \concat s$, where $\mathit{op}$ is one of  $\{\mathtt{INSERT},  \mathtt{DELETE}\}$, $s = \langle \mathit{db}, U, \mathit{sec}, T, V, c \rangle  \in \Omega_{M}$, and $\mathit{last}(r^{i-1}) =  \langle \mathit{db}', U', \mathit{sec}',  T', V', c' \rangle$.
From the rule's definition, $r, i-1 \attMod \phi$ holds.
From this and the induction hypothesis, it follows that $\mathit{secure}_{P,u}(r,i-1 \attMod \phi)$ holds.
From \thref{theorem:f:composition:pec:1}, the action $\langle u,\mathit{op}, R, \overline{t}\rangle$ preserves the equivalence class with respect to $r^{i-1}$, $P$, and $u$. 	
We claim that the execution of $\langle u,\mathit{\mathtt{INSERT}}, R, \overline{t}\rangle$ does not alter the content of the tables in $\mathit{tables}(\phi)$.
From this, \thref{theorem:secure:extend:on:runs:insert:delete}, and $\mathit{secure}_{P,u}(r,i-1 \attMod \phi)$, it follows that $\mathit{secure}_{P,u}  (r,i \attMod \phi)$ holds.

We now prove our claim that the execution of $\langle u,\mathit{\mathtt{INSERT}}, \\ R, \overline{t}\rangle$ does not alter the content of the tables in $\mathit{tables}(\phi)$.
From the rule's definition, it follows that $r, i-1 \attMod R(\overline{t})$ holds.
From this and \thref{theorem:attacker:model:sound}, it follows that $[R(\overline{t})]^{\mathit{last}(r^{i-1}).\mathit{db}} = \top$.
From $r, i-1 \attMod R(\overline{t})$ and the induction hypothesis, it follows that $\mathit{secure}_{P,u}(r,i-1 \attMod R(\overline{t}))$ holds.
From this and $[R(\overline{t})]^{\mathit{last}(r^{i-1}).\mathit{db}} = \top$, it follows that $[R(\overline{t})]^{\mathit{last}(v).\mathit{db}} = \top$ for any $v \in \llbracket r^{i-1}\rrbracket_{P,u}$.
From this and the relational calculus semantics, it follows that the execution of $\langle u,\mathit{op}, R, \overline{t}\rangle$ does not alter the content of the tables in $\mathit{tables}(\phi)$ for any $v \in \llbracket r^{i-1}\rrbracket_{P,u}$.

\item \emph{Propagate Forward \texttt{DELETE} Success - 1}.
The proof for this case is similar to that of \emph{Propagate Forward \texttt{INSERT} Success - 1}.

\item \emph{Propagate Backward \texttt{INSERT/DELETE} Success}.
Let $i$ be such that $r^{i} = r^{i-1} \concat \langle u, op, R, \overline{t} \rangle \concat s$, where $\mathit{op} \in \{\mathtt{INSERT}, \\ \mathtt{DELETE}\}$, $s = \langle \mathit{db}, U, \mathit{sec}, T, V, c \rangle  \in \Omega_{M}$ and $\mathit{last}(r^{i-1}) =  \langle \mathit{db}', U', \mathit{sec}',  T', V', c' \rangle$.
From the rule's definition, $r, i  \\ \attMod \phi$ holds.
From this and the induction hypothesis, it follows that $\mathit{secure}_{P,u}(r,i \attMod \phi)$ holds.
From \thref{theorem:f:composition:pec:1}, the action $\langle u,\mathit{op}, R, \overline{t}\rangle$ preserves the equivalence class with respect to $r^{i-1}$, $P$, and $u$. 	
From $\mathit{reviseBelif}(r^{i-1}, \phi,  r^{i})$, it follows that the execution of $\langle u,\mathit{op}, R, \overline{t}\rangle$ does not alter the content of the tables in $\mathit{tables}(\phi)$  for any $v \in \llbracket r^{i-1}\rrbracket_{P,u}$.
From this, \thref{theorem:secure:extend:on:runs:insert:delete}, and $\mathit{secure}_{P,u}(r,i \attMod \phi)$, it follows that $\mathit{secure}_{P,u}  (r,i-1 \attMod \phi)$ holds.

\item \emph{Propagate Backward \texttt{INSERT} Success - 1}.
Let $i$ be such that $r^{i} = r^{i-1} \concat \langle u, op, R, \overline{t} \rangle \concat s$, where $\mathit{op}$ is one of $\{\mathtt{INSERT}, \mathtt{DELETE}\}$, $s = \langle \mathit{db}, U, \mathit{sec}, T, V, c \rangle  \in \Omega_{M}$ and $\mathit{last}(r^{i-1}) =  \langle \mathit{db}', U', \mathit{sec}',  T', V', c' \rangle$.
From the rule's definition, $r, i \attMod \phi$ holds.
From this and the induction hypothesis, it follows that $\mathit{secure}_{P,u}(r,i \attMod \phi)$ holds.
From \thref{theorem:f:composition:pec:1}, the action $\langle u,\mathit{op}, R, \overline{t}\rangle$ preserves the equivalence class with respect to $r^{i-1}$, $P$, and $u$. 	
We claim that the execution of $\langle u,\mathit{\mathtt{INSERT}}, R, \overline{t}\rangle$ does not alter the content of the tables in $\mathit{tables}(\phi)$ for any $v \in \llbracket r^{i-1}\rrbracket_{P,u}$ (the proof of this claim is in the proof of the \emph{Propagate Forward \texttt{INSERT} Success - 1} case).
From this, \thref{theorem:secure:extend:on:runs:insert:delete}, and $\mathit{secure}_{P,u}(r,i \attMod \phi)$, it follows that $\mathit{secure}_{P,u}  (r,i-1 \attMod \phi)$ holds.

\item \emph{Propagate Backward \texttt{DELETE} Success - 1}.
The proof for this case is similar to that of \emph{Propagate Forward \texttt{DELETE} Success - 1}.

\item \emph{Reasoning}.
Let $\Delta$ be a subset of $\{ \delta \,| \, r,i \attMod \delta \}$ and $\mathit{last}(r^{i}) =  \langle \mathit{db}, U, \mathit{sec},  T, V, c \rangle$.
From the induction hypothesis, it follows that $\mathit{secure}_{P,u}(r,i \attMod \delta)$ holds for any $\delta \in \Delta$.
Note that, given any $\delta \in \Delta$, from $r,i \attMod \delta$ and \thref{theorem:attacker:model:sound}, it follows that $\delta$ holds in $\mathit{last}(r^{i})$.
From this, $\mathit{secure}_{P,u}(r,i \attMod \delta)$ holds for any $\delta \in \Delta$, $\Delta \models_{\mathit{fin}} \phi$, and \thref{theorem:reasoning:preserves:security}, it follows that  $\mathit{secure}_{P,u}(r, \\ i \attMod \phi)$ holds.

\item \emph{Learn \texttt{INSERT} Backward - 3}.
Let $i$ be such that $r^{i} = r^{i-1} \concat \langle u, \mathtt{INSERT}, R, \overline{t} \rangle \concat s$, where $s = \langle \mathit{db}', U', \mathit{sec}', T', V', \\ c' \rangle  \in \Omega_{M}$ and $\mathit{last}(r^{i-1}) =  \langle \mathit{db}, U, \mathit{sec},  T, V, c \rangle$, and $\phi$ be $\neg R(\overline{t})$.
From the rule's definition, $\mathit{secEx}(s) = \bot$.
From this and the LTS rules, it follows that $f(\mathit{last}(r^{i-1}), \langle u, \\ \mathtt{INSERT}, R, \overline{t} \rangle) = \top$.
From this and $f$'s definition, it follows that $f_{\mathit{conf}}^{u}(\mathit{last}(r^{i-1}),  \langle u, \mathtt{INSERT}, R, \overline{t} \rangle) = \top$ because $\mathit{user}(\mathit{last}(r^{i-1}),  \langle u, \mathtt{INSERT}, R, \overline{t} \rangle) = u$.
From this and $f_{\mathit{conf}}^{u}$'s definition, it follows $\mathit{secure}(u,\phi,\mathit{last}(r^{i-1})) \\ = \top$ because $\phi = \mathit{getInfo}(\langle u,  \mathtt{INSERT}, R, \overline{t}\rangle)$.
From this and \thref{theorem:secure:sound:under:approximation}, it follows that $\mathit{secure}_{P,u}(r,i-1 \attMod \phi)$ holds.

\item \emph{Learn \texttt{DELETE} Backward - 3}.
The proof for this case is similar to that of \emph{Learn \texttt{INSERT} Backward - 3}.

\item \emph{Propagate Forward Disabled Trigger}.
Let $i$ be such that $r^{i} = r^{i-1} \concat t \concat s$, where $s = \langle \mathit{db}, U, \mathit{sec}, T, V, c \rangle  \in \Omega_{M}$, $\mathit{last}(r^{i-1}) =  \langle \mathit{db}, U, \mathit{sec},  T, V, c \rangle$, and $t$ be a trigger.
Furthermore, let $\psi$ be $t$'s condition where all free variables are replaced with $\mathit{tpl}(\mathit{last}(r^{i-1}))$.
From the rule, it follows that $r,i-1 \attMod \phi$.
From this and the induction hypothesis, it follows that $\mathit{secure}_{P,u} (r, i-1 \attMod \phi)$ holds.
Furthermore, from \thref{theorem:f:composition:pec:2}, it follows that $t$ preserves the equivalence class with respect to $r^{i-1}$, $P$, and $u$.	
If the trigger's action is an \texttt{INSERT} or a \texttt{DELETE} operation, we claim that the operation does not change the content of any table in $\mathit{tables}(\phi)$ for any run $v \in \llbracket r^{i-1} \rrbracket_{P,u}$.
From this, the fact that $t$ preserves the equivalence class with respect to $r^{i-1}$, $P$, and $u$, \thref{theorem:secure:extend:on:runs:triggers}, and $\mathit{secure}_{P,u} (r, i-1 \attMod \phi)$, it follows that also $\mathit{secure}_{P,u} (r, i \attMod \phi)$ holds.

We now prove our claim.
Assume that $t$'s action in either an  \texttt{INSERT} or a \texttt{DELETE} operation.
From the rule, it follows that $r,i-1 \attMod \neg \psi$.
From this and \thref{theorem:attacker:model:sound}, $[\psi]^{\mathit{last}(r^{i-1})} = \bot$.
From $r,i-1 \attMod \neg \psi$ and the induction hypothesis, it follows that $\mathit{secure}_{P,u}  (r, i-1 \attMod \psi)$ holds.
From this and $[\psi]^{\mathit{last}(r^{i-1}).\mathit{db}} = \bot$, it follows that $[\psi]^{v.\mathit{db}} = \bot$ for any run $v \in \llbracket r^{i-1} \rrbracket_{P,u}$.
Therefore, the trigger $t$ is disabled in any run $v \in \llbracket r^{i-1} \rrbracket_{P,u}$.
From this and the LTS semantics, it follows that $t$'s execution does not change the content of any  table in $\mathit{tables}(\phi)$ for any run $v \in \llbracket r^{i-1} \rrbracket_{P,u}$.

\item \emph{Propagate Backward Disabled Trigger}.
The proof for this case is similar to that of  \emph{Propagate Forward Disabled Trigger}.

\item \emph{Learn \texttt{INSERT} Forward}.
Let $i$ be such that $r^{i} = r^{i-1} \concat t \concat s$, where $s = \langle \mathit{db}, U, \mathit{sec}, T, V, c \rangle  \in \Omega_{M}$, $\mathit{last}(r^{i-1}) =  \langle \mathit{db}, U, \mathit{sec},  T, V, c \rangle$, and $t$ be a trigger, and $\phi$ be $R(\overline{t})$.
Furthermore, let $\psi$ be $t$'s condition where all free variables are replaced with $\mathit{tpl}(\mathit{last}(r^{i-1}))$.
From the rule's definition, it follows  that $t$'s action is  $\langle u',\mathtt{INSERT}, R, \overline{t}\rangle$ and that $r, i-1 \attMod \psi$ holds.
From \thref{theorem:attacker:model:sound} and $r, i-1 \attMod \psi$, it follows that $[\psi]^{\mathit{last}(r^{i-1}).\mathit{db}} = \top$.
From this, $\mathit{secEx}(s) = \bot$, and $\mathit{Ex}(s) = \emptyset$, it follows that $t$'s action has been executed successfully.
From this, it follows that $\overline{t} \in s.\mathit{db}(R)$.
From  $r, i-1 \attMod \psi$ and the induction hypothesis, it follows that $\mathit{secure}_{P,u} (r, i-1 \attMod \psi)$.
From this and $[\psi]^{\mathit{last}(r^{i-1}).\mathit{db}} = \top$, it follows that $[\psi]^{\mathit{last}(v).\mathit{db}} = \top$ for any $v \in \llbracket r^{i-1}\rrbracket_{P,u}$.
From this, it follows that the trigger $t$ is enabled in any run $v \in \llbracket r^{i-1}\rrbracket_{P,u}$.
From \thref{theorem:f:composition:pec:2}, it follows that $t$ preserves the equivalence class with respect to $r^{i-1}$, $P$, and $u$.
From this, $\mathit{secEx}(s) = \bot$, $\mathit{Ex}(s) = \emptyset$, and the fact that the trigger $t$ is enabled in any run $v \in \llbracket r^{i-1}\rrbracket_{P,u}$, it follows that $t$'s action is executed successfully in any run $e(v,t)$, where $v \in \llbracket r^{i-1}\rrbracket_{P,u}$.
From this,  it follows that $\mathit{db}''(R)$, where $\mathit{db}'' = \overline{t} \in \mathit{last}(e(v,t)).\mathit{db}$, for any $v \in \llbracket r^{i-1}\rrbracket_{P,u}$.
Therefore, $\mathit{secure}_{P,u} (r, i \attMod \phi)$ holds.

\item \emph{Learn \texttt{INSERT} - FD}.
Let $i$ be such that $r^{i} = r^{i-1} \concat t \concat s$, where $s = \langle \mathit{db}, U, \mathit{sec}, T, V, c \rangle \in \Omega_{M}$, $\mathit{last}(r^{i-1}) =  \langle \mathit{db}', U', \mathit{sec}', T', V', c' \rangle$, and $t \in {\cal TRIGGER}_{D}$, and $\phi$ be $\neg \exists \overline{y},\overline{z}.\, R(\overline{v}, \overline{y}, \overline{z}) \wedge \overline{y} \neq \overline{w}$.
Furthermore,  let $\psi$ be $t$'s condition where all free variables are replaced with the values in $\mathit{tpl}(\mathit{last}(r^{i-1}))$ and $\langle u', \mathtt{INSERT}, R, (\overline{v}, \overline{w}, \overline{q}) \rangle$ be $t$'s actual action.
From the rule, it follows that $r, i-1 \attMod \psi$.
From this and \thref{theorem:attacker:model:sound}, it follows that $[\psi]^{\mathit{last}(r^{i-1}).\mathit{db}} = \top$.
From this, $\mathit{Ex}(s) = \emptyset$, and  $\mathit{secEx}(s) \\ = \bot$, it follows that $f(s',\langle u',\texttt{INSERT},R, \overline{t}\rangle)  = \top$, where $s'$ is the state just after the execution of the \texttt{SELECT} statement associated with $t$'s \texttt{WHEN} clause.
From this and $f$'s definition, it follows that $f_{\mathit{conf}}^{u} (s',\langle u', \texttt{INSERT},R,\\  \overline{t}\rangle) = \top$ because $\mathit{user}(s', \langle u',\texttt{INSERT},R, \overline{t}\rangle) = u$ since $u$ is $t$'s invoker.
From this and $f_{\mathit{conf}}^{u}$'s definition, it follows that $\mathit{secure}(u, \phi, s') = \top$.
From this, $\mathit{pState}(s')  = \mathit{pState}(\mathit{last}(r^{i-1}))$, and \thref{theorem:secure:equivalent:modulo:indistinguishable:state}, it follows $\mathit{secure}(u, \\ \phi, \mathit{last}(r^{i-1})) = \top$.
From this and \thref{theorem:secure:sound:under:approximation}, it follows $\mathit{secure}_{P,u} (r, i-1 \attMod \phi)$.
We claim that $\mathit{secure}^{\mathit{data}}_{P,u} (r, \\ i \attMod \phi)$ holds.
From this and \thref{theorem:ibsec:correctness:secure:3}, it follows that also  $\mathit{secure}_{P,u}(r,i  \attMod \phi)$ holds.

We now prove our claim that  $\mathit{secure}^{\mathit{data}}_{P,u}(r,i \attMod \phi)$ holds.
Let $s'$ be the state just after the execution of the \texttt{SELECT} statement associated with $t$'s \texttt{WHEN} clause and $s''$ be the state $\mathit{last}(r^{i-1})$.
Furthermore, for brevity's sake, in the following we omit the $\mathit{pState}$ function where needed.
For instance, with a slight abuse of notation, we write $\llbracket s' \rrbracket^{\mathit{data}}_{u,M}$ instead of $\llbracket \mathit{pState}(s') \rrbracket^{\mathit{data}}_{u,M}$.
From $\mathit{secure}(u, \phi,s') = \top$, $s' \cong_{M,u}^{\mathit{data}} s''$, \thref{theorem:secure:equivalent:modulo:indistinguishable:state}, and \thref{theorem:secure:sound:under:approximation}, it follows that $\mathit{secure}^{\mathit{data}}_{P,u}(r,i-1 \attMod \phi)$ holds.
From this, it follows that $[\phi]^{v} = [\phi]^{s''}$ for any $v \in \llbracket s'' \rrbracket^{\mathit{data}}_{u,M}$.
Furthermore, from \thref{theorem:getInfo:sound:and:complete} and $\mathit{Ex}(s) = \emptyset$, it follows that $\phi$ holds in $s''$.
Let $A_{s'',R,\overline{t}}$ be the set $\{\langle \mathit{db}[R \oplus \overline{t}], U, \mathit{sec}, T,V\rangle \in \Pi_{M} \,|\, \exists \mathit{db}' \in \Omega_{D}.\, \langle \mathit{db}', \\  U,  \mathit{sec},  T, V\rangle \in \llbracket s''\rrbracket^{\mathit{data}}_{M,u}\}$.
It is easy to see that $\llbracket s\rrbracket^{\mathit{data}}_{M,u} \subseteq A_{s'',R,\overline{t}}$.
We now show that $\phi$ holds for any $z \in A_{s'',R,\overline{t}}$.
Let $z_{1} \in \llbracket s'' \rrbracket^{\mathit{data}}_{M,u}$.
From $[\phi]^{v} = [\phi]^{s''}$ for any $v \in \llbracket s'' \rrbracket^{\mathit{data}}_{u,M}$ and the fact that $\phi$ holds in $s''$, it follows that $[\phi]^{z_{1}} = \top$.
Therefore, for any $(\overline{k}_{1},\overline{k}_{2}, \overline{k}_{3}) \in R(z_{1})$ such that $|\overline{k}_1| = |\overline{v}|$, $|\overline{k}_2| = |\overline{w}|$, and $|\overline{k}_3| = |\overline{q}|$, if $k_{1} = \overline{v}$, then $k_{2} = \overline{w}$. 
Then, for any $(\overline{k}_{1},\overline{k}_{2}, \overline{k}_{3}) \in R(z_{1}) \cup \{(\overline{v}, \overline{w}, \overline{q})\}$ such that $|\overline{k}_1| = |\overline{v}|$, $|\overline{k}_2| = |\overline{w}|$, and $|\overline{k}_3| = |\overline{q}|$, if $k_{1} = \overline{v}$, then $k_{2} = \overline{w}$.
Therefore, $\phi$ holds also in $z_{1}[R \oplus \overline{t}] \in A_{\mathit{pState}(s''), R, \overline{t}}$.	
Hence, $[\phi]^{z} = \top$ for any $z \in A_{s'',R,\overline{t}}$.
From this and $\llbracket s\rrbracket^{\mathit{data}}_{M,u} \subseteq A_{s'',R,\overline{t}}$, it follows that  $[\phi]^{z} = \top$ for any $z \in \llbracket s\rrbracket^{\mathit{data}}_{M,u}$.
From this, it follows that $\mathit{secure}^{\mathit{data}}_{P,u}(r,i \attMod \phi)$ holds.

\item \emph{Learn \texttt{INSERT} - FD - 1}.
The proof of this case is similar to that of \emph{Learn \texttt{INSERT} - FD}.

\item \emph{Learn \texttt{INSERT} - ID}.
The proof of this case is similar to that of \emph{Learn \texttt{INSERT} - FD}.
See also the proof of \emph{\texttt{INSERT} Success - ID}.	

\item \emph{Learn \texttt{INSERT} - ID - 1}.
The proof of this case is similar to that of \emph{Learn \texttt{INSERT} - ID}.

\item \emph{Learn \texttt{INSERT} Backward - 1}.
Let $i$ be such that $r^{i} = r^{i-1} \concat t \concat s$, where $s = \langle \mathit{db}', U', \mathit{sec}', T', V', c' \rangle \in \Omega_{M}$, $\mathit{last}(r^{i-1}) =  \langle \mathit{db}, U, \mathit{sec}, T, V, c \rangle$, and $t \in {\cal TRIGGER}_{D}$, and $\phi$ be $t$'s actual \texttt{WHEN} condition, where all free variables are replaced with the values in $\mathit{tpl}(\mathit{last}(r^{i-1}))$.
From the rule's definition, it follows that $\mathit{secEx}(s) = \top$.
From this, the LTS semantics, and $\mathit{secEx}(s) = \top$, it follows that $f(\mathit{last}(r^{i-1}), \langle u', \mathtt{SELECT}, \phi \rangle) = \top$.
From this and $f$'s definition, it follows $f^{u}_{\mathit{conf}}(\mathit{last}(r^{i-1}), \langle u', \mathtt{SELECT}, \\ \phi \rangle) = \top$ because $\mathit{user}(\mathit{last}(r^{i-1}), \langle u',  \mathtt{SELECT}, \phi \rangle) = u$ since $u$ is $t$'s invoker.
From this and $f^{u}_{\mathit{conf}}$'s definition, it follows that $\mathit{secure}(u,\phi,\mathit{last}(r^{i-1})) = \top$.
From this and \thref{theorem:secure:sound:under:approximation}, it follows that also $\mathit{secure}_{P,u}(r,i - 1 \attMod \phi)$ holds.

\item \emph{Learn \texttt{INSERT} Backward - 2}.
Let $i$ be such that $r^{i} = r^{i-1} \concat t \concat s$, where $s = \langle \mathit{db}', U', \mathit{sec}', T', V', c' \rangle \in \Omega_{M}$, $\mathit{last}(r^{i-1}) =  \langle \mathit{db}, U, \mathit{sec}, T, V, c \rangle$, and $t \in {\cal TRIGGER}_{D}$, and $\phi$ be $\neg R(\overline{t})$.
Furthermore, let $\mathit{act}=\langle u', \mathtt{INSERT}, R, \\ \overline{t} \rangle$ be $t$'s actual action and $\gamma$ be $t$'s actual \texttt{WHEN} condition obtained by replacing all free variables with the values in $\mathit{tpl}(\mathit{last}(r^{i-1}))$.
From the rule's definition, it follows $\mathit{secEx}(s) = \top$ and there is a $\psi$ such that $r, i-1 \attMod \psi$ and $r, i \attMod \neg \psi$.
We claim that $[\gamma]^{\mathit{db}} = \top$.
From this and  $\mathit{secEx}(s) = \top$, it follows that $f(s', \langle u', \mathtt{INSERT}, R, \overline{t}\rangle) = \top$, where $s'$ is the state obtained after the evaluation of $t$'s \texttt{WHEN} condition.
From this and $f$'s definition, it follows $f_{\mathit{conf}}^{u}(s', \langle u', \mathtt{INSERT}, R, \overline{t}\rangle) = \top$ as $\mathit{user}(s', \langle u', \\ \mathtt{INSERT}, R, \overline{t}\rangle) = u$ because $u$ is $t$'s invoker.
From this and $f^{u}_{\mathit{conf}}$'s definition, it follows $\mathit{secure}(u,\phi,s') = \top$ since $\phi$ is equivalent to $\mathit{getInfo}(\langle u', \mathtt{INSERT}, R, \overline{t}\rangle)$.
From this, \thref{theorem:secure:equivalent:modulo:indistinguishable:state}, and $\mathit{pState}(s') = \mathit{pState}(\mathit{last}(r^{i-1}))$,  it follows  $\mathit{secure}(u, \phi,\mathit{last}(r^{i-1})) = \top$.
From this and \thref{theorem:secure:sound:under:approximation}, it follows $\mathit{secure}_{P,u}(r,i - 1 \attMod \phi)$.

We now prove our claim that $[\gamma]^{\mathit{db}} = \top$.
Assume, for contradiction's sake, that this is not the case.
From this and the LTS rules, it follows that $\mathit{db} = \mathit{db}'$.
From the rule's definition, it follows that there is a $\psi$ such that $r, i-1 \attMod \psi$ and $r, i \attMod \neg \psi$.
From this, \thref{theorem:attacker:model:sound}, $s = \langle \mathit{db}', U', \mathit{sec}', T', V', c' \rangle$, and $\mathit{last}(r^{i-1}) =  \langle \mathit{db}, U, \mathit{sec}, T, \\ V, c \rangle$, it follows that $[\psi]^{\mathit{db}} = \top$ and $[\neg \psi]^{\mathit{db}'} = \top$.
Therefore, $[\psi]^{\mathit{db}} = \top$ and $[\psi]^{\mathit{db}'} = \bot$.
Hence, $\mathit{db} \neq \mathit{db}'$, which contradicts $\mathit{db} = \mathit{db}'$.

	\item \emph{Learn \texttt{DELETE} Forward}.
The proof of this case is similar to that of \emph{Learn \texttt{INSERT} Forward}.

\item \emph{Learn \texttt{DELETE} - ID}.
The proof of this case is similar to that of \emph{Learn \texttt{INSERT} - FD}.
See also the proof of \emph{\texttt{DELETE} Success - ID}.

\item \emph{Learn \texttt{DELETE} - ID - 1}.
The proof of this case is similar to that of \emph{Learn \texttt{DELETE} - ID}.

\item \emph{Learn \texttt{DELETE} Backward - 1}.
The proof of this case is similar to that of \emph{Learn \texttt{INSERT} Backward - 1}.

\item \emph{Learn \texttt{DELETE} Backward - 2}.
The proof of this case is similar to that of \emph{Learn \texttt{INSERT} Backward - 2}.

\item \emph{Propagate Forward Trigger Action}.
Let $i$ be such that $r^{i} = r^{i-1} \concat t \concat s$, where $t$ is a trigger, $s = \langle \mathit{db}, U, \mathit{sec}, T, V, c \rangle \\ \in \Omega_{M}$ and $\mathit{last}(r^{i-1}) =  \langle \mathit{db}', U', \mathit{sec}',  T', V', c' \rangle$.
From the rule's definition, $r, i-1 \attMod \phi$ holds.
From this and the induction hypothesis, it follows that $\mathit{secure}_{P,u}  (r,i-1 \attMod \phi)$ holds.
From \thref{theorem:f:composition:pec:2}, the trigger $t$ preserves the equivalence class with respect to $r^{i-1}$, $P$, and $u$. 	
We claim that the execution of $t$ does not alter the content of the tables in $\mathit{tables}(\phi)$.
From this, \thref{theorem:secure:extend:on:runs:insert:delete}, and $\mathit{secure}_{P,u}(r,i-1 \attMod \phi)$, it follows  $\mathit{secure}_{P,u} (r,i \attMod \phi)$.

We now prove our claim that the execution of $t$ does not alter the content of the tables in $\mathit{tables}(\phi)$.
If the trigger is not enabled, the claim is trivial.
In the following, we assume the trigger is enabled.
There are four cases:
\begin{compactitem}
\item $t$'s action is an \texttt{INSERT} statement.
This case amount to claiming that the \texttt{INSERT} statement $\langle u',\mathtt{INSERT}, \\ R,  \overline{t} \rangle$ does not alter the content of the tables in $\mathit{tables}(\phi)$ in case $\mathit{reviseBelif}(r^{i-1}, \phi, r^{i}) = \top$.
We proved the claim above in the \emph{Propagate Forward \texttt{INSERT/DELETE} Success} case.

\item $t$'s action is an \texttt{DELETE} statement.
The proof is similar to that of the \texttt{INSERT} case.

\item $t$'s action is an \texttt{GRANT} statement.
In this case, the action does not alter the database state and the claim follows trivially.

\item $t$'s action is an \texttt{REVOKE} statement.
The proof is similar to that of the \texttt{GRANT} case.

\end{compactitem}

\item \emph{Propagate Backward Trigger Action}.
The proof of this case is similar to \emph{Propagate Backward Trigger Action}.

\item \emph{Propagate Forward \texttt{INSERT} Trigger Action}.
Let $i$ be such that $r^{i} = r^{i-1} \concat t \concat s$, where $t$ is a trigger, $s = \langle \mathit{db}, U, \mathit{sec}, T, V, c \rangle  \in \Omega_{M}$ and $\mathit{last}(r^{i-1}) =  \langle \mathit{db}', U', \mathit{sec}', \\  T', V', c' \rangle$.
From the rule's definition, $r, i-1 \attMod \phi$ holds.
From this and the induction hypothesis, it follows that $\mathit{secure}_{P,u}  (r,i-1 \attMod \phi)$ holds.
From \thref{theorem:f:composition:pec:2}, the trigger $t$ preserves the equivalence class with respect to $r^{i-1}$, $P$, and $u$. 	
We claim that the execution of $t$ does not alter the content of the tables in $\mathit{tables}(\phi)$.
From this, \thref{theorem:secure:extend:on:runs:insert:delete}, and $\mathit{secure}_{P,u}  (r,i-1 \attMod \phi)$, it follows $\mathit{secure}_{P,u}  (r,i \attMod \phi)$.

We now prove our claim that the execution of $t$ does not alter the content of the tables in $\mathit{tables}(\phi)$.
If the trigger is not enabled, the claim is trivial.
In the following, we assume the trigger is enabled.
Then, $t$'s action is an \texttt{INSERT} statement.
This case amount to claiming that the \texttt{INSERT} statement $\langle u',\mathtt{INSERT}, R, \overline{t} \rangle$ does not alter the content of the tables in $\mathit{tables}(\phi)$ in case $r,i-1 \attMod R(\overline{t})$ holds.
We proved the claim above in the \emph{Propagate Forward \texttt{INSERT} Success - 1} case.

\item \emph{Propagate Forward \texttt{DELETE} Trigger Action}.
The proof of this case is similar to that of  \emph{Propagate Forward \texttt{INSERT} Trigger Action}.

\item \emph{Propagate Backward \texttt{INSERT} Trigger Action}.
The proof of this case is similar to that of  \emph{Propagate Forward \texttt{INSERT} Trigger Action}.

\item \emph{Propagate Backward \texttt{DELETE} Trigger Action}.
The proof of this case is similar to that of  \emph{Propagate Forward \texttt{INSERT} Trigger Action}.

\item \emph{Trigger FD \texttt{INSERT} Disabled Backward}.
Let $i$ be such that $r^{i} = r^{i-1} \concat t \concat s$, where $s = \langle \mathit{db}', U', \mathit{sec}', T', V', c' \rangle \in \Omega_{M}$, $t \in {\cal TRIGGER}_{D}$, $\mathit{last}(r^{i-1}) =  \langle \mathit{db}, U, \mathit{sec}, T, V, c \rangle$, and $\phi$ be $t$'s actual \texttt{WHEN} condition obtained by replacing all free variables with the values in $\mathit{tpl}(\mathit{last}(r^{i-1}))$.
Furthermore, let $\mathit{act}=\langle u', \mathtt{INSERT}, R, (\overline{v}, \overline{w}, \overline{q}) \rangle$ be $t$'s actual action and $\alpha$ be $\exists \overline{y},\overline{z}. R(\overline{v},  \overline{y}, \overline{z}) \wedge \overline{y} \neq \overline{w}$.
From the rule's definition, it follows that  $\mathit{secEx}(s) = \bot$.
From this, it follows that $f(\mathit{last}(r^{i-1}), \langle u', \mathtt{SELECT}, \phi \rangle)  = \top$.
From this and $f$'s definition, it follows $f^{u}_{\mathit{conf}}(\mathit{last}(r^{i-1}), \\ \langle u', \mathtt{SELECT}, \phi \rangle)  = \top$ since $\mathit{user}(\mathit{last}(r^{i-1}), \langle u', \mathtt{SELECT}, \\ \phi \rangle)  = u$ since $u$ is $t$'s invoker.
From this and $f^{u}_{\mathit{conf}}$'s definition, it follows that $\mathit{secure}(u, \neg \phi, \mathit{last}(r^{i-1}))  = \top$.
From this, it follows that  $\mathit{secure}(u,  \phi, \mathit{last}(r^{i-1})) = \top$.
From this and \thref{theorem:secure:sound:under:approximation}, it follows that also $\mathit{secure}_{P,u}  (r,i-1 \attMod \phi)$.

\item \emph{Trigger ID \texttt{INSERT} Disabled Backward}.
The proof of this case is similar to that of \emph{Trigger FD \texttt{INSERT} Disabled Backward}.

\item \emph{Trigger ID \texttt{DELETE} Disabled Backward}.
The proof of this case is similar to that of \emph{Trigger FD \texttt{INSERT} Disabled Backward}.

	\end{compactenum}
	This completes the proof of the induction step.
	
This completes the proof.
\end{proof}

In \thref{theorem:f:composition:pec:1} and \thref{theorem:f:composition:pec:2}, we show that actions and triggers preserve the equivalence class for any LTS that uses $f$ as \acf{}.

\begin{lemma}\thlabel{theorem:f:composition:pec:1}
Let $u$ be a user in ${\cal U}$, $P = \langle M, f \rangle$ be an \accessControlConfiguration{}, where $M = \langle D,\Gamma\rangle$ is a system configuration and $f$ is as above, $L$ be the $P$-LTS.
For any run $r \in \mathit{traces}(L)$ and any action $a \in {\cal A}_{D,u}$,
if $\mathit{extend}(r,a)$ is defined, then $a$ preserves the equivalence class for $r$, $P$, and $u$.
\end{lemma}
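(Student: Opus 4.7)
The plan is to mirror the structure of the proof of Lemma~\ref{theorem:f:conf:pec:1}, adapting it so that the combined access control function $f = f_{\mathit{int}} \wedge f_{\mathit{conf}}^{\mathit{user}(\cdot,\cdot)}$ plays the role previously played by $f_{\mathit{conf}}^u$ alone. The key new ingredient is Lemma~\ref{theorem:f:composition:soundness}, which tells us that $f$ returns the same verdict on any two data-indistinguishable states whose triggers, invokers, and tuples agree. Together with Lemma~\ref{theorem:f:conf:soundness} (for the confidentiality side) and Lemma~\ref{theorem:f:int:actions:independent:from:db} (which guarantees that $f_{\mathit{int}}$'s decision on actions in $\mathcal{A}_{D,\mathcal{U}}$ depends only on the partial state, not on $\mathit{db}$ or $c$), this lets us transfer every local reasoning step from the proof of Lemma~\ref{theorem:f:conf:pec:1} to the composed setting.

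The proof will proceed by contradiction: assume there is a run $r$ and an action $a \in \mathcal{A}_{D,u}$ with $e(r,a)$ defined but $a$ not preserving the equivalence class for $r$, $P$, and $u$. As in Lemma~\ref{theorem:f:conf:pec:1}, from $e(r,a)$ being defined we first deduce $\mathit{triggers}(\mathit{last}(r')) = \epsilon$ for every $r' \in \llbracket r \rrbracket_{P,u}$, so that $e(r',a)$ is always defined. I would then split on the shape of $a$ (SELECT, INSERT, DELETE, GRANT/$\oplus^*$, REVOKE, CREATE, ADD\_USER), and within each case on whether the action raises a security exception, an integrity exception, or succeeds. In each subcase I would show two things: (i) $f(\mathit{last}(r'),a) = f(\mathit{last}(r),a)$ for every $r' \in \llbracket r \rrbracket_{P,u}$, which gives that $e(r,a)$ and $e(r',a)$ agree on $\mathit{secEx}$ and $\mathit{Ex}$; and (ii) the resulting partial states are data-indistinguishable, which, together with consistency of the $u$-projections inherited from $r \cong_{P,u} r'$, yields $e(r',a) \in \llbracket e(r,a) \rrbracket_{P,u}$. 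The bijection $b(r') = e(r',a)$ then contradicts the assumption.

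For (i), the new piece relative to Lemma~\ref{theorem:f:conf:pec:1} is the $f_{\mathit{int}}$ factor: I would invoke Lemma~\ref{theorem:f:int:actions:independent:from:db} to show that $f_{\mathit{int}}$ returns the same value on $\mathit{last}(r)$ and $\mathit{last}(r')$ (since their partial states differ only in the database content, and the rules defining $\auth^{\mathit{appr}}$ used on $\mathcal{A}_{D,\mathcal{U}}$-actions ignore $\mathit{db}$), and then re-use Lemma~\ref{theorem:f:conf:soundness} for the $f_{\mathit{conf}}^u$ factor exactly as in Lemma~\ref{theorem:f:conf:pec:1}. The conjunctive structure of $f$ then gives the required agreement. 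For (ii), the case analysis for data-indistinguishability of the post-states is the same as in Lemma~\ref{theorem:f:conf:pec:1}: the critical subcases are the successful \texttt{INSERT}/\texttt{DELETE}, where one must invoke $\mathit{noLeak}$ and the authorization of $R$ to conclude that every view $v$ with $R \in \mathit{tDet}(v,s,M)$ has all its determining tables readable, and the \texttt{REVOKE} case, where one uses the deterministic functional behaviour of $\mathit{revoke}$ on equal policies.

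The main obstacle is not conceptual but bureaucratic: ensuring that none of the cases in Lemma~\ref{theorem:f:conf:pec:1} is broken by the extra conjunct $f_{\mathit{int}}$. The only subtlety is in the \texttt{GRANT}/\texttt{REVOKE} cases, where $f_{\mathit{int}}$ may now reject an action (via $\auth^{\mathit{appr}}$) that $f_{\mathit{conf}}^u$ would have accepted; here I would verify that Lemma~\ref{theorem:f:int:actions:independent:from:db} covers the relational-calculus-free fragment of $\auth^{\mathit{appr}}$ used on non-trigger actions, so that $f_{\mathit{int}}$'s decision is a function of the partial state only, and hence agrees on $\mathit{last}(r)$ and $\mathit{last}(r')$. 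Modulo this, the proof is a direct lifting of Lemma~\ref{theorem:f:conf:pec:1}.
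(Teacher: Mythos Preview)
Your proposal is correct and follows essentially the same route as the paper: a contradiction argument with a case split on the shape of $a$ and on the exception status, lifting the proof of Lemma~\ref{theorem:f:conf:pec:1} to the composed PDP via Lemma~\ref{theorem:f:composition:soundness}. The only stylistic difference is that the paper invokes Lemma~\ref{theorem:f:composition:soundness} as a black box for the agreement of $f$ across data-indistinguishable states, whereas you sometimes unpack it into its constituents (Lemma~\ref{theorem:f:int:actions:independent:from:db} for the $f_{\mathit{int}}$ factor and Lemma~\ref{theorem:f:conf:soundness} for the $f_{\mathit{conf}}^u$ factor); both arguments are equivalent.
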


\begin{proof}

Let $u$ be a user in ${\cal U}$, $P = \langle M, f \rangle$ be an \accessControlConfiguration{}, where $M = \langle D,\Gamma\rangle$ is a system configuration and $f$ is as above, and $L$ be the $P$-LTS.
In the following, we use $e$ to refer to the $\mathit{extend}$ function.
We prove our claim by contradiction.
Assume, for contradiction's sake, that there is a run $r  \in \mathit{traces}(L)$ and an action $a \in {\cal A}_{D,u}$ such that $\mathit{e}(r,a)$ is defined and $a$ does not preserve the equivalence class for $r$, $P$, and $u$.
According to the LTS semantics, the fact that  $\mathit{e}(r,a)$ is defined implies that $\mathit{triggers}(\mathit{last}(r)) = \epsilon$.
Therefore,  $\mathit{triggers}(\mathit{last}(r')) = \epsilon$ holds as well for any for any $r' \in \llbracket r \rrbracket_{P,u}$ (because $r$ and $r'$ are indistinguishable and, therefore, their projections are consistent), and, thus, $\mathit{e}(r',a)$ is defined as well for any $r' \in \llbracket r \rrbracket_{P,u}$.
There are a number of cases depending on $a$:
\begin{compactenum}
\item $a = \langle u, \mathtt{SELECT}, q \rangle$.
There are two cases:
\begin{compactenum}
\item $\mathit{secEx}(\mathit{last}(e(r,a))) = \bot$.
From the LTS rules and  $\mathit{secEx}( \mathit{last}(e(r,a))) = \bot$, it follows that $f(\mathit{last}(r),a) \\ = \top$.
From this and \thref{theorem:f:composition:soundness}, it follows that $f(\mathit{last}(r'), a) = \top$ for any $r' \in \llbracket r \rrbracket_{P,u}$.
From this and the LTS rules, it follows $\mathit{secEx}(\mathit{last}(e(r',a))) = \bot$ for any $r' \in \llbracket r \rrbracket_{P,u}$.
From $f(\mathit{last}(r'),a) = \top$ for any $r' \in \llbracket r \rrbracket_{P,u}$, it follows $f_{\mathit{conf}}^{\mathit{user}(\mathit{last} (r'),a)} (\mathit{last}(r'),a) \\ = \top$ for any $r' \in \llbracket r \rrbracket_{P,u}$.
Note that $\mathit{user}(\mathit{last}(r'),a) \\ = u$ for any $r' \in \llbracket r \rrbracket_{P,u}$ because $\mathit{trigger}(\mathit{last}(r')) = \epsilon$ and $u \in {\cal A}_{D,u}$.
From this, $f_{\mathit{conf}}^{\mathit{user}(\mathit{last}(r'),a)}(\mathit{last}(r'),\\ a)  = \top$ for any $r' \in \llbracket r \rrbracket_{P,u}$, and $f_{\mathit{conf}}^{u}$'s definition, it follows that $\mathit{secure}(u,q,  \mathit{last}(r'))  = \top$ for any $r' \in \llbracket r \rrbracket_{P,u}$.
From this and \thref{theorem:secure:sound:under:approximation}, it follows that $[q]^{\mathit{last}(r').\mathit{db}} = [q]^{\mathit{last}(r).\mathit{db}}$ for all $r' \in \llbracket r \rrbracket_{P,u}$.
Furthermore, it follows trivially from the LTS rule \emph{\texttt{SELECT} Success}, that the state after $a$'s execution is data indistinguishable from $\mathit{last}(r)$.
It is also easy to see that $e(r',a)$ is well-defined for any $r' \in \llbracket r \rrbracket_{P,u}$.
From the considerations above and $r' \in \llbracket r \rrbracket_{P,u}$, it follows trivially that $e(r',a) \in  \llbracket e(r,a) \rrbracket_{P,u}$. The bijection $b$ is trivially $b(r') = e(r',a)$. 
This leads to a contradiction.

\item $\mathit{secEx}(\mathit{last}(e(r,a))) = \top$.
From the LTS rules and  $\mathit{secEx}(\mathit{last}(e(r,a))) = \top$, it follows that $f(\mathit{last}(r),a) \\ = \bot$.
From this and \thref{theorem:f:composition:soundness}, it follows that $f(\mathit{last}(r'), a) = \bot$ for any $r' \in \llbracket r \rrbracket_{P,u}$.
From this and the LTS rules, it follows $\mathit{secEx}(\mathit{last}(e(r',a))) = \top$ for any $r' \in \llbracket r \rrbracket_{P,u}$.
The data indistinguishability between $\mathit{last}(e(r',a))$ and $\mathit{last}(e(r,a))$ follows trivially from the data indistinguishability between $\mathit{last}(r')$ and $\mathit{last}(r)$.
Therefore, for any run $r' \in \llbracket r \rrbracket_{P,C}$, there is exactly one run $e(r',a)$. 
From the considerations above, it follows trivially that $e(r',a) \\ \in  \llbracket e(r,a) \rrbracket_{P,u}$.
The bijection $b$ is trivially $b(r') = e(r',a)$. This leads to a contradiction.

\end{compactenum}
Both cases leads to a contradiction. 
This completes the proof for $a = \langle u, \mathtt{SELECT}, q \rangle$.

\item $a = \langle u, \mathtt{INSERT}, R, \overline{t} \rangle$. 
In the following, we denote by $\mathit{gI}$ the function $\mathit{getInfo}$, by $\mathit{gS}$ the function $\mathit{getInfoS}$, and by $\mathit{gV}$ the function $\mathit{getInfoV}$.
There are three cases:

\begin{compactenum}
\item $\mathit{secEx}(\mathit{last}(e(r,a))) = \bot$ and $\mathit{Ex}(\mathit{last}(e(r,a))) = \emptyset$. 
From the LTS rules and  $\mathit{secEx}(\mathit{last}(e(r,a))) = \bot$, it follows that $f(\mathit{last}(r),a) = \top$.
From this and \thref{theorem:f:composition:soundness}, it follows that $f(\mathit{last}(r'),a) = \top$ for any $r' \in \llbracket r \rrbracket_{P,u}$.
From this and the LTS rules, it follows that $\mathit{secEx}(\mathit{last}(e(r',a))) = \bot$ for any $r' \in \llbracket r \rrbracket_{P,u}$.
From $f(\mathit{last}(r),a) = \top$, it follows that $f_{\mathit{conf}}^{u}(\mathit{last}(r),a) = \top$ because $\mathit{user}(\mathit{last}(r),a) = u$ since $\mathit{trigger}(\mathit{last}(r), a) = \epsilon$ and $a \in {\cal A}_{D,u}$.
From this and $f_{\mathit{conf}}^{u}$'s definition, it follows that $\mathit{secure}(u, \\ \mathit{gS}(\gamma, \mathit{act}), \mathit{last}(r))$ holds for any integrity constraint $\gamma$ in $\mathit{Dep}(\Gamma,a)$.
From $\mathit{Ex}(\mathit{last}(e(r,a))) = \emptyset$ and \thref{theorem:getInfo:sound:and:complete}, it follows $[\mathit{gS}(\gamma, \mathit{act})]^{\mathit{last}(r).\mathit{db}} = \top$.
From this,  $\mathit{secure}(u,  \mathit{gS}(\gamma, \mathit{act}), \mathit{last}(r))$, and \thref{theorem:secure:sound:under:approximation}, it follows that $[\mathit{gS}(\gamma, \mathit{act})]^{\mathit{last}(r').\mathit{db}} = \top$ for any $r'  \in \llbracket r\rrbracket_{P,u}$. 
From this and \thref{theorem:getInfo:sound:and:complete}, it follows that $\mathit{Ex}(\mathit{last}(e(r',a))) = \emptyset$ for any $r'  \in \llbracket r\rrbracket_{P,u}$. 
We  claim that, for any $r'  \in \llbracket r\rrbracket_{P,u}$, $\mathit{last}(e(r,a))$ and $\mathit{last}(e(r', a))$ are data indistinguishable.
From this and the above considerations, it follows trivially that $e(r',a) \in  \llbracket e(r,a) \rrbracket_{P,u}$.
The bijection $b$ is trivially $b(r') = e(r',a)$.
This leads to a contradiction.

We now prove our claim that for any $r'  \in \llbracket r\rrbracket_{P,u}$, $\mathit{last}(e(r,a))$ and $\mathit{last}(e(r',a))$ are data indistinguishable.
We prove the claim by contradiction.
Let $s_{2} = \langle \mathit{db}_{2}, U_{2},\mathit{sec}_{2}, T_{2},V_{2} \rangle$ be $\mathit{pState}(\mathit{last}(e(r, a)))$, $s_{2}' = \langle \mathit{db}_{2}', U_{2}',\mathit{sec}_{2}', T_{2}',V_{2}' \rangle$ be $\mathit{pState}(\mathit{last}(e  (r',a)))$, $s_{1} = \langle \mathit{db}_{1}, U_{1},\mathit{sec}_{1}, T_{1},V_{1} \rangle$ be $\mathit{pState}(\mathit{last}  (r))$, and $s_{1}' = \langle \mathit{db}_{1}', U_{1}',\mathit{sec}_{1}', T_{1}',V_{1}'\rangle$ be $\mathit{pState}(\mathit{last}  (r'))$.
In the following, we denote  the $\mathit{permissions}$ function by $p$.
Furthermore, note that $s_{1}$ and $s_{1}'$ are data-indistinguishable because $r'  \in \llbracket r\rrbracket_{P,u}$.
There are a number of cases:
\begin{compactenum}

\item $U_{2} \neq U_{2}'$. 
Since $a$ is an \texttt{INSERT} operation, it follows that $U_{1} = U_{2}$ and $U_{1}' = U_{2}'$.
Furthermore, from $s_{1} \cong_{M,u}^{\mathit{data}} s_{1}'$, it follows that $U_{1} = U_{1}'$.
Therefore, $U_{2} = U_{2}'$ leading to a contradiction.

\item $\mathit{sec}_{2} \neq \mathit{sec}_{2}'$. 
The proof is similar to the case $U_{2} \neq U_{2}'$. 

\item $\mathit{T}_{2} \neq \mathit{T}_{2}'$. 
The proof is similar to the case $U_{2} \neq U_{2}'$. 

\item $\mathit{V}_{2} \neq \mathit{V}_{2}'$. 
The proof is similar to the case $U_{2} \neq U_{2}'$.

\item there is a table $R'$ for which $\langle \oplus, \mathtt{SELECT},  R\rangle \in p(s_{2},u)$ and $\mathit{db}_{2}(R') \neq \mathit{db}_{2}'(R')$.
Note that $p(s_{2},u) = p(s_{1},u)$.
There are two cases:
\begin{compactitem}
\item $R = R'$.
From $s_{1} \cong_{M,u}^{\mathit{data}} s_{1}'$ and $\langle \oplus, \mathtt{SELECT}, R\rangle \\ \in p(s_{2},u) $, it follows that $\mathit{db}_{1}(R') = \mathit{db}_{1}'(R')$.
From this and the fact that $a$ has been executed successfully both in $e(r,a)$ and $e(r',a)$, it follows that $\mathit{db}_{2}(R') = \mathit{db}_{1}  (R') \cup \{\overline{t}\}$ and  $\mathit{db}_{2}'(R') = \mathit{db}_{1}'(R') \cup \{\overline{t}\}$.
From this and $\mathit{db}_{1}(R') = \mathit{db}_{1}'(R')$, it follows that $\mathit{db}_{2}(R') = \mathit{db}_{2}'(R')$ leading to a contradiction.

\item $R \neq R'$.
From $s_{1} \cong_{M,u}^{\mathit{data}} s_{1}'$ and $\langle \oplus, \mathtt{SELECT}, R\rangle \\ \in p(s_{2},u)$, it follows that $\mathit{db}_{1}(R') = \mathit{db}_{1}'(R')$.
From this and the fact that $a$ does not modify $R'$, it follows that $\mathit{db}_{1}(R') = \mathit{db}_{2}(R')$ and  $\mathit{db}_{1}'(R') = \mathit{db}_{2}'(R')$.
From this and $\mathit{db}_{1}(R') = \mathit{db}_{1}'(R')$, it follows that $\mathit{db}_{2}(R') = \mathit{db}_{2}'(R')$ leading to a contradiction.
\end{compactitem}

\item there is a view $v$ for which $\langle \oplus, \mathtt{SELECT},  v\rangle \in p(s_{2},  u)$ and $\mathit{db}_{2}(v) \neq \mathit{db}_{2}'(v)$.
Note that $p(s_{2},  u) \\ = p(s_{1},u)$.
Since $a$ has been successfully executed in both states, we know that $\mathit{leak}(s_{1},  a, u)$ hold.
There are two cases:
\begin{compactitem}
\item $R \not\in \mathit{tDet}(v,s,M)$.
Then, $v(s_{1}) = v(s_{2})$ and  $v(s_{1}') = v(s_{2}')$ (because $R$'s content does not determine $v$'s materialization).
From $s_{1} \cong_{M,u}^{\mathit{data}} s_{1}'$ and the fact that $a$ modifies only $R$, it follows that  $v(\mathit{db}_{2}) = v(\mathit{db}_{2}')$ leading to  a contradiction.

\item $R \in \mathit{tDet}(v,s,M)$ and for all $o \in \mathit{tDet}(v,s, M)$, $\langle \oplus, \mathtt{SELECT}, o \rangle \in  p(s_{1},u)$. 
From this and $s_{1} \\ \cong_{M,u}^{\mathit{data}} s_{1}'$, it follows that, for all $o \in \mathit{tDet}(v, s, \\ M)$, $o(s_{1}) = o(s_{1}')$.
If $o \neq R$, $o(s_{1}) = o(s_{1}') = o(s_{2}) = o(s_{2}')$.
From  $\langle \oplus, \mathtt{SELECT},  R\rangle \in p(s_{1},u) $ and $s_{1} \cong_{M,u}^{\mathit{data}} s_{1}'$, it follows that $\mathit{db}_{1}(R) = \mathit{db}_{1}'(R)$.
From this and the fact that $a$ has been executed successfully both in $e(r,a)$ and $e(r',a)$, it follows that $\mathit{db}_{2}(R) = \mathit{db}_{1}  (R) \cup \{\overline{t}\}$ and  $\mathit{db}_{2}'(R) = \mathit{db}_{1}'(R) \cup \{\overline{t}\}$.
From this and $\mathit{db}_{1}(R) = \mathit{db}_{1}'(R)$, it follows that $\mathit{db}_{2}(R) = \mathit{db}_{2}'(R)$.
From this and for all $o \in \mathit{tDet}(v,s, M)$ such that $o \neq R$, $o(s_{2}) = o(s_{2}')$, it follows that for all $o \in \mathit{tDet}(v,s,M)$, $o(s_{2}) = o(s_{2}')$.
Since the content of all tables determining $v$ is the same in $s_{2}$ and $s_{2}'$, it follows that  $\mathit{db}_{2}(v) = \mathit{db}_{2}'(v)$ leading to  a contradiction.
\end{compactitem}

\end{compactenum}
All the cases lead to a contradiction.

\item $\mathit{secEx}(\mathit{last}(e(r,a))) = \bot$ and $\mathit{Ex}(\mathit{last}(e(r,a))) \neq \emptyset$.
From the LTS rules and  $\mathit{secEx}(\mathit{last}(e(r,a))) = \bot$, it follows that $f(\mathit{last}(r),a) = \top$.
From this and \thref{theorem:f:composition:soundness}, it follows that $f(\mathit{last}(r'),a) = \top$ for any $r' \in \llbracket r \rrbracket_{P,u}$.
From this and the LTS rules, it follows that $\mathit{secEx}( \mathit{last}(e(r',a))) = \bot$ for any $r' \in \llbracket r \rrbracket_{P,u}$.
Assume that the exception has been caused by the constraint $\gamma$, i.e., $\gamma \in \mathit{Ex}(\mathit{last}(e(r,a)))$.
From this and \thref{theorem:getInfo:sound:and:complete}, it follows that $\mathit{gV}(\gamma, \mathit{a})$ holds in $\mathit{last}(r).\mathit{db}$.
From  $f(\mathit{last}(r),a) = \top$ and $f$'s definition, it follows that $f_{\mathit{conf}}^{u}(\mathit{last}(r),a) = \top$ because $\mathit{user}(\mathit{last}(r),a) = u$ since $\mathit{trigger}(\mathit{last}(r))  = \epsilon$ and $a \in {\cal A}_{D,u}$.
From this and $f_{\mathit{conf}}^{u}$'s definition, it follows that $\mathit{secure}(u, \mathit{gV}(\gamma, \mathit{a}), \mathit{last}(r))$ holds.
From this, \thref{theorem:secure:sound:under:approximation}, and $[\mathit{gV}(\gamma, \mathit{a})]^{\mathit{last}(r).\mathit{db}} = \top$, it follows that also  $[\mathit{gV}(\gamma, \mathit{act})]^{\mathit{last}(r').\mathit{db}} = \top$ for any $r'  \in \llbracket r\rrbracket_{P,u}$. 
From this and \thref{theorem:getInfo:sound:and:complete}, it follows that $\gamma \in \mathit{Ex}(\mathit{last}(e(r', a)))$ for any $r'  \in \llbracket r\rrbracket_{P,u}$.
The data indistinguishability between $\mathit{last}(e(r,a))$ and $\mathit{last}(e(r',a))$ follows trivially from the data indistinguishability between $\mathit{last}(r)$ and $\mathit{last}(r')$ for any $r' \in \llbracket r \rrbracket_{P,u}$. 
Therefore, for any run $r' \in \llbracket r \rrbracket_{P,u}$, there is exactly one run $e(r',a)$.
From the considerations above, it follows trivially that $e(r',a) \in  \llbracket e(r,a) \rrbracket_{P,u}$.
The bijection $b$ is trivially $b(r') = e(r',a)$.
This leads to a contradiction.

\item $\mathit{secEx}(\mathit{last}(e(r,a))) = \top$.
From the LTS rules and  $\mathit{secEx}(\mathit{last}(e(r,a))) = \top$, it follows that $f(\mathit{last}(r),a) \\ = \bot$.
From this and \thref{theorem:f:composition:soundness}, it follows that $f(\mathit{last}(r'), a) = \bot$ for any $r' \in \llbracket r \rrbracket_{P,u}$.
From this and the LTS rules, it follows $\mathit{secEx}(\mathit{last}(e(r',a))) = \top$ for any $r' \in \llbracket r \rrbracket_{P,u}$.
The data indistinguishability between $\mathit{last}(e(r,a))$ and $\mathit{last}(e(r',a))$ follows trivially from that between $\mathit{last}(r)$ and $\mathit{last}(r')$ for any $r' \in \llbracket r \rrbracket_{P,u}$.
Therefore, for any run $r' \in \llbracket r \rrbracket_{P,u}$, there is exactly one run $e(r',a)$.
From the considerations above, it follows trivially that $e(r',a) \in  \llbracket e(r,a) \rrbracket_{P,u}$.
The bijection $b$ is trivially $b(r') = e(r',a)$. This leads to a contradiction.
\end{compactenum}
All cases lead to a contradiction.
This completes the proof for $a = \langle u, \mathtt{INSERT}, R, \overline{t} \rangle$.

\item $a = \langle u, \mathtt{DELETE}, R, \overline{t} \rangle$. 
The proof is similar to that for $a = \langle u, \mathtt{INSERT}, R, \overline{t} \rangle$.

\item $a = \langle \oplus, u', p, u\rangle$.
There are two cases:
\begin{compactenum}
\item $\mathit{secEx}(\mathit{last}(e(r,a))) = \bot$.
We assume that $p = \langle \mathtt{SELECT}, \\ O\rangle$ for some $O \in D \cup V$.
If this is not the case, the proof is trivial.
Furthermore, we also assume that $u' = u$, otherwise the proof is, again, trivial since the new permission does not influence $u$'s permissions.
From the LTS rules and  $\mathit{secEx}(\mathit{last}(e(r,a))) = \bot$, it follows that $f(\mathit{last}(r),a) = \top$.
From this and \thref{theorem:f:composition:soundness}, it follows that $f(\mathit{last}(r'),a) = \top$ for any $r' \in \llbracket r \rrbracket_{P,u}$.
From this and the LTS rules, it follows that $\mathit{secEx}(\mathit{last}(e(r',a))) = \bot$ for any $r' \in \llbracket r \rrbracket_{P,u}$. 
From $\mathit{secEx}(\mathit{last}(e(r',a))) = \bot$ and $f_{\mathit{conf}}^{u}$'s definition, it follows that $\mathit{last}(r').\mathit{sec} = \mathit{last}(e(r',  a)). \mathit{sec}$.
Therefore, since $\mathit{last}(r)$ and $\mathit{last}(r')$ are data indistinguishable, for any $r' \in \llbracket r \rrbracket_{P,u}$, then also   $\mathit{last}(e(r,a))$ and $\mathit{last}(e(r',a))$ are data indistinguishable.
Therefore, for any run $r' \in \llbracket r \rrbracket_{P,u}$, there is exactly one run $e(r',a)$.
From the considerations above, it follows trivially that $e(r',a) \in  \llbracket e(r,a) \rrbracket_{P,u}$.
The bijection $b$ is trivially $b(r') = e(r',a)$.
This leads to a contradiction.

\item $\mathit{secEx}(\mathit{last}(e(r,a))) = \top$.  
From the LTS rules and  $\mathit{secEx}(\mathit{last}(e(r,a))) = \top$, it follows $f(\mathit{last}(r),a) = \bot$.
From this and \thref{theorem:f:composition:soundness}, it follows that $f(\mathit{last}(r'),a) = \bot$ for any $r' \in \llbracket r \rrbracket_{P,u}$.
From this and the LTS rules, it follows $\mathit{secEx}(\mathit{last}(e(r',a))) = \top$ for any $r' \in \llbracket r \rrbracket_{P,u}$.
The data indistinguishability between $\mathit{last}(e(r',a))$ and $\mathit{last}(e(r,a))$ follows trivially from the data indistinguishability between $\mathit{last}(r')$ and $\mathit{last}(r)$.
Therefore, for any run $r' \in \llbracket r \rrbracket_{P,u}$, there is exactly one run $e(r',a)$.
From the considerations above, it follows trivially $e(r',a)  \in  \llbracket e(r,a) \rrbracket_{P,u}$.
The bijection $b$ is trivially $b(r') = e(r',a)$.
This leads to a contradiction.

\end{compactenum}
Both cases lead to a contradiction.
This completes the proof for $a = \langle \oplus, u', p, u\rangle$.

\item $a = \langle \oplus^{*}, u', p, u\rangle$.
The proof is similar to that for $a = \langle \oplus, u', p, u\rangle$.

\item $a = \langle \ominus, u', p, u\rangle$.
The proof is similar to that for $a = \langle u, \mathtt{SELECT}, q \rangle$.
The only difference is in proving that for any $r'  \in \llbracket r\rrbracket_{P,u}$, $\mathit{last}(e(r,a))$ and $\mathit{last}(e(r',a))$ are data indistinguishable.
Assume, for contradiction's sake, that this is not the case.
Let $s_{2} = \langle \mathit{db}_{2}, U_{2},\mathit{sec}_{2}, T_{2}, V_{2} \rangle$ be $\mathit{pState}(\mathit{last}(e(r, a)))$ and $s_{2}' = \langle \mathit{db}_{2}', U_{2}',\mathit{sec}_{2}', T_{2}',V_{2}' \rangle$ be $\mathit{pState}(\mathit{last}(e (r',a)))$.
Furthermore, let $s_{1} = \langle \mathit{db}_{1}, U_{1}, \\ \mathit{sec}_{1}, T_{1},V_{1} \rangle$ be $\mathit{pState}(\mathit{last}  (r))$ and  $s_{1}' = \langle \mathit{db}_{1}', U_{1}',\mathit{sec}_{1}', \\ T_{1}',V_{1}'\rangle$ be $\mathit{pState}(\mathit{last}  (r'))$.
In the following, we denote  the $\mathit{permissions}$ function by $p$.
Furthermore, note that $s_{1}$ and $s_{1}'$ are data-indistinguishable because $r'  \in \llbracket r\rrbracket_{P,u}$.
There are a number of cases:
\begin{compactenum}

\item $U_{2} \neq U_{2}'$. 
Since $a$ is an \texttt{REVOKE} operation, it follows that $U_{1} = U_{2}$ and $U_{1}' = U_{2}'$.
Furthermore, from $s_{1} \cong_{u,M}^{\mathit{data}} s_{1}'$, it follows that $U_{1} = U_{1}'$.
Therefore, $U_{2} = U_{2}'$ leading to a contradiction.

\item $\mathit{sec}_{2} \neq \mathit{sec}_{2}'$. 
From $s_{1} \cong_{u,M}^{\mathit{data}} s_{1}'$, it follows that $\mathit{sec}_{1} = \mathit{sec}_{1}'$.
From  $a$'s definition and the LTS rules, it follows that $\mathit{sec}_{2} = \mathit{revoke}(\mathit{sec}_{1}, u',  p,u)$ and $\mathit{sec}_{2}' = \mathit{revoke}(\mathit{sec}_{1}', u',p,u)$.
From this and $\mathit{sec}_{1} = \mathit{sec}_{1}'$, it follows that $\mathit{sec}_{2} = \mathit{sec}_{2}'$ leading to a contradiction.

\item $\mathit{T}_{2} \neq \mathit{T}_{2}'$. 
The proof is similar to the case $U_{2} \neq U_{2}'$. 

\item $\mathit{V}_{2} \neq \mathit{V}_{2}'$. 
The proof is similar to the case $U_{2} \neq U_{2}'$.

\item there is a table $R$ for which $\langle \oplus, \mathtt{SELECT},  R\rangle \in p(s_{2},u)$ and $\mathit{db}_{2}(R) \neq \mathit{db}_{2}'(R)$.
Since $a$ is an \texttt{REVOKE} operation, it follows that $\mathit{db}_{1} = \mathit{db}_{2}$ and $\mathit{db}_{1}' = \mathit{db}_{2}'$.
Furthermore, from $s_{1} \cong_{u,M}^{\mathit{data}} s_{1}'$, it follows that $\mathit{db}_{1}(R) = \mathit{db}_{1}'(R)$.
From this, $\mathit{db}_{1} = \mathit{db}_{2}$, and $\mathit{db}_{1}' = \mathit{db}_{2}'$, it follows that $\mathit{db}_{2}(R) = \mathit{db}_{2}'(R)$ leading to a contradiction.

\item there a view $v$ for which $\langle \oplus, \mathtt{SELECT},  v\rangle \in p(s_{2}, \\ u)$ and $\mathit{db}_{2}(v) \neq \mathit{db}_{2}'(v)$.
Since $a$ is an \texttt{REVOKE} operation, it follows that $\mathit{db}_{1} = \mathit{db}_{2}$ and $\mathit{db}_{1}' = \mathit{db}_{2}'$.
Furthermore, from $s_{1} \cong_{u,M}^{\mathit{data}} s_{1}'$, it follows that $\mathit{db}_{1}(v) = \mathit{db}_{1}'(v)$.
From this, $\mathit{db}_{1} = \mathit{db}_{2}$, and $\mathit{db}_{1}' = \mathit{db}_{2}'$, it follows that $\mathit{db}_{2}(v) = \mathit{db}_{2}'(v)$ leading to a contradiction.

\end{compactenum}
All the cases lead to a contradiction.

\item $a = \langle u, \mathtt{CREATE}, o \rangle$. 
The proof is similar to that for $a = \langle \ominus, u', p, u\rangle$.

\item $a = \langle u, \mathtt{ADD\_USER}, u' \rangle$. 
The proof is similar to that for $a = \langle \ominus, u', p, u\rangle$.

\end{compactenum}

This completes the proof.
\end{proof}

\begin{lemma}\thlabel{theorem:f:composition:pec:2}
Let $u$ be a user in  ${\cal U}$, $P = \langle M, f \rangle$ be an \accessControlConfiguration{}, where $M = \langle D,\Gamma\rangle$ is a system configuration  and $f$ is as above, and $L$ be the $P$-LTS.
For any run $r \in \mathit{traces}(L)$ such that $\mathit{invoker}(\mathit{last}(r)) = u$ and any trigger $t \in {\cal TRIGGER}_{D}$,
if $\mathit{extend}(r,t)$ is defined, then $t$ preserves the equivalence class for $r$, $M$, and $u$.
\end{lemma}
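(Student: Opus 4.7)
The plan is to adapt the argument of \thref{theorem:f:conf:pec:2} to the composed \acf{} $f$, reusing as much as possible of its case analysis and substituting applications of \thref{theorem:f:conf:soundness} with applications of \thref{theorem:f:composition:soundness}. I proceed by contradiction: assume there is a run $r$ with $\mathit{invoker}(\mathit{last}(r)) = u$ and a trigger $t \in {\cal TRIGGER}_{D}$ such that $\mathit{extend}(r,t)$ is defined yet $t$ does not preserve the $(P,u)$-equivalence class of $r$. Since $\mathit{extend}(r,t)$ is defined, the LTS rules force $\mathit{tr}(\mathit{last}(r)) = t$, and by consistency of $u$-projections the same holds in every $r' \in \llbracket r \rrbracket_{P,u}$, so $\mathit{extend}(r',t)$ is well-defined throughout the equivalence class. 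The goal is to exhibit a bijection $b(r') = \mathit{extend}(r',t)$ between $\llbracket r \rrbracket_{P,u}$ and $\llbracket \mathit{extend}(r,t) \rrbracket_{P,u}$.

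I first dispose of the two easy subcases that match the \texttt{SELECT} case of \thref{theorem:f:composition:pec:1}: (i) the \texttt{WHEN} condition is rejected by $f$, and (ii) the \texttt{WHEN} condition is accepted but evaluates to $\bot$. In both, $\mathit{last}(\mathit{extend}(r,t))$ differs from $\mathit{last}(r)$ only in the context, so data-indistinguishability with the corresponding $\mathit{last}(\mathit{extend}(r',t))$ follows from $\mathit{pState}(\mathit{last}(r)) \cong_{u,M}^{\mathit{data}} \mathit{pState}(\mathit{last}(r'))$, using the second clause of \thref{theorem:f:composition:soundness} to propagate the \texttt{WHEN}-authorization decision, and \thref{theorem:secure:equivalent:modulo:indistinguishable:state} together with \thref{theorem:secure:sound:under:approximation} to propagate the evaluation of $\phi$ itself (as in the proof of \thref{theorem:f:conf:pec:2}).

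The substantive cases assume the trigger is enabled and its \texttt{WHEN} condition is secure. I split on the shape of $t$'s action $a = \mathit{getAction}(\mathit{stmt}, u', \mathit{tpl}(\mathit{last}(r)))$, then on whether $\mathit{secEx}$ and $\mathit{Ex}$ are empty at $\mathit{last}(\mathit{extend}(r,t))$. For successful \texttt{INSERT}/\texttt{DELETE} actions I essentially replay the successful subcase of \thref{theorem:f:composition:pec:1}: the third clause of \thref{theorem:f:composition:soundness} propagates $f(\cdot,a) = \top$ to every indistinguishable run, \thref{theorem:getInfo:sound:and:complete} combined with the invariance of $\mathit{getInfoS}(\gamma,a)$ under $\cong_{u,M}^{\mathit{data}}$ (via the secureness of the corresponding judgments) forces the same set of integrity violations, and the $\mathit{noLeak}$-based reasoning for views (identical to that in \thref{theorem:f:conf:pec:1}) yields data-indistinguishability of the post-states. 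The \texttt{GRANT}/\texttt{REVOKE} successful cases are easier, since the database is unchanged and only the policy is modified deterministically. For exceptional executions (integrity violation or security exception in the action), I appeal to \thref{theorem:f:composition:pec:2}-style rollback reasoning: the LTS roll-back state equals $\mathit{pState}(\mathit{last}(r^{i-n-1}))$ for a suitable prefix and is thereby data-indistinguishable across the class.

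The hard part will be the last category, namely the successful \texttt{INSERT}/\texttt{DELETE} trigger action that does not raise an exception: unlike in \thref{theorem:f:conf:pec:2}, I must simultaneously account for the integrity check done by $f_{\mathit{int}}$ and for the confidentiality-driven $\mathit{noLeak}$/$\mathit{getInfoS}$ checks done by $f_{\mathit{conf}}^{u}$, while showing that $\mathit{secEx}$, $\mathit{Ex}$, and all view materializations stay synchronized across the equivalence class. The tools are already in place, however: \thref{theorem:f:composition:soundness} handles the access-control decisions, $\mathit{getInfoS}/\mathit{getInfoV}$ plus \thref{theorem:getInfo:sound:and:complete} and \thref{theorem:secure:sound:under:approximation} control the exception set, and the view-leakage argument of \thref{theorem:f:conf:pec:1} transports the secure view contents. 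Gluing these three threads together yields the required bijection and closes the contradiction.
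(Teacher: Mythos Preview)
Your proposal is correct and follows essentially the same approach as the paper: both argue by contradiction, dispose of the disabled/unauthorized \texttt{WHEN} cases via the \texttt{SELECT} case of \thref{theorem:f:composition:pec:1}, then case-split on the trigger's action and on $\mathit{secEx}/\mathit{Ex}$, deferring the successful subcases to the corresponding case of \thref{theorem:f:composition:pec:1} and handling the exceptional subcases by showing the rollback states are data-indistinguishable across the equivalence class. One cosmetic point: your reference to ``\thref{theorem:f:composition:pec:2}-style rollback reasoning'' is a self-reference; you mean the rollback argument first carried out in \thref{theorem:f:conf:pec:2}.
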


\begin{proof}

Let $u$ be a user in ${\cal U}$, $P = \langle M, f_{\mathit{conf}}^{u} \rangle$ be an \accessControlConfiguration{}, where $M = \langle D,\Gamma\rangle$ is a system configuration and $f_{\mathit{conf}}^{u}$ is as above, and $L$ be the $P$-LTS.
In the following, we use $e$ to refer to the $\mathit{extend}$ function.
The proof in cases where the trigger $t$ is not enabled or $t$'s \texttt{WHEN} condition is not secure are similar to the proof of the \texttt{SELECT} case of \thref{theorem:f:composition:pec:1}.
In the following, we therefore assume that the trigger $t$ is enabled and that its \texttt{WHEN} condition is secure.
We prove our claim by contradiction.
Assume, for contradiction's sake, that there is a run $r  \in \mathit{traces}(L)$ such that $\mathit{invoker}(\mathit{last}(r)) = u$ and a trigger $t$ such that $\mathit{e}(r,t)$ is defined and $t$ does not preserve the equivalence class for $r$, $P$, and $u$.
Since $\mathit{invoker}(\mathit{last}(r))=u$ and $\mathit{e}(r,t)$ is defined, then $\mathit{e}(r',t)$ is defined as well for any $r' \in \llbracket r \rrbracket_{P,u}$ (indeed, from $\mathit{invoker}(\mathit{last}(r))=u$, it follows that the last action in $r$ is either an action issued by $u$ or a trigger invoker by $u$.
From this, the fact that $e(r,t)$ is defined, and the fact that $r$ and $r'$ are indistinguishable, it follows that $\mathit{trigger}(\mathit{last}(r)) = \mathit{trigger}(\mathit{last}(r')) = t$).
Let $a$ be $t$'s action and $w = \langle u', \texttt{SELECT},q\rangle$ be the \texttt{SELECT} command associated with $t$'s \texttt{WHEN} condition.
Let $s$ be the state $\mathit{last}(r)$, $s'$ be the state obtained just after the execution of the \texttt{WHEN} condition, and $s''$ be the state $\mathit{last}(e(r,t))$.
There are a number of cases depending on $t$'s action $a$:
\begin{compactenum}

\item $a = \langle u', \mathtt{INSERT}, R, \overline{t} \rangle$. 
There are three cases:

\begin{compactenum}
\item $\mathit{secEx}(\mathit{last}(e(r,a))) = \bot$ and $\mathit{Ex}(\mathit{last}(e(r,a))) = \emptyset$. 
The proof of this case is similar to that of the corresponding case in \thref{theorem:f:composition:pec:1}.

\item $\mathit{secEx}(\mathit{last}(e(r,a))) = \bot$ and $\mathit{Ex}(\mathit{last}(e(r,a))) \neq \emptyset$.
The only difference between the proof of this case in this Lemma and in that of \thref{theorem:f:composition:pec:1} is that we have to establish again the data indistinguishability between $\mathit{last}(e(r,t))$ and $\mathit{last}(e(r',t))$.
Indeed, for triggers the roll-back state is, in general, different from the one immediately before the trigger's execution, i.e., it may be that $\mathit{pState}(\mathit{last}(e(r,t))) \\ \neq \mathit{pState}(\mathit{last}(r))$. 
We now prove that $\mathit{last}(e(r,t))$ and $\mathit{last}(e(r',t))$ are data indistinguishable. 
From the LTS semantics, it follows that $r = p \concat s_{0} \concat \langle \mathit{invoker}(\mathit{last}(r)), \mathit{op}, R',\overline{v} \rangle \concat  s_{1} \concat t_{1} \concat \ldots \concat s_{n-1} \concat t_{n} \concat s_{n}$, where $p \in \mathit{traces}(L)$ and $t_{1}, \ldots, t_{n} \in {\cal TRIGGER}_{D}$.
Similarly, $r' = p' \concat s_{0}' \concat \langle \mathit{invoker}(\mathit{last}(r)), \mathit{op}, R',\overline{v} \rangle \concat  s_{1}' \concat t_{1} \concat \ldots \concat s_{n-1}' \concat t_{n} \concat s_{n}'$, where $p' \in \mathit{traces}(L)$, $p \cong_{P,u} p'$, and all states $s_{i}$ and $s_{i}'$ are data indistinguishable.
Then, the roll-back states are, respectively, $s_{0}$ and $s_{0}'$, which are data indistinguishable.
From the LTS rules, $\mathit{last}(e(r,a)) = s_{0}$ and $\mathit{last}(e(r',a)) = s_{0}'$.
Therefore, the data indistinguishability between $\mathit{last}(e(r,a))$ and $\mathit{last}(e(r', a))$ follows trivially for any $r' \in \llbracket r \rrbracket_{P,u}$.

\item $\mathit{secEx}(e(r,a)) = \top$. 
The proof is similar to the previous case.
\end{compactenum}
All cases lead to a contradiction.
This completes the proof for $a = \langle u', \mathtt{INSERT}, R, \overline{t} \rangle$.

\item $a = \langle u', \mathtt{DELETE}, R, \overline{t} \rangle$.
The proof is similar to that for $a = \langle u', \mathtt{INSERT}, R, \overline{t} \rangle$.

\item $a = \langle \oplus, u'', p, u'\rangle$.
There are two cases:
\begin{compactenum}
\item $\mathit{secEx}(\mathit{last}(e(r,a))) = \bot$. 
In this case, the proof is similar to the  corresponding case in \thref{theorem:f:composition:pec:1}.

\item $\mathit{secEx}(\mathit{last}(e(r,a))) = \top$. 
The proof is similar to the $\mathit{secEx}(\mathit{last}(e(r,a))) = \top$ case of $a = \langle u', \mathtt{INSERT}, \\ R, \overline{t} \rangle$.
\end{compactenum}
Both cases lead to a contradiction.
This completes the proof for $a = \langle \oplus, u'', p, u'\rangle$.

\item $a = \langle \oplus^{*}, u'', p, u'\rangle$. 
The proof is similar to that for $a = \langle \oplus, u'', p, u'\rangle$.

\item $a = \langle \ominus, u'', p, u'\rangle$. 
The proof is similar to that for $a = \langle u', \mathtt{INSERT}, R, \overline{t} \rangle$.

\end{compactenum}
This completes the proof.
\end{proof}

\clearpage
\section{Database Access Control and Information Flow Control}\label{app:data:conf:non:interference}
Here, we first show that the notion of secure judgment can be seen as an instance of non-interference.
Afterwards, we present NI-\confidentiality{}, a security notion for database access control that is an instance of non-interference.
Finally, we show that \confidentiality{} and NI-\confidentiality{} are equivalent.
For non-interference, we use terminology and notation taken from \cite{hedin2011perspective}.

It is easy to see that the notion of secure judgment is an instance of non-interference over relational calculus sentences.
Indeed, the set of all programs is just the set of all sentences, the set of inputs is the set of all runs, the equivalence relation between the inputs is $\cong_{P,u}$, the set of outputs is $\{\top,\bot\}$, the equivalence relation between the outputs is the equality, and the semantics of the programs is obtained by evaluating the sentences, according to the relational calculus semantics, over the  database state in the last state of a run.
Using a similar argument, one can easily show that both determinacy~\cite{nash2010views} and instance-based determinacy~\cite{Koutris:2012:QDP:2213556.2213582} are just instances of non-interference over relational calculus sentences.

Before defining NI-\confidentiality{}, we need some machinery.
Let $P = \langle M,f\rangle$ be an \accessControlConfiguration{}, $L$ be the $P$-LTS, $u \in {\cal U}$ be a user, $\attMod$ be  a $(P,u)$-attacker model, and $\cong$ be a $P$-indistinguishability relation.
Given a run $r$, we denote by $K(r)$ the set of all formulae that the user $u$ can derive from any extension of $r$ using $A$, i.e., $\{ \phi \in \mathit{RC}_{\mathit{bool}} \,|\, \exists s \in \mathit{traces}(L), i \in \mathbb{N}.\, s,i \attMod \phi \in A \wedge s^{|r|} = r\}$.
Moreover, given a set of formulae $K$, we say that two runs $r$ and $r'$ \emph{agree} on $K$, denoted by $r \equiv_K r'$, iff for all $\phi \in K$, $\phi$ holds in the last states of $r$ and $r'$.
Given a system state $s = \langle \mathit{db}, U,\mathit{sec},T,V,c\rangle$, we denote by $s.\mathit{db}$ the database state $\mathit{db}$.

We are now ready to define NI-\confidentiality{} notion.

\begin{definition}\label{definition:ni:data:confidentiality}
Let $P = \langle M,f\rangle$ be an \accessControlConfiguration{}, $L$ be the $P$-LTS, $u \in {\cal U}$ be a user, $A$ be a $(P,u)$-attacker model, and $\cong$ be a $P$-indistinguishability relation.
We say that \emph{$f$ provides NI-\confidentiality{} with respect to $P$, $u$, $A$, and $\cong$} iff for all runs $r,r' \in \mathit{traces}(L)$, if $r \cong r'$ holds, then $r \equiv_{K(r) \cup K(r')} r'$ holds. 
\end{definition}

Finally, we prove that NI-\confidentiality{} and \confidentiality{} are equivalent.

\begin{proposition}
 Let $P = \langle M,f\rangle$ be an \accessControlConfiguration{}, $L$ be the $P$-LTS,
    $u \in {\cal U}$ be a user, $\attMod$ be  a $(P,u)$-attacker model, and $\cong_{P,u}$ be a $(P,u)$-indistinguishability relation.
    The \acf{} $f$ provides \confidentiality{} iff it provides NI-\confidentiality{}.
\end{proposition}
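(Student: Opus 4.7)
The plan is to establish both implications and use Lemma~\ref{theorem:attacker:model:sound} (soundness of $\attMod$ with respect to the relational calculus semantics) as the bridge between syntactic derivability and truth of a sentence in a database state. Throughout, the key observation is that $\phi \in K(r)$ is witnessed by some judgment $s, i \attMod \phi \in A$ with $s^{|r|} = r$, so by soundness $\phi$ must already hold in the $i$-th state of $s$; security of such judgments, in the sense of Definition~\ref{definition:secure:judgment}, then propagates this truth value along the equivalence class induced by $\cong_{P,u}$.

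For the direction ($\Leftarrow$), I would assume $f$ provides NI-\confidentiality{}, fix an arbitrary judgment $r, i \attMod \phi \in A$, and verify that it is secure. Take any $t \in \mathit{traces}(L)$ with $r^i \cong_{P,u} t$. The first step is to observe that the existing judgment witnesses $\phi \in K(r^i)$: take the witness run to be $r$ itself and the witness index to be $i$, which satisfies $r^{|r^i|} = r^i$. Applying NI-\confidentiality{} to the pair $(r^i, t)$ gives $r^i \equiv_{K(r^i) \cup K(t)} t$, so $\phi$ holds in $\mathit{last}(r^i).\mathit{db}$ and in $\mathit{last}(t).\mathit{db}$; hence $[\phi]^{\mathit{last}(r^i).\mathit{db}} = [\phi]^{\mathit{last}(t).\mathit{db}}$, which is exactly what $\mathit{secure}_{P,\cong_{P,u}}(r, i \attMod \phi)$ requires.

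For the direction ($\Rightarrow$), I would assume $f$ provides \confidentiality{}, fix $r \cong_{P,u} r'$ and $\phi \in K(r) \cup K(r')$, and show that $\phi$ holds in $\mathit{last}(r).\mathit{db}$ and $\mathit{last}(r').\mathit{db}$. Without loss of generality $\phi \in K(r)$, so there exist $s$ and $i$ with $s^{|r|} = r$ and $s, i \attMod \phi \in A$. By soundness, $\phi$ holds in $\mathit{last}(s^i).\mathit{db}$; by \confidentiality{}, the judgment $s, i \attMod \phi$ is secure, so $\phi$ has the same truth value in every run $\cong_{P,u}$-equivalent to $s^i$. The remaining step is to exhibit $r$ and $r'$ as the right indistinguishable witnesses: using $s^{|r|} = r$ and $r \cong_{P,u} r'$, together with reflexivity and the consistency of projections under $\cong_{P,u}$, conclude that $\phi$ is preserved at $\mathit{last}(r).\mathit{db}$ and $\mathit{last}(r').\mathit{db}$.

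The main obstacle lies in reconciling the index $i$ occurring in the definition of $K(r)$ with the ``last state'' semantics of $\equiv_{K(r)\cup K(r')}$. One must argue that the data-indistinguishability transported by security of $s, i \attMod \phi$ from $\mathit{last}(s^i)$ actually lands on $\mathit{last}(r)$ and $\mathit{last}(r')$; this relies on the prefix condition $s^{|r|} = r$ together with the fact, implicit in the construction of $\cong_{P,u}$ in Appendix~\ref{app:indistinguishability}, that indistinguishable runs have pairwise data-indistinguishable database states at matching positions. Once this bookkeeping is in place, both directions reduce to a straightforward application of soundness and security of individual judgments.
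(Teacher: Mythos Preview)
Your overall structure matches the paper's, and your $(\Leftarrow)$ direction is correct: showing $\phi \in K(r^i)$ via the witness run $r$ and index $i = |r^i|$, then applying NI-\confidentiality{} to the pair $(r^i, t)$, is exactly what the paper does (by contradiction rather than directly, but the moves are identical).

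The $(\Rightarrow)$ direction has a gap, precisely at the ``main obstacle'' you flag. After unfolding $\phi \in K(r)$ to get $s$ and $i$ with $s^{|r|} = r$ and $s, i \attMod \phi$, security of this judgment only controls $[\phi]$ on the $\cong_{P,u}$-class of $s^i$, not of $r$. Your proposed fix---that indistinguishable runs have data-indistinguishable states at matching positions---is a statement about two \emph{distinct} $\cong_{P,u}$-related runs, not about two prefixes $s^i$ and $s^{|r|}$ of the \emph{same} run; it does nothing to relate $\mathit{last}(s^i).\mathit{db}$ to $\mathit{last}(r).\mathit{db}$, and these can genuinely disagree on $\phi$ (any \texttt{INSERT} or \texttt{DELETE} between positions $i$ and $|r|$ suffices). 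The paper does not attempt to bridge $s^i$ and $r$: when it unfolds $\phi \in K(r)$ it simply takes the witness index to be $|r|$, writing ``there is a run $s \in \mathit{traces}(L)$ such that $s^{|r|} = r$, $s, |r| \attMod \phi \in A$, \ldots''. With $i = |r|$ one has $s^i = r$, and then security of $s, |r| \attMod \phi$ together with $r \cong_{P,u} r'$ (plus soundness, to pin the common value to $\top$) gives the conclusion immediately. This is the intended reading of $K(r)$---formulae derivable about the last state of $r$---and replacing your general $i$ by $|r|$ removes the obstacle and makes your argument go through exactly as the paper's does.
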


\begin{proof}
We prove the two directions separately.

\smallskip
\noindent
$(\Rightarrow)$ 
We prove this direction by contradiction.
Assume that $f$ provides \confidentiality{} but it does not provide NI-\confidentiality{}.
From the fact that NI-\confidentiality{} does not hold, it follows that there are two runs $r,r' \in \mathit{traces}(L)$ such that $r \cong r'$ but $r \not\equiv_{K(r) \cup K(r')} r'$.
From $r \not\equiv_{K(r) \cup K(r')} r'$, it follows that there are two cases:
\begin{compactenum}
\item there is a run $s \in \mathit{traces}(L)$ such that $s^{|r|} = r$, $s, |r| \attMod \phi \in A$, and $[\phi]^{\mathit{last}(r).\mathit{db}} \neq [\phi]^{\mathit{last}(r').\mathit{db}}$.
From this, it follows that $\mathit{secure}_{P,\cong}(s, |r| \attMod \phi)$ does not hold, since $s^{|r|} = r$, $[\phi]^{\mathit{last}(r).\mathit{db}} \neq [\phi]^{\mathit{last}(r').\mathit{db}}$, and $r \cong r'$.
This contradicts the fact that $f$ provides \confidentiality{}.

\item there is a run $s \in \mathit{traces}(L)$ such that $s^{|r'|} = r'$, $s, |r'| \attMod \phi \in A$, and $[\phi]^{\mathit{last}(r).\mathit{db}} \neq [\phi]^{\mathit{last}(r').\mathit{db}}$.
From this, it follows that $\mathit{secure}_{P,\cong}(s, |r'| \attMod \phi)$ does not hold, that is not secure, since $s^{|r'|} = r'$, $[\phi]^{\mathit{last}(r).\mathit{db}} \neq [\phi]^{\mathit{last}(r').\mathit{db}}$, and $r \cong r'$.
This contradicts the fact that $f$ provides \confidentiality{}.
\end{compactenum}
Since both cases lead to a contradiction, this concludes the proof of this direction.

\smallskip
\noindent
$(\Leftarrow)$ 
We prove this direction by contradiction.
Assume that $f$ provides NI-\confidentiality{} but it does not provide \confidentiality{}.
From the fact that \confidentiality{} does not hold, it follows that there is a runs $r \in \mathit{traces}(L)$, an index $i$, and a sentence $\phi$ such that $r,i \attMod \phi \in A$ and $\mathit{secure}_{P,\cong}(r,i \attMod \phi)$ does not hold.
From this and $\mathit{secure}_{P,\cong}\\(r,i \attMod \phi)$'s definition, it follows that there are two runs $r,r' \in \mathit{traces}(L)$, an index $i$, and a sentence $\phi$ such that $r,i \attMod \phi \in A$, $r^i \cong r'$, and $[\phi]^{\mathit{last}(r^i).\mathit{db}} \neq [\phi]^{\mathit{last}(r').\mathit{db}}$.
From this and $|r^i|=i$, it follows that there are two runs $r,r' \in \mathit{traces}(L)$ and a sentence $\phi$ such that $r,|r^i| \attMod \phi \in A$, $r^i \cong r'$, and $[\phi]^{\mathit{last}(r^i).\mathit{db}} \neq [\phi]^{\mathit{last}(r').\mathit{db}}$.
By renaming $r^i$ as $k$ and by considering the fact that $r$ is, by definition, an extension of $k$, it follows that there are two runs $r,r' \in \mathit{traces}(L)$ and a sentence $\phi$ such that $r,|k| \attMod \phi \in A$, $r^{|k|} = k$, $k \cong r'$, and $[\phi]^{\mathit{last}(k).\mathit{db}} \neq [\phi]^{\mathit{last}(r').\mathit{db}}$.
From this and $K(k)$'s definition, it follows that there are two runs $k,r' \in \mathit{traces}(L)$ and a sentence $\phi$ such that $\phi \in K(k)$, $k \cong r'$, and $[\phi]^{\mathit{last}(k).\mathit{db}} \neq [\phi]^{\mathit{last}(r').\mathit{db}}$.
From this and $\phi \in K(k)$, it follows that there are two runs $k,r' \in \mathit{traces}(L)$ and a sentence $\phi$ such that $k \cong r'$ and $k \not\equiv_{K(k)} r'$.
From this, it follows that there are two runs $k,r' \in \mathit{traces}(L)$ and a sentence $\phi$ such that $k \cong_{P,u} r'$, and $k \not\equiv_{K(k) \cup K(r')} r'$.
This contradicts the fact that $f$ provides NI-\confidentiality{}.
\end{proof}

We now show that NI-\confidentiality{} can be seen as an instance of non-interference.
Let $M$ be a system configuration and $u$ be a user.
The set of programs ${\cal P}$ is the set of all pairs of the form $(f,\attMod)$, where $f$ is a system configuration and $\attMod$ is a $(\langle M, f \rangle, u )$-attacker model.
The set of inputs ${\cal I}$ is the set $\{ (s, \mathit{evs}) \,|\, s \in {\cal I}_{M} \wedge \mathit{evs} \in ({\cal A}_{D,{\cal U}} \cup {\cal TRIGGER}_D )^* \}$.
The set of outputs ${\cal O}$ is the set of all possible sequences of $M$-states and labels in ${\cal A}_{D,{\cal U}} \cup {\cal TRIGGER}_D$.
The semantics of the programs $\sigma : {\cal P} \times {\cal I} \rightarrow ({\cal O} \cup \{\bot\})$ is a total function defined as follows:
$\sigma( (f,\attMod) , (s , \mathit{evs})) = r$ iff (1) $r$ is a run in $\mathit{traces}(L)$, where $L$ is the $\langle M, f \rangle$-LTS, (2) $r$ starts from the state $s$, and (3) the labels of $r$ are equivalent to $\mathit{evs}$; $\sigma( (f,\attMod) , (s , \mathit{evs})) = \bot$ otherwise.
Finally, the relation $\sim$  over the set ${\cal I}$ is $\sim = {\cal I} \times {\cal I}$, i.e., any two inputs are indistinguishable, whereas the relation $\equiv$ over the set ${\cal O}$ is as follows:
for any two $r , r' \in {\cal O}$, $r \equiv r'$ iff  
(1) $r = \bot$,
(2) $r' = \bot$, or
(3) $r \neq \bot$, $r' \neq \bot$, and if $r \cong_{P,u} r'$, then $r \equiv_{K(r) \cup K(r')} r'$.
Note that $\equiv$ is not an equivalence relation, i.e., it is reflexive and symmetric but it is not transitive.
Therefore, a \acf{} $f$ provides NI-\confidentiality{} (and, therefore, \confidentiality{}) with respect to an attacker model $\attMod$ iff $(f,\attMod)$ satisfies non-interferences, where ${\cal P}$, ${\cal I}$, ${\cal O}$, $\sigma$, $\sim$, and $\equiv$ are as above.

\begin{figure*}
\begin{lstlisting}[mathescape, basicstyle=\ttfamily]
$\mathit{SqlStmt}$ := $\mathit{SelectStmt}$ | $\mathit{SqlBasicStmt}$ | $\mathit{CreateTrigger}$ | $\mathit{CreateView}$ 
$\mathit{SqlBasicStmt}$ :=  $\mathit{InsertStmt}$ | $\mathit{DeleteStmt}$ | $\mathit{GrantStmt}$ | $\mathit{RevokeStmt}$
$\mathit{SelectStmt}$ := "SELECT DISTINCT" $\mathit{columnList}$ "FROM" $\mathit{tableList}$ "WHERE" $\mathit{expr}$
$\mathit{columnList}$ := $\mathbf{columnId}$ | $\mathit{columnList}$ "," $\mathbf{columnId}$
$\mathit{tableList}$ := $\mathbf{tableId}$ | $\mathit{tableList}$ "," $\mathbf{tableId}$
$\mathit{expr}$ := $\mathbf{varId}$ "=" $\mathbf{const}$ | $\mathbf{varId}$ "=" $\mathbf{varId}$ | "NOT" "("$\mathit{expr}$")" | $\mathit{expr}$ ("AND"|"OR") $\mathit{expr}$ |
	"EXISTS" "("$\mathit{SelectStmt}$")"
$\mathit{InsertStmt}$ := "INSERT INTO" $\mathbf{tableId}$ "VALUES ("$\mathit{valueList}$")"
$\mathit{valueList}$ := $\mathbf{const}$ | $\mathit{valueList}$ "," $\mathbf{const}$
$\mathit{DeleteStmt}$ := "DELETE FROM" $\mathbf{tableId}$ "WHERE" $\mathit{restrictedExpr}$
$\mathit{restrictedExpr}$ := $\mathbf{varId}$ "=" $\mathbf{const}$ | $\mathit{restrictedExpr}$ "AND" $\mathbf{varId}$ "=" $\mathbf{const}$
$\mathit{GrantStmt}$ := "GRANT" $\mathit{privilege}$ "TO" $\mathbf{userId}$ ("WITH GRANT OPTION")
$\mathit{RevokeStmt}$ := "REVOKE" $\mathit{privilege}$ "FROM" $\mathbf{userId}$ "WITH CASCADE"
$\mathit{privilege}$ := "SELECT ON" ($\mathbf{tableId}$ | $\mathbf{viewId}$) | "CREATE VIEW" |
	( "INSERT" | "DELETE" | "CREATE TRIGGER" ) "ON" $\mathbf{tableId}$ 
$\mathit{CreateTrigger}$ := "CREATE TRIGGER" $\mathbf{triggerId}$ "AFTER" ("INSERT" | "DELETE") "ON" $\mathbf{tableId}$
		("SECURITY DEFINER" | "SECURITY INVOKER") $\mathit{SqlBasicStmt}$
$\mathit{CreateView}$ := "CREATE VIEW" $\mathbf{viewId}$ ("SECURITY DEFINER" | "SECURITY INVOKER")
		AS $\mathit{SelectStmt}$
\end{lstlisting}
\caption{This is the syntax of the SQL fragment that corresponds to the features  we support in this paper.}\label{table:sql:syntax}
\end{figure*}


\end{document}